\documentclass[%
  aps,%
  pra,%
  superscriptaddress,%
  reprint,%
  longbibliography,%
  nofootinbib,%
  notitlepage]{revtex4-2}

\usepackage[preset=reset,pkgset=extended,hyperrefdefs={defer,noemail}]{phfnote}

\usepackage[normalem]{ulem}

\usepackage[thmset=rich,smallproofs=false,proofref={marginbottom}]{phfthm}

\usepackage{phflplx}
\DeclareGraphicsExtensions{.lplx,.pdf,.png,.jpeg,.jpg}

\makeatletter
\RequirePackage{dsfont}
\RequirePackage{mathrsfs}
\RequirePackage{mathtools}

\usepackage[sfdefault]{sourcesanspro}

\usepackage[reset]{geometry}
\Gm@restore@org

\DeclareMathOperator\Var{Var}
\def\leq{\leqslant}
\def\geq{\geqslant}

\DeclarePairedDelimiterXPP\opnorm[1]{}{\lVert}{\rVert}{}{{#1}}
\DeclarePairedDelimiterXPP\onenorm[1]{}{\lVert}{\rVert}{_1}{{#1}}

\DeclarePairedDelimiterXPP\neighbors[1]{}{\langle}{\rangle}{}{{#1}}

\DeclarePairedDelimiterXPP\Ftwo@nodefault[2]{F}{(}{)}{}{{%
    {#1}\mathclose{}\,{;}\;\mathopen{}{#2}%
}}
\newcommand\Ftwo@withdefaultsize[1][\big]{\Ftwo@nodefault[#1]}
\def\Ftwo{\@ifstar{\Ftwo@nodefault*}{\Ftwo@withdefaultsize}}
\robustify\Ftwo
\def\FtwoBig{\Ftwo@nodefault[\Big]}

\usepackage{ragged2e}
\DeclareCaptionJustification{myjust}{\justify}
\captionsetup{style=base,justification=myjust,font=small,singlelinecheck=true}

\renewenvironment{acknowledgments}{%
  \paragraph*{Acknowledgments.}%
}{%
  \par
}

\DeclarePairedDelimiterX\groupgen[1]{\langle}{\rangle}{{#1}}

\DeclareMathOperator\wgt{wgt}

\newcommand\FIqty[3][]{F#1_{\mathrm{#2},\mkern 2mu\relax{#3}}}

\newcommand\FIqtyp[3][]{F^{\mathrm{#1}}_{{#2},\mkern 2mu\relax{#3}}}

\newcommand\FIloss[2]{\Delta F_{\mathrm{#1},\mkern 2mu\relax{#2}}}
\newcommand\FIlossp[3][]{\Delta F^{\mathrm{#1}}_{{#2},\mkern 2mu\relax{#3}}}

\def\@Alphnoi#1{\ifcase #1\or A\or B\or C\or D\or E\or F\or G\or H\or J\or K\or
  L\or M\or N\or P\or Q\or R\or S\or T\or U\or V\or W\or X\or Y\or Z\else
  \@ctrerr\fi\relax}
\def\Alphnoi#1{\expandafter\@Alphnoi\csname c@#1\endcsname}

\pretocmd\appendix{%
  \clearpage
  \newgeometry{hmargin=1.25in,vmargin=1in,marginparsep=12pt,marginparwidth=45pt}%
  \onecolumngrid
  \setstretch{1.12}%
}{}{***PATCH FAIL***}
\apptocmd\appendix{%
}{}{***PATCH FAIL***}

\usepackage{hyperref}
\usepackage[nameinlink,capitalize]{cleveref}

\renewcommand\phfthmproofref@placephantommarginpar{%
  \begingroup\def\@captype{figure}\marginpar{%
    \vspace{-\baselineskip}\rule{0pt}{0pt}}\endgroup%
}
\renewcommand\phfthmproofref@placemarginpar[2]{%
  \begingroup\def\@captype{table}\marginpar[#1]{#2}\endgroup%
}

\crefname{section}{}{}
\Crefname{section}{Section}{Sections}
\creflabelformat{section}{#2\S\,#1#3}
\crefrangelabelformat{section}{#3\S\,#1#4--#5#2#6}
\crefname{subsection}{}{}
\Crefname{subsection}{Section}{Sections}
\creflabelformat{subsection}{#2\S\,#1#3}
\crefrangelabelformat{subsection}{#3\S\,#1#4--#5#2#6}
\crefname{subsubsection}{}{}
\Crefname{subsubsection}{Section}{Sections}
\creflabelformat{subsubsection}{#2\S\,#1#3}
\crefrangelabelformat{subsubsection}{#3\S\,#1#4--#5#2#6}

\makeatother

\DeclareFontFamily{OMX}{MnSymbolE}{}
\DeclareSymbolFont{MnLargeSymbols}{OMX}{MnSymbolE}{m}{n}
\SetSymbolFont{MnLargeSymbols}{bold}{OMX}{MnSymbolE}{b}{n}
\DeclareFontShape{OMX}{MnSymbolE}{m}{n}{
    <-6>  MnSymbolE5
   <6-7>  MnSymbolE6
   <7-8>  MnSymbolE7
   <8-9>  MnSymbolE8
   <9-10> MnSymbolE9
  <10-12> MnSymbolE10
  <12->   MnSymbolE12
}{}
\DeclareFontShape{OMX}{MnSymbolE}{b}{n}{
    <-6>  MnSymbolE-Bold5
   <6-7>  MnSymbolE-Bold6
   <7-8>  MnSymbolE-Bold7
   <8-9>  MnSymbolE-Bold8
   <9-10> MnSymbolE-Bold9
  <10-12> MnSymbolE-Bold10
  <12->   MnSymbolE-Bold12
}{}
\let\llangle\relax
\let\rrangle\relax
\DeclareMathDelimiter{\llangle}{\mathopen}%
                     {MnLargeSymbols}{'164}{MnLargeSymbols}{'164}
\DeclareMathDelimiter{\rrangle}{\mathclose}%
                     {MnLargeSymbols}{'171}{MnLargeSymbols}{'171}
\begin{document}
\title{%
Time-energy uncertainty relation for noisy quantum metrology
}

\author{Philippe Faist}
\affiliation{Dahlem Center for Complex Quantum Systems, Freie Universit\"at Berlin,
  Berlin, Germany}
\affiliation{Institute for Quantum Information and Matter, Caltech, Pasadena, CA, USA}
\affiliation{Institute for Theoretical Physics, ETH Zurich, Zurich, Switzerland}

\author{Mischa P. Woods}
\affiliation{Institute for Theoretical Physics, ETH Zurich, Zurich, Switzerland}
\affiliation{University Grenoble Alpes, Inria, 38000 Grenoble, France}

\author{Victor V. Albert}
\affiliation{Joint Center for Quantum Information and Computer Science, NIST and University of Maryland, College Park, MD, USA}
\affiliation{Institute for Quantum Information and Matter, Caltech, Pasadena, CA, USA}
\affiliation{Walter Burke Institute for Theoretical Physics, Caltech, Pasadena, CA, USA}

\author{Joseph M. Renes}
\affiliation{Institute for Theoretical Physics, ETH Zurich, Zurich, Switzerland}

\author{Jens Eisert}
\affiliation{Dahlem Center for Complex Quantum Systems, Freie Universit\"at Berlin,
  Berlin, Germany}
  
\author{John Preskill}
\affiliation{Institute for Quantum Information and Matter, Caltech, Pasadena, CA, USA}
\affiliation{Walter Burke Institute for Theoretical Physics, Caltech, Pasadena, CA, USA}
\affiliation{AWS Center for Quantum Computing, Caltech, Pasadena, CA, USA}

\date{Feb~2, 2024}

\begin{abstract}
  Detection of very weak forces and precise measurement of time are two of the
  many applications of quantum metrology to science and technology. To sense an
  unknown physical parameter, one prepares an initial state of a probe system,
  allows the probe to evolve as governed by a Hamiltonian $H$ for some time $t$,
  and then measures the probe. If $H$ is known, we can estimate $t$ by this
  method; if $t$ is known, we can estimate classical parameters on which $H$
  depends. The accuracy of a quantum sensor can be limited by either intrinsic
  quantum noise or by noise arising from the interactions of the probe with its
  environment. In this work, we introduce and study a fundamental trade-off
  which relates the amount by which noise reduces the accuracy of a quantum
  clock to the amount of information about the energy of the clock that leaks to
  the environment. Specifically, we consider an idealized scenario in which a
  party Alice prepares an initial pure state of the clock, allows the clock to
  evolve for a time that is not precisely known, and then {transmits} the clock
  through a noisy channel to {a party} Bob. Meanwhile, the environment (Eve)
  receives any information about the clock that is lost during transmission.  We
  prove that Bob's loss of quantum Fisher information about the elapsed time is
  equal to Eve's gain of quantum Fisher information about a complementary energy
  parameter. We also prove a similar, but more general, trade-off that applies
  when Bob and Eve wish to estimate the values of parameters associated with two
  noncommuting observables. We derive the necessary and sufficient conditions
  for the accuracy of the clock to be unaffected by the noise, which form a
  subset of the Knill-Laflamme error-correction conditions. A state and its
    local time-evolution direction, if they satisfy these conditions, are said
    to form a metrological code. We provide a scheme to construct
    metrological codes in the stabilizer formalism.
    We show that there are
    metrological codes that cannot be written as a quantum error-correcting code
    with similar distance in which the Hamiltonian acts as a logical operator,
    potentially offering new schemes for constructing states that do not lose
    any sensitivity upon application of a noisy channel.  We discuss
  applications of the trade-off relation to sensing using a quantum many-body
  probe subject to erasure or amplitude-damping noise.
\end{abstract}
\maketitle

\section{Introduction}
\label{z:n25nnlPW3OKO}

Quantum mechanics places fundamental limits on how well we can measure a physical quantity when using a quantum system as a probe%
~\cite{R0}. \emph{Quantum metrology} is an active research area addressing how physical quantities can be estimated based on observations of a probe system~\cite{R1,R2,R3}. As methods for {accurately} controlling quantum systems steadily advance, increasingly sophisticated measurement strategies
{are becoming}
feasible~\cite{R4,R5}, leading for example to more sensitive gravitational wave detectors~\cite{R6}, 
improved frequency standards~\cite{R7},
and ultraprecise quantum clocks~\cite{R8}. These
 technological developments accentuate the need for a precise theoretical understanding of the potential of quantum metrology and of the ultimate limits on measurement accuracy.

Fundamental accuracy limits in quantum metrology can
often be phrased in terms of
uncertainty relations, 
wherein the accuracy of one physical quantity trades off against the accuracy of a complementary quantity. For example, a particle with a definite position has a highly uncertain momentum, and vice versa. Such trade-offs may be captured conveniently by entropic uncertainty relations~\cite{R9,R10}. One may envision a two-party scenario, where the entropic uncertainty relation connects the first party's ignorance about a quantity $A$ with the second party's 
lack of knowledge
about a complementary quantity $B$. Typically these quantities are values of noncommuting observables. 

In this work, we 
focus on a related but fundamentally different type of uncertainty relation.
Rather than a trade-off between the values of two observables, we consider an information-theoretic trade-off between time and energy.  Specifically, we envision preparing a probe state $\rho_{\mathrm{init}}$, which then evolves for a time $t$ as determined by some Hamiltonian $H$. By measuring the probe $\rho(t)$
at time $t$, we attempt to infer the value of
$t$~\cite{R11}. The time-energy uncertainty relation
relates the accuracy of our estimate of $ t$ to the energy fluctuations of the
probe state $\rho(t)$~\cite{R12,R13};
a state with larger energy fluctuations evolves more rapidly, allowing the
elapsed time to be estimated more precisely. Here, too, it is helpful to
envision two parties, one attempting to measure time, the other attempting to
measure energy. Indeed, such entropic time-energy uncertainty relations have
recently been
established~\cite{R14,R15}.

For our purposes, a \emph{clock} is a quantum system used to measure a time
interval.  The clock is initialized at some %
initial time and is measured at a
later %
time, with the aim of the measurement being to reveal the difference in time
between the initialization and the measurement.  
We are particularly interested in how a noise channel affects the accuracy of a clock.
For that purpose we consider the following idealized scenario, involving three
parties referred to as Alice, Bob, and Eve, which is amenable to precise
mathematical analysis (see \cref{z:FnHui0ahNLxW}). Alice prepares a
noiseless clock in the pure state vector $\lvert {\psi_{\mathrm{init}}}\rangle $, then allows
that clock to evolve until some (\emph{a priori} unknown) time $t$. Rather than
measuring the clock herself for the purpose of estimating $t$, Alice stops the
evolution of the clock and sends it to Bob through a noisy quantum channel
$\mathcal{N}_{A\to B}$. As with any noisy channel, we can represent
$\mathcal{N}_{A\to B}$ as an isometric map from Alice's system $A$ to $BE$,
where $B$ is Bob's system and $E$ is the channel's environment, after which $E$
is discarded. In our scenario, Bob receives $B$ and Eve receives $E$. We wish to
study the trade-off between what Bob can learn about the elapsed time by
measuring $B$ and what Eve can learn about the energy of the clock by measuring
$E$. Intuitively, such a trade-off is expected, because leakage to the
environment of information about the clock's energy causes the clock to dephase
in the energy-eigenstate basis, obscuring its evolution.

\begin{figure}
  \centering
  \includegraphics{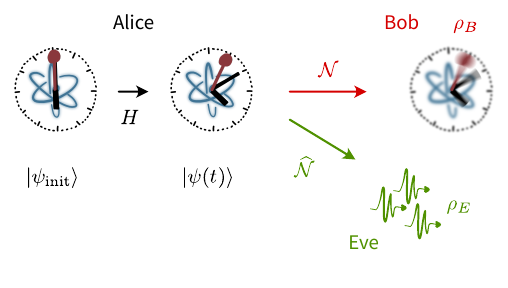}
  \caption{%
    A noiseless clock is initialized by Alice in $\lvert {\psi_{\mathrm{init}}}\rangle $ and evolves for a time $t$ under the
    Hamiltonian $H$. Then Alice sends the clock through an
    instantaneous noisy channel $\mathcal{N}_{A\to B}$ to Bob, who receives the
    state $\rho_B$, measures it, and estimates $t$.
    The complementary channel $\widehat{\mathcal{N}}_{A\to E}$ describes the quantum
    information that leaks to the environment. Eve receives the state
    $\rho_E = \widehat{\mathcal{N}}(\psi(t))$, measures, and estimates the energy parameter of $\lvert {\psi(t)}\rangle $.  Our main result describes the trade-off between Bob's ability to estimate the time and  Eve's ability to estimate the energy.%
}
  \label{z:FnHui0ahNLxW}
\end{figure}

We consider the setting of \emph{local parameter estimation}. This means that the value of a parameter is already approximately known, and we wish to determine it to greater accuracy. In this setting, the optimal estimate of the parameter is determined by the \emph{quantum Fisher information} (QFI). For example, if $\FIqty{Alice}{t}$ denotes the QFI of Alice's state with respect to the parameter $t$, then by performing the optimal measurement on her state, Alice can estimate the value of $t$ with a mean-square error of $1/\FIqty{Alice}{t}$. For the purpose of locally estimating $t=t_0+\Delta t$ to first order in $\Delta t$, it suffices to know the quantum state $\rho(t_0)$ and its first time derivative, and indeed the QFI is determined by just these quantities. 

Bob's noisy clock, degraded by transmission through the noisy channel $\mathcal{N}_{A\to B}$, has a reduced QFI compared to Alice's clock, and correspondingly Bob's optimal measurement yields a less accurate estimate of the time $t$ than Alice's. 
On the other hand, Eve receives the state of Alice's clock after transmission through the \emph{complementary} noisy channel $\widehat{\mathcal{N}}_{A\to E}$, the channel obtained if $B$ is discarded after $A$ is isometrically mapped to $BE$. We imagine that Eve wishes to learn about the \emph{energy} of Alice's clock, rather than about the elapsed time. More precisely, Eve's goal is to determine an ``energy parameter'' denoted $\eta$ and defined in Sec.~\ref{z:ip.xvGb3bzRb}, which is complementary to the time $t$. Because Eve, like Bob, receives a state of the clock degraded by noise, the QFI of her state with respect to $\eta$ is in general less than Alice's.

Our main result is an equality relating Bob's QFI about $t$ to Eve's QFI about $\eta$ given by
\begin{align}
  \frac{\FIqty{Bob}{t}}{\FIqty{Alice}{t}} +
  \frac{\FIqty{Eve}{\eta}}{\FIqty{Alice}{\eta}}
  = 1\ .
  \label{z:OUF5HnwJdMZI}
\end{align}
This time-energy uncertainty relation, derived in \cref{z:6pGtSPe4CZWB} and \cref{z:.B4MZrnrq.9S} using semidefinite programming duality, substantially differs from previous results~\cite{R11,R16,R17} %
 in that it characterizes the trade-off between Bob's and Eve's QFI, rather than the trade-off between the inherent energy variance and time uncertainty of the noiseless clock. 

\begin{figure}
  \centering
  \includegraphics{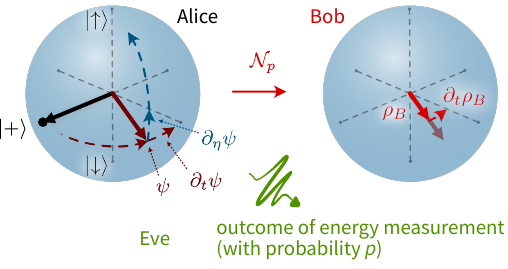}
  \caption{%
    Illustration of \cref{z:OUF5HnwJdMZI} for a single qubit subjected
    to partial dephasing. Alice's clock state is initialized as
    $\lvert {+}\rangle =\left(\lvert {\uparrow}\rangle +\lvert {\downarrow }\rangle \right)/\sqrt2$ and evolves
    according to the Hamiltonian $H=\omega {Z/2}$, 
    {where $Z$ denotes
    the qubit Pauli-$Z$
    operator.}
    At time $t$, the channel
    $\mathcal{N}_p(\cdot) = (1-p)(\cdot) + p\,\lvert {\uparrow}\rangle \mkern -1.8mu\relax \langle{\uparrow}\rvert (\cdot)\lvert {\uparrow
    }\rangle \mkern -1.8mu\relax \langle{\uparrow
    }\rvert + p\,\lvert {\downarrow}\rangle \mkern -1.8mu\relax \langle{\downarrow}\rvert (\cdot)\lvert {\downarrow}\rangle \mkern -1.8mu\relax \langle{\downarrow}\rvert $ is instantaneously applied to
    Alice's clock state. In effect, Eve measures the energy observable
    {$Z$}
    with probability $p$, and Bob receives the partially dephased
    clock.  \cref{z:OUF5HnwJdMZI} relates Bob's reduced information
    about the elapsed time to Eve's information gain about the clock's energy.
    Unitary evolution in Eve's complementary energy variable $\eta$, generated
    by an optimal local time-sensing observable, rotates the state into a
    direction that is orthogonal to the direction of the original evolution in
    time $t$ (see
    \cref{z:t.T5CjLp6B3a}.)%
  }
  \label{z:duwTyi1jsdbn}
\end{figure}

\Cref{z:duwTyi1jsdbn} illustrates the setting of
\cref{z:OUF5HnwJdMZI} in a concrete example.  Alice initializes a
single qubit in the pure state vector
$\lvert {+}\rangle =\left(\lvert {\uparrow}\rangle +\lvert {\downarrow}\rangle \right)/\sqrt2$, which evolves
under the Hamiltonian 
$H= \omega Z/2$.
Here and in the following,
    $X, Y, Z$ denote the 
    qubit Pauli-$X,Y,Z$ operators, respectively.
    The qubit basis states are denoted by
      $\lvert {\uparrow}\rangle ,\lvert {\downarrow}\rangle $ for consistency with which state is
      excited with respect to the Hamiltonian $H$, with
      $Z\lvert {\uparrow }\rangle = \lvert {\uparrow}\rangle $ and $Z\lvert {\downarrow }\rangle = -\lvert {\downarrow}\rangle $.
      Later in this work, we also use the alternative notation
      $\lvert {0}\rangle  \equiv \lvert {\uparrow}\rangle $ and $\lvert {1}\rangle  \equiv \lvert {\downarrow}\rangle $
      whenever necessary to facilitate the representation of states of multiple
      qubits using bit strings or for consistency with the literature on quantum
      error-correcting codes.
At time
$t=t_0+\Delta t$, the
partially dephasing channel
$\mathcal{N}_p = (1-p){{\mathrm{id}}} + p\mathcal{D}_Z$ is applied to Alice's
qubit, where
$\mathcal{D}_Z(\cdot) = \langle {\uparrow}\mkern 1.5mu\relax \vert \mkern 1.5mu\relax {(\cdot)}\mkern 1.5mu\relax \vert \mkern 1.5mu\relax {\uparrow}\rangle \,\lvert {\uparrow }\rangle \mkern -1.8mu\relax \langle{\uparrow }\rvert +
\langle {\downarrow}\mkern 1.5mu\relax \vert \mkern 1.5mu\relax {(\cdot)}\mkern 1.5mu\relax \vert \mkern 1.5mu\relax {\downarrow}\rangle \,\lvert {\downarrow}\rangle \mkern -1.8mu\relax \langle{\downarrow}\rvert $. We may describe this channel by
saying that the environment (Eve) measures the qubit with probability $p$ in the
energy-eigenstate basis (i.e. along the $Z$ axis of the Bloch sphere). The
partial dephasing attenuates the $t$ dependence of Bob's state $\rho_B(t)$ by
the factor $1-p$, hindering his ability to estimate the
time. \cref{z:OUF5HnwJdMZI} captures the trade-off between Bob's
information about the time (proportional to $1-p$) and Eve's information gain
about the energy (proportional to $p$).

The trade-off relation \cref{z:OUF5HnwJdMZI} 
can be a 
useful tool for deriving upper bounds on QFI. The QFI for a mixed state can
be tricky to characterize in cases where a diagonal representation
of the state is not easily obtained.
Along these lines, it is useful to note that QFI obeys a data-processing
inequality which ensures that, for any state $\rho$ and any quantum channel
$\mathcal{N}$, the QFI of $\mathcal{N}(\rho)$ is no larger than the QFI of
$\rho$~\cite{R18}. We can imagine that Eve applies a channel to her state $\rho_E$, obtaining the state $\rho^\prime_E$, which she then measures
for the purpose of estimating $\eta$. Using the data-processing inequality, we
conclude that
\begin{align}
  \frac{\FIqty{Bob}{t}}{\FIqty{Alice}{t}} +
  \frac{\FIqty[']{Eve}{\eta}}{\FIqty{Alice}{\eta}}
  \leq 1\ ,
  \label{z:zlqaTJKIS9Yc}
\end{align}
where now $\FIqty[']{Eve}{\eta}$ denotes the QFI of $\rho^\prime_E$ with
respect to $\eta$. Even if the QFI of $\rho_E$ is difficult to compute, the QFI
of $\rho^\prime_E$ may be easy to compute if the channel taking $\rho_E$ to
$\rho^\prime_E$ is artfully chosen; then \cref{z:zlqaTJKIS9Yc}
provides a computable upper bound on $\FIqty{Bob}{t}$. For example, in the case
where $\mathcal{N}_{A\to B}$ is an amplitude damping noise channel, a useful upper bound
on Bob's QFI can be derived by applying a completely dephasing channel to Eve's
state $\rho_E$. We apply this idea to an Ising spin chain in
\cref{z:Pgp-gW09n9nZ}.

One consequence of \cref{z:OUF5HnwJdMZI} is a necessary and sufficient
condition for the clock's sensitivity to be unaffected by transmission through
the noisy channel $\mathcal{N}_{A\to B}$: $\FIqty{Bob}{t}=\FIqty{Alice}{t}$ if
and only $\FIqty{Eve}{\eta} = 0$. This condition can be usefully restated in
terms of the Kraus operators $\{E_k\}$ of the channel $\mathcal{N}_{A\to
  B}$. Recall that we aim to estimate the time $t=t_0+\Delta t$ in the setting
of local parameter estimation, i.e. to linear order in $\Delta t$. Suppose that
after evolution for time $t_0$, the state of Alice's clock is $|\psi\rangle$,
and that $\lvert {\xi }\rangle = (H-\langle {H}\rangle _\psi)\lvert {\psi }\rangle = P_\psi^\perp H\lvert {\psi}\rangle $ with
$P_\psi^\perp = \mathds{1}-\lvert {\psi}\rangle \mkern -1.8mu\relax \langle{\psi}\rvert $.  Then the condition $\FIqty{Eve}{\eta} = 0$
is equivalent to
\begin{align}
  \langle {\xi }\rvert E_{k}^\dagger E_j \lvert {\psi }\rangle + \langle {\psi }\rvert E_{k}^\dagger E_j \lvert {\xi }\rangle = 0
  \qquad\text{for all \(k,j\)}\ .
  \label{z:WbYGcUl-rp7G}
\end{align}
Intuitively, \cref{z:WbYGcUl-rp7G} means that the action of the channel on the
clock cannot be confused with genuine time evolution.

\Cref{z:WbYGcUl-rp7G} may be recognized as a weakened version of the Knill-Laflamme condition for quantum error correction,
the necessary and sufficient condition for the action of a noisy channel on an encoded subspace to be reversible by a suitable recovery channel~\cite{R19}. This condition may be stated as $\Pi_L E_k^\dagger E_j \Pi_L \propto \Pi_L$ for all $k$ and $j$, where $\Pi_L$ is the projector onto the {encoded subspace}. To write \cref{z:WbYGcUl-rp7G} in a similar form, consider the two-dimensional subspace spanned by the mutually orthogonal state vectors $|\psi\rangle$ and $|\xi\rangle$; we call this two-dimensional space a ``virtual qubit.'' Using the notation $\lvert {+}\rangle _L := \lvert {\psi}\rangle $, $\lvert {-}\rangle _L := \lVert {\lvert {\xi}\rangle }\rVert ^{-1} \lvert {\xi}\rangle $, the orthogonal projector onto the virtual qubit is $\Pi_L = \lvert {+}\rangle \mkern -1.8mu\relax \langle{+}\rvert _L + \lvert {-}\rangle \mkern -1.8mu\relax \langle{-}\rvert _L$, and $Z_L=\lvert {+}\rangle \mkern -1.8mu\relax \langle{-}\rvert _L + \lvert {-}\rangle \mkern -1.8mu\relax \langle{+}\rvert _L$ is the logical $Z$ Pauli operator acting on the virtual qubit. In this language, \cref{z:WbYGcUl-rp7G} becomes 
\begin{align}
\operatorname{tr}\bigl(\Pi_L E_{k}^\dagger E_j \Pi_L \, Z_L\bigr) = 0 \qquad\text{for all \(k,j\)}\ .
\label{z:WWu9g46efaSi}
\end{align}
The condition~\cref{z:WWu9g46efaSi} is reminiscent of a recently formulated condition for quantum coding to improve how measurement sensitivity
scales with {increasing sensing
  time}~\cite{R20,R21}. In
\cref{z:JZtfKZv25too}, we explain how time-covariant quantum
error-correcting codes automatically fulfill~\cref{z:WbYGcUl-rp7G},
{providing} some simple examples.  In particular, we consider spins on a graph
with Ising or Heisenberg interactions and construct a state vector $\lvert {\psi}\rangle $
that fulfills \cref{z:WbYGcUl-rp7G},
where the noise model inflicts a single located erasure.

We {have derived} the trade-off relation \cref{z:OUF5HnwJdMZI} in a
highly idealized setting, in which noiseless evolution of Alice's clock is
followed by transmission to Bob through the noisy channel
$\mathcal{N}_{A\to B}$. For an actual clock, the noise acts continuously as the
clock evolves, rather than after the time evolution is complete. By focusing on
the idealized setting, we have been able to perform a particularly elegant
analysis of the time-energy trade-off. But in \cref{z:Xuq8JaOEWrUo}
we connect our results to the more realistic case of continuous Markovian noise
described by a master equation {in Lindblad form}, noting that the two
settings are actually equivalent, or nearly equivalent, under certain
conditions. One can decompose the Lindbladian into a Hamiltonian part and a
noise part that contains all the jump operators; if, for example, these two
parts define commuting channels, then the {Markovian} evolution for time $t$
is equivalent to Hamiltonian evolution for time $t$ followed by a noise channel
$\mathcal{N}_t$. Other cases where the Lindblad evolution is compatible with a
trade-off relation of the form \cref{z:OUF5HnwJdMZI} (at least to a
good approximation) are identified in \cref{z:Xuq8JaOEWrUo}.

\begin{figure*}
  \centering
  \includegraphics{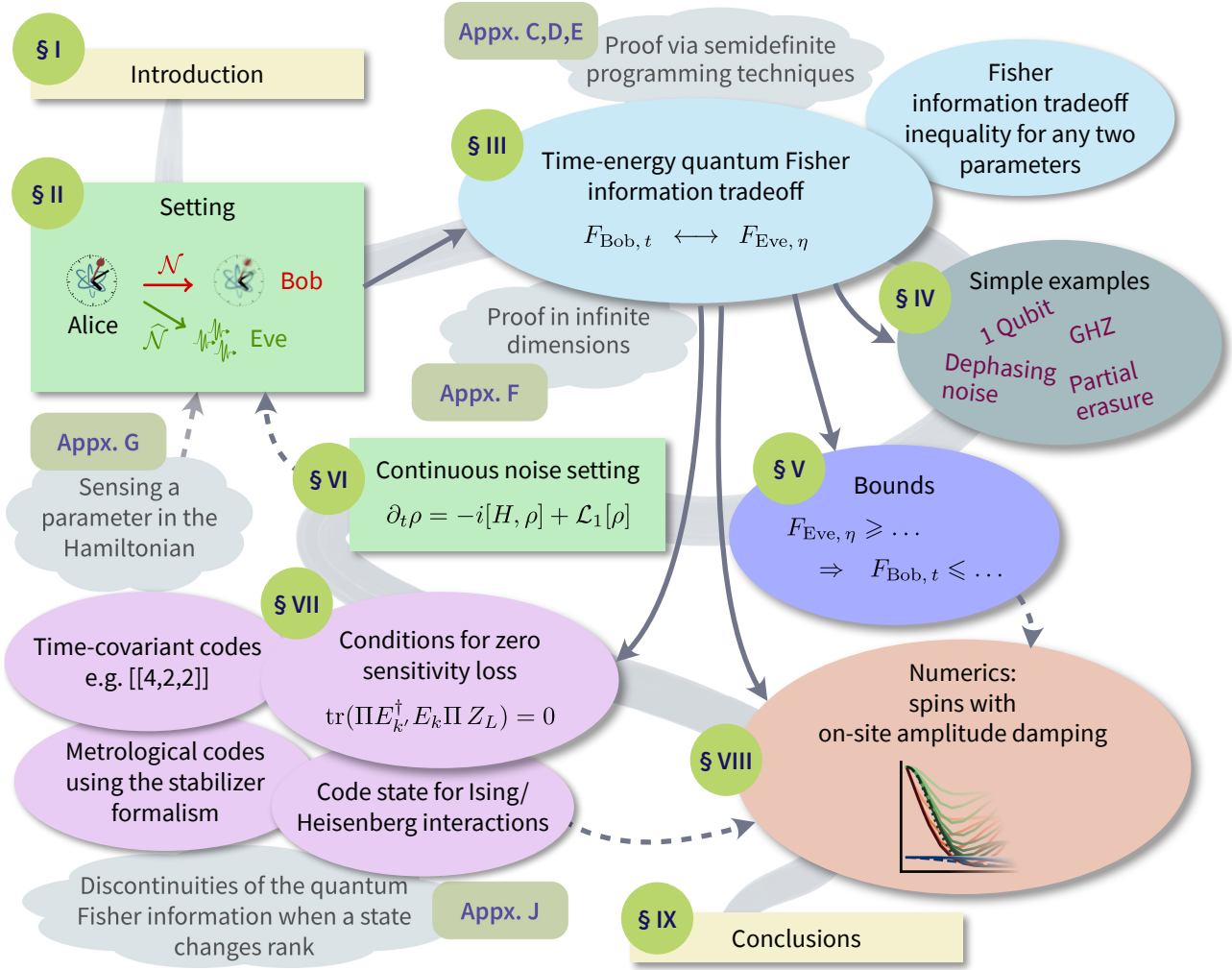}
  \caption{Overview of our main results and structure of this {work}.}
  \label{z:a4FfzgZJotxe}
\end{figure*}

Although the time-energy trade-off provided the primary motivation for this
work, we find that a trade-off relation similar to
\cref{z:OUF5HnwJdMZI} can be derived in a more general setting. Suppose
that $A$ and $B$ are Hermitian operators, and that
$\psi=|\psi\rangle\langle \psi|$ is a pure quantum state. We may consider the
``flow'' in Hilbert space generated by $A$ or by $B$. That is, we consider a
one-parameter family of pure states close to $\psi$, generated by $A$ and
parameterized by $a$, and a one-parameter family generated by $B$ and
parametrized by $b$, such that
\begin{align}
    \partial_a \psi = -i[A,\psi]\ , \quad \partial_b \psi = -i[B,\psi]\ .
\end{align}
In the setting of local parameter estimation, we suppose that Bob wishes to
estimate the parameter $a$ and Eve wants to estimate the parameter $b$, where
$a$ and $b$ are both small. Alice's QFI about $a$ is $\FIqty{Alice}{a}$, but Bob
receives the state via the noisy channel $\mathcal{N}_{A\to B}$, so his QFI
about $a$ ($\FIqty{Bob}{a}$) is in general smaller than Alice's. Alice's QFI
about $b$ is $\FIqty{Alice}{b}$, but Eve receives the state via the
complementary channel $\widehat{\mathcal{N}}_{A\to E}$, so her QFI about $b$
($\FIqty{Eve}{b}$) is in general smaller than Alice's. In
\cref{z:6pGtSPe4CZWB} we derive the trade-off relation
\begin{align}
   \frac{ \FIqty{Bob}{a} }{ \FIqty{Alice}{a} }
  + \frac{ \FIqty{Eve}{b} }{ \FIqty{Alice}{b} }
  \leq
  1 + 2 \sqrt{ 1 - \frac{ \bigl \langle {i[A,B]}\bigr \rangle _\psi^2 }{ 4\sigma_A^2\sigma_B^2 } }\ ,
  \label{z:VUDo7wgAsXzc}
\end{align}
where $\sigma_{{M}} := [\langle {{M}^2}\rangle _\psi - \langle {{M}}\rangle _\psi^2]^{1/2}$ denotes the
standard deviation of the observable ${M}$. Note that, in contrast to
\cref{z:OUF5HnwJdMZI}, this relation is an inequality rather than an
equality. It is reminiscent of the Robertson uncertainty relation, with the
commutator quantifying the incompatibility of the observables $A$ and $B$.
\Cref{z:a4FfzgZJotxe} summarizes the structure of this work and provides
an overview of our results.
In \cref{z:CKirppebju2f}, we introduce the setting of local parameter estimation,
recall some useful properties of the QFI, define the energy parameter $\eta$,
and review the concept of a complementary quantum channel.
We sketch the proof of the trade-off relation \cref{z:OUF5HnwJdMZI} and
its generalization \cref{z:VUDo7wgAsXzc} in
\cref{z:6pGtSPe4CZWB} (more details can be found in
\cref{z:.B4MZrnrq.9S}), and discuss some examples
in \cref{z:Z14T5rsnGxfi}. We use the trade-off relation to derive
upper bounds on the QFI in \cref{z:qdkRM6aFmtdz}.
In \cref{z:Xuq8JaOEWrUo} we discuss how the setting in
\cref{z:FnHui0ahNLxW} is connected with the more realistic setting of
continuous Markovian noise.  In \cref{z:JZtfKZv25too} we
derive the necessary and sufficient condition \cref{z:WbYGcUl-rp7G} for the
clock's sensitivity to be undiminished by transport through the noisy channel
$\mathcal{N}_{A\to B}$, and discuss some of the implications of this condition.
Numerical results for our upper bound on QFI in many-body systems are reported
in \cref{z:Pgp-gW09n9nZ}.
We summarize and comment on our results in \cref{z:HssWipSG8E2-}. Many further
details are {presented} 
in the appendices.

\section{Setting}
\label{z:CKirppebju2f}

We review the standard setting in quantum metrology of single-parameter
estimation.  We then introduce our noise model and the quantities that are
relevant to formulate our uncertainty relation.

\subsection{Quantum parameter estimation}
\label{z:veh1L4J.WPya}

Consider a quantum state $\rho(t)$ that depends on a single parameter $t$.
The task we study is how well the parameter $t$ can be estimated by performing
suitable measurements (\cref{z:o60mrlT1D3vR}).  In the
context of this work, the parameter $t$ is identified with physical time,
although the results hold for any general real parameter that the quantum state
might depend on.

\begin{figure}
  \centering
  \includegraphics{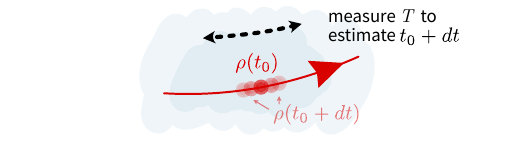}
  \caption{In the setting of quantum parameter estimation, the task is to
      infer a parameter $t$ in a one-parameter family of states $t\mapsto \rho(t)$
      through suitable measurements.  For \emph{local} parameter estimation, we
      assume the value of the parameter is already known to lie in the
      neighborhood of a given value $t_0$.  The measurement is required to refine
      the parameter estimation by optimally distinguishing $\rho(t_0)$ from
      $\rho(t_0 + dt)$ to first order in $dt$.  This setting is standard in the
      field of quantum metrology, and the optimal sensitivity is quantified by a
      quantity known as the Fisher information.}
  \label{z:o60mrlT1D3vR}
\end{figure}

We consider the setting of local sensitivity, where the goal of the quantum
measurement is to refine the precision to which we determine the parameter if
the value of the parameter is already known to be close to a given value $t_0$. 
More precisely, we seek a measurement operator $T$ with minimal variance such
that the expectation value of $T$ reveals the value of the parameter locally
around $t_0$ to first order in $dt$, i.e.,
\begin{align}
  \langle {T}\rangle _{\rho(t_0+dt)} = t_0 + dt + O(dt^2)\ .
  \label{z:c7.VBMRSmG0h}
\end{align}
Identifying the orders in $dt$ we see
that~\eqref{z:c7.VBMRSmG0h} is equivalent
to
\begin{align}
  \langle {T}\rangle _{\rho(t_0)}
  = t_0\quad\text{and}\quad
  \operatorname{tr}\bigl(T\; \partial_t\rho\, (t_0)\bigr)
  = 1\ ,
  \label{z:d4-BXp9cGlJQ}
\end{align}
using the notation
$\partial_t\rho = \frac{\partial \rho}{\partial t}$.  (We write a
partial derivative instead of a total derivative in anticipation of other
variables which will be introduced later.)
In the literature, it is common to reuse the symbol $t$ for both the parameter
on which $\rho$ depends as well as the reference value of the parameter $t_0$.
We keep the distinction for clarity. 

Here, we restricted the measurement to be projective, as described by the
Hermitian observable $T$.  A more general positive operator-valued measure (POVM)
 does not offer any more
sensitivity in sensing the
parameter~\cite{R0,R11}.

A central result in quantum metrology is the quantum Cram\'er-Rao bound, which
states that the optimal sensitivity to which one can determine the parameter $t$
locally around $t_0$ is determined by a quantity called the quantum Fisher
information~\cite{R22,R23,R0}.  The quantum Fisher
information of the state $\rho(t_0)$ with respect to a direction
$\partial_t\rho\,(t_0)$ is defined as
\begin{align}
  \Ftwo{\rho}{\partial_t\rho} = \operatorname{tr}\bigl(\rho \, R^2 \bigr)\ ,
  \label{z:.eb-akuquBNd}
\end{align}
where $R$ is any Hermitian operator that solves the equation
$\frac12\bigl\{ \rho, R \bigr\} = \frac12\bigl(\rho R + R \rho\bigr) = \partial_t\rho$, and
where the quantities $\rho$ and $\partial_t\rho$ are evaluated at
$t_0$.
The Cram\'er-Rao bound can be formulated for our purposes as follows:
For any observable $T$ that
satisfies~\eqref{z:d4-BXp9cGlJQ}, we must have
\begin{align}
  \bigl \langle {(T - t_0)^2}\bigr \rangle _{\rho(t_0)}
  \geq \frac1{\Ftwo{\rho(t_0)}{\partial_t \rho\, (t_0)}}\ ,
  \label{z:5-HFHQQdFwEu}
\end{align}
and furthermore, equality in~\eqref{z:5-HFHQQdFwEu} can always be
achieved by a suitable choice of $T$.  We refer to a choice of $T$ which is
optimal in~\eqref{z:5-HFHQQdFwEu} as an \emph{optimal local-sensing
  observable for $t$}.

The operator $R$ in~\eqref{z:.eb-akuquBNd} is called a \emph{symmetric
  logarithmic derivative}.  When $\rho$ and $\partial_t\rho$ commute, we can choose
$R = \rho^{-1} \partial_t\rho = (\partial/\partial t)\ln\rho $.
A general construction of $R$ in terms of an eigendecomposition of $\rho$ is
given as follows~\cite{R24}.  Consider an eigenbasis $\{ \lvert {k}\rangle  \}$
of $\rho$ that spans the full Hilbert space, such that
$\rho = \sum_k \lambda_k \lvert {k}\rangle \mkern -1.8mu\relax \langle{k}\rvert $ and $k=1,2,\ldots,\dim(\mathscr{H})$, then
\begin{align}
  R =
  \sum_{\substack{k,k':\\\lambda_k+\lambda_{k'}\neq 0}}
  \frac2{\lambda_k + \lambda_{k'}}\,\bigl \langle {k}\mkern 1.5mu\relax \big \vert \mkern 1.5mu\relax {\partial_t\rho}\mkern 1.5mu\relax \big \vert \mkern 1.5mu\relax {k'}\bigr \rangle \,\lvert {k}\rangle \mkern -1.8mu\relax \langle{k'}\rvert \ ,
  \label{z:d-w60nHacIQJ}
\end{align}
where the sum ranges over all pairs of indices $k,k'$ except those for which
both $\lambda_k=0$ and $\lambda_{k'}=0$.  The expression for the Fisher information becomes $\Ftwo{\rho}{\partial_t\rho}$, where 
\begin{align}
  \Ftwo{\rho}{\partial_t\rho}
  &= \sum_{\substack{k,k':\\\lambda_k+\lambda_{k'}\neq 0}}
    \frac2{\lambda_k + \lambda_{k'}} \bigl \lvert { \bigl \langle {k}\mkern 1.5mu\relax \big \vert \mkern 1.5mu\relax { \partial_t\rho }\mkern 1.5mu\relax \big \vert \mkern 1.5mu\relax {k'}\bigr \rangle  }\bigr \rvert ^2\ .
  \label{z:r7GWSaRrLfNh}
\end{align}
The solution to the anticommutator equation $\frac12\{\rho, R\} = \partial_t\rho$ is
unique up to transformations of the form
$R \mapsto R + P_\rho^\perp {M} P_\rho^\perp$ where ${M}$ is an arbitrary
Hermitian operator, where $P_\rho^\perp = \mathds{1}- P_\rho$, and where $P_\rho$
denotes the projector onto the support of $\rho$.
In the event that $P_\rho^\perp \frac{d\rho}{dt} P_\rho^\perp \neq 0$, there is no
solution for $R$.  In such a situation, the optimal estimation
variance~\eqref{z:5-HFHQQdFwEu} is zero and the Fisher information is not
defined; such cases do not arise in the setting we consider in this
work.

We review the solutions to the anticommutator equation
$\frac12\{\rho, R\} = \partial_t\rho$ in \cref{z:V.Yy8m55j7Q5}.
In \cref{z:KH4B5FtzQ1kE}, the definition and elementary properties of
the Fisher information are reviewed using simple techniques based on
semidefinite programming.  In \cref{z:9mLm-0LXJIol}, we
review a derivation of the Cram\'er-Rao bound using these methods.

Observables
$T$ that estimate the time parameter $t$ with an
accuracy that achieves the Cram\'er-Rao bound~\eqref{z:5-HFHQQdFwEu},
i.e., the optimal local-sensing observables, turn out to be the projective
measurements with outcomes associated with the eigenspaces of a symmetric
logarithmic derivative~\cite{R22,R0}.  Specifically, any optimal local-sensing
observable for $t$ is of the form
\begin{align}
  T = t_0 + \frac{1}{\Ftwo{\rho}{\frac{d\rho}{dt}}}\, R\ ,
  \label{z:fRkTtRxHhvuE}
\end{align}
where $R$ is as above any solution to the anticommutator equation
$\frac12\{\rho,R\}=d\rho/dt$ (see \cref{z:eQl8.oS.4.na}
for a review of the proof).
Due to the freedom in the choice of $R$, all optimal local-sensing observables
for $t$ differ by a term of the form $P_\rho^\perp 
{M} P_\rho^\perp$ where ${M}$ is
any Hermitian operator.

In the remaining part of this section, 
we review a few properties of the Fisher information for later use (see
\cref{z:KH4B5FtzQ1kE} for details).  First is a scaling property: If
$0<\alpha\leqslant 1$ and $\beta\in\mathbb{R}$, we have
\begin{align}
  \Ftwo{\alpha\rho}{\beta\, \partial_t\rho}
  = \frac{\beta^2}{\alpha} \Ftwo{\rho}{\partial_t\rho}\ ,
  \label{z:OBSqCtuxYn6d}
\end{align}
where the definition~\eqref{z:.eb-akuquBNd} is formally extended to
positive semidefinite operators $\rho$ that satisfy $\operatorname{tr}(\rho)\leqslant 1$.
Second, in case the state $\rho$ and derivative $\partial_t\rho$ commute, the
Fisher information takes the simple form
\begin{align}
  [\rho,\partial_t\rho] &= 0\quad\Rightarrow\quad
  \Ftwo{\rho}{\partial_t\rho} = \operatorname{tr}\bigl[\rho^{-1}\,(\partial_t\rho)^2\bigr]\ .
  \label{z:q9lQk9SttqXb}
\end{align}
Finally, for general $\rho,\partial_t\rho$, we can express the Fisher information in terms of a pair of convex optimization
problems~\cite{R25,R26,R27} as
\begin{subequations}
  \begin{align}
    \hspace*{1em}&\hspace*{-1em}
    \frac14 \Ftwo{\rho}{\partial_t\rho}
    \nonumber\\
    \begin{split}
    &=
      \max_{S=S^\dagger} \Bigl\{ 
        \operatorname{tr}\bigl[ (\partial_t\rho)\,S \bigr] - \operatorname{tr}\bigl[ \rho\,S^2 \bigr] \Bigr\}
    \end{split}
    \label{z:-JCqSeMuhqI6}
    \\
    \begin{split}
    &= \min_{L~\mathrm{arb.}} \Bigl\{
      \operatorname{tr}(L^\dagger L)\ :\ 
      \rho^{1/2}L+L^\dagger\rho^{1/2} = \partial_t\rho
    \Bigr\}\ .
    \end{split}
    \label{z:fC3N68oVdRLb}
  \end{align}
\end{subequations}
These two optimizations can be cast as semidefinite problems that are dual to
each other.  These optimizations are convenient to derive bounds on the Fisher
information, as it suffices to exhibit suitable candidates
in~\eqref{z:-JCqSeMuhqI6} or~\eqref{z:fC3N68oVdRLb}.

\subsection{Time and energy parameters of the noiseless clock}
\label{z:ip.xvGb3bzRb}
\label{z:t.T5CjLp6B3a}

Now we turn to the setup depicted in~\cref{z:FnHui0ahNLxW}, in
which Alice possesses a noiseless quantum clock which she sends to Bob through a
given noisy channel.  In this subsection, we study Alice's noiseless quantum
clock, and in the following subsection we study the effect of the noise.

\paragraph{The noiseless clock.}
Suppose that Alice prepares a quantum clock in a pure state living in a
finite-dimensional Hilbert space $\mathscr{H}_A$.  She lets it evolve according to a
Hamiltonian $H(t)$, generating a one-parameter family of state vectors $t\mapsto \lvert {\psi(t)}\rangle $.
The time evolution of
$\psi(t) = \lvert {\psi(t)}\rangle \langle {\psi(t)}\rvert $ is governed by the standard Schr\"odinger time evolution
\begin{align}
  \partial_t \psi :=
  \frac{\partial\psi}{\partial t} = -i\,[H, \psi]\ .
  \label{z:6p4-xHlpWwdA}
\end{align}
We now compute the Fisher information associated with Alice's clock locally
around a time of interest $t_0$, following the
definition~\eqref{z:.eb-akuquBNd}. %
For any $t_0$, we can choose $R = 2\,\partial_t \psi = -2i[H,\psi]$, because
$\{\partial_t \psi, \psi\} = \partial_t(\psi^2) = \partial_t \psi$.  Alice's
Fisher information $\FIqty{Alice}{t}$ for the evolution $\lvert {\psi(t)}\rangle $ at
the time of interest $t_0$ is therefore given by
\begin{align}
  \FIqty{Alice}{t} := \Ftwo[\Big]{\psi}{-i[H, \psi]}
  = 4\sigma_H^2\ ,
  \label{z:BBGAvQkdkmd8}
\end{align}
where $\psi$ and $H$ are evaluated at time $t_0$, and where again, 
we denote by
$\sigma_{{M}} = [\langle {{M}^2}\rangle _\psi - \langle {{M}}\rangle _\psi^2]^{1/2}$ the standard deviation of
an observable ${M}$.  Alternative expressions of the standard deviation are given
by
\begin{align}
  \sigma_{{M}}^2 %
  = \bigl \langle {({M} - \langle {{M}}\rangle )^2}\bigr \rangle 
  = -\bigl \langle { [{M}, \psi]^2 }\bigr \rangle \ ,
\end{align}
writing $\langle {{M}}\rangle  := \langle {{M}}\rangle _\psi$ for brevity.

Around the point $t_0$, any optimal local-sensing observable for $t$ takes the
form given by~\eqref{z:fRkTtRxHhvuE}, which we can rewrite in this
context as
\begin{align}
  T = t_0 - \frac{i [ H, \psi ]}{2\sigma_H^2}
  + P_\psi^\perp {M} P_\psi^\perp\ ,
  \label{z:w7NUzrfGUNE-}
\end{align}
where ${M}$ is any Hermitian operator.  In the case where $H$ is time independent,
then $\FIqty{Alice}{t}$ does not depend on the time of interest $t_0$,
but the optimal sensing observable $T$ depends
on $t_0$ not only directly but also indirectly through $\psi$ and $\partial_t \psi$.
In the following, we fix $t_0$ and we only consider the evolution
$\lvert {\psi(t)}\rangle $ locally at $t_0$.  Furthermore, we use the shorthand
$\lvert {\psi }\rangle := \lvert {\psi(t_0)}\rangle $.

\paragraph{The energy parameter.}
The optimal local time-sensing observable $T$
in \cref{z:w7NUzrfGUNE-}, being a Hermitian operator, can be
used to generate a different evolution in an alternative direction in the space
of quantum states.  
In our setup, we define $\eta_0 = \langle {H}\rangle _\psi$ and we consider any family of
state vectors $\eta\mapsto \lvert {\psi(\eta)}\rangle $ such that
$\lvert {\psi(\eta{=}\eta_0)}\rangle  = \lvert {\psi}\rangle  = \lvert {\psi(t{=}t_0)}\rangle $ and such that at
the point $\lvert {\psi(\eta{=}\eta_0)}\rangle $ we have
\begin{align}
  \partial_\eta \psi = i[T, \psi] \ .
  \label{z:PlQtfu9raUmQ}
\end{align}
This evolution can be interpreted as a Schr\"odinger equation with the effective
Hamiltonian $-T$.  An example of such an evolution is
\begin{align}
  \lvert {\psi(\eta)}\rangle  = {e}^{iT(\eta-\eta_0)}\,\lvert {\psi}\rangle \ .
  \label{z:DwBxsQ4hMn8m}
\end{align}
Interestingly, the evolution generated in this way locally around $\lvert {\psi}\rangle $
turns out to be complementary to time evolution in the sense that we can derive
a meaningful uncertainty relation and that the parameter $\eta$ can be
identified with the average energy of the state vector $\lvert {\psi(\eta)}\rangle $ (see
\cref{z:KI2eOYZK2.Ln}).
\begin{figure}
  \centering
  \includegraphics{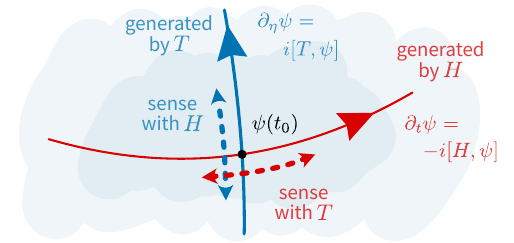}
  \caption{We define a parameter $\eta$ that is complementary to time evolution and
      that represents the energy of the state.  Consider a quantum clock modeled
      as a pure state $\psi$ evolving according to the Schr\"odinger equation
      $\partial_t \psi = -i[H, \psi]$, where $H$ is the Hamiltonian.  Locally
      around $t_0$, the observable $T$ that optimally distinguishes the
      neighboring states $\psi(t_0)$ and $\psi(t_0+dt)$ defines an \emph{optimal
        local time-sensing observable}. $T$ is the relevant measurement to carry
      out to optimally read out the information about time stored in the quantum
      clock.  We now consider locally around $\psi(t_0)$ the direction in state
      space defined by $\partial_\eta \psi = i[T, \psi]$, i.e., a
      Schr\"odinger-type evolution with $-T$ playing the role of an effective
      Hamiltonian.  It turns out that the optimal estimation procedure for the
      parameter $\eta$ is to measure $H$ itself. Therefore, the parameter $\eta$
      represents the energy of $\psi(\eta)$.  The parameters $t$ and $\eta$ are,
    therefore, complementary to each other in the sense that the generator
    associated with one parameter optimally distinguishes neighboring values of
    the other parameter and \emph{vice versa}.
    }
  \label{z:KI2eOYZK2.Ln}
\end{figure}

More formally and to clarify the dependencies of $\lvert {\psi}\rangle $ on $t$ and $\eta$,
we consider a two-parameter family of state vectors 
$(t,\eta) \mapsto \lvert {\psi(t,\eta)}\rangle $ with
$\lvert {\psi(t_0,\eta_0)}\rangle =\lvert {\psi}\rangle $ and such that at the point $(t_0,\eta_0)$ we
have
\begin{align}
  \partial_t \psi \, (t_0,\eta_0) &= -i[H, \psi]\ ,
  &
    \partial_\eta \psi \, (t_0,\eta_0) &= i[T, \psi]\ ,
  \label{z:ff7jHgn75DW4}
\end{align}
where $T$ is given by~\eqref{z:w7NUzrfGUNE-}.  For example, we
could choose
\begin{align}
  \lvert {\psi(t,\eta)}\rangle  = \exp\bigl\{-i[ (t-t_0) H - (\eta-\eta_0) T]\bigr\}\,\lvert {\psi}\rangle \ . \label{z:BocUFnzQodPW}
\end{align}

Unless indicated otherwise, the state vector $\lvert {\psi}\rangle $ and the corresponding
derivatives $\partial_t\psi, \partial_\eta\psi$ are henceforth implicitly
evaluated at $(t_0,\eta_0)$.  We use the shorthands
$\lvert {\psi(t)}\rangle  := \lvert {\psi(t,\eta_0)}\rangle $ and
$\lvert {\psi(\eta)}\rangle  := \lvert {\psi(t_0,\eta)}\rangle $ to denote the respective evolutions
according to $t$ and $\eta$ in which the other parameter is fixed to $\eta_0$ or
$t_0$, respectively; the name of the argument ($t$ or $\eta$) determines which
evolution is meant.

Let us re-express the derivative $\partial_\eta \psi$ 
of $\psi=\lvert {\psi}\rangle \langle {\psi}\rvert $
in terms of the
Hamiltonian.  Using~\eqref{z:w7NUzrfGUNE-}, we have
\begin{align}
  \partial_\eta \psi = i \bigl[T, \psi\bigr]
  = \frac{1}{2\sigma_H^2}\, \bigl[  [H,\psi] , \psi \bigr]\ .
  \label{z:Df3ZpFNLdboK}
\end{align}
A brief computation reveals that $ \bigl[[H,\psi],\psi\bigr]
= H\psi + \psi H -2\langle {H}\rangle \,\psi = \{H-\langle {H}\rangle , \psi\} $ and therefore
\begin{align}
  \partial_\eta \psi
  &= \frac1{2\sigma_H^2}\,\bigl\{H-\langle {H}\rangle , \psi\bigr\}\ .
  \label{z:SMjFM6HgZlts}
\end{align}
Alice's Fisher information with respect to the parameter $\eta$ is given by the
same expression as~\eqref{z:BBGAvQkdkmd8}, but with $H$ and $t$
replaced by $-T$ and $\eta$, to get
\begin{align}
  \FIqty{Alice}{\eta}
  &:=
    \Ftwo[\Big]{\psi}{i[T,\psi]}
    = 4\sigma_T^2
    = \frac1{\sigma_H^2}\ ,
  \label{z:lN.pbrjbNave}
\end{align}
where the last equality follows from
\begin{align}
  \sigma_T^2 = \bigl \langle {(T-t_0)^2}\bigr \rangle 
  = \frac{-\langle {[H,\psi]^2}\rangle }{4\sigma_H^4}
  = \frac1{4\sigma_H^2}\ .
\end{align}

To justify that the parameter $\eta$ in the
evolution~\eqref{z:PlQtfu9raUmQ} can be associated with the energy of
the state vector locally around $\lvert {\psi}\rangle $, we compute the optimal sensing observable
for $\eta$ and show that it is the Hamiltonian $H$ itself (up to terms lying
outside of the support of $\psi$).
The optimal local-sensing observable that distinguishes $\psi(\eta)$ from
$\psi(\eta+d\eta)$ is given by~\eqref{z:fRkTtRxHhvuE}, but with the parameter $t$ replaced by the parameter $\eta$.
Using~\eqref{z:SMjFM6HgZlts}, observe that the operator
$R' = (H-\langle {H}\rangle )/\sigma_H^2$ solves the equation
$\{\psi, R'\}/2 = \partial_\eta \psi$.  From~\eqref{z:fRkTtRxHhvuE} and
substituting $t$ by $\eta$, we see that the optimal local-sensing observable for
$\eta$ is simply $\eta_0 + H - \langle {H}\rangle  = H$.  That is, the optimal measurement
distinguishing $\lvert {\psi(\eta_0)}\rangle $ from $\lvert {\psi(\eta_0+d\eta)}\rangle $ is the
Hamiltonian $H$ itself, up to a term $P_\psi^\perp {M} P_\psi^\perp$ for any
Hermitian ${M}$.  (Alternatively, the same conclusion would have been reached had
we started from~\eqref{z:w7NUzrfGUNE-} with $t,H$ replaced by
$\eta,-T$. A more detailed computation is provided in
\cref{z:eQl8.oS.4.na}.)
Therefore, the parameter $\eta$ describes an evolution along which, locally
around $\lvert {\psi}\rangle $, we have $\eta_0 + d\eta = \langle {H}\rangle _{\psi(\eta_0+d\eta)}$.  In
this sense, $\eta$ represents the energy of the probe $\lvert {\psi(\eta)}\rangle $ locally
around $\eta_0$.

To summarize, the evolution
of $\psi(t)=\lvert {\psi(t)}\rangle \langle {\psi(t)}\rvert $
is generated by the Hamiltonian $H$;
nearby states $\psi(t_0)$ and $\psi(t_0+dt)$ are optimally distinguished by a
local time-sensing observable $T$.  The complementary evolution $\psi(\eta)$ is
one that inverts the roles of $H$ and $T$: The evolution $\psi(\eta)$ is
generated by $T$, and $H$ is the operator that optimally distinguishes
neighboring states $\psi(\eta_0)$ and $\psi(\eta_0+d\eta)$.

\paragraph{Single-qubit example.}
Consider a qubit initialized in the state  vector $\lvert {\psi_{\mathrm{init}}}\rangle  = \lvert {+}\rangle $,
where $\lvert {\pm }\rangle = [\lvert {\uparrow}\rangle \pm\lvert {\downarrow}\rangle ]/\sqrt{2}$, and let the qubit
evolve according to the Hamiltonian %
{$H=\omega Z/2$}
(i.e., Alice's system
in \cref{z:duwTyi1jsdbn}).  The time evolution of the clock is
given by $\lvert {\psi(t)}\rangle  = U_{t}\lvert {+}\rangle $, where $U_t = {e}^{-iHt}$; we see that
\begin{align}
  \lvert {\psi(t)}\rangle 
  &= \frac1{\sqrt2} \bigl[{e}^{-\frac{i\omega t}{2}}\lvert {\uparrow }\rangle +
  {e}^{\frac{i\omega t}{2}}\lvert {\downarrow}\rangle \bigr]
  \nonumber\\
  &= \cos\Bigl(\frac{\omega t}{2}\Bigr)\,\lvert {+}\rangle 
    - i\sin\Bigl(\frac{\omega t}{2}\Bigr)\,\lvert {-}\rangle \ .
    \label{z:UlWOzs-evrpA}
\end{align}
It is also convenient to note that
\begin{subequations}
  \label{z:1--J4q44Q34a}
  \begin{align}
    \psi(t)
    &= U_{t} \lvert {+}\rangle \mkern -1.8mu\relax \langle{+}\rvert  U_{t}^\dagger
      = \frac{\mathds{1}}2 + \frac12 U_{t} 
      {X}
      U_{t}^\dagger
      \label{z:P9NAwniVN.Vm}
    \\
    &= \frac{\mathds{1}}2 + \frac12\bigl[\cos(\omega t)\,
{X}
      + \sin(\omega t)\,
      {Y}
      \bigr]\ ,
      \label{z:6UjLB7XiW4LY}
  \end{align}
\end{subequations}
using the identity $\lvert {+}\rangle \mkern -1.8mu\relax \langle{+}\rvert  = [\mathds{1}+
{X}]/2$ along with
${e}^{-i a
{Z}
} 
{X}
{e}^{i a
{Z} } %
= \cos(2a)\, 
{X}
+ \sin(2a)\, 
{Y}$.
The time derivative of the state is
\begin{subequations}
  \label{z:GM3ZeRQLM9mt}
  \begin{align}
    \partial_t\psi(t)
    &= -i[H,\psi(t)]
    = -\frac{i\omega}{2}\,\bigl[
    {Z}
    \,,\, U_t \lvert {+}\rangle \mkern -1.8mu\relax \langle{+}\rvert  U_t^\dagger \bigr]
    \nonumber\\
    &= -\frac{i\omega}{2}\,U_t \Bigl[
    {Z}
    \,,\, \frac{\mathds{1}+
    {X}
    }{2} \Bigr] U_t^\dagger
      \nonumber\\
    &= \frac{\omega}{2}\,U_t 
    {Y} 
    U_t^\dagger
      \label{z:SFi9RkJisbzH}
    \\
    &= \frac\omega2 \bigl[\cos(\omega t)\,
    {Y}
    -
      \sin(\omega t)\,
      {X}\bigr]\ ,
      \label{z:.uzSQ.DqcFF7}
  \end{align}
\end{subequations}
using ${e}^{-i a
{Z} } 
{Y}
{e}^{i a
{Z} } %
= \cos(2a)\,
{Y} - \sin(2a)\,
{X} $.
The expressions~\eqref{z:P9NAwniVN.Vm}
and~\eqref{z:SFi9RkJisbzH} %
manifest the fact that
the state and the derivative evolve in time by rotation around the $Z$ axis of
the Bloch sphere, whereas we can read out from the
expressions~\eqref{z:6UjLB7XiW4LY}
and~\eqref{z:.uzSQ.DqcFF7} the information about the
time evolution of the components of the Bloch vector.
The average energy $\langle {H}\rangle _{\psi(t)}$ is
\begin{align}
  \langle {H}\rangle _{\psi(t)} = \operatorname{tr}\Bigl[U_t\lvert {+}\rangle \mkern -1.8mu\relax \langle{+}\rvert U_t^\dagger\,\frac{\omega
  {Z} }{2}\Bigr]
  = 0\ 
  \label{z:u6Va242CzB5I}
\end{align}
for all $t$, noting that $U_t$ commutes with %
${Z}$
and that
$\langle {+}\mkern 1.5mu\relax \vert \mkern 1.5mu\relax {
{Z}
}\mkern 1.5mu\relax \vert \mkern 1.5mu\relax {+}\rangle =0$.
The energy's standard deviation at time $t$ is then
\begin{align}
  \sigma_H^2
  = \bigl \langle {H^2}\bigr \rangle _{\psi(t)}
  = \frac{\omega^2}{4}\ ,
  \label{z:jQs1wEnWl.ji}
\end{align}
noting that 
{$Z^2=\mathds{1}$}.

We now compute the time sensitivity and the optimal time-sensing observable
locally around a given time $t_0$.  We write $\lvert {\psi }\rangle = \lvert {\psi(t_0)}\rangle $ for
short.  
The optimal time-sensing observable
is given by~\eqref{z:w7NUzrfGUNE-}, which we can
compute as (ignoring the degree of freedom $P_\psi^\perp {M}  P_\psi^\perp$),
\begin{align}
  T - t_0
  &= \frac1{2\sigma_H^2}\,\partial_t\psi\,(t_0)
  = \frac1{\omega} \, U_{t_0} 
  {Y}
  U_{t_0}^\dagger
  \nonumber\\
  &= \frac1\omega \bigl[ \cos(\omega t_0)\,
  {Y}
  -
     \sin(\omega t_0)\,
     {X}\bigr]\ .
  \label{z:tI5wBCGC-2n.}
\end{align}
The optimal sensing observable $T$ is therefore aligned with the direction on
the Bloch sphere that is tangent to the state's evolution.

We now determine the parameter $\eta$.  It is generated by $T$ as
per~\eqref{z:PlQtfu9raUmQ}, and we can compute the associated derivative
using~\eqref{z:SMjFM6HgZlts} as
\begin{align}
  \{ H - \langle {H}\rangle , \psi \}
  &= \frac\omega2 \Bigl\{
  {Z}
  , \frac{\mathds{1}+
  {X}
  }{2} \Bigr\}
    = \frac\omega2 
    {Z}
    \ ,
    \label{z:y11JcdWGvpsS}
  \\
  \partial_\eta \psi
  &
  = \frac{1}{2\sigma_H^2} \frac\omega2 
  {Z}
  = \frac{
  {Z}
}{\omega}\ ,
  \label{z:xVN3EzcV.KFg}
\end{align}
recalling that the Pauli matrices along different directions anticommute.  The
direction associated with the $\eta$ parameter is aligned with the $Z$ axis of
the Bloch sphere (\cref{z:duwTyi1jsdbn}) which is the
direction in which the Hamiltonian is oriented.

We can now compute the sensitivities with respect to $t$ and $\eta$
using~\eqref{z:BBGAvQkdkmd8} and~\eqref{z:lN.pbrjbNave} as
\begin{align}
  \FIqty{Alice}{t}
  &= 4\sigma_H^2 = \omega^2\ ,
  &
    \FIqty{Alice}{\eta}
  &= \frac1{\sigma_H^2}
    = \frac4{\omega^2}\ .
    \label{z:X6ybg8GSCbDf}
\end{align}
Finally, we can check that $H$ is an optimal local-sensing observable for
$\eta$.  First observe with $\eta_0 = \langle {H}\rangle _\psi = 0$ that
\begin{align}
  \langle { H }\rangle  _{\psi(t_0,\eta_0+d\eta)}
  &= d\eta \operatorname{tr}\bigl(H\, \partial_\eta \psi\bigr)
  = d\eta\ ,
\end{align}
using~\eqref{z:xVN3EzcV.KFg} along with 
{$Z^2=\mathds{1}$}.
Hence, $H$ satisfies the
condition~\eqref{z:c7.VBMRSmG0h} for the
parameter $\eta$.
The variance of this observable was computed above as
\begin{align}
  \bigl \langle {(H-\langle {H}\rangle )^2}\bigr \rangle 
  = \sigma_H^2
  = \frac{\omega^2}{4} = \frac1{\FIqty{Alice}{\eta}}\ ,
\end{align}
and therefore $H$ also saturates the Cram\'er-Rao bound. 
It is an optimal local
sensing observable.

\subsection{The noisy channel and the environment}

\paragraph{The noisy clock.}
Suppose that Alice sends the clock from its noiseless environment to a receiver
Bob through a noisy channel $\mathcal{N}_{A\to B}$
(\cref{z:FnHui0ahNLxW}).  Bob has access to the noisy clock state
\begin{align}
  \rho_B(t) = \mathcal{N}_{A\to B}( \psi(t) )\ .
\end{align}
We consider the sensitivity of Bob's clock locally around $t_0$, i.e., we ask
how well Bob can distinguish $\rho_B(t_0)$ from $\rho_B(t_0 + dt)$.  We assume
that the noisy channel $\mathcal{N}_{A\to B}$ does not depend on $t$.
This setting is nonstandard in the context of quantum metrology.  Usually, one
considers a quantum clock that is exposed to continuous noise as it evolves in
time instead of the noise being applied separately and instantaneously after the
system has evolved unitarily for a given amount of time.
This alternative setting represents the situation where Alice would like to send
a quantum reference frame to Bob over a noisy
channel~\cite{R28}.  We defer the discussion of the
connections between these two settings to \cref{z:Xuq8JaOEWrUo}.

Locally around $t_0$, Bob's optimal sensitivity is given via the Cram\'er-Rao
bound~\eqref{z:5-HFHQQdFwEu} by Bob's Fisher information with respect to
time,
\begin{align}
  \FIqty{Bob}{t} := \Ftwo{\rho_B(t_0)}{\partial_t \rho_B\,(t_0)}\ .
\end{align}
We may furthermore express $\rho_B$ and $\partial_t \rho_B$ as
\begin{align}
  \rho_B
  &= \mathcal{N}_{A\to B}( \psi )\ ,
  \nonumber\\
  \partial_t \rho_B
  &= \mathcal{N}_{A\to B}\bigl( \partial_t \psi \bigr)
    =  \mathcal{N}_{A\to B}\bigl( -i[ H, \psi ] \bigr)\ .
    \label{z:bs.I9ZrrQvK2}
\end{align}
Determining $\FIqty{Bob}{t}$ in principle requires the usage of a general
expression of the Fisher information for mixed states such
as~\eqref{z:.eb-akuquBNd} or~\eqref{z:r7GWSaRrLfNh},
which can be significantly more cumbersome to manipulate as opposed to computing
the variance of the Hamiltonian in the case of a pure-state evolution.

\paragraph{The environment.}
Any %
quantum channel $\mathcal{N}_{A\to B}$ can be expressed as a unitary evolution
over a larger system, where the environment is initialized in a pure state.
This construction is known as a Stinespring dilation.  The initial pure state of
the environment can be contracted with the global unitary to give a more concise
description of the Stinespring dilation in terms of an isometry $A\to BE$.  More
precisely, any quantum channel $\mathcal{N}_{A\to B}$ can be written as
\begin{align}
  \mathcal{N}_{A\to B}(\cdot) = \operatorname{tr}_{E}\bigl( V_{A\to BE} \, (\cdot) \, V^\dagger \bigr)\ ,
  \label{z:1i4Ojzjjs9zf}
\end{align}
where $E$ is a suitable environment system, and where $V_{A\to BE}$ is an
isometry mapping states of $A$ into $B\otimes E$.

The system $E$, which we call Eve, represents the quantum information that is
discarded by the channel $\mathcal{N}_{A\to B}$.  Instead, we can consider a quantum channel that describes what
Eve gets if Bob's system $B$ is discarded.  By tracing out $B$ instead of
$E$ in~\eqref{z:1i4Ojzjjs9zf} we obtain the \emph{complementary
  channel},
\begin{align}
   \widehat{\mathcal{N}}_{A\to E}(\cdot)
  = \operatorname{tr}_{B}\bigl( V_{A\to BE} \, (\cdot) \, V^\dagger \bigr)\ .
  \label{z:L-vRq0ojGGDS}
\end{align}
If we write the noisy channel in an operator-sum representation with Kraus
operators $\{ E_k \}$ as
\begin{align}
  \mathcal{N}(\cdot) = \sum_k E_k (\cdot) E_k^\dagger\ ,
  \label{z:jg7XnoGzLtS0}
\end{align}
we may write a corresponding complementary channel as
\begin{align}
  \widehat{\mathcal{N}}(\cdot)
  = \sum_{k,k'} \operatorname{tr}\bigl(E_{k'}^\dagger E_k (\cdot)\bigr)\, \lvert {k}\rangle \mkern -1.8mu\relax \langle{k'}\rvert _E\ ,
  \label{z:SiT0KpCikk0d}
\end{align}
for some orthonormal basis $\{ \lvert {k}\rangle  \}$ on the environment system $E$.
The complementary channel is unique up to a partial isometry on the environment
system.

Our main result involves Eve's sensitivity to the $\eta$ parameter of the state
that she obtains if Alice's quantum clock is sent to $E$ via the complementary channel.
Namely, we define
\begin{align}
  \rho_E(\eta) = \widehat{\mathcal{N}}_{A\to E}(\psi(\eta))\ .
\end{align}
Recalling~\eqref{z:SMjFM6HgZlts}, we have
\begin{align}
  \partial_\eta \rho_E
  = \frac1{2\sigma_H^2} \widehat{\mathcal{N}}_{A\to E}( \{ H - \langle {H}\rangle , \psi \} )\ .
  \label{z:aTgSsruxgou5}
\end{align}
As for $\psi$, $\partial_t \psi$, and $\partial_\eta \psi$, the states $\rho_B$,
$\rho_E$ and the derivatives $\partial_t \rho_B$, $\partial_\eta \rho_E$ are
implicitly evaluated at $(t_0,\eta_0)$ unless specified otherwise.  We also
abbreviate $\mathcal{N}_{A\to B}$ and $\widehat{\mathcal{N}}_{A\to B}$ by
$\mathcal{N}$ and $\widehat{\mathcal{N}}$ for convenience and whenever it is
unambiguous to do so.

\section{Bipartite uncertainty relation for the Fisher information}
\label{z:6pGtSPe4CZWB}

\subsection{Equality Fisher information trade-off for time and energy
  and expression for sensitivity loss}
Sending Alice's clock to Bob through the noisy channel $\mathcal{N}_{A\to B}$
reduces the clock's sensitivity to the time parameter $t$. On the other hand,
sending the clock to Eve through the complementary channel
$\widehat{\mathcal{N}}_{A\to E}$
enables Eve to gain sensitivity with respect to the energy parameter $\eta$.
Our main result
characterizes how these two effects are related:
\begin{theorem}[Bipartite time-energy uncertainty relation]
  \label{z:T-KNuYhGRM7I}
  \noproofref%
  Suppose Alice prepares a probe in a quantum state vector $\lvert {\psi}\rangle $ and consider
  the local parameters $t$, $\eta$ defining directions in state space generated
  by $H$ and $T$ and centered at $\lvert {\psi }\rangle = \lvert {\psi(t_0,\eta_0)}\rangle $ as in
  \cref{z:ff7jHgn75DW4}.
  Alice sends her probe to Bob through a channel $\mathcal{N}_{A\to B}$; let Eve
  represent the output of the corresponding complementary channel
  $\widehat{\mathcal{N}}_{A\to E}$ (see \cref{z:FnHui0ahNLxW}).  %
  Then
  \begin{align}
    \frac{\FIqty{Bob}{t}}{\FIqty{Alice}{t}} +
    \frac{\FIqty{Eve}{\eta}}{\FIqty{Alice}{\eta}}
    = 1\ ,
    \label{z:cmnZZu5eQ8hv}
  \end{align}
  provided the rank of $\mathcal{N}(\psi(t))$ does not change at $t_0$.
\end{theorem}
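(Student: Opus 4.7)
My plan is to rewrite Bob's QFI as a Uhlmann minimum over smooth purifications of $\rho_B(t)$ (equivalent to the min-form SDP~\cref{z:fC3N68oVdRLb}), then reduce that minimum to a quadratic optimization over a single Hermitian operator $K_E$ on $E$ and recognize it, via a rescaling, as the max-form SDP~\cref{z:-JCqSeMuhqI6} for Eve's QFI. Combined with Alice's normalizations $\FIqty{Alice}{t}=4\sigma_H^2$ and $\FIqty{Alice}{\eta}=1/\sigma_H^2$, this yields \cref{z:cmnZZu5eQ8hv} immediately.

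\emph{Setup and reformulation.} Fix a Stinespring isometry $V_{A\to BE}$ of $\mathcal{N}$ and set $\ket\Phi := V\ket\psi$, $\ket{\hat\Xi} := V\ket{\hat\xi}$, where $\ket{\hat\xi} := (H-\avg H)\ket\psi/\sigma_H$. Then $\ket{\hat\Xi}$ is a unit vector orthogonal to $\ket\Phi$, $VH\ket\psi = \avg H\,\ket\Phi + \sigma_H\,\ket{\hat\Xi}$, and \cref{z:SMjFM6HgZlts} gives the key identity
\begin{align*}
\tr_B\bigl(\ket\Phi\bra{\hat\Xi} + \ket{\hat\Xi}\bra\Phi\bigr) \;=\; 2\sigma_H\,\partial_\eta\rho_E\ .
\end{align*}
The rank assumption on $\mathcal{N}(\psi(t))$ at $t_0$ ensures that every smooth family of purifications of $\rho_B(t)$ near $t_0$ can be written as $(\Ident_B\otimes U_E(t))\ket{\Phi(t)}$ with $U_E(t_0)=\Ident$ and $\partial_t U_E|_{t_0}=-iK_E$, so Uhlmann's formula yields $\tfrac14\FIqty{Bob}{t} = \min_{K_E=K_E^\dagger}\bigl\{\avg{(\bar H+K)^2}_\Phi - \avg{\bar H+K}_\Phi^{\,2}\bigr\}$, with $\bar H\ket\Phi:=VH\ket\psi$ and $K:=\Ident_B\otimes K_E$. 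Expanding using the two setup identities, the objective simplifies to
\begin{align*}
\sigma_H^2 + \tr(\rho_E K_E^2) - \tr(\rho_E K_E)^2 + 2\sigma_H^2\,\tr(K_E\,\partial_\eta\rho_E)\ .
\end{align*}

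\emph{Dualization and conclusion.} The objective is invariant under $K_E\to K_E+c\Ident$ (since $\tr\partial_\eta\rho_E=0$), so without loss of generality $\tr(\rho_E K_E)=0$. Reparameterizing $K_E=-2\sigma_H^2 S_E$ with $S_E$ Hermitian, the objective reads $\sigma_H^2 - 4\sigma_H^4\bigl[\tr(S_E\,\partial_\eta\rho_E)-\tr(\rho_E S_E^2)\bigr]$; minimizing over $K_E$ thus amounts to maximizing the bracket over $S_E$, and by the max form~\cref{z:-JCqSeMuhqI6} that maximum is exactly $\tfrac14\FIqty{Eve}{\eta}$. Hence $\tfrac14\FIqty{Bob}{t} = \sigma_H^2 - \sigma_H^4\FIqty{Eve}{\eta}$, and dividing by $\sigma_H^2=\tfrac14\FIqty{Alice}{t}$ while using $\FIqty{Alice}{\eta}=1/\sigma_H^2$ recovers \cref{z:cmnZZu5eQ8hv}.

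The main obstacle I anticipate is justifying the Uhlmann parameterization: one must show that every smooth family of purifications of $\rho_B(t)$ near $t_0$ is realized by a $U_E(t)$ acting on the canonical purification $\ket{\Phi(t)}$, which requires the rank of $\rho_B(t)$ to remain constant --- exactly the theorem's hypothesis. A secondary subtlety is the bookkeeping of $\sigma_H$-factors between the $t$- and $\eta$-directions so that the completed square lines up with Eve's SDP without residue; this is the geometric content of the virtual qubit spanned by $\ket\psi$ and $\ket{\hat\xi}$, and is what converts the generic data-processing inequality $\FIqty{Bob}{t}\leq\FIqty{Alice}{t}$ into the sharp equality of the theorem.
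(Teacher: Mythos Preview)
Your proposal is correct and follows essentially the same route as the paper's first proof in \cref{z:.B4MZrnrq.9S}: you express Bob's QFI via the Uhlmann purification minimum over unitaries on $E$, expand the pure-state variance, and recognize the resulting quadratic in $K_E$ as the max-form SDP~\eqref{z:-JCqSeMuhqI6} for Eve's QFI in the $\eta$-direction. The paper carries this out by expanding the Bures fidelity $F(\rho_B(t),\rho_B(t'))$ to second order in $t'-t$ rather than using the purification-variance formula directly, but the two presentations are equivalent and invoke the rank hypothesis at the same place (the identification of the Uhlmann optimum with the QFI).
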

Recalling \cref{z:BBGAvQkdkmd8,z:lN.pbrjbNave}, our uncertainty relation is
equivalently stated as
\begin{align}
  \frac{\FIqty{Bob}{t}}{4\sigma_H^2} +
  \sigma_H^2\,\FIqty{Eve}{\eta}
  = 1\ .
  \label{z:xG6mWjjekeNN}
\end{align}
Using the Cram\'er-Rao bound~\eqref{z:5-HFHQQdFwEu} we can relate the
optimal sensing accuracies $\langle { \delta t_{\mathrm{Bob,est}}^2 }\rangle $,
$\langle { \delta \eta_{\mathrm{Eve,est}}^2 }\rangle $ associated with the parameters $t$,
$\eta$ on Bob and Eve's systems,
\begin{align}
  \frac{1}{4\sigma_H^2} \frac{1}{\langle { \delta t_{\mathrm{Bob,est}} ^2 }\rangle }
  +
  \frac{1}{4\sigma_T^2} \frac{1}{\langle { \delta \eta_{\mathrm{Eve,est}} ^2 }\rangle }
  \leq 1\ ,
\end{align}
noting that equality can be achieved with sensing strategies that saturate the
Cram\'er-Rao bound provided the rank of $\mathcal{N}(\psi(t))$ does not change
at $t=t_0$.

A proof of \cref{z:T-KNuYhGRM7I} proceeds by writing the
Fisher information on Bob's end, i.e.\@ after the application of the noise
channel, in terms of the Bures metric.  The environment Eve is introduced as the
purifying space over which the fidelity is computed via Uhlmann's theorem.  The
resulting expression is expressed as a semidefinite program as in
Refs.~\cite{R29,R30}; suitably manipulating the
corresponding dual problem yields the
relation~\eqref{z:cmnZZu5eQ8hv}.  The full proof is
provided in \cref{z:.B4MZrnrq.9S}.
We also provide an alternative proof using a semidefinite
characterization of the Fisher information.

The condition the rank of $\mathcal{N}(\psi(t))$ does not change locally at the
time $t_0$ ensures that we avoid edge cases where the correspondence between the
Fisher information and the Bures metric is
incomplete~\cite{R31,R32,R33}.  In edge cases where
this condition is violated, the uncertainty
relation~\eqref{z:cmnZZu5eQ8hv} can be shown to
hold as an inequality instead of an equality (see below and
\cref{z:.B4MZrnrq.9S}).
The no rank change condition is typically associated with situations where the
quantum Fisher information is discontinuous.  In such cases its operational
relevance can be questioned; we further discuss these points below in the
context of independent and identically distributed (i.i.d.\@) noise as well as in
\cref{z:z.6FF4fEKkkd}.

The condition that the rank of $\mathcal{N}(\psi(t))$ does not change at
$t=t_0$ is formalized by requiring that for any eigenvalue $p_k(t)$ of
$\mathcal{N}(\psi(t))$ such that $p_k(t_0)=0$ we also have
$\partial_t^2 p_k(t_0) = 0$.  This more precise formulation is the form of the
assumption that is used in the proof.  Observe that any eigenvalue $p_k(t)$ of
$\mathcal{N}(\psi(t))$ that satisfies $p_k(t_0) = 0$ necessarily also satisfies
$\partial_t p_k\,(t_0) = 0$, since the value zero is necessarily a minimum for
$p_k(t)$.

Another equivalent form of our uncertainty
relation~\eqref{z:cmnZZu5eQ8hv} is one that
quantifies directly the difference between the sensitivity of the noiseless
clock and the resulting sensitivity on Bob's end.  Let us define:
\begin{align}
  \FIloss{Bob}{t}
  = \FIqty{Alice}{t} - \FIqty{Bob}{t} = 4\sigma_H^2 - \FIqty{Bob}{t}\ .
  \label{z:QZajZctWEncO}
\end{align}
A few simple algebraic manipulations
of~\eqref{z:cmnZZu5eQ8hv} lead to
\begin{align}
  \FIqty{Alice}{t} - \FIqty{Bob}{t} =
  \frac{\FIqty{Alice}{t}}{\FIqty{Alice}{\eta}} \FIqty{Eve}{\eta}\ ,
\end{align}
which gives us an expression for $\FIloss{Bob}{t}$.  We can further spell out
this expression using~\cref{z:lN.pbrjbNave,z:BBGAvQkdkmd8} along with simple scaling
properties that follow from the definition of the Fisher information to find
\begin{align}
  \FIloss{Bob}{t}
  &= (2\sigma_H^2)^2 \FIqty{Eve}{\eta}
  = \Ftwo{ \rho_E }{ 2\sigma_H^2 \partial_\eta \rho_E }
    \nonumber\\
  &= \Ftwo{ \rho_E }{ \widehat{\mathcal{N}}(\{\bar{H},\psi\})  }\ ,
\end{align}
where we have used~\eqref{z:aTgSsruxgou5} in the last equality.

Summarizing the above argument, we obtain an alternative form of
\cref{z:T-KNuYhGRM7I} as an expression for the sensitivity loss $\FIloss{Bob}{t}$ in terms of the Fisher information that Eve
obtains with respect to a direction associated with the anticommutator of $H$
and $\psi$.

\begin{corollary}[Expression for Bob's sensitivity loss via Eve]
  \noproofref
  \label{z:dq0nj3WKijUb}
  Consider the setting of \cref{z:T-KNuYhGRM7I} and assume that
  the rank of $\mathcal{N}(\psi(t))$ does not change locally at
  $t_0$.  Then
  \begin{align}
    \FIloss{Bob}{t}
    =
    \Ftwo{ \widehat{\mathcal{N}}(\psi) }{ \widehat{\mathcal{N}}(\{ \bar{H}, \psi \}) }\ ,
    \label{z:4bic5ys.wcEe}
  \end{align}
  where $\FIloss{Bob}{t} = \FIqty{Alice}{t} - \FIqty{Bob}{t}$ and where we
  recall the shorthand $\bar{H} = H - \langle {H}\rangle $.  As a consequence,
  \begin{align}
    \FIqty{Bob}{t} =
    4\sigma_H^2
    - \Ftwo{ \widehat{\mathcal{N}}(\psi) }{ \widehat{\mathcal{N}}(\{ \bar{H}, \psi \}) }\ .
    \label{z:BtvntDGWvOI3}
  \end{align}
\end{corollary}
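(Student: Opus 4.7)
The plan is to deduce this corollary as a straightforward algebraic consequence of the equality \cref{z:cmnZZu5eQ8hv} in \cref{z:T-KNuYhGRM7I}, combined with the explicit evaluations of Alice's two Fisher informations and the formula \cref{z:aTgSsruxgou5} expressing $\partial_\eta \rho_E$ in terms of $\widehat{\mathcal{N}}(\{\bar H,\psi\})$. No new structural argument is needed: the content of the corollary is merely a rewriting of the theorem with one side reinterpreted.

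Concretely, I would first rearrange \cref{z:cmnZZu5eQ8hv} into
\begin{align*}
  \FIqty{Alice}{t} - \FIqty{Bob}{t}
  = \frac{\FIqty{Alice}{t}}{\FIqty{Alice}{\eta}}\,\FIqty{Eve}{\eta},
\end{align*}
which already identifies the left-hand side with $\FIloss{Bob}{t}$. Then I would substitute the values $\FIqty{Alice}{t} = 4\sigma_H^2$ from \cref{z:BBGAvQkdkmd8} and $\FIqty{Alice}{\eta} = 1/\sigma_H^2$ from \cref{z:lN.pbrjbNave}, so that the prefactor becomes $4\sigma_H^4 = (2\sigma_H^2)^2$. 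Next I would invoke the scaling property \cref{z:OBSqCtuxYn6d} of the Fisher information to pull this scalar into the derivative slot, writing $(2\sigma_H^2)^2 \Ftwo{\rho_E}{\partial_\eta \rho_E} = \Ftwo{\rho_E}{2\sigma_H^2\,\partial_\eta \rho_E}$. Finally, using \cref{z:aTgSsruxgou5} one has $2\sigma_H^2\,\partial_\eta \rho_E = \widehat{\mathcal{N}}(\{\bar H,\psi\})$, and since $\rho_E = \widehat{\mathcal{N}}(\psi)$, this yields exactly \cref{z:4bic5ys.wcEe}. The second form \cref{z:BtvntDGWvOI3} is then obtained by transposing $\FIqty{Bob}{t}$ and replacing $\FIqty{Alice}{t}$ by $4\sigma_H^2$.

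There is essentially no obstacle here: the real work is in \cref{z:T-KNuYhGRM7I}, and the rank-preservation hypothesis is inherited verbatim. The only points that deserve a moment's verification are that the prefactor $\FIqty{Alice}{t}/\FIqty{Alice}{\eta}$ really does match the square of the rescaling coefficient relating $\partial_\eta \rho_E$ to $\widehat{\mathcal{N}}(\{\bar H,\psi\})$, and that the scaling identity \cref{z:OBSqCtuxYn6d} applies in the form I use it (i.e.\ with $\alpha=1$ and $\beta=2\sigma_H^2$), both of which are immediate from the definitions already established in \cref{z:CKirppebju2f}.
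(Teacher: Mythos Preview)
Your proposal is correct and follows essentially the same route as the paper: rearrange \cref{z:cmnZZu5eQ8hv} to isolate $\FIqty{Alice}{t}-\FIqty{Bob}{t}$, plug in $\FIqty{Alice}{t}=4\sigma_H^2$ and $\FIqty{Alice}{\eta}=1/\sigma_H^2$, absorb the factor $(2\sigma_H^2)^2$ into the second argument via the scaling property, and identify $2\sigma_H^2\,\partial_\eta\rho_E$ with $\widehat{\mathcal{N}}(\{\bar H,\psi\})$ using \cref{z:aTgSsruxgou5}. The paper carries out precisely these steps in the paragraph immediately preceding the corollary.
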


Two extreme cases can readily be identified. One is where there is no noise and
$\mathcal{N}={\mathrm{id}}$ is the identity process; in this case the
complementary channel is a channel that outputs a constant state regardless of
the input, $\widehat{\mathcal{N}}(\cdot) = \operatorname{tr}(\cdot)\,\tau_E$ for some state
$\tau_E$.  In this case Eve obtains no information about the probe's energy,
which can be seen in our formalism by the fact that
$\widehat{\mathcal{N}}(\{\bar{H},\psi\}) = 0$ and therefore $\FIloss{Bob}{t} = 0$.  In the
opposite extreme case, the noise destroys its input entirely and sends it to the
environment, with correspondingly $\widehat{\mathcal{N}} = {\mathrm{id}}$.  In
this case Eve has maximal sensitivity to the $\eta$ parameter,
$\FIqty{Eve}{\eta} = \FIqty{Alice}{\eta}$, and therefore
$\FIqty{Bob}{t} = 0$ and $\FIloss{Bob}{t} =4\sigma_H^2$.

\subsection{Trade-off relation in terms of a virtual qubit}
\label{z:aLDqVRGU3OD.}
In this section we simplify the setting required to produce the relation in
\cref{z:T-KNuYhGRM7I}, in an effort to identify the
fundamental concepts required for our uncertainty relation to hold.  It turns
out that \cref{z:T-KNuYhGRM7I} can be rephrased as an
uncertainty relation between Bob and Eve distinguishing states respectively
along the $Y$ and $Z$ Pauli directions of a virtual qubit space, which in the
setting of \cref{z:T-KNuYhGRM7I} is defined by the clock
state vector $\lvert {\psi}\rangle $ and its image $H\lvert {\psi}\rangle $ under application of the
Hamiltonian. %

Consider the subspace of Alice's Hilbert space spanned by the probe state
$\lvert {\psi}\rangle $ and its time derivative $\propto H \lvert {\psi}\rangle $. This subspace defines a
virtual qubit. We choose to identify the probe state with the $+1$ Pauli-$X$
eigenvector.  It turns out that our uncertainty relation admits a restatement as
a relation between the sensitivity that Bob and Eve can achieve with respect to
Pauli-$Y$ and logical Pauli-$Z$ directions of the virtual qubit.  More
precisely, we first define
\begin{align}\label{z:bbVoi4MPh36i}
  \lvert {\xi }\rangle = P_\psi^\perp H \lvert {\psi
  }\rangle = \bigl(H-\langle {H}\rangle \bigr)\,\lvert {\psi
  }\rangle = \bar{H} \lvert {\psi}\rangle \ ,
\end{align}
recalling $P_\psi^\perp = \mathds{1}- \psi$.  The norm of $\lvert {\xi}\rangle $ satisfies
\begin{align}
  \bigl \lVert {\lvert {\xi}\rangle }\bigr \rVert ^2 = \langle {\xi}\mkern 1.5mu\relax \vert\mkern 1.5mu\relax {\xi }\rangle = \sigma_H^2\ .
\end{align}
Here, we assume that $\lvert {\xi}\rangle \neq0$, otherwise the probe does not evolve in time
and all the terms in our uncertainty relation are trivially zero.  We can
write the following derivatives in terms of $\lvert {\xi}\rangle $,
\begin{subequations}
\label{z:kz9fnxT79VEp}
\begin{align}
    \partial_t\psi = -i [\bar H, \psi] &= -i\bigl(\lvert {\xi}\rangle \mkern -1.8mu\relax \langle{\psi }\rvert - \lvert {\psi}\rangle \mkern -1.8mu\relax \langle{\xi}\rvert \bigr)\ ,
  \\
  2\sigma_H^2\,\partial_\eta\psi = \{ \bar{H}, \psi \} &= \lvert {\xi}\rangle \mkern -1.8mu\relax \langle{\psi }\rvert + \lvert {\psi}\rangle \mkern -1.8mu\relax \langle{\xi}\rvert \ .
\end{align}
\end{subequations}
An orthonormal basis of the virtual qubit can be chosen as
\begin{align}
  \lvert {+}\rangle _L &= \lvert {\psi}\rangle \ ,
  &
  \lvert {-}\rangle _L &= \frac1{\sigma_H}\, \lvert {\xi }\rangle \ .
 \label{z:kuhRxuqe47Oq}
\end{align}
As the logical computational basis of the virtual qubit, we choose
\begin{align}
  \lvert {0}\rangle _L &= \frac1{\sqrt 2}\bigl[\lvert {+}\rangle _L + \lvert {-}\rangle _L\bigr]\ ,
  &
  \lvert {1}\rangle _L &= \frac1{\sqrt 2}\bigl[\lvert {+}\rangle _L - \lvert {-}\rangle _L\bigr]\ .
 \label{z:Usr3YV.YLN3v}
\end{align}
This choice of basis is motivated to match the qubit operators of a single
spin-$1/2$ particle prepared in the $+X$ eigenstate and evolving according to a
magnetic field pointing along the $Z$ axis.

Consider the logical Pauli-$X$, $Y$ and $Z$ operators defined as usual with
respect to the basis~\eqref{z:Usr3YV.YLN3v}.  They are
expressed in the $\lvert {\pm}\rangle _L$ basis as
\begin{subequations}
  \label{z:puhbpJK6ngeA}
  \begin{align}
    X_L &= \lvert {+}\rangle \mkern -1.8mu\relax \langle{+}\rvert _L - \lvert {-}\rangle \mkern -1.8mu\relax \langle{-}\rvert _L\ ,
    \\
    Y_L &= -i\lvert {-}\rangle \mkern -1.8mu\relax \langle{+}\rvert _L + i \lvert {+}\rangle \mkern -1.8mu\relax \langle{-}\rvert _L\ ,
    \\
    Z_L &= \lvert {-}\rangle \mkern -1.8mu\relax \langle{+}\rvert _L + \lvert {+}\rangle \mkern -1.8mu\relax \langle{-}\rvert _L\ ,
  \end{align}
\end{subequations}
with furthermore
\begin{align}
  Y_L &= \frac{ -i }{\sigma_H} [ \bar{H}, \psi]\ ,
  &
    Z_L &= \frac1{\sigma_H} \{\bar{H}, \psi\}\ .
          \label{z:5ChzchOJ.MLJ}
\end{align}
We see that the logical Pauli-$Y$ and Pauli-$Z$ operators are parallel to the
evolution respectively along $t$ and along $\eta$ locally at
$\lvert {\psi }\rangle = \lvert {\psi(t_0,\eta_0)}\rangle $, as we
recall~\eqref{z:kz9fnxT79VEp}.
Our uncertainty relation can be stated in terms of a metrological logical qubit
as follows.
\begin{theorem}[Uncertainty relation for the metrological logical qubit]
  \label{z:Ircbg89zc.Wy}%
  \noproofref Let $A$, $B$ and $E$ be finite-dimensional quantum systems.  Let
  $\mathcal{N}_{A\to B}$ be a completely positive, trace nonincreasing map.  Let
  $V_{A\to BE}$ be such that
  $\mathcal{N}_{A\to B}(\cdot) = \operatorname{tr}_E\bigl( V(\cdot)V^\dagger\bigr)$ and
  $V^\dagger V \leq \mathds{1}$, i.e., $V$ is a Stinespring dilation of
  $\mathcal{N}$.  Consider the complementary channel
  $\widehat{\mathcal{N}}_{A\to E}(\cdot) = \operatorname{tr}_B\bigl( V(\cdot)V^\dagger\bigr)$.
  Let $\lvert {\pm}\rangle _L$ be any two orthogonal and normalized state vectors on system $A$, and let
  $X_L, Y_L, Z_L$ be defined via~\eqref{z:puhbpJK6ngeA}.
  If $(P_{\rho_B}^\perp \otimes P_{\rho_E}^\perp) V\lvert {-}\rangle _L = 0$,
  then
  \begin{multline}
    \Ftwo{\mathcal{N}(\psi)}{\mathcal{N}(Y_L)}
    + \Ftwo{\widehat{\mathcal{N}}(\psi)}{\widehat{\mathcal{N}}(Z_L)}
    \\
    = 4\langle {-}\mkern 1.5mu\relax \vert \mkern 1.5mu\relax {\mathcal{N}^\dagger(\mathds{1})}\mkern 1.5mu\relax \vert \mkern 1.5mu\relax {-}\rangle _L\ .
    \label{z:RqzWcDReqJqy}
  \end{multline}
  If $(P_{\rho_B}^\perp \otimes P_{\rho_E}^\perp) V\lvert {-}\rangle _L \neq 0$,
  then we have the inequality
  \begin{multline}
    \Ftwo{\mathcal{N}(\psi)}{\mathcal{N}(Y_L)}
    + \Ftwo{\widehat{\mathcal{N}}(\psi)}{\widehat{\mathcal{N}}(Z_L)}
    \\
    \leq 4\langle {-}\mkern 1.5mu\relax \vert \mkern 1.5mu\relax {\mathcal{N}^\dagger(\mathds{1})}\mkern 1.5mu\relax \vert \mkern 1.5mu\relax {-}\rangle _L\ .
    \label{z:rcObY1pi1VeY}
  \end{multline}
\end{theorem}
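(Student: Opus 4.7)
The plan is to derive \cref{z:Ircbg89zc.Wy} as an abstract reformulation of \cref{z:T-KNuYhGRM7I} by identifying the virtual qubit $\{\ket{+}_L,\ket{-}_L\}$ with a clock state and its normalized time derivative, and then extending the resulting trace-preserving identity to trace non-increasing maps via a Stinespring completion.

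First I would specialize the clock setup: take $\ket\psi := \ket{+}_L$ and choose the Hamiltonian $H := \ketbra{+}{-}_L + \ketbra{-}{+}_L$, extended by zero to the complement of the virtual qubit subspace. Then $\avg{H}_\psi = 0$, $H\ket\psi = \ket{-}_L$, and $\sigma_H = 1$, so the vector $\ket\xi = \bar{H}\ket\psi$ from \eqref{z:bbVoi4MPh36i} coincides with $\ket{-}_L$, and by \eqref{z:5ChzchOJ.MLJ} we have $\partial_t\psi = Y_L$, $\partial_\eta\psi = Z_L/2$, $\FIqty{Alice}{t}=4$, and $\FIqty{Alice}{\eta}=1$. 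The scaling property \eqref{z:OBSqCtuxYn6d} then converts the two Pauli-direction Fisher informations appearing in \eqref{z:RqzWcDReqJqy} into Bob's and Eve's quantities:
\begin{align*}
  \Ftwo{\mathcal{N}(\psi)}{\mathcal{N}(Y_L)} &= \FIqty{Bob}{t}\,,\\
  \Ftwo{\widehat{\mathcal{N}}(\psi)}{\widehat{\mathcal{N}}(Z_L)} &= 4\,\FIqty{Eve}{\eta}\,.
\end{align*}

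Next I would verify that the technical hypothesis $(P_{\rho_B}^\perp\otimes P_{\rho_E}^\perp)V\ket{-}_L = 0$ is equivalent to the ``no rank change'' assumption of \cref{z:T-KNuYhGRM7I}. Expanding the Stinespring purification as $V\ket{\psi(t)} = V\ket{+}_L - i(t-t_0)\,V\ket{-}_L + O((t-t_0)^2)$ and applying second-order perturbation theory to the Schmidt spectrum of the $B$-marginal $\rho_B(t)$, the $O((t-t_0)^2)$ contribution to any eigenvalue of $\rho_B$ that vanishes at $t_0$ is governed precisely by the squared norm of the $P_{\rho_B}^\perp\otimes P_{\rho_E}^\perp$-component of $V\ket{-}_L$. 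In the trace-preserving case $V^\dagger V=\Ident$, \cref{z:T-KNuYhGRM7I} then yields $\FIqty{Bob}{t}/4 + \FIqty{Eve}{\eta} = 1$, which after substituting the matchings above becomes exactly \eqref{z:RqzWcDReqJqy} with right-hand side $4\dmatrixel{-}{\Ident}_L = 4$; when the rank-change condition fails, retracing the SDP derivation of \cref{z:T-KNuYhGRM7I} (as sketched in \cref{z:.B4MZrnrq.9S}) produces a dual slack that weakens the equality to the inequality \eqref{z:rcObY1pi1VeY}.

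The main obstacle is the extension to $V^\dagger V \leq \Ident$, which must produce the nontrivial factor $\dmatrixel{-}{\mathcal{N}^\dagger(\Ident)}_L = \norm{V\ket{-}_L}^2$ on the right-hand side. My plan is to complete $V$ to an isometry $\tilde V: A\to (B\oplus B')\otimes(E\oplus E')$ by adjoining a block $W:A\to B'\otimes E'$ with $W^\dagger W = \Ident - V^\dagger V$ that maps into auxiliary vacuum flags orthogonal to both $B$ and $E$. The trace-preserving case applied to $\tilde V$ yields an equality with right-hand side $4$, and the direct-sum structure $\tilde{\mathcal{N}}(\psi) = \mathcal{N}(\psi)\oplus\mathcal{M}(\psi)$ (and analogously for $\widehat{\tilde{\mathcal{N}}}$), combined with additivity of the Fisher information across orthogonal blocks and the subnormalization scaling \eqref{z:OBSqCtuxYn6d}, should allow the auxiliary contributions to be isolated. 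Verifying that these auxiliary pieces collapse to exactly $4(1-\norm{V\ket{-}_L}^2)$ (so that subtracting them from $4$ leaves the target right-hand side $4\dmatrixel{-}{V^\dagger V}_L$) is the most delicate bookkeeping step, and is where I expect the bulk of the technical difficulty of the proof to lie.
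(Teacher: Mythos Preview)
Your reduction of the trace-preserving case to \cref{z:T-KNuYhGRM7I} via the specific Hamiltonian $H=\ketbra{+}{-}_L+\ketbra{-}{+}_L$ is fine, and your identification of the rank condition with the no-rank-change hypothesis is exactly \cref{z:3JJW1LK1MXi9}. But the paper does \emph{not} prove \cref{z:Ircbg89zc.Wy} this way. The paper gives a direct argument (Theorem labelled \texttt{z:4VI7KY5eDBCv} in \cref{z:VM5-jRd.0oMe}): one introduces matrices $\Lambda,M$ on $B$ with $V\ket{+}_L=(\Lambda\otimes\Ident)\ket\Phi$ and $V\ket{-}_L=(M\otimes\Ident)\ket\Phi$, writes $\tfrac14[4\|V\ket{-}_L\|^2-\Ftwo{\rho_E}{\widehat{\mathcal N}(Z_L)}]$ via the \emph{maximization} form \eqref{z:YdgtbMB30Vi6} of the Fisher information, and after a partial-transpose change of variables obtains $\min_{S'=S'^\dagger}\tr[(M-\Lambda S')(M-\Lambda S')^\dagger]$. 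This is then compared termwise with the \emph{minimization} form \eqref{z:V4lWxgFVfrPu} for $\Ftwo{\rho_B}{\mathcal N(Y_L)}$: every $S'$ yields a feasible $L=iW(M^\dagger-S'\Lambda^\dagger)$ of equal value, giving the inequality; conversely, under the condition $(P_{\rho_B}^\perp\otimes P_{\rho_E}^\perp)V\ket{-}_L=0$, every optimal $L$ can be inverted to an $S'$ of equal value, giving equality. Crucially this works natively for $V^\dagger V\le\Ident$ with no completion.

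Your Stinespring-completion route is genuinely different, and the step you flag as ``bookkeeping'' is where the real work hides. After splitting $\tilde V=V\oplus W$ and using block-additivity of the Fisher information, you need the auxiliary identity
\[
  \Ftwo{\mathcal M(\psi)}{\mathcal M(Y_L)}+\Ftwo{\widehat{\mathcal M}(\psi)}{\widehat{\mathcal M}(Z_L)}
  = 4\,\dmatrixel{-}{W^\dagger W}_L,
\]
which is precisely the theorem you are proving, now applied to $W$. This is not circular only if you can verify it by hand for a concrete choice of completion. Taking $\dim E'=1$ and $W=(\Ident-V^\dagger V)^{1/2}$ does work (a direct two-dimensional computation gives Bob's piece $=4[\|w_-\|^2-(\Re\braket{w_+}{w_-})^2/\|w_+\|^2]$ and Eve's piece $=4(\Re\braket{w_+}{w_-})^2/\|w_+\|^2$, which sum to $4\|w_-\|^2$), and the rank condition for $W$ is automatic since $\widehat{\mathcal M}(\psi)$ is full rank on the one-dimensional $E'$. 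But this fails when $V^\dagger V\ket{+}_L=\ket{+}_L$ (then $W\ket{+}_L=0$, both auxiliary states vanish, and the rank condition for $W$ breaks), so you need a separate argument or a continuity limit there. There is also a logical wrinkle: the inequality case of \cref{z:T-KNuYhGRM7I} is, in the paper's structure, established \emph{via} the SDP proof of the virtual-qubit theorem rather than by the Bures-metric argument, so you should check that your black-box invocation of the inequality is independently available. In short, your route can be made to work but is more delicate than ``bookkeeping''; the paper's direct SDP matching is cleaner precisely because it sidesteps the completion and its edge cases.
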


The above theorem provides a more formal statement that generalizes the earlier
statement \cref{z:T-KNuYhGRM7I} to trace-nonincreasing
maps and to subnormalized states.  The metrological qubit construction also
provides a clearer mathematical picture of the symmetric role of Bob and Eve
in our uncertainty relation: Bob and Eve can be interchanged (i.e.,
$\mathcal{N} \leftrightarrow \widehat{\mathcal{N}}$) provided we correspondingly
interchange $\lvert {\xi }\rangle \leftrightarrow i\lvert {\xi}\rangle $.  For a state vector $\lvert {\psi}\rangle $
evolving with respect to a Hamiltonian $\bar{H}$, the state
$\lvert {\xi}\rangle = \bar {H} \lvert {\psi}\rangle $ is the derivative of $\lvert {\psi}\rangle $ with respect to
time, and $i \lvert {\xi}\rangle $ can be thought of the derivative of $\lvert {\psi}\rangle $ with
respect to \emph{imaginary time}. The symmetry in
\cref{z:Ircbg89zc.Wy} between Bob and Eve, which
involves the interchange $\lvert {\xi }\rangle \leftrightarrow i \lvert {\xi}\rangle $, is reproduced at
the level of the parameters $t$ and $\eta$ by choosing $\eta$ to parametrize the
one-family parameter of state vectors $\lvert {\psi(\eta)}\rangle $ in
\eqref{z:PlQtfu9raUmQ} governed by the imaginary-time evolution
\eqref{z:BocUFnzQodPW}.
The full proof of \cref{z:Ircbg89zc.Wy} is provided in
\cref{z:VM5-jRd.0oMe}.

In \cref{z:Ircbg89zc.Wy} a different condition is
stated for equality as in \cref{z:T-KNuYhGRM7I}, where we
require the rank of $\mathcal{N}(\psi(t))$ not to change.  These conditions
turn out to be equivalent, as shown in the following proposition.  We defer
the proof to \cref{z:VM5-jRd.0oMe}.

\begin{proposition}[Conditions for equality in the uncertainty relation]
  \noproofref
  \label{z:3JJW1LK1MXi9}
  Let $\{ E_{k} \}$ be a set of Kraus operators for $\mathcal{N}_{A\to B}$ and
  $V_{A\to BE}$ be a Stinespring dilation of $\mathcal{N}$.  The following
  conditions are equivalent:
  \begin{itemize}
  \item $(P_{\rho_B}^{\perp}\otimes P_{\rho_E}^\perp)\, V\lvert {\xi }\rangle = 0$;
  \item $\rho_B(t)$ does not change rank as a function of $t$ locally at the
    point $t_0$;
  \item For any linear combination $E = \sum c_k E_k$ (with $c_k\in\mathbb{C}$)
    such that $E\lvert {\psi }\rangle = 0$, then $P_{\rho_B}^\perp E \lvert {\xi }\rangle = 0$.
  \end{itemize}
\end{proposition}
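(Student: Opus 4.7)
The plan is to prove (i)~$\Leftrightarrow$~(iii) by matching $\ker\rho_E$ to Kraus combinations annihilating $\ket\psi$, and (i)~$\Leftrightarrow$~(ii) by a Schmidt-basis computation combined with second-order Rayleigh--Schr\"odinger perturbation theory for the degenerate zero eigenvalue of $\rho_B(t) = \mathcal{N}(\psi(t))$. Fix Kraus operators via $V\ket{\phi}_A = \sum_k E_k\ket{\phi}_A \otimes \ket{k}_E$, and set $\ket\Psi := V\ket\psi$ and $\ket\Xi := V\ket\xi$. Schmidt-decompose $\ket\Psi = \sum_{i=1}^r s_i\, \ket{b_i}_B \otimes \ket{e_i}_E$ with $s_i > 0$; then $\{\ket{b_i}\}$ spans $\mathrm{supp}\,\rho_B$ and $\{\ket{e_i}\}$ spans $\mathrm{supp}\,\rho_E$. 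Complete with $\{\ket{g_\beta}\}$ spanning $\ker\rho_B$ and $\{\ket{f_\alpha}\}$ spanning $\ker\rho_E$, and expand
\[
\ket\Xi = \sum_{ij} x_{ij}\ket{b_i}\ket{e_j} + \sum_{i\alpha} y_{i\alpha}\ket{b_i}\ket{f_\alpha} + \sum_{\beta j} z_{\beta j}\ket{g_\beta}\ket{e_j} + \sum_{\beta\alpha} w_{\beta\alpha}\ket{g_\beta}\ket{f_\alpha},
\]
so that (i) is precisely the statement $w = 0$.

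\textbf{Equivalence (i)~$\Leftrightarrow$~(iii).} For $\ket\phi = \sum_k \overline{c_k}\ket{k}_E$, the identity $\bra\phi \rho_E \ket\phi = \bra\psi E^\dagger E \ket\psi = \norm{E\ket\psi}^2$ with $E := \sum_k c_k E_k$ gives $\ket\phi \in \ker\rho_E \Leftrightarrow E\ket\psi = 0$. Since $(\Ident_B \otimes \bra\phi)\,V\ket\xi = E\ket\xi$, letting $\ket\phi$ range over an orthonormal basis of $\ker\rho_E$ converts condition (i) into $P_{\rho_B}^\perp E\ket\xi = 0$ for every Kraus combination $E$ annihilating $\ket\psi$, which is exactly (iii).

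\textbf{Equivalence (i)~$\Leftrightarrow$~(ii).} Because $\bra\psi E_k^\dagger P_{\rho_B}^\perp = 0$ (the vector $E_k\ket\psi$ lies in $\mathrm{supp}\,\rho_B$), a short computation gives $P_{\rho_B}^\perp \dot\rho_B P_{\rho_B}^\perp = 0$, so the first-order perturbative correction to the zero eigenvalues of $\rho_B(t)$ vanishes identically on $\ker\rho_B$. Standard second-order Rayleigh--Schr\"odinger perturbation theory then yields $\tfrac12 \partial_t^2 p_k(t_0) = $ eigenvalues of the operator
\[
M := \tfrac{1}{2}\, P_{\rho_B}^\perp \ddot\rho_B P_{\rho_B}^\perp - P_{\rho_B}^\perp \dot\rho_B\, \rho_B^+\, \dot\rho_B P_{\rho_B}^\perp
\]
acting on $\ker\rho_B$, where $\rho_B^+$ is the Moore--Penrose pseudoinverse. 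Since each $p_k(t) \ge 0$ has a minimum at $t_0$, $M$ is positive semi-definite and condition (ii) reduces to $M = 0$. Next, $\ddot\psi = 2\ket\xi\bra\xi - \ket{\bar H\xi}\bra\psi - \ket\psi\bra{\bar H\xi}$, and the last two terms are killed by $P_{\rho_B}^\perp(\cdot)P_{\rho_B}^\perp$, giving $P_{\rho_B}^\perp \ddot\rho_B P_{\rho_B}^\perp = 2\, P_{\rho_B}^\perp \mathcal{N}(\ket\xi\bra\xi) P_{\rho_B}^\perp$. A direct Schmidt-basis computation then produces $P_{\rho_B}^\perp \mathcal{N}(\ket\xi\bra\xi) P_{\rho_B}^\perp = zz^\dagger + ww^\dagger$ and, using $\rho_B^+ = \sum_i s_i^{-2}\proj{b_i}$, $P_{\rho_B}^\perp \dot\rho_B \rho_B^+ \dot\rho_B P_{\rho_B}^\perp = zz^\dagger$, both regarded as operators on $\ker\rho_B$. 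Therefore $M = ww^\dagger$, which vanishes iff $w = 0$, i.e.\ iff (i).

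\textbf{Main obstacle.} The delicate step is the second-order Rayleigh--Schr\"odinger computation: one must control the full $\dim\ker\rho_B$-fold degeneracy of the zero eigenvalue of $\rho_B$, first verify that the first-order effective operator $P_{\rho_B}^\perp\dot\rho_B P_{\rho_B}^\perp$ vanishes identically so that the second-order effective operator $M$ becomes the relevant object, and then carry the resolvent $\rho_B^+$ through correctly. Once $M$ is identified, the Schmidt-basis decomposition of $\ket\Xi$ into the four blocks $(x,y,z,w)$ reduces (ii) to the equation $M = ww^\dagger = 0$; this is a direct calculation but requires care with the Schmidt weights $s_j$ that appear when tracing over $E$ in $\ket\Xi\bra\Psi$ and $\ket\Xi\bra\Xi$.
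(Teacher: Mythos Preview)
Your proof is correct. The argument for (i)$\Leftrightarrow$(iii) is identical to the paper's. For (i)$\Leftrightarrow$(ii) the content is the same but the packaging differs: the paper works in an equivalent $\Lambda$--$M$--$W$ formalism (writing $V\ket\psi=(\Lambda\otimes\Ident)\ket\Phi$ and taking the polar decomposition $\Lambda=\rho_B^{1/2}W$) rather than your explicit Schmidt basis, and instead of carrying out the degenerate Rayleigh--Schr\"odinger computation it invokes a formula from \v{S}afr\'anek [Eq.~(B15) of Ref.~32] relating $\tr(P_{\rho_B}^\perp\partial_t^2\rho_B)$ to $\sum_{k:p_k=0}\partial_t^2 p_k$. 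Your operator $M=ww^\dagger$ is exactly the paper's intermediate condition $P_{\rho_B}^\perp\mathcal{N}(\proj\xi)P_{\rho_B}^\perp = P_{\rho_B}^\perp\mathcal{N}(\ketbra\xi\psi)\rho_B^{-1}\mathcal{N}(\ketbra\psi\xi)P_{\rho_B}^\perp$ rewritten in the Schmidt block decomposition. Your version buys self-containment (no external citation for the perturbation-theory step) and makes the equivalence $M=0\Leftrightarrow w=0$ visually transparent; the paper's version buys a cleaner algebraic link to the other operator-level conditions it proves in the same proposition.
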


In particular, it suffices that either $\rho_B = \mathcal{N}(\psi)$ or
$\rho_E = \widehat{\mathcal{N}}(\psi)$ has full rank to ensure that these
conditions are satisfied, and thereby that our uncertainty relation holds with
equality [\cref{z:RqzWcDReqJqy}].

As a consequence, the situations for which the
conditions~\eqref{z:3JJW1LK1MXi9} do not hold,
and correspondingly for which our uncertainty relation does not necessarily hold
with equality, are edge cases that can be infinitesimally perturbed into
situations where the corresponding conditions hold.  Indeed, one can mix
$\mathcal{N}$ with an infinitesimal amount of depolarizing noise to ensure that
Bob's state is full rank, and therefore to ensure that equality holds in our
uncertainty relation.

\subsection{General uncertainty relation for any two parameters}
\label{z:HCzmF57VVd0m}

The uncertainty relation between position and momentum can be generalized to any
arbitrary pair of observables.  The Robertson uncertainty
relation states that for any two observables $A,B$, we have
\begin{align}
  \sigma_A \, \sigma_B \geq \frac12 \bigl \lvert { \bigl \langle { i[A,B] }\bigr \rangle  }\bigr \rvert \ .
  \label{z:0pRecQmVyaWF}
\end{align}
In the same spirit, we derive a generalization
of~\eqref{z:cmnZZu5eQ8hv} that is valid for any two
observables. %
Suppose Alice prepares a pure state $\psi$ that can evolve along two possible
directions $\partial_a \psi$ and $\partial_b\psi$, and sends the state through
the noisy channel $\mathcal{N}$ to Bob as in \cref{z:FnHui0ahNLxW}.
We assume that the directions along $a,b$ are generated by two Hermitian
operators $A,B$  acting on $\psi = \lvert {\psi}\rangle \mkern -1.8mu\relax \langle{\psi}\rvert $ as
\begin{align}
  \partial_a\psi &= -i[A, \psi]\ ,
  &
    \partial_b\psi &= -i[B, \psi] \ .
  \label{z:uIClTv5cLFGk}
\end{align}
Bob is tasked with estimating a deviation locally to first order around
$\mathcal{N}(\psi)$ in the $a$ direction, whereas Eve tries to distinguish
$\widehat{\mathcal{N}}(\psi)$ from neighboring states along the $b$ direction.
The parameters $a,b$ are analogous to the parameters $t,\eta$ considered above,
but the two directions $\partial_a \psi,\partial_b \psi$ can be arbitrary.  %
\begin{theorem}[Bipartite uncertainty relation for any two parameters]
  \label{z:incm1nGt-mIE}
  \noproofref Let $\psi$ be a 
  state vector and suppose that $A,B$ are two
  Hermitian operators that generate evolutions locally at $\psi$ in directions
  $\partial_a\psi,\partial_b\psi$
  via~\eqref{z:uIClTv5cLFGk}.  Suppose we apply a
  noisy channel as depicted in \cref{z:FnHui0ahNLxW}.  Then
  \begin{align}
    \frac{ \FIqty{Bob}{a} }{ \FIqty{Alice}{a} }
    + \frac{ \FIqty{Eve}{b} }{ \FIqty{Alice}{b} }
    \leq
    1 + 2 \sqrt{ 1 - \frac{ \bigl \langle {i[A,B]}\bigr \rangle ^2 }{ 4\sigma_A^2\sigma_B^2 } }\ ,
    \label{z:4438TdZJTrfX}
  \end{align}
  where
  \begin{align}
    \FIqty{Alice}{a} &= \Ftwo{\psi}{\partial_a \psi}\ ,
    &
    \FIqty{Bob}{a} &= \Ftwo{\mathcal{N}(\psi)}{\mathcal{N}(\partial_a \psi)}\ ,
    \nonumber\\
    \FIqty{Alice}{b} &= \Ftwo{\psi}{\partial_b \psi}\ ,
    &
    \FIqty{Eve}{b}
    &= \Ftwo{\widehat{\mathcal{N}}(\psi)}{\widehat{\mathcal{N}}(\partial_b \psi)}\ .
      \nonumber
  \end{align}

  Furthermore, assume that $\mathcal{N}[\psi(a)]$ does not change rank
  locally and that
  \begin{align}
    \widehat{\mathcal{N}}\bigl( -i \bigl[ B/\sigma_B , \psi \bigr] \bigr)
    = \pm \widehat{\mathcal{N}}\bigl( \bigl\{ (A-\langle {A}\rangle )/\sigma_A , \psi \bigr\} \bigr)\ .
    \label{z:oWCGAyhkI-Md}
  \end{align}
  Then
  \begin{align}
    \frac{\FIqty{Bob}{a}}{\FIqty{Alice}{a}} +
    \frac{\FIqty{Eve}{b}}{\FIqty{Alice}{b}}
    = 1\ .
    \label{z:qdGLyd4OQ-zP}
  \end{align}
\end{theorem}
The proof of this statement is presented in
\cref{z:UxjsHdSwwZRl}.
The argument of the square root 
in~\eqref{z:4438TdZJTrfX} 
never becomes negative, thanks to the Robertson uncertainty
relation~\eqref{z:0pRecQmVyaWF} for $A$ and $B$.
The proof we present in
\cref{z:UxjsHdSwwZRl} considers in
fact a more general statement in which the two sides
of~\eqref{z:oWCGAyhkI-Md}
are proportional to one another rather than differing only by a sign.

We can identify two extreme cases of interest to gain some intuition for the
relation~\eqref{z:4438TdZJTrfX}.  First
consider $A,B$ to be two complementary observables in the sense that they
saturate the Robertson inequality~\eqref{z:0pRecQmVyaWF}.
Consider for instance the Pauli-$Y$ and the Pauli-$Z$ operators on a qubit.
In this case the
right-hand side of the
inequality~\eqref{z:4438TdZJTrfX} equals
one.  There is a trade-off between the sensitivity losses associated with Bob
sensing along the $\mathcal{N}(\partial_a\psi)$ direction and Eve sensing along
the $\widehat{\mathcal{N}}(\partial_b\psi)$ direction, as both terms on the left-hand
 side of~\eqref{z:4438TdZJTrfX}
cannot simultaneously be equal to one.  On the other hand, we can consider two
Hermitian generators $A,B$ that commute.  (Perhaps $A,B$ act on different
subsystems of Alice's noiseless clock.)  In this case, the right-hand side
of~\eqref{z:4438TdZJTrfX} evaluates to
the constant $3$.  Our uncertainty relation no longer presents any obstruction
to both Bob and Eve sensing along the respective directions $a,b$ as well as
Alice could, as there is room for both terms on the left-hand side
of~\eqref{z:4438TdZJTrfX} to be equal to
one.  This is the case, for instance, if $A,B$ act on different subsystems of
Alice's clock, and the respective subsystems are sent to Bob and Eve via the
noisy channel and its complementary channel.

We can %
recover our \cref{z:T-KNuYhGRM7I} if we
consider the two generators $A=H$ and $B=-T$, with $H,T$ defined in
\cref{z:ip.xvGb3bzRb}, leading to
$\partial_a\psi = \partial_t\psi$ and $\partial_b\psi = \partial_\eta\psi$.  To
see this, we first compute
\begin{align}
  \bigl \langle { i[H,T] }\bigr \rangle 
  &= \frac1{2\sigma_H^2}\, \bigl \langle { i[H, -i[H, \psi] ]}\bigr \rangle 
    \nonumber\\
  &= \frac1{2\sigma_H^2}\,\bigl(2\langle { H^2 }\rangle  - 2\langle { H }\rangle ^2 \bigr) = 1\ .
\end{align}
Using~\eqref{z:lN.pbrjbNave} we further see that $4\sigma_H^2\sigma_T^2 = 1$.
Therefore, the square root on the right-hand side
of~\eqref{z:4438TdZJTrfX} vanishes and
the entire right-hand side of the inequality evaluates to the constant $1$.
With the identifications $\FIqty{Alice}{t} = \FIqty{Alice}{a}$,
$\FIqty{Bob}{t} = \FIqty{Bob}{a}$, $\FIqty{Alice}{\eta} = \FIqty{Alice}{b}$,
$\FIqty{Eve}{\eta} = \FIqty{Eve}{b}$, we recover the
expression~\eqref{z:cmnZZu5eQ8hv} with an
inequality instead of an equality.
In this case, the additional
condition~\eqref{z:oWCGAyhkI-Md}
is in fact also satisfied, since $i[T, \psi] \propto \{ H - \langle {H}\rangle , \psi \}$
[cf.\@
\cref{z:Df3ZpFNLdboK,z:SMjFM6HgZlts}].
We thus fully recover the equality statement of
\cref{z:T-KNuYhGRM7I} subject to our additional condition
on the absence of a rank change of the noisy state.

The strategy of the proof of
\cref{z:incm1nGt-mIE}
(\cref{z:UxjsHdSwwZRl}) is to first
apply our main uncertainty relation
(\cref{z:Ircbg89zc.Wy}) between Bob's sensitivity to
the parameter $a$ and Eve's sensitivity to a parameter $c$ that is complementary
to $a$ using the construction in
\cref{z:ip.xvGb3bzRb,z:KI2eOYZK2.Ln}
identifying $t\to a, \eta \to c$.  We then apply a general bound relating the
quantum Fisher information with respect to two arbitrary evolution directions
(\cref{z:wXcXJs-gCbik} in \cref{z:KH4B5FtzQ1kE})
to bound the difference between Eve's sensitivity to the parameters $b$ and $c$.

One might have assumed that the
equality~\eqref{z:qdGLyd4OQ-zP}
can only be achieved if the parameters $a,b$ are complementary in the sense of
\cref{z:KI2eOYZK2.Ln}.  Yet it suffices for this property to
hold on the support of the complementary channel, as seen in the
condition~\eqref{z:oWCGAyhkI-Md}.
As a simple extreme example, consider $\mathcal{N}={\mathrm{id}}$ and
$\widehat{\mathcal{N}}(\cdot) = \operatorname{tr}(\cdot)\,\tau$ is a constant channel
preparing some fixed quantum state $\tau$.  Then our uncertainty relation
equality~\eqref{z:qdGLyd4OQ-zP}
necessarily holds for any parameters $a,b$, since Eve's sensitivity to any
parameter $b$ is zero and Bob's sensitivity to any parameter $a$ is equal to
Alice's.  This example also illustrates how the right-hand side
of~\eqref{z:4438TdZJTrfX} should
necessarily be improved to depend on the channel $\mathcal{N}$ if we wanted the
inequality to be tight for a fixed $\mathcal{N}$.  Such an improvement can be
obtained from our proof in
\cref{z:UxjsHdSwwZRl}.

We furthermore provide a proof that the general uncertainty
relation~\eqref{z:4438TdZJTrfX} also
holds in infinite-dimensional Hilbert spaces, and even for unbounded operators.
The details of this proof are given in
\cref{z:3vQhWOZ7civK}.  The proof proceeds by considering
a limiting case of the finite-dimensional setting for larger and larger system
sizes, with additional care given to the definition of the Fisher information in
the infinite-dimensional case and to the fact that the considered operators are
not necessarily bounded.%

It is expected that the
bound~\eqref{z:4438TdZJTrfX} can be
further tightened for observables that do not saturate the Robertson bound.  For
instance, consider two independent systems in a pure tensor product state, with
one system evolving with a parameter $t$ and the other with $z$: 
If we hand the
first system to Bob and the second to Eve, then there is no sensitivity loss for
either parties and the sum of the Fisher information ratios should be $2$.  But
the right-hand side of our bound is $3$.

\section{A selection of examples}%
\label{z:Z14T5rsnGxfi}

We now explore some examples illustrating the application of our main results.

\subsection{Single qubit subject to partial dephasing}

Consider the setup in \cref{z:duwTyi1jsdbn} and described in
\cref{z:t.T5CjLp6B3a}, in
which Alice prepares a pure qubit in the $\lvert {+}\rangle $ state vector and lets it evolve
according to the Hamiltonian 
{$H=\omega {Z}/2$}.
At time $t$, the qubit is
in the state $\psi(t)$ given in~\eqref{z:1--J4q44Q34a} and its
derivative $\partial_t\psi$ is given by~\eqref{z:GM3ZeRQLM9mt}.

Suppose that at time $t_0$ we apply the partially dephasing noisy channel
\begin{align}
  \mathcal{N}_p
  &= (1-p) \, {{\mathrm{id}}} +  p\, \mathcal{D}_Z\ ,
    \label{z:yIa2I8L9iHaJ}
\end{align}
where
\begin{align}
  \mathcal{D}_Z(\cdot)
  &= \langle {\uparrow}\mkern 1.5mu\relax \vert \mkern 1.5mu\relax {\cdot}\mkern 1.5mu\relax \vert \mkern 1.5mu\relax {\uparrow}\rangle  \, \lvert {\uparrow }\rangle \mkern -1.8mu\relax \langle{\uparrow }\rvert +
    \langle {\downarrow}\mkern 1.5mu\relax \vert \mkern 1.5mu\relax {\cdot}\mkern 1.5mu\relax \vert \mkern 1.5mu\relax {\downarrow}\rangle  \, \lvert {\downarrow}\rangle \mkern -1.8mu\relax \langle{\downarrow}\rvert \ .
    \label{z:kwbdc9nLdjvO}
\end{align}
In the following, we will verify that our uncertainty relation holds in this setting, by
first computing directly Bob's Fisher information with respect to $t$, and then
computing Eve's Fisher information with respect to $\eta$.

\paragraph{Direct computation of $\FIqty{Bob}{t}$.}
Using $\mathcal{D}_Z(
{X}
) = 0 = \mathcal{D}_Z(
{Y}
)$ we find
from~\eqref{z:6UjLB7XiW4LY} that Bob receives the state
\begin{align}
  \rho_B(t_0)
  &= \frac12 \begin{bmatrix}
    1 &     (1-p)\,{e}^{-i\omega t_0} \\
    (1-p)\,{e}^{i\omega t_0} & 1
  \end{bmatrix}\ .
\end{align}
Using~\eqref{z:P9NAwniVN.Vm} along with the fact that the
superoperator action of $U_t={e}^{-iHt}$ and $\mathcal{N}_p$ commute and that
$\mathcal{N}_p(
{X}) = (1-p)\,
{X}
$, we can alternatively write Bob's
state as
\begin{align}
  \rho_B(t_0)
  &= U_{t_0} \, \frac{\mathds{1}+ (1-p)\,
  {X}
  }{2}\, U_{t_0}^\dagger
  \nonumber\\
  &= \Bigl(1-\frac{p}{2}\Bigr)\bigl \lvert {+_{t_0}}\big \rangle \mkern -1.8mu\relax \big \langle{+_{t_0}}\bigr \rvert 
  + \frac{p}{2} \bigl \lvert {-_{t_0}}\big \rangle \mkern -1.8mu\relax \big \langle{-_{t_0}}\bigr \rvert \ ,
    \label{z:fF18RXy8Sd9v}
\end{align}
defining the rotated basis state vectors $\lvert {\pm_t}\rangle  := U_t \lvert {\pm}\rangle $.
For the time derivative, using~\eqref{z:SFi9RkJisbzH}
along with $\mathcal{N}_p(
{Y}
) = (1-p)\,
{Y}
$ we find
\begin{align}
  \partial_t \rho_B\,(t_0)
  = \mathcal{N}_p\bigl( \partial_t \psi\,(t_0) \bigr)
  = \frac\omega2 (1-p)\, U_{t_0} \, 
  {Y}
  \, U_{t_0}^\dagger\ .
\end{align}
We may compute the Fisher information with the
formula~\eqref{z:r7GWSaRrLfNh}, using the eigendecomposition
of $\rho_B(t_0)$ given by~\eqref{z:fF18RXy8Sd9v}
\begin{align}
  \FIqty{Bob}{t}
  &=   \begin{alignedat}[t]{1}
    \frac{\omega^2}{4}(1-p)^2\,\Bigl[&
    \frac1{1-p/2}\,\lvert {\langle {+}\mkern 1.5mu\relax \vert \mkern 1.5mu\relax {{Y}}\mkern 1.5mu\relax \vert \mkern 1.5mu\relax {+}\rangle }\rvert ^2 + {}
    \\&
    2\,\lvert {\langle {+}\mkern 1.5mu\relax \vert \mkern 1.5mu\relax {{Y}}\mkern 1.5mu\relax \vert \mkern 1.5mu\relax {-}\rangle }\rvert ^2 + {}
    2\,\lvert {\langle {-}\mkern 1.5mu\relax \vert \mkern 1.5mu\relax {{Y}}\mkern 1.5mu\relax \vert \mkern 1.5mu\relax {+}\rangle }\rvert ^2 + {}
    \\&
    \frac1{p/2}\,\lvert {\langle {-}\mkern 1.5mu\relax \vert \mkern 1.5mu\relax {{Y}}\mkern 1.5mu\relax \vert \mkern 1.5mu\relax {-}\rangle }\rvert ^2
    \Bigr]
  \end{alignedat}
  \nonumber\\
  &= \omega^2 (1-p)^2\ ,
\end{align}
using
$\langle {+}\mkern 1.5mu\relax \vert \mkern 1.5mu\relax {{Y}}\mkern 1.5mu\relax \vert \mkern 1.5mu\relax {-}\rangle  = \langle {+}\mkern 1.5mu\relax \vert \mkern 1.5mu\relax {{Y}
{Z}}\mkern 1.5mu\relax \vert \mkern 1.5mu\relax {+}\rangle  =
i\langle {+}\mkern 1.5mu\relax \vert \mkern 1.5mu\relax {
{X}}\mkern 1.5mu\relax \vert \mkern 1.5mu\relax {+}\rangle =i$ and
$\langle {+}\mkern 1.5mu\relax \vert \mkern 1.5mu\relax {
{Y} }\mkern 1.5mu\relax \vert \mkern 1.5mu\relax {+}\rangle  = 0 = \langle {-}\mkern 1.5mu\relax \vert \mkern 1.5mu\relax {
{Y}}\mkern 1.5mu\relax \vert \mkern 1.5mu\relax {-}\rangle $.

Recalling~\eqref{z:X6ybg8GSCbDf}, the ratio of the
Fisher information of the noisy versus the noiseless clock is
\begin{align}
  \frac{\FIqty{Bob}{t}}{\FIqty{Alice}{t}}
  &= (1-p)^2\ .
    \label{z:AZLTk4fVy2xa}
\end{align}

\paragraph{Computation of $\FIqty{Eve}{\eta}$.}
Now we turn to Eve's picture.  We start with computing a complementary channel
to $\mathcal{N}_p$.  We can use~\eqref{z:SiT0KpCikk0d} for
this effect from any Kraus representation of $\mathcal{N}_p$.  It is useful to
choose a representation with the fewest possible Kraus operators to simplify our
computation of $\FIqty{Eve}{\eta}$.
From~\eqref{z:yIa2I8L9iHaJ}, and using
$\mathcal{D}_Z(\cdot) = \mathds{1}(\cdot)\mathds{1}/2 + {Z}(\cdot){Z}/2$ we
can read off a representation of $\mathcal{N}_p$ with the two Kraus operators
\begin{align}
  E_0^{(p)} &= \sqrt{1-\frac{p}{2}}\,\mathds{1}\ ,
  &
  E_1^{(p)} &= \sqrt{\frac{p}{2}}\,
  {Z}\ .
\end{align}
The complementary channel constructed
via~\eqref{z:SiT0KpCikk0d} takes the form
\begin{align}
  \widehat{\mathcal{N}}_p(\cdot)
  &= \begin{bmatrix}
    \bigl(1-\frac{p}{2}\bigr) \operatorname{tr}(\cdot)
    & \sqrt{\frac{p}{2}\bigl(1-\frac{p}{2}\bigr)} \operatorname{tr}[{Z}(\cdot)] \\
    \sqrt{\frac{p}{2}\bigl(1-\frac{p}{2}\bigr)} \operatorname{tr}[{Z}(\cdot)]
    & \frac{p}{2} \operatorname{tr}(\cdot)
  \end{bmatrix}\ .
\end{align}
Hence Eve's state is
\begin{align}
  \rho_E(t_0)
  = \widehat{\mathcal{N}}_p\bigl(\psi(t_0)\bigr)
  = \begin{bmatrix}
    1 - \frac{p}{2} & 0 \\ 0 & \frac{p}{2}
  \end{bmatrix}\ .
\end{align}
The derivative in the $\eta$ direction is given by the image
of~\eqref{z:xVN3EzcV.KFg} under $\widehat{\mathcal{N}}_p$,
namely
\begin{align}
  \partial_\eta \rho_E\,(t_0)
  = \widehat{\mathcal{N}}_p\bigl(\partial_\eta\psi\,(t_0)\bigr)
  = \frac2\omega \sqrt{\frac{p}{2}\Bigl(1-\frac{p}{2}\Bigr)}\,
  {X}\ .
\end{align}
We may now directly compute $\FIqty{Eve}{\eta}$
using~\eqref{z:r7GWSaRrLfNh},
\begin{align}
  \FIqty{Eve}{\eta}
  &= \frac4{\omega^2}\,\frac{p}{2}\,\Bigl(1-\frac{p}{2}\Bigr)
    \Bigl[
    0 + 2 + 2 + 0
    \Bigr]
    \nonumber\\
  &= \frac4{\omega^2}\,\bigl( 2p - p^2 \bigr)\ .
\end{align}
Using~\eqref{z:X6ybg8GSCbDf} we find that the ratio of Eve's Fisher information to Alice's Fisher information with respect to $\eta$
is
\begin{align}
  \frac{\FIqty{Eve}{\eta}}{\FIqty{Alice}{\eta}}
  &= 2p - p^2
    = 1 - (1-p)^2\ .
    \label{z:x-jhUzhYqP5-}
\end{align}
The fact that~\labelcref{z:AZLTk4fVy2xa,z:x-jhUzhYqP5-} sum to unity
is a manifestation  of \cref{z:T-KNuYhGRM7I} in the
present setting.

Consider now %
our Fisher information loss
formula~\eqref{z:4bic5ys.wcEe}.
Using~\eqref{z:y11JcdWGvpsS} and $\bar{H} = H - \langle {H}\rangle  = H$
we have
\begin{align}
  \widehat{\mathcal{N}}_p\bigl(\{\bar{H}, \psi\}\bigr)
  &= \frac\omega2\, \widehat{\mathcal{N}}_p\bigl({Z}\bigr)
    = \omega \sqrt{\frac{p}{2}\Bigl(1-\frac{p}{2}\Bigr)}\,
    {X}\ .
\end{align}
Then we can compute
$\Ftwo{\rho_E}{ \widehat{\mathcal{N}}_p\bigl(\{\bar{H}, \psi\}\bigr) }$
using~\eqref{z:r7GWSaRrLfNh} as
\begin{align}
  \Ftwo{\rho_E}{ \widehat{\mathcal{N}}_p\bigl(\{\bar{H}, \psi\}\bigr) }
  &= \omega^2 \, \bigl(2p - p^2\bigr)\ .
    \label{z:PGpgkwghh5cq}
\end{align}
We can then verify that the difference in Fisher information between the
noiseless clock and the noisy clock is indeed
\begin{align}
  \FIloss{Bob}{t} &= \FIqty{Alice}{t} - \FIqty{Bob}{t}
  = \omega^2\bigl(1 - (1-p)^2\bigr)
         \nonumber\\
  &= \Ftwo{\rho_E}{ \widehat{\mathcal{N}}_p\bigl(\{\bar{H}, \psi\}\bigr) }\ .
\end{align}

\subsection{Single qubit subject to complete dephasing along a transversal axis}
\label{z:TaWot2w4x4dl}

Now we consider a variant of the above single-qubit example: We replace the
noisy channel by a complete dephasing along the %
 $X$ axis
(\cref{z:kCqOQRv2d.3A}).
\begin{figure}
  \centering
  \includegraphics{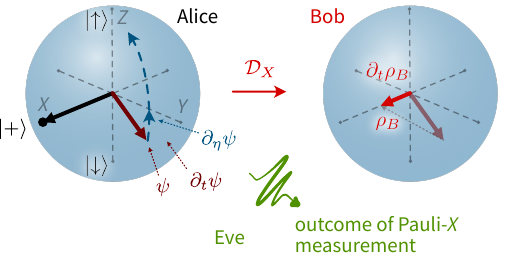}
  \caption{Single-qubit probe evolving according to the Hamiltonian
    $H=\omega{Z}/2$ and subject to complete dephasing along the $X$
    direction at time close to $t_0$.  For almost every $t_0$, the noisy probe
    remains maximally sensitive to time to first order around $t_0$.  This
    property might sound surprising, because Bob's state can be very mixed.  In
    the purified picture, Eve is given the outcome of a measurement of Alice's
    state along the $X$ axis.  Observe that in contrast to the setting in
    \cref{z:duwTyi1jsdbn}, this information does not reveal
    any information about the energy of Alice's state.}
  \label{z:kCqOQRv2d.3A}
\end{figure}
The qubit is initialized in the state vector $\lvert {\psi_{\mathrm{init}}}\rangle  = \lvert {+}\rangle $,
where $\lvert {\pm }\rangle = [\lvert {\uparrow }\rangle \pm \lvert {\downarrow}\rangle ]/\sqrt{2}$, with a Hamiltonian
$H = \frac{\omega}{2}{Z}$.  After a time $t$, the state is given
by~\eqref{z:UlWOzs-evrpA} and at all times we have
$\langle {H}\rangle _{\psi(t)} = 0$ and $\sigma_H^2 = \omega^2/4$.
At time $t\approx t_0$ the clock is completely dephased in the $X$ basis, as described by
the noisy channel
\begin{align}
  \mathcal{D}_X(\cdot)
  &= \langle {+}\mkern 1.5mu\relax \vert \mkern 1.5mu\relax {\cdot}\mkern 1.5mu\relax \vert \mkern 1.5mu\relax {+}\rangle \,\lvert {+}\rangle \mkern -1.8mu\relax \langle{+}\rvert  + \langle {-}\mkern 1.5mu\relax \vert \mkern 1.5mu\relax {\cdot}\mkern 1.5mu\relax \vert \mkern 1.5mu\relax {-}\rangle \,\lvert {-}\rangle \mkern -1.8mu\relax \langle{-}\rvert \ .
    \label{z:03pqdrL1Z7eT}
\end{align}
This completely dephasing map acts on the Pauli operator basis as
$\mathcal{D}_X(\mathds{1})=\mathds{1}$, $\mathcal{D}_X(
{X}) = 
{X}$, and
$\mathcal{D}_X(
{Y}) = 0 = \mathcal{D}_X(
{Z})$.  Bob receives the
density matrix
\begin{align}
  \rho_B = 
    \cos^2\Bigl(\frac{\omega t}{2}\Bigr) \lvert {+}\rangle \mkern -1.8mu\relax \langle{+}\rvert   +
    \sin^2\Bigl(\frac{\omega t}{2}\Bigr) \lvert {-}\rangle \mkern -1.8mu\relax \langle{-}\rvert  \ .
  \label{z:qdCKlozHjX8s}
\end{align}
Now the complementary channel of $\mathcal{D}_X$ is again
$\widehat{\mathcal{D}}_X=\mathcal{D}_X$, and so Eve gets the same density matrix
as Bob.

\paragraph{Computation of $\FIqty{Eve}{\eta}$.}
Recalling~\eqref{z:y11JcdWGvpsS}, we find
\begin{align}
  \widehat{\mathcal{D}}_X( \{ H-\langle {H}\rangle , \psi \} ) =
  \frac\omega2\,\widehat{\mathcal{D}}_X({Z}) = 0\ ,
  \label{z:vZfDiFsuC0Ah}
\end{align}
because $\mathcal{D}_X$ maps the Pauli-$Y$ and Pauli-$Z$ operators to zero.
Therefore Eve obtains zero information about $\eta$, i.e.,
$\FIqty{Eve}{\eta} = 0$.  Therefore, there is no sensitivity loss for Bob
regardless of the time $t\approx t_0$ at which the noisy channel is applied, as long as
the rank of $\rho_B(t_0)$ does not change locally at $t_0$.
The state $\rho_B$ changes rank whenever either term
of~\eqref{z:qdCKlozHjX8s} vanishes, i.e., when $t_0$ is a multiple of
$\pi/\omega$.  At those discrete points, we hit the edge cases where our main
uncertainty relation does not hold with equality and we cannot deduce that Bob
has maximal sensitivity at those points.  However, at all other points $t_0$ the
clock does not lose any sensitivity when sent to Bob.

The same conclusions apply for any noisy channel that is a complete dephasing
operation along an axis that lies in the equatorial plane, by rotational
symmetry of the problem around the $Z$ axis.  (Any axis in the equatorial plane
can be described as a rotation of the $X$ axis that is equivalent to a time
evolution of the system for some given time $t_*$.  Because the Fisher
information is invariant under unitary transformations, the calculation of Bob's
Fisher information of this qubit after complete dephasing along that given axis
at time $t_0$ is equivalent to calculating the Fisher information after a
complete dephasing along the $X$ axis at the time $t_0\to t_0 - t_*$.)

\paragraph{Check by direct computation of $\FIqty{Bob}{t}$.}
We now compute $\FIqty{Bob}{t} = \Ftwo{\rho_B}{\mathcal{D}_Z(\partial_t\psi)}$
directly, by using the definition of the Fisher information.  From~\eqref{z:.uzSQ.DqcFF7} we find
\begin{align}
  \mathcal{D}_X(\partial_t\psi)
  &= -\frac\omega2\,\sin(\omega t_0)\,
  {X}\ .
    \label{z:5xfiUugKQWOt}
\end{align}
If $\sin(\omega t_0) = 0$, which happens when $t_0$ is a multiple of
$\pi/\omega$, we find that Bob's state is locally stationary and Bob has no
sensitivity to first order in $t$.  (For this discrete set of points one could
argue that the Fisher information no longer represents the relevant sensitivity
for Bob, since the evolution should be considered to its leading order---here
the second order---and no longer only to first order.)

We now compute $\FIqty{Bob}{t}$ for all times $t_0$ where
$\sin(\omega t_0) \neq 0$. Observe that $\rho_B$ and
$\mathcal{D}_X(\partial_t\psi)$ commute.
Using~\eqref{z:q9lQk9SttqXb} and $X^2=\mathds{1}$, we find
\begin{align}
  \hspace*{1em}&\hspace*{-1em}
  \FIqty{Bob}{t}
= \frac{\omega^2}{4}\,\sin^2(\omega t_0) \, \operatorname{tr}\bigl(\rho_B^{-1}\bigr)
    \nonumber\\
  &= \omega^2\,\Bigl[\sin\Bigl(\frac{\omega t_0}{2}\Bigr)\cos\Bigl(\frac{\omega t_0}{2}\Bigr)\Bigr]^2 \,
    \biggl[
    \frac1{ \cos^2\bigl(\frac{\omega t_0}{2}\bigr) } +
    \frac1{ \sin^2\bigl(\frac{\omega t_0}{2}\bigr) }
    \biggr]
    \nonumber\\
  &= \omega^2\ ,
\end{align}
using $\sin(\omega t_0) = 2\sin(\omega t_0/2)\cos(\omega t_0/2)$ in the
second equality.

Overall, we see that Bob still has maximal sensitivity even after application of
the completely dephasing channel along the transversal $X$ axis, for all times
except for the discrete set of times $t_0$ where the rank of $\rho_B$ changes.
This conclusion matches our earlier conclusions obtained via considerations from
Eve's perspective (except for a discrete set of times $t_0$).

It might appear counterintuitive that Bob's state still has as high a
sensitivity as Alice's noiseless state for almost all $t_0$, especially as Bob's
state can get arbitrarily mixed.  Indeed, $\rho_B$ coincides with the maximally
mixed state for times $t_0$ that are midpoints between the multiples of
$\pi/\omega$.  However, we see that $\rho_B(t)$ still varies with $t$
sufficiently to enable optimal discrimination of nearby states to first order
around $t_0$.

\subsection{Probe in a GHZ state with one partial erasure}
Consider 
as initial state
an $n$-party GHZ state vector,
\begin{align}
  \lvert {\mathrm{GHZ}}\rangle  &=
  \frac1{\sqrt 2} \bigl[ \lvert {\uparrow \cdots \uparrow}\rangle 
  + \lvert {\downarrow \cdots \downarrow}\rangle  \bigr]\ ,
\end{align}
and let the system evolve according to the local Hamiltonian
$H = \sum_{i}\, (\omega/2) {Z}^{(i)}$ where ${Z}^{(i)}$ denotes the Pauli ${Z}$
operator acting on the $i$-th site.  Suppose that the first qubit is lost with
probability $p$.  This is represented by the noisy channel
\begin{align}
  \mathcal{N}(\cdot) = p\, \lvert {\phi_\perp}\rangle \mkern -1.8mu\relax \langle{\phi_\perp}\rvert \otimes\operatorname{tr}_{1}(\cdot) + (1-p)\,(\cdot)\ ,
\end{align}
where $\operatorname{tr}_1$ traces out the first qubit and where $\lvert {\phi_\perp}\rangle $ is a state vector
in a new, orthogonal dimension that has no overlap with the input state.
A Stinespring dilation of the first term in $\mathcal{N}$ is described as giving
the first qubit of Alice's system to Eve, and the remaining qubits to Bob; any
missing qubits on either Bob or Eve's side is replaced by $\lvert {\phi_\perp}\rangle $.
The complementary channel can thus be computed as
\begin{align}
  \widehat{\mathcal{N}}(\cdot) = p\,\operatorname{tr}_{2\ldots n}(\cdot) + (1-p) \operatorname{tr}(\cdot)\, \lvert {\phi_\perp}\rangle \mkern -1.8mu\relax \langle{\phi_\perp}\rvert \ .
\end{align}
We compute the sensitivity loss associated with the noise according
to~\eqref{z:4bic5ys.wcEe}.  We have
\begin{align}
  H\lvert {\psi}\rangle  = \frac{n\omega}{2\sqrt 2} \bigl[ \lvert {\uparrow \cdots \uparrow}\rangle 
  - \lvert {\downarrow \cdots \downarrow}\rangle  \bigr]
  = P_\psi^\perp H \lvert {\psi}\rangle \ ,
\end{align}
noting that $H\lvert {\psi}\rangle $ is already orthogonal to $\lvert {\psi}\rangle $ since
$\langle {H}\rangle _\psi = 0$.
The optimal noiseless sensitivity is
\begin{align}
  \FIqty{Alice}{t}
  = 4\sigma_H^2
  = 4\langle {\psi}\mkern 1.5mu\relax \vert \mkern 1.5mu\relax {H^2}\mkern 1.5mu\relax \vert \mkern 1.5mu\relax {\psi}\rangle  =  n^2\omega^2\ ,
\end{align}
exhibiting the expected Heisenberg scaling for optimally entangled probe states.
We write
$\{\psi,\bar{H}\} = P_\psi^\perp H \psi + \mathrm{h.c.} = \sigma_H
Z_L$, with $Z_L$ defined in~\eqref{z:5ChzchOJ.MLJ}.  The local
reduced operator of $\{\psi,\bar{H}\}$ on a single site is
\begin{align}
  \operatorname{tr}_{\backslash\,i}( \{\psi,\bar{H}\} )
  = \frac{n\omega}{4}\lvert {\uparrow }\rangle \mkern -1.8mu\relax \langle{\uparrow }\rvert - \frac{n\omega}{4}\lvert {\downarrow }\rangle \mkern -1.8mu\relax \langle{\downarrow }\rvert + \mathrm{h.c.}
  = \frac{n\omega}{2}\,{Z}^{(i)}\ ,
\end{align}
where $\operatorname{tr}_{\backslash\,i}$ denotes the partial trace over all subsystems
except the $i$-th subsystem.  Noting that $\operatorname{tr}(\{\psi,\bar{H}\})=0$, we obtain
\begin{align}
  \widehat{\mathcal{N}}( \{\psi,\bar{H}\} ) = p\,\frac{n\omega}{2} \,{Z}\ .
\end{align}
On the other hand, the reduced state of $\psi$ on a single site is simply the
maximally mixed state $\mathds{1}_2/2$ and thus
\begin{align}
  \rho_E = \widehat{\mathcal{N}}(\psi)
  = p\,\frac{\mathds{1}_2}{2} + (1-p)\,\lvert {\phi_\perp}\rangle \mkern -1.8mu\relax \langle{\phi_\perp}\rvert \ .
\end{align}
As $\rho_E$ and $\widehat{\mathcal{N}}( \{\psi,\bar{H}\} )$ commute, we can
use~\eqref{z:q9lQk9SttqXb} to see that
\begin{align}
\FIloss{Bob}{t}
  = \operatorname{tr}\mathopen{}\left[ \frac{2}{p} \, \Bigl(p\frac{n\omega}{2}{Z}\Bigr)^2\right]\mathclose{} =  p n^2 \omega^2\ .
\end{align}
If $p=1$, Eve is maximally disturbing and completely blocks Bob's ability to
measure time, if $p=0$ there is no sensitivity loss.  Any value in between
interpolates between these two cases.

Note that while it might appear here that Heisenberg scaling
($\FIqty{Bob}{t}\propto n^2$) is achieved for $p>0$, this is an artifact of
the lack of scaling in $n$ of our choice of noisy channel and does not
contradict the findings of, e.g.,
 Refs.~\cite{R30,R21}.

\subsection{Estimating a signal Hamiltonian term}

In this subsection, we briefly comment on the case where the parameter to
estimate is not time $t$ itself, but a parameter $f$ in the Hamiltonian
that influences time evolution.  In other words, we now account for possible
other terms in the Hamiltonian that contribute to time evolution but that reveal
nothing about the parameter of interest.  We assume that the noiseless probe
evolves according to a Hamiltonian
\begin{align}
  H_f = H_0 + f G\ ,
\end{align}
where $H_0$ does not depend on $f$, and where $H_0$ and $G$ are
time independent.
References~\cite{R34,R35} have
determined that the
Fisher information with respect to $f$ that one achieves by initializing the
system in some initial state vector
$\lvert {\psi_0}\rangle $ and letting the system evolve
according to $H_f$ for some fixed time $T$.  Let $U_f(T) = {e}^{-iH_f T}$ be the
time-evolution operator, and define $\lvert {\psi_f}\rangle  = U_f(T)\lvert {\psi_0}\rangle $.  The
question is, how much sensitivity does the family of state vectors $f\mapsto \lvert {\psi_f}\rangle $ offer
with respect to $f$?  The derivative relevant for the Fisher information is
given by~\cite{R35}
\begin{align}
  \partial_f \psi_f = -i [K_f, \psi_f]\ ,
  \label{z:lV7Y877pc1B1}
\end{align}
where
\begin{align}
  K_f = -i U_f^{-1} \frac{dU_f}{df}
  = T \sum_{k=0}^\infty \frac{(-iT)^{k}}{(k+1)!}\,\operatorname{ad}_{H_f}^k\bigl(G\bigr)\ ,
\end{align}
where $\operatorname{ad}_{{M}}(G) := [{M}, G]$ and
\begin{align}
\operatorname{ad}_{{M}}^{k}(G) := [{M}, [{M}, \ldots, [{M}, G]] 
\end{align}
is the $k$-th commutator
of ${M}$ with $G$.
The operator $K_f$ can be thought of as an effective ``Hamiltonian'' for the
parameter $f$, driving an ``evolution'' in $\lvert {\psi_f}\rangle $ with respect to $f$
according to~\eqref{z:lV7Y877pc1B1}.

If we send this probe state through a noisy channel following the setting in
\cref{z:FnHui0ahNLxW}, then our uncertainty relation can be applied,
where the complementary parameter evolution is generated by the operator
$L = -i[K_f, \psi_f]/(2\sigma_{K_f}^2)$.  That is, Bob's sensitivity to $f$ trades
off with Eve's sensitivity to the parameter generated by $L$.

\subsection{Symmetric codes against erasures via superpositions of Dicke states}
\label{z:lZkawwgrYk1u}

Based on the relevance of Dicke states for
metrology~\cite{R36,R37,R38,R39,R40} and for quantum error
correction~\cite{R41,R42,R43}, we can
ask whether our uncertainty relation can guide a search for good clock states.
To ensure good sensitivity even in the noiseless setting,
we seek probe states with a large spread over energy eigenstates.  So we
consider a general superposition of Dicke states corresponding to different
numbers of excitations.
We note an important class of permutation-invariant codes are those developed in
Refs.~\cite{R40,R41}.

Consider the $n$-spin noninteracting Hamiltonian
$H = \sum_{i=1}^n (\omega/2)\, {Z}^{(i)}$.
A Dicke state is an eigenstate of $H$ that is symmetric under permutations of
the sites.  Consider the Dicke state  %
\begin{align}
  \lvert {h^n_q}\rangle  := 
  \binom{n}{q}^{-1/2}
  \sum_{\substack{s_i = \pm 1\\\sum s_i = n-2q}} \lvert {s_1\ldots s_n}\rangle \ ,
\end{align}
where $s_i=\pm1$ represents the eigenstates of ${Z}$ and where
$q=0,\ldots,n$.
We construct our probe states as a superposition of Dicke states of different
values of $q$.
In general, such a state vector can be written as
\begin{align}
  \lvert {\psi }\rangle = \sum_{q=0}^{n} \psi_q \lvert {h^n_q}\rangle \ ,
\end{align}
for some arbitrary complex amplitudes $\{\psi_q\}$ that satisfy
$\sum_{{q}} \lvert {\psi_q}\rvert ^2 = 1$.

As a noise model, we assume that $k$ systems chosen at random are entirely
erased.  Because the probe state is completely symmetric, it does not matter
which subsystems are erased; we may assume that the first $k$ sites are erased.
The complementary channel to the erasure of $k$ subsystems is a channel that
provides those lost subsystems to Eve,
\begin{align}
  \widehat{\mathcal{N}}(\cdot)
  &= \operatorname{tr}_{k+1\ldots n}(\cdot)\ ,
\end{align}
where $\operatorname{tr}_{k+1\ldots n}$ denotes the partial trace over sites $k+1$ to $n$.

We compute numerical values for the quantities $\FIloss{Bob}{t}$ and
$4\sigma_H^2$, enabling us to infer $\FIqty{Bob}{t}$.  Consider the probe state vector
consisting of an even superposition of two Dicke states with associated
parameters $q_1,q_2$ 
\begin{align}
  \lvert {\psi }\rangle = [\lvert {h^n_{q_1}}\rangle  + \lvert {h^n_{q_2}}\rangle ]/\sqrt{2}\ .
\end{align}
The sensitivity of this probe state for $n=100$ and subject to $k=9$ erasures is
plotted as a function of $q_1,q_2$ in \cref{z:DvEqFQ7o-EPn} (with
$\omega/2=1$).  The sensitivity $\FIqty{Bob}{t}$ is obtained by computing $\FIloss{Bob}{t}$
and $\sigma_H^2$ via~\eqref{z:QZajZctWEncO}.
\begin{figure}
  \centering
  \includegraphics[width=\columnwidth]{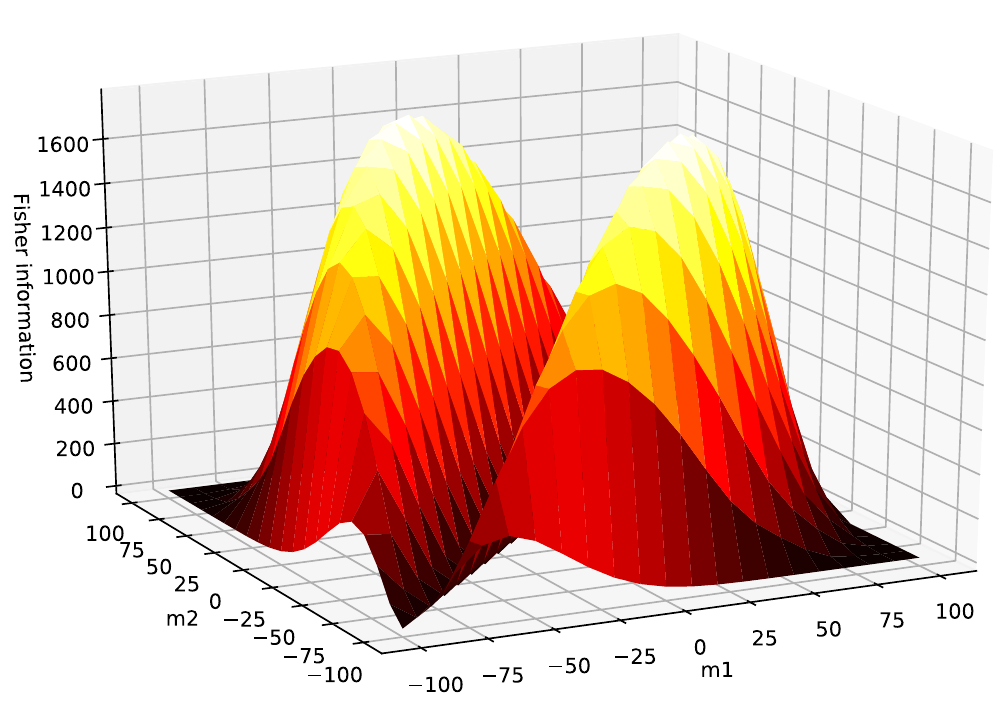}%
  \caption{Fisher information of an even superposition of two Dicke states of
    magnetizations $m_1=n-2q_1$ and $m_2=n-2q_2$ on a $n$-site noninteracting
    spin chain with local terms $H_i = (\omega/2){Z}$.  A good probe state
    has $m_1,m_2$ far from one another (for a large energy spread), but also far
    from the edges $-n$ and $n$ (to avoid decoherence caused by the erasures).
    Here we set $n=100$ total spins, $\omega/2=1$, and $k=9$ spins are lost to
    the environment.  Our trade-off relation facilitates the calculation of the
    Fisher information plotted above.  It also gives an interpretation of the
    loss in sensitivity with respect to the noiseless case (where the GHZ state
    $m_1=-m_2 = \pm n$ would be optimal; leftmost and rightmost edges of the
    plot) as the sensitivity that Eve gains with respect to the energy of the
    state.  }
  \label{z:DvEqFQ7o-EPn}
\end{figure}
On the one hand, our trade-off relation facilitates the calculation of the
remaining Fisher information after the erasures.  On the other hand, the
trade-off relation explains that the high sensitivity loss experienced for
states with a broad spread in energy ($q_1\to 0$ and $q_n\to n$) is directly
related to the fact that the environment can well infer the energy of the
state from few-site reduced states.

Because the noise is local, numerical computations only have to take place on a smaller system representing the local degrees of freedom.  Because of permutation %
symmetry globally and also locally (the reduced state also lives in the local
symmetric subspace), our computations run on $k+1$ dimensions and not on the
full $(n+1)$-dimensional symmetric subspace.
We will return to the example of permutation-invariant states on $n$ spins
in \cref{z:Pgp-gW09n9nZ}, where we consider an i.i.d.\@
amplitude-damping noise model instead of erasures.

\section{Bounds on the Fisher information}
\label{z:qdkRM6aFmtdz}

Because it might not always be simple to compute the Fisher information trade-off
quantity $\FIloss{Bob}{t}$ in~\eqref{z:4bic5ys.wcEe}, we provide
a few bounds that might be applicable to different settings, and that avoid the
calculation of the symmetric logarithmic derivative on Eve's system.

\subsection{Upper bound on Bob's sensitivity by postprocessing
  Eve's system}
\label{z:308j2vzIlz3f}

A useful bound for the Fisher information is the data-processing
inequality~\cite{R18}.  The inequality states that for any
$\rho(t)$, and for any $t$-independent completely positive,
trace-nonincreasing map $\mathcal{E}$, the sensitivity after application of the
channel can only decrease:
\begin{align}
  F(\rho(t)) \geq F(\mathcal{E}(\rho(t)))\ .
  \label{z:.Hy7vObZJO-E}
\end{align}
A trace-nonincreasing map can be used to describe only a subspace of interest of
a larger Hilbert space while accounting for leakage outside of that subspace.

Consider our setup with Alice, Bob and Eve as in
\cref{z:FnHui0ahNLxW}.  Suppose now that Eve sends her state to
another agent, Eve${}^\prime$, through a trace-nonincreasing, completely
positive map $\mathcal{N}'$ as depicted in
\cref{z:kmpiohHL8OiG}\textbf{a}.
\begin{figure}
  \centering
  \includegraphics{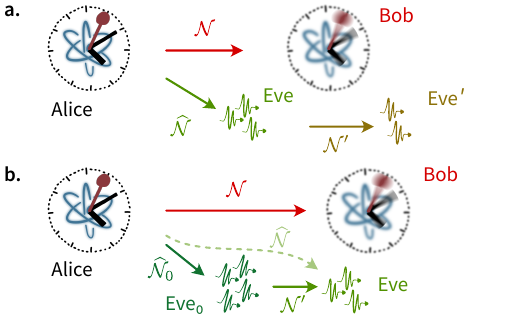}
  \caption{%
    Combining our uncertainty relation with the data processing
    inequality for the Fisher information yields new bounds for the Fisher
    information. \textbf{a.}~Suppose Eve applies a suitably chosen map
    $\mathcal{N}'$ to her system, resulting in a system we denote by Eve${}'$,
    on which the sensitivity to energy might be significantly easier to compute.
    Eve${}'$ can only have a worse sensitivity to energy than Eve, so our
    uncertainty relation gives an upper bound to Bob's sensitivity to time.
    \textbf{b.}~Suppose that Eve's output can be written as a composition of two
    maps $\widehat{\mathcal{N}}_0$ and $\mathcal{N}'$ via an intermediate system
    Eve${}_0$.  Then we obtain a lower bound on Bob's sensitivity to time by
    computing Eve${}_0$'s sensitivity to energy.}
  \label{z:kmpiohHL8OiG}
\end{figure}
The data processing inequality ensures that
$\FIqty{Eve}{\eta} \geq \FIqty{Eve'}{\eta}$.  Combining this with our
uncertainty relation~\eqref{z:cmnZZu5eQ8hv} yields
\begin{align}
  \frac{ \FIqty{Bob}{t} }{ \FIqty{Alice}{t} }
  + \frac{ \FIqty{Eve'}{\eta} }{ \FIqty{Alice}{\eta} } \leq 1\ .
\end{align}
We can also obtain this inequality by starting from the quantum Fisher
information loss on Bob's end~\eqref{z:4bic5ys.wcEe},
\begin{align}
  \FIloss{Bob}{t}
  &= \Ftwo{\rho_E}{ \widehat{\mathcal{N}}(\{\bar{H},\psi\}) }
    \nonumber\\
  &\geq \Ftwo{\mathcal{N}'(\rho_E)}{
    \mathcal{N}'\bigl(\widehat{\mathcal{N}}(\{\bar{H},\psi\})\bigr) }\ ,
  \label{z:OKz7mFdA9JgR}
\end{align}
which in turn provides an upper bound on Bob's Fisher information
via~\eqref{z:QZajZctWEncO} as
\begin{align}
  \FIqty{Bob}{t}
  \leq 4\sigma_H^2 - \Ftwo{\mathcal{N}'(\rho_E)}{
    \mathcal{N}'\bigl(\widehat{\mathcal{N}}(\{\bar{H},\psi\})\bigr) }\ .
  \label{z:t59w2cccLdg7}
\end{align}
By choosing the map $\mathcal{N}'$ suitably, one can potentially significantly
simplify the computation of the Fisher information. For instance, $\mathcal{N}'$
can be a dephasing map that ensures that $\mathcal{N}'(\rho_E)$ and
$\mathcal{N}'(\widehat{\mathcal{N}}\{\psi,H\})$ commute, therefore
enabling the use of \cref{z:q9lQk9SttqXb} and removing the
necessity of computing the symmetric logarithmic derivative.  Alternatively
$\mathcal{N}'$ can be chosen to enforce some symmetry that might be convenient
for the computation of the Fisher information.

The bound~\eqref{z:t59w2cccLdg7} can be spelled
out in the case of independent and identically distributed (i.i.d.\@) noise on a
many-body probe state.  Consider a single-site noisy channel $\mathcal{N}_1$
with Kraus operators $\{ E_x \}$ for $x=0,\ldots,m-1$.  The full noisy channel is
$\mathcal{N} = \mathcal{N}_1^{\otimes n}$.
Its Kraus operators are
$E_{\boldsymbol x}$, where $\boldsymbol x = (x_1,\ldots, x_n)$ is a collection
of indices $x_i=0,\ldots,m-1$ indicating which Kraus operator is applied on the
$i$-th site
\begin{align}
  E_{\boldsymbol x} = \bigotimes_{i=1}^n E_{x_i}\ .
\end{align}
The complementary channel $\widehat{\mathcal{N}}$ can then be written in terms
of the Kraus operators of $\mathcal{N}$ as
\begin{align}
  \widehat{\mathcal{N}}(\cdot) =
  \sum_{\boldsymbol x,\boldsymbol x'}
  \operatorname{tr}\bigl(E_{\boldsymbol x'}^\dagger E_{\boldsymbol x}\,(\cdot)\bigr)
  \lvert {\boldsymbol x}\rangle \mkern -1.8mu\relax \langle{\boldsymbol x'}\rvert \ ,
\end{align}
where $\{ \lvert {\boldsymbol x}\rangle  \}$ is a basis of the Hilbert space of $E$.

Computing the Fisher information analytically on the output of either
$\mathcal{N}$ or $\widehat{\mathcal{N}}$ might not be straightforward if the
state and its derivative are mapped to operators whose eigenbases are not
aligned in any obvious way, which would complicate the calculation of the
symmetric logarithmic derivative when computing the
expression~\eqref{z:4bic5ys.wcEe}.  Here, we see that by
completely dephasing the output of $\widehat{\mathcal{N}}$ in the computational
basis, and projecting onto the subspace of the environment associated with
low-weight Kraus operators of $\mathcal{N}$, we obtain a lower bound on
$\FIloss{Bob}{t}$ which translates into a upper bound on $\FIqty{Bob}{t}$ that
is easy to compute.  Here, we assume that the first Kraus operator $E_0$ is
close to the identity and that the other Kraus operators represent ``jump
terms.''  We mean by ``weight'' the number of Kraus operators that are jump
terms.

We now choose a suitable completely positive, trace-nonincreasing map
$\mathcal{N}'$ in order to
use~\eqref{z:t59w2cccLdg7} to obtain an upper
bound on the Fisher information at Bob's end.
In the following, we assume that $m=2$, but the argument generalizes
straightforwardly to noisy channels that have more Kraus operators.
We design the map such that it (i)~completely dephases the environment system in
the computational basis, and (ii)~projects its input onto the subspace
associated with basis vectors $\lvert {\boldsymbol x}\rangle $ with small Hamming weight
$\lvert {\boldsymbol x}\rvert $.  Fix $k>0$ and let
\begin{align}
  \mathcal{N}'(\cdot) =
  \sum_{\boldsymbol x:\ \lvert {\boldsymbol x}\rvert  \leq k}
  \lvert {\boldsymbol x}\rangle \mkern -1.8mu\relax \langle{\boldsymbol x}\rvert \,(\cdot)\,\lvert {\boldsymbol x}\rangle \mkern -1.8mu\relax \langle{\boldsymbol x}\rvert \ .
\end{align}
Then we can see that
\begin{align}
  \bigl(\mathcal{N}'\circ\widehat{\mathcal{N}}\bigr)(\cdot) =
  \sum_{\boldsymbol x:\ \lvert {\boldsymbol x}\rvert  \leq k}
  \operatorname{tr}\bigl(E_{\boldsymbol x}^\dagger E_{\boldsymbol x}\,(\cdot)\bigr)
  \lvert {\boldsymbol x}\rangle \mkern -1.8mu\relax \langle{\boldsymbol x}\rvert \ .
\end{align}

The upper bound on Bob's Fisher information with respect to time comes
from~\eqref{z:t59w2cccLdg7}.  Starting
from~\eqref{z:OKz7mFdA9JgR} and since
the two arguments of the Fisher information commute, we can
use~\eqref{z:q9lQk9SttqXb} to find
\begin{align}
  \FIloss{Bob}{t}
  &\geq \operatorname{tr}\Bigl\{
  \bigl[\bigl( \mathcal{N}'\circ\widehat{\mathcal{N}}\bigr)(\psi)\bigr]^{-1}
  \bigl[\bigl( \mathcal{N}'\circ\widehat{\mathcal{N}}\bigr)(\{\psi,\bar{H}\})\bigr]^{2}
  \Bigr\}
  \nonumber\\
  &= \sum_{\boldsymbol x:\ \lvert {\boldsymbol x}\rvert  \leq k}
    \frac{
      \bigl[ 2\operatorname{Re}\langle {\psi }\rvert \bar{H}
        E_{\boldsymbol x}^\dagger E_{\boldsymbol x}\lvert {\psi }\rangle \bigr]^2
    }{
      \operatorname{tr}(E_{\boldsymbol x}^\dagger E_{\boldsymbol x}\,\psi )
    }\ ,
    \label{z:3YnRZBPm.g17}
\end{align}
where we have used the fact that the output of
$\mathcal{N}'\circ\widehat{\mathcal{N}}$ is diagonal in the computational basis.
The completely dephasing channel ensures that the
expression~\eqref{z:3YnRZBPm.g17} is a classical Fisher information,
which is easier to compute than the quantum Fisher information in which the state
and the derivative do not commute.

The number of terms in the above sum, which corresponds to the dimension of the
subspace associated with basis vectors $\lvert {\boldsymbol x}\rangle $ satisfying
$\lvert {\boldsymbol x}\rvert \leq k$, is given by
$\binom{n}{k}+\binom{n}{k-1}+\cdots+\binom{n}{0} = O(n^k)$.  For fixed $k$, this
number scales polynomially in $n$.  The complexity of computing the numerator
and denominator in~\eqref{z:3YnRZBPm.g17} also scales only
polynomially in $n$ as long as $\lvert {\psi}\rangle $ and $\bar{H}\lvert {\psi}\rangle $ can be expressed
using a representation that enables efficient computation of local expectation
values, such as a superposition of a constant number of computational basis
vectors, or alternatively as matrix-product
states~\cite{R44}.
We discuss below the case of i.i.d.\ amplitude damping noise, where numerical
evidence indicates that for small values of $p$ (say $p\lesssim0.1$), even for
$n=50$ it can suffice to set $k=4$ to obtain meaningful bounds (see
\cref{z:Pgp-gW09n9nZ}).

\subsection{Lower bound on Bob's Fisher information by preprocessing  Eve's system}
\label{z:gJXrB9endOCQ}

Let us return to the original setting with Alice, Bob, and Eve as in
\cref{z:FnHui0ahNLxW}.  Suppose now that we can find a completely
positive, trace-preserving map $\widehat{\mathcal{N}}_0$ and a completely
positive, trace-nonincreasing $\mathcal{N}'$ such that
$\widehat{\mathcal{N}} = \mathcal{N}' \circ \widehat{\mathcal{N}}_0$. That is,
we suppose that Eve gets her state through an intermediary, which we call
Eve${}_0$ as shown in \cref{z:kmpiohHL8OiG}\textbf{b}.  The data-processing
 inequality now tells us that
$\FIqty{Eve}{\eta} \leq \FIqty{Eve_0}{\eta}$.  Combining this with our
uncertainty relation gives us
\begin{align}
  \frac{ \FIqty{Bob}{t} }{ \FIqty{Alice}{t} }
  + \frac{ \FIqty{Eve_0}{\eta} }{ \FIqty{Alice}{\eta} }
  \geq 1\ .
  \label{z:mv9tU7kxaHmU}
\end{align}
A more explicit bound on $\FIqty{Bob}{t}$ can be obtained starting
from~\eqref{z:4bic5ys.wcEe} and writing
\begin{align}
  \FIloss{Bob}{t}
  &= \Ftwo{\rho_E}{ \widehat{\mathcal{N}}(\{\bar{H},\psi\}) }
    \nonumber\\
  &= \Ftwo{\mathcal{N}'(\widehat{\mathcal{N}}_0(\psi))}{
    \widehat{\mathcal{N}}'(\widehat{\mathcal{N}}_0(\{\bar{H},\psi\})) }
    \nonumber\\
  &\leq \Ftwo{\widehat{\mathcal{N}}_0(\psi)}{
    \widehat{\mathcal{N}}_0(\{\bar{H},\psi\}) }\ .
    \label{z:U3JchlrlLghQ}
\end{align}
We present two simple example uses of this type of bound.  The first example applies to permutation-invariant systems. The second example applies to the
setting where Eve's state is reasonably close to being diagonal.

\paragraph{Permutation-invariant system.}

Consider a permutation-invariant clock state $\psi$ and Hamiltonian $H$.  If the
noise $\mathcal{N}$ acts only locally on at most $k$ known sites (or $k/2$
unknown sites), then $\widehat{\mathcal{N}}$ can be written as the composition
of a channel that traces out all but $k$ sites, and another channel that
completes the implementation of $\widehat{\mathcal{N}}$.  To see this, observe
that we can write
$\widehat{\mathcal{N}}(\cdot) = \sum_{j',j} \operatorname{tr}\bigl(E_{j'}^\dagger
E_j\,(\cdot)\bigr)\,\lvert {j'}\rangle \mkern -1.8mu\relax \langle{j}\rvert $, where $\{E_j\}$ are the Kraus operators of
$\mathcal{N}$. By assumption, $E_{j'}^\dagger E_j$ acts nontrivially on at most
$k$ sites.  Therefore, the expression $\operatorname{tr}\bigl(E_{j'}^\dagger E_j\,(\cdot)\bigr)$
depends only on the $k$-site reduced state of the 
input.  The full complementary
channel can be written as the composition of a channel that traces all but $k$
sites, and the channel
$\rho_k \mapsto \sum_{j',j} \operatorname{tr}(E_{j'}^\dagger E_j\,\rho_k)\,\lvert {j'}\rangle \mkern -1.8mu\relax \langle{j}\rvert $
(where here we reuse the notation $E_{j'}^\dagger E_{j}$ to denote the action of
those operators on only the $k$ sites where either operator acts nontrivially).
Therefore, the sensitivity loss $\FIloss{Bob}{t}$ can be upper bounded, for any
noisy channel consisting of Kraus operators of weight at most $k/2$, by the
sensitivity loss associated with $k$ located erasures.

\paragraph{If Eve's state is nearly diagonal.}
Computing useful expressions of the Fisher information when a diagonal
representation of the state is not known can be tricky.  The idea if $\rho_E$ is
reasonably close to being diagonal is to hope that one can essentially neglect
the off-diagonal elements of $\rho_E$ and still obtain a good approximation of
the Fisher information via the formula~\eqref{z:r7GWSaRrLfNh}.

Suppose we find an invertible matrix $A$ (with hopefully $A\approx\mathds{1}$) and a
diagonal matrix $\tau = \operatorname{diag}(\tau_0,\ldots,\tau_{d_E}) \geqslant 0$ such that
\begin{align}
  \rho_E = A\,\tau\,A^\dagger\ .
\end{align}
Such a matrix is given for instance by the LDLT or Cholesky decomposition of
$\rho_E$.
(The eigendecomposition of $\rho_E$ also gives such a matrix $A$, but if we can
compute an eigendecomposition one might as well use
\cref{z:r7GWSaRrLfNh} to compute the Fisher information
directly.)
Now we decompose $\widehat{\mathcal{N}}$ by including a scaling factor $\alpha$
as
\begin{align}
  \alpha\widehat{\mathcal{N}} &= \mathcal{N}'\circ \widehat{\mathcal{N}}_0\ ,
\end{align}
with $\alpha=\lVert {A}\rVert ^{-2}\lVert {A^{-1}}\rVert ^{-2}$ and with the two completely positive,
trace-nonincreasing maps
\begin{align}
  \widehat{\mathcal{N}}_0(\cdot)
  &= \frac1{\lVert {A^{-1}}\rVert ^2} A^{-1}\,\widehat{\mathcal{N}}(\cdot)\,A^{-\dagger}\ ,
  \\
  \mathcal{N}'(\cdot)
  &= \frac1{\lVert {A}\rVert ^2} A \,(\cdot)\, A^\dagger\ .
\end{align}
If $A$ is close to $\mathds{1}$ then
we have $\alpha\approx 1$.  Recalling the scaling
property~\eqref{z:OBSqCtuxYn6d} of the quantum Fisher information,
we find
\begin{align}
  \hspace*{2em}&\hspace*{-2em}
  \FIloss{Bob}{t}
  \nonumber\\
  &= \Ftwo{\rho_E}{\widehat{\mathcal{N}}(\{\bar{H},\psi\})}
  \nonumber\\
  &= \frac1\alpha\Ftwo{\alpha\widehat{\mathcal{N}}(\psi)%
    }{\alpha\widehat{\mathcal{N}}(\{\bar{H},\psi\})}
  \nonumber\\
  &\leq \frac1\alpha \Ftwo{ \widehat{\mathcal{N}}_0(\psi) }{%
    \widehat{\mathcal{N}}_0(\{\bar{H},\psi\})
    }
  \nonumber\\
  &= \frac1\alpha \Ftwo*{
    \frac1{\lVert {A^{-1}}\rVert ^2}\,\tau
    \,}{\,%
    \frac1{\lVert {A^{-1}}\rVert ^2} A^{-1} \widehat{\mathcal{N}}\bigl(\{\bar{H},\psi\}\bigr) (A^{-1})^\dagger
    }
  \nonumber\\
  &= \lVert {A}\rVert ^2 \, \Ftwo[\Big]{
    \tau
    }{%
    A^{-1} \widehat{\mathcal{N}}\bigl(\{\bar{H},\psi\}\bigr) (A^{-1})^\dagger
    }\ .
    \label{z:0kw2uyJEILUo}
\end{align}
In the last expression, the Fisher information is evaluated on a state that is
diagonal, so one can directly use~\eqref{z:r7GWSaRrLfNh}.
Furthermore, if $A$ is determined by a LDLT or Cholesky decomposition then it is
lower triangular and its inverse can be computed efficiently (matrix
multiplication of the inverse with another matrix can be done by forward
substitution).

\subsection{Bound in terms of Eve's access to the probe's energy}

In this section, a further bound on Bob's sensitivity to time is presented which
is given in terms of how well Eve can approximate a measurement of energy on the
noiseless clock state.  The properties that Eve can measure on the noiseless probe are
given by the adjoint of the complementary channel: Eve applying an operator $W$
on her system can equivalently be described as the operator
$\widehat{\mathcal{N}}^\dagger(W)$ being applied onto Alice's system, because
$\operatorname{tr}\bigl(\widehat{\mathcal{N}}(\psi)\,W\bigr) =
\operatorname{tr}\bigl(\psi\,\widehat{\mathcal{N}}^\dagger(W)\bigr)$.  One measure of how well Eve
can approximate a measurement of the Hamiltonian around $\lvert {\psi}\rangle $ with an
observable $S$ on her system is the minimum root-mean-squared error
$\min_{S=S^\dagger}\bigl[\bigl \langle { \bigl(\widehat{\mathcal{N}}^\dagger(S) - H
  \bigr)^2 }\bigr \rangle _\psi\bigr]^{1/2}$.
It turns out that the minimum square of this quantity is a lower bound to Bob's Fisher
information to time
\begin{align}
  \FIqty{Bob}{t} \geq
  \min_{S=S^\dagger}
  4\Bigl \langle { \bigl(\widehat{\mathcal{N}}^\dagger(S) - H \bigr)^2 }\Bigr \rangle _\psi\ .
  \label{z:9Ypzyg9WG4oD}
\end{align}
While this bound is aesthetically interesting, finding the optimal $S$ in this
expression is not significantly easier than directly solving the semidefinite
program~\eqref{z:-JCqSeMuhqI6}.
Furthermore, a candidate for $S$ in~\eqref{z:-JCqSeMuhqI6} immediately
provides an upper bound on $F_{\mathrm{Bob}}$, whereas a candidate
in~\eqref{z:9Ypzyg9WG4oD} does not provide any useful
bound on $F_{\mathrm{Bob}}$ because of the direction of the inequality.

The bound~\eqref{z:9Ypzyg9WG4oD} is proven as
follows.  Starting from~\eqref{z:BtvntDGWvOI3} and
using~\eqref{z:-JCqSeMuhqI6},
\begin{align}
  \hspace*{1em}
  &\hspace*{-1em}
  \frac14 \FIqty{Bob}{t}
    \nonumber\\
  &= \min_{S=S^\dagger} \bigl[
  \langle {\bar{H}^2}\rangle  - \operatorname{tr}\bigl(\psi\bigl\{\bar{H},\widehat{\mathcal{N}}^\dagger(S)\bigr\}\bigr)
  + \bigl \langle {\widehat{\mathcal{N}}^\dagger(S^2)}\bigr \rangle 
  \bigr]
    \nonumber\\
  &\geq \min_{S=S^\dagger} \bigl[
  \langle {\bar{H}^2}\rangle  - \bigl \langle {\bigl\{\bar{H},\widehat{\mathcal{N}}^\dagger(S)\bigr\}}\bigr \rangle 
  + \bigl \langle {[\widehat{\mathcal{N}}^\dagger(S)]^2}\bigr \rangle 
  \bigr]
    \nonumber\\
  &= \min_{S=S^\dagger}
    \Bigl \langle {(\bar{H} - \widehat{\mathcal{N}}^\dagger(S))^2}\Bigr \rangle \ ,
    \label{z:yrW79clKupEj}
\end{align}
where we have used
$\widehat{\mathcal{N}}^\dagger(S^2) \geq [\widehat{\mathcal{N}}^\dagger(S)]^2$
(see \cref{z:2McXzF-3arvz} in \cref{z:2dfjPpYV0kEA}).  Finally, we
can replace $\bar{H}$ by $H$ in~\eqref{z:yrW79clKupEj} because any
shifts of $\bar{H}$ by the identity can be canceled out by corresponding shifts of
$S$ by the identity.

\subsection{If Eve can measure the probe's energy almost perfectly}
If Eve has (approximate) access to the energy of the probe state, then this
(approximately) kills sensitivity on Bob's end.  Suppose we can find an
observable $S$ on Eve's system such that
$\lVert {\widehat{\mathcal{N}}^\dagger(S)-\bar H}\rVert \leq\lVert {\bar H}\rVert \,\delta$ and
$\lVert {\widehat{\mathcal{N}}^\dagger(S^2)-\bar H^2}\rVert \leq\lVert {\bar H}\rVert ^2\,\delta$.  Then
\begin{align}
  \FIqty{Bob}{t} \leq 12\delta\,\lVert {\bar H}\rVert ^2\ .
  \label{z:ZdlKhX0fnU2-}
\end{align}
To show this inequality, we first write
$\Delta = \widehat{\mathcal{N}}^\dagger(S) - \bar H$ and
$\Delta' = \widehat{\mathcal{N}}^\dagger(S^2) - \bar H^2$, with
$\lVert {\Delta}\rVert \leq \lVert {\bar H}\rVert \delta$ and
$\lVert {\Delta'}\rVert \leq\lVert {\bar H}\rVert ^2\delta$.  Then,
from~\eqref{z:BtvntDGWvOI3} and
using~\eqref{z:-JCqSeMuhqI6} we obtain
\begin{align}
  \hspace*{1em}
  &\hspace*{-1em}
    \frac14\FIqty{Bob}{t}
    \nonumber\\
  &= \min_{S=S^\dagger} \bigl[
  \langle {\bar{H}^2}\rangle  - \operatorname{tr}\bigl(\bigl\{\psi,\bar{H} \bigr\}\,\widehat{\mathcal{N}}^\dagger(S)\bigr)
  + \bigl \langle {\widehat{\mathcal{N}}^\dagger(S^2)}\bigr \rangle 
  \bigr]
    \nonumber\\
  &\leq \min_{S=S^\dagger} \mathopen{}\left\{
    -\operatorname{tr}\bigl( \bigl\{\psi,\bar H\bigr\}\,\Delta \bigr) + \operatorname{tr}\bigl(\psi\,\Delta'\bigr)
    \right\}\mathclose{}
    \nonumber\\
  &\leq 2\lVert {\bar{H}}\rVert \,\lVert {\Delta}\rVert  + \lVert {\Delta'}\rVert 
  \leq 3\delta\,\lVert {\bar H}\rVert ^2\ .
  \label{z:9UXdk4zK-qI9}
\end{align}

\subsection{Clock sensitivity loss for weak i.i.d.\@ noise}
\label{z:sZUQYA.IesEQ}

Here, we consider an $n$-site system subject to weak i.i.d.\@ noise, where each
site is affected by a noisy channel $\mathcal{N}_\epsilon$ such that
$\mathcal{N}_\epsilon \to {\mathrm{id}}$ if $\epsilon\to0$.  Clearly for
$\epsilon=0$ there is no sensitivity loss.  For a given clock state and
Hamiltonian, we develop a set of tools to understand and determine to which
order $m$ in $\epsilon$ the Fisher information loss is suppressed,
$\FIloss{Bob}{t} = O(\epsilon^m)$.

The question is partly motivated by a similar question in the context of quantum
error correction.  A quantum error-correcting code of distance $d$ can correct
any $(d-1)/2$ arbitrary single-site errors.  In the case of a weak i.i.d.\@
noisy channel $\mathcal{N}_\epsilon^{\otimes n}$ affecting the $n$ sites, a
weight-$[(d-1)/2]$ error happens with probability of order
$O(\epsilon^{(d-1)/2})$ if we assume that a single-site error happens with
probability $O(\epsilon)$.
This means that the chance of an uncorrectable error occurring is upper bounded
by $O(\epsilon^{(d-1)/2})$.  In this scenario, we see that the higher the
distance of the code, the better robustness is achieved against weak i.i.d.\@
noise.
In the context of quantum metrology, we ask the following analogous question:
Can we determine the robustness of the sensitivity of the clock to time when
affected by a weak i.i.d.\ noisy channel, a function of a certain feature
(analogous to the code distance) of the clock state, the Hamiltonian, and the
noisy channel?

There does not appear to be any obvious property of the setup (analogous to the
code distance) that immediately determines the order $m$ in the Fisher
information loss $\FIloss{Bob}{t} = O(\epsilon^m)$.  Instead, we explain a
general procedure for how to obtain a bound on $m$ when given a weak i.i.d.\@
noisy channel, a clock state and a Hamiltonian.

The simplest case presents itself if the complementary channel
$\widehat{\mathcal{N}}_\epsilon^{\otimes n}$ maps the clock state $\psi$ onto a
full-rank state $\rho_E = \sum p_{\boldsymbol x} \lvert {\boldsymbol x}\rangle \mkern -1.8mu\relax \langle{\boldsymbol x}\rvert _E$ that is
diagonal in the tensor product computational basis on $E$.  (This is equivalent
to all vectors $\{ E_{\boldsymbol x} \lvert {\psi }\rangle \}_{\boldsymbol x}$ being
orthogonal on Bob's system.)  In such a case we can
use~\eqref{z:r7GWSaRrLfNh} to express the Fisher information
loss as
\begin{align}
  \FIloss{Bob}{t}
  &= \sum_{\boldsymbol x, \boldsymbol x'}
  \frac2{p_{\boldsymbol x} + p_{\boldsymbol x'}} \,\bigl \lvert {
    \langle {\boldsymbol x}\mkern 1.5mu\relax \vert \mkern 1.5mu\relax {
        \widehat{\mathcal{N}}_\epsilon^{\otimes n}\big(\{\bar{H},\psi\})
    }\mkern 1.5mu\relax \vert \mkern 1.5mu\relax {\boldsymbol x'}\rangle 
  }\bigr \rvert ^2
  \nonumber\\
  &= \sum_{\boldsymbol x, \boldsymbol x'}
  \frac{O(\epsilon^{ 2q_{\boldsymbol x, \boldsymbol x'} })}{\Omega(\epsilon^{\min(r_{\boldsymbol x}, r_{\boldsymbol x'})})}
  \nonumber\\
  &= 
    O(\epsilon^{m})
    \label{z:1qgah0zhTYQY}
\end{align}
defining $r_{\boldsymbol{x}}$ and $q_{\boldsymbol x,\boldsymbol x'}$ via
$p_{\boldsymbol x} = \Omega(\epsilon^{r_{\boldsymbol x}})$ and
$\bigl \lvert {\langle {\boldsymbol x}\mkern 1.5mu\relax \vert \mkern 1.5mu\relax { \widehat{\mathcal{N}}_\epsilon^{\otimes
      n}\big(\{\bar{H},\psi\}) }\mkern 1.5mu\relax \vert \mkern 1.5mu\relax {\boldsymbol x'}\rangle }\bigr \rvert  = O(\epsilon^{q_{\boldsymbol x,
    \boldsymbol x'}})$, and with
\begin{align}
  m = \min_{\substack{\boldsymbol x, \boldsymbol x':\\
  r_{\boldsymbol x}\leq r_{\boldsymbol x'}}}
  \bigl\{ 2q_{\boldsymbol x, \boldsymbol x'} - r_{\boldsymbol x} \bigr\}
  \ .
\end{align}

As we can see above, it is not obvious which $\boldsymbol x,\boldsymbol x'$
minimizes the expression in the exponent above.  One might have expected that
events $\boldsymbol{x}$ whose probability of occurring vanish faster than other
events (large $r_{\boldsymbol{x}}$ compared to other $r_{\boldsymbol{x}'}$) are
less relevant and would not contribute significantly to the Fisher information
loss.  However, 
this is not the case; terms with high
$r_{\boldsymbol{x}},r_{\boldsymbol{x}'}$ can contribute to leading order to the
sensitivity loss if the corresponding term $q_{\boldsymbol{x},\boldsymbol{x'}}$
is sufficiently small.
If the state $\rho_E$ is not diagonal, then it is unclear whether or not one can easily
determine the order of the Fisher information loss.

\section{Clock sensitivity in the presence of continuous noise}
\label{z:Xuq8JaOEWrUo}

The setting presented in \cref{z:FnHui0ahNLxW} is nonstandard in
metrology, because in typical settings the noise and the signal both get
imprinted on the state in the same physical time-evolution process.  It is more
common to consider for instance a Lindbladian master equation that governs the
time evolution of the clock state, with terms that encode any noise processes
via jump operators.

Here we consider the situation where the noise is described by a Lindbladian
master equation.  Under suitable conditions, we can decompose the time evolution
into a pure unitary evolution followed by some effective noisy channel, and the
time dependence of the effective noisy channel can be neglected.  In this case
our \cref{z:T-KNuYhGRM7I} can be applied to compute the
sensitivity loss after some time $t_0$.

One can follow a similar procedure in the setting where the goal is to determine
an unknown parameter in the Hamiltonian when the overall evolution is governed
by a Lindbladian master equation.  The full derivation is presented in
\cref{z:nnVBeRY6bPwI}.
We can carry out a similar decomposition in the case of a clock sensing an
unknown parameter in the Hamiltonian, while subject to continuous noise
described by a Lindblad evolution.

\subsection{Decomposing a Lindbladian evolution of a clock into a
  pure unitary time evolution and an instantaneous noisy channel}
\label{z:.YRCo8Z9DEX4}

Consider a clock initialized at time $t=0$ in the state
vector
$\lvert {\psi_{\mathrm{init}}}\rangle $.  Suppose that the dynamics $\rho(t)$ of the clock
are given by the Lindblad master equation
\begin{subequations}
  \label{z:ddHKrh0Tuhx-}
  \begin{align}
    \partial_t\rho = \mathcal{L}_{\mathrm{tot}}[\rho]\ ,
    \label{z:Rp0XjUAxal5O}
  \end{align}
  where
  \label{z:BsZSNRM3uYvl}
  \begin{gather}
    \begin{aligned}
      \mathcal{L}_{\mathrm{tot}}
      &= \mathcal{L}_0 + \mathcal{L}_1\ ,
      &
      \mathcal{L}_0(\rho) &= -i[H, \rho]\ ,
    \end{aligned}
    \label{z:FMoskr44m.mb}
    \\
    \begin{aligned}
      \mathcal{L}_1(\rho)
      &= \sum_{j} \Bigl[ L_j \rho L_j^\dagger - \frac12 \bigl\{L_j^\dagger L_j, \rho \bigr\} \Bigr]\ .
    \end{aligned}
    \label{z:bQ3kzjvzh32s}
  \end{gather}
\end{subequations}
Here we assume that the operators $H$ and $L_j$ are time independent.
The evolution up to a time $t$ is given by the completely positive,
trace-preserving map
\begin{align}
  \mathcal{E}_{t} = {e}^{t(\mathcal{L}_0 + \mathcal{L}_1)}\ .
  \label{z:am8SEZRPvl1r}
\end{align}
The evolution driven by the Hamiltonian part $\mathcal{L}_0$ of the dynamics can
be written as ${e}^{t\mathcal{L}_0}(\cdot) = {e}^{-iHt}\,(\cdot)\,{e}^{iHt}$.

We would like to compute the sensitivity of the clock at a given time $t_0$,
meaning that the relevant quantity to compute is the Fisher information
\begin{align}
  \FIqty{clock}{t}(t_0)
  &= \Ftwo{ \rho(t_0) }{ \partial_t \rho(t_0) }\ .
    \label{z:RquV6F2IJPiu}
\end{align}

We can decompose the evolution $\mathcal{E}_{t}$ as first a unitary evolution
according to $H$ for a time $t$ followed by the instantaneous application of an
effective noisy channel $\mathcal{N}_{t}$.  Define
\begin{align}
  \mathcal{N}_{t}
  &= \mathcal{E}_t\,{e}^{-t\mathcal{L}_0}
    = {e}^{t(\mathcal{L}_0 + \mathcal{L}_1)}\,{e}^{-t\mathcal{L}_0}\ .
  \label{z:p1QyEW2YDGue}
\end{align}
Here, ${e}^{-t\mathcal{L}_0}$ is the inverse of the unitary evolution
${e}^{t\mathcal{L}_0}$.
By construction, if we apply $\mathcal{N}_{t}$ after applying
${e}^{t\mathcal{L}_0}$, then the overall effect is the same as letting the
system evolve for time $t$ under the full Lindbladian dynamics
$\mathcal{L}_0+\mathcal{L}_1$:
\begin{align}
  \mathcal{E}_{t} &= \mathcal{N}_{t}\,{e}^{t\mathcal{L}_0}\ .
  \label{z:XMdxyV4u6n1N}
\end{align}
An alternative expression for $\mathcal{N}_t$ is obtained
from~\eqref{z:p1QyEW2YDGue} using the
Baker-Campbell-Hausdorff formula,
\begin{align}
  \mathcal{N}_{t}
  &= {e}^{t\mathcal{L}_1 - \frac{t^2}{2}[\mathcal{L}_1,\mathcal{L}_0] + \ldots}\ .
\end{align}
Observe that if $[\mathcal{L}_1,\mathcal{L}_0]=0$, then we simply have
$\mathcal{N}_{t} = {e}^{t\mathcal{L}_1}$. 
This situation is known as \emph{phase-covariant dynamics}
(cf.\@ e.g.\@ Refs.~\cite{R45,R46}).
This is the case if $[L_j,H]=0$ for all jump operators $L_j$.
In other cases, the map can be determined from
\eqref{z:p1QyEW2YDGue} directly if the superoperator
$\mathcal{E}_t$ can be computed.

Let us introduce the family of states
$\psi(t) = {e}^{-iHt} \,\psi_{\mathrm{init}}\, {e}^{iHt}$ associated with the
(fictitious) pure unitary evolution of $\psi_{\mathrm{init}}$ if we artificially
turn off the noise terms.

The derivative of the quantum state
$\rho(t) = \mathcal{E}_t(\psi_{\mathrm{init}})$ can then be written as
\begin{align}
  \partial_t\rho(t)
  &= \partial_t\Bigl[ \mathcal{N}_t\bigl(\psi(t)\bigr) \Bigr]
    =  \mathcal{N}_{t}\bigl( \partial_t\psi\, (t) \bigr)
    + \bigl(\partial_t \mathcal{N}_{t}\bigr)\bigl(\psi(t)\bigr)\ .
    \label{z:7Uz5bUPxySIV}
\end{align}
Therefore, the derivative of the noisy state can be decomposed into a sum of two
terms, the first associated with the unitary dynamics $\psi(t)$, and the other
associated with the time dependence of the effective noisy channel
$\mathcal{N}_t$.  Plugging into~\eqref{z:RquV6F2IJPiu}, this
gives us
\begin{align}
  \FIqty{clock}{t}
  &= \Ftwo[\Big]{ \mathcal{N}( \psi ) }{%
  \mathcal{N}( \partial_t \psi )
  + \bigl(\partial_t \mathcal{N}\bigr)(\psi)
  }\ ,
\end{align}
where now $\FIqty{clock}{t}$,
$\mathcal{N}$, %
$\partial_t \mathcal{N}$, %
$\psi$ %
and $\partial_t \psi $ %
are all implicitly evaluated at $t_0$.

In the following, we consider settings where the local time dependence of the
state due to the time dependence of the effective noisy channel terms can be
neglected when computing $\FIqty{clock}{t}$.  (We will study in greater
depth below when exactly this situation arises.)
I.e., for now we assume that
\begin{align}
  \FIqty{clock}{t}
  \approx \Ftwo{ \mathcal{N}(\psi) }{
  \mathcal{N}( \partial_t\psi ) }
  =:
  \FIqty{clock,U}{t}
  \ .
  \label{z:svEq8bs.6Q93}
\end{align}
Expanding $\partial_t \psi$, we obtain
\begin{align}
  \FIqty{clock,U}{t}
  = \Ftwo{ \mathcal{N}(\psi) }{ \mathcal{N}(-i[H,\psi]) }\ .
\end{align}
This quantity is what we defined as $\FIqty{Bob}{t}$ in the context of 
our main
uncertainty relation.

The complementary channel $\widehat{\mathcal{N}}_{t_0}$ is directly determined
by the complementary channel of the overall evolution up to that time
$\widehat{\mathcal{E}}_{t_0}$, since the two channels differ only by a unitary
evolution ${e}^{-t_0\mathcal{L}_0}$ on their input:
\begin{align}
  \widehat{\mathcal{N}_t} &= \widehat{\mathcal{E}_t} \, {e}^{-t_0\mathcal{L}_0}\ .
\end{align}
This means that the Fisher information on Eve's end with respect to the
complementary direction can be expressed entirely in terms of the complementary
channel $\widehat{\mathcal{E}}_{t_0}$ to the entire evolution up to time $t_0$:
\begin{align}
  \FIloss{clock,U}{t} =
  \Ftwo{\widehat{\mathcal{E}}_{t_0}(\psi_{\mathrm{init}})}%
  {\widehat{\mathcal{E}}_{t_0}\bigl( \{ \bar{H}, \psi_{\mathrm{init}} \}\bigr)}\ ,
\end{align}
with $\bar{H} = H - \langle {H}\rangle _{\psi(t_0)}$, and
\cref{z:T-KNuYhGRM7I} states that
\begin{align}
  \FIqty{clock,U}{t}
  = 4\sigma_H^2 - 
  \FIloss{clock,U}{t}\ .
\end{align}

Now we turn to discussing when the
approximation~\eqref{z:svEq8bs.6Q93} is a
reasonable assumption, by characterizing the error induced on the Fisher
information.  First of all, the approximation is exact in the case of
phase-covariant dynamics, where $[\mathcal{L}_1, \mathcal{L}_0]=0$ (e.g.\@
 Refs.~\cite{R45,R46}).
In other settings, 
we can use a continuity bound of
the Fisher information in its second argument
(\cref{z:Y81L4FyklNd5} in
\cref{z:KH4B5FtzQ1kE}%
) to try to get a handle on the error terms involved in the
approximation~\eqref{z:svEq8bs.6Q93}.  Denote by
$\delta$ %
the error in the
approximation~\eqref{z:svEq8bs.6Q93},
\begin{align}
  \delta = 
  \FIqty{clock}{t}
  - \FIqty{clock,U}{t}
  \ ,
\end{align}
then we have
\begin{multline}
  \lvert {\delta}\rvert 
  \leq \Ftwo{\rho}{ (\partial_t\mathcal{N})(\psi) }
  \\
  + \mathopen{}\left[ \Ftwo{\rho}{ (\partial_t\mathcal{N})(\psi) } \,
            \FIqty{clock,U}{t}\right]\mathclose{}^{1/2}\ .
  \label{z:raWZngbty0t4}
\end{multline}
That is, the relative error in 
the
approximation~\eqref{z:svEq8bs.6Q93} is
demonstrably small if $\Ftwo{ \rho }{ (\partial_t\mathcal{N})(\psi) }$ is much
smaller than $\FIqty{clock,U}{t}$.  We can rewrite this term
using~\eqref{z:7Uz5bUPxySIV} as
\begin{align}
  (\partial_t \mathcal{N})(\psi)
  &= \partial_t \rho(t) - \mathcal{N}_t( -i[H, \psi(t)] )
    \nonumber\\
  &= \mathcal{L}_{\mathrm{tot}}[\rho(t)] - \mathcal{E}_t( -i[H, \psi_0] )\ .
    \label{z:oTTBiCd7U6dy}
\end{align}
The above expression is given in terms of the Lindbladian map and the overall
evolution map, and can aid in determining an analytical or numerical upper bound
to the quantity $\Ftwo{\rho}{(\partial_t\mathcal{N})(\psi)}$.
 In \cref{z:nnVBeRY6bPwI}, we study two single-qubit examples that
are subject to continuous dephasing along various axes in order to illustrate
the connections between the Lindbladian setting and the setting
in~\cref{z:FnHui0ahNLxW}.

\section{Error-correction conditions for zero sensitivity loss}
\label{z:JZtfKZv25too}

The uncertainty relation~\eqref{z:4bic5ys.wcEe} enables us to
provide a characterization of when the noise reduces a probe's sensitivity to
time.  In this section, we study the situation where the sensitivity loss
$\FIloss{Bob}{t}$ introduced in~\eqref{z:QZajZctWEncO} is equal to zero. This is a
situation where the probe is chosen cleverly enough such that the noise has no
effect on sensitivity. The main contribution of this section is a set of
necessary and sufficient conditions for $\FIloss{Bob}{t}=0$, which bear
resemblance to the Knill-Laflamme conditions for quantum error
correction~\cite{R19} and which are closely related to the
Hamiltonian-not-in-Lindblad-span condition of
Refs.~\cite{R20,R21}.

\subsection{Conditions for zero sensitivity leakage}

In the following, we suppose that our uncertainty relation holds with
equality, i.e., that the conditions given in
\cref{z:3JJW1LK1MXi9} hold.  Recall the
expression for the Fisher information loss on Bob's
end~\eqref{z:4bic5ys.wcEe}, and consider the
expression~\eqref{z:fC3N68oVdRLb} for the Fisher information.  If
$\FIloss{Bob}{t}=0$, then there exists an operator $L$ such that
$\operatorname{tr}(L^\dagger L)=0$ and
$\rho^{1/2}L+L^\dagger\rho^{1/2} = \widehat{\mathcal{N}}(\{\psi, \bar{H}\})$; the
former condition implies $L=0$ and thus the latter implies
$\widehat{\mathcal{N}}(\{\psi, \bar{H}\})=0$.  Therefore, we see that
$\FIloss{Bob}{t}=0$ if and only if
\begin{align}
  \widehat{\mathcal{N}}(\{\psi, \bar{H}\}) = 0\ ,
  \label{z:zFMncm5bkXDt}
\end{align}
i.e., $\{\psi, \bar{H}\}$ must lie in the kernel of the superoperator
$\widehat{\mathcal{N}}$.
It is instructive to rewrite this condition in terms of the ``virtual qubit''
  introduced in \cref{z:aLDqVRGU3OD.}.  With $Z_L$ defined in~\eqref{z:puhbpJK6ngeA},
then~\eqref{z:zFMncm5bkXDt} becomes
\begin{align}
  \widehat{\mathcal{N}}(Z_L) &= 0\ .
  \label{z:GOkK7nYKf.l7}
\end{align}
Alternatively, the above condition is equivalent to requiring that for all
operators $O$,
\begin{align}
  \operatorname{tr}\bigl[ \widehat{\mathcal{N}}^\dagger(O)\, Z_L \bigr] &= 0\ ,
  \label{z:s7BIFNKk1XcA}
\end{align}
meaning that error operations of the form
  $\widehat{\mathcal{N}}^\dagger(O)$ should not have any overlap with the
  ``logical'' $Z_L$ operator on the qubit subspace.

So the task of finding probe states that perfectly counter the noisy channel
$\mathcal{N}$ can be formulated as ensuring the logical $Z$ Pauli operator in
the logical qubit subspace spanned by $\lvert {+}\rangle =\lvert {\psi}\rangle $ and
$\lvert {-}\rangle \propto \lvert {\xi}\rangle =P_\psi^\perp H\lvert {\psi}\rangle $ is in the kernel of the
complementary channel to the noisy channel.

Note that simply looking for zero sensitivity loss is not sufficient to find the
best probe states; we still need to make sure that $\lvert {\psi}\rangle $ has as large
energy variance as possible to ensure good sensitivity.

An alternative representation of the zero sensitivity-loss condition can be
obtained if we consider an operator-sum representation of the noisy channel in
terms of Kraus operators $\{ E_k \}$ as in~\eqref{z:jg7XnoGzLtS0}.
The condition~\eqref{z:zFMncm5bkXDt} is then equivalent to
the condition
\begin{align}
  \langle {\psi }\rvert \, E_{k'}^\dagger E_k \, \lvert {\xi
  }\rangle + \langle {\xi }\rvert \, E_{k'}^\dagger E_k \, \lvert {\psi
  }\rangle = 0\ \quad\ \text{for all $k,k'$.}
  \label{z:G2tCYcCXVZFu}
\end{align}
These may be interpreted as Knill-Laflamme-like conditions for optimal
sensitivity.  Whereas for a traditional quantum error-correcting code, we require
any two code words $\lvert {\psi_i}\rangle ,\lvert {\psi_j}\rangle $ to satisfy
$\langle {\psi_i}\rvert  E_{k'}^\dagger E_k \lvert {\psi_j}\rangle  \propto \delta_{i,j}$, here we
require that the error operator $E_{k}^\dagger E_{k'}$ cannot map the state
$\lvert {\psi}\rangle $ onto the vector $\lvert {\xi}\rangle $, or at least not in a way that is not
suitably antisymmetric.  The weird antisymmetrization
in~\eqref{z:G2tCYcCXVZFu} can be expressed in a more
elegant form if we switch back to the picture of the logical qubit spanned by
$\lvert {\psi}\rangle $ and $\lvert {\xi}\rangle $.  Analogously
to~\eqref{z:GOkK7nYKf.l7}, we may rewrite the
condition~\eqref{z:G2tCYcCXVZFu} as
\begin{align}
  \operatorname{tr}\bigl[ Z_L\, \Pi_L\,E_{k'}E_k\,\Pi_L\bigr] = 0\ ,
  \label{z:mvEtMj.b-Bs6}
\end{align}
where $\Pi_L = \lvert {+}\rangle \mkern -1.8mu\relax \langle{+}\rvert _L + \lvert {-}\rangle \mkern -1.8mu\relax \langle{-}\rvert _L$ is the projector onto the virtual qubit
  subspace spanned by $\lvert {\psi}\rangle $ and $\lvert {\xi}\rangle $.  The full Knill-Laflamme
conditions applied to the subspace $\Pi_L$ would require
$\Pi_L E_{k'}E_k\Pi_L \propto \Pi_L$.  The
condition~\eqref{z:mvEtMj.b-Bs6} is simply a
weaker condition where only the corresponding projection onto the logical Pauli
operator $Z_L$ is considered and where the projection onto the other Pauli
operators is unconstrained.

The form~\eqref{z:mvEtMj.b-Bs6} also helps
clarify that for zero sensitivity loss, the terms
in~\eqref{z:G2tCYcCXVZFu} need not vanish individually.
Indeed, only the Hilbert-Schmidt projection of $\Pi_L\,E_{k'}E_k\,\Pi_L$ onto
$Z_L$ is required to vanish, and not in principle on $Y_L$ or $X_L$.  An example
below in \S\ref{z:kLDpse.aVOd5}, consisting of a single-qubit subject to transversal noise, will
illustrate this point.

The conditions~\eqref{z:mvEtMj.b-Bs6} are
reminiscent of quantum error correction for operator algebras, where we require
a code to preserve the outcomes of any operator in a given
algebra~\cite{R47,R48,R49}.  In fact, if the algebra
associated with any choice of optimal sensing operator of the
form~\eqref{z:w7NUzrfGUNE-} is preserved, then our
conditions~\eqref{z:mvEtMj.b-Bs6} are satisfied.
Indeed, suppose that $[T,\widehat{\mathcal{N}}^\dagger(W)] =0$ for any operator
$W$ on Eve and for a fixed choice of 
${M}$
in~\eqref{z:w7NUzrfGUNE-}, meaning that the Abelian algebra
generated by $T$ is correctable~\cite{R47,R48,R49}.  Then taking the expectation
value $\langle {\cdot}\rangle _\psi$ of this commutator we find
$0 = \bigl \langle { [T,\widehat{\mathcal{N}}^\dagger(W)] }\bigr \rangle  = \operatorname{tr}\bigl( [\psi,
T]\,\widehat{\mathcal{N}}^\dagger(W) \bigr) \propto \operatorname{tr}\bigl(
Z_L\,\widehat{\mathcal{N}}^\dagger(W) \bigr)$,
using~\eqref{z:Df3ZpFNLdboK}, which holds
for all $W$, and therefore our Knill-Laflamme-like
condition~\eqref{z:s7BIFNKk1XcA}
holds.  The converse implication is unclear, in part because the optimal sensing
operator is not unique and different choices can generate different algebras.
The conditions~\eqref{z:G2tCYcCXVZFu} are actually
tightly related to the Hamiltonian-not-in-Kraus-span condition of
Refs.~\cite{R29,R30,R50,R20,R21,R51,R52}. There, it was shown that there exists a clock state
vector $\lvert {\psi}\rangle $ that achieves Heisenberg scaling %
in the presence of noise using quantum error correction if and only if the
Hamiltonian signal term is not in the linear span of the Lindblad noise
operators.
Here we argue that the Hamiltonian-not-in-Kraus-span condition is
in fact equivalent to the existence of a state $\lvert {\psi}\rangle $ that satisfies
our zero sensitivity-loss
conditions~\eqref{z:G2tCYcCXVZFu}.
(In our setting, the clock state vector $\lvert {\psi}\rangle $ is a given fixed
state.)  As we have a discrete noisy channel, we consider the Kraus operators
$\{ E_k \}$ of the noisy channel instead of Lindblad operators.  If
$H=\sum\alpha_{k',k} E_{k'}^\dagger E_k$, and supposing the
conditions~\eqref{z:G2tCYcCXVZFu} are satisfied for some
$\lvert {\psi}\rangle $, then by taking a linear combination $\sum \alpha_{k',k}$ of the
conditions~\eqref{z:G2tCYcCXVZFu} we obtain
$0 = 2\langle {\psi }\rvert H \, P_\psi^\perp H \lvert {\psi }\rangle = 2\sigma_H^2$; therefore the
conditions~\eqref{z:G2tCYcCXVZFu} cannot be satisfied by
any $\psi$ that has nonzero energy variance.  Conversely, we know (see, e.g.,
Refs.~\cite{R20,R21,R51}) that if the Hamiltonian is not in the span of the
noisy channel's Kraus operators, then there is a code space $\Pi$, possibly
involving an ancilla system, with $\Pi E_{k'}^\dagger E_k \Pi = c_{k', k}\Pi$
such that $[\Pi,H]=0$ (i.e., $\Pi$ is spanned by a subset of energy
eigenvectors) and such that $\Pi$ contains a state vector $\lvert {\psi}\rangle $ with
nonzero energy variance; then for any $k,k'$ we have
$\langle {\psi}\mkern 1.5mu\relax \vert \mkern 1.5mu\relax {E_{k'}^\dagger E_k P_\psi^\perp H}\mkern 1.5mu\relax \vert \mkern 1.5mu\relax {\psi}\rangle  = \langle {\psi}\mkern 1.5mu\relax \vert \mkern 1.5mu\relax {\Pi
  E_{k'}^\dagger E_k P_\psi^\perp H \Pi}\mkern 1.5mu\relax \vert \mkern 1.5mu\relax {\psi}\rangle  = \langle {\psi}\mkern 1.5mu\relax \vert \mkern 1.5mu\relax {\Pi
  E_{k'}^\dagger E_k \Pi \, P_\psi^\perp H}\mkern 1.5mu\relax \vert \mkern 1.5mu\relax {\psi}\rangle  = c_{k', k} \langle {\psi}\mkern 1.5mu\relax \vert \mkern 1.5mu\relax {
  P_\psi^\perp H}\mkern 1.5mu\relax \vert \mkern 1.5mu\relax {\psi}\rangle  = 0$ using the fact that
$[P_\psi^\perp,\Pi] = [H,\Pi] = 0$, so the
conditions~\eqref{z:G2tCYcCXVZFu} are
satisfied. Therefore, if the Hamiltonian is not in the span of the Kraus
operators, then there exists a clock state vector $\lvert {\psi}\rangle $ that suffers no
sensitivity loss after being exposed to the noise locally at $t_0$.  This state
is constructed in the above mentioned references using a quantum error-correcting
code.

We can ask whether there is a relation between our conditions for no
  sensitivity loss and when the sensitivity can achieve Heisenberg scaling in
  the system size~\cite{R1}.  The Heisenberg scaling refers to situations where
$\FIqty{Bob}{t}$ scales like $n^2$, where $n$ is the number of systems that are
jointly prepared in the clock state vector $\lvert {\psi}\rangle _n$. %
(If no entanglement is present between the $n$ systems, the best scaling that
can be achieved is $\FIqty{Bob}{t} \propto n$.)  We assume that the clock state vector 
$\lvert {\psi}\rangle _n$ has a variance that scales quadratically in $n$, i.e.,
$[\sigma_H(\psi_n)]^2 \propto n^2$, as otherwise even the noiseless clock does
not achieve Heisenberg scaling.  Suppose the
conditions~\eqref{z:G2tCYcCXVZFu} are satisfied: Then
$\FIqty{Bob}{t} = 4[\sigma_H(\psi_n)]^2 \propto n^2$ as there is no sensitivity
loss, and the Fisher information displays Heisenberg scaling.
On the other hand, even if there is some loss of sensitivity due to the noise, the Heisenberg scaling might survive. Suppose, for example,
that we consider
two independent one-dimensional spin chains, each consisting of $n/2$ sites that are prepared
in a GHZ state and that evolve according to an on-site $Z$ Hamiltonian.  Both
spin chains are independent probes whose sensitivity each scales as $\sim n^2$, and
therefore the overall probe state exhibits Heisenberg scaling.  Now consider the
noisy channel that erases one of the spin chains.  Half the sensitivity is lost;
because there is sensitivity loss our Knill-Laflamme-like conditions cannot be
satisfied.  However, 
the single spin chain that is left for Bob still exhibits
Heisenberg scaling.
This shows that Heisenberg scaling is guaranteed if the environment has zero
sensitivity to energy (and the noiseless probe itself has Heisenberg scaling),
but that there are also situations where the environment induces sensitivity
loss without hindering the Heisenberg scaling of the probe.  In the language of
Refs.~\cite{R20,R21}, this corresponds to a Hamiltonian that might have
both a parallel component to the signal as well as a perpendicular component
that can be exploited to achieve Heisenberg scaling.  %
We see that zero sensitivity loss implies Heisenberg scaling for a family
of state vectors $\lvert {\psi}\rangle $ that are sufficiently entangled.  But there are
states that achieve the Heisenberg scaling even if some sensitivity is lost due
to the noise.

When the zero sensitivity-loss conditions~\eqref{z:zFMncm5bkXDt} hold, then by definition there must
exist a sensing observable for Bob to estimate the parameter $t$, whose
sensitivity matches that of Alice.  We can extract this optimal sensing
observable from our technical analysis using semidefinite programming (see
\cref{z:VM5-jRd.0oMe}).  Namely, in
\cref{z:r0U85JMqlcN8} we show that if the
zero sensitivity-loss conditions hold, then the operator
$i\rho\mathcal{N}(\lvert {\xi}\rangle \mkern -1.8mu\relax \langle{\psi}\rvert )$ is Hermitian.  Furthermore, the operator
\begin{align}
  R_B = -2i\mathcal{N}(\lvert {\xi}\rangle \mkern -1.8mu\relax \langle{\psi}\rvert )\rho^{-1}
  + 2i\rho^{-1}\mathcal{N}(\lvert {\psi}\rangle \mkern -1.8mu\relax \langle{\xi}\rvert ) P_\rho^\perp
\end{align}
is also Hermitian and satisfies $\frac12\{ R_B, \rho_B \} = \mathcal{N}(Y_L)$,
i.e., we obtain an explicit expression of the symmetric logarithmic derivative
on Bob's end.  The optimal sensing observable on Bob's system is then given
via~\eqref{z:fRkTtRxHhvuE} as $T_b = [\FIqty{Bob}{t}]^{-1} R_B + t_0$.
That is, when a clock state and associated Hamiltonian fulfill the metrological
code conditions for a given noise channel, we obtain an explicit expression for
the optimal measurement on Bob's end.

\subsection{Metrological codes and metrological distance}
\label{z:xoTJRt4h5gjZ}

We now introduce the concept of a \emph{metrological code.}  The idea is to
study the qubit space spanned by the vectors $\lvert {\psi}\rangle $ and
$\lvert {\xi }\rangle = \bar{H}\lvert {\psi }\rangle = \bigl(H - \langle {H}\rangle \bigr)\,\lvert {\psi}\rangle $.  If the state loses
no sensitivity upon the action of a noisy channel, one could expect these states
to span some kind of quantum error-correcting code space.  We can see that they
do not necessarily form a full error-correcting code as follows.  Consider the
single-qubit state $\lvert {\psi}\rangle =\lvert {+}\rangle =\bigl[\lvert {0}\rangle +\lvert {1}\rangle \bigr]/\sqrt2$ evolving under the
Hamiltonian $H=\omega\sigma_Z/2$, which we expose to an error channel whose
Kraus operators are proportional to $\mathds{1}$ and $X$.  We see that the
condition~\eqref{z:G2tCYcCXVZFu} is satisfied, given that
$\lvert {\xi }\rangle = \lvert {-}\rangle  = \bigl[\lvert {0}\rangle -\lvert {1}\rangle \bigr]/\sqrt2$ is orthogonal to $\lvert {+}\rangle $ and that
$\lvert {+}\rangle $ is an eigenstate of both $\mathds{1}$ and $X$.  Yet a quantum state stored
on this qubit would be corrupted by the noise, as the bit flips would be
uncorrectable.  We identify a concept that is weaker than a full
error-correcting code, which applies precisely to states that satisfy the
condition~\eqref{z:G2tCYcCXVZFu}.  Here, we assume that
the setting is specified as a pair of orthogonal states $\lvert {\psi}\rangle ,\lvert {\xi}\rangle $,
whereby $\lvert {\xi}\rangle $ is presumably obtained from a Hamiltonian $H$ as
$\lvert {\xi }\rangle = \bigl(H - \langle {H}\rangle \bigr)\lvert {\psi}\rangle $.  Specifying the full Hamiltonian is not
necessary as the relevant quantum Fisher information quantities can be fully
expressed only in terms of $\lvert {\psi}\rangle ,\lvert {\xi}\rangle $.

\begin{thmheading}{Metrological code}
Let $\mathscr{E}$ be any set of operators.  We say that the state vectors
$\lvert {\psi}\rangle $ and $\lvert {\xi}\rangle $ form
 a \emph{metrological code} against the errors
$\mathscr{E}$ if for all $E,E'\in\mathscr{E}$, we have
\begin{align}
  \operatorname{tr}\bigl[ E'^\dagger E \, \bigl( \lvert {\xi}\rangle \mkern -1.8mu\relax \langle{\psi }\rvert + \lvert {\psi}\rangle \mkern -1.8mu\relax \langle{\xi }\rvert \bigr) \bigr] = 0\ .
  \label{z:q5Fv8e19UZL5}
\end{align}
\end{thmheading}
As a consequence of the zero sensitivity-loss
condition~\eqref{z:G2tCYcCXVZFu}, a metrological code
prevents sensitivity loss against any noise channel whose Kraus operators are
linear combinations of elements in $\mathscr{E}$ (as long as the conditions of
\cref{z:3JJW1LK1MXi9} are satisfied).

A natural class of errors to consider is the set of all operators that act
on only a subset of $n$ components %
of a composite quantum system
$A = A_1\otimes A_2\otimes\cdots\otimes A_n$.
The \emph{weight} $\wgt(O)$ of an operator $O$ acting on the $n$ systems is
defined as the number of systems on which $O$ acts nontrivially.  Specifically,
if $O$ is expanded in the Pauli operator basis (or in any tensor basis using a
single-site operator basis that includes the identity matrix), all non-identity
elements in tensor products of basis operators that appear in the decomposition
of $O$ must be supported on a fixed set of $\wgt(O)$ sites.  Equivalently, the
expectation value of $O$ on any state can be computed exactly even after tracing
out all but a given set of $\wgt(O)$ sites.

We say that the pair of state vectors $\lvert {\psi}\rangle $ and %
$\lvert {\xi}\rangle $ form a \emph{metrological code of distance $d_m$} if it is a
metrological code against all operators of weight at most $d_m-1$;  i.e., for
all operators $O$ satisfying $\wgt(O)<d_m$, we have
\begin{align}
  \operatorname{tr}\bigl[ O\bigl(\lvert {\xi}\rangle \mkern -1.8mu\relax \langle{\psi }\rvert + \lvert {\psi}\rangle \mkern -1.8mu\relax \langle{\xi}\rvert \bigr) \bigr] = 0\ .
  \label{z:y2lXSpYS4Z-t}
\end{align}
Metrological codes of distance $d_m$ have the property that for any noise
channel $\mathcal{N}$ whose Kraus operators $\{E_k\}$ are such that
$\wgt(E_{k'}^\dagger E_k) < d_m$ for all $k',k$, the associated sensitivity loss
is zero (as long as \cref{z:3JJW1LK1MXi9} is
satisfied).

Metrological codes are, roughly speaking, in between classical and quantum
codes.  On one hand, they are not full-blown classical codes because
condition~\eqref{z:y2lXSpYS4Z-t} requires
protection against both \(X\)- and \(Z\)-type physical noise.  Because of this,
the pair \(\lvert {\psi}\rangle \propto \lvert {0}\rangle ^n + \lvert {1}\rangle ^n\) and
\(\lvert {\xi}\rangle \propto \lvert {0}\rangle ^n - \lvert {1}\rangle ^n\) of GHZ states \textit{is not} a
metrological code of nontrivial distance because single-qubit \(Z\) errors cause
a logical-\(X\) error, thereby
violating~\eqref{z:y2lXSpYS4Z-t}.  On the
other hand, metrological codes are not full-blown quantum codes because the
sensitivity conditions say nothing about other types of logical noise.  In other
words, noise can cause logical-\(Y\) and logical-\(Z\) errors for a metrological
code, but not for a bona-fide error-correcting code.

\subsection{Uncertainty relation equality and conditions
  for metrological codes}

In order to deduce from Eve's lack of sensitivity to energy that Bob
loses no sensitivity to time, it is necessary to ensure that the conditions of \cref{z:3JJW1LK1MXi9} hold.
When we presented
\cref{z:3JJW1LK1MXi9}, we already noted that
the situations where these conditions are not satisfied are edge cases that can
be perturbed away.  Here, we strengthen this statement for metrological codes:
If a metrological code for a given noise channel happens not to satisfy the
conditions of \cref{z:3JJW1LK1MXi9}, then the
noise channel can be infinitesimally perturbed to obtain a situation for which
these conditions hold, and furthermore, the zero sensitivity-loss conditions~\eqref{z:zFMncm5bkXDt} are
preserved.  

\begin{proposition}[{Perturbation bound for noise channels consistent
    with a metrological code}]\noproofref
  \label{z:F6rfqyfKNr41}
  Let $V_{A\to BE}$ be an isometry, let $\lvert {\psi}\rangle _A,\lvert {\xi}\rangle _A$ with
  $\langle {\psi}\mkern 1.5mu\relax \vert\mkern 1.5mu\relax {\xi}\rangle _A=0$ and let
  $\mathcal{N}(\cdot) = \operatorname{tr}_E\bigl(V\,(\cdot)\,V^\dagger\bigr)$,
  $\widehat{\mathcal{N}}(\cdot) = \operatorname{tr}_B\bigl(V\,(\cdot)\,V^\dagger\bigr)$.
  Suppose that 
  $\widehat{\mathcal{N}}(\lvert {\xi}\rangle \mkern -1.8mu\relax \langle{\psi}\rvert +\lvert {\psi}\rangle \mkern -1.8mu\relax \langle{\xi}\rvert ) = 0$.
  We furthermore assume that there exists a unitary operator $G_B$ acting on the
  system $B$ with the properties that
  ${0 = P_{\rho_B} G_B P_{\rho_B}} = P_{\zeta_B} G_B P_{\zeta_B} = P_{\rho_B} G_B
  P_{\zeta_B} = P_{\zeta_B} G_B P_{\rho_B}$, where
  $\zeta_B = \mathcal{N}(\lvert {\xi}\rangle \mkern -1.8mu\relax \langle{\xi}\rvert )$.
  Then, for any $\epsilon>0$, there exists an isometry $V'_{A\to BE}$ with
  $%
  \lVert {V' - V}\rVert  \leq \epsilon
  $
  such that
  \begin{subequations}
    \begin{align}
      \bigl(P_{\rho_B'}^\perp\otimes P_{\rho_E'}^\perp\bigr) V'\lvert {\xi }\rangle &= 0\ ;\quad\text{and}
      \\
      \widehat{\mathcal{N}}'\bigl(\lvert {\xi}\rangle \mkern -1.8mu\relax \langle{\psi}\rvert +\lvert {\psi}\rangle \mkern -1.8mu\relax \langle{\xi}\rvert \bigr) &= 0\ ,
    \end{align}
  \end{subequations}
  where
  $\rho_B' = \operatorname{tr}_E\bigl\{ V'\psi V'^\dagger \bigr\}$,
  $\rho_E' = \operatorname{tr}_B\bigl\{ V'\psi V'^\dagger \bigr\}$, and
  $\widehat{\mathcal{N}}'(\cdot) = \operatorname{tr}_B\bigl\{ V' \, (\cdot)\, V'^\dagger \bigr\}$.
\end{proposition}

The proof is presented as \cref{z:QmMajA-tmAuf} in \cref{z:DQLLjQ-XUmxW}.
Note that the existence of such an operator $G_B$ can always be ensured by
augmenting the $B$ system to include a qubit which $\mathcal{N}$ prepares in a
fixed pure state vector $\lvert {0}\rangle $ for all inputs. The operator $G_B$ can be chosen to
flip the qubit to $\lvert {1}\rangle $.  The additional qubit can represent an additional
``failure'' flag such as, for instance, an additional photon that is emitted at the
output of the noise process.

\subsection{Sensitivity loss of metrological codes under weak i.i.d.\@ noise}

\paragraph{Sensitivity loss under weak i.i.d.\@ noise.}
If we encode a logical quantum state using a quantum error-correcting code of a
distance $d$, and each site has a small probability $O(\epsilon)$ of incurring
an error, then we know that the errors that the code cannot correct occur with
probability at most $O(\epsilon^{d/2})$.  In turn, this implies that the
infidelity of recovery of the logical information also scales as
$O(\epsilon^{c d})$ with a constant $c$ depending on which convention for the
infidelity measure we choose.
It is then natural to conjecture that if $\lvert {\psi}\rangle $ and $\lvert {\xi}\rangle $ form a
metrological code of metrological distance $d_m$, then the loss in Fisher
information must similarly be upper bounded by $O(\epsilon^{c d_m})$, for some
universal constant $c$.

Interestingly, the order of the Fisher information loss in $\epsilon$ is not
directly related to the metrological distance of a metrological code.  In fact,
there are examples of metrological codes with large metrological distance, but
for which the Fisher information loss is always of order $\epsilon$.
This behavior appears to contradict the expectation that events of
vanishing probability should not significantly influence observable properties
of the system (such as its sensitivity to time).  An explanation 
stems from the fact that the operational interpretation of the Fisher information
via the Cram\'er-Rao bound involves an implicit averaging of the error over
infinitely many samples.  It might turn out in the present case that events with
vanishing probability can contribute nonnegligibly to the quantum Fisher
information.  To remedy this issue, it would be desirable to consider a measure
of sensitivity that accounts for finite data acquisition.  One such measure has been put forward in Ref.~\cite{R53}.
We refer to \cref{z:z.6FF4fEKkkd} for a more detailed
discussion.

\subsection{Clock states from time-covariant quantum error-correcting codes}
\label{z:AVGHDyn1nGzm}

Here, we explore a simple method to construct states that satisfy the zero
sensitivity-loss condition, using time-covariant quantum error-correcting codes. A code is said to be time-covariant code with respect to a given
  Hamiltonian $H$ if $H$ (and hence also time evolution generated by  $H$)
is a nontrivial logical operator.
In the following, Pauli operators $X,Y,Z$ carry an index indicating the qubit on which the operator acts.
This strategy is the one pursued by, e.g., Refs.~\cite{R50,R21,R51,R52}.

\paragraph{Four nearest-neighbor interacting qubits in a square pattern.}
As a warm-up example, we first consider how to leverage the $[[4,2,2]]$ code for
quantum metrology with a Hamiltonian on four qubits with $ZZ$ interactions
arranged in a square pattern.
\begin{figure}
  \centering
  \includegraphics{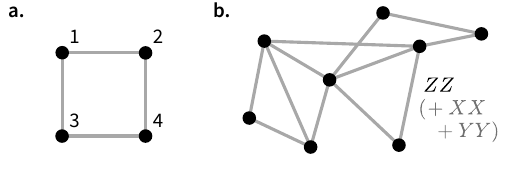}
  \caption{%
    Metrology with interacting qubits.  \textbf{a.}~Consider four qubits in a
    square with nearest-neighbor $ZZ$ Ising interactions (alternatively with
    additional $XX$ and $YY$ interactions). A clock state with maximal
    sensitivity and zero sensitivity loss under a single located erasure can be
    obtained via the time-covariant $[[4,2,2]]$ code.  \textbf{b.}~We can extend the
    construction based on the $[[4,2,2]]$ 
    code to any number of qubits interacting
    with respect to any graph of $ZZ$ interactions (alternatively with
    additional $XX$ and $YY$ interactions), while offering protection against a
    single located erasure.}
  \label{z:q6y-xgqvsyUu}
\end{figure}
Consider four qubits arranged in a square as depicted in
\cref{z:q6y-xgqvsyUu}.  The Hamiltonian is defined by placing a $ZZ$
interaction on each side of the square,
\begin{align}
  H = \omega \bigl(Z_1Z_2 + Z_1Z_3 + Z_2Z_4 + Z_3Z_4\bigr)\ .
  \label{z:8Smzekcjhom5}
\end{align}
The $[[4,2,2]]$ code~\cite{R54,R55} has stabilizers $X_1X_2X_3X_4$ and $Z_1Z_2Z_3Z_4$.
The logical operators $X_1,Z_1$ and $X_2,Z_2$ for the first and second
  logical qubits are $\overline{X}_1 = X_1X_3$, %
$\overline{X}_2 = X_1X_2$, %
$\overline{Z}_1 = Z_1 Z_2$, %
and $\overline{Z}_2 = Z_1Z_3$.

Observe that the Hamiltonian is a logical operator: The second and fourth
terms in~\eqref{z:8Smzekcjhom5} have the same action on the code
space as the first and third terms, respectively, because they differ only by
the stabilizer $Z_1Z_2Z_3Z_4$.  When acting on the code space, we have
\begin{align}
  H \, \Pi =  2\omega\bigl( \overline{Z}_1 + \overline{Z}_2 \bigr) \Pi \ .
\end{align}
Let us choose the clock state as a logical state with the largest possible
energy spread under this Hamiltonian,
\begin{align}
  \lvert {\psi }\rangle = \frac1{\sqrt 2}\,\bigl[ \lvert {\overline{00}}\rangle  + \lvert {\overline{11}}\rangle  \bigr]\ ,
  \label{z:9fL4fsRkzT18}
\end{align}
where $\lvert {\overline{00}}\rangle $ and $\lvert {\overline{11}}\rangle $ refer to logical state vectors
with the first and second logical qubits in the given logical computational
basis states.

Now we check our Knill-Laflamme-like condition.  Having distance 2, the code can
correct a single erasure at a known location.  Crucially, the operator
$\lvert {\xi }\rangle = (H-\langle {H}\rangle )\lvert {\psi }\rangle = 2\omega [\lvert {\overline{00}}\rangle  -
\lvert {\overline{11}}\rangle ]/\sqrt{2}$ is still in the code space because $H$ is a
logical operator.  Then from the Knill-Laflamme conditions we know that
$\langle {\xi}\mkern 1.5mu\relax \vert \mkern 1.5mu\relax {O_i}\mkern 1.5mu\relax \vert \mkern 1.5mu\relax {\psi}\rangle  = 0$ for any single-site operator $O_i$, because
$\lvert {\xi}\rangle $ and $\lvert {\psi}\rangle $ are orthogonal vectors in the code space, and hence our
conditions~\eqref{z:G2tCYcCXVZFu} are satisfied for
single located errors.

If we have some freedom in engineering our Hamiltonian, there are other choices
of logical operators to use in the Hamiltonian that would achieve a similar
sensitivity while also offering protection against single located erasures.  For
instance, we could ignore the second logical qubit (or treat it as a gauge
qubit) and the Hamiltonian could be chosen to act only on sites 1 and 2 as
$H = 2\omega\overline{Z}_1 = 2\omega Z_1 Z_2$.

We see that the probe state~\eqref{z:9fL4fsRkzT18} does not
lose any sensitivity to time if a system is erased at a known location.
The variance of $\lvert {\psi}\rangle $ is given by
\begin{align}
  \sigma_H^2 = \langle {\psi}\mkern 1.5mu\relax \vert \mkern 1.5mu\relax {H^2}\mkern 1.5mu\relax \vert \mkern 1.5mu\relax {\psi}\rangle  = 16\omega^2 \ .
\end{align}
Because we have not specified how this model scales with $n$, we cannot talk yet
about achieving Heisenberg scaling.

In this example, the sensitivity is in fact as good as you can get without any
noise at all, for any probe state:  The
state~\eqref{z:9fL4fsRkzT18}
is a superposition between two states
that have extremal eigenvalues with respect to $H$, which is optimal in the
absence of noise.
What is special about the state vector $\lvert {\psi}\rangle $ is that it retains its sensitivity
even after a single located error, which is not in general the case of other
probe states that would be optimal in the noiseless setting.  For instance, the
state vector
$\bigl[\lvert {0\,0\,0\,0}\rangle  + \lvert {0\,1\,1\,0}\rangle \bigr]/\sqrt2$ has the same
sensitivity as $\lvert {\psi}\rangle $ if no noise is applied, but it does not satisfy our
conditions~\eqref{z:G2tCYcCXVZFu} and so is subject to
sensitivity loss under single-site errors.

The above construction can also be applied if we include $XX$ and $YY$
interactions between the neighboring qubits on top of the existing $ZZ$
interactions (enabling us to model, e.g., Heisenberg interactions):
\begin{align}
  H = \omega \sum_{\neighbors{i,j}} \bigl[ s_x X_i X_j + s_y Y_i Y_j + Z_i Z_j \bigr]\ ,
\end{align}
with the additional coupling constants $s_x, s_y$ allowing for some anisotropy
in the interaction strengths.  In this case, the interaction terms are again all
logical operators, which can be seen from the fact that $X_1X_2X_3X_4$ and
$Y_1Y_2Y_3Y_4 = (X_1X_2X_3X_4)(Z_1Z_2Z_3Z_4)$ are stabilizers.  Our zero
sensitivity-loss conditions are therefore still satisfied.  To compute the
variance of $\lvert {\psi}\rangle $ under this new Hamiltonian, we need to determine the
action of the additional terms on $\lvert {\psi}\rangle $.  The $X$ terms give us again
$\overline{X}_1 + \overline{X}_2$ when acting on the code space following the
same argument as for the $Z$ terms.  Now $\lvert {\psi}\rangle $ is a maximally entangled
state {vector} between the two logical qubits, satisfying
$(\overline{A}_1\otimes\overline{\mathds{1}})\lvert {\psi }\rangle =
(\overline{\mathds{1}}_1\otimes\overline{A}^T_2)\lvert {\psi}\rangle $ where $(\cdot)^T$ denotes
the matrix transpose in the (logical) computational basis, and where $\overline{A}_i$
  is a logical operator acting on the $i$-th logical qubit.
For the $Y$ terms, we then find
\begin{align}
  Y_1Y_2\lvert {\psi
  }\rangle &= -\overline{X}_2\overline{Z}_1\lvert {\psi
  }\rangle = -\overline{Z}_1\overline{X}_1\lvert {\psi }\rangle = i\overline{Y}_1\lvert {\psi}\rangle \ ,
  \nonumber\\
  Y_1Y_3\lvert {\psi
  }\rangle &= -\overline{X}_1\overline{Z}_2\lvert {\psi
  }\rangle = -\overline{X}_1\overline{Z}_1\lvert {\psi }\rangle = -i\overline{Y}_1\lvert {\psi}\rangle \ ,
  \nonumber\\
  Y_2Y_4\lvert {\psi
  }\rangle &= Y_1Y_3\lvert {\psi }\rangle = -i\overline{Y}_1\lvert {\psi}\rangle \ ,
  \nonumber\\
  Y_3Y_4\lvert {\psi
  }\rangle &= Y_1Y_2\lvert {\psi }\rangle = 
  i\overline{Y}_1\lvert {\psi}\rangle \ .
\end{align}
Thus the sum of all four $Y$ interacting terms vanishes when applied onto
$\lvert {\psi}\rangle $.  The variance of $H$ is hence given by
\begin{align}
  H\lvert {\psi
  }\rangle &=
  2\omega\bigl[\,\overline{Z}_1 + \overline{Z}_2 + s_x\bigl(\overline{X}_1 + \overline{X}_2\bigr)\bigr] \lvert {\psi
    }\rangle \nonumber\\
  &= 4\omega\bigl[\,\overline{Z}_1 + s_x \overline{X}_1\bigr]\lvert {\psi}\rangle \ ,
\end{align}
using the fact that $\lvert {\psi}\rangle $ is a maximally entangled state {vector} between the two
logical qubits, and
\begin{align}
  \sigma_H^2 = \langle {\psi}\mkern 1.5mu\relax \vert \mkern 1.5mu\relax {H^2}\mkern 1.5mu\relax \vert \mkern 1.5mu\relax {\psi}\rangle  = 4\omega^2 \bigl(1 + s_x^2\bigr)\ .
\end{align}
The increase in the variance $\sigma_H^2$ when we switch on transversal
interactions can be simply associated with the increased norm of the
Hamiltonian.  Had we defined the clock
state~\eqref{z:9fL4fsRkzT18} with a $-1$ relative phase, then
the $YY$ terms would contribute instead of the $XX$ terms and we would get
$\sigma_H^2 = 4\omega^2\bigl(1+s_y^2\bigr)$.

\paragraph{Time-covariant codes lead to states with no sensitivity loss.}

The construction above based on the $[[4,2,2]]$ code exploited a key property of
that code with respect to the Hamiltonian, namely \emph{time
    covariance}~\cite{R56,R43,R57,R58,R59}.  A time-covariant code with respect to a given
  Hamiltonian $H$ is a code for which the time evolution generated by $H$
is a (nontrivial) logical operator.  If we can find a time-covariant code with respect to the system's Hamiltonian, then the clock state
 can be chosen to lie within the code space, so that
errors that affect it can be corrected, all while evolving nontrivially in time
and thus serving as a clock.

However, there are constraints on the possibility of constructing time-covariant
codes.  Consider a Hamiltonian that is a sum of terms of weight at most
  $k$, which we call a \emph{$k$-local Hamiltonian}.  Any code that can correct
up to $k$ arbitrary errors at known locations cannot be time covariant with
respect to a $k$-local Hamiltonian, because the Hamiltonian would be a sum of
correctable terms that cannot have a nontrivial action that preserves the
code space.
On the other hand, physical systems like spin chains and the anti-de Sitter/conformal field theory (AdS/CFT)
correspondence as a model for quantum gravity offer natural examples of
time-covariant codes that can approximately correct against low-weight
errors~\cite{R42}.  The above example using the
$[[4,2,2]]$ code is a concrete case of a time-covariant code with respect to a
2-local Hamiltonian and which can correct a single erasure at a known location.

We can see that whenever we can find a time-covariant code with respect to a
given Hamiltonian, then we can construct from the code a clock state with zero
sensitivity loss.  Consider a code space $\Pi$ and suppose that the Hamiltonian
$H$ is a nontrivial logical operator.  We can choose $\lvert {\psi}\rangle $ to be any
logical state vector that has nonzero variance with respect to $H$.  Let
$\lvert {\xi }\rangle = (H-\langle {H}\rangle )\lvert {\psi}\rangle $, noting that $\lvert {\xi}\rangle $ lies in the code space.
Denoting by $\{E_k\}$ the Kraus operators of $\mathcal{N}$, we see that
$\langle {\psi}\mkern 1.5mu\relax \vert \mkern 1.5mu\relax {E_{k'}^\dagger E_k}\mkern 1.5mu\relax \vert \mkern 1.5mu\relax {\xi}\rangle  \propto \langle {\psi}\mkern 1.5mu\relax \vert\mkern 1.5mu\relax {\xi}\rangle  = 0$ from
the Knill-Laflamme conditions of the code, and therefore the
conditions~\eqref{z:G2tCYcCXVZFu} are satisfied.
Therefore:
\begin{observation}[Clock state from a time-covariant code]\noproofref
  Let $\Pi$ be the projector onto a code space that 
  corrects errors of the error
  channel $\mathcal{N}$.  Assume that the code is time-covariant with respect to
  the Hamiltonian $H$.
  Then any logical state vector $\lvert {\psi}\rangle $ and associated
  $\lvert {\xi }\rangle = (H-\langle {H}\rangle )\lvert {\psi}\rangle $
  satisfy the conditions~\eqref{z:G2tCYcCXVZFu}.
  Furthermore, if $\Pi$ defines a $[[n,1,d]]$ quantum code, then $\lvert {\psi}\rangle $ and
  $P_\psi^\perp H\lvert {\psi}\rangle $ define a metrological code of metrological distance
  $d$.
\end{observation}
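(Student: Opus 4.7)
The plan is to exploit the time-covariance hypothesis to show that $\ket\xi$ lies in the code space, and then invoke the standard Knill--Laflamme conditions on the pair $\ket\psi,\ket\xi$.

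First, I would convert time-covariance into the operator identity $[H,\Pi]=0$. By assumption the one-parameter group $\ee^{-iHt}$ acts on the code as a logical unitary, so $\ee^{-iHt}\Pi\ee^{iHt}=\Pi$ for all $t$; differentiating at $t=0$ yields $[H,\Pi]=0$. Consequently, for any logical state vector $\ket\psi$ (so $\Pi\ket\psi=\ket\psi$) we have $\Pi H\ket\psi = H\Pi\ket\psi = H\ket\psi$, and since $\avg{H}\ket\psi$ is trivially in the code, $\ket\xi = (H-\avg{H})\ket\psi$ is entirely contained in the code space.

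Next, I would apply the Knill--Laflamme conditions for the Kraus operators $\{E_k\}$ of $\mathcal{N}$: since the code corrects $\mathcal{N}$, there exist scalars $c_{k',k}$ with $\Pi E_{k'}^\dagger E_k\Pi = c_{k',k}\Pi$. Sandwiching between $\bra\psi$ and $\ket\xi$ (both in the code) produces $c_{k',k}\braket\psi\xi$, which vanishes because $\braket\psi\xi = \bra\psi\bar{H}\ket\psi = 0$ by construction. The conjugate term vanishes analogously, establishing~\eqref{z:G2tCYcCXVZFu}.

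For the distance statement, I would invoke the error-detection form of the Knill--Laflamme conditions for a $[[n,1,d]]$ code: every operator $O$ of weight strictly less than $d$ satisfies $\Pi O\Pi = c_O\Pi$ for some scalar $c_O$, since such $O$ expands in a Pauli basis of weight $<d$, each term of which is a detectable error. Substituting into the trace in~\eqref{z:y2lXSpYS4Z-t} yields $c_O\braket\psi\xi + c_O\braket\xi\psi = 0$, so the metrological distance is at least $d$. The argument reduces to standard code-theoretic facts once time-covariance is recast as $[H,\Pi]=0$, and I do not anticipate a significant obstacle; the only care point is in the distance part, where one must invoke the single-operator detection form $\Pi O\Pi\propto\Pi$ valid for all $O$ of weight less than $d$, rather than the two-operator correction form, so that arbitrary operators of weight up to $d-1$ (and not merely weight $\lfloor(d-1)/2\rfloor$) are covered.
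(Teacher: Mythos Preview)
Your proposal is correct and matches the paper's argument essentially line for line: the paper also observes that time-covariance makes $\ket\xi$ a codeword, then applies the Knill--Laflamme relation $\Pi E_{k'}^\dagger E_k\Pi\propto\Pi$ together with $\braket\psi\xi=0$. Your treatment is slightly more detailed---you explicitly derive $[H,\Pi]=0$ by differentiating the covariance relation, and you spell out the distance part by invoking the single-operator detection form $\Pi O\Pi=c_O\Pi$ for $\wgt(O)<d$, which the paper leaves implicit---but the strategy is identical.
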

That is, \emph{any} logical state of the code satisfies our Knill-Laflamme-like
conditions for zero sensitivity loss.  The sensitivity is maximized by picking
the state with the largest energy variance.

If we are given an $\epsilon$-approximate quantum error-correcting code that is
time-covariant, that is, if the error-correction procedure is allowed to fail
with some probability $\epsilon>0$, then we can still use a state lying in
the code space to construct a clock state with little sensitivity loss.
Approximate quantum error-correcting codes can be characterized by the fact that
the channel that maps the code space to the environment,
$\widehat{\mathcal{N}}(\Pi(\cdot)\Pi)$, is close to a constant channel that
always outputs a fixed state~\cite{R60,R61}.
Specifically, $\widehat{\mathcal{N}}(\Pi (X) \Pi) \approx \operatorname{tr}(X)\,\tau_E$ for
all $X$, for some fixed state $\tau_E$.  If we pick a logical state vector $\lvert {\psi}\rangle $
with nonzero energy variance, then we have that
$\{\psi, \bar{H}\} = \Pi \{\psi, \bar{H}\} \Pi$ is a logical operator and
therefore
$\widehat{\mathcal{N}}\bigl(\{\psi, \bar{H}\}\bigr) = \widehat{\mathcal{N}}\bigl(\Pi
\{\psi, \bar{H}\} \Pi\bigr) \approx \operatorname{tr}\bigl(\{\psi,\bar{H}\}\bigr)\,\tau_E = 0$ since
$\langle {\bar{H}}\rangle =0$.  Therefore $\FIloss{Bob}{t}$
in~\eqref{z:4bic5ys.wcEe} satisfies $\FIloss{Bob}{t}\approx 0$,
and $\FIqty{Bob}{t} \approx \FIqty{Alice}{t} = 4\sigma_H^2$.
This choice of a clock state is hence expected to lose little sensitivity under
action of the noisy channel.
Deriving a universal quantitative bound on $\FIqty{Bob}{t}$ in this scenario in
terms of $\epsilon$ does not appear easy.  In such a scenario, a direct use of
our uncertainty relation~\eqref{z:cmnZZu5eQ8hv} [or
of a corresponding bound such as~\eqref{z:U3JchlrlLghQ}]
seems likely to be the most straightforward way to obtain useful quantitative
expressions for $\FIqty{Bob}{t}$ in the case where the clock state is prepared
using an approximate error-correcting code.

\subsection{Clock state for interacting many-body systems}%
\label{z:oz0py5ItCeCD}

\label{z:dqhjUefIA37Z}

Consider now an arbitrary interaction graph, where each vertex is associated
with a single qubit (\cref{z:q6y-xgqvsyUu}\textbf{b}) and consider the
Hamiltonian
\begin{align}
  H = \frac{J}{2}\sum_{\neighbors{i, j}} \bigl(Z_iZ_j + s_x X_iX_j + s_y Y_iY_j\bigr)\ ,
\end{align}
where the sum ranges over all graph vertices $i,j$ that are connected by an
edge, and where $s_x,s_y$ are arbitrary real coefficients.  (In fact, the
coefficients $s_x,s_y$ may also vary for each pair of sites $i,j$, though we
omit the dependence here for clarity.)  We recover the Ising model with
$s_x=s_y=0$ and the Heisenberg model with $s_x=s_y=1$.  We denote by $m$ the
number of edges in the graph, which is also the number of terms in the sum.

We define the clock state vector $\lvert {\psi}\rangle $ as follows.  Denote by $\lvert {0^n}\rangle $ and
$\lvert {1^n}\rangle $ the all-zero and the all-one state.  Choose any bit string
$\boldsymbol{x}$ and let $\lvert {\boldsymbol{x}}\rangle $ be the corresponding spin
configuration, where each bit corresponds to one of the qubit basis vectors on
the corresponding vertex. %
We assume that $\boldsymbol{x}$ violates a number $c$ out of the $m$ possible
$ZZ$-interaction terms, i.e., we denote by $c$ the number of pairs of bits in
$\boldsymbol{x}$ that differ and that are connected by an edge in the graph.
(It might not be possible to violate all the interaction terms simultaneously,
as the graph might be frustrated.)
An assumption we will need later is that the bit strings $0^n$, $1^n$, and
$\boldsymbol{x}$ all differ on at least four sites. 
Now define
\begin{align}
  \lvert {\psi }\rangle = \frac12\bigl[ \lvert {0^n}\rangle  + \lvert {1^n}\rangle  + \lvert {\boldsymbol{x}}\rangle 
  + \lvert {\widetilde{\boldsymbol{x}}}\rangle \bigr]\ ,
  \label{z:uYISiNnDt2yR}
\end{align}
where the bit string $\widetilde{\boldsymbol{x}}$ is obtained by flipping all the
bits of $\boldsymbol{x}$.
We then have
\begin{align}
  H\lvert {\psi }\rangle &= \frac{J}{4}\bigl[ m\lvert {0^n}\rangle  + m\lvert {1^n}\rangle  + (m-2c)\lvert {\boldsymbol{x}}\rangle  +
  (m-2c)\lvert {\widetilde{\boldsymbol{x}}}\rangle  \bigr]
              \nonumber\\
  &\hspace{1.5em}
    {}+ \frac{J}{2}\sum_{\neighbors{i, j}} \bigl(s_x X_iX_j + s_y Y_iY_j\bigr) \,\lvert {\psi}\rangle \ .
\end{align}
The $XX$ and $YY$ operators applied on $\lvert {\psi}\rangle $ generate terms associated with
new bit strings where, each time, two bits are flipped and a possible phase is
acquired.  These new configurations are all orthogonal to $\lvert {0^n}\rangle $,
$\lvert {1^n}\rangle $, $\lvert {\boldsymbol{x}}\rangle $, and $\lvert {\widetilde{\boldsymbol{x}}}\rangle $
thanks to our assumption that the chosen configurations differ on at least four
sites.  So we have
\begin{align}
  \langle {H}\rangle _\psi = \frac{J}{8}\bigl[2 \times m + 2 \times (m-2c)\bigr] = \frac{J}{2} (m-c)\ .
\end{align}
With $\bar{H} = H - \frac{J}{2}(m-c)\mathds{1}$ and $\lvert {\xi }\rangle = \bar{H} \lvert {\psi}\rangle $, we see that
\begin{align}
    \lvert {\xi }\rangle &= 
  \begin{alignedat}[t]{1}
    &\frac{J}{4}\bigl[ c\lvert {0^n}\rangle  + c\lvert {1^n}\rangle 
        - c\lvert {\boldsymbol{x}}\rangle  - c\lvert {\widetilde{\boldsymbol{x}}}\rangle  ]
    \\
    &+ \frac{J}{2}\sum_{\neighbors{i, j}} \bigl(s_x X_iX_j + s_y Y_iY_j\bigr) \,\lvert {\psi}\rangle \ .
  \end{alignedat}
\end{align}
To check the zero sensitivity-loss conditions~\eqref{z:zFMncm5bkXDt}, we compute the following
expression for any single-site operator $O_i$,
\begin{align}
  \langle {\psi}\mkern 1.5mu\relax \vert \mkern 1.5mu\relax {O_i}\mkern 1.5mu\relax \vert \mkern 1.5mu\relax {\xi}\rangle 
  &= \begin{alignedat}[t]{1}
    \frac{J}{8} \bigl[
    &c\,\langle {0}\mkern 1.5mu\relax \vert \mkern 1.5mu\relax {O_i}\mkern 1.5mu\relax \vert \mkern 1.5mu\relax {0}\rangle  + c\,\langle {1}\mkern 1.5mu\relax \vert \mkern 1.5mu\relax {O_i}\mkern 1.5mu\relax \vert \mkern 1.5mu\relax {1}\rangle 
    \\
    &{}-c\,\langle {x_i}\mkern 1.5mu\relax \vert \mkern 1.5mu\relax {O_i}\mkern 1.5mu\relax \vert \mkern 1.5mu\relax {x_i}\rangle  -c\,\langle {\widetilde{x}_i}\mkern 1.5mu\relax \vert \mkern 1.5mu\relax {O_i}\mkern 1.5mu\relax \vert \mkern 1.5mu\relax {\widetilde{x}_i}\rangle  \bigr]
    \\
  \end{alignedat}
  \nonumber\\
  &\hphantom{{}={}}
    {}+ \frac{J}{2}\sum_{\neighbors{i,j}} \langle {\psi}\rvert O_i\bigl(s_x X_iX_j + s_y Y_iY_j\bigr)\lvert {\psi
    }\rangle \nonumber\\
  &= \frac{J}{8}\bigl[ c\operatorname{tr}(O_i) - c\operatorname{tr}(O_i) \bigr] + 0 = 0\ ,
\end{align}
where $x_i$ (respectively, $\widetilde{x}_i$) denote the value of the $i$-th bit
in $\boldsymbol{x}$ (respectively, $\widetilde{\boldsymbol{x}}$).
The terms corresponding to $XX$ and $YY$ interactions vanish because all
configurations $0^n$, $1^n$, $\boldsymbol{x}$, and $\widetilde{\boldsymbol x}$
differ on at least four sites, and $XX$ and $YY$ flip two bits of the basis
vector state on which they are applied (with a possible phase).
Therefore the zero sensitivity-loss
conditions~\eqref{z:G2tCYcCXVZFu} are satisfied, and 
the clock state can suffer a single located erasure while retaining full
sensitivity.

The energy variance of the probe state is given by
\begin{align}
  \sigma_H^2 &= \langle {\psi}\mkern 1.5mu\relax \vert \mkern 1.5mu\relax {(H-\langle {H}\rangle )^2}\mkern 1.5mu\relax \vert \mkern 1.5mu\relax {\psi}\rangle  = \langle {\xi}\mkern 1.5mu\relax \vert\mkern 1.5mu\relax {\xi}\rangle 
  \nonumber\\
  &= \frac14 J^2 c^2 + (\text{contrib. from \(XX/YY\) terms})\ .
\end{align}
The contribution from $XX$ and $YY$ terms is zero if the configurations
$0^n,1^n,\boldsymbol{x},\widetilde{\boldsymbol x}$ all differ on at least five
sites (or in the case of Ising interactions with $s_x=s_y=0$).

The question of whether this achieves $n^2$ scaling depends on how we choose the
graph and the string $\boldsymbol{x}$ to grow with $n$.  In the case of a square
lattice with nearest-neighbor interactions, we have that the number of edges
scales like the number of vertices ($m\sim 2n$) and we can simultaneously
violate all $ZZ$ interaction terms by choosing an alternating configuration of
$0$'s and $1$'s.  In this case $\sigma_H^2 \sim J^2n^2$, achieving Heisenberg
scaling.  For other graphs, the question of whether $\sigma_H^2 \sim n^2$ is
determined by how the number of edges scales with the number of vertices in the
graph, and how many of those $ZZ$-interaction terms can be simultaneously
violated.  If there is a linear relationship between these quantities then
Heisenberg scaling is achieved, noting that only a single error at a known
location can be incurred without sensitivity loss.

\subsection{Metrological codes from stabilizer codes}

In this section, we present a general scheme to construct metrological codes
based on the stabilizer formalism~\cite{R62} and study some
simple examples.  We show that our construction is strictly more general than
constructing time-covariant error-correcting codes.  Our aim is to study and
illustrate our general construction; the Hamiltonians in our examples are not
intended as practical schemes to be engineered with near-term technology.

Consider the Pauli group $\mathcal{G}_n$ on $n$ qubits, defined as comprising
all tensor product operators on $n$ qubits of single-site Pauli operators and
the identity operator, with all possible prefactors $\pm 1$ and
$\pm i$~\cite{R62}.  Consider a subgroup
$\mathcal{S} \subset \mathcal{G}_n$ presented as
$\mathcal{S} = \groupgen{S_1, \ldots , S_\ell}$ with independent commuting
generators $S_1, \ldots , S_\ell$ such that $-\mathds{1}\not\in \mathcal{S}$.  The
\emph{normalizer} of $\mathcal{S}$ in $\mathcal{G}_n$ is
$N(\mathcal{S}) = \{ E \in \mathcal{G}_n\,:\ \forall\, g\in\mathcal{S},\
  EgE^\dagger \in \mathcal{S} \}$. %
A state is said to be \emph{stabilized} by $\mathcal{S}$ if it lies in the
simultaneous $+1$ eigenspace of all $S\in\mathcal{S}$; the elements of
$\mathcal{S}$ are called \emph{stabilizers}.  The code space associated with the
Pauli stabilizer group $\mathcal{S}$ is the subspace spanned by all states that
are stabilized by $\mathcal{S}$.
If $ \mathscr{E} \subset \mathcal{G}_n$ is a set of error operators such that
for all $E,E' \in\mathscr{E}$ either $E'^\dagger E \notin N(\mathcal{S})$ or
$E'^\dagger E$ lies in $\mathcal{S}$ up to a phase, then
a fundamental theorem of quantum error correction states that the subspace
of all common $+1$ eigenstates of the operators $\{ S_i \}$ forms a code
space that can correct any error in $\mathscr{E}$.
One defines the distance $d$ of the code as the minimal weight of an element in
$N(\mathcal{S}) \setminus \mathcal{S}$, i.e., of a nontrivial logical operation.
Then, the code can correct any $t$ errors at unknown locations as long as $2t+1 \leq d$.

As a simple example, consider the $n$-qubit GHZ state vector
$\lvert {\psi }\rangle = \bigl[ \lvert {\uparrow^n}\rangle  + \lvert {\downarrow^n}\rangle  \bigr]/\sqrt2$ and the
Hamiltonian $H = \sum_{j=1}^n Z_j$.  We have
$\lvert {\xi }\rangle \propto \bigl[ \lvert {\uparrow^n}\rangle  - \lvert {\downarrow^n}\rangle  \bigr]/\sqrt2$.
Suppose our error model consists of an arbitrary number of $X$ errors.
From~\eqref{z:G2tCYcCXVZFu}, since acting with $X$
operators on $\lvert {\psi}\rangle $ can never generate any overlap with $\lvert {\xi}\rangle $, we see
that $\lvert {\psi}\rangle ,\lvert {\xi}\rangle $ form a metrological code against any number of $X$
errors.
We now present an overview of our procedure using this example.
In our procedure, we first find a set $\{ S_i \}$ of independent commuting Pauli
operators that stabilize $\lvert {\psi}\rangle $.
We fix a set of error operators $\mathscr{E}$, which we choose in our example to
consist of all $n$-qubit Pauli operators that are a product of only $\mathds{1}$'s
and $X$'s.
Suppose that we are given an operator $H$ with the following property: For any
operators $E, E' \in\mathscr{E}$, there exists a $S\in\mathcal{S}$ such that
$\{H, S\} = 0$ and $[ E'^\dagger E, S ] = 0$.
The state vector $\lvert {\psi}\rangle $ is stabilized by the choice of commuting Pauli operators
$Z_1Z_2$, $Z_2Z_3$, \ldots, $Z_{n-1}Z_n$, $X^{\otimes n}$.  Multiplying all but
the last stabilizer by $X^{\otimes n}$, we obtain the following choice of
independent stabilizer generators
\begin{multline}
  {-Y_1Y_2 X_3 X_4 \ldots X_n} \,,\quad
  {-X_1 Y_2 Y_3 X_4 \ldots X_n} \,,\quad
  \ldots\,,\quad
  \\
  {-X_1 \ldots X_{n-2} Y_{n-1} Y_n} \,,\quad
  { X^{\otimes n} }\ .
\end{multline}
For any site $j$, the operator $Z_j$ anticommutes with all the above stabilizer
generators.  Our structural constraint turns out to apply in this case; it will
be detailed later.  Our construction then implies that the pair
$(\lvert {\psi}\rangle , H\lvert {\psi}\rangle )$ is a metrological code. Here, $\lvert {\xi }\rangle = H\lvert {\psi}\rangle $ is
in fact the state vector that is stabilized by all the operators $\{ -S_i \}$. %

\paragraph{Statement of the construction.}
Our construction is given by the following theorem.

\begin{theorem}[Metrological codes from stabilizer states]
  \label{z:Bnq1hErO9gGx}
  Let $\mathcal{S}\subset\mathcal{G}_n$ be an abelian subgroup of the Pauli
  group with $-\mathds{1}\notin\mathcal{S}$, and let $\lvert {\psi}\rangle $ be stabilized by
  $\mathcal{S}$.
  Let $H$ be any Hermitian operator such that $H\lvert {\psi}\rangle \neq0$ and let
  $\mathscr{E} \subset \mathcal{G}_n$ be any set of Pauli error operators.
  Assume that for all $E,E'\in\mathscr{E}$, there exists $S\in\mathcal{S}$ such
  that $\{H, S\} = 0$ and $[E'^\dagger E,S] = 0$.
  Then $\lvert {\psi}\rangle , H\lvert {\psi}\rangle $ form a metrological code against $\mathscr{E}$.
\end{theorem}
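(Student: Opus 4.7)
The plan is to reduce the metrological code condition to a single bilinear expression involving the anti-commutator of $H$ with the error product $F := E'^\dagger E$, and then exploit the stabilizer $S$ from the hypothesis to show this expression must equal its own negative. Specifically, the metrological-code condition in \cref{z:q5Fv8e19UZL5} applied to the pair $\ket\psi, \ket\xi = H\ket\psi$ reads
\begin{align}
\tr`\big[ E'^\dagger E \, `\big(\ketbra{H\psi}{\psi} + \ketbra{\psi}{H\psi}) ]
= \bra\psi F H \ket\psi + \bra\psi H F \ket\psi
= \bra\psi `{ H, F } \ket\psi\ ,
\end{align}
using Hermiticity of $H$. So it suffices to show $\bra\psi `{H, F}\ket\psi = 0$ for every $F = E'^\dagger E$ with $E, E' \in \mathscr{E}$.

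Next, I would invoke the hypothesis: for the pair $(E, E')$ in question, there exists $S \in \mathcal{S}$ with $`{H, S} = 0$ and $[F, S] = 0$. Since $\mathcal{S}$ is an abelian subgroup of the Pauli group with $-\Ident \notin \mathcal{S}$, any $S \in \mathcal{S}$ satisfies $S^2 = \Ident$ and $S = S^\dagger$. Because $\ket\psi$ is stabilized by $\mathcal{S}$, we have $S\ket\psi = \ket\psi$ and $\bra\psi S = \bra\psi$, which lets us insert $S$'s on both sides:
\begin{align}
\bra\psi `{H, F} \ket\psi
= \bra\psi S\, `{H, F}\, S \ket\psi
= \bra\psi `\big{ SHS,\, SFS } \ket\psi\ ,
\end{align}
using $S^2 = \Ident$ to distribute $S$ into each summand of the anti-commutator.

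The conclusion follows by conjugating $H$ and $F$ by $S$: the anti-commutation relation gives $SHS = -H$, while $[F,S] = 0$ (together with $S^2 = \Ident$) gives $SFS = F$. Therefore
\begin{align}
\bra\psi `{H, F} \ket\psi
= \bra\psi `{-H, F} \ket\psi
= -\bra\psi `{H, F} \ket\psi\ ,
\end{align}
which forces $\bra\psi `{H, F} \ket\psi = 0$, as required. Since this holds for every $E, E' \in \mathscr{E}$, the pair $(\ket\psi, H\ket\psi)$ satisfies \cref{z:q5Fv8e19UZL5} and therefore forms a metrological code against $\mathscr{E}$. The hypothesis $H\ket\psi \neq 0$ is only needed to ensure that the second state vector is nonzero and the code is non-degenerate. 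The entire argument is essentially a computation; the only conceptual point requiring care is the observation that the assumed stabilizer $S$ can be used to \emph{simultaneously} flip the sign of $H$ while leaving $F$ invariant under conjugation, which is exactly what the two conditions $`{H,S}=0$ and $[F,S]=0$ encode—so there is no real obstacle, but this simultaneous-symmetry interpretation is the crux.
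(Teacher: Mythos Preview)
Your proof is correct and follows essentially the same route as the paper's: insert the stabilizer $S$ using $S\ket\psi=\ket\psi$, then use $\{H,S\}=0$ and $[F,S]=0$ to flip the sign of the expression and conclude it vanishes. The only cosmetic difference is that you package both cross terms into the single anti-commutator $\bra\psi\{H,F\}\ket\psi$, whereas the paper shows the individual term $\langle\xi|F|\psi\rangle=0$; the paper also explicitly records $\avg{H}_\psi=0$ (hence $\braket\psi\xi=0$) as a preliminary, which you implicitly have as the special case $F=\Ident$ of your computation.
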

We recall the definition of a metrological code as satisfying the
condition~\eqref{z:q5Fv8e19UZL5}.
On the other hand, a defining property of a time-covariant code (recall definition in \cref{z:AVGHDyn1nGzm}) is that the Hamiltonian $H$ must be a logical
operator, and thus, for a stabilizer code, must commute with all the stabilisers of the code.
This is not in contradiction with \cref{z:Bnq1hErO9gGx}
since the stabiliser group $\mathcal{S}$ in the theorem is not necessarily that of the code for
which $H$ is a logical operator.  
Below, we present examples of metrological codes; some are error-correcting
time-covariant codes in disguise, yet others cannot be written as a
time-covariant error-correcting code with similar distance as the metrological
code.

 \begin{proof}[**z:Bnq1hErO9gGx]
   First, let $S_0\in\mathcal{S}$ with $\{H, S_0\} = 0$; such a stabilizer must
   exist from our assumption.
   We then have
   \begin{equation}
   \langle {\psi}\mkern 1.5mu\relax \vert \mkern 1.5mu\relax {H}\mkern 1.5mu\relax \vert \mkern 1.5mu\relax {\psi}\rangle  = \langle {\psi}\mkern 1.5mu\relax \vert \mkern 1.5mu\relax {H S_0}\mkern 1.5mu\relax \vert \mkern 1.5mu\relax {\psi}\rangle  = -\langle {\psi}\mkern 1.5mu\relax \vert \mkern 1.5mu\relax {S_0 H}\mkern 1.5mu\relax \vert \mkern 1.5mu\relax {\psi}\rangle 
   = -\langle {\psi}\mkern 1.5mu\relax \vert \mkern 1.5mu\relax {H}\mkern 1.5mu\relax \vert \mkern 1.5mu\relax {\psi}\rangle 
      \end{equation}
      and thus $\langle {H}\rangle _\psi = 0$.
   Let 
     \begin{equation}
   \lvert {\xi }\rangle = H\lvert {\psi}\rangle , 
    \end{equation}
    with $\langle {\xi}\mkern 1.5mu\relax \vert\mkern 1.5mu\relax {\psi}\rangle =0$ automatically satisfied.
   Let $E,E'\in\mathscr{E}$.  We need to show
   that~\eqref{z:q5Fv8e19UZL5} holds.  From our assumption
   there exists an $S\in\mathcal{S}$ with $\{H,S\}=0$ and $[E'^\dagger E, S] = 0$.
   We have
   \begin{align}
     \langle {\xi}\mkern 1.5mu\relax \vert \mkern 1.5mu\relax {E'^\dagger E}\mkern 1.5mu\relax \vert \mkern 1.5mu\relax {\psi
     }\rangle &= \langle {\psi}\mkern 1.5mu\relax \vert \mkern 1.5mu\relax {H E'^\dagger E S}\mkern 1.5mu\relax \vert \mkern 1.5mu\relax {\psi}\rangle 
       \nonumber\\
     &= -\langle {\psi}\mkern 1.5mu\relax \vert \mkern 1.5mu\relax {S H E'^\dagger E}\mkern 1.5mu\relax \vert \mkern 1.5mu\relax {\psi}\rangle 
       \nonumber\\
     &= -\langle {\xi}\mkern 1.5mu\relax \vert \mkern 1.5mu\relax {E'^\dagger E}\mkern 1.5mu\relax \vert \mkern 1.5mu\relax {\psi}\rangle \ ,
   \end{align}
   and thus $\langle {\xi}\mkern 1.5mu\relax \vert \mkern 1.5mu\relax {E'^\dagger E}\mkern 1.5mu\relax \vert \mkern 1.5mu\relax {\psi}\rangle =0$, confirming
   that~\eqref{z:q5Fv8e19UZL5} holds and that
   $\lvert {\psi}\rangle , \lvert {\xi}\rangle $ indeed constitute a metrological code against $\mathscr{E}$.
 \end{proof}

In the remainder of this section we review some examples of codes resulting from
the construction of \cref{z:Bnq1hErO9gGx}.
We begin by connecting our construction with error-correcting codes in which the
Hamiltonian is a nontrivial logical operator, i.e., time-covariant
error-correcting codes.  We present an example of a time-covariant code based on
the 7-qubit Steane code, and we then show that all time-covariant
error-correcting codes are special cases of
\cref{z:Bnq1hErO9gGx}.
We then show that there are metrological codes that cannot be formulated in
terms of a corresponding time-covariant error-correcting code;  i.e., there are
schemes that enable the communication of a clock state through a noisy channel
that achieve zero sensitivity loss without having to construct a full quantum
error-correcting code.

\paragraph{Example based on the $[[7,1,3]]$ Steane code.}
As an example of a time-covariant code, we consider
an example deriving from the 
Steane stabilizer code~\cite{R63}.
The latter is given by the following generators and
logical $\overline{X},\overline{Z}$ operators:
\begin{align}
  \begin{split}
   \hat{S}_1 &=  X_4 X_5 X_6 X_7,  \\
    \hat{S}_2 &=  X_2 X_3  X_6 X_7 , \\
    \hat{S}_3 &= X_1  X_3 X_5  X_7 , \\
    \hat{S}_4 &= Z_4 Z_5 Z_6 Z_7 , \\
    \hat{S}_5 &=  Z_2 Z_3 Z_6 Z_7 , \\
    \hat{S}_6 &= Z_1  Z_3  Z_5 Z_7 , \\
    \overline{X} &= X_1 X_2 X_3 X_4 X_5 X_6 X_7 , \\
    \overline{Z} &= 
    Z_1 Z_2 Z_3 Z_4 Z_5 Z_6 Z_7\ .
  \end{split}
\end{align}
Let $\lvert {\psi}\rangle = \lvert {\overline{+}}\rangle $ be the
state vector in the logical space associated with the $+1$
logical eigenspace of the $\overline{X}$ operator, and consider the Hamiltonian
\begin{align}
H = \overline{Z} \hat{S}_4 &= Z_1 Z_2 Z_3 .
\end{align}
The Hamiltonian is a logical operator, being stabilizer-equivalent to the
logical $\overline{Z}$ operator, and rotates the state vector $\lvert {\psi}\rangle $ to
$\lvert {\xi }\rangle = H\lvert {\psi }\rangle = \lvert {\overline{-}}\rangle $.  (The above choice of $H$ was
preferred to the choice $H=\overline{Z}$ because it has lower weight.)  The code
is therefore time covariant with respect to the action of $H$, and we for this reason
already know that it is a metrological code of metrological distance $3$. To
illustrate our construction, we explain how the same conclusion can be reached
by applying \cref{z:Bnq1hErO9gGx}.  We now
define a set of stabilizer generators that serve to define the state vector  $\lvert {\psi}\rangle $
of the resulting metrological code.
The new stabilizer generators $\{ S_i \}$ are obtained by multiplying each of the
$\hat{S}_i$ by $\overline{X}$, all while including $\overline{X}$ itself, as
\begin{align}
  \begin{split}
    S_1 &= \overline{X}\hat{S}_1 = X_1 X_2 X_3, \\
    S_2 &= \overline{X}\hat{S}_2 = X_1 X_4 X_5 , \\
    S_3 &= \overline{X}\hat{S}_3 =  X_2 X_4  X_6, \\
    S_4 &= \overline{X}\hat{S}_4 = X_1 X_2 X_3 Y_4 Y_5 Y_6 Y_7, \\
    S_5 &= \overline{X}\hat{S}_5 = X_1 Y_2 Y_3 X_4 X_5 Y_6 Y_7, \\
    S_6 &= \overline{X}\hat{S}_6 = Y_1 X_2 Y_3 X_4 Y_5 X_6 Y_7, \\
    S_7 &= \overline{X} = X_1 X_2 X_3 X_4 X_5 X_6 X_7 . \\
  \end{split}
\end{align}
One can verify that $H$ anticommutes with each $S_i$ listed above.
(For the application of \cref{z:Bnq1hErO9gGx},
it is convenient to use a choice of stabilizer generators that
anticommute with $H$.)
Let $\mathscr{E}$ be the set of all single-site operators.  For any
$E,E'\in\mathscr{E}$, we will show that there is a $S\in\mathcal{S}$ with
$\{H,S\}=0$ and $[E'^\dagger E,S] = 0$.  If one of the $S_i$ has support outside
of that of $E'^\dagger E$, it will do the job. Alternatively, any product of an
odd number of the $S_i$ will also do, for instance
\begin{align}
  \begin{split}
    S_1 S_2 S_3 &= X_3  
    X_5 X_6 ,\\
    S_1 S_2 S_7 &= X_1 X_6 X_7,\\
    S_2 S_3 S_7 &= X_3 X_4 
    X_7 .
  \end{split}
\end{align}
One can verify that for any two among the seven sites, at least one operator
among $S_1$, $S_2$, $S_3$, $S_1S_2S_3$, $S_1S_2S_7$, $S_2S_3S_7$ has its support
outside of those two sites.  Since these operators all anticommute with $H$, we
have that for all $E,E'\in\mathscr{E}$, there is a
$S\in\groupgen{ S_1, \ldots, S_7 }$ such that $[E'^\dagger E, S] = 0$ and
$\{ S, H \} = 0$.
From \cref{z:Bnq1hErO9gGx}, we see that
$\lvert {\psi}\rangle $ and $\lvert {\xi}\rangle =H\lvert {\psi}\rangle $ must form a metrological code against $\mathscr{E}$,
and is therefore a metrological code with metrological distance $3$.

\paragraph{Time-covariant codes.}
In this paragraph, we %
show that the assumptions of
\cref{z:Bnq1hErO9gGx} are in fact always
satisfied for time-covariant stabilizer codes like the 7-qubit Steane code example above.

Let $\hat{\mathcal{S}} = \groupgen{ \hat{S}_1, \ldots , \hat{S}_\ell }$ be a stabilizer code with a nontrivial logical operator $\overline{Z}$.  Let
$\overline{X}$ be a logical operator that anticommutes with $\overline{Z}$, and
define the stabilizer group
$\mathcal{S} = \groupgen{ \overline{X} \hat{S}_1, \overline{X} \hat{S}_2,
  \ldots, \overline{X} \hat{S}_\ell, \overline{X} }$.
Observe that
$\overline{Z}$ anticommutes with all the chosen generators for $\mathcal{S}$.
(Such an operator $\overline{X}$ must always exist, cf.\@ e.g.\@  \cite[Proposition~10.4]{R62}.)

We show the following: For any Pauli operator
$A \notin N(\hat{\mathcal{S}}) \setminus \hat{\mathcal{S}}$, there exists
$S\in\mathcal{S}$ such that $[S,A] = 0$ and $\{S, \overline{Z}\} = 0$.
This property implies that for a given set of errors $\mathscr{E}$ that are
correctable for $\hat{\mathcal{S}}$, i.e., if we have
$E'^\dagger E \notin N(\hat{\mathcal{S}}) \setminus \hat{\mathcal{S}}$ for all
$E,E'\in\mathscr{E}$, then the conditions of
\cref{z:Bnq1hErO9gGx} are satisfied, where the Hamiltonian is $H = \bar Z$.

Suppose first that $A\in \hat{\mathcal{S}} \subset \mathcal{S}$.  Then $A$
commutes with all stabilizers in $\mathcal{S}$, including $\overline{X}$ which
anticommutes with $\overline{Z}$.
Now suppose that $A\in\mathcal{G}_n$ and $A \notin N(\hat{\mathcal{S}})$, i.e.,
there is a $\hat{S}\in\hat{\mathcal{S}}$ with $\{A, \hat{S}\} = 0$.  If
$[A,\overline{X}] = 0$, then the choice $S=\overline{X} \in \mathcal{S}$
satisfies $[S,A]=0$ and $\{S,\overline{Z}\}=0$.  If, instead, we have
$\{A, \overline{X}\} = 0$, we can set $S= \overline{X} \hat{S}$ to find
$S\overline{Z} = \overline{X}\hat{S}\overline{Z} =
\overline{X}\overline{Z}\hat{S} = -\overline{Z}\overline{X}\hat{S} =
-\overline{Z} S$ and
$AS = A\overline{X}\hat{S} = -\overline{X}A\hat{S} = \overline{X}\hat{S} A =
SA$, and thus $\{S,\overline{Z}\}=0$ and $[S,A]=0$ as required.

\paragraph{Example: A $[[4,2,2]]$ code state with an auxiliary qubit.}
Whereas in the earlier 7-qubit Steane code example
the state vectors  $\lvert {\psi}\rangle $ and
$\lvert {\xi}\rangle $ both lie within a subspace of a distance $d=3$ code, the following example
illustrates a situation in which $\lvert {\psi}\rangle $ and $\lvert {\xi}\rangle $ cannot 
be contained in a code
space that can correct the same errors against which the states form a
metrological code.  I.e.,
\cref{z:Bnq1hErO9gGx} can be used to
construct metrological codes that cannot be formulated as time-covariant
quantum error-correcting codes with respect to the same errors.
Consider the 5-qubit Pauli operators
\begin{align}\label{z:hlgUBO4Uqbdv}
  \begin{split}
    S_1 &= X_1 X_2 , \\
    S_2 &= X_3 X_4 ,\\
    S_3 &= X_1  X_3 ,\\
    S_4 &=  X_5,\\
    S_5 &= Z_1 Z_2 Z_3 Z_4  .
  \end{split}
\end{align}
We can see that the stabilizer group $\mathcal{S} = \groupgen{ S_1, \ldots, S_5 }$ is generated 
by
\begin{itemize}
\item the stabilizers for the $[[4,2,2]]$ code on the first four
    qubits ($Z_1Z_2Z_3Z_4=S_5$ and $X_1 X_2 X_3 X_4 = S_1 S_2$);
\item the logical $X$ operators of the first and second
    logical qubits of that $[[4,2,2]]$  code ($X_1X_3 = S_3$ and $X_1 X_2 = S_1$); and
\item an independent stabilizer fixing the state of the 5th qubit ($X_5 = S_4$).
\end{itemize}
We choose the Hamiltonian
\begin{align}
  H &= Y_1  Z_4 Y_5 .
\end{align}
The Hamiltonian can be written as a product of three terms: A logical $Z$ operator
on both logical qubits of the $[[4,2,2]]$ code 
($Z_1 Z_4 $), a $Y$ operation on the 5th physical qubit, and a single $X_1$ on
the first physical qubit.  The Hamiltonian is not a logical operator of the
$[[4,2,2]]$ code.  Also, a suitable permutation of the qubits would make $H$
geometrically local, should this property be desired.

We can verify that $H$ anticommutes with each of the stabilizers
$S_1, \ldots , S_5$.  Furthermore, for any two sites $i,j$, one of the $S_k$ acts
as the identity on the sites $i,j$; therefore, 
for any two-site operator $A$,
there always exists a stabilizer $S$ with $[S,A] = 0$ and $\{ S, H \} = 0$.  We
can apply \cref{z:Bnq1hErO9gGx} to deduce
that $\lvert {\psi}\rangle ,\lvert {\xi}\rangle $ define a distance-$3$ metrological code.

The state vector $\lvert {\psi}\rangle $ can be expressed in terms of the logical +1 $X$ eigenvectors 
$\lvert {\overline{++}}\rangle $ of the $[[4,2,2]]$ code, and in terms of the $\lvert {\pm}\rangle _i$
physical state vectors, as
\begin{align}
  \lvert {\psi
  }\rangle &= \lvert {\overline{++}}\rangle _{1234}\otimes\lvert {+}\rangle _5
    \nonumber\\
  &= \frac1{\sqrt 2}\bigl( \lvert {{+}{+}{+}{+}{+}}\rangle  + \lvert {{-}{-}{-}{-}{+}}\rangle  \bigr)\ .
\end{align}
Recalling $Y\lvert {\pm }\rangle = \mp i\lvert {\mp}\rangle $ and $Z\lvert {\pm }\rangle = \lvert {\mp}\rangle $, we find
\begin{align}
  \lvert {\xi }\rangle = H\lvert {\psi
  }\rangle &= \frac1{\sqrt 2}\bigl( - \lvert {{-}{+}{+}{-}{-}}\rangle  + \lvert {{+}{-}{-}{+}{-}}\rangle  \bigr)
  \nonumber\\
  &= \frac1{\sqrt 2}\bigl( \lvert {{+}{-}{-}{+}}\rangle  - \lvert {{-}{+}{+}{-}}\rangle  \bigr)\otimes\lvert {-}\rangle _5\ .
\end{align}
We see that
\begin{align}
  \langle {\psi}\mkern 1.5mu\relax \vert \mkern 1.5mu\relax {X_5}\mkern 1.5mu\relax \vert \mkern 1.5mu\relax {\psi}\rangle  &= 1\ ,
  &
  \langle {\xi}\mkern 1.5mu\relax \vert \mkern 1.5mu\relax {X_5}\mkern 1.5mu\relax \vert \mkern 1.5mu\relax {\xi}\rangle  &= -1\ ,
\end{align}
so it is not possible for $\lvert {\psi}\rangle ,\lvert {\xi}\rangle $ to lie in the code space of a
distance $d=3$ quantum error-correcting code.

This example shows that metrological codes are a class of codes that is broader
than traditional error-correction codes as there are certain errors that the
former does not have to completely correct. Metrological codes might therefore
offer additional possibilities to find noise-resilient schemes for communicating
clock states across a noise channel.

\paragraph{Metrological toric code.}
A further example application of
\cref{z:Bnq1hErO9gGx}
is based on \emph{Kitaev's toric code}~\cite{R64,R65}.
\begin{figure}
  \centering
  \includegraphics{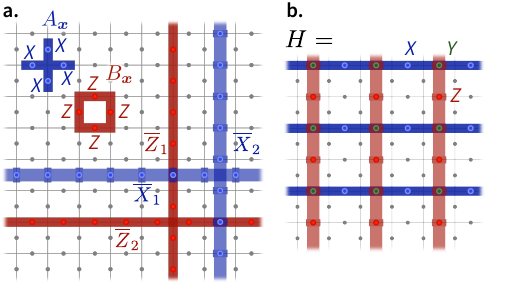}
  \caption{
    Metrological code based on the toric code. \textbf{a}.~Star
    ($A_{\boldsymbol x}$) and plaquette ($B_{\boldsymbol x}$) operators
    generate the stabilizer group of the toric code, where $\boldsymbol x$
    is a pair of integer coordinates on the two-dimensional lattice.  Two encoded logical qubits
    have logical Pauli operators $\overline{X}_1$, $\overline{Z}_1$,
    $\overline{X}_2$, $\overline{Z}_2$ corresponding to strings of physical
    Pauli $X$ or $Z$ operators that wrap around the torus.
    \textbf{b}.~In our metrological code example based on the toric code, we
    map a state from the toric code to a state of a related code which we call
    the \emph{anti-toric code}.
    The anti-toric code is the subspace stabilized by all the operators
    $\{ -A_{\boldsymbol x} \}$ and $\{ -B_{\boldsymbol x}\}$.
    Assuming the lattice side length is even, the
    depicted operator $H$ anticommutes with all star and plaquette operators,
    meaning that it maps a logical state of the toric code to a logical state of
    the anti-toric code.
    Picking $\lvert {\psi}\rangle $ in the code space of the toric code and choosing the
    depicted operator $H$ as the corresponding Hamiltonian yields an example of
    a distance-$\Omega(L^2)$ metrological code.  Interestingly, this
    metrological code cannot be phrased in terms of a time-covariant
    error-correcting code of similar distance, since $\lvert {\psi}\rangle $ and $H\lvert {\psi}\rangle $
    can be distinguished by measuring a single star or plaquette operator.  This
    example shows that there are additional possible schemes for sending a clock
    state through a noisy channel without any sensitivity loss, without
    resorting to a time-covariant
    quantum error-correcting code.} 
  \label{z:dwwtR0lWAZod}
\end{figure}
We consider a two-dimensional square lattice of dimension $L\times L$ that wraps around a torus.
We define star operators $A_{\boldsymbol x}$ and plaquette operators
$B_{\boldsymbol x}$ as depicted in \cref{z:dwwtR0lWAZod}\textbf{a},
where $\boldsymbol x$ ranges over all pairs of the lattice coordinates.

First, we can always use the toric code to form a time-covariant code, by
choosing a state vector $\lvert {\psi}\rangle $ in the code space (for instance
$\lvert {\overline{++}}\rangle $), and choosing the Hamiltonian to be a logical operator
(for instance, $\overline{Z}_1 + \overline{Z}_2$).
This code being by construction time covariant, it is necessarily a metrological code
with distance equal to the lattice side length $L$.

For the sake of the example, we construct here a metrological code from the
toric code that cannot be written as a time-covariant error-correcting code of
similar distance.
Our example is meant to
(i)~illustrate our construction as combining states that lie either in the
simultaneous $+1$ or simultaneous $-1$ eigenspaces of all the stabilizer
generators of some given stabilizer code, (ii)~furnish another example of a
metrological code that cannot be phrased in terms of a time-covariant
error-correcting code with similar distance, and (iii)~illustrate how a metrological
code can be a terrible quantum error-correcting code---any single plaquette or
star operator acts nontrivially on the subspace spanned by the state and its time
evolution state.
Our example is more of a conceptual illustration than a practical proposal, as
it requires a Hamiltonian that is highly nonlocal.

To better explain our example, we first define the \emph{anti-toric code} as the
code whose code space is stabilized by all negative star $-A_{\boldsymbol x}$
and negative plaquette operators $-B_{\boldsymbol x}$.  Being equivalent to the
standard toric code, the anti-toric code also has distance $L$ and we can see it
also has the logical operators
$\overline{X}_1, \overline{X}_2, \overline{Z}_1, \overline{Z}_2$ defined as for
the toric code.

As the state vector $\lvert {\psi}\rangle $ of our metrological code, we simply choose a logical state vector of the toric code; we can conventionally fix it to be 
$\lvert {\overline{00}}\rangle _{\mathrm{toric}}$ stabilized by
$\overline{Z}_1, \overline{Z}_2$ along with all the toric code stabilizers
$\{ A_{\boldsymbol x} \}$ and $\{ B_{\boldsymbol x} \}$.  For the Hamiltonian
we choose an operator $H$ that anticommutes with all star and all plaquette
operators. %
Such an operator is depicted in
\cref{z:dwwtR0lWAZod}\textbf{b}; we assume for convenience that $L$ is even.
The operator $H$ has the property
that it maps a code word of the toric code (i.e., a state vector $\lvert {\psi}\rangle $ satisfying
$A_{\boldsymbol x}\lvert {\psi }\rangle = \lvert {\psi }\rangle = B_{\boldsymbol x} \lvert {\psi}\rangle $) to a code
word of the \emph{anti-toric code} (we have
$A_{\boldsymbol x} H \lvert {\psi }\rangle = -H A_{\boldsymbol x} \lvert {\psi }\rangle = -H \lvert {\psi}\rangle $ and
similarly for $B_{\boldsymbol x}$).
We can verify that the assumptions of
\cref{z:Bnq1hErO9gGx} are satisfied.  The operator
$H$ anticommutes with our choice of stabilizer generators for $\lvert {\psi}\rangle $.  %
Also, for any operator $O$ of weight $< L^2/4$, there
must be a star or plaquette operator that has disjoint support with, and
therefore commutes with, $O$. (Indeed, there are $(L/2)^2$ disjoint plaquette
operators that cover all qubits; an operator that has overlapping support with
all plaquette operators must therefore have support on one qubit in each
plaquette.  The bound can presumably be improved by accounting for the star
operators as well.)  As a consequence of
\cref{z:Bnq1hErO9gGx}, the state vectors
$(\lvert {\psi}\rangle , \lvert {\xi }\rangle = H\lvert {\psi}\rangle )$ form a metrological code of distance $L^2/4$.

Is the space spanned by $(\lvert {\psi}\rangle ,\lvert {\xi}\rangle )$ secretly a code space of a
similar-distance code in which $H$ acts as a logical operator?  We can rule out
this possibility because the state vectors $\lvert {\psi}\rangle $ and $\lvert {\xi}\rangle $ can easily be
distinguished by measuring any single star or plaquette operator, 
recalling that
$A_{\boldsymbol x}\lvert {\psi }\rangle = B_{\boldsymbol x} \lvert {\psi }\rangle = \lvert {\psi}\rangle $ but that
$A_{\boldsymbol x}\lvert {\xi }\rangle = B_{\boldsymbol x} \lvert {\xi }\rangle = -\lvert {\xi}\rangle $.  The
environment only has to measure a weight-4 operator to determine whether
$\lvert {\psi}\rangle $ or $\lvert {\xi}\rangle $ was encoded.

\paragraph{Simultaneous $+1$ eigenspace and simultaneous $-1$ eigenspace of
  stabilizers.}
The intuition behind the construction in \cref{z:Bnq1hErO9gGx}
is that if we can choose
$\lvert {\psi}\rangle $ to be stabilized by $\mathcal{S} = \groupgen{S_1, \ldots, S_\ell}$,
then we might want to pick $\lvert {\xi}\rangle $ to be stabilized by the closely related
stabilizer group $\mathcal{S}' = \groupgen{ - S_1, \ldots, -S_\ell }$.
This idea was already illustrated by the example above based on the toric code,
where the Hamiltonian maps a code word of the toric code to a code word of the
anti-toric code.
We now show in general that such a construction is a special case of
\cref{z:Bnq1hErO9gGx}.

Let $\mathcal{S} = \groupgen{ S_1, \ldots, S_\ell }$
be a
subgroup of the Pauli group with $-\mathds{1}\notin\mathcal{S}$, where
$S_1, \ldots, S_\ell$ are a choice of independent commuting stabilizer
generators.  Let
$\mathcal{S}' = \groupgen{ -S_1, \ldots, -S_\ell }$. %
Let $\mathscr{E}$ denote any set of Pauli operators with the following property:
for any $E,E' \in \mathscr{E}$, there exists $S\in\mathcal{S}$ such that
$-S\in\mathcal{S}'$ and such that $[E'^\dagger E, S] = 0$.

The two stabilizer groups $\mathcal{S},\mathcal{S}'$ share many stabilizers,
including $S_1S_2$, $S_1S_3$, \ldots, $S_1S_\ell$.
We can pick a Pauli operator $H$ such that $H$ anticommutes with $S_1$ and such
that $H$ commutes with each of the operators $S_1S_2$, $S_1S_3$, \ldots,
$S_1S_\ell$ (see, e.g., Ref.~\cite[Proposition~10.4]{R62}).
Observe that for all $i=2,\ldots,\ell$, we have
\begin{equation}
 H S_i = H S_1^2 S_i = -S_1 H S_1 S_i = -S_1^2 S_i H = -S_i H
 \end{equation}
 and thus we have
that $\{H, S_i\}=0$ for all $i=1,\ldots,\ell$.
Suppose $\lvert {\psi}\rangle $ is stabilized by $\mathcal{S}$.  Then $H\lvert {\psi}\rangle $ is
stabilized by $\mathcal{S}'$, since
\begin{equation}
S_i H\lvert {\psi }\rangle = -H S_i\lvert {\psi }\rangle = - H\lvert {\psi}\rangle .
 \end{equation}
Furthermore, supposing $E,E'\in\mathscr{E}$, by assumption we have
$S\in\mathcal{S}$ such that $-S\in\mathcal{S}'$ and such that
$[E'^\dagger E, S] = 0$.  We can write $S = S_{i_1} S_{i_2} \cdots S_{i_m}$ in
terms of our choice of independent generators $S_i$ above.  Writing
$-S = (-1)^{m+1} (-S_{i_1}) (-S_{i_2}) \cdots (-S_{i_m})$, we see that $m$ must
be odd, as otherwise, we would have
\begin{equation}
-S = (-1) (-S_{i_1}) (-S_{i_2}) \cdots (-S_{i_m}) \notin \mathcal{S'}.
 \end{equation}
This observation implies that $\{H, S\} = 0$, because we can anticommute $H$
through the product of an odd number of $S_i$'s.  Therefore there exists
$S\in\mathcal{S}$ such that $[E'^\dagger E, S]=0$ and $\{H, S\}=0$.
At this point, all the assumptions of
\cref{z:Bnq1hErO9gGx} are satisfied,
implying that $\lvert {\psi}\rangle , H\lvert {\psi}\rangle $ form a metrological code that can protect
against the error set $\mathscr{E}$.

\subsection{Further examples of metrological codes}
\label{z:3LRUWXQ20v25}

We now present two additional examples of state vectors $\lvert {\psi}\rangle ,\lvert {\xi}\rangle $ that satisfy the zero
sensitivity loss
conditions~\eqref{z:mvEtMj.b-Bs6}.
These metrological codes serve to illustrate the sense in which the
conditions~\eqref{z:mvEtMj.b-Bs6} are weaker
than the conditions for quantum error correction.

\subsubsection{Single qubit subject to complete X/Y dephasing}
\label{z:kLDpse.aVOd5}

Consider the qubit example studied in
\cref{z:TaWot2w4x4dl}, where the clock
state vector $\lvert {+}\rangle $ evolves according to $H=\omega{Z}/2$ and is exposed to
complete dephasing along the $X$ axis around a given time $t_0$.
From \cref{z:vZfDiFsuC0Ah} we
immediately see that the zero sensitivity-loss
condition~\eqref{z:zFMncm5bkXDt} is satisfied for all
$t_0$.  In this setting, the clock state loses no sensitivity after complete
dephasing in the $X$ axis for any $t_0$, with the exception of possible discrete
points where the rank of $\rho_B(t)$ changes (see
\cref{z:TaWot2w4x4dl}).

Alternatively one could also check the
form~\eqref{z:G2tCYcCXVZFu} of the zero sensitivity-loss
conditions.  For any $t_0$, we have from~\eqref{z:UlWOzs-evrpA}
that
\begin{align}
  \lvert {\xi(t_0)}\rangle 
  &= H\lvert {\psi(t_0)}\rangle 
    \nonumber\\
  &= \frac\omega2\Bigl[ \cos\Bigl(\frac{\omega t_0}{2}\Bigr)\lvert {-}\rangle 
      -i \sin\Bigl(\frac{\omega t_0}{2}\Bigr)\lvert {+}\rangle  \Bigr] \,.
\end{align}
At this point, we can compute
\begin{align}
  \hspace*{1em}&\hspace*{-1em}
  \langle {\psi}\mkern 1.5mu\relax \vert \mkern 1.5mu\relax { \lvert {+}\rangle \mkern -1.8mu\relax \langle{+}\rvert  }\mkern 1.5mu\relax \vert \mkern 1.5mu\relax {\xi }\rangle + \langle {\xi}\mkern 1.5mu\relax \vert \mkern 1.5mu\relax { \lvert {+}\rangle \mkern -1.8mu\relax \langle{+}\rvert  }\mkern 1.5mu\relax \vert \mkern 1.5mu\relax {\psi
  }\rangle \nonumber\\
  &= \frac\omega2 \Bigl[
    -i\cos\Bigl(\frac{\omega t_0}{2}\Bigr)\sin\Bigl(\frac{\omega t_0}{2}\Bigr)
    +i\sin\Bigl(\frac{\omega t_0}{2}\Bigr)\cos\Bigl(\frac{\omega t_0}{2}\Bigr)
    \Bigr]
    \nonumber\\
  &=0\ ,
\end{align}
and similarly for $\langle {\psi}\mkern 1.5mu\relax \vert \mkern 1.5mu\relax { \lvert {-}\rangle \mkern -1.8mu\relax \langle{-}\rvert  }\mkern 1.5mu\relax \vert \mkern 1.5mu\relax {\xi }\rangle + \langle {\xi}\mkern 1.5mu\relax \vert \mkern 1.5mu\relax { \lvert {-}\rangle \mkern -1.8mu\relax \langle{-}\rvert  }\mkern 1.5mu\relax \vert \mkern 1.5mu\relax {\psi}\rangle =0$,
showing that~\eqref{z:G2tCYcCXVZFu} are satisfied for all
$t_0$.

An interesting aspect of this example is that there exists no recovery operation
that can restore the noiseless clock state vector $\lvert {\psi(t_0+dt)}\rangle $
accurately to first order in $dt$.  Let us consider for simplicity the point
$t_0=\pi/(2\omega)$.
Using~\eqref{z:UlWOzs-evrpA},~\eqref{z:GM3ZeRQLM9mt},
and \eqref{z:qdCKlozHjX8s}, we have at that point
\begin{align}
  \psi(t_0) &= \lvert {+i}\rangle \mkern -1.8mu\relax \langle{+i}\rvert \ ,
  &
    \rho_B(t_0) &= \frac{\mathds{1}}2\ ,
                  \nonumber\\
    \partial_t \psi\,(t_0) &= -\frac\omega2 
    {X}\ ,
  &
    \mathcal{D}_X(\partial_t \psi\,(t_0)) &= -\frac\omega2 
    {X}\ ,
\end{align}
where $\lvert {+i}\rangle  := \bigl[\lvert {\uparrow }\rangle + i\lvert {\downarrow}\rangle \bigr]/\sqrt2$.
We seek a completely positive, trace-preserving map $\mathsf{Rec}$ such that
$\mathsf{Rec}(\rho_B(t_0+dt)) = \psi(t_0+dt)+O(dt^2)$, which means that
\begin{align}
  \mathsf{Rec}\Bigl(\frac{\mathds{1}}{2}\Bigr) &= \lvert {+i}\rangle \mkern -1.8mu\relax \langle{+i}\rvert \ ,
  &
  \mathsf{Rec}(
  {X}) &= 
  {X}\ .
\end{align}
There is no completely positive map that satisfies these constraints.  If there
was such a $\mathsf{Rec}$ map, then we would have
$\mathsf{Rec}(\lvert {+}\rangle \mkern -1.8mu\relax \langle{+}\rvert ) = \mathsf{Rec}(\mathds{1}/2) + \mathsf{Rec}(
{X}/2) =
\lvert {+i}\rangle \mkern -1.8mu\relax \langle{+i}\rvert  + 
{X}/2 = \mathds{1}/2 + 
{Y} /2 + 
{X}/2$.  One can easily
check that the final expression has a negative eigenvalue, contradicting the
requirement that $\mathsf{Rec}$ be completely positive.
We conclude that in general, a metrological code does not necessarily come with
a recovery operation that enables an agent to recover the noiseless clock state,
even if the agent can sense the parameter to the same precision as before the
application of the noise.

\subsubsection{A superposition of a simple state and a generic pure state}
\label{z:VsSVmS3N9rs6}

Consider a one-dimensional chain of $n$ qubits.  Consider a generic pure state vector $\lvert {\chi}\rangle $,
chosen for instance randomly from the Haar measure on the $n$-qubit system.
For a given $d_m>0$, let us perturb the state vector $\lvert {\chi}\rangle $ to $\lvert {\tilde\chi}\rangle $
by projecting it onto the subspace of all computational basis states that do not
contain fewer than a number $d_m$ of 1's,
\begin{align}
  \lvert {\tilde\chi}\rangle  &= \tilde\Pi \lvert {\chi}\rangle \ ,
  &
    \tilde\Pi = \prod_{\lvert {\boldsymbol x}\rvert  \geq d_m} \lvert {\boldsymbol x}\rangle \mkern -1.8mu\relax \langle{\boldsymbol x}\rvert \ .
\end{align}
If $\lvert {\chi}\rangle $ is generic in some suitable sense (e.g., chosen Haar-randomly),
then $\lvert {\tilde\chi}\rangle  \approx \lvert {\chi}\rangle $ and $\lVert {\lvert {\tilde\chi}\rangle }\rVert \approx 1$.
The present example metrological code is constructed by picking
$\lvert {\psi }\rangle = \lvert {0^n}\rangle $, which is the computational basis all-zero state, and
$\lvert {\xi }\rangle = \lVert {\lvert {\tilde\chi}\rangle }\rVert ^{-1}\,\lvert {\tilde\chi}\rangle  \approx \lvert {\chi}\rangle $.

We proceed to check that the zero sensitivity-loss conditions~\eqref{z:zFMncm5bkXDt} are satisfied as
long as operators of the form $E_{k'}^\dagger E_k$ have weight at most $d_m-1$.
If $\mu\subset\{1,\ldots,n\}$ denotes a subset of at most $\lvert {\mu }\rvert = d_m-1$
systems, then the reduced operator of $\lvert {\tilde\chi}\rangle \mkern -1.8mu\relax \langle{0^n}\rvert $ on the sites
labeled by $\mu$ can be written as
\begin{align}
  \operatorname{tr}_{\setminus\mu}\bigl[\lvert {\tilde\chi}\rangle \mkern -1.8mu\relax \langle{0^n}\rvert \bigr]
  &= (\mathds{1}_{d_m}\otimes\langle {0^{n-d_m}}\rvert )\,\lvert {\tilde\chi}\rangle \langle {0^{d_m}}\rvert 
    = 0\ ,
\end{align}
because $\lvert {\tilde\chi}\rangle $ has no overlap with bit strings that have $(n-d_m)$
or more zeros.  Hence
\begin{align}
  \operatorname{tr}_{\setminus\mu}\bigl( \lvert {\psi}\rangle \mkern -1.8mu\relax \langle{\xi }\rvert + \lvert {\xi}\rangle \mkern -1.8mu\relax \langle{\psi }\rvert \bigr) = 0\ .
\end{align}
For any operator $O$ of weight $\wgt(O) < d_m$, the
condition~\eqref{z:y2lXSpYS4Z-t} is thus
satisfied and $\lvert {\psi}\rangle ,\lvert {\xi}\rangle $ form a metrological code of distance $d_m$.

An interesting observation is that this code does not form a quantum
error-correcting code in the usual sense.  The reason is that the environment,
by receiving a few sites, can tell the difference between whether the state vector 
$\lvert {\psi}\rangle $ or the state vector $\lvert {\xi}\rangle $ was prepared.  More precisely, the
environment can test whether the received qubits are all in the state vector
$\lvert {0}\rangle $.  If this is the case, it is much more likely that the original state vector
was $\lvert {\psi}\rangle $ and not $\lvert {\xi}\rangle $, as long as $\operatorname{tr}_{\setminus\nu}(\tilde\chi)$
is sufficiently distinct from $\lvert {0}\rangle \mkern -1.8mu\relax \langle{0}\rvert _\nu$ (which is the case for a Haar-random
state).

For some choices of $\lvert {\chi}\rangle $, the metrological code can be interpreted as a
quantum error-correcting code that protects only against certain types of
errors.  For instance, we can choose $\lvert {\chi }\rangle = \lvert {1}\rangle ^{\otimes n}$ in the
above and our conclusions still hold; this choice corresponds to a classical
repetition code that can correct bit flips but which is vulnerable to phase
flips.

Yet, there are choices of $\lvert {\chi}\rangle $ for which this interpretation appears more
problematic.  Consider for instance the choice $\lvert {\chi }\rangle = \lvert {+}\rangle ^{\otimes n}$.
From the above argument we have that $\lvert {\psi}\rangle =\lvert {0}\rangle ^{\otimes n}$ and
$\lvert {\xi}\rangle \approx\lvert {+}\rangle ^{\otimes n}$ form again a metrological code.  Again, the
environment can distinguish $\lvert {\psi}\rangle $ from $\lvert {\xi}\rangle $ with access only to a few
sites.  Here, the environment can use either an $X$ or a $Z$ measurement to
(imperfectly) distinguish between the two state vectors $\lvert {0}\rangle $ and $\lvert {+}\rangle $.  
It is hence not obvious how to interpret this code as an error-correcting code
that is tailored to biased noise.

While this example might illustrate the conceptual differences between quantum
error-correcting codes and metrological codes, we expect this construction of a
metrological code to be of limited practical use as it would require a
Hamiltonian that is extremely nonlocal.

\section{Many-body system subject to i.i.d.\@ amplitude damping noise}
\label{z:Pgp-gW09n9nZ}

In this section we consider a system consisting of $n$ spin-$1/2$ particles
evolving under a many-body Hamiltonian $H$ that is either noninteracting or that
has Ising interaction terms.  The system is exposed to i.i.d.\@ amplitude
damping noise.
First, we consider a noninteracting Hamiltonian with an on-site magnetic field,
and in the second part of this section we consider a Hamiltonian with Ising
interactions.

We consider %
an i.i.d.\@ amplitude damping noise model, meaning
that each site is independently exposed to the noisy channel
\begin{align}
  \begin{gathered}
    \mathcal{N}_{\mathrm{a.d.}}^{(p)}(\cdot)
    = E_0^{(p)}\, (\cdot)\, E_0^{(p)\,\dagger}
    + E_1^{(p)} \,(\cdot) \, E_1^{(p)\,\dagger}\ ,
    \quad %
    \\
    \begin{aligned}
        E_0^{(p)} &= \begin{pmatrix} \sqrt{1-p} & 0 \\ 0 & 1\end{pmatrix}\ ,
        &
        E_1^{(p)} &= \begin{pmatrix} 0 & 0 \\ \sqrt{p} & 0\end{pmatrix}\ ,
      \end{aligned}
  \end{gathered}
\end{align}
sticking to the convention that the first basis vector is $\lvert {\uparrow}\rangle $ and the
second one is $\lvert {\downarrow}\rangle $.  The amplitude-damping noise is often
also called the spontaneous emission channel.

As in \cref{z:FnHui0ahNLxW}, the system is initialized in a state
$\lvert {\psi_{\mathrm{init}}}\rangle $ and evolves according to $H$; at time $t_0$ we
apply the noisy channel $[\mathcal{N}_{\mathrm{a.d.}}^{(p)}]^{\otimes n}$ to
obtain Bob's state.  We seek to characterize the Fisher information of Bob's
state with respect to time.

In this section, we present simple numerical computations of the upper
bound~\eqref{z:3YnRZBPm.g17} for i.i.d.\@ amplitude damping noise for
different clock states.
In the first part of this section, we suppose the spins are exposed to a uniform
external magnetic field aligned along the $Z$ axis.  We present numerical
calculations of Bob's Fisher information and our lower
bound~\eqref{z:3YnRZBPm.g17} for a choice of clock states, and we
numerically optimize the initial state to achieve better output sensitivity.
In the second part of this section, we place the spins on a 1D chain with strong
Ising interactions.  We present numerical calculations of Bob's Fisher
information and our lower bound~\eqref{z:3YnRZBPm.g17} for a choice of
clock states; we numerically show that the sensitivity loss for the metrological
code state given in~\eqref{z:uYISiNnDt2yR} is
suppressed to first order in the amplitude damping parameter.

\subsection{Noninteracting Hamiltonians}

The system of $n$ spins is assumed to evolve 
under the Hamiltonian
\begin{align}
  H = \sum_{i=1}^n  \frac\omega2 %
  {Z}_i   .
  \label{z:dle6kUdtgpjO}
\end{align}
We compute Bob's Fisher information with respect to time of a selection of
states after exposure to the channel
$\mathcal{N} = [\mathcal{N}_{\mathrm{a.d.}}^{(p)}]^{\otimes n}$.  First we
consider the GHZ state, which has the optimal sensitivity if no noise is present:
\begin{align}
  \lvert {\psi_{\mathrm{GHZ}}}\rangle  
  &= \frac1{\sqrt 2} \bigl[
    \lvert {\uparrow\uparrow\cdots\uparrow}\rangle  +
    \lvert {\downarrow\downarrow\cdots\downarrow}\rangle 
    \bigr]\ .
\end{align}
The GHZ state satisfies
\begin{align}
  \FIqty{Alice}{t}[\psi_{\mathrm{GHZ}}]
  = 4 \langle { H^2 }\rangle _{\mathrm{GHZ}}
  = n^2 \omega^2\ .
\end{align}
We can also consider the product state vector of all spins pointing in the $+X$
direction,
\begin{align}
  \lvert {\psi_+}\rangle 
  = \lvert {+}\rangle ^{\otimes n} = \frac1{\sqrt{2^{n}}}
  \bigl[ \lvert {\uparrow }\rangle + \lvert {\downarrow }\rangle \bigr] ^{\otimes n}\ .
\end{align}
Then
\begin{align}
  \FIqty{Alice}{t}[\psi_{\mathrm{+}}]
  = 4 \langle { H^2 }\rangle _{\mathrm{+}}
  = n \omega^2\ .
\end{align}
Our upper bound~\eqref{z:3YnRZBPm.g17} on Bob's Fisher information for
these states is presented for $n=12$ and for $n=50$ in
\cref{z:McvK4GoEf27I}.
\begin{figure*}
  \centering
  \includegraphics{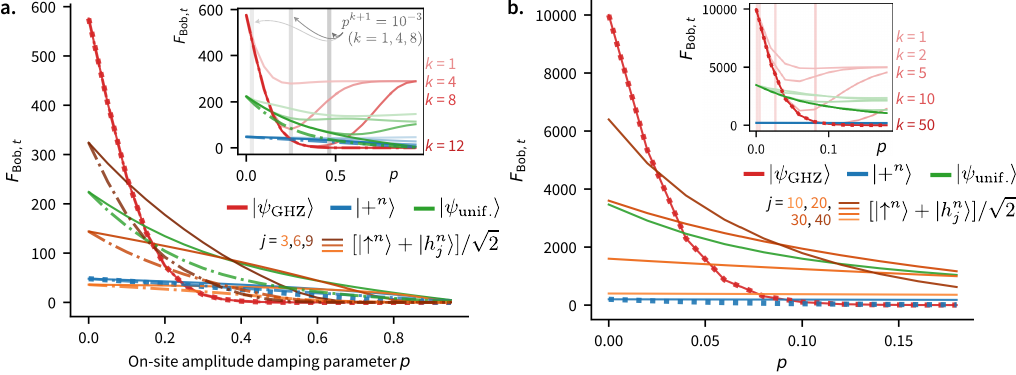}
  \caption{Quantum Fisher information of a system of $n$ spin-$1/2$
    particles with the Hamiltonian $H=\sum_i \omega%
    {Z_i}
    /2$ (with
    $\omega=2$) after the application of an amplitude damping channel of
    parameter $p$ on all sites.  \textbf{a.}~Here $n=12$.  Solid lines depict our upper
    bound~\eqref{z:3YnRZBPm.g17} with $k=n=12$ for the state vectors 
    $\lvert {\psi_{\mathrm{GHZ}}}\rangle $ (red), $\lvert {+^n}\rangle $ (blue), and
    $\lvert {\psi_{\mathrm{unif}}}\rangle $ (green), which are defined in the main text,
    as well as for the family of states corresponding to an even superposition
    of the most excited state and a symmetric state (Dicke state) $\lvert {h_j^n}\rangle $
    with a fixed number $n-j$ of excitations (shades of orange) with
    $j=3,6,9$.  Dotted lines are corresponding ad hoc lower bounds for the
    state vectors $\lvert {\psi_{\mathrm{GHZ}}}\rangle $ and $\lvert {+^n}\rangle $ (see main text).
    Dash-dotted lines are the corresponding exact values of the quantum Fisher
    information, which can still be directly computed for $n=12$.
    The curves corresponding to the superpositions of pairs of Dicke states illustrate
    situations where upper bound is not tight.
    The inset depicts our upper bound~\eqref{z:3YnRZBPm.g17} for
    different values of $k=1,4,8,12$ for the state vectors
    $\lvert {\psi_{\mathrm{GHZ}}}\rangle $, $\lvert {+^n}\rangle $ and $\lvert {\psi_{\mathrm{unif}}}\rangle $.
    The value of $k$ corresponds to a projection onto the subspace on Eve's
    system associated with errors of weight less than or equal to $k$ which is
    included in our bound~\eqref{z:3YnRZBPm.g17}.  The three vertical
    gray lines indicate values of $p$ for which $p^{k+1} = 10^{-3}$ for
    $k=1,4,8$; these lines roughly indicate the values of $p$ beyond which the
    this projection is expected to fail with a probability exceeding
    $\sim 10^{-3}$.  Indeed, for our choice of noisy channel and states, our
    bound~\eqref{z:3YnRZBPm.g17} for each $k$ is reasonably tight up
    until values of $p$ for which $p^{k+1}$ is no longer negligibly small.
    \textbf{b.}~The same computations are repeated for $n=50$ and
    $\omega=2$.  Solid lines depict our upper bound with $k=n$ for the same
    states as in (a), and dotted lines depict an ad hoc lower bound for
    $\lvert {+^n}\rangle $ and $\lvert {\psi_{\mathrm{GHZ}}}\rangle $.  The red vertical lines in
    the inset depict values for $p$ for which the total weight of the i.i.d.\@
    amplitude-damping Kraus operators $E_{\boldsymbol x}$ with
    $\lvert {\boldsymbol x}\rvert >k$ exceeds $10^{-3}$ for $k=1,2,5,10$ when applied
    onto the GHZ input state vector $\lvert {\psi_{\mathrm{GHZ}}}\rangle $.  These values of $p$
    are where our upper bound~\eqref{z:3YnRZBPm.g17} for the
    corresponding $k$ value is expected to no longer be accurate.
  }
  \label{z:McvK4GoEf27I}
\end{figure*}
In \cref{z:McvK4GoEf27I}\textbf{a} and \textbf{b}
are also depicted an ad hoc lower bound for the state vectors
$\lvert {\psi_{\mathrm{GHZ}}}\rangle $ and $\lvert {+^n}\rangle $ for the same values of $n$ and
$\omega$.  We can see that for our choice of the amplitude damping noise model
and at least for our choice of states, the upper bound on $\FIqty{Bob}{t}$
provided by~\eqref{z:3YnRZBPm.g17} is reasonably tight for $k=n$.
The inset of \cref{z:McvK4GoEf27I}\textbf{a} depicts
the same bound for different choices of the value $k=1,4,8,12$.  Recall that the
bound includes a projection onto Eve's subspace associated with error operators
of weight at most $k$.  The probability that this projection fails is the total
probability of observing an error with weight greater than $k$; this probability
is of the order of $p^{k+1}$.  In the inset of
\cref{z:McvK4GoEf27I}\textbf{a}, for $k=1,4,8$ we
display gray lines identifying the values of $p$ for which $p^{k+1} = 10^{-3}$.
Values of $p$ beyond the corresponding gray line represent situations in which
the projection is expected to fail with probability greater than the order of
$\sim 10^{-3}$.  Here, we see that our bound is indeed reasonably tight up until
the corresponding value of $p$.
In the inset of \cref{z:McvK4GoEf27I}\textbf{b}, we
determine more precisely the total weight of the events neglected by ignoring
Kraus operators of weight greater than $k$.  Namely, for $k=1,2,5,10$, we
compute the smallest value of $p$ for which
$\sum_{\lvert {\boldsymbol x}\rvert >k}\operatorname{tr}\bigl(E_{\boldsymbol x}^\dagger E_{\boldsymbol
  x}\,\lvert {\psi_{\mathrm{GHZ}}}\rangle \mkern -1.8mu\relax \langle{\psi_{\mathrm{GHZ}}}\rvert \bigr) > 10^{-3}$.  We see that these values of $p$
correspond approximately to where our upper
bound~\eqref{z:3YnRZBPm.g17} fails to accurately predict the value of
the quantum Fisher information on the state vector $\lvert {\psi_{\mathrm{GHZ}}}\rangle $.

We can ask, which state vector $\lvert {\psi}\rangle $ has the best sensitivity after application of
the noisy channel for a given value of $p$?  Here we use the understanding
brought by our main Fisher information trade-off relation.  In this case, Eve
receives any photons emitted by spontaneous emission, which tell her exactly
which sites suffered a decay.  As a consequence, if Eve observes a number $k$ of
photons, then she can safely guess that the energy of Alice's state must have
been at least the energy corresponding to $k$ excitations.  Eve has therefore
obtained information about the energy of Alice's state.  This observation
provides a simple explanation for why the GHZ state has a high Fisher
information loss even for small values of $p$: When Alice exposes a GHZ state to
the noise, then Eve can estimate the energy of Alice's GHZ state by noting
whether or not she observes a photon.  If Eve observes even a single photon,
then she can safely guess the energy associated with the all-excited state, and
if she observes no decay, she guesses the energy of the ground state.  (Her
guess is wrong with probability $(1-p)^n$, corresponding to the probability of
the all-excited state suffering no decay.)
Our trade-off relation thus tells us that we seek a state with a large energy
spread, but for which a decay would not betray the value of the total energy of
the state.  As the effect of the decay becomes more significant with increasing
$p$, some of the energy spread is sacrificed in order to make the state more
resilient to Eve's probe.

Here we consider states that are invariant under permutations of the $n$ spins,
motivated by the fact that the Hamiltonan is permutation invariant (see also~\cite{R41,R40}).
These states live in the symmetric subspace and can be written in the basis of
Dicke states of the symmetric subspace.  A Dicke state is a
permutation-invariant state with a fixed number of excitations.  More
specifically, for $q=0,\ldots, n$ we define
\begin{align}
  \lvert {h_q^n}\rangle  = \binom{n}{q} ^{-1/2}\,
    \sum_{\lvert {\boldsymbol x}\rvert  = q} \lvert {\boldsymbol x}\rangle \ ,
\end{align}
where the sum ranges over all strings $\boldsymbol x$ with
$x_i = {\uparrow},{\downarrow}$, and where $\lvert {\boldsymbol x}\rvert $ denotes the
number of sites $i$ where $x_i = {\downarrow}$.  In the standard basis
$\{\lvert {0}\rangle =\lvert {\uparrow}\rangle , \lvert {1}\rangle =\lvert {\downarrow}\rangle \}$ for spin-1/2 particles, the value
$\lvert {\boldsymbol x}\rvert $ is the Hamming weight of the corresponding computational
basis state $\boldsymbol x$.

A general pure symmetric state vector can therefore be written as
\begin{align}
  \lvert {\psi }\rangle = \sum_{q=0}^n  \psi_q \, \lvert {h_q^n}\rangle \ .
  \label{z:onOgq7042npF}
\end{align}
We consider symmetric states for convenience, although the optimal
state in such settings need not be symmetric~\cite{R66}.

We can consider an even superposition of two Dicke states (as in
\cref{z:lZkawwgrYk1u}).  Namely, for
$0\leq q_1 , q_2 \leq n$ we consider the state vector
\begin{align}
  \lvert {\widetilde\psi_{q_1;q_2}}\rangle 
  = \frac1{\sqrt2}\bigl[ \lvert {h_{q_1}^n}\rangle  + \lvert {h_{q_2}^n}\rangle  \bigr]\ .
  \label{z:zWhqAVJdt70e}
\end{align}
Our upper bound on the sensitivity of the state vector
$ \lvert {\widetilde\psi_{q_1;q_2}}\rangle $ for all $q_1,q_2$ is depicted
in~\cref{z:m9JDQIefbcfW} for $n=50$ and the
values of $p=0.01,0.05,0.1, 0.25$.
\begin{figure}
  \centering
  \includegraphics{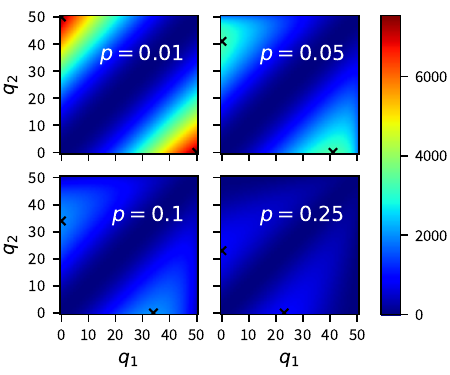}
  \caption{Upper bound~\eqref{z:3YnRZBPm.g17} on the quantum Fisher
    information of a superposition of two Dicke states on $n=50$ spin-$1/2$
    particles after being exposed to i.i.d.\@ amplitude-damping noise.  A Dicke
    state with $q$ excitations is an even superposition of all states with
    exactly $q$ excitations.  For each $q_1,q_2$, the considered state is an
    even superposition of the two Dicke states with a number $q_1$ and $q_2$ of
    downwards-pointing spins respectively.  Each plot corresponds to a different
    value of the single-site amplitude-damping noise parameter $p$.  Our bound
    attains its maximum on this family of states (black crosses) for states that are
    an even superposition
    of the highest excited state and of a
    weakly excited Dicke state that is separated from the ground state.  This
    separation hinders Eve from accurately guessing the energy of Alice's state,
    which via our trade-off relation improves the state's sensitivity after
    application of the noisy channel.  The bound is computed with $k=n=50$.}
  \label{z:m9JDQIefbcfW}
\end{figure}
In contrast to the case of erasures
(\cref{z:lZkawwgrYk1u}), the states among this family
where our bound is large have one of the terms being close to the maximally
excited state ($q_1=0$ or $q_2=0$).  (The bound is not necessarily expected
to be tight, in light of the gap that is apparent for $n=12$ in
\cref{z:McvK4GoEf27I}\textbf{a} between our bound
and the exact value of the quantum Fisher information.  The discussion that
follows aims to identify states that can potentially have high sensitivity,
while ruling out states that are certain to have low sensitivity.)  This
property can again be understood from our trade-off relation.  In the case of
erasures, Eve receives the entire reduced state of the systems that {have been}
lost.  If $q_1=0$ or $q_2=0$, then the reduced state on each subsystem is the
pure state vector $\lvert {\uparrow}\rangle $; since it is a pure state, it is easier for Eve
to distinguish it from the reduced state of the other Dicke state vector
$\lvert {h_{q_2}^n}\rangle $.  In the case of amplitude damping, Eve knows only whether a
decay happened or not on each site and she cannot access the full reduced state.
An alternative phrasing of this argument is to express erasures as a random
operation
$X,Y,Z$
applied onto each site;
equivalently, a random operation from the set $\{\sigma_+, \sigma_-, 
{Z} \}$
is applied on each site, where $\sigma_\pm = [%
{X} \pm i
{Y} ]/\sqrt2$
are the creation and annihilation operators of the qubit excitation on a
specific site.  In the case of amplitude damping, the Kraus operators have no
overlap with $\sigma_+$, meaning that physically, there is no event in which
excitations are created in the system.  Such events, however, happen in the case
of erasures.  If Eve receives the information that $k$ such events have
occurred, she can safely assert that the energy of Alice's state could not have
exceeded the energy of the state that can still accommodate $k$ further
excitations.  Thus, the state $q_1=0$ can easily be ruled out by Eve in the case
of erasures if she receives a report of even a single $\sigma_+$ event.

Another interesting choice of state is the uniform superposition of all Dicke
states, giving rise to
\begin{align}
  \lvert {\psi_{\mathrm{unif}}}\rangle 
  = \frac1{\sqrt{n+1}} \sum_{q=0}^n \lvert {h_q^n}\rangle \ .
\end{align}
The intuitive reason we expect this state to achieve a good sensitivity after
the noise is that if Eve observes emitted photons, she gains comparatively
little information about the energy of Alice's state as opposed to if the state
is a superposition of few spaced-out Dicke states.  The Fisher information of
this state after the application of the amplitude damping noisy channel is
depicted in \cref{z:McvK4GoEf27I}.

A more systematic, numerical optimization of $\FIqty{Bob}{t}$ by varying Alice's
state using different Ans\"atze for the coefficients $\{\psi_q\}$ indicate that
the Fisher information obtained by states of the
form~\eqref{z:zWhqAVJdt70e} can be marginally exceeded
for specific values of $p$ by states for which the amplitudes $\{ \psi_q \}$ are
concentrated around two values $q_1=0$ and some value $q_2$, but with some
broadening to include some weight on neighboring Dicke states to $q_2$.
Interestingly, there appears to be many states with very different profiles of
$\{\psi_q\}$ that achieve a very similar sensitivity after the application of the
noisy channel.

One particular such state is the state vector $\lvert {\psi_{\textrm{half-Gauss}}}\rangle $ of the
form~\eqref{z:onOgq7042npF} where the $\psi_q$ coefficients
are a half-Gaussian centered on the all-excited state as
\begin{align}
  \lvert {\psi_{\textrm{half-Gauss}}}\rangle 
  &= \sum_q \psi_q\,\lvert {h_q^n}\rangle \ ,
  &
  \psi_q &= \frac1c {e}^{- \frac{(q/n)^2}{2w^2} }\ ,
\end{align}
where $c$ is determined from the normalization condition.  Empirically, we find
that this state with a value of $w=0.4$ yields a sensitivity after application
of the noisy channel that is competitive with respect to the other studied
states.
The half-Gaussian spreads with over the entire Dicke basis. The
amplitude of the ground state is
$\psi_n = {e}^{-1/(2w^2)}\psi_0 \approx 0.04\psi_0$.
Here again, the state vector $\lvert {\psi_{\textrm{half-Gauss}}}\rangle $ balances a broad spread in
energy values while still preventing Eve from easily finding out the energy of Alice's
state.

For our numerical calculations, we employed the standard Python \emph{NumPy} and
\emph{SciPy} toolboxes along with \emph{QuTip}~\cite{R67,R68}. %
The permutation invariance of our setting greatly simplifies the calculation of
terms of the form $\operatorname{tr}\bigl(E_{\boldsymbol x'}^\dagger E_{\boldsymbol x} \psi\bigr)$
and $\operatorname{tr}\bigl(E_{\boldsymbol x'}^\dagger E_{\boldsymbol x} \{\bar{H},\psi\}\bigr)$
because $E_{\boldsymbol x}$ is a tensor product of single-site operators.
Similarly, the reduced operator on a given number $k$ of sites of any operator
acting on the symmetric subspace can be computed easily by combinatorial
considerations in a basis of the symmetric
subspace~\cite{R42,R41,R40}.
Even for $n=50$, our pinched bound is easy to compute even for $k\sim n$ in the
permutation-invariant setting: To determine the diagonal matrix elements
associated with $\widehat{\mathcal{N}}(\psi)$ and
$\widehat{\mathcal{N}}(\{\bar{H},\psi\})$, it suffices to compute terms of the
form $\operatorname{tr}\bigl(E_{\boldsymbol x}^\dagger E_{\boldsymbol x} \psi\bigr)$ and
$\operatorname{tr}\bigl(E_{\boldsymbol x}^\dagger E_{\boldsymbol x} \{\bar{H},\psi\}\bigr)$ for
operators $E_{\boldsymbol x}$ of the form
$E_1^{\otimes w}\otimes E_0^{\otimes (n-w)}$ (the other terms are determined by
symmetry).

\subsection{Strongly interacting Ising Hamiltonian with a noisy channel}

Consider a one-dimensional spin chain with nearest-neighbor $ZZ$ couplings, with
the Hamiltonian
\begin{align}
  H = \sum_{j=1}^{n-1} \frac J2\,%
  {Z_j}
{Z_{j+1}}.
  \label{z:6xJnS9ZmA02E}
\end{align}
Our upper bound on Bob's sensitivity to time, computed using the
expression~\eqref{z:3YnRZBPm.g17} for various states, is plotted in
\cref{z:eNrtlC61cj3z} for $n=12$ and $n=50$, with $J=2$ in
both plots.
\begin{figure*}
  \centering
  \includegraphics{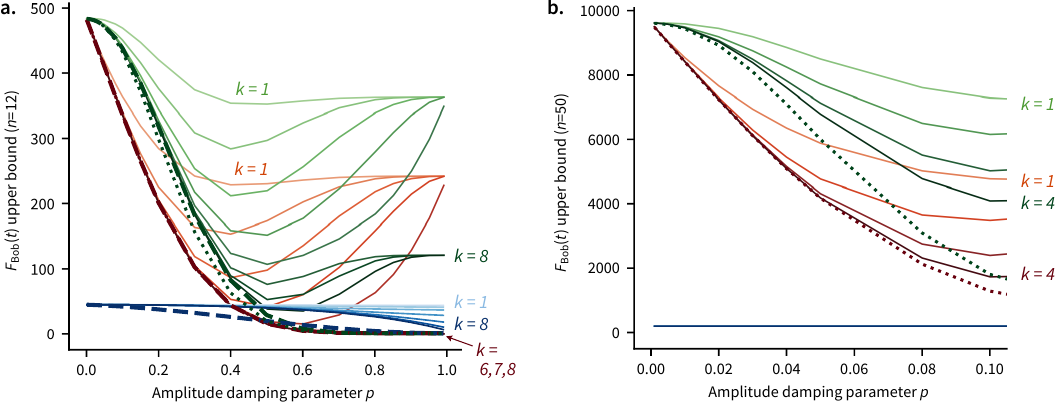}%
  \caption{Numerical calculation of the upper bound on the sensitivity of a
    many-qubit state after i.i.d.\@ amplitude-damping noise, for different 
    states evolving according to the one-dimensional Ising Hamiltonian
    $H=\sum_{j} (J/2)\, 
    {Z_j} {Z_{j+1}}$.
    \textbf{a.}~%
    Upper bound on Bob's Fisher information $F_{\textrm{Bob}}(t)$ for $n=12$ qubits as a function of
    amplitude-damping parameter $p$, for the  state
    $\lvert {\psi_{\textrm{f-af}}}\rangle $, which is an even superposition of a
    ferromagnet state and an antiferromagnet state (in red), for
    $\lvert {\psi_{\textrm{code-f-af}}}\rangle $ which satisfies our Knill-Laflamme-like
    conditions for a single located error (in green), and for the spin-coherent
    state vector $\lvert {+}\rangle ^{\otimes n}$ where $\lvert {+}\rangle $ is the $+1$ eigenvector of
{$X$}
    (in blue).  Spin-coherent states are typically used in
    non-quantum-enhanced metrology.  %
    For each  state, bounds are shown for various values of $k$, where $k$ is a parameter in the additional noisy channel acting on Eve's system which is used to derive the upper bound;
    bounds with larger values of $k$ are harder to compute
    but are tighter.  
    The upper bounds can increase again for $p\gtrsim 0.4$
    because the bound only accounts for low-weight Kraus operators, and
    higher-weight errors cannot be ignored in this regime.  Dashed curves
    indicate the true Fisher information values, determined by direct
    computation, and the dotted lines are lower bounds associated with
    $\lvert {\psi_{\textrm{f-af}}}\rangle $ and $\lvert {\psi_{\textrm{code-f-af}}}\rangle $.  The red
    curves for $k=6,7,8$, the associated true value, and the lower bound
    appear superimposed.
    \textbf{b.}~Upper bounds for the  states
      $\lvert {\psi_{\textrm{f-af}}}\rangle $ and $\lvert {\psi_{\textrm{code-f-af}}}\rangle $ for
      $n=50$ qubits, zoomed in on low values of $p$.  Dotted lines are lower
      bounds on the Fisher information for these 
      states. %
    Computing the true values of the Fisher information in this regime would
    require more advanced methods, such as tensor
    networks~\protect\cite{R27}.%
  }
  \label{z:eNrtlC61cj3z}
\end{figure*}
We first consider the  state vector corresponding to an even superposition of a
ferromagnetic all-zero state and an antiferromagnetic state vector
\begin{align}
  \lvert {\psi_{\textrm{f-af}}}\rangle  = \frac1{\sqrt2}\bigl[
  \lvert {\downarrow\downarrow\downarrow\downarrow\ldots}\rangle  +
  \lvert {\downarrow\uparrow\downarrow\uparrow\ldots}\rangle 
  \bigr]\ .
\end{align}
Since $\lvert {\downarrow\downarrow\downarrow\downarrow\ldots}\rangle $ and
$\lvert {\downarrow\uparrow\downarrow\uparrow\ldots}\rangle $ are energy eigenvectors of
respective energies $\omega(n-1)$ and $-\omega(n-1)$, we see that the state vector 
$\lvert {\psi_{\mathrm{f-af}}}\rangle $ has energy variance
$\sigma_H^2(\psi_{\textrm{f-af}}) = \omega^2(n-1)^2$.  For comparison, we
compute the true values of the Fisher information (for $n=12$), plotted as
dashed lines in \cref{z:eNrtlC61cj3z}\textbf{a}, as well as
an ad hoc lower bound, plotted as dotted lines.
As can be seen in \cref{z:eNrtlC61cj3z}, our upper bound
yields tight bounds on the time sensitivity of the many-body interacting probe,
as witnessed by its proximity to the true value and to the ad hoc lower bound,
provided $p$ is not too large and $k$ can be taken to be large enough.

The next probe state vector we consider is
\begin{align}
  \begin{alignedat}{1}
    \lvert {\psi_{\textrm{code-f-af}}}\rangle 
    = %
    \frac{1}{2} \bigl[&
    \lvert {\downarrow\downarrow\downarrow\downarrow\ldots}\rangle 
    + \lvert {\uparrow\uparrow\uparrow\uparrow\ldots}\rangle 
    \\
    &+ \lvert {\downarrow\uparrow\downarrow\uparrow\ldots}\rangle 
    + \lvert {\uparrow\downarrow\uparrow\downarrow\ldots}\rangle 
    ]\ .
  \end{alignedat}
\end{align}
The state vector $\lvert {\psi_{\textrm{code-f-af}}}\rangle $ is the
state~\eqref{z:uYISiNnDt2yR} using the
antiferromagnetic configuration as the bit string $\boldsymbol{x}$.  Recall that
this state satisfies our Knill-Laflamme-like conditions for a single located
error.  Here we study how this state's sensitivity is affected when exposed to
i.i.d.\@ amplitude-damping noise.
Our upper bound on Bob's Fisher
information via~\eqref{z:3YnRZBPm.g17} is plotted in
\cref{z:eNrtlC61cj3z} for $n=12$ and $n=50$, alongside that
of $\lvert {\psi_{\textrm{f-af}}}\rangle $.  The probe state remains almost maximally
sensitive when $p$ is small, in contrast to the probe
$\lvert {\psi_{\textrm{f-af}}}\rangle $ which immediately loses sensitivity at what appears
to be a linear rate with $p$.  This is a manifestation of the fact that the
sensitivity of the probe state is unaffected by a single error, and only in the
event that two simultaneous errors occur does the sensitivity decrease.

Finally, we consider for comparison the natural probe state given by an ensemble
of independent spins, each pointing in the $X$ direction  %
\begin{align}
  \lvert {+^n}\rangle  %
  = \lvert {+}\rangle \otimes\lvert {+}\rangle \otimes\cdots\otimes\lvert {+}\rangle \ ,
\end{align}
where $\lvert {+}\rangle  = [\lvert {\uparrow}\rangle +\lvert {\downarrow}\rangle ]/\sqrt2$ is the $+1$ eigenvector of
{$X$}.
Our upper bound computed for the spin-coherent state vector $\lvert {+^n}\rangle $ %
is plotted in blue in \cref{z:eNrtlC61cj3z}.  We can see
that this probe state performs significantly worse than the entangled probe
states for $p\lesssim 0.4$.  This is expected, since such a probe's noiseless
sensitivity scales only linearly in $n$, as opposed to the quadratic scaling of
the sensitivity of the $\lvert {\psi_{\textrm{f-af}}}\rangle $ and
$\lvert {\psi_{\textrm{code-f-af}}}\rangle $ probe states.  However, the robustness of the
spin-coherent state to the noise is significant.  At 
larger values of the
amplitude damping parameter ($p\sim 0.5$ for $n=12$), the other probe states
have all but lost their advantage in sensitivity.

For our numerical calculations, we employed the standard Python \emph{NumPy} and
\emph{SciPy} toolboxes along with \emph{QuTip}~\cite{R67,R68}. 
Our source code is published on \emph{Github}~\cite{R69}.
To compute the trace terms in~\eqref{z:3YnRZBPm.g17} we
express $\lvert {\psi}\rangle $ and $\bar{H}\lvert {\psi}\rangle $ as superpositions of a small number of
computational basis vectors over the $n$ sites.  The traces then factorize into
tensor factors enabling their efficient computation.  For the spin-coherent
state we work with the local $X$ basis instead of the $Z$ basis, such that the
spin-coherent state becomes a basis state in this picture.  The direct
computation of the Fisher information is %
performed via an eigenvalue
decomposition of the full $n$-body noisy probe state $\rho_B$ to transform the
anticommutator equation $\frac12\bigl\{\rho_B, R\bigr\} = \mathcal{N}(-i[H,\psi])$ in a
basis where $\rho_B$ is diagonal, and then solving elementwise to determine
$R$.
The ad hoc lower bound is computed by numerically solving the symmetric
logarithmic derivative in a restricted subspace consisting of the computational
basis vectors that appear in the decomposition of the probe state and those
bitstrings that are close by in Hamming distance.  The resulting value is
guaranteed to be a lower bound, because the map that projects the state down to
any subspace of the state space is a trace-nonincreasing, completely positive
map for which one can apply the data-processing inequality satisfied by the Fisher
information~\cite{R18} (see \cref{z:x2KvR6Lo6ilw}
in \cref{z:KH4B5FtzQ1kE} for details).
It is likely the quantum Fisher information in this setting can also be computed
based on existing techniques, such as those introduced in
Refs.~\cite{R20,R30,R52,R70}.

\section{Conclusions and outlook}
\label{z:HssWipSG8E2-}

Our results present a new paradigm for characterizing the sensitivity of a quantum
clock or sensor when exposed to noise, by establishing a quantitative trade-off
between the quantum Fisher information of the noisy system with respect to the
parameter of interest and the quantum Fisher information that the environment
acquires with respect to a complementary parameter.
Information trade-offs are interesting because they reveal properties of the
mathematical structure of quantum theory, which in turn determine what tasks can
be accomplished within the laws of quantum mechanics.  
Here, our results provide a guiding principle for finding noise-resilient clock
states: In order to avoid sensitivity loss due to the application of a noise
channel, clock states should hide their energy from the environment.

Energy-time uncertainty relations have historically been harder to %
formulate than position-momentum-type uncertainty principles, because there is
no global time observable in quantum mechanics in the same sense as there is a
position observable. Our work contributes an additional type of time-energy
uncertainty relation, complementing existing uncertainty relations such as
Mandelstamm-Tamm-type uncertainty
relations~\cite{R12}, Fisher-based uncertainty
relations with a single system~\cite{R11}, and
entropic uncertainty relations~\cite{R14}. 
Our relation exploits a type of
complementarity between the local optimal sensing operator for time and the
Hamiltonian (\cref{z:KI2eOYZK2.Ln}),
{in the same spirit as uncertainty relations}
derived in Refs.~\cite{R0,R11}. 
Our relation furthermore connects the estimation capabilities of two distinct
parties (Bob and Eve); in this sense our results can be seen as a Fisher
information counterpart of the entropic uncertainty relations for time and
energy~\cite{R14}.

\subsection{Summary and discussion}

An overview of the results presented in this work can be found in
\cref{z:a4FfzgZJotxe}.

\paragraph{Time-energy sensitivity trade-off.}
Our main result is a quantitative time-energy sensitivity trade-off relation in
the setting of \cref{z:FnHui0ahNLxW}.  If a quantum system is
subjected to an instantaneous noisy channel, then the loss in sensitivity to
time trades off exactly with the environment's ability to sense the energy of
the system as laid out in \cref{z:OUF5HnwJdMZI}.

The setting of our uncertainty relation (\cref{z:FnHui0ahNLxW}) is
unconventional for quantum metrology: A quantum clock usually accumulates noise
continuously as time evolves, much like a quantum probe usually accumulates
noise continuously while sensing an unknown parameter.  Our setting is instead
the communication scenario studied in Ref.~\cite{R28}: Alice
possesses a noiseless quantum clock that already encodes some time value, and
she sends it to Bob over a noisy communication channel.
In this alternative setting one can analyze the quantum information that leaks
to the environment, which is more challenging to do if we consider continuous
noise.

An appealing feature of our trade-off relation is that Bob's time sensitivity and
Eve's sensitivity to energy are related by an equality.  Concretely, this
feature means that not only does a gain in energy sensitivity imply a time
sensitivity loss by Bob, but also a loss in time sensitivity for Bob
automatically implies a gain in energy sensitivity by Eve.  In contrast,
uncertainty relations in quantum mechanics often relate two observable
uncertainties or two entropic quantities via an inequality.  For instance, a
Schr\"odinger particle in one dimension that has a large variance in the
momentum observable need not have a narrow variance in the position observable.

Our results furthermore hold for an arbitrary pure probe state vector $\lvert {\psi}\rangle $ and
Hamiltonian $H$.  We evade the question of formally optimizing over the probe
state vector $\lvert {\psi}\rangle $ itself---a central question in quantum metrology that many
contributions on using quantum error correction for metrology
address~\cite{R20,R21,R51,R71,R72}---by identifying
instead what features a probe state vector $\lvert {\psi}\rangle $ must exhibit to avoid being
affected by the noise.  %
The alternative expression of the Fisher information obtained by our
trade-off relation can potentially facilitate the computation of the Fisher
information when optimizing the clock or probe state, potentially improving
state optimization schemes such as those in
Refs.~\cite{R66,R73} in the
presence of noise.
Our trade-off relation also offers a guideline to seek good clock states,
especially in settings where it might not be possible to reliably prepare the
probe state that has the absolute best sensitivity: Noise-resilient clock states
need to hide their energy from the environment.

Importantly, it is not necessarily the probe state which is least affected by
the noise that is the most sensitive.  Another state might exhibit a better
sensitivity after the noisy channel, even if its sensitivity loss is greater, by
ensuring that it is initially sufficiently more sensitive.  As an extreme case,
this point is illustrated by the ground state of a qubit affected by
amplitude-damping noise; the ground state trivially remains unaffected by the
noise but has no sensitivity, whereas the $+X$ eigenstate has a better
sensitivity, even if it is affected by the noise.

A key technique in our approach is the formulation of the quantum Fisher
information as a semidefinite
program~\cite{R25,R27}
(see \cref{z:KH4B5FtzQ1kE}).
Semidefinite programming offers a versatile toolbox
in which an alternate expression for an optimization (known as \emph{dual
problem}) can be derived and bounds on such optimizations can be proven
more easily~\cite{R74,R75}.
The technical proof of our main trade-off result
(\cref{z:VM5-jRd.0oMe}) offers additional insight
into the meaning of the dual problem associated with the semidefinite
programming formulation of the quantum Fisher information.

\paragraph{The setting of continuous noise.}

In certain specific settings, the setup in \cref{z:FnHui0ahNLxW}
remains a good approximation of a quantum clock exposed to continuous noise
described by a Lindbladian master equation (see
\cref{z:Xuq8JaOEWrUo}).  A sufficient condition that guarantees the
accuracy of this approximation is to ensure on one hand that the Hamiltonian
part $\mathcal{L}_0$ of the evolution commutes (as a superoperator) with the
noise part $\mathcal{L}_1$ of the Lindbladian that contains all the noise
operators, and on the other hand that the time derivative of the state is
primarily driven by the Hamiltonian and not by the noise.  More precisely, in
the notation of \cref{z:Xuq8JaOEWrUo}, the sufficient condition
consists in checking that $[\mathcal{L}_0, \mathcal{L}_1]=0$ as well as ensuring
that $\partial_t\mathcal{N}$ contributes only a negligible part of the Fisher
information $\Ftwo{\rho}{\partial_t \rho}$ [for which a rigorous bound can for
instance be computed in \cref{z:raWZngbty0t4}].  The second
condition is rarely expected to be violated, as sensors are typically designed
to have their signal imprinted on their state through their Hamiltonian
evolution; noise is usually a degrading process and is typically not the
mechanism by which the signal is acquired.
If the Hamiltonian of a many-body system consists only of single-site $Z$ terms,
then both i.i.d.\@ dephasing noise and i.i.d.\@ amplitude-damping noise commute
(as a superoperator) with the Hamiltonian part of the Lindbladian.  Furthermore
if the Hamiltonian $H$ commutes with the individual Lindblad jump operators, then
the corresponding evolutions also commute as superoperators; this is the case
for instance if $H$ consists of arbitrary-weight terms containing only $Z$
operators and in the presence of i.i.d.\@ dephasing noise.
In the case where $[\mathcal{L}_0,\mathcal{L}_1]\neq 0$ the setting can still
formally be mapped onto the setting of \cref{z:FnHui0ahNLxW}, by
defining the effective noise as the full evolution map with a unitary applied on
the input, as long as the time dependence of the effective noisy channel can be
neglected.  In this case, determining the effective noisy channel in general
might be difficult.

\paragraph{Trade-off with generalized parameters.}
The trade-off relation for time and energy can be extended to other parameter
evolutions.  First of all, there is a choice in how $\psi$ evolves along the $t$
and $\eta$ parameters: Any choice of $\lvert {\psi(t,\eta)}\rangle $ such that
\cref{z:ff7jHgn75DW4} is satisfied at $(t_0,\eta_0)$ (but not
necessarily at other even neighboring points) leads to the same Fisher
information quantities $\FIqty{Alice}{t}$, $\FIqty{Alice}{\eta}$,
$\FIqty{Bob}{t}$, and $\FIqty{Eve}{\eta}$, so our trade-off relation directly
applies.
An alternative choice for the $\eta$ parameter is an evolution generated by the
Lindbladian master equation $\partial_\eta \psi = \mathcal{L}[\psi]$ with
$\mathcal{L}[\rho] = \sum_k \bigl[ L_k\rho L_k^\dagger - \{L_k^\dagger L_k,
  \rho\}/2 \bigr]$, where $L_k = \sigma_H^{-1}\sqrt{(e_k + c)} \lvert {\psi}\rangle \mkern -1.8mu\relax \langle{e_k}\rvert $,
where $\{\lvert {e_k}\rangle \}$ are eigenvectors of the Hamiltonian, where
$H = \sum e_k \lvert {e_k}\rangle \mkern -1.8mu\relax \langle{e_k}\rvert $, and where $c\geq 0$ is chosen large enough such that
$e_k+c \geq 0$ for all $k$.  (We have the opposite sign for
$\partial_\eta \psi$, but this can be corrected by redefining
$\eta\mapsto-\eta$, and this does not impact the Fisher information.)  We can
check that this choice of $\partial_\eta \psi$ satisfies
\cref{z:SMjFM6HgZlts} at $(t_0,\eta_0)$, and therefore
also \cref{z:ff7jHgn75DW4}.
Another interesting
choice for $\psi(t,\eta)$ is to set
$\lvert {\psi(t_0,\eta)}\rangle  = {e}^{(\eta-\eta_0) \bar{H} /(2\sigma_H^2)}
\lvert {\psi(t_0,\eta_0)}\rangle $ for $\eta$ in a neighborhood of $\eta_0$, recalling
$\bar{H} = H - \langle {H}\rangle _\psi$.  Again, we see that
\cref{z:SMjFM6HgZlts} is satisfied.  This evolution is
nonunitary, but one can check that it does preserve the trace of $\psi$ locally
to first order at $\eta_0$: We have
$\partial_\eta \operatorname{tr}(\psi) \bigr|_{\eta_0} = \operatorname{tr}\bigl( \{ \bar{H}, \psi \}\bigr) = 0$.
Either of these choices of evolution might be relevant depending on the specific
application, though we expect the primary application of our trade-off relation
is to help characterize Bob's Fisher information to time, in which case the
specific choice of how the clock state is stated to evolve along $\eta$ might
not be important.

The trade-off relation can further be extended to an inequality that is valid for
any two arbitrary parameters
(\cref{z:HCzmF57VVd0m}).  The trade-off between
the Fisher information that Alice and Bob respectively have with respect to
either parameter is then quantified by a value that depends on the commutator of
the generators of the two parameters
(\cref{z:incm1nGt-mIE}).  The
appearance of the commutator in this expression reinforces its central role in
quantifying the incompatibility of physical observable quantities.  
Our main time-energy trade-off relation can be %
recovered from the more
general \cref{z:incm1nGt-mIE} by
plugging in the local generators for time and energy.
While \cref{z:incm1nGt-mIE} appears
to be tight whenever the Robertson-Weyl uncertainty
relation~\eqref{z:0pRecQmVyaWF} is saturated for the two
generators, the bound can likely be improved when considering two generators
that have a small commutator.
We also present a sufficient condition under which a trade-off relation for any
two parameters can be obtained in the form of an equality, mirroring the
equality statement in our main trade-off relation for the time and energy
parameters.  One might have thought that equality in our general uncertainty
relation would happen only if the parameters are complementary in the sense of
\cref{z:ip.xvGb3bzRb,z:KI2eOYZK2.Ln}; in fact, it suffices that the parameters obey
some suitable complementarity relation on the support of the complementary
channel.  Therefore, equality in our general uncertainty relation does not
simply depend on the structure of the parameters $a,b$, but also on the noisy
channel $\mathcal{N}$.

\paragraph{Bounds on the quantum Fisher information.}
Computing the quantum Fisher information for general states involves the
calculation of the symmetric logarithmic derivative
in~\eqref{z:.eb-akuquBNd}.  This object is straightforward to determine
for pure states, it is simple when represented in the diagonal basis of the
state, and it can be computed using numerical methods such as the
Bartels-Stewart algorithm~\cite{R76}.
However, in the absence of a simple diagonal
representation of the state, it is in general difficult to characterize
analytically the Fisher information or to derive useful bounds on the Fisher
information that apply in general settings of mixed states, especially if the
state is rank deficient or close to the boundary of state space.  Our results
provide an alternative expression for the Fisher information in the scenario of
\cref{z:FnHui0ahNLxW}.  Combined with the powerful semidefinite
methods for the Fisher information reviewed in \cref{z:KH4B5FtzQ1kE},
we provide a general toolbox to characterize the Fisher information for mixed
states in a variety of situations.
For instance, for an interacting many-body system subject to noise that acts
locally, the noise process might be well approximated by an environment
that is small relative to the full many-body system.  In this case, the computation of
the Fisher information on Eve's end happens on a smaller-dimensional system.
This observation is for instance a main component of our
bound~\eqref{z:3YnRZBPm.g17}.

By applying known Fisher information bounds on Eve's system, our trade-off
relation enables us to straightforwardly obtain an opposite bound for the Fisher
information of Bob's noisy state (see \cref{z:qdkRM6aFmtdz}).  (Upper
bounds on the Fisher information can be difficult to obtain; see for instance
Refs.~\cite{R77,R30}.) An example of such a bound to apply
is the data-processing inequality for the Fisher
information~\cite{R18}: Further processing of a state that has
been exposed to the unknown parameter can only decrease the sensitivity with
respect to that parameter.  This procedure is useful when Eve obtains a state
that is not diagonal in the computational basis, making the Fisher information
harder to compute.  In such cases, we can dephase Eve's state to set all the
off-diagonal matrix elements to zero.  The resulting Fisher information for Eve
can only decrease; by our trade-off relation this immediately yields an upper
bound on Bob's Fisher information.  This bound for instance facilitates the
computation of the sensitivity loss of a state exposed to weak amplitude-damping
noise, as discussed in \cref{z:Pgp-gW09n9nZ}.

\paragraph{Metrological codes.}
Our main uncertainty relation leads to necessary and sufficient conditions for
when a clock state loses zero sensitivity when a given noisy channel is applied
(\cref{z:JZtfKZv25too}).  These conditions are a weaker
version of the Knill-Laflamme conditions for quantum error correction. Given a
clock state vector $\lvert {\psi}\rangle $ and a Hamiltonian $H$, we can consider the virtual qubit
$L$ spanned by the vectors $\lvert {\psi}\rangle $ and $H\lvert {\psi}\rangle $.  The clock state vector
$\lvert {\psi}\rangle $ loses no sensitivity under the application of a noisy channel with
Kraus operators $\{E_k\}$ if and only if all operators of the form
$E_{k'}^\dagger E_k$, when projected onto the virtual qubit, do not have any
overlap with the Pauli-$Z$ operator on the virtual qubit.
It would be in principle possible to prove these zero sensitivity-loss
conditions directly on Alice's and Bob's systems, without invoking our trade-off
relation; however, characterizing when Eve's Fisher information is zero provides
an immediate proof whose simplicity we have not been able to match with 
alternative techniques. %

The zero sensitivity-loss conditions~\eqref{z:zFMncm5bkXDt} bear similarities with classical codes,
where there is only a commutative algebra of observables that one wishes to
reproduce~\cite{R49,R47}.  Intuitively,
the conditions simply ensure that there is a measurement on Bob's system that
will reveal the time parameter as well as the local time-sensing observable on
Alice's system.  In contrast to fully quantum error correction, however, there
is in general no recovery operation that will restore the pure clock state
accurately to first order in the parameter (see
\cref{z:kLDpse.aVOd5} for a simple
counterexample).
An intriguing aspect of the zero sensitivity-loss conditions are that they do
not appear to be formally equivalent to quantum error correction with respect to
specific set of noise operators. (The results of~\cite{R21,R51} appear to indicate
that it might be possible to implement certain metrological codes as an error-correcting code
involving ancillary systems.)  In some cases, such as
the qubit example of
\cref{z:kLDpse.aVOd5}, the clock
state can be thought of as an error-correcting code that corrects only a
certain type of error ($X$ or $Y$ Pauli errors).  But this is not generally the
case---there are examples of a clock state and a (highly nonlocal) Hamiltonian
that fulfill the metrological code condition for low-weight errors, but that are
not quantum error-correcting codes with respect to neither low-weight $X$ errors
nor low-weight $Z$ errors
(\cref{z:VsSVmS3N9rs6}).

The conditions for zero sensitivity loss are closely related to the recent
series of works detailing how to use quantum error correction for metrology in
the presence of noise~\cite{R20,R21,R51,R71,R72,R78,R79}.
The main difference with our results is the setting that is being considered.
We ask which initial clock states one can prepare on the clock system such that
no sensitivity is lost when a noisy channel is applied (and what the associated
optimal sensing measurement after the application of the noisy channel is),
whereas the mentioned references consider the setting where, during the time a
probe system is exposed to the signal and the continuous noise, one can control
the probe~\cite{R80} to repeatedly apply the recovery
procedure associated with the quantum error-correcting code.

Metrological codes might be useful for ancillary measurements of error syndromes~\cite{R81}.
Consider an ancillary qudit (of dimension greater than two) which extracts a bit-valued error syndrome via an entangling gate, correlating a pair of states \(\lvert {\psi}\rangle ,\lvert {\xi}\rangle \) with respective binary syndrome values \(0,1\). If the ancillary subspace participating in syndrome extraction satisfies the zero sensitivity-loss conditions~\eqref{z:zFMncm5bkXDt} against physical noise, then, by definition, a measurement in the \(\{\lvert {\psi}\rangle ,\lvert {\xi}\rangle \}\) basis will not be affected by such noise. In other words, the loss conditions ensure protection against \(X\)-type logical noise, yielding more robust syndrome extraction using a \(Z\)-type measurement. However, such conditions do not preclude any \(Y\)-type logical noise. They also do not guarantee fault tolerance, which would require that ancilla errors not spread to any logical encoding via backaction.

\paragraph{Numerics for many-body systems.}
Characterizing the quantum Fisher information of a state exposed to a noise
channel using the bound presented in
\cref{z:308j2vzIlz3f} is convenient in
the setting of a many-body system subject to noise that acts locally.  In the
case of $n$ qubits prepared in a permutation-invariant state and exposed to an
i.i.d.\@ amplitude-damping noise channel, we empirically find that the
bound~\eqref{z:3YnRZBPm.g17} with $k=n$ appears reasonably tight for
the states that we investigated and for small values of the noise parameter $p$;
furthermore, for a selection of states including the GHZ state, the bound
appears to remain tight even in the regime of high values of $p$.
Our bound can be computed for systems of size $n\gtrsim 50$ on a standard
desktop computer.

If instead of on-site terms we consider only Ising-type nearest-neighbor
interactions, we can study the robustness of the example `metrological code'
introduced in \cref{z:dqhjUefIA37Z} to an i.i.d.\@
amplitude-damping channel with local noise parameter $p$.  This state %
retains its sensitivity after a single located error.  Our numerics show that in the
presence of i.i.d.\@ noise, the decrease in the quantum Fisher information
scales only as $p^2$, and not linearly in $p$ as for the other studied states
with similar sensitivity.
We observe that if we expose the interacting system to continuous
amplitude-damping noise, then the noise part and the unitary part of the
Lindblad evolution do not commute as superoperators  (i.e., the setting
is not that of phase-covariant noise); it is then possible that the
advantages of the metrological code state might not persist in the setting of
continuous noise.

\subsection{Outlook}

Our trade-off relation is perhaps most relevant in an intermediate regime where
the clock is exposed to a signal without the possibility for intermittent
quantum control.  In such cases, the noise is expected to spoil any Heisenberg
scaling that could be achieved using quantum error-correcting schemes due to the
lack of recovery operations during the evolution (see,
e.g., Refs.~\cite{R30,R21}).
Provided the setting can be modeled with a single noisy channel, our results
present an alternative expression for the sensitivity of the noisy probe in this
regime where the sensitivity is not yet dominated by the asymptotic scaling.
Our results might therefore help identify which states present sufficient
robustness to the noise to present an advantage in sensitivity with respect to
commonly used states (such as a GHZ state or a spin-coherent state).

In the situation where the clock evolves according to a Lindbladian master
  equation, and the Hamiltonian and noise parts of the Lindbladian fail to
  commute as superoperators, one might expect in certain cases to still
be able to consider time-dependent noise
using the following trick.  
Let us identify the system $A$ as a full copy of the bipartite system
$B\otimes E$, and let the unitary evolution of $\lvert {\psi(t)}\rangle $ cover both
systems.  A time-dependent channel $\mathcal{E}_t(\cdot)$ can be written as
$\mathcal{E}_t(\cdot) = \operatorname{tr}_E\bigl[ U(t) (\cdot) U^\dagger(t) \bigr]$ where all the
time dependence is encoded in the unitary $U(t)$ and where the environment
system $E$ is chosen suitably.  We then select the noisy channel
$\mathcal{N}_{A\to B} = \operatorname{tr}_E$ that simply performs the partial trace over $E$;
the complementary channel is correspondingly
$\widehat{\mathcal{N}}_{A\to E} = \operatorname{tr}_B$.  While writing a Markovian master
equation in this form might require a huge environment system $E$ with rapidly
mixing internal dynamics, we expect that our formalism can still account for
simple time dependence in the noisy channel in this way.  Note also that the
unitary $U(t)$ only has to approximate on $B$ the noisy channel $\mathcal{E}_t$
locally to first order around a fixed value of the parameter (e.g., $t=0$) in
order to determine the Fisher information.

A potential domain of application of our main uncertainty relation is for
\emph{quantum thermometry}~\cite{R82}, where the goal
is to estimate the temperature of a quantum system.  In the simple setting of
quantum thermometry where the temperature of the system is known to some
approximation, and a measurement is performed in order to refine that knowledge,
the optimal measurement to carry out is an energy
measurement~\cite{R82}.  Since our main
result~\eqref{z:OUF5HnwJdMZI} involves the sensitivity of a party with
respect to a parameter representing the energy, which is optimally measured
using the Hamiltonian of the noiseless system, we expect that one can leverage
our main results to yield new sensitivity bounds for quantum thermometry.

Our results are also likely to be relevant in situations where only a restricted
set of operators can be measured on a system.  Such a restriction could be
imposed by limitations in control for a given experimental platform.  Suppose we
prepare a clock state vector $\lvert {\psi}\rangle $ evolving noiselessly according to a Hamiltonian
$H$.  We would like to measure the clock at time $t\approx t_0$, but we are only
permitted to use a measurement from a given set of measurements.  What is the
optimal local sensitivity that we can achieve?  Should the set of allowed
measurement operators form an algebra, then the problem is equivalent to sending
the clock through a channel that represents the projection onto that algebra.
Our results then imply that the resulting sensitivity trades off exactly with
the sensitivity that one can achieve with the set of measurements in the
commutant of that algebra, with respect to the complementary parameter $\eta$.

It might be possible to extend our results to the multiparameter metrology
regime where more than one parameter is estimated by Bob.  There are known
uncertainty relations that determine trade-offs between the precision to which
individual parameters can be simultaneously estimated by a single
party~\cite{R83,R84,R71,R85,R86,R17}.
In fact, the $t$ and $\eta$ parameters form a so-called \emph{D-invariant
  model}~\cite{R87,R88,R85}, the latter referring to a multiparameter
quantum statistical model in which the tangent space is invariant under taking
symmetric logarithmic derivatives of the possible generated state-evolution
directions.  D-invariant models are interesting in multiparameter quantum
metrology, because different sensitivity bounds, which in general are difficult
to relate, can be shown to coincide~\cite{R85}.
It seems plausible that known multiparameter uncertainty relations can be
extended to the present bipartite setting, either where all parameters are
simultaneously estimated by Bob while Eve simultaneously estimates a set of
complementary parameters, or where a number of parties estimate each individual
parameter, where each party might be part of the output or the environment.

Our main uncertainty relation might offer a connection between the setting of
quantum metrology hindered by a noisy quantum channel and the setting of
multiparameter, noiseless quantum metrology.
It appears that a key ingredient for our main uncertainty relation is that the
$t$ and $\eta$ parameters form a D-invariant model, in the sense of the
preceding paragraph.  By construction, our uncertainty relation applies to any
pure state D-invariant model consisting of two complementary generators related
by \cref{z:w7NUzrfGUNE-}, given that we made no specific
assumptions about the parameter $t$ or its local Hermitian generator $H$.

However, it remains unclear whether our results extend to general multiparameter
D-invariant models, as our proof seems to utilize the fact that the space is
spanned by only two complementary generators $T$ and $H$.
D-invariant models have a rich geometric
structure~\cite{R87} which might prove an essential
conceptual component of our results; such connections nevertheless remain to be
better understood.
A further connection to D-invariant models appears in
\cref{z:oWCGAyhkI-Md},
which appears to be a D-invariance condition restricted onto the support
of the complementary channel $\widehat{\mathcal{N}}$.
In fact, one can view Bob's and Eve's measurements $T_B$ and $E$ as measurement
operators $\mathcal{N}^\dagger(T_B)$ and $\widehat{\mathcal{N}}^\dagger(E)$ on
Alice's system through the action of the adjoint channels $\mathcal{N}^\dagger$
and $\widehat{\mathcal{N}}^\dagger$.  We could ask whether our uncertainty
relation translates into a trade-off in how the two parameters $t$ and $\eta$
can be estimated by Alice, if the estimation of $t$ (respectively $\eta$) is
required to employ an observable in the set of operators that is specified as
the image of $\mathcal{N}^\dagger$ (respectively of
$\widehat{\mathcal{N}}^\dagger$).  It is not clear if this is the case, as the
quantum Fisher information attained by the observable $\mathcal{N}^\dagger(T_B)$
(respectively $\widehat{\mathcal{N}}^\dagger(E)$) on the state $\psi$ is not
necessarily expected to match the corresponding value of the quantum Fisher
information of $T_B$ on $\mathcal{N}[\psi]$ (respectively of $E$ on
$\widehat{\mathcal{N}}[\psi]$).
It is thus unclear if or how our main uncertainty relation is connected with
general bounds that hold in the multiparameter regime, such as multiparameter
versions of the quantum Cram\'er-Rao
bound~\cite{R85} or the Gill-Massar
inequality~\cite{R89,R90}.
We might expect that deeper connections can be developed between the setting of
parameter estimation after the application of a noisy channel and noiseless
multiparameter estimation.

Also, our results apply locally to first order around a given fixed value of the
unknown parameter; whether similar results can be derived in the global
parameter estimation regime~\cite{R91,R88,R92,R93,R94}
is unknown.  Global parameter estimation might be more relevant for applications
to atomic quantum clocks~\cite{R95,R96}.
We also anticipate extensions of our results to the finite-sample regime where
the quantum Fisher information might no longer accurately quantify the
sensitivity of a quantum state to an unknown
parameter~\cite{R97}.

Along a similar vein, there are settings where one seeks to compute
different variants of the quantum Fisher information.  For instance, the
so-called \emph{right-logarithmic derivative} (see,
e.g., Ref.~\cite{R86}) is often used to bound the standard quantum
Fisher information.  Alternative sensitivity measures include the truncated
Fisher information~\cite{R53}, which not only give useful
bounds on the standard quantum Fisher information but can be more relevant in
the regime of limited measurement data.  An interesting question would be to
study whether our results extend to such generalized sensitivity
measures.

Entropic uncertainty relations play a central role in quantum
cryptography~\cite{R10,R14,R15,R98}, and cryptographic schemes have been studied for
quantum metrology~\cite{R99}.  It is possible that
our parameter-estimation trade-off can similarly form the basis of cryptographic
schemes in which a parameter encoded in a quantum state is to be shielded from a
malevolent eavesdropper.
Furthermore, the Fisher information is closely related to relative entropy
measures~\cite{R100,R86,R101}; we might expect our trade-off
relation to translate into a statement about R\'enyi relative entropies.

The development of quantum atomic clocks as ultraprecise time
references~\cite{R102} makes it all the more important to
achieve a thorough understanding of how noise can be prevented from spoiling
sensitivity.
We also anticipate that our results will be relevant for recently developed
atomic clocks built with a lattice of interacting
atoms~\cite{R8} and correlated many-body sensing
probes~\cite{R103,R5}, as these platforms
will offer new possibilities for metrology by exploiting the strong
interactions between the particles.

\begin{acknowledgments}
The authors are grateful to
Fernando Brand\~ao,
Jonathan Conrad,
Rafa\l{} Demkowicz-Dobrza\'nski,
Richard K\"ung,
Johannes Meyer,
Yingkai Ouyang,
Renato Renner,
Ralph Silva,
Ryan Sweke,
and
Nathan Walk
for discussions.
We warmly thank Gian Michele Graf for his invaluable input for our proofs in infinite-dimensional spaces.
M.~W.\@ acknowledges support from the Swiss 
National Science Foundation (SNSF) via
an Ambizione Fellowship (PZ00P2\_179914).
M.~W.~and J.~M.~R.\@ acknowledge the National Centre of
Competence in Research QSIT. 
Ph.~F.\@ and J.~E.\@ acknowledge support from the DFG
(FOR 2724, CRC 183, EI 519/21-1), the FQXi, 
the QuantERA (HQCC),
the BMBF (RealistiQ, Hybrid, MuniQC-Atoms) 
and the Einstein Research Unit on quantum devices.  
This research is also part of the Munich Quantum Valley (K8), which is supported by the Bavarian state government with funds from the Hightech Agenda Bayern Plus. 
V.~V.~A.\@ acknowledges funding from NSF QLCI award No.\@ OMA-2120757. Contributions to this work by NIST, an agency  of  the  US  government, are  not  subject to US copyright. Any mention of commercial products  does  not  indicate  endorsement by NIST. 
J.~P.\@ acknowledges funding from the U.S.\@ Department of Energy Office of Science  (DE-NA0003525, DE-SC0020290, DE-ACO2-07CH11359, DE-SC0018407), the Simons Foundation It from Qubit Collaboration, the Air Force Office of Scientific Research (FA9550-19-1-0360), and the National Science Foundation (PHY-1733907). The Institute for Quantum Information and Matter is an NSF Physics Frontiers Center.
\end{acknowledgments}

\appendix

\section*{APPENDIX}

We first introduce some preliminaries and notation that will be used throughout
the Appendices.  All Hilbert spaces are finite dimensional unless otherwise
indicated, and all projectors are Hermitian.  A pure quantum state is a vector
$\lvert {\psi}\rangle $ in the Hilbert space that is normalized to unit norm, and the
terminology pure quantum state is also used by extension for the associated
density operator $\lvert {\psi}\rangle \mkern -1.8mu\relax \langle{\psi}\rvert $.  Quantum states are positive semidefinite
operators $\rho$ with unit trace, $\operatorname{tr}(\rho)=1$.  A subnormalized quantum state
is a positive semidefinite operator $\rho$ satisfying $\operatorname{tr}(\rho) \leq 1$.
States are normalized to unit trace unless explicitly specified as being
subnormalized.

For any Hermitian operator ${O} $, we denote by $P_{{O} }$ the projector
onto the support of ${O} $, and by $P_{{O} }^\perp = \mathds{1}- P_{{O}}$
its complement.  For any positive semidefinite operator $A$, we denote by
$A^{-1}$ its Moore-Penrose pseudoinverse, i.e., the operator obtained by taking
the inverse on the support of $A$.

We denote by $\opnorm{A}$ the maximal singular value of an operator $A$.  We
also define the Schatten one-norm as $\onenorm{A} = \operatorname{tr}\sqrt{ A^\dagger A}$.

It will prove convenient to ``vectorize'' operators by viewing them as vectors
in Hilbert-Schmidt space using the following representation.  The vector space
of operators acting on a Hilbert space $\mathscr{H}$ is isomorphic to $\mathscr{H}\otimes\mathscr{H}$.
Let $\lvert {1}\rrangle $ denote the element
\begin{align}
    \sum_{i=1}^d \lvert {i}\rangle \otimes\lvert {i}\rangle  \in \mathscr{H}\otimes\mathscr{H}, 
\end{align}
where
$\{\lvert {i}\rangle \}_{i=1}^d$ is a fixed basis of $\mathscr{H}$.  We define the ``vectorized''
representation of any operator $A$ acting on $\mathscr{H}$ as
$\lvert {A}\rrangle  = (A\otimes\mathds{1})\lvert {1}\rrangle $.  Similarly, we define
$\llangle {1}\rvert  = \sum_{i=1}^d \langle {i}\rvert \otimes\langle {i}\rvert $ and
$\llangle {A}\rvert  = \llangle {1}\rvert (A^\dagger\otimes\mathds{1})$.  We recall the useful identity
\begin{align}
(X\otimes \mathds{1})\lvert {1}\rrangle  = (\mathds{1}\otimes X^T)\lvert {1}\rrangle ,
\end{align}
and note that
$\lvert {1}\rrangle =\lvert {\mathds{1}}\rrangle $ is the vectorized operator representation of the
identity matrix $\mathds{1}$.  We denote a rank-one operator $\lvert {\phi}\rangle \mkern -1.8mu\relax \langle{\psi}\rvert $
in this representation as $\lvert {\phi,\psi}\rrangle $, with
$\lvert {\phi,\psi}\rrangle  = \lvert {\phi}\rangle \otimes(\lvert {\psi}\rangle )^*$ and
$\llangle {\phi,\psi}\rvert  = \langle {\phi}\rvert \otimes(\langle {\psi}\rvert )^*$.  The Hilbert-Schmidt inner
product in this notation is simply $\operatorname{tr}(A^\dagger B) = \llangle {A}\mkern 1.5mu\relax \vert \mkern 1.5mu\relax {B}\rrangle $.  The
matrix elements of $A$ in any basis $\{\lvert {\ell}\rangle \}$ are also simply given by
$\langle {\ell}\mkern 1.5mu\relax \vert \mkern 1.5mu\relax {A}\mkern 1.5mu\relax \vert \mkern 1.5mu\relax {\ell'}\rangle  = \llangle {\ell,\ell'}\mkern 1.5mu\relax \vert \mkern 1.5mu\relax {A}\rrangle $.  A superoperator
$\mathcal{E}$ acting on an operator 
${M}$ is denoted by $\mathcal{E} \lvert {{M}}\rrangle $.
The superoperator consisting of a left multiplication by $A$ and a
right multiplication by $B$, 
i.e., ${M}
\mapsto A{M}B$, is represented by
$\lvert {{M}}\rrangle  \mapsto (A\otimes B^T)\lvert {{M}}\rrangle $.  The identity superoperator
${\mathrm{id}}$ is represented by $\mathds{1}\otimes\mathds{1}$.  Also,
$\llangle {A}\mkern 1.5mu\relax \vert \mkern 1.5mu\relax {\mathcal{E}}\mkern 1.5mu\relax \vert \mkern 1.5mu\relax {B}\rrangle  = \llangle {B}\mkern 1.5mu\relax \vert \mkern 1.5mu\relax {\mathcal{E}^\dagger}\mkern 1.5mu\relax \vert \mkern 1.5mu\relax {A}\rrangle ^*$, where
$\mathcal{E}^\dagger$ is the usual superoperator adjoint defined by
$\operatorname{tr}({M}\mathcal{E}^\dagger({N})) = \operatorname{tr}(\mathcal{E}({M})\,{N})$.
In the following and unless otherwise stated, superoperators are expressed in
this representation, unless they are explicitly applied onto an operator with
the notation $\mathcal{E}(\cdot)$.

We now compute a few quantities that often recur throughout these appendices.
Let $\lvert {\psi}\rangle $ be a state vector and consider the evolution
$\partial_t\psi = -i[H,\psi]$ where $H$ is any Hermitian operator.  Let ${M}$ be any
Hermitian operator.  We have
\begin{align}
  \Bigl \langle { \Bigl(\frac{d\psi}{dt}\Bigr)^2 }\Bigr \rangle  =
  \bigl \langle { (-i[H,\psi])^2 }\bigr \rangle 
  &= -\operatorname{tr}\bigl[\psi (H\psi H\psi - H\psi H - \psi H^2 \psi + \psi
    H \psi H)\bigr]
    \nonumber\\
  &= \operatorname{tr}(\psi H^2) - [\operatorname{tr}(\psi H)]^2\ ;
    \label{z:XrleY6eW0oFd}
    \\[1ex]
  -i[ i[{M},\psi], \psi]
  &= ({M}\psi-\psi {M})\psi - \psi({M}\psi - \psi {M}) = \{{M}, \psi\} - 2\langle {{M}}\rangle \,\psi
    \nonumber\\
  &= \{{M}-\langle {{M}}\rangle , \psi\}\ .
    \label{z:v.5MaZHkqw71}
\end{align}

\section{Auxiliary lemmas}
\label{z:2dfjPpYV0kEA}

The notion of Schur complement will serve multiple times in these appendices, so
we state it here.

\begin{theorem}[Positive semidefiniteness via Schur complement]
  \noproofref
  \label{z:pZMoi26flsRb}
  Let $A\in \mathbb{C}^{n\times n}$, $B\in\mathbb{C}^{m\times m}$ be positive
  semidefinite matrices.  Let $W\in\mathbb{C}^{n\times m}$ be an arbitrary
  complex matrix.  The following statements are equivalent:
  \begin{enumerate}[label=(\roman*)]
  \item \label{z:diSpiJAhHEfT}
$\displaystyle \begin{bmatrix} A & W \\ W^\dagger & B \end{bmatrix} \geq 0$ ,
  \item \label{z:qRGCRW7.q0UZ}
    $W P_B^\perp = 0$\quad and\quad $A\geq WB^{-1}W^\dagger$ ,
  \item \label{z:Jz89LjC6I9Xk}
    $P_A^\perp W = 0$\quad and\quad $B\geq W^\dagger A^{-1}W$ .
  \end{enumerate}
  Moreover, \ref{z:qRGCRW7.q0UZ} implies $P_A^\perp W = 0$ and
  \ref{z:Jz89LjC6I9Xk} implies $WP_B^\perp = 0$.
\end{theorem}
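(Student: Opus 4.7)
The plan is to prove the equivalence (i) $\Leftrightarrow$ (ii) directly, and then deduce (i) $\Leftrightarrow$ (iii) by symmetry: conjugating the block matrix in (i) by the swap $\begin{bmatrix} 0 & \Ident \\ \Ident & 0 \end{bmatrix}$ exchanges the roles of $(A,B,W)$ with $(B,A,W^\dagger)$, so (iii) is just (ii) with this exchange. The two key ingredients for (i) $\Leftrightarrow$ (ii) are (a) that positivity of the block matrix forces $W$ to vanish on $\ker B$, and (b) a congruence transformation using the Moore--Penrose pseudoinverse $B^{-1}$ that block-diagonalizes the matrix.

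For ingredient (a), I would argue that for any $u \in \ker B$ the vector $(0,u)$ lies in the kernel of the sesquilinear form associated to the block matrix ($u^\dagger B u = 0$), and positive semi-definiteness together with the Cauchy--Schwarz inequality for positive sesquilinear forms then forces $v^\dagger W u = 0$ for every $v$, i.e., $Wu = 0$; this gives $WP_B^\perp = 0$. For ingredient (b), once $WP_B^\perp = 0$ is known, the identity $WB^{-1}B = WP_B = W$ is the crucial point that allows the congruence
\begin{equation*}
\begin{bmatrix} \Ident & -WB^{-1} \\ 0 & \Ident \end{bmatrix}
\begin{bmatrix} A & W \\ W^\dagger & B \end{bmatrix}
\begin{bmatrix} \Ident & 0 \\ -B^{-1}W^\dagger & \Ident \end{bmatrix}
= \begin{bmatrix} A - WB^{-1}W^\dagger & 0 \\ 0 & B \end{bmatrix}
\end{equation*}
to produce zero off-diagonal blocks (the verification uses $WB^{-1}B = W$ once on each side). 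Since the outer factors are invertible and $B\geq 0$ by hypothesis, the left-hand side is positive semi-definite iff $A \geq WB^{-1}W^\dagger$, giving (i) $\Leftrightarrow$ (ii) in both directions simultaneously.

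For the ``moreover'' assertion I would use the bound $A \geq WB^{-1}W^\dagger \geq 0$ implied by (ii): any $v \in \ker A$ satisfies $\|(B^{-1})^{1/2}W^\dagger v\|^2 \leq v^\dagger A v = 0$, so $W^\dagger v \in \ker B$. Combining with $WP_B^\perp = 0$ (equivalently $P_B^\perp W^\dagger = 0$, which places the range of $W^\dagger$ inside the support of $B$) yields $W^\dagger v \in \ker B \cap (\ker B)^\perp = \{0\}$, hence $P_A^\perp W = 0$. The analogous argument with $A$ and $B$ swapped handles the implication from (iii). I expect the main point requiring care to be the bookkeeping with the Moore--Penrose pseudoinverse: every manipulation of $B^{-1}$ relies on the prior fact $WP_B^\perp = 0$ so that identities formally valid for invertible $B$ extend to the singular case, which is precisely why the ``moreover'' clause is both nontrivial and essential to the statement.
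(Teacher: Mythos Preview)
Your proof is correct and complete. The paper does not actually prove the main equivalence (i)$\Leftrightarrow$(ii)$\Leftrightarrow$(iii) at all---it simply refers the reader to the Schur complement chapter by Horn and Zhang---so your Cauchy--Schwarz plus congruence argument is more than the paper provides. For the ``moreover'' clause, the paper does give an argument, and it is essentially the same as yours: it sandwiches $A \geq WB^{-1}W^\dagger$ with $P_A^\perp(\cdot)P_A^\perp$ to obtain $P_A^\perp W B^{-1} W^\dagger P_A^\perp = 0$, deduces $P_A^\perp W B^{-1/2} = 0$, and then invokes $WP_B^\perp = 0$ to conclude $P_A^\perp W = 0$. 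Your vector-by-vector phrasing (take $v\in\ker A$, show $W^\dagger v \in \ker B \cap (\ker B)^\perp$) is the same reasoning unpacked at the level of individual kernel vectors.
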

For a proof, see, e.g.,\@ Ref.~\cite{R104}.  With respect to the
proof of similar statements in standard textbooks, we can see that
\ref{z:qRGCRW7.q0UZ} implies $P_A^\perp W = 0$ as follows:
Hitting the inequality with $P_A^\perp(\cdot)P_A^\perp$ and noting that
$W B^{-1} W^\dagger\geq 0$, we see that
$P_A^\perp W B^{-1} W^\dagger P_A^\perp = 0$, which implies
$P_A^\perp W B^{-1/2} =0$.  Therefore, $P_A^\perp W = 0$ using the fact that
$W P_B^\perp = 0$.  Similarly, \ref{z:Jz89LjC6I9Xk} implies
$W P_B^\perp=0$.

Now we present a simple method to relate operator inequalities before and after
the application of a completely positive map.

\begin{lemma}[Positive semidefiniteness of block matrices under completely positive maps]
  \label{z:TV-YjOuDas4P}
  Let $A,B,W\in\mathbb{C}^{n\times n}$ be complex matrices and assume that
  \begin{align}
    \begin{bmatrix}
      A & W \\
      W^\dagger & B
    \end{bmatrix}
    \geq 0\ .
   \label{z:2QBAslrmHUQN}
  \end{align}
  Let $\Phi$ be any completely positive map that maps operators on
  $\mathbb{C}^{n}$ to operators on $\mathbb{C}^{m}$.  Then
  \begin{align}
    \begin{bmatrix}
      \Phi(A) & \Phi(W) \\
      \Phi(W^\dagger) & \Phi(B)
    \end{bmatrix}
    \geq 0\ .
   \label{z:QLgSmQAh5Nwn}
  \end{align}
\end{lemma}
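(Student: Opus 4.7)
The plan is to recognize the block matrix as an element of $M_2(\mathbb{C}) \otimes M_n(\mathbb{C})$ and use the very definition of complete positivity, or equivalently a Kraus decomposition. I will present the Kraus-based route since it is the most elementary and self-contained.

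First I would invoke a Kraus representation of $\Phi$: since $\Phi$ is completely positive, there exist operators $K_\alpha : \mathbb{C}^n \to \mathbb{C}^m$ such that $\Phi(X) = \sum_\alpha K_\alpha X K_\alpha^\dagger$ for every $n \times n$ matrix $X$. For each index $\alpha$, form the $2m \times 2n$ block matrix $L_\alpha = \begin{bmatrix} K_\alpha & 0 \\ 0 & K_\alpha \end{bmatrix}$. Then a direct computation gives
\begin{equation*}
  L_\alpha \begin{bmatrix} A & W \\ W^\dagger & B \end{bmatrix} L_\alpha^\dagger
  = \begin{bmatrix} K_\alpha A K_\alpha^\dagger & K_\alpha W K_\alpha^\dagger \\ K_\alpha W^\dagger K_\alpha^\dagger & K_\alpha B K_\alpha^\dagger \end{bmatrix}.
\end{equation*}

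Next, since $\begin{bmatrix} A & W \\ W^\dagger & B \end{bmatrix} \geq 0$ by hypothesis, conjugation by $L_\alpha$ preserves positive semi-definiteness, so each of the matrices displayed above is $\geq 0$. Summing over $\alpha$ (and noting that a sum of positive semi-definite matrices is positive semi-definite) yields
\begin{equation*}
  \sum_\alpha L_\alpha \begin{bmatrix} A & W \\ W^\dagger & B \end{bmatrix} L_\alpha^\dagger
  = \begin{bmatrix} \Phi(A) & \Phi(W) \\ \Phi(W^\dagger) & \Phi(B) \end{bmatrix} \geq 0,
\end{equation*}
which is exactly~\eqref{z:QLgSmQAh5Nwn}.

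There is no substantive obstacle here; the only subtlety is just that one must use the full strength of complete positivity (not mere positivity), which is why the Kraus form is the cleanest vehicle. An equivalent, slightly more abstract phrasing would identify the $2 \times 2$ block matrix as $\sum_{ij} E_{ij} \otimes M_{ij} \in M_2 \otimes M_n$ and simply apply $\mathrm{id}_{M_2} \otimes \Phi$, whose positivity is built into the definition of $\Phi$ being completely positive. Both routes give the result in a few lines.
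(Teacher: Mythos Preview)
Your proof is correct. The paper's proof is exactly the abstract phrasing you mention at the end: it simply observes that the block matrix in~\eqref{z:QLgSmQAh5Nwn} is obtained by applying $\mathrm{id}_{M_2}\otimes\Phi$ to~\eqref{z:2QBAslrmHUQN}, which preserves positivity by the definition of complete positivity. Your Kraus-operator computation is just an explicit unpacking of that same step, so the two routes are essentially identical.
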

\begin{proof}[**z:TV-YjOuDas4P]
  The matrix~\eqref{z:QLgSmQAh5Nwn} is obtained by applying the
  completely positive map ${\mathrm{id}}_{2}\otimes\Phi$
  onto~\eqref{z:2QBAslrmHUQN}.
\end{proof}
While the above lemma is fairly trivial, 
paired with \cref{z:pZMoi26flsRb} it
enables us to show less obvious inequalities such as the following.
\begin{corollary}[Image of matrix squared under a subunital, completely positive map]
  \label{z:2McXzF-3arvz}
  Let ${M}$ be a Hermitian operator and let $\Phi$ be any completely positive map
  that satisfies $\Phi(\mathds{1})\leq\mathds{1}$.  Then $ \Phi({M}^2) \geq [\Phi({M})]^2 $.
\end{corollary}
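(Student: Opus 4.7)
The plan is to reduce the claim to a block-matrix positivity statement that one can push through $\Phi$ via \cref{z:TV-YjOuDas4P}, and then extract the desired inequality using the Schur complement characterization of \cref{z:pZMoi26flsRb}. This is the standard route to Kadison--Schwarz style inequalities, and the sub-unital case requires only a small additional step.

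First, I would observe the elementary factorization
\begin{align}
    \begin{bmatrix} M^2 & M \\ M & \Ident \end{bmatrix}
    = \begin{bmatrix} M \\ \Ident \end{bmatrix} \begin{bmatrix} M & \Ident \end{bmatrix}
    \geq 0\ ,
\end{align}
which is manifestly positive semi-definite (and also follows directly from \cref{z:pZMoi26flsRb} applied with $A = M^2$, $B = \Ident$, $W = M$, since $M P_\Ident^\perp = 0$ trivially and $A \geq W B^{-1} W^\dagger$ reads $M^2 \geq M^2$). Applying \cref{z:TV-YjOuDas4P} to the completely positive map $\Phi$ then yields
\begin{align}
    \begin{bmatrix} \Phi(M^2) & \Phi(M) \\ \Phi(M) & \Phi(\Ident) \end{bmatrix}
    \geq 0\ .
\end{align}

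Next, I would use sub-unitality to upgrade $\Phi(\Ident)$ in the lower-right block to $\Ident$. Since $\Ident - \Phi(\Ident) \geq 0$, the block matrix $\begin{bmatrix} 0 & 0 \\ 0 & \Ident - \Phi(\Ident) \end{bmatrix}$ is positive semi-definite, and adding it to the previous inequality gives
\begin{align}
    \begin{bmatrix} \Phi(M^2) & \Phi(M) \\ \Phi(M) & \Ident \end{bmatrix}
    \geq 0\ .
\end{align}
Finally, I would apply \cref{z:pZMoi26flsRb} in the direction \ref{z:qRGCRW7.q0UZ} with $A = \Phi(M^2)$, $B = \Ident$, $W = \Phi(M)$. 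Since $\Ident$ has full rank, the condition $W P_B^\perp = 0$ is automatic, and the Schur-complement inequality $A \geq W B^{-1} W^\dagger$ is exactly $\Phi(M^2) \geq [\Phi(M)]^2$, which is what we wanted to prove.

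I do not anticipate any substantive obstacle. The only subtlety is the passage from $\Phi(\Ident) \leq \Ident$ to an inequality with $\Ident$ in the lower-right block, and this is handled cleanly by the additive argument above rather than by, say, trying to invert $\Phi(\Ident)$, which could be rank deficient. The proof uses only the two previously stated auxiliary lemmas and the Hermiticity of $M$ (which ensures $M^2$ is Hermitian so that the block matrix is Hermitian to begin with).
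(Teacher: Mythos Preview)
Your proof is correct and follows essentially the same approach as the paper: factor the block matrix, push it through $\Phi$ via \cref{z:TV-YjOuDas4P}, use sub-unitality to replace $\Phi(\Ident)$ by $\Ident$ in the lower-right block, and extract the inequality via the Schur complement \cref{z:pZMoi26flsRb}. The only cosmetic difference is that the paper writes the sub-unitality step as a chain of block-matrix inequalities rather than as adding a positive semi-definite correction, which is the same manipulation.
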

\begin{proof}[**z:2McXzF-3arvz]
  Observe first that $\begin{bsmallmatrix} {M}^2 & {M} \\ {M} & \mathds{1}\end{bsmallmatrix} =
  \begin{bsmallmatrix}{M} & \mathds{1}\end{bsmallmatrix}^\dagger \begin{bsmallmatrix}{M}
    & \mathds{1}\end{bsmallmatrix}\geq 0$.  With the subunitality condition on
  $\Phi$ and \cref{z:TV-YjOuDas4P}, we have
  \begin{align}
    \begin{bsmallmatrix} \Phi({M}^2) & \Phi({M}) \\ \Phi({M}) & \mathds{1}\end{bsmallmatrix}
    \geq
    \begin{bsmallmatrix} \Phi({M}^2) & \Phi({M}) \\ \Phi({M}) & \Phi(\mathds{1}) \end{bsmallmatrix}
    \geq 0\ .
  \end{align}
  Thanks to \cref{z:pZMoi26flsRb}, this implies
  $\Phi({M}^2) \geq \Phi({M})\,\mathds{1}^{-1}\,\Phi({M}) = [\Phi({M})]^2$.
\end{proof}

\section{Solutions of the anticommutator equation}
\label{z:V.Yy8m55j7Q5}

In this Appendix, 
we briefly review the solutions of the anticommutator equation
\begin{align}
  \frac12\{\rho, {M}\} = {N}\ ,
\end{align}
where ${M}$ is the unknown operator, ${N}$ is a fixed operator, $\rho$ is a
subnormalized quantum state, and $\{A,B\} := AB+BA$ denotes the anticommutator.

For any subnormalized state $\rho$, it is convenient to define the
Hermiticity-preserving super-operator $\mathcal{R}_\rho$ as
\begin{align}
  \label{z:zmzb11eAMq9d}
 \mathcal{R}_\rho(\cdot) = \frac12 \bigl\{ \rho, (\cdot) \bigr\}\ .
\end{align}
Note that $\mathcal{R}_\rho$ is neither completely positive nor
trace preserving.  The operator $\mathcal{R}_\rho$ is self-adjoint, since
$\operatorname{tr}({N} \mathcal{R}_\rho({M})) = \frac12 \operatorname{tr}({N} \{\rho, {M}\}) = \frac12\operatorname{tr}(\{\rho,{N}\}
{M}) = \operatorname{tr}(\mathcal{R}_\rho({N})\,{M})$.
It is interesting to study the superoperator $\mathcal{R}_\rho$ as a linear
operator in Hilbert-Schmidt space.  In vectorized operator space, it is
represented as
\begin{align}
  \label{z:I9svpWrHYjZf}
  \mathcal{R}_\rho = \frac12\bigl( \rho\otimes\mathds{1}+ \mathds{1}\otimes\rho^T \bigr)\ .
\end{align}
This matrix is Hermitian and positive, and it is positive definite if and only
if $\rho$ has full rank.  The fact that the vectorized matrix representing
$\mathcal{R}_\rho$ is positive is not to be confused with the usual notion of a
superoperator being positive, which means preserving the positivity of its
argument.  Here, $\mathcal{R}_\rho$ has a positive semidefinite vectorized
representation, which means that $\llangle {{M}}\mkern 1.5mu\relax \vert \mkern 1.5mu\relax { \mathcal{R}_\rho }\mkern 1.5mu\relax \vert \mkern 1.5mu\relax {{M}}\rrangle \geq 0$ for
all operators $\lvert {{M}}\rrangle $.

Suppose for a moment that $\rho$ has full rank.  Then the superoperator
$\mathcal{R}_\rho$ can be inverted, because its vectorized operator matrix
representation has full rank, and we denote the inverse by
$\mathcal{R}_\rho^{-1}$.  The operator ${M}=\mathcal{R}_\rho^{-1}({N})$ is then the
unique solution to the anticommutator equation $\frac12\{\rho, {M}\} = {N}$.  If
$\{\lvert {k}\rangle \}$ is a basis of the Hilbert space that diagonalizes $\rho$ as
$\rho = \sum_k p_k\lvert {k}\rangle \mkern -1.8mu\relax \langle{k}\rvert $, then~\eqref{z:I9svpWrHYjZf} provides a
diagonal representation of $\mathcal{R}_\rho$, and we obtain the familiar
expression of $\mathcal{R}_\rho^{-1}$ as
\begin{align}
  \label{z:5DFjqResTjZ.}
  \mathcal{R}_\rho^{-1} \lvert {{N}}\rrangle 
  &= \sum_{k,k'} \frac2{p_k+p_{k'}} \lvert {k,k'}\rrangle \llangle {k,k'}\mkern 1.5mu\relax \vert \mkern 1.5mu\relax {{N}}\rrangle \ ,
    \text{\ i.e.,}
  &
  \mathcal{R}_\rho^{-1}({N})
  &= \sum_{k,k'} \frac2{p_k+p_{k'}} \langle {k}\mkern 1.5mu\relax \vert \mkern 1.5mu\relax {{N}}\mkern 1.5mu\relax \vert \mkern 1.5mu\relax {k'}\rangle \, \lvert {k}\rangle \mkern -1.8mu\relax \langle{k'}\rvert \ .
\end{align}
If $\rho$ is not full rank, then we define $\mathcal{R}_\rho^{-1}$ as the
Moore-Penrose inverse of the superoperator $\mathcal{R}_\rho$, i.e., we take the
inverse on its support.  From~\eqref{z:I9svpWrHYjZf} we can identify
the kernel $\ker\mathcal{R}_\rho$ of the superoperator $\mathcal{R}_\rho$ as the
space spanned by operators of the form $\lvert {\phi,\psi}\rrangle $ where
$P_\rho\lvert {\phi }\rangle = P_\rho\lvert {\psi}\rangle =0$, where $P_\rho$ is the projector onto the
support of $\rho$.  If $\{\lvert {k}\rangle \}$ is a basis of the Hilbert space that
diagonalizes $\rho$ as $\rho = \sum_k p_k\lvert {k}\rangle \mkern -1.8mu\relax \langle{k}\rvert $,
then~\eqref{z:I9svpWrHYjZf} is diagonal in the basis $\{\lvert {k,k'}\rrangle \}$
and we see that the expression~\eqref{z:5DFjqResTjZ.} remains the
correct expression for $\mathcal{R}_\rho^{-1}$, provided we only keep those
terms in the sum for which $p_k+p_{k'}\neq 0$.

We may now state the following useful proposition that characterizes the full
solution set of the anticommutator equation $\frac12\{\rho, {M}\} = {N}$ for ${M}$.
\begin{proposition}[Solutions to the anticommutator equation]
  \label{z:uWDssQGxkLXS}
  Let $\rho$ be any subnormalized quantum state, let ${N}$ be any operator, and
  let $\mathcal{R}_\rho$ be given by~\eqref{z:zmzb11eAMq9d}.  Let $P_\rho$ denote
  the projector onto the support of $\rho$ and let $P_\rho^\perp=\mathds{1}-P_\rho$.
  Then the set $\mathcal{S}$ of solutions of the equation
  $Y = 
  \mathcal{R}_\rho({M})$ for the operator ${M}$ is
  \begin{align}
    \mathcal{S} = \begin{cases}
      \emptyset & \textup{if \(P_\rho^\perp {N} P_\rho^\perp \neq 0\)}\ ; \\
      \bigl\{ \mathcal{R}_\rho^{-1}({N}) + P_\rho^\perp {M'} P_\rho^\perp\ :\ {M'}\ \textup{any operator}\bigr\}
      & \textup{if \(P_\rho^\perp {N} P_\rho^\perp = 0\)}\ ,
    \end{cases}
  \end{align}
  where $\mathcal{R}_\rho^{-1}$ denotes as above the Moore-Penrose pseudoinverse
  of the superoperator $\mathcal{R}_\rho$.
  Furthermore, if $Y$ is Hermitian, then the set $\mathcal{S}_H$ of Hermitian
  solutions of the equation ${N} = \mathcal{R}_\rho({M})$ for the operator ${M}$ is
  \begin{align}
    \mathcal{S}_H = \begin{cases}
      \emptyset & \textup{if \(P_\rho^\perp {N} P_\rho^\perp \neq 0\)}\ ; \\
      \bigl\{ \mathcal{R}_\rho^{-1}({N}) + P_\rho^\perp {M'}' P_\rho^\perp\ :\ {M'}'\ \textup{any Hermitian operator}\bigr\}
      & \textup{if \(P_\rho^\perp {N} P_\rho^\perp = 0\)}\ ,
    \end{cases}
  \end{align}
  where $\mathcal{R}_\rho^{-1}(Y)$ is always a Hermitian operator.
\end{proposition}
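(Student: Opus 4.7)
The plan is to exploit the vectorized-operator representation introduced in the preamble, in which $\mathcal{R}_\rho$ becomes the ordinary positive semi-definite matrix $\frac{1}{2}(\rho\otimes\Ident + \Ident\otimes\rho^T)$ acting on $\Hs\otimes\Hs$. The entire proof then reduces to a standard linear-algebra statement: for a Hermitian positive semi-definite linear map, the equation $\mathcal{R}_\rho\oket{M}=\oket{N}$ has a solution iff $\oket{N}$ lies in the range of $\mathcal{R}_\rho$, and the full solution set is any particular solution plus $\ker\mathcal{R}_\rho$.

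First I would diagonalize $\rho=\sum_k p_k\proj{k}$ and observe that in the basis $\{\oket{k,k'}\}$ of $\Hs\otimes\Hs$ the superoperator $\mathcal{R}_\rho$ is diagonal with eigenvalues $(p_k+p_{k'})/2$. This immediately identifies $\ker\mathcal{R}_\rho$ as the span of those $\oket{k,k'}$ with $p_k=p_{k'}=0$, i.e.\ precisely the operators of the form $P_\rho^\perp M' P_\rho^\perp$. Since $\mathcal{R}_\rho$ is self-adjoint with respect to the Hilbert--Schmidt inner product, its range is the orthogonal complement of its kernel, which is exactly the set of operators $N$ satisfying $P_\rho^\perp N P_\rho^\perp=0$ (equivalently, $\obraket{k,k'}{N}=0$ whenever $p_k=p_{k'}=0$).

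This dichotomy yields both cases at once. If $P_\rho^\perp N P_\rho^\perp\neq 0$, then $\oket{N}\notin\operatorname{ran}\mathcal{R}_\rho$ so $\mathcal{S}=\emptyset$. If $P_\rho^\perp N P_\rho^\perp=0$, then $\mathcal{R}_\rho^{-1}(N)$, as defined via the Moore--Penrose pseudoinverse in~\eqref{z:5DFjqResTjZ.}, is a particular solution (one checks directly from the diagonal form that $\mathcal{R}_\rho\mathcal{R}_\rho^{-1}\oket{N}=\oket{N}$ precisely under this range condition), and the full solution set is this particular solution plus an arbitrary element of $\ker\mathcal{R}_\rho$, which is the claimed form.

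For the Hermitian refinement, I would note that $\mathcal{R}_\rho$ preserves the real subspace of Hermitian operators, and hence so does its pseudoinverse; therefore, if $N$ is Hermitian, $\mathcal{R}_\rho^{-1}(N)$ is Hermitian. Writing any solution as $M=\mathcal{R}_\rho^{-1}(N)+K$ with $K\in\ker\mathcal{R}_\rho$, Hermiticity of $M$ is equivalent to Hermiticity of $K$, and Hermitian elements of $\ker\mathcal{R}_\rho$ are exactly operators of the form $P_\rho^\perp M'' P_\rho^\perp$ with $M''$ Hermitian. The only step requiring a small amount of care is checking that the Moore--Penrose pseudoinverse as defined on the superoperator $\mathcal{R}_\rho$ agrees with the explicit formula~\eqref{z:5DFjqResTjZ.} when acting on operators $N$ satisfying the range condition, and that it really does preserve Hermiticity block by block; this is routine from the diagonal form but is the main bookkeeping point to verify explicitly.
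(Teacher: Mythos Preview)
Your proposal is correct and follows essentially the same approach as the paper: both work in the vectorized-operator representation where $\mathcal{R}_\rho$ is a positive semi-definite matrix, identify $\ker\mathcal{R}_\rho$ with operators of the form $P_\rho^\perp M' P_\rho^\perp$, and invoke the standard linear-algebra fact that the solution set is the pseudoinverse particular solution plus the kernel. The paper phrases the kernel projector directly as $\mathcal{P}^\perp = P_\rho^\perp\otimes(P_\rho^\perp)^T$ rather than going through the eigenbasis explicitly, but this is a cosmetic difference.
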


This proposition is essentially obvious if we think of superoperators as linear
operators in Hilbert-Schmidt space.  Indeed, it is well known that the general
solution to a system of equations given in matrix form can be expressed by the
matrix pseudoinverse, plus anything that is in the matrix kernel.

\begin{proof}[**z:uWDssQGxkLXS]
  Let $\mathcal{P}^\perp$ be the superoperator projector onto the kernel of
  $\mathcal{R}_\rho$.  In the vectorized-operator representation, we have
  $\mathcal{P}^\perp := P_\rho^\perp\otimes (P_{\rho}^\perp)^T$ as can be seen
  from~\eqref{z:I9svpWrHYjZf}.  Let
  $\mathcal{P} = {{\mathrm{id}}} - \mathcal{P}^\perp$ be the superoperator
  projector onto the complementary operator subspace, which is the support of
  $\mathcal{R}_\rho$.  Observe that
  $\mathcal{R}_\rho \mathcal{R}_\rho^{-1} =
  \mathcal{R}_\rho^{-1}\mathcal{R}_\rho = \mathcal{P}$ and that
  $\mathcal{R}_\rho \mathcal{P}^{\perp} = 0$.

  The claim we want to show is that if $\mathcal{P}^\perp\lvert {{N}}\rrangle  \neq 0$, then
  there is no solution to the equation $\lvert {{N}}\rrangle  = \mathcal{R}_\rho \lvert {{M}}\rrangle $;
  otherwise, then the equation is satisfied if and only if
  \begin{align}
  \lvert {{M}}\rrangle  = \mathcal{R}_\rho^{-1}\lvert {{N}}\rrangle  + \mathcal{P}^\perp \lvert {{M'}}\rrangle 
  \end{align}
  for
  some operator $\lvert {{M'}}\rrangle $.
  The condition $\mathcal{P}^\perp\lvert {{N}}\rrangle  = 0$ is necessary for any solution to
  the equation $\lvert {{N}}\rrangle  = \mathcal{R}_\rho \lvert {{M}}\rrangle $ to exist, as otherwise
  $\lvert {{N}}\rrangle $ would not be in the range of $\mathcal{R}_\rho$.  We can therefore
  assume for the rest of this proof that $\mathcal{P}^\perp\lvert {{N}}\rrangle  = 0$.

  Suppose ${M}$ solves $\mathcal{R}_\rho \lvert {{M}}\rrangle  = \lvert {{N}}\rrangle $.  Applying
  $\mathcal{R}_\rho^{-1}$ on both sides, we have
  $\mathcal{P} \lvert {{M}}\rrangle  = \mathcal{R}_\rho^{-1} \lvert {{N}}\rrangle $, which determines
  $\lvert {{M}}\rrangle $ on the operator space projected onto by $\mathcal{P}$.  On the
  complementary space (associated with $\mathcal{P}^\perp$), the operator
  $\lvert {{M}}\rrangle $ can be arbitrary because this subspace is the kernel of
  $\mathcal{R}_\rho$.  A general operator in this subspace can be written as
  $\mathcal{P}^\perp \lvert {{M'}}\rrangle $ for some operator ${M'}$.  This proves that the
  solution ${M}$ must have the form given in the claim.
  Conversely, if
  \begin{align}
  \lvert {{M} }\rrangle = \mathcal{R}_\rho^{-1}\lvert {{N}}\rrangle  + \mathcal{P}^{\perp}
  \lvert {{M'}}\rrangle 
  \end{align} for
  some operator $\lvert {{M'}}\rrangle $, then we see that
  $\mathcal{R}_\rho
  \lvert {{M}}\rrangle  = \mathcal{R}_\rho \,
  \bigl(\mathcal{R}_\rho^{-1}\lvert {{N}}\rrangle  + \mathcal{P}^{\perp}\lvert {X'}\rrangle \bigr) =
  \mathcal{P}\lvert {{N}}\rrangle  = \lvert {{N}}\rrangle $, thus proving the claim.

  If ${N}$ is Hermitian, then $\mathcal{R}_\rho^{-1}({N})$ is Hermitian because
  $\mathcal{R}_\rho$, and hence $\mathcal{R}_\rho^{-1}$, is
  Hermiticity preserving.  Any two Hermitian solutions ${M}_0,{M}_1$, as seen above,
  must differ by a term $P_\rho^\perp {M'} P_\rho^\perp$ for some arbitrary ${M'}$;
  because the difference ${M}_0-{M}_1$ is Hermitian, ${M'}$ can be chosen to be
  Hermitian as well (specifically, one can set ${M'}'=({M'}+{M'}^\dagger)/2$).
\end{proof}

We now compute the map $\mathcal{R}_\psi^{-1}(\cdot)$ in the case the reference
state is a pure (normalized) state vector $\lvert {\psi}\rangle $.

\begin{proposition}[Computing $\mathcal{R}_\rho^{-1}$ when $\rho$ is a pure state]
  \label{z:qtGZ5Y7B.DE1}
  Let $\lvert {\psi}\rangle $ be a (normalized) state vector and let
  $P_\psi^\perp = \mathds{1}- \lvert {\psi}\rangle \mkern -1.8mu\relax \langle{\psi}\rvert $.  Then for any Hermitian ${O}$ we have
  \begin{align}
    \mathcal{R}_\psi^{-1}({O}) = 2({O} - P_\psi^\perp {O} P_\psi^\perp) - \langle {{O}}\rangle _\psi \psi\ .
  \end{align}
\end{proposition}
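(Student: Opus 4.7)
The plan is to directly verify that the candidate operator ${M} := 2({O} - P_\psi^\perp {O} P_\psi^\perp) - \avg{{O}}_\psi \psi$ satisfies the two characterizing properties of the Moore-Penrose pseudoinverse applied to ${O}$: namely, (i) that ${M}$ lies in the support of the superoperator $\mathcal{R}_\psi$, and (ii) that $\mathcal{R}_\psi({M})$ equals the projection of ${O}$ onto the range of $\mathcal{R}_\psi$. From the discussion preceding \cref{z:uWDssQGxkLXS}, the kernel of $\mathcal{R}_\psi$ consists exactly of operators of the form $P_\psi^\perp {M'} P_\psi^\perp$, so these two conditions uniquely determine $\mathcal{R}_\psi^{-1}({O})$.

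For property (i), I would compute $P_\psi^\perp {M} P_\psi^\perp$ directly: the first term contributes $2 P_\psi^\perp {O} P_\psi^\perp$, the second contributes $-2 P_\psi^\perp {O} P_\psi^\perp$ (since $P_\psi^\perp$ is a projector), and the third vanishes because $P_\psi^\perp \psi = 0$. Thus $P_\psi^\perp {M} P_\psi^\perp = 0$, so ${M}$ has no component in $\ker\mathcal{R}_\psi$.

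For property (ii), I would expand $\frac12\{\psi, {M}\}$ using $\psi^2 = \psi$, $\psi P_\psi^\perp = 0 = P_\psi^\perp \psi$, and $\psi {O} \psi = \avg{{O}}_\psi\,\psi$. A short computation gives
\begin{align}
\tfrac12\{\psi, {M}\} = \psi {O} + {O} \psi - \avg{{O}}_\psi\,\psi\ .
\end{align}
On the other hand, expanding $P_\psi^\perp {O} P_\psi^\perp = (\Ident - \psi){O}(\Ident - \psi)$ yields
\begin{align}
{O} - P_\psi^\perp {O} P_\psi^\perp = \psi {O} + {O}\psi - \avg{{O}}_\psi\,\psi\ ,
\end{align}
so the two expressions match, i.e.\ $\mathcal{R}_\psi({M}) = {O} - P_\psi^\perp {O} P_\psi^\perp$. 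Since the right-hand side is precisely the projection of ${O}$ onto the range of $\mathcal{R}_\psi$ (by \cref{z:uWDssQGxkLXS}, as $P_\psi^\perp {O} P_\psi^\perp$ is the component of ${O}$ in $\ker \mathcal{R}_\psi$), this is exactly what the pseudoinverse requires.

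There is no real obstacle here — the entire proof is a short algebraic verification, and the only mild subtlety is remembering that $\mathcal{R}_\psi^{-1}$ denotes the Moore-Penrose pseudoinverse rather than a genuine inverse (which would not exist when $\psi$ is rank-deficient, as in the pure-state case whenever $\dim \Hs \geq 2$). The characterization in \cref{z:uWDssQGxkLXS} of the kernel of $\mathcal{R}_\psi$ as $\{ P_\psi^\perp {M'} P_\psi^\perp \}$ is the key structural fact that makes the two checks above sufficient to conclude uniqueness.
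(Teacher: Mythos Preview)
Your proposal is correct and amounts to the same short algebraic verification as the paper's proof; the only difference is organizational, in that the paper works forward by decomposing ${O}$ into $\bar{O} = {O} - P_\psi^\perp {O} P_\psi^\perp - \avg{{O}}_\psi\psi$ (showing $\{\psi,\bar{O}\}=\bar{O}$ and hence $\mathcal{R}_\psi^{-1}(\bar{O})=2\bar{O}$) and then reassembles, whereas you verify the claimed formula directly against the pseudoinverse characterization. The underlying identities and computations are identical.
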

\begin{proof}[**z:qtGZ5Y7B.DE1]
  Define $\bar{{O}} := {O} - P_\psi^\perp 
  {O} P_\psi^\perp - \langle {{O}}\rangle _\psi \psi$.  By
  linearity, we have
  \begin{align}
    \mathcal{R}_\psi^{-1}(\bar {{O}})
    = \mathcal{R}_\psi^{-1}({O}) - \langle {{O}}\rangle _\psi\,
    \psi\ ,
    \label{z:P2iJM3yraC4r}
  \end{align}
  noting that $\frac12\{\psi,\psi\} = \psi$ and therefore
  $\mathcal{R}_\psi^{-1}(\psi) = \psi$.  
  The operator 
  $\bar{{O}}$ satisfies
  $P_\psi^\perp 
  \bar {{O}} P_\psi^\perp = 0$ and $\langle {\bar{{O}}}\rangle _\psi = 0$, the latter
  implying that 
  $\bar{{O}}\psi = P_\psi^\perp 
  \bar{{O}} \psi$.  Then
  \begin{align*}
    \bar{{O}} 
    &= (\psi + P_\psi^\perp) 
    \bar{{O}} (\psi + P_\psi^\perp)
    = P_\psi^\perp 
    \bar{{O}} \psi + \psi \bar{{O}} P_\psi^\perp
    = \bigl\{\psi,
    \bar{{O}}
    \bigr\}\ ,
  \end{align*}
  and therefore $\mathcal{R}_\psi^{-1}(\bar{{O}}) = 2
  \bar{{O}}$.
  From~\eqref{z:P2iJM3yraC4r} we then find
  \begin{align}
    \mathcal{R}_\psi^{-1}({O})
    = 2\bar{{O}} + \langle {{O}}\rangle \psi
    = 2(Z - P_\psi^\perp {O} P_\psi^\perp) - \langle {{O}}\rangle  \psi\ .
    \tag*\qedhere
  \end{align}
\end{proof}

\section{Semidefinite programming methods for the Fisher information}
\label{z:KH4B5FtzQ1kE}

In this Appendix, 
we review some methods based on semidefinite
programming~\cite{R74,R75} for computing
the Fisher information, and review some elementary properties of the Fisher
information.
Let $\rho$ be any subnormalized quantum state, and let $D$ be any Hermitian
operator that satisfies $P_\rho^\perp D P_\rho^\perp=0$ (recall $P_\rho^\perp$
is the projector onto the kernel of $\rho$).  Define the quantity
\begin{align}
  \label{z:.gRi0KAw40V9}
  \Ftwo{\rho}{D}  := \operatorname{tr}\bigl(\rho R^2 \bigr)\ ,
\end{align}
where $R$ is any solution to $\frac12\bigl\{\rho,R\bigr\} = D$.  For a normalized state
$\rho$ and for traceless $D$, the quantity $\Ftwo{\rho}{D}$ corresponds to the
Fisher information associated with a one-parameter family of states
$\lambda\mapsto \rho_\lambda$ taken at a value of $\lambda$ where $\rho_\lambda=\rho$ and
$d\rho_\lambda/d\lambda = D$.  We allow subnormalized states $\rho$ and
operators $D$ with nonzero trace in the
definition~\eqref{z:.gRi0KAw40V9} for later technical convenience.  We
require that $P_\rho^\perp D P_\rho^\perp=0$ as otherwise the anticommutator
equation $\frac12\bigl\{\rho,R\bigr\} = D$ has no solution for $R$.

The definition of $\Ftwo{\rho}{D}$ does not depend on the choice of $R$ that
solves $\frac12\bigl\{\rho,R\bigr\} = D$.  Indeed,
\cref{z:uWDssQGxkLXS} guarantees that any two solutions
differ only by a term $P_\rho^\perp {M'} P_\rho^\perp$; such a term does not
contribute to the trace in~\eqref{z:.gRi0KAw40V9}.  We may therefore
write, using the notation of \cref{z:V.Yy8m55j7Q5},
\begin{align}
  \Ftwo{\rho}{D} = \operatorname{tr}\bigl( \rho \, \bigl[ \mathcal{R}_\rho^{-1}(D) \bigr]^2 \bigr)\ .
  \label{z:HWLNvE.Hsjmk}
\end{align}

We now write this expression as a pair of convex optimizations.  These
expressions have been derived in
Refs.~\cite{R25,R27}; we provide a
proof using our notation for self-consistency.

\begin{proposition}[Fisher information in terms of convex optimization problems]
  \label{z:xifd0Y80UQtS}
  Let $\rho$ be a subnormalized quantum state and $D$ be a Hermitian operator
  that satisfies $P_\rho^\perp D P_\rho^\perp = 0$.  The quantity
  $\Ftwo{\rho}{D}$ defined in~\eqref{z:.gRi0KAw40V9} is equivalently
  expressed as the following optimizations:
  \begin{subequations}
    \begin{align}
      \Ftwo{\rho}{D}
      &=
        \max_{S=S^\dagger} 4 \bigl[ \operatorname{tr}(D S) - \operatorname{tr}(\rho S^2) \bigr]
        \label{z:YdgtbMB30Vi6}
      \\
      &=
        \min\; \bigl\{ 4\operatorname{tr}(L^\dagger L) \ :\ \rho^{1/2} L + L^\dagger \rho^{1/2} = D \bigr\}\ ,
        \label{z:V4lWxgFVfrPu}
    \end{align}
    where the first optimization ranges over all Hermitian operators $S$ and where
    in the second optimization $L$ is an arbitrary complex matrix.
    Optimal choices for the variables are
    $S = (1/2) \mathcal{R}_\rho^{-1}\bigl(D\bigr)$ and $L = \rho^{1/2} S$, noting
    that $\{\rho,S\} = D$.
    Furthermore, alternative forms for the minimization are
    \begin{align}
      \Ftwo{\rho}{D}
      &=
        4 \min\, \Bigl\{ \operatorname{tr}({N})\ :\ 
        \begin{bmatrix} \rho & {O} \\ {O}^\dagger & {N}\end{bmatrix} \geq 0
     \quad\text{with}\quad
      {O}+{O}^\dagger = D\,,\ {N} \geq 0
      \Bigr\}
      \label{z:HD7JuvyAuF39}
      \\
      &=
        \min\, \Bigl\{ \operatorname{tr}(J)\ :\ 
        \begin{bmatrix} \rho & D + iK \\ D - iK & J\end{bmatrix} \geq 0
      \quad\text{with}\quad
      K=K^\dagger\,,\ J\geq 0
      \Bigr\}\ ,
      \label{z:CJMl45Ex5Bss}
    \end{align}
  \end{subequations}
  in which optimal choices are $O = \rho S$, $N=S\, \rho\, S$,
  $K=-i[\rho, S]$, and
  $J=\mathcal{R}_\rho^{-1}(D)\,\rho\,\mathcal{R}_\rho^{-1}(D)$.
\end{proposition}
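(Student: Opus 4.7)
The plan is to prove the four equivalent characterizations by first establishing the maximization form (\ref{z:YdgtbMB30Vi6}), then using Lagrangian/SDP duality to obtain the minimization forms, and finally converting between the different parameterizations via Schur complement. Throughout, I would rely on the fact that the condition $P_\rho^\perp D P_\rho^\perp = 0$ guarantees the anti-commutator equation $\frac12\{\rho,R\} = D$ is solvable (by \cref{z:uWDssQGxkLXS}), even if not uniquely, and that the trace $\tr(\rho R^2)$ appearing in the definition (\ref{z:HWLNvE.Hsjmk}) is insensitive to the choice of $R$.

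First I would tackle (\ref{z:YdgtbMB30Vi6}). The objective $g(S) = 4\tr(DS) - 4\tr(\rho S^2)$ is concave quadratic in $S$ (since $\rho \geq 0$), so its maximum is attained at any critical point. Taking the directional derivative along a Hermitian perturbation $\delta S$ gives $4\tr(D\,\delta S) - 4\tr(\rho\{S,\delta S\})/... = 4\tr((D - \{\rho,S\})\,\delta S)$, so the optimality condition is $\{\rho,S\} = D$, i.e.\@ $S = \tfrac12 R$ for any solution $R = \mathcal{R}_\rho^{-1}(D)$ (any choice of $R$ gives the same value because the kernel degrees of freedom $P_\rho^\perp M P_\rho^\perp$ decouple from both $\tr(DS)$, by the support hypothesis on $D$, and from $\tr(\rho S^2)$). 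Plugging in and using $\tr(DS) = \tfrac12\tr(\{\rho,R\}\cdot \tfrac12 R) = \tr(\rho R^2)/... $, a short computation shows $g(S_*) = \tr(\rho R^2) = \Ftwo{\rho}{D}$.

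For the minimization (\ref{z:V4lWxgFVfrPu}), I would establish weak duality with (\ref{z:YdgtbMB30Vi6}) by completing the square: for any feasible $S=S^\dagger$ and $L$,
\begin{align}
4\tr(L^\dagger L) - 4\bigl[\tr(DS) - \tr(\rho S^2)\bigr]
&= 4\tr(L^\dagger L) - 4\tr((\rho^{1/2}L + L^\dagger \rho^{1/2})S) + 4\tr(\rho^{1/2}S\rho^{1/2}S) \nonumber \\
&= 4\tr\bigl((\rho^{1/2}S - L)^\dagger(\rho^{1/2}S - L)\bigr) \geq 0,
\end{align}
with equality iff $L = \rho^{1/2}S$. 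Strong duality then follows by checking that the candidate pair $S = R/2$, $L = \rho^{1/2}R/2$ is feasible on both sides (the constraint $\rho^{1/2}L + L^\dagger\rho^{1/2} = \tfrac12\{\rho,R\} = D$ is immediate) and attains the common value $\tr(\rho R^2)$.

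The block-matrix forms (\ref{z:HD7JuvyAuF39}) and (\ref{z:CJMl45Ex5Bss}) follow from Schur complement (\cref{z:pZMoi26flsRb}). For (\ref{z:HD7JuvyAuF39}): positivity of $\bigl[\begin{smallmatrix}\rho & O \\ O^\dagger & N\end{smallmatrix}\bigr]$ with $\rho \geq 0$ is equivalent, on the support of $\rho$, to $N \geq O^\dagger \rho^{-1} O$. Writing $O = \rho^{1/2}L$ (which, together with the support condition $P_\rho^\perp O = 0$ inherited from the Schur complement, is a legitimate parametrization using the pseudoinverse), the best $N \geq 0$ is $N = L^\dagger L$ with $\tr(N) = \tr(L^\dagger L)$, and the constraint $O+O^\dagger = D$ becomes $\rho^{1/2}L + L^\dagger\rho^{1/2} = D$, yielding (\ref{z:V4lWxgFVfrPu}). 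The equivalence with (\ref{z:CJMl45Ex5Bss}) is then a simple reparametrization: any $O$ with $O + O^\dagger = D$ can be written $O = \tfrac12(D + iK)$ with $K = -i(O - O^\dagger)$ Hermitian, and the diagonal rescaling $\mathrm{diag}(\Ident, 2)$ sends $\bigl[\begin{smallmatrix}\rho & O \\ O^\dagger & N\end{smallmatrix}\bigr]$ to $\bigl[\begin{smallmatrix}\rho & D+iK \\ D-iK & 4N\end{smallmatrix}\bigr]$, so setting $J = 4N$ identifies the two optimizations and the stated optimal values $K = -i[\rho,S]$, $J = R\rho R$ fall out directly.

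The main technical care is in the rank-deficient case: the Schur complement theorem applied to $\bigl[\begin{smallmatrix}\rho & O \\ O^\dagger & N\end{smallmatrix}\bigr]$ forces $P_\rho^\perp O = 0$, so that $O$ is automatically supported on $\rho$; one must then check that $O + O^\dagger = D$ together with this support condition is consistent, which is precisely guaranteed by the hypothesis $P_\rho^\perp D P_\rho^\perp = 0$. Once this support bookkeeping is handled, all the pseudoinverse manipulations used above (e.g.\@ $\mathcal{R}_\rho^{-1}$, $\rho^{-1}$) are well-defined and the argument goes through unchanged.
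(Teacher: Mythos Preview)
Your proof is correct and, for the first two characterizations, takes a more elementary route than the paper's. The paper recasts the maximization \eqref{z:YdgtbMB30Vi6} as a semidefinite program (introducing an auxiliary variable $Q\geq S^2$ via a Schur complement), derives the form \eqref{z:HD7JuvyAuF39} as its formal SDP dual, invokes Slater's condition for strong duality, and extracts the optimizers from complementary slackness before simplifying to \eqref{z:V4lWxgFVfrPu}. You instead solve the concave-quadratic maximization directly by locating its critical point $\{\rho,S\}=D$, and establish duality with the $L$-minimization by an explicit completing-the-square identity, with strong duality witnessed by the matching pair $(S,L)=(R/2,\rho^{1/2}R/2)$. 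Your route is shorter and avoids the SDP machinery entirely; the paper's route buys a systematic derivation of the optimizers rather than guess-and-verify. For the block-matrix forms \eqref{z:HD7JuvyAuF39} and \eqref{z:CJMl45Ex5Bss} your Schur-complement argument and the $O=\tfrac12(D+iK)$, $J=4N$ reparametrization are essentially the same as the paper's.

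One small slip to fix: in your completing-the-square display, the intermediate term $4\tr(\rho^{1/2}S\rho^{1/2}S)$ should read $4\tr(\rho S^2)$ (equivalently $4\tr(S\rho S)$ by cyclicity); these differ when $[\rho^{1/2},S]\neq 0$. Your first and last expressions in that chain are correct as written, so the argument itself stands.
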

Note that the condition in the
optimization~\eqref{z:CJMl45Ex5Bss}
implicitly enforces the fact that $P_\rho^{\perp}(D+iK) = 0$ (see
\cref{z:pZMoi26flsRb}); this can make it more complicated to guess a candidate
for $K$ in~\eqref{z:CJMl45Ex5Bss} if $\rho$
does not have full rank, especially if $P_\rho^\perp D \neq 0$.
Also, note that if there is any feasible choice of candidates
in~\eqref{z:HD7JuvyAuF39}, then
automatically $P^\perp_\rho {O} = 0$ and ${O}^\dagger P^\perp_\rho = 0$, such that
$P^\perp_\rho D P^\perp_\rho = P^\perp_\rho ({O}+{O}^\dagger) P^\perp_\rho = 0$.
Therefore, finding feasible candidates automatically enforces the condition in
the definition~\eqref{z:.gRi0KAw40V9}.  A similar argument holds if
feasible candidates are found
in~\eqref{z:V4lWxgFVfrPu}
or~\eqref{z:CJMl45Ex5Bss}.

\begin{proof}[**z:xifd0Y80UQtS]
  The maximization~\eqref{z:YdgtbMB30Vi6} is a
  quadratic optimization can be cast into a semidefinite program using Schur
  complements (\cref{z:pZMoi26flsRb}).  We stick closely to the formalism of
  Watrous~\cite{R75,R105}.  We introduce a
  variable $Q\geq 0$ with the constraint $Q\geq S^2$ expressed as a Schur
  complement condition:
\begin{align}
  \frac14\bigl\{
  \text{maximization in~\eqref{z:YdgtbMB30Vi6}}
  \bigr\}
  &=
  \begin{aligned}[t]
    \mathrm{maximize}:\quad
    &
      \bigl[ \operatorname{tr}( D S ) - \operatorname{tr}(\rho Q) \bigr]
    \\
    \mathrm{over~variables}:\quad
    & S=S^\dagger, Q\geq 0
    \\
    \mathrm{subject~to}:\quad
    & \begin{bmatrix} Q & -S \\ -S & \mathds{1}\end{bmatrix} \geq 0\ .
  \end{aligned}
  \label{z:YR2AWiJacgL7}
\end{align}
(The sign of $-S$ in the last constraint is for later convenience.)  We now
determine the corresponding dual problem.  Let
$\left[\begin{smallmatrix} {M} & {O}\\{O}^\dagger & {N}\end{smallmatrix}\right] \geq 0$
be the Lagrange dual variable corresponding to the primal constraint, with ${M},{N}\geq 0$
and ${O}$ arbitrary.  The primal
constraint can be written as
\begin{align}
  \begin{bmatrix} -1 & 0 \\ 0 & 0\end{bmatrix}\otimes Q
  + \begin{bmatrix} 0 & 1 \\ 1 & 0\end{bmatrix}\otimes S
  \leq \begin{bmatrix} 0 & 0 \\ 0 & 1\end{bmatrix}\otimes \mathds{1}\ .
  \label{z:7elklA8CRj8Z}
\end{align}
The dual objective is obtained by collecting the constant terms of the
constraints and taking the inner product with the corresponding dual variable.
Here we only have the right-hand side of~\eqref{z:7elklA8CRj8Z} and we
obtain the objective that is simply to minimize $\operatorname{tr}({N})$.  There are two dual
constraints, one for each primal variable $S$ and $Q$; to a Hermitian variable
corresponds an equality constraint and to a positive semidefinite variable
corresponds a positive semidefinite constraint.  The primal objective gives the
constant terms for each constraint, which are $(\ldots) = D$ and
$(\ldots) \geq -\rho$.  For the left-hand side of the first constraint we obtain
the term
$\operatorname{tr}_1\mathopen{}\left(\left[\begin{smallmatrix}{M}&{O}\\ {O}^\dagger&{N}\end{smallmatrix}\right]
\left[\begin{smallmatrix}0& 1\\ 1 & 0\end{smallmatrix}\right]\right)\mathclose{} = {O} + {O}^\dagger$.
For the left-hand side of the second constraint, we find
$-\operatorname{tr}_1\mathopen{}\left(\left[\begin{smallmatrix}{M}&{O}\\ {O}^\dagger&{N}\end{smallmatrix}\right]
\left[\begin{smallmatrix}\vphantom{{O}}1& 0\\ \vphantom{{O}^\dagger}0 &
    0\end{smallmatrix}\right]\right)\mathclose{} = -{M}$.  We thus obtain the following dual
problem: %
\begin{align}
  \text{\eqref{z:YR2AWiJacgL7}}
  &=
  \begin{aligned}[t]
    \mathrm{minimize}\quad
    & \operatorname{tr}({N})
    \\
    \mathrm{over~variables}\quad
    &
    {M}\geq 0,\ {N}\geq 0,\ {O}~{,}
    \\
    \mathrm{subject~to}\quad
    &
      \begin{gathered}[t]
        {O} + {O}^\dagger = D\ {,} \\
        {M} \leq \rho\ {,}\\
      \begin{bmatrix} {M} & {O} \\ {O}^\dagger & 
      {N}\end{bmatrix} \geq 0\ .
      \end{gathered}
  \end{aligned}
  \label{z:QBNy4lzhmGKl}
\end{align}
Equality with the primal optimization problem holds thanks to strong duality,
which is ensured by the Slater
conditions~\cite{R75,R105}.  We can further
simplify the dual problem.  First, the choice ${M}=\rho$ is optimal: Indeed, for
any optimal choices of variables with ${M}\leq\rho$, we can replace ${M}$ by $\rho$
while still achieving the same value.  Therefore{,}
\begin{align}
  \text{\eqref{z:QBNy4lzhmGKl}}
  &=
  \begin{aligned}[t]
    \mathrm{minimize}\quad
    & \operatorname{tr}({N})
    \\
    \mathrm{over~variables}\quad
    &
    {N}\geq 0,\ {O}\ {,}
    \\
    \mathrm{subject~to}\quad
    &
      \begin{gathered}[t]
        {O} + {O}^\dagger = D\ {,}\\
        \begin{bmatrix} \rho & {O} \\
          {O}^\dagger & {N} \end{bmatrix} \geq 0\ .
      \end{gathered}
  \end{aligned}
  \label{z:IzVNtZCA8V9E}
\end{align}
Using the Schur complement argument again (\cref{z:pZMoi26flsRb}), we find that
${n}\geq {O}^\dagger \rho^{-1} {O}$, and for the same reason as above, there is an
optimal choice of variables with ${Y} = {O}^\dagger \rho^{-1} {O}$.  Hence
\begin{align}
  \text{\eqref{z:IzVNtZCA8V9E}}
  &=
  \begin{aligned}[t]
    \mathrm{minimize}:\quad
    & \operatorname{tr}({O}^\dagger \rho^{-1} {O})
    \\
    \mathrm{over~variables}:\quad
    &
    {O}~\mathrm{arb.}
    \\
    \mathrm{subject~to}:\quad
    & {O} + {O}^\dagger = D \\
    & P_\rho^\perp {O} = 0\ .
  \end{aligned}
  \label{z:XLC-MgBnHNnF}
\end{align}
We may introduce the variable $L = \rho^{-1/2} {O}$, which yields
\begin{align}
  \text{\eqref{z:XLC-MgBnHNnF}}
  &=
  \begin{aligned}[t]
    \mathrm{minimize}:\quad
    & \operatorname{tr}(L^\dagger L)
    \\
    \mathrm{over~variables}:\quad
    &
    L~\mathrm{arb.}
    \\
    \mathrm{subject~to}:\quad
    & \rho^{1/2} L + L^\dagger \rho^{1/2} = D\ .
  \end{aligned}
  \label{z:-VreJyyLKXzL}
\end{align}
We recognize the optimization
in~\eqref{z:V4lWxgFVfrPu}.
At this point we have shown that both optimizations in the claim,
\cref{z:YdgtbMB30Vi6,z:V4lWxgFVfrPu}, are equal thanks to
semidefinite programming duality.  It remains to show that the common optimal
value is $\Ftwo{\rho}{D}$ as given
by~\eqref{z:HWLNvE.Hsjmk}.

To find optimal variables, we examine the complementary slackness
conditions~\cite{R105} corresponding to the primal-dual
problem pair~\eqref{z:YR2AWiJacgL7}
and~\eqref{z:QBNy4lzhmGKl}.  Namely, taking the product of
an inequality constraint with the corresponding dual variable turns the
inequality into an equality for optimal primal and dual choices of variables.
For the primal constraint this gives us the equalities
\begin{align}
  \begin{aligned}
    Q{M} - S{O}^\dagger &= 0\ ;
    &\qquad\qquad
    Q{O} - S{N} &= 0\ ;
    \\
    -S{M} + {O}^\dagger &= 0\ ;
    &\qquad\qquad
    -S{O} + {N} &= 0\ .
  \end{aligned}
\end{align}
From $-S{M} + {O}^\dagger = 0$ along with the optimal ${M}=\rho$ we deduce that
$\rho S = {O}$ and thus $\rho^{1/2}S = \rho^{-1/2}{O} = L$.  Plugging this into the
constraint in~\eqref{z:-VreJyyLKXzL} we find
\begin{align}
  \rho S + S \rho = D\ .
  \label{z:DFfRoGaKlwjQ}
\end{align}
The solutions of this anticommutator equation {have been} studied in
\cref{z:V.Yy8m55j7Q5}, leading us to the primal candidate
\begin{align}
  S = \frac12 \mathcal{R}_\rho^{-1}\bigl( D \bigr)\ .
\end{align}
Plugging this choice into~\eqref{z:YR2AWiJacgL7}, along
with the choice $Q=S^2$, we obtain
\begin{align}
  \text{\eqref{z:YR2AWiJacgL7}}
  &\geq \frac12 \operatorname{tr}\bigl( D \mathcal{R}_\rho^{-1}(D)  \bigr)
    - \frac14 \operatorname{tr}\bigl( \rho \bigl[ \mathcal{R}_\rho^{-1}(D) \bigr]^2 \bigr)
  = \frac14 \Ftwo{\rho}{D}\ ,
    \label{z:2.feaXnqFkjS}
\end{align}
where we have used the fact that
$\operatorname{tr}\bigl( D \mathcal{R}_\rho^{-1}(D)  \bigr)
= \llangle {D}\mkern 1.5mu\relax \vert \mkern 1.5mu\relax { \mathcal{R}_\rho^{-1} }\mkern 1.5mu\relax \vert \mkern 1.5mu\relax {D}\rrangle 
= \llangle {D}\mkern 1.5mu\relax \vert \mkern 1.5mu\relax { \mathcal{R}_\rho^{-1} \mathcal{R}_\rho \mathcal{R}_\rho^{-1} }\mkern 1.5mu\relax \vert \mkern 1.5mu\relax {D}\rrangle 
= \llangle {\mathcal{R}_\rho^{-1}(D)}\mkern 1.5mu\relax \vert \mkern 1.5mu\relax { \mathcal{R}_\rho }\mkern 1.5mu\relax \vert \mkern 1.5mu\relax {\mathcal{R}_\rho^{-1}(D)}\rrangle 
= \operatorname{tr}\bigl(\rho\,\bigl[ \mathcal{R}_\rho^{-1}(D) \bigr]^2 \bigr)$.

By construction, $L = \rho^{1/2} S$ satisfies the constraint
in~\eqref{z:-VreJyyLKXzL}, noting that we have used
the assumption that $P_\rho^\perp D P_\rho^\perp=0$ as in the proposition
statement.  The corresponding value attained in the dual problem is
\begin{align}
  \text{\eqref{z:-VreJyyLKXzL}}
  &\leq \operatorname{tr}(L^\dagger L)
    = \frac14\operatorname{tr}\bigl(\rho\bigl[ \mathcal{R}_\rho^{-1}(D) \bigr]^2\bigr)
    = \frac14 \Ftwo{\rho}{D}\ ,
  \label{z:NyVP614mSrMk}
\end{align}
noting that $\operatorname{tr}(L^\dagger L) = \operatorname{tr}(\rho S^2)$.
Combining \cref{z:2.feaXnqFkjS,z:NyVP614mSrMk} with the above
statement %
that
$\text{\eqref{z:YR2AWiJacgL7}} =
\text{\eqref{z:-VreJyyLKXzL}}$ proves the first part of the claim.

The alternative form~\eqref{z:HD7JuvyAuF39}
is nothing else than~\eqref{z:IzVNtZCA8V9E}.
Now we show the alternative
form~\eqref{z:CJMl45Ex5Bss}.  Consider the
optimization~\eqref{z:IzVNtZCA8V9E}.  Decompose
${O}={O}_R+i{O}_I$ into Hermitian and anti-Hermitian parts with
${O}_R = ({O}+{O}^\dagger)/2 ={O}_R^\dagger$, ${O}_I = -i({O}-{O}^\dagger)/2 ={O}_I^\dagger$.
The constraint on ${O}$ indicates that the Hermitian part ${O}_R$ of ${O}$ must
satisfy $2{O}_R = D$.  The second constraint then becomes
\begin{align}
  \begin{bmatrix}\rho & D/2 + i{O}_I\\ D/2-i{O}_I & {N}\end{bmatrix}\geq 0\ .
\end{align}
Conjugating by $\begin{bsmallmatrix}\mathds{1}& 0\\ 0 & 2\mathds{1}\end{bsmallmatrix}$,
we see that this condition is equivalent to
\begin{align}
  \begin{bmatrix}\rho & D + 2i {O}_I\\ D-2i{O}_I & 4{N}\end{bmatrix}\geq 0\ .
\end{align}
Now we set $K=2{O}_I$ and $J=4{N}$, showing that the
optimization~\eqref{z:CJMl45Ex5Bss} is
equivalent to~\eqref{z:IzVNtZCA8V9E} (up to a factor
of $4$), and therefore equal to $\Ftwo{\rho}{D}$.

For completeness, we exhibit optimal choices for $K,J$.  Choose $K$ to be the
anti-Hermitian part of $\rho\mathcal{R}_\rho^{-1}(D)$, 
i.e.,
$K = (\rho\mathcal{R}_\rho^{-1}(D) - \mathcal{R}_\rho^{-1}(D)\rho)/(2i)$.  Then
\begin{align}
  D + iK = \frac12\bigl( \rho\mathcal{R}_\rho^{-1}(D) + \mathcal{R}_\rho^{-1}(D)\rho \bigr)
  + \frac12\bigl(\rho\mathcal{R}_\rho^{-1}(D) - \mathcal{R}_\rho^{-1}(D)\rho\bigr)
  = \rho\mathcal{R}_\rho^{-1}(D)\ ,
\end{align}
and its Hermitian conjugate is $D - iK = \mathcal{R}_\rho^{-1}(D)\rho$.  Now
choose
$J = (D-iK)\rho^{-1}(D+iK) =
\mathcal{R}_\rho^{-1}(D)\rho\mathcal{R}_\rho^{-1}(D)$; the constraint
in~\eqref{z:CJMl45Ex5Bss} is satisfied thanks
to \cref{z:pZMoi26flsRb}.  The value reached by this choice of candidates is
then the optimal value $\operatorname{tr}(J)=\Ftwo{\rho}{D}$.
\end{proof}

The expressions in \cref{z:xifd0Y80UQtS} lead to simple
proofs of elementary properties of the Fisher information.

\begin{proposition}[Simple bounds for the Fisher information]
  \label{z:4lI8Rm6bxlme}
  Let $\rho$ be a subnormalized quantum state and $D$ be a Hermitian operator
  that satisfies $P_\rho^\perp D P_\rho^\perp = 0$.  Then we have
  \begin{align}
    \opnorm{D}^2 \leq \Ftwo{\rho}{D} \leq \operatorname{tr}(\rho^{-1} D'^2)\ ,
    \label{z:aSKeUo91pZRw}
  \end{align}
  where $D' = 2D - P_\rho D P_\rho$.
\end{proposition}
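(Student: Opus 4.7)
The plan is to prove the two bounds separately by exhibiting explicit feasible candidates in the dual optimization characterizations of the Fisher information established in \cref{z:xifd0Y80UQtS}. The upper bound follows from the minimization \eqref{z:V4lWxgFVfrPu}, and the lower bound follows from the maximization \eqref{z:YdgtbMB30Vi6}. The role of the assumption $P_\rho^\perp D P_\rho^\perp = 0$ enters exclusively to ensure that the candidate for the upper bound is feasible.

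For the upper bound, I propose the candidate
\begin{align}
  L = \frac{1}{2}\,\rho^{-1/2} D'\ ,
  \nonumber
\end{align}
where $\rho^{-1/2}$ is the Moore-Penrose pseudoinverse and $D' = 2D - P_\rho D P_\rho$ is Hermitian. I would first check feasibility, i.e., that $\rho^{1/2} L + L^\dagger \rho^{1/2} = D$. Using $\rho^{1/2}\rho^{-1/2} = P_\rho$ and expanding $D'$, one computes $\rho^{1/2}L = \tfrac{1}{2}(2P_\rho D - P_\rho D P_\rho)$ and $L^\dagger \rho^{1/2} = \tfrac{1}{2}(2 D P_\rho - P_\rho D P_\rho)$, which sum to $P_\rho D + D P_\rho - P_\rho D P_\rho$. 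Decomposing $D$ into blocks with respect to $P_\rho, P_\rho^\perp$ and invoking the assumption $P_\rho^\perp D P_\rho^\perp = 0$ yields exactly $D$. Then the objective evaluates to $4\tr(L^\dagger L) = \tr(D' \rho^{-1} D') = \tr(\rho^{-1} D'^2)$, proving $\Ftwo{\rho}{D} \leq \tr(\rho^{-1} D'^2)$.

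For the lower bound, I would use the rank-one Hermitian candidate $S = \lambda \proj v$ for a unit vector $\ket v$ and real scalar $\lambda$ to be optimized. Then the objective in \eqref{z:YdgtbMB30Vi6} becomes $4\lambda\, \dmatrixel{v}{D} - 4\lambda^2\, \dmatrixel{v}{\rho}$, which is maximized at $\lambda_\star = \dmatrixel{v}{D}/(2\dmatrixel{v}{\rho})$ with optimal value $\dmatrixel{v}{D}^2/\dmatrixel{v}{\rho}$. Because $\rho$ is subnormalized (so all eigenvalues lie in $[0,1]$ and hence $\rho \leq \Ident$), we have $\dmatrixel{v}{\rho} \leq 1$, giving
\begin{align}
    \Ftwo{\rho}{D} \ \geq\ \frac{\dmatrixel{v}{D}^2}{\dmatrixel{v}{\rho}} \ \geq\ \dmatrixel{v}{D}^2\ .
    \nonumber
\end{align}
Taking the supremum over unit vectors $\ket v$ and using that, for Hermitian $D$, one has $\opnorm{D} = \sup_{\ket v} \abs{\dmatrixel{v}{D}}$ (the numerical radius coincides with the operator norm for Hermitian operators), yields $\Ftwo{\rho}{D} \geq \opnorm{D}^2$.

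The only mildly delicate point I foresee is the handling of the pseudoinverse in the upper bound: one must verify that $L$ is well-defined (guaranteed since $\rho^{-1/2}$ vanishes off the support of $\rho$, so $L = P_\rho L$) and that the feasibility identity holds precisely because of the $P_\rho^\perp D P_\rho^\perp = 0$ hypothesis, which is exactly the condition under which the Fisher information is defined in \eqref{z:.gRi0KAw40V9}. Everything else reduces to direct algebraic manipulation and a single-parameter optimization, so I do not expect any substantive obstacle.
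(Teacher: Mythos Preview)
Your proposal is correct and follows essentially the same approach as the paper: for the upper bound you use the identical candidate $L=\tfrac12\rho^{-1/2}D'$ in the minimization~\eqref{z:V4lWxgFVfrPu}, and for the lower bound you use the same rank-one ansatz $S=\lambda\proj{v}$ in the maximization~\eqref{z:YdgtbMB30Vi6}, with the only cosmetic difference being that the paper fixes $\ket v$ to a maximal-eigenvalue eigenvector of $D$ at the outset rather than taking the supremum at the end. Your feasibility verification for $L$ is actually slightly cleaner than the paper's presentation.
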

\begin{proof}[**z:4lI8Rm6bxlme]
  First we show the lower bound.  Let $\lvert {\phi}\rangle $ be a (normalized) eigenvector
  associated with the largest eigenvalue of $D$ (in magnitude), such that
  $\langle {\phi}\mkern 1.5mu\relax \vert \mkern 1.5mu\relax {D}\mkern 1.5mu\relax \vert \mkern 1.5mu\relax {\phi}\rangle  = \opnorm{D}$.  For some $s\in\mathbb{R}$ to be determined
  later, we choose the optimization candidate $S=s\lvert {\phi}\rangle \mkern -1.8mu\relax \langle{\phi}\rvert $
  in~\eqref{z:YdgtbMB30Vi6}.  Then the
  corresponding objective value is
  \begin{align}
    \Ftwo{\rho}{D}
    &\geq
    4\operatorname{tr}(DS) - 4\operatorname{tr}(\rho S^2)
      = 4s\opnorm{D} - 4s^2\langle {\phi}\mkern 1.5mu\relax \vert \mkern 1.5mu\relax {\rho}\mkern 1.5mu\relax \vert \mkern 1.5mu\relax {\phi}\rangle \ .
  \end{align}
  The latter expression is maximal when
  $0 = (d/ds)(~{\cdots}~) = 4\opnorm{D} - 8s\langle {\phi}\mkern 1.5mu\relax \vert \mkern 1.5mu\relax {\rho}\mkern 1.5mu\relax \vert \mkern 1.5mu\relax {\phi}\rangle $, i.e., when
  $s=\opnorm{D}/(2\langle {\phi}\mkern 1.5mu\relax \vert \mkern 1.5mu\relax {\rho}\mkern 1.5mu\relax \vert \mkern 1.5mu\relax {\phi}\rangle )$.  We obtain the bound
  \begin{align}
    \Ftwo{\rho}{D}
    &\geq
      2\frac{\opnorm{D}^2}{\langle {\phi}\mkern 1.5mu\relax \vert \mkern 1.5mu\relax {\rho}\mkern 1.5mu\relax \vert \mkern 1.5mu\relax {\phi}\rangle } - \frac{\opnorm{D}^2}{\langle {\phi}\mkern 1.5mu\relax \vert \mkern 1.5mu\relax {\rho}\mkern 1.5mu\relax \vert \mkern 1.5mu\relax {\phi}\rangle }
      = \frac{\opnorm{D}^2}{\langle {\phi}\mkern 1.5mu\relax \vert \mkern 1.5mu\relax {\rho}\mkern 1.5mu\relax \vert \mkern 1.5mu\relax {\phi}\rangle } \geq \opnorm{D}^2\ ,
  \end{align}
  recalling furthermore that $\langle {\phi}\mkern 1.5mu\relax \vert \mkern 1.5mu\relax {\rho}\mkern 1.5mu\relax \vert \mkern 1.5mu\relax {\phi}\rangle \leq 1$.

  For the upper bound, consider the optimization
  problem~\eqref{z:V4lWxgFVfrPu} and choose the
  candidate $L=\rho^{-1/2} D'/2$.  This is a feasible candidate because
  \begin{align}
    \rho^{1/2}L + L^\dagger \rho^{1/2}
    &= P_\rho D' + (\text{h.c.})
    \quad=\quad 2 P_\rho D - P_\rho D P_\rho + (\text{h.c.})
    \nonumber\\
    &= 2 P_\rho D(P_\rho + P_\rho^\perp) - P_\rho D P_\rho + (\text{h.c.})
      \nonumber\\
    &= P_\rho D P_\rho + 2 P_\rho D P_\rho^\perp + (\text{h.c.})
      \nonumber\\
    &= 2(P_\rho D P_\rho + P_\rho D P_\rho^\perp + P_\rho^\perp D P_\rho)
      \quad=\quad 2D\ ,
  \end{align}
  where $\text{h.c.}$ stands for the Hermitian conjugate of the entire preceding
  expression, and where we furthermore recall that
  $P_\rho^\perp D P_\rho^\perp=0$.  The objective value attained by this choice
  of candidate is
  $\Ftwo{\rho}{D} \leq 4 \operatorname{tr}\bigl(L^\dagger L\bigr) = \operatorname{tr}\bigl(\rho^{-1}D'^2\bigr)$.
\end{proof}

\begin{proposition}[Right logarithmic derivative (RLD)
  bound~\cite{R88}]
  \label{z:Dcj2SGgHn1GH}
  Let $\rho$ be a subnormalized quantum state and let $D$ be a Hermitian
  operator satisfying $P_\rho^\perp D P_\rho^\perp = 0$.  Let $G$ be any
  operator (possibly non-Hermitian) that satisfies
  $\bigl(\rho G + G^\dagger \rho\bigr)/2 = D$.  Then
  \begin{align}
    \Ftwo{\rho}{D} \leq \operatorname{tr}(\rho GG^\dagger)\ .
  \end{align}
\end{proposition}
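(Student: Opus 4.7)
My plan is to derive this upper bound by exhibiting an explicit feasible candidate in one of the semi-definite programming representations of the Fisher information from \cref{z:xifd0Y80UQtS}. Since the Fisher information is expressed as a minimization over certain dual variables, any feasible choice provides an upper bound; the task reduces to constructing such variables directly from the operator $G$.

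Concretely, I would use the minimization form~\eqref{z:V4lWxgFVfrPu}, in which $\Ftwo{\rho}{D}$ equals the infimum of $4\tr(L^\dagger L)$ over all operators $L$ satisfying $\rho^{1/2} L + L^\dagger \rho^{1/2} = D$. Given the hypothesis $\frac12`(\rho G + G^\dagger \rho) = D$, the natural candidate is
\begin{align*}
L = \tfrac{1}{2}\,\rho^{1/2} G\,.
\end{align*}
A one-line check shows $\rho^{1/2}L + L^\dagger \rho^{1/2} = \tfrac12(\rho G + G^\dagger \rho) = D$, so the candidate is feasible. Evaluating the objective yields
\begin{align*}
4\tr(L^\dagger L) = \tr`\big(G^\dagger \rho^{1/2} \rho^{1/2} G) = \tr`\big(\rho G G^\dagger)\,,
\end{align*}
using cyclicity of the trace, which gives the claimed bound. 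As a sanity check, I would also verify the argument using the equivalent block-matrix form~\eqref{z:HD7JuvyAuF39} by choosing $O = \tfrac{1}{2}\rho G$ and $N = \tfrac{1}{4}G^\dagger \rho G$; positivity of the block matrix follows from the factorization $\bigl[\begin{smallmatrix}\rho & O \\ O^\dagger & N\end{smallmatrix}\bigr] = V^\dagger V$ with $V = [\rho^{1/2},\;\tfrac12 \rho^{1/2}G]$, giving the same bound.

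There is essentially no significant obstacle, as the proof is a direct substitution into the SDP. The only subtlety worth noting is the condition $P_\rho^\perp D P_\rho^\perp = 0$: this must be consistent with the existence of a $G$ satisfying $\frac12(\rho G + G^\dagger \rho) = D$, which indeed it is, since sandwiching the latter equation between $P_\rho^\perp$ projectors gives $P_\rho^\perp D P_\rho^\perp = 0$ automatically. Thus no separate feasibility check is required on $D$, and the SDP candidate constructed from $G$ lies in the feasible set whenever the hypothesis of the proposition is met.
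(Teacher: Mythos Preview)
Your proof is correct and matches the paper's approach exactly: the paper's entire proof is the one-line ``Use $L = \rho^{1/2}G/2$ in~\eqref{z:V4lWxgFVfrPu},'' which is precisely your candidate and verification. Your additional sanity check via the block-matrix form~\eqref{z:HD7JuvyAuF39} is a nice consistency confirmation but not needed.
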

\begin{proof}[**z:Dcj2SGgHn1GH]
  Use $L = \rho^{1/2}G/2$
  in~\eqref{z:V4lWxgFVfrPu}.
\end{proof}

\begin{proposition}[Fisher information under parameter rescaling]
  \label{z:xptYZVgGge6R}
  Let $\rho$ be a subnormalized quantum state and $D$ be a Hermitian operator
  that satisfies $P_\rho^\perp D P_\rho^\perp = 0$.  Then for any
  $\alpha\leq1$, $\beta\in\mathbb{R}$,
  \begin{align}
    \Ftwo{\alpha\rho}{\beta D} = \frac{\beta^2}{\alpha} \Ftwo{\rho}{D}\ .
  \end{align}
\end{proposition}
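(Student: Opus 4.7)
The plan is to reduce the scaled Fisher information to the unscaled one by a simple change of variable, exploiting one of the equivalent characterizations established in \cref{z:xifd0Y80UQtS}. I assume $\alpha > 0$ (the only case in which both sides are defined), and note that $P_{\alpha\rho}^\perp = P_\rho^\perp$ so the kernel hypothesis $P_{\alpha\rho}^\perp(\beta D)P_{\alpha\rho}^\perp=0$ is automatic from the corresponding hypothesis on $\rho$ and $D$. The case $\beta=0$ is trivial since both sides vanish, so I may take $\beta\neq 0$ in the main argument.

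First, I would carry out the argument using the direct definition~\eqref{z:HWLNvE.Hsjmk}. Let $R=\mathcal{R}_\rho^{-1}(D)$, so that $\frac12\{\rho,R\}=D$. Defining $R'=(\beta/\alpha)R$, one checks immediately that $\frac12\{\alpha\rho,R'\}=\frac{\beta}{\alpha}\cdot\frac{\alpha}{2}\{\rho,R\}=\beta D$, so $R'$ is a valid symmetric logarithmic derivative for the pair $(\alpha\rho,\beta D)$. Substituting back yields
\begin{equation}
  \Ftwo{\alpha\rho}{\beta D}=\tr\bigl(\alpha\rho\,R'^2\bigr)=\alpha\cdot\frac{\beta^2}{\alpha^2}\tr(\rho R^2)=\frac{\beta^2}{\alpha}\Ftwo{\rho}{D},
\end{equation}
as claimed. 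The kernel hypothesis is exactly what guarantees that the trace is independent of the choice of $R$ solving the anti-commutator equation (cf.\ \cref{z:uWDssQGxkLXS}), so this substitution is unambiguous.

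As a sanity check, I would redo the argument via the SDP characterizations, either of which gives the same answer by a linear change of optimization variable. In~\eqref{z:YdgtbMB30Vi6}, setting $S=(\beta/\alpha)S'$ gives
\begin{equation}
4\bigl[\tr(\beta D\,S)-\tr(\alpha\rho\,S^2)\bigr]=\frac{\beta^2}{\alpha}\cdot 4\bigl[\tr(DS')-\tr(\rho S'^2)\bigr],
\end{equation}
and taking the supremum over $S'$ (equivalently, over $S$, since the map $S\mapsto S'$ is a bijection on Hermitian operators) produces the same factor $\beta^2/\alpha$. The dual form~\eqref{z:V4lWxgFVfrPu} admits an analogous reduction via $L\mapsto(\beta/\sqrt\alpha)L'$.

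There is essentially no obstacle here: the proposition is a straightforward scaling identity, and the only subtle point is to ensure that the kernel condition transfers correctly and that the choice of solution to the anti-commutator equation does not affect the value. Both are already handled by the results in \cref{z:V.Yy8m55j7Q5,z:KH4B5FtzQ1kE}, so the proof amounts to the one-line substitution above.
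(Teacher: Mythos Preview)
Your proof is correct. Your primary route via the direct definition~\eqref{z:HWLNvE.Hsjmk} is more elementary than the paper's: you simply scale the symmetric logarithmic derivative as $R'=(\beta/\alpha)R$ and read off the identity from $\tr(\alpha\rho\,R'^2)$. The paper instead works entirely through the SDP pair~\eqref{z:YdgtbMB30Vi6}--\eqref{z:V4lWxgFVfrPu}, taking optimal $S,L$ for $\Ftwo{\alpha\rho}{\beta D}$ and rescaling them to feasible (not a~priori optimal) candidates for $\Ftwo{\rho}{D}$, which yields matching upper and lower bounds. Your ``sanity check'' is essentially the paper's argument, but you streamline it by observing that the change of variable is a bijection on Hermitian operators, so equality of the suprema is immediate rather than assembled from two inequalities. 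Your direct route is shorter; the paper's route has the minor advantage of illustrating how the SDP characterizations yield bounds even before one knows the optimal $R$.
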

\begin{proof}[**z:xptYZVgGge6R]
  Let $S,L$ be optimal variables
  in~\eqref{z:YdgtbMB30Vi6}
  and~\eqref{z:V4lWxgFVfrPu} for
  $\Ftwo{\alpha\rho}{\beta D}$.  Let $S'=(\alpha/\beta) S$ and
  $L'=(\sqrt{\alpha}/\beta)L$.  Then
  \begin{subequations}
    \begin{align}
      \frac14 \Ftwo{\alpha\rho}{\beta D}
      &= \operatorname{tr}(\beta DS)-\operatorname{tr}(\alpha\rho S^2)
        = \frac{\beta^2}{\alpha}\,\bigl[ \operatorname{tr}(DS') - \operatorname{tr}(\rho S'^2) \bigr]
        \leq \frac{\beta^2}{\alpha}\,\frac14 \Ftwo{\rho}{D} \ ;
      \\
      \frac14 \Ftwo{\alpha\rho}{\beta D}
      &= \operatorname{tr}(LL^\dagger)
        = \frac{\beta^2}{\alpha} \operatorname{tr}(L'L'^\dagger)
        \geq \frac{\beta^2}{\alpha}\,\frac14 \Ftwo{\rho}{D}\ ,
    \end{align}
  \end{subequations}
  noting that $L'$ is a valid choice of optimization candidate
  in~\eqref{z:V4lWxgFVfrPu} for $\Ftwo{\rho}{D}$
  because
  $\rho^{1/2} L' + L'^\dagger \rho^{1/2} = (1/\beta) \bigl((\alpha\rho)^{1/2} L
  + L^\dagger (\alpha\rho)^{1/2}\bigr) = D$.
\end{proof}

\begin{proposition}[Fisher information bound for trace-decreasing maps]
  \label{z:6-noPFS3Rdzn}
  Let $\lvert {\psi}\rangle $ be a (normalized) state vector and let $\lvert {\xi}\rangle $ be any vector
  such that $\langle {\psi}\mkern 1.5mu\relax \vert\mkern 1.5mu\relax {\xi}\rangle =0$.  Let $\mathcal{N}$ be any completely positive,
  trace-nonincreasing map and let $0\leq\alpha\leq1$ such that
  $\mathcal{N}^\dagger(\mathds{1})\leq\alpha\mathds{1}$.  Then
  \begin{align}
    \Ftwo{\mathcal{N}(\lvert {\psi}\rangle \mkern -1.8mu\relax \langle{\psi}\rvert )}{\mathcal{N}(\lvert {\xi}\rangle \mkern -1.8mu\relax \langle{\psi}\rvert +\lvert {\psi}\rangle \mkern -1.8mu\relax \langle{\xi}\rvert )}
    \leq 4\alpha\langle {\xi}\mkern 1.5mu\relax \vert\mkern 1.5mu\relax {\xi}\rangle \ .
  \end{align}
\end{proposition}
\begin{proof}[**z:6-noPFS3Rdzn]
  Let ${O} = \mathcal{N}(\lvert {\psi}\rangle \mkern -1.8mu\relax \langle{\xi}\rvert )$ and 
  ${N} = \mathcal{N}(\lvert {\xi}\rangle \mkern -1.8mu\relax \langle{\xi}\rvert )$.
  These choices are feasible
  in~\eqref{z:HD7JuvyAuF39} because
  applying the completely positive map $\mathcal{N}\otimes{\mathrm{id}}_{2}$ onto
  the positive semidefinite matrix
  \begin{align}
  \begin{bmatrix} \lvert {\psi }\rangle \mkern -1.8mu\relax \langle{\psi }\rvert & \lvert {\psi}\rangle \mkern -1.8mu\relax \langle{\xi }\rvert \\ \lvert {\xi}\rangle \mkern -1.8mu\relax \langle{\psi }\rvert & \lvert {\xi }\rangle \mkern -1.8mu\relax \langle{\xi }\rvert \end{bmatrix}
  = \begin{bmatrix} \lvert {\psi }\rangle \\ \lvert {\xi }\rangle \end{bmatrix}
  \begin{bmatrix} \langle {\psi }\rvert & \langle {\xi }\rvert \end{bmatrix} \geq 0
  \end{align}
  gives again a
  positive semidefinite matrix.  This choice of variables yields the objective
  value
  $\operatorname{tr}\bigl(\mathcal{N}(\lvert {\xi}\rangle \mkern -1.8mu\relax \langle{\xi}\rvert )\bigr) =
  \operatorname{tr}\bigl(\mathcal{N}^\dagger(\mathds{1})\,\lvert {\xi}\rangle \mkern -1.8mu\relax \langle{\xi}\rvert \bigr) \leq\alpha\langle {\xi}\mkern 1.5mu\relax \vert\mkern 1.5mu\relax {\xi}\rangle $,
  proving the claim.
\end{proof}

\begin{proposition}[Joint convexity of the Fisher information]
  \label{z:qvWV7mc6QtTI}
  Let $\{ \rho_k \}$ be a set of subnormalized states and $\{ D_k \}$ be a set of
  Hermitian operators such that $P_{\rho_k}^\perp D_k P_{\rho_k}^\perp = 0$.
  Let $\{\alpha_k\}$ be a real positive coefficients such that
  $\sum_k \alpha_k \operatorname{tr}(\rho_k) \leq 1$.  Then
  \begin{align}
    \Ftwo*{\sum_k \alpha_k \rho_k}{\sum_k \alpha_k D_k}
    \leq \sum_k \alpha_k \Ftwo{\rho_k}{D_k}\ .
  \end{align}
\end{proposition}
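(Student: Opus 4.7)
The plan is to exploit the semi-definite program characterization of the Fisher information given in \cref{z:xifd0Y80UQtS}. I would use either of the two equivalent formulations; the cleanest route is probably the primal formulation~\eqref{z:HD7JuvyAuF39}, which expresses $\Ftwo{\rho}{D}/4$ as the minimum of $\tr({N})$ over pairs of an arbitrary operator ${O}$ and a positive semi-definite ${N}$ satisfying ${O}+{O}^\dagger = D$ together with $\left[\begin{smallmatrix} \rho & {O} \\ {O}^\dagger & {N}\end{smallmatrix}\right]\geq 0$.

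Concretely, for each $k$ let $({O}_k,{N}_k)$ be optimal variables for $\Ftwo{\rho_k}{D_k}$ in that formulation, so that $\Ftwo{\rho_k}{D_k}/4 = \tr({N}_k)$, and set ${O} := \sum_k \alpha_k {O}_k$ and ${N} := \sum_k \alpha_k {N}_k$. The linear constraint ${O}+{O}^\dagger = \sum_k \alpha_k D_k$ is immediate by linearity, and the block matrix $\left[\begin{smallmatrix} \sum_k \alpha_k\rho_k & {O} \\ {O}^\dagger & {N}\end{smallmatrix}\right]$ is a $\alpha_k$-convex combination of positive semi-definite matrices, hence positive semi-definite. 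Thus $({O},{N})$ is a feasible candidate for the Fisher information of the combined state and direction, giving the desired bound $\Ftwo{\sum_k\alpha_k\rho_k}{\sum_k\alpha_k D_k}/4 \leq \tr({N}) = \sum_k \alpha_k \tr({N}_k) = \tfrac14\sum_k\alpha_k \Ftwo{\rho_k}{D_k}$.

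A dual route is just as clean: using~\eqref{z:YdgtbMB30Vi6}, one observes that for fixed Hermitian $S$ the expression $\tr(DS)-\tr(\rho S^2)$ is \emph{jointly affine} in $(\rho,D)$, so that $\Ftwo{\rho}{D}$ is a supremum of affine functions of $(\rho,D)$, which is automatically jointly convex; applying this to the convex combination yields the claim in one line. I would prefer the primal route because it constructs feasible variables explicitly, making clear why the inequality holds rather than invoking a supremum-of-affine argument.

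There is essentially no hard step here; the only bookkeeping is to ensure the left-hand side is well-defined. The assumption $\sum_k\alpha_k\tr(\rho_k)\leq 1$ guarantees that $\sum_k\alpha_k\rho_k$ is sub-normalized, and the kernel-containment relation $P_{\sum_k\alpha_k\rho_k}^\perp \leq P_{\rho_k}^\perp$ (valid since the kernel of a sum of positive operators is contained in each individual kernel) combined with the hypotheses $P_{\rho_k}^\perp D_k P_{\rho_k}^\perp = 0$ yields $P_{\sum_k\alpha_k\rho_k}^\perp \bigl(\sum_k\alpha_k D_k\bigr) P_{\sum_k\alpha_k\rho_k}^\perp = 0$, so the Fisher information on the left-hand side is indeed defined.
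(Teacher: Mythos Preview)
Your proposal is correct and takes essentially the same approach as the paper: pick optimal variables in a minimization SDP characterization of $\Ftwo{\rho_k}{D_k}$, form the $\alpha_k$-weighted sum, and observe that the resulting block matrix is a positive combination of positive semi-definite matrices, hence feasible for the combined problem. The only cosmetic difference is that the paper uses the $(K,J)$ formulation~\eqref{z:CJMl45Ex5Bss} while you use the equivalent $(O,N)$ formulation~\eqref{z:HD7JuvyAuF39}; your additional remark on well-definedness of the left-hand side and the dual ``sup-of-affine'' observation are nice extras the paper omits.
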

\begin{proof}[**z:qvWV7mc6QtTI]
  For each $k$, let $K_k,J_k$ be optimal choices
  in~\eqref{z:CJMl45Ex5Bss} for
  $\Ftwo{\rho_k}{D_k}$.  Set $K = \sum_k \alpha_k K_k$ 
  and $J = \sum_k \alpha_k J_k$.  Then
  \begin{align}
    \begin{bmatrix} \rho & D+iK \\ D-iK & J \end{bmatrix}
     = \sum_k \alpha_k \begin{bmatrix} \rho_k & D_k + iK_k \\ D_k - iK_k & J_k \end{bmatrix}
     \geq 0\ ,
  \end{align}
  and so $K,J$ are feasible candidates in the
  problem~\eqref{z:CJMl45Ex5Bss} for
  $\Ftwo{\rho}{D}$.  The objective value achieved for this choice of variables
  gives the bound
  $\Ftwo{\rho}{D} \leq \operatorname{tr}(J) = \sum_k \alpha_k \operatorname{tr}(J_k) = \sum_k \alpha_k
  \Ftwo{\rho_k}{D_k}$.
\end{proof}

\begin{proposition}[Additivity of independent probes]
  \label{z:BCjxitFfXsSP}
  Let $\rho_A$, $\rho'_B$ be two subnormalized quantum states on two systems
  $A,B$, and let $D_A$, $D'_B$ be two traceless Hermitian operators such that
  $P_{\rho_A}^\perp D_A P_{\rho_A}^\perp = 0$ and
  $P_{\rho_B'}^\perp D_B' P_{\rho_B'}^\perp = 0$.  Then
  \begin{align}
    \Ftwo{\rho_A\otimes\rho_B'}{D_A\otimes\rho_B' + \rho_A\otimes D_B'}
    =
    \Ftwo{\rho_A}{D_A} + \Ftwo{\rho_B'}{D_B'}\ .
  \end{align}
\end{proposition}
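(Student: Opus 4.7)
The plan is to construct an explicit symmetric logarithmic derivative (SLD) for the joint state as a sum of the constituent SLDs, and then evaluate the Fisher information directly from its defining expression $\Ftwo{\rho}{D} = \tr(\rho R^2)$. By \cref{z:uWDssQGxkLXS}, the assumptions $P_{\rho_A}^\perp D_A P_{\rho_A}^\perp = 0$ and $P_{\rho_B'}^\perp D_B' P_{\rho_B'}^\perp = 0$ guarantee the existence of Hermitian operators $R_A$ and $R_B'$ with $\frac12\{\rho_A, R_A\} = D_A$ and $\frac12\{\rho_B', R_B'\} = D_B'$.

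I would then define the candidate
\begin{align*}
  R = R_A \otimes \Ident_B + \Ident_A \otimes R_B'
\end{align*}
and verify directly that it solves the joint anticommutator equation. The anticommutator distributes across the tensor product as $\{\rho_A \otimes \rho_B',\, R_A \otimes \Ident\} = \{\rho_A, R_A\} \otimes \rho_B'$ and analogously for the other term, so immediately $\frac12\{\rho_A \otimes \rho_B', R\} = D_A \otimes \rho_B' + \rho_A \otimes D_B'$. Now comes the key point: by the uniqueness part of \cref{z:uWDssQGxkLXS}, any two solutions to the anticommutator equation differ only by an operator supported in $P_{\rho_A \otimes \rho_B'}^{\perp}(\cdot)P_{\rho_A \otimes \rho_B'}^{\perp}$, which drops out of the trace $\tr((\rho_A \otimes \rho_B') R^2)$. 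Consequently, every valid $R$ reproduces the true value of the Fisher information, and our candidate is as good as the canonical one.

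The last step is to expand $R^2 = R_A^2 \otimes \Ident + \Ident \otimes R_B'^2 + 2\,R_A \otimes R_B'$ and evaluate $\tr((\rho_A \otimes \rho_B') R^2)$ factor by factor. The cross term is $2\tr(\rho_A R_A)\,\tr(\rho_B' R_B')$, and from $\tr(\rho_A R_A) = \tr\!\bigl(\tfrac12\{\rho_A, R_A\}\bigr) = \tr(D_A) = 0$ together with the analogous identity for $B$, this contribution vanishes thanks to the tracelessness assumption on $D_A$ and $D_B'$. Assuming normalized states, the remaining two terms give $\tr(\rho_A R_A^2) + \tr(\rho_B' R_B'^2) = \Ftwo{\rho_A}{D_A} + \Ftwo{\rho_B'}{D_B'}$, yielding the claim as an equality rather than as two separate inequalities.

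I do not anticipate a serious obstacle; the argument is an essentially one-line verification once one guesses the correct ansatz for $R$. The only mild subtlety concerns sub-normalization: if $\tr(\rho_A) < 1$ or $\tr(\rho_B') < 1$, the factorwise trace evaluation leaves explicit trace factors $\tr(\rho_B')$ and $\tr(\rho_A)$ in front of each Fisher-information term. These can be absorbed either by assuming normalization at the outset (as is operationally natural) or by invoking the scaling property of \cref{z:xptYZVgGge6R} to reduce to the normalized case before tensoring and then combining; this subtle point is the only place where care with conventions is required.
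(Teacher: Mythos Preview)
Your proposal is correct and follows essentially the same approach as the paper's proof: both construct the joint SLD as $R = R_A\otimes\Ident_B + \Ident_A\otimes R_B'$, verify the anticommutator equation factorwise, expand $R^2$, and kill the cross term via $\tr(\rho_A R_A) = \tr(D_A) = 0$. Your observation about the sub-normalization subtlety is apt; the paper's own proof glosses over the same trace factors in its final line.
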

Observe that the second argument on the left-hand side corresponds to the
derivative of the state of a composite system that remains in a tensor product,
$(d/dt)(\rho_A\otimes\rho_B') = (d\rho_A/dt)\otimes\rho_B' +
\rho_A\otimes(d\rho_B'/dt)$.
\begin{proof}[**z:BCjxitFfXsSP]
  Here we may directly guess a solution $R$ to
  $\mathcal{R}_{\rho_A\otimes\rho_B'}(R) = D_A\otimes\rho_B' + \rho_A\otimes
  D_B'$.  Compute first
  \begin{align}
    \mathcal{R}_{\rho_A\otimes\rho_B'}\bigl(\mathds{1}_A\otimes {M}_B\bigr)
    = \frac12\,\bigl\{\rho_A\otimes\rho_B', \mathds{1}_A\otimes {M}_B\bigr\}
    = \frac12\,\rho_A\otimes\bigl\{\rho_B', {M}_B\bigr\}
    \ ,
  \end{align}
  so we see that, setting
  \begin{align}
    R = \mathds{1}_A\otimes\mathcal{R}_{\rho_B'}^{-1}(D_B') +
    \mathcal{R}_{\rho_A}^{-1}(D_A)\otimes\mathds{1}_B\ ,
  \end{align}
  we have
  $\mathcal{R}_{\rho_A\otimes\rho_B'}(R) = \rho_A\otimes D_B' +
  D_A\otimes\rho_B'$.  Then
  \begin{align}
    \hspace*{3em}&\hspace*{-3em}
    \Ftwo{\rho_A\otimes\rho_B'}{D_A\otimes\rho_B' + \rho_A\otimes D_B'}
    =
    \operatorname{tr}\bigl( (\rho_A\otimes\rho_B')\, R^2\bigr)
      \nonumber\\
    &=
    \begin{alignedat}[t]{1}
      \operatorname{tr}\Bigl( (\rho_A\otimes\rho_B')
    \Bigl(&\mathds{1}_A\otimes\bigl[\mathcal{R}_{\rho_B'}^{-1}(D_B')\bigr]^2 +
    \bigl[\mathcal{R}_{\rho_A}^{-1}(D_A)\bigr]^2\otimes\mathds{1}_B
    \\
    &+ 2\,\mathcal{R}_{\rho_A}^{-1}(D_A)\otimes\mathcal{R}_{\rho_B'}^{-1}(D_B')
    \Bigr)\Bigr)
  \end{alignedat}
  \nonumber\\
    &= \Ftwo{\rho_A}{D_A} + \Ftwo{\rho_B'}{D_B'}\ ,
\end{align}
where in the last line we have used
$\operatorname{tr}\bigl(\rho_A \mathcal{R}_{\rho_A}^{-1}(D_A)\bigr) = \operatorname{tr}\bigl(D_A -
P_{\rho_A}^\perp D_A P_{\rho_A}^\perp\bigr) = \operatorname{tr}(D_A) = 0$.
\end{proof}

\begin{proposition}[Fisher information for pure states]
  \label{z:Sj25VG-tiqC5}
  Let $\lvert {\psi}\rangle $ be a subnormalized state vector and let $D$ be a Hermitian
  operator satisfying $\operatorname{tr}(D)=0$ and $P_\psi^\perp D P_\psi^\perp = 0$.  Then
  $\langle {D}\rangle _\psi = 0$ and
  \begin{align}
    \Ftwo{\psi}{D} = \frac{1}{(\operatorname{tr}\psi)^2}\,\Bigl[4\operatorname{tr}\bigl(\psi D^2\bigr)\Bigr]\ .
  \end{align}
  Furthermore, if $\operatorname{tr}(\psi)=1$ and $D = -i[H,\psi]$ for some Hermitian operator
  $H$, then
  \begin{align}
    \Ftwo{\psi}{D} = 4\sigma_H^2 = 4\bigl(\langle {H^2}\rangle _\psi - \langle {H}\rangle _\psi^2\bigr)\ .
  \end{align}
\end{proposition}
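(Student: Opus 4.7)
The plan is to use Proposition \ref{z:uWDssQGxkLXS} (or equivalently the definition \eqref{z:.eb-akuquBNd}) by guessing an explicit solution $R$ of the anti-commutator equation $\frac12\{\psi,R\}=D$ and then plugging into $\Ftwo{\psi}{D}=\tr(\psi R^2)$. Let $\alpha=\tr\psi=\braket{\psi}{\psi}$, and note that the support projector is $P_\psi = \psi/\alpha$.

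First I would prove the auxiliary claim $\avg{D}_\psi = 0$. Expanding the identity as $\Ident = P_\psi + P_\psi^\perp$ on both sides of $D$, the off-diagonal blocks $P_\psi D P_\psi^\perp$ and $P_\psi^\perp D P_\psi$ are traceless by cyclicity, so
\begin{align}
0 = \tr(D) = \tr(P_\psi D P_\psi) + \tr(P_\psi^\perp D P_\psi^\perp)\ .
\end{align}
By hypothesis the second term vanishes, and since $P_\psi$ is rank one with $P_\psi D P_\psi = \avg{D}_\psi P_\psi$, we conclude $\avg{D}_\psi = 0$. Consequently $D$ has only off-diagonal blocks, $D = P_\psi D P_\psi^\perp + P_\psi^\perp D P_\psi$.

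Next I guess $R = (2/\alpha) D$ and check it solves $\frac12\{\psi,R\} = D$. Since $\psi = \alpha P_\psi$, it suffices to verify $P_\psi D + D P_\psi = D$, which is immediate from the block decomposition of $D$ above. Substituting into the definition gives
\begin{align}
\Ftwo{\psi}{D} = \tr(\psi R^2) = \frac{4}{\alpha^2}\,\tr(\psi D^2)
= \frac{1}{(\tr\psi)^2}\,`\big[4\tr(\psi D^2)]\ ,
\end{align}
as claimed. (Equivalently, one could invoke the scaling property \cref{z:xptYZVgGge6R} with $\alpha=\tr\psi$, $\beta=1$ together with \cref{z:qtGZ5Y7B.DE1} applied to the normalized density operator $\psi/\alpha$, which yields $R = 2D$ for the normalized case.)

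For the second statement, I would check that the hypotheses apply when $D = -i[H,\psi]$ and $\tr\psi = 1$: the trace vanishes by cyclicity, and $P_\psi^\perp D P_\psi^\perp = -iP_\psi^\perp[H,\psi]P_\psi^\perp = 0$ since $P_\psi^\perp \psi = \psi P_\psi^\perp = 0$ for a pure state. Then $\Ftwo{\psi}{D} = 4\tr(\psi D^2)$, and the identity \eqref{z:XrleY6eW0oFd} (derived at the start of the appendix) gives $\tr(\psi(-i[H,\psi])^2) = \avg{H^2}_\psi - \avg{H}_\psi^2 = \sigma_H^2$, yielding $\Ftwo{\psi}{D} = 4\sigma_H^2$. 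No step poses any real difficulty; the only bookkeeping to be careful about is the $1/(\tr\psi)^2$ prefactor arising from the sub-normalization.
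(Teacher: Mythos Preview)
Your proof is correct and follows essentially the same approach as the paper's: both identify the symmetric logarithmic derivative explicitly as $R=(2/\alpha)D$ (the paper first normalizes via \cref{z:xptYZVgGge6R} and then invokes \cref{z:qtGZ5Y7B.DE1} to get $R=2D$, whereas you verify the anti-commutator equation directly for sub-normalized $\psi$), and both finish the second statement via \eqref{z:XrleY6eW0oFd}. One cosmetic point: with $P_\psi=\psi/\alpha$ and $\avg{D}_\psi=\bra\psi D\ket\psi$, the correct identity is $P_\psi D P_\psi=(\avg{D}_\psi/\alpha)P_\psi$, not $\avg{D}_\psi P_\psi$, but this does not affect the conclusion $\avg{D}_\psi=0$.
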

\begin{proof}[**z:Sj25VG-tiqC5]
  First of all thanks to \cref{z:xptYZVgGge6R} we
  assume without loss of generality that $\operatorname{tr}(\psi)=1$.  Then, to see that
  $\langle {D}\rangle  = 0$ we write
  \begin{align}
    0 = \operatorname{tr}(D) = \operatorname{tr}\bigl[ (\psi + P_\psi^\perp) D \bigr]
    = \langle {D}\rangle  + \operatorname{tr}\bigl[ P_\psi^\perp D P_\psi^\perp \bigr] = \langle {D}\rangle \ .
  \end{align}
  Using~\eqref{z:HWLNvE.Hsjmk} and
  \cref{z:qtGZ5Y7B.DE1}, we then find
  \begin{align}
    \Ftwo{\psi}{D} = \operatorname{tr}\bigl(\psi\,( 2D )^2 \bigr) = 4\operatorname{tr}(\psi D^2)\ .
  \end{align}
  If furthermore $D = -i[H,\psi]$ for some Hermitian $H$, then we
  use~\eqref{z:XrleY6eW0oFd} to see that
  $\operatorname{tr}(\psi D^2) = \langle {H^2}\rangle _\psi - \langle {H}\rangle _\psi^2$.
\end{proof}

\begin{proposition}[Data-processing inequality for the Fisher
  information~\cite{R18}]
  \label{z:x2KvR6Lo6ilw}
  Let $\rho$ be a subnormalized quantum state and $D$ be a Hermitian operator
  that satisfies $P_\rho^\perp D P_\rho^\perp = 0$.  Let $\mathcal{E}$ be any
  completely positive, trace-nonincreasing map.
  Then
  \begin{align}
    \Ftwo{\rho}{D} \geq \Ftwo{\mathcal{E}(\rho)}{\mathcal{E}(D)}\ .
    \label{z:EdcL.q0qiNK-}
  \end{align}
\end{proposition}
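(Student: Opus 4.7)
The plan is to leverage the semidefinite-programming (SDP) characterization of the Fisher information that was already developed in this appendix, specifically the minimization form
\begin{align}
\Ftwo{\rho}{D} = 4\min\,\Bigl\{\tr(N)\ :\ \begin{bmatrix}\rho & O\\ O^\dagger & N\end{bmatrix}\geq 0,\quad O+O^\dagger=D,\quad N\geq 0\Bigr\}\,, \nonumber
\end{align}
and combine it with the block-matrix lemma \cref{z:TV-YjOuDas4P}, which says that completely positive maps preserve positive semi-definiteness of block matrices.

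The execution is short. First, I would take $(O,N)$ to be an optimal feasible pair for $\Ftwo{\rho}{D}$, so that $\tr(N)=\tfrac14\Ftwo{\rho}{D}$. Next, applying \cref{z:TV-YjOuDas4P} to the completely positive map $\mathcal{E}$ yields
\begin{align}
\begin{bmatrix}\mathcal{E}(\rho) & \mathcal{E}(O)\\ \mathcal{E}(O)^\dagger & \mathcal{E}(N)\end{bmatrix}\geq 0\,, \nonumber
\end{align}
and since $\mathcal{E}$ is Hermiticity-preserving we have $\mathcal{E}(O)+\mathcal{E}(O)^\dagger = \mathcal{E}(D)$, while $\mathcal{E}(N)\geq 0$. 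Thus $(\mathcal{E}(O),\mathcal{E}(N))$ is a feasible pair in the minimization problem defining $\Ftwo{\mathcal{E}(\rho)}{\mathcal{E}(D)}$, so
\begin{align}
\tfrac14\Ftwo{\mathcal{E}(\rho)}{\mathcal{E}(D)}\;\leq\;\tr\bigl(\mathcal{E}(N)\bigr)\;\leq\;\tr(N)\;=\;\tfrac14\Ftwo{\rho}{D}\,, \nonumber
\end{align}
where the middle inequality uses that $\mathcal{E}$ is trace-non-increasing on positive semi-definite operators (and $N\geq 0$). This immediately yields \eqref{z:EdcL.q0qiNK-}.

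One point that deserves a brief check, though it is not really an obstacle, is the implicit support condition $P_{\mathcal{E}(\rho)}^\perp\,\mathcal{E}(D)\,P_{\mathcal{E}(\rho)}^\perp = 0$ needed for the right-hand side of \eqref{z:EdcL.q0qiNK-} to be well defined via \eqref{z:.gRi0KAw40V9}. As noted in the remark following \cref{z:xifd0Y80UQtS}, however, the mere existence of a feasible candidate in the minimization problem \eqref{z:HD7JuvyAuF39} automatically enforces this condition, so nothing further must be verified. I expect the hardest (but still minor) step to be justifying the invocation of \cref{z:TV-YjOuDas4P} in the trace-non-increasing case, but since that lemma as stated is proved for arbitrary completely positive $\mathcal{E}$ with no unitality assumption, it applies verbatim here.
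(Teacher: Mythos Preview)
Your argument is correct, and it takes a different route from the paper's own proof. The paper works with the dual (maximization) form \eqref{z:YdgtbMB30Vi6}: it picks $S$ optimal for $\Ftwo{\mathcal{E}(\rho)}{\mathcal{E}(D)}$, plugs $\mathcal{E}^\dagger(S)$ into the maximization for $\Ftwo{\rho}{D}$, and then needs the Kadison--Schwarz-type inequality $\mathcal{E}^\dagger(S^2)\geq[\mathcal{E}^\dagger(S)]^2$ from \cref{z:2McXzF-3arvz} (which is where sub-unitality of $\mathcal{E}^\dagger$, i.e.\ trace-non-increase of $\mathcal{E}$, enters). You instead work with the primal (minimization) form \eqref{z:HD7JuvyAuF39}, push an optimal pair $(O,N)$ forward through $\mathcal{E}$ via the block-positivity lemma, and use trace-non-increase only in the scalar step $\tr(\mathcal{E}(N))\leq\tr(N)$. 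Your route is a bit more economical: it replaces the operator inequality of \cref{z:2McXzF-3arvz} by the more elementary \cref{z:TV-YjOuDas4P}, and the support condition comes for free from feasibility rather than needing the separate argument the paper gives. Both approaches are, of course, SDP-dual reflections of the same idea.
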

\begin{proof}[**z:x2KvR6Lo6ilw]
  First we show that
  $P_{\mathcal{E}(\rho)}^\perp \mathcal{E}(D) P_{\mathcal{E}(\rho)}^\perp = 0$,
  ensuring that the right-hand side
  in~\eqref{z:EdcL.q0qiNK-} is well defined.  Decompose
  $D = P_\rho D P_\rho + P_\rho^\perp D P_\rho + P_\rho D P_\rho^\perp = D_0 +
  D_0^\dagger$, defining $D_0 = (P_\rho D P_\rho)/2 + P_\rho D P_\rho^\perp$
  such that $P_\rho^\perp D_0 = 0$.  For $c>0$ large enough, we have
  $\begin{bsmallmatrix} \rho & D_0 \\ D_0^\dagger & c\mathds{1}\end{bsmallmatrix}
  \geq 0$ thanks to \cref{z:pZMoi26flsRb}.  Applying the completely positive map
  ${\mathrm{id}}_{2}\otimes\mathcal{E}$ we obtain
  $\begin{bsmallmatrix} \mathcal{E}(\rho) & \mathcal{E}(D_0) \\
    \mathcal{E}(D_0^\dagger) & c\mathcal{E}(\mathds{1}) \end{bsmallmatrix} \geq 0 $,
  and therefore thanks to \cref{z:pZMoi26flsRb},
  $P_{\mathcal{E}(\rho)}^\perp \mathcal{E}(D_0) = 0$.  Then
  $P_{\mathcal{E}(\rho)}^\perp \mathcal{E}(D) P_{\mathcal{E}(\rho)}^\perp = 0$
  recalling $D=D_0+D_0^\dagger$.

  Let $S$ be optimal in~\eqref{z:YdgtbMB30Vi6}
  for $\Ftwo{\mathcal{E}(\rho)}{\mathcal{E}(D)}$, thus satisfying
  $\Ftwo{ \mathcal{E}(\rho) }{ \mathcal{E}(D) } =
  4\bigl[\operatorname{tr}\bigl(S\,\mathcal{E}(D)\bigr) - \operatorname{tr}\bigl(\mathcal{E}(\rho) \, S^2\bigr)\bigr]$.
  Choosing the candidate $\mathcal{E}^\dagger(S)$
  in~\eqref{z:YdgtbMB30Vi6} for $\Ftwo{\rho}{D}$ we
  obtain
  \begin{align}
    \Ftwo{\rho}{D}
    &\geq 4\bigl[
    \operatorname{tr}\bigl(D \, \mathcal{E}^\dagger(S)\bigr) - \operatorname{tr}\bigl(\rho \, [\mathcal{E}^\dagger(S)]^2\bigr) \bigr]
      \nonumber\\
    &\geq 4\bigl[
      \operatorname{tr}\bigl(\mathcal{E}(D)\, S\bigr) - \operatorname{tr}\bigl(\rho \, \mathcal{E}^\dagger(S^2)\bigr) \bigr]
      \nonumber\\
    &= 4\bigl[
      \operatorname{tr}\bigl(\mathcal{E}(D)\, S\bigr) - \operatorname{tr}\bigl(\mathcal{E}(\rho) \, S^2\bigr) \bigr]
      \nonumber\\
    &= \Ftwo{\mathcal{E}(\rho)}{\mathcal{E}(D)}\ ,
  \end{align}
  where we have used \cref{z:2McXzF-3arvz} in the second inequality.
\end{proof}

In the case of commuting state and differential, the symmetric logarithmic
derivative reduces to a matrix inverse as described by the following
proposition.
\begin{proposition}[Fisher information for commuting state and derivative]
  \label{z:Pm5buQF.JQ.f}
  Let $\rho$ be any subnormalized quantum state and let $D$ be a Hermitian
  operator that satisfies $P_\rho^\perp D P_\rho^\perp = 0$.  Suppose that
  $\rho$ and $D$ commute.  Then
  \begin{align}
    \Ftwo{\rho}{D} = \operatorname{tr}\bigl( \rho^{-1} D^2\bigr)\ .
    \label{z:KQz7f4Ou6xIF}
  \end{align}
\end{proposition}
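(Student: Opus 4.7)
The plan is to exploit simultaneous diagonalizability directly, using the general definition~\eqref{z:.gRi0KAw40V9} of the Fisher information via any solution of the anti-commutator equation rather than invoking the SDP machinery. The computation is essentially immediate once we identify the right candidate for the symmetric logarithmic derivative.

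First, I would observe that since $\rho$ and $D$ commute, they admit a simultaneous eigenbasis $\{\ket{k}\}$ in which $\rho = \sum_k p_k \proj{k}$ and $D = \sum_k d_k \proj{k}$. The hypothesis $P_\rho^\perp D P_\rho^\perp = 0$ then forces $d_k = 0$ for all $k$ with $p_k = 0$, so $D = P_\rho D P_\rho$. In particular, $D$ is supported entirely on the support of $\rho$, which also ensures that the right-hand side $\tr(\rho^{-1} D^2)$ is well defined (with $\rho^{-1}$ the Moore-Penrose pseudoinverse).

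Next, I would take as candidate symmetric logarithmic derivative the operator $R = \rho^{-1} D$, which coincides with $\sum_{k:\, p_k\neq 0} (d_k / p_k)\proj{k}$. Since $\rho$, $\rho^{-1}$, and $D$ pairwise commute, one checks directly that $\rho R = R \rho = P_\rho D = D$, so $\frac{1}{2}\{\rho, R\} = D$, confirming that $R$ is a valid solution of the anti-commutator equation. By \cref{z:uWDssQGxkLXS}, the value of $\tr(\rho R^2)$ does not depend on the particular choice of solution, so this $R$ is as good as any other.

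Finally, I would plug this $R$ into the definition~\eqref{z:.gRi0KAw40V9} and compute
\begin{align}
\Ftwo{\rho}{D} = \tr(\rho R^2) = \tr(\rho\, \rho^{-1} D\, \rho^{-1} D)
= \tr(P_\rho D\, \rho^{-1} D) = \tr(\rho^{-1} D^2),
\end{align}
using commutativity to combine factors and the fact that $P_\rho D = D$ to absorb the projector. This yields the claimed identity~\eqref{z:KQz7f4Ou6xIF}. There is no substantive obstacle: the entire argument rests on commutativity and on the observation that the kernel condition plus commutativity removes any contribution of $\rho^{-1}$ outside the support of $\rho$.
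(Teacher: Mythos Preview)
Your proof is correct and takes the direct symmetric-logarithmic-derivative route: you exhibit $R=\rho^{-1}D$, verify it solves the anti-commutator equation, and evaluate $\tr(\rho R^2)$. The paper explicitly acknowledges this route works (``This can be shown from the properties of the symmetric logarithmic derivative'') but instead gives an alternative proof via the dual convex optimizations of \cref{z:xifd0Y80UQtS}: it plugs $S=\rho^{-1}D/2$ into the maximization~\eqref{z:YdgtbMB30Vi6} to get $\Ftwo{\rho}{D}\geq\tr(\rho^{-1}D^2)$, and $L=\rho^{-1/2}D/2$ into the minimization~\eqref{z:V4lWxgFVfrPu} to get the reverse inequality. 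Your argument is the more elementary one and is self-contained once the definition~\eqref{z:.gRi0KAw40V9} is in hand; the paper's argument is there to illustrate how the SDP pair can be used to pin down the Fisher information by sandwiching it between matching feasible values, which is the technique the appendix is advertising.
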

\begin{proof}[**z:Pm5buQF.JQ.f]
  This can be shown from the properties of the symmetric logarithmic derivative,
  but we give a simple alternative proof using our convex optimizations for fun.
  Choose $S = \rho^{-1} D / 2$
  in~\eqref{z:YdgtbMB30Vi6}, which we note is a
  Hermitian operator because $\rho$ and $D$ commute.  This gives
  $\Ftwo{\rho}{D} \geq \text{\eqref{z:KQz7f4Ou6xIF}}$.
  Similarly, the choice $L = \rho^{-1/2} D / 2$
  in~\eqref{z:V4lWxgFVfrPu} provides the opposite
  bound.
\end{proof}

The following proposition interprets the Fisher information for subnormalized
states according to the definition~\eqref{z:.gRi0KAw40V9} as the Fisher
information of a normalized state that was projected onto a smaller subspace.
This interpretation works as long as the subnormalized state does not change
trace along its evolution, meaning that the derivative $D$ has zero trace.

\begin{proposition}[Fisher information for subnormalized and normalized states]
  \label{z:ShHtwoFJPzYY}
  Let $\rho$ be any subnormalized quantum state and let $D$ be any Hermitian
  operator that satisfies both $\operatorname{tr}(D)=0$ and $P_\rho^\perp D P_\rho^\perp = 0$.
  Define $\rho',D'$, with an additional new Hilbert space dimension, as
  \begin{align}
    \rho' &= \begin{pmatrix}
      ~{\fboxsep=1.5em\fbox{\(\rho\)}}
 & 0\\ 0 & 1-\operatorname{tr}(\rho) \end{pmatrix}\ ;
    & %
    D' &= \begin{pmatrix}
      ~{\fboxsep=1.5em\fbox{\(D\)}}
 & 0\\ 0 & 0~ \end{pmatrix}\ .
  \end{align}
  Then
  \begin{align}
    \Ftwo{\rho}{D} = \Ftwo{\rho'}{D'}\ .
  \end{align}
\end{proposition}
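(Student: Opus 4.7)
My plan is to prove the equality by constructing an explicit symmetric logarithmic derivative for the extended state $\rho'$ using a block-diagonal ansatz built from any solution $R$ of the anticommutator equation $\tfrac{1}{2}\{\rho,R\}=D$. The guiding intuition is that, because $D'$ has no matrix elements linking the original block to the new dimension and because the new block of $\rho'$ is just a scalar that the trivial derivative on that block is already compatible with, the extended problem decouples into two independent anticommutator equations whose combined Fisher information equals that of the original.

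First, I would verify that the definition $\Ftwo{\rho'}{D'}$ is well-posed, i.e., $P_{\rho'}^\perp D' P_{\rho'}^\perp = 0$. This requires a short case split: if $\tr(\rho)<1$ then the extra dimension is in the support of $\rho'$ and $P_{\rho'}^\perp$ is the (lift of the) projector $P_\rho^\perp$ on the original space, so the condition reduces to the given hypothesis; if $\tr(\rho)=1$ then $P_{\rho'}^\perp$ also projects onto the extra dimension but $D'$ is zero there, so the condition still reduces to $P_\rho^\perp D P_\rho^\perp=0$.

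Next, I would take any Hermitian $R$ satisfying $\tfrac{1}{2}\{\rho,R\}=D$, which exists by \cref{z:uWDssQGxkLXS} under the hypothesis $P_\rho^\perp D P_\rho^\perp=0$, and define
\begin{equation}
R' \;=\; \begin{pmatrix} R & 0 \\ 0 & 0 \end{pmatrix}.
\end{equation}
A direct block computation gives
\begin{equation}
\tfrac{1}{2}\{\rho',R'\} \;=\; \begin{pmatrix} \tfrac{1}{2}\{\rho,R\} & 0 \\ 0 & 0 \end{pmatrix} \;=\; \begin{pmatrix} D & 0 \\ 0 & 0 \end{pmatrix} \;=\; D',
\end{equation}
so $R'$ is a valid SLD for $(\rho',D')$. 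Since $\Ftwo{\rho'}{D'}$ does not depend on the choice of such $R'$ (again by \cref{z:uWDssQGxkLXS} together with the invariance argument recalled in \cref{z:KH4B5FtzQ1kE}), I may evaluate it on this particular $R'$. The block structure then yields $\rho'R'^2 = (\rho R^2)\oplus 0$, so $\tr(\rho' R'^2)=\tr(\rho R^2)=\Ftwo{\rho}{D}$, completing the proof.

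I do not expect any genuine obstacle here; the only nontrivial point is confirming that the solution space of the anticommutator equation for $\rho'$ does not introduce a dependence on off-diagonal pieces of $R'$, and this is exactly what is guaranteed by the uniqueness-up-to-kernel statement of \cref{z:uWDssQGxkLXS}. As an alternative cross-check I would briefly note that one can also bootstrap the equality from the SDP formulation (\ref{z:HD7JuvyAuF39}): choosing $(O,N)\oplus(0,0)$ as candidates gives $\Ftwo{\rho'}{D'}\leq\Ftwo{\rho}{D}$, while the matrix-projection channel onto the original subspace is trace-non-increasing and maps $\rho'\mapsto\rho$, $D'\mapsto D$, yielding the reverse inequality via the data processing inequality (\cref{z:x2KvR6Lo6ilw}).
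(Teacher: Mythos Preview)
Your proof is correct and follows essentially the same approach as the paper: construct the block-diagonal SLD $R' = R \oplus 0$ from a solution $R$ of $\tfrac12\{\rho,R\}=D$, verify it solves the extended anticommutator equation, and read off $\tr(\rho' R'^2)=\tr(\rho R^2)$. Your additional well-posedness check and the SDP/data-processing cross-check are nice bonuses not present in the paper's proof.
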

\begin{proof}[**z:ShHtwoFJPzYY]
  Let $P$ denote the projector onto the subspace of the Hilbert space on which
  the upper left block of $\rho',D'$ acts.  %
  Let $R = \mathcal{R}_\rho^{-1}(D)$, and define
  \begin{align}
    R' = \begin{pmatrix}
      ~{\fboxsep=1.5em\fbox{\(R\)}}
      & 0\\ 0 & 0~ \end{pmatrix}\ .
  \end{align}
  Multiplying together block-diagonal matrices preserves the block-diagonal
  structure, hence
  \begin{align}
    \frac12\{\rho', R'\}
    &= \begin{pmatrix}
      ~{\fboxsep=1.5em\fbox{\(\displaystyle \frac12\{\rho,R\}\)}}
      & 0\\ 0 & 0~ \end{pmatrix}
    = D'\ .
  \end{align}
  Then with the definition~\eqref{z:.gRi0KAw40V9},
  \begin{align}
    \Ftwo{\rho'}{D'} = \operatorname{tr}\bigl(\rho' R'^2\bigr) = \operatorname{tr}\bigl(\rho R^2\bigr) = \Ftwo{\rho}{D}\ .
    \tag*\qedhere
  \end{align}
\end{proof}

We can furthermore prove a relation between the Fisher information of two
different directions in state space that might be associated with two different
parametrized evolutions.

\begin{proposition}[Relation between the Fisher information of two directions]
  \label{z:wXcXJs-gCbik}
  Let $\rho$ be a subnormalized quantum state and let $D,D'$ be two Hermitian
  operators that satisfy
  $P_\rho^\perp D P_\rho^\perp = P_\rho^\perp D' P_\rho^\perp = 0$.  Then
  \begin{align}
    \Ftwo{\rho}{D}
    \leq
    \Ftwo{\rho}{D'}
    + \Bigl[\Ftwo{\rho}{D+D'} \,
    \Ftwo{\rho}{D-D'}\Bigr]^{1/2}\ .
    \label{z:22LPlD1sh7iv}
  \end{align}
  Consequently,
  \begin{align}
    \bigl \lvert {\Ftwo{\rho}{D} - \Ftwo{\rho}{D'}}\bigr \rvert 
    \leq \Bigl[\Ftwo{\rho}{D+D'} \, \Ftwo{\rho}{D-D'}\Bigr]^{1/2}\ .
    \label{z:Z82-JAjXWpU5}
  \end{align}
  Furthermore, equality holds
  in~\eqref{z:Z82-JAjXWpU5} if and
  only if $D, D'$ are linearly dependent.
\end{proposition}
\begin{proof}[**z:wXcXJs-gCbik]
  Define the shorthand $\Delta_\pm = D \pm D'$.  We
  compute
  \begin{align}
    \Ftwo{\rho}{D}
    &= \operatorname{tr}\Bigl( \rho \,  \Bigl[\mathcal{R}_\rho^{-1}\bigl(D'\bigr) +
      \mathcal{R}_\rho^{-1}\bigl(\Delta_-\bigr)\Bigr]^2 \Bigr)
      \nonumber\\
    &= \Ftwo{\rho}{D'} +
      \operatorname{tr}\Bigl( \rho \,  \Bigl[
      \Bigl(\mathcal{R}_\rho^{-1}\bigl(\Delta_-\bigr)\Bigr)^2 +
      \Bigl\{ \mathcal{R}_\rho^{-1}\bigl(D'\bigr) \, , \,
      \mathcal{R}_\rho^{-1}\bigl(\Delta_-\bigr) \Bigr\} \Bigr]\Bigr)
      \nonumber\\
    &= \Ftwo{\rho}{D'} +
      \operatorname{tr}\Bigl( \rho \,
      \Bigl\{ \frac12\mathcal{R}_\rho^{-1}\bigl(\Delta_-\bigr) +
      \mathcal{R}_\rho^{-1}\bigl(D'\bigr) \, , \,
      \mathcal{R}_\rho^{-1}\bigl(\Delta_-\bigr) \Bigr\} \Bigr)\ ,
      \label{z:L2FRyKxEq3Nv}
  \end{align}
  where in the last equality we have used ${M}^2 = \{\frac12 {M},{M}\}$ for any operator ${M}$
  along with the linearity of the anticommutator in the first argument.
  Furthermore, we see from the definition of  $\Delta_-$ that
  \begin{align}
    D' + \frac12 \Delta_-
    = \frac12\bigl(D + D'\bigr) = \frac12\Delta_+\ .
  \end{align}
  Then
  \begin{align}
    \text{\eqref{z:L2FRyKxEq3Nv}}
    &= \Ftwo{\rho}{D'} +
      \frac12 \operatorname{tr}\Bigl( \rho \,
      \Bigl\{ \mathcal{R}_\rho^{-1}\bigl(\Delta_+\bigr) \, , \,
      \mathcal{R}_\rho^{-1}\bigl(\Delta_-\bigr) \Bigr\} \Bigr)
      \nonumber\\
    &= \Ftwo{\rho}{D'} +
      \operatorname{Re}\operatorname{tr}\Bigl( \rho \;
      \mathcal{R}_\rho^{-1}\bigl(\Delta_+\bigr) \,
      \mathcal{R}_\rho^{-1}\bigl(\Delta_-\bigr)  \Bigr)
      \nonumber\\
    &=: \Ftwo{\rho}{D'} + C_\rho\bigl(\Delta_+,\Delta_-\bigr)\ ,
      \label{z:zQH4pqxbUtSJ}
  \end{align}
  where $C_\rho\bigl(\Delta_+,\Delta_-\bigr)$ is defined as the second term in the
  above expression.  From the Cauchy-Schwarz inequality,
  \begin{align}
    \bigl \lvert {C_{\rho}\bigl(\Delta_+,\Delta_-\bigr)}\bigr \rvert ^2
    &\leq
      \operatorname{tr}\Bigl(\rho\Bigl[ \mathcal{R}_\rho^{-1}\bigl(\Delta_+\bigr) \Bigr]^2 \Bigr)
      \operatorname{tr}\Bigl(\rho\Bigl[ \mathcal{R}_\rho^{-1}\bigl(\Delta_-\bigr) \Bigr]^2 \Bigr)
      \ .
  \end{align}
  Hence
  \begin{align}
    \text{\eqref{z:zQH4pqxbUtSJ}}
    &\leq \Ftwo{\rho}{D} + \Bigl[
      \Ftwo{\rho}{\Delta_+} \, \Ftwo{\rho}{\Delta_-} \Bigr]^{1/2}\ .
  \end{align}
  \Cref{z:Z82-JAjXWpU5} follows by
  repeating the argument while inverting the roles of $D$ and $D'$.

  Equality in~\eqref{z:Z82-JAjXWpU5}
  is equivalent to the Cauchy-Schwarz inequality being tight.  In turn is
  equivalent to the operators $\rho^{1/2}\mathcal{R}_\rho^{-1}\bigl(\Delta_+\bigr)$
  and $\rho^{1/2}\mathcal{R}_\rho^{-1}\bigl(\Delta_-\bigr)$ being linearly dependent,
  i.e., there exist $\alpha_1, \alpha_2\in\mathbb{R}$,
  $(\alpha_1,\alpha_2)\neq(0,0)$, such that
  \begin{align}
    \alpha_1 \rho^{1/2}\mathcal{R}_\rho^{-1}\bigl(\Delta_+\bigr)
    + \alpha_2 \rho^{1/2}\mathcal{R}_\rho^{-1}\bigl(\Delta_-\bigr)
    = 0\ .
    \label{z:HkMNBKMKD7MB}
  \end{align}
  Since the operator $\mathcal{R}_\rho^{-1}\bigl(\Delta_\pm\bigr)$ vanishes on the
  operator subspace spanned by $P_\rho^\perp (\cdot) P_\rho^\perp$, we have
  that~\eqref{z:HkMNBKMKD7MB} is equivalent to
  \begin{align}
    \alpha_1 \mathcal{R}_\rho^{-1}\bigl(\Delta_+\bigr)
    + \alpha_2 \mathcal{R}_\rho^{-1}\bigl(\Delta_-\bigr)
    = 0\ ,
  \end{align}
  and therefore to
  \begin{align}
    \mathcal{R}_\rho^{-1}\Bigl[
        \alpha_1 \Delta_+ + \alpha_2 \Delta_-
    \Bigr] = 0\ .
    \label{z:6xbhKzahB-Vy}
  \end{align}
  Because the kernel of the superoperator $\mathcal{R}_\rho^{-1}$ is spanned by
  $P_\rho^{\perp}\,(\cdot)\,P_\rho^{\perp}$, onto which $\Delta_\pm$ have no
  support by assumption, then \eqref{z:6xbhKzahB-Vy} is further
  equivalent to
  \begin{align}
    \alpha_1 \Delta_+ + \alpha_2 \Delta_- = 0\ .
  \end{align}
  Therefore, equality in
  \eqref{z:Z82-JAjXWpU5} is achieved
  if and only if $\Delta_\pm$ are linearly dependent, which is equivalent to the
  linear dependence of $D$ with $D'$.
\end{proof}

Using a similar idea, we can also prove a continuity bound on the Fisher
information with respect to its second argument.

\begin{proposition}[A continuity bound of the Fisher information in its second
  argument]
  \label{z:Y81L4FyklNd5}
  Let $\rho$ be any subnormalized quantum state and let $D,\Delta$ be any
  Hermitian operators such that
  $P_\rho^\perp D P_\rho^\perp = 0 = P_\rho^\perp \Delta P_\rho^\perp$.  Then
  \begin{align}
    \Bigl \lvert { \Ftwo{\rho}{D+\Delta} - \Ftwo{\rho}{D} - \Ftwo{\rho}{\Delta} }\Bigr \rvert 
    \leq
    2 \mathopen{}\left[ \Ftwo{\rho}{D} \Ftwo{\rho}{\Delta} \right]\mathclose{}^{1/2}\ .
  \end{align}
  As a consequence,
  \begin{align}
    \Bigl \lvert { \Ftwo{\rho}{D+\Delta} - \Ftwo{\rho}{D} }\Bigr \rvert 
    \leq
    \Ftwo{\rho}{\Delta} + 2 \mathopen{}\left[ \Ftwo{\rho}{D} \Ftwo{\rho}{\Delta} \right]\mathclose{}^{1/2}\ .
  \end{align}
\end{proposition}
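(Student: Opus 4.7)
The plan is to mimic the expansion used in the proof of \cref{z:wXcXJs-gCbik}, but organized so that the symmetric sum $\Ftwo{\rho}{D+\Delta}$ gets directly split into $\Ftwo{\rho}{D}+\Ftwo{\rho}{\Delta}$ plus a cross term that can be handled by Cauchy--Schwarz. Concretely, I would start from the closed-form expression $\Ftwo{\rho}{D} = \tr`\big(\rho\,[\mathcal{R}_\rho^{-1}(D)]^2)$ from~\eqref{z:HWLNvE.Hsjmk} and use the fact that the Moore--Penrose pseudoinverse $\mathcal{R}_\rho^{-1}$ is a linear superoperator on Hilbert--Schmidt space, together with the easy observation that $P_\rho^\perp (D+\Delta) P_\rho^\perp = 0$ under the given hypotheses (so the quantity $\Ftwo{\rho}{D+\Delta}$ is well-defined).

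By linearity, $\mathcal{R}_\rho^{-1}(D+\Delta) = \mathcal{R}_\rho^{-1}(D)+\mathcal{R}_\rho^{-1}(\Delta)$, and squaring and taking the trace against $\rho$ yields
\begin{align}
  \Ftwo{\rho}{D+\Delta}
  = \Ftwo{\rho}{D} + \Ftwo{\rho}{\Delta}
    + 2\,\Re\tr`\big(\rho\,\mathcal{R}_\rho^{-1}(D)\,\mathcal{R}_\rho^{-1}(\Delta))\ ,
\end{align}
using the cyclicity of the trace and the fact that $\mathcal{R}_\rho^{-1}$ preserves Hermiticity. So the whole content of the first inequality reduces to bounding the real cross term.

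For that, I would apply the Hilbert--Schmidt Cauchy--Schwarz inequality $\abs{\tr(X^\dagger Y)}^2 \le \tr(X^\dagger X)\tr(Y^\dagger Y)$ to the choices $X = \mathcal{R}_\rho^{-1}(D)\,\rho^{1/2}$ and $Y = \mathcal{R}_\rho^{-1}(\Delta)\,\rho^{1/2}$. Since the relevant operators are Hermitian, this produces
\begin{align}
  \abs[\big]{\tr`\big(\rho\,\mathcal{R}_\rho^{-1}(D)\,\mathcal{R}_\rho^{-1}(\Delta))}^2
  \leq \Ftwo{\rho}{D}\,\Ftwo{\rho}{\Delta}\ ,
\end{align}
and taking real parts and the square root gives the first inequality of the proposition. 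The second inequality follows immediately by adding and subtracting $\Ftwo{\rho}{\Delta}$ inside the absolute value and applying the triangle inequality.

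There is no real obstacle here: the only point that requires a moment's care is confirming that $\mathcal{R}_\rho^{-1}$ acts linearly even when $\rho$ is rank-deficient (which is just the general Moore--Penrose property, already implicit in \cref{z:uWDssQGxkLXS}), and that the ``bad'' pieces $P_\rho^\perp D_0 P_\rho^\perp$ do not contribute to either side of the inequality. Everything else is an algebraic expansion plus one application of Cauchy--Schwarz, exactly parallel in spirit to the last step of the proof of \cref{z:wXcXJs-gCbik}.
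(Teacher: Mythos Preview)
Your proposal is correct and follows essentially the same approach as the paper: expand $\Ftwo{\rho}{D+\Delta}$ via linearity of $\mathcal{R}_\rho^{-1}$ into $\Ftwo{\rho}{D}+\Ftwo{\rho}{\Delta}$ plus a cross term, then bound that cross term by Cauchy--Schwarz. The only cosmetic difference is that the paper writes the cross term as $\tr\bigl(D\,\mathcal{R}_\rho^{-1}(\Delta)\bigr)$ and applies Cauchy--Schwarz for the positive bilinear form $\langle A,B\rangle=\tr\bigl(A\,\mathcal{R}_\rho^{-1}(B)\bigr)$, whereas you write it as $\Re\tr\bigl(\rho\,\mathcal{R}_\rho^{-1}(D)\,\mathcal{R}_\rho^{-1}(\Delta)\bigr)$ and apply Hilbert--Schmidt Cauchy--Schwarz with the $\rho^{1/2}$ factorization; these are the same quantity and the same inequality.
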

\begin{proof}[**z:Y81L4FyklNd5]
  Using the formula $\Ftwo{\rho}{D'} = \operatorname{tr}(D'\,\mathcal{R}_\rho^{-1}(D'))$ for
  the Fisher information, we write
  \begin{align}
    \Ftwo{\rho}{D+\Delta}
    &= \operatorname{tr}\mathopen{}\left( (D+\Delta)\, \mathcal{R}_\rho^{-1}(D+\Delta) \right)\mathclose{}
    \nonumber\\
    &= \Ftwo{\rho}{D} + \Ftwo{\rho}{\Delta} + 2\operatorname{tr}\bigl( D\, \mathcal{R}_\rho^{-1}(\Delta)\bigr)\ ,
  \end{align}
  recalling that $\mathcal{R}_\rho^{-1}$ is superoperator self-adjoint.  The
  claim follows by bounding the last term in the above expression using the
  Cauchy-Schwarz inequality, to get
  \begin{align}
    \mathopen{}\left \lvert { \operatorname{tr}\bigl( D\, \mathcal{R}_\rho^{-1}(\Delta)\bigr) }\right \rvert \mathclose{}
    &\leq
      \sqrt{ \operatorname{tr}\bigl(D\,\mathcal{R}_\rho^{-1}(D)\bigr) \operatorname{tr}\bigl(\Delta\,\mathcal{R}_\rho^{-1}(\Delta)\bigr) }
    \nonumber\\
    &= \sqrt{ \Ftwo{\rho}{D} \Ftwo{\rho}{\Delta} }\ .
      \tag*\qedhere
  \end{align}
\end{proof}

We can consider more precisely how $\FIqty{Bob}{t}$ behaves when seen
as a function of the noise channel $\mathcal{N}$, for channels $\mathcal{N}$
that are close to the identity channel ${\mathrm{id}}$.
More specifically, we prove a continuity bound for the quantum Fisher
information $\Ftwo{\mathcal{N}(\psi)}{\mathcal{N}(\partial_t \psi)}$ at the
point $\mathcal{N}={\mathrm{id}}$, when that quantity is seen as a function of
$\mathcal{N}$.

\begin{proposition}
  \label{z:-97yhn1ACkRa}
  Let $\lvert {\psi}\rangle $ be a pure state and let $D$ be a Hermitian operator such that
  $\langle {D}\rangle _\psi = 0$ and $P_\psi^\perp D P_\psi^\perp = 0$.  Let $\epsilon>0$
  and let $\mathcal{N}$ be a channel with
  $\lVert { \mathcal{N} - {\mathrm{id}} }\rVert _\diamond \leq \epsilon$.  Then
  \begin{align}
    F\bigl(\psi, D\bigr)
    &\geq
    F\bigl(\mathcal{N}(\psi), \mathcal{N}(D)\bigr)
      \geq
    F\bigl(\psi, D\bigr)
    - 8\epsilon \lVert {D}\rVert _1\lVert {D}\rVert _\infty\ .
  \end{align}
\end{proposition}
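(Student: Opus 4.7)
The plan is to prove the two inequalities separately. The upper bound $\Ftwo{\psi}{D} \geq \Ftwo{\mathcal{N}(\psi)}{\mathcal{N}(D)}$ is an immediate application of the data processing inequality (\cref{z:x2KvR6Lo6ilw}) for the quantum Fisher information. The real work is therefore in the lower bound.

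For the lower bound, the approach is to exploit the variational characterization \eqref{z:YdgtbMB30Vi6}: any Hermitian $S$ provides a lower bound $\Ftwo{\mathcal{N}(\psi)}{\mathcal{N}(D)} \geq 4[\tr(\mathcal{N}(D)\,S) - \tr(\mathcal{N}(\psi)\,S^2)]$. The clever choice is to use the $S$ that is \emph{optimal} for the noiseless problem at $\mathcal{N}=\IdentProc{}$. By \cref{z:qtGZ5Y7B.DE1}, together with the assumptions $\avg{D}_\psi=0$ and $P_\psi^\perp D P_\psi^\perp=0$, the optimal symmetric logarithmic derivative is $R = \mathcal{R}_\psi^{-1}(D) = 2D$, so the optimal choice is $S=D$. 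With this substitution, and since $4[\tr(D^2) - \tr(\psi D^2)] = \Ftwo{\psi}{D}$ (cf.\@ \cref{z:Sj25VG-tiqC5}), one gets
\begin{align}
  \Ftwo{\mathcal{N}(\psi)}{\mathcal{N}(D)} - \Ftwo{\psi}{D}
  \geq 4\tr`\big[(\mathcal{N}(D)-D)\,D] - 4\tr`\big[(\mathcal{N}(\psi)-\psi)\,D^2]\ .
\end{align}

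It then remains to bound the two error terms. Using H\"older's inequality $|\tr(AB)|\leq \onenorm{A}\opnorm{B}$ and the defining property of the diamond norm---namely $\onenorm{(\mathcal{N}-\IdentProc{})(X)} \leq \epsilon\,\onenorm{X}$ for every Hermitian $X$---the first term is bounded by $\epsilon\onenorm{D}\opnorm{D}$, while the second is bounded by $\epsilon\onenorm{\psi}\opnorm{D^2} = \epsilon\opnorm{D}^2 \leq \epsilon\onenorm{D}\opnorm{D}$, where the last inequality uses $\opnorm{D}\leq \onenorm{D}$. Summing the two contributions with the prefactor of $4$ gives exactly $8\epsilon\onenorm{D}\opnorm{D}$, which yields the stated lower bound. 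I do not anticipate any serious obstacle: the main insight is that choosing the noiseless-optimal $S=D$ as a suboptimal candidate in the noisy problem converts the continuity question into a pair of elementary trace-norm estimates.
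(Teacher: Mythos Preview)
Your proof is correct and essentially identical to the paper's argument: both use the noiseless-optimal choice $S=D$ (obtained via \cref{z:qtGZ5Y7B.DE1}) as a suboptimal candidate in the variational formula~\eqref{z:YdgtbMB30Vi6} for $\Ftwo{\mathcal{N}(\psi)}{\mathcal{N}(D)}$, and then control the two resulting error terms by $\onenorm{(\mathcal{N}-\IdentProc{})(\cdot)}\leq\epsilon\onenorm{\cdot}$ together with $\opnorm{D}\leq\onenorm{D}$. The only cosmetic difference is that the paper writes $\Ftwo{\psi}{D}$ first and splits $\mathcal{N}=\IdentProc{}+\Delta$, whereas you write the noisy variational bound first and subtract; your justification that $4[\tr(D^2)-\tr(\psi D^2)]=\Ftwo{\psi}{D}$ is more cleanly attributed to the optimality of $S=D$ in \cref{z:xifd0Y80UQtS} than to \cref{z:Sj25VG-tiqC5}.
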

Observe that the stated conditions on $D$ are satisfied if $D = -i[H,\psi]$ for
some Hermitian operator $H$.

\begin{proof}[**z:-97yhn1ACkRa]
  Let $\epsilon>0$ and let $\mathcal{N} = {{\mathrm{id}}} + \Delta$ where
  $\Delta$ is a Hermiticity-preserving superoperator with
  $\lVert {\Delta}\rVert _\diamond \leq \epsilon$.  The first claimed inequality
  immediately follows from the data-processing inequality.  We now prove the
  second inequality.  Using \cref{z:qtGZ5Y7B.DE1}, let
  $S = \frac12 \mathcal{R}_{\psi}^{-1}(D) = D$.  Since this $S$ is known to be
  optimal in \cref{z:YdgtbMB30Vi6} for
  $\Ftwo{\psi}{D}$, we can compute
  \begin{align}
    \Ftwo{\psi}{D}
    & = 4 \Bigl\{ \operatorname{tr}\bigl[(D) \, S\bigr] - \operatorname{tr}\bigl[\psi\, S^2\bigr] \Bigr\}
      \nonumber\\
    &= 4 \Bigl\{ \operatorname{tr}\bigl[\mathcal{N}(D) \, S\bigr] - \operatorname{tr}\bigl[\mathcal{N}(\psi)\, S^2\bigr] \Bigr\}
      - 4 \Bigl\{ \operatorname{tr}\bigl[\Delta(D) \, S\bigr] - \operatorname{tr}\bigl[\Delta(\psi)\, S^2\bigr] \Bigr\}
      \nonumber\\
    &\leq F\bigl( \mathcal{N}(\psi) , \mathcal{N}(D) \bigr)
      + 4 \bigl \lVert {\Delta(D)}\bigr \rVert _1\lVert {S}\rVert _\infty
      + 4\bigl \lVert {\Delta(\psi)}\bigr \rVert _1 \lVert { S}\rVert _\infty^2
      \nonumber\\
    &\leq F\bigl( \mathcal{N}(\psi) , \mathcal{N}(D) \bigr)
      + 8\epsilon \lVert {D }\rVert _1 \lVert {D}\rVert _\infty\ ,
  \end{align}
  using $\lVert {D}\rVert _\infty \leq \lVert {D}\rVert _1$, and
  thus proving the claim.
\end{proof}

\section{Optimal local-sensing and the Cram\'er-Rao bound}
\label{z:9mLm-0LXJIol}
\label{z:eQl8.oS.4.na}

Here we review which operators achieve the optimal variance in estimating an
unknown parameter~\cite{R22,R23,R0,R11,R88}.  An unknown parameter $t$ of an evolution
$\rho_t$ of a (normalized) quantum state is estimated locally around $t_0$ using
an observable $T$, whose measurement outcomes are the estimates of the
parameter.  We ask for the observable to have the correct average and first
order deviation, $\langle {T}\rangle _{\rho_{t_0+dt}} = t_0 + dt + O(dt^2)$; except in edge
cases, this condition can be enforced by a suitable scaling factor and a
suitable shift by the identity.  The conditions then become
$\langle {T}\rangle _{\rho_{t_0}} = t_0$ and $\operatorname{tr}\bigl\{ (\partial_t \rho_t|_{t_0})\, T\bigr\} = 1$.
We seek to minimize the operator $T$'s variance
$\langle {T^2}\rangle _{\rho_{t_0}} - \langle {T}\rangle _{\rho_{t_0}}^2$.  We call such an operator
with minimal variance an \emph{optimal local-sensing operator}, and the square
root of the minimal variance is the \emph{optimal estimation error}
$\Delta t_{\mathrm{unc}}(t_0)$ locally at $t_0$.  That is, the optimal
estimation error locally at $t_0$, along with an optimal local-sensing operator
at $t_0$, are given by the following optimization problem:
\begin{align}
  \Delta t_{\mathrm{unc}}^2(t_0)
  = \begin{array}[t]{rl}
      \min\limits_{T=T^\dagger}
      &\quad \operatorname{tr}\bigl\{\rho_{t_0} (T - t_0\mathds{1})^2\bigr\}, \\
      \mathrm{s.t.}
      &\quad \operatorname{tr}\bigl\{ \rho_{t_0} T\bigr\} = t_0\ ,\ \operatorname{tr}\bigl\{ (\partial_t\rho_t|_{t_0})\, T\bigr\} = 1\ .
    \end{array}
  \label{z:URKHO5wxhm5z}
\end{align}
In the event that $\partial_t \rho_t |_{t_0} = 0$, there is no operator $T$ that
satisfies the given conditions.  We conventionally set
$\Delta t_{\mathrm{unc}} = \infty$, since the state is locally stationary and no
observable is able to detect a first-order deviation in the parameter $t$.

A more general scheme would enable an agent to use a generalized measurement
given by a POVM instead of an observable $T$.  However, as shown in e.g.\@
Ref.~\cite{R0}, the optimal POVM can in fact be chosen
to be a projective measurement.  Therefore one cannot sense a parameter more
accurately using a POVM instead of an observable.

The following proposition fully characterizes the locally optimal sensing
observables (cf., e.g.,~\cite{R0}).  In the following,
we write as a shorthand $\rho$ and $\partial_t\rho$ instead of $\rho_{t_0}$ and
$\partial_t \rho_t |_{t_0}$.

\begin{proposition}[Locally optimal sensing]
  \label{z:ZoysZzb.FHNr}
  Assume $\partial_t\rho\neq0$.  Then any operator $T$ that is optimal
  in~\eqref{z:URKHO5wxhm5z} is of the
  form
  \begin{align}
    T = t\mathds{1}+
    (\Delta t_{\mathrm{unc}}^2) \, 
    \mathcal{R}_\rho^{-1}\bigl(\partial_t\rho\bigr)
    + P_\rho^\perp {M} P_\rho^\perp\ ,
  \end{align}
  for some Hermitian operator ${M}$.
  
  If $P_\rho^\perp (\partial_t\rho) P_\rho^\perp = 0$, then
  $\Delta t_{\mathrm{unc}}^2 = [ \Ftwo{\rho}{\partial_t\rho} ]^{-1}$ with the
  Fisher information defined in~\eqref{z:.gRi0KAw40V9}, and ${M}$ can be
  arbitrary.

  If $P_\rho^\perp (\partial_t\rho) P_\rho^\perp \neq 0$, then
  $\Delta t_{\mathrm{unc}}^2 = 0$ and ${M}$ satisfies
  $\operatorname{tr}\bigl({M}\,P_\rho^\perp\partial_t\rho P_\rho^\perp\bigr) = 1$.
\end{proposition}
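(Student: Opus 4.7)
The plan is to recognize \eqref{z:URKHO5wxhm5z} as a constrained convex quadratic program and solve it via Lagrange multipliers, handling the two cases in the proposition separately.

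First I would introduce the shifted variable $\tilde T = T - t_0 \Ident$, reducing the problem to minimizing $\tr(\rho \tilde T^2)$ over Hermitian $\tilde T$ subject to $\tr(\rho \tilde T) = 0$ and $\tr(\partial_t\rho \cdot \tilde T) = 1$. Writing $\tr(\rho \tilde T^2) = \|\rho^{1/2}\tilde T\|_2^2$ makes the objective manifestly convex, with null space exactly the Hermitian operators of the form $P_\rho^\perp {M} P_\rho^\perp$---this directly accounts for the gauge freedom ${M}$ in the proposition's statement, and convexity guarantees that any stationary point of the Lagrangian is a global minimum. Introducing real Lagrange multipliers $\lambda,\mu$, stationarity reads $\mathcal{R}_\rho(\tilde T) = \lambda\,\rho + \mu\,\partial_t\rho$; taking the trace and using $\tr(\rho \tilde T)=0$ together with $\tr(\partial_t\rho)=0$ forces $\lambda = 0$. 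Any optimum therefore satisfies $\mathcal{R}_\rho(\tilde T) = \mu\,\partial_t\rho$.

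In the case $P_\rho^\perp\,\partial_t\rho\,P_\rho^\perp = 0$, \cref{z:uWDssQGxkLXS} immediately supplies the general Hermitian solution $\tilde T = \mu\,\mathcal{R}_\rho^{-1}(\partial_t\rho) + P_\rho^\perp {M} P_\rho^\perp$ for arbitrary Hermitian ${M}$. Using the identity $\tr(\partial_t\rho\cdot\mathcal{R}_\rho^{-1}(\partial_t\rho)) = \tr(\rho R^2) = \Ftwo{\rho}{\partial_t\rho}$, with $R = \mathcal{R}_\rho^{-1}(\partial_t\rho)$, the second constraint pins $\mu = 1/\Ftwo{\rho}{\partial_t\rho}$. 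A short computation---relying on the fact that the gauge term $P_\rho^\perp {M} P_\rho^\perp$ is annihilated by $\rho$ and so does not contribute to the objective---then yields $\Delta t_{\mathrm{unc}}^2 = \mu^2\,\Ftwo{\rho}{\partial_t\rho} = 1/\Ftwo{\rho}{\partial_t\rho}$, and substituting back reproduces the claimed form of $T$.

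The case $P_\rho^\perp\,\partial_t\rho\,P_\rho^\perp \neq 0$ is where the Lagrangian degenerates (no solution exists with $\mu\neq 0$), so I would argue directly. On one hand, the ansatz $T = t_0\Ident + P_\rho^\perp {M} P_\rho^\perp$ satisfies $\tr(\rho(T-t_0\Ident)^2)=0$ trivially and is feasible provided $\tr({M}\cdot P_\rho^\perp\partial_t\rho\, P_\rho^\perp)=1$, which is solvable precisely because $P_\rho^\perp\partial_t\rho P_\rho^\perp \neq 0$; this exhibits a feasible point with objective zero, so $\Delta t_{\mathrm{unc}}^2 = 0$. On the other hand, any optimum must satisfy $\|\rho^{1/2}(T-t_0\Ident)\|_2=0$, which by Hermiticity forces $T - t_0\Ident = P_\rho^\perp (T-t_0\Ident) P_\rho^\perp$, matching the claimed form. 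The main subtlety is precisely this bifurcation: case 1 is handled cleanly by the Lagrangian, while case 2 corresponds to the affine constraint set intersecting the null cone of the quadratic objective and so must be treated by direct inspection of the quadratic form rather than via stationarity.
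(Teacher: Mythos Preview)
Your proof is correct and reaches the same key equation $\mathcal{R}_\rho(\tilde T)=\mu\,\partial_t\rho$ as the paper, but by a different and more elementary route. The paper recasts~\eqref{z:URKHO5wxhm5z} as a semidefinite program via Schur complements, derives the dual, and then extracts the anticommutator equation from complementary slackness; you instead treat the problem directly as a convex quadratic program with affine constraints and read off the same equation from Lagrangian stationarity. Both then invoke \cref{z:uWDssQGxkLXS} to parametrize the solution set. Your approach is shorter and avoids the SDP machinery entirely; the paper's approach has the advantage of tying the dual problem \eqref{z:aoMRLgpSr1sD} directly to the SDP characterization of the Fisher information in \cref{z:xifd0Y80UQtS}, which is used systematically throughout the appendices. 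One small remark: your statement that ``the Lagrangian degenerates'' in Case~2 is slightly imprecise---the KKT system is in fact solvable there with $\mu=0$ (stationarity then forces $\tilde T\in\ker\mathcal{R}_\rho$, and feasibility supplies the constraint on $M$), so the Lagrangian method would still have worked. Your direct argument via the null cone of the quadratic form is of course also valid and arguably cleaner.
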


Let us further note that if $\partial_t\rho = 0$, we have
$F(\rho; \partial_t\rho) = 0$.  Therefore, provided that
$P_\rho^\perp \partial_t\rho P_\rho^\perp = 0$, we can in full generality write
\begin{align}
  \Delta t_{\mathrm{unc}}^2 = \frac1{ \Ftwo{\rho}{\partial_t\rho} }\ ,
\end{align}
along with the convention that $\Delta t_{\mathrm{unc}}=\infty$ if
$\Ftwo{\rho}{\partial_t\rho} = 0$.
In our setting, the optimal sensing scheme always achieves the value of the
Cram\'er-Rao bound.

\begin{proof}[*z:ZoysZzb.FHNr]
  Without loss of generality, we assume $t_0 = 0$ throughout this proof; this is
  achieved by shifting the parameter to center it at zero, implying the
  corresponding shift $T\to T' = T - t_0\mathds{1}$.
  We thus consider the optimization problem
  \begin{align}
    \Delta t_{\mathrm{unc}}^2
  = \begin{array}[t]{rl}
      \min\limits_{T=T^\dagger}
      &\quad \operatorname{tr}\bigl\{\rho_{t_0} T^2\bigr\}, \\
      \mathrm{s.t.}
      &\quad \operatorname{tr}\bigl\{ \rho_{t_0} T\bigr\} = 0\ ,\ \operatorname{tr}\bigl\{ (\partial_t\rho_t|_{t_0})\, T\bigr\} = 1\ .
    \end{array}
  \end{align}

  First of all we observe that the first condition, $\operatorname{tr}(\rho T) = 0$, can be
  ignored without changing the optimal value of the problem.  Indeed, for any
  $T$ that satisfies $\operatorname{tr}\bigl( (\partial_t\rho) T\bigr) = 1$ but with
  $\operatorname{tr}(\rho T)\neq 0$, we can define $T' = T - \operatorname{tr}(\rho T)\,\mathds{1}$, with
  $\operatorname{tr}(\rho T') = 0$ and
  $\operatorname{tr}\bigl( (\partial_t\rho)\,T \bigr) = \operatorname{tr}\bigl( (\partial_t\rho)\,T' \bigr)$ since
  $\operatorname{tr}(\partial_t\rho) = \partial_t \operatorname{tr}(\rho) = 0$; then
  $\operatorname{tr}(\rho T'^2) = \operatorname{tr}(\rho T^2) - [\operatorname{tr}(\rho T)]^2 \leq \operatorname{tr}(\rho T^2)$,
  meaning that $T'$ not only satisfies $\operatorname{tr}(\rho T') = 0$ in addition to the
  other condition, but it achieves a better objective function value.  
  
  We can recast this optimization as semidefinite problem, following
  Refs.~\cite{R88,R106}, by using
  Schur complements (\cref{z:pZMoi26flsRb}):
  \begin{align}
    \Delta t_{\mathrm{unc}}^2
    = \begin{array}[t]{rl}
        \min\limits_{Q\geq0,\ T=T^\dagger} &\quad \operatorname{tr}(\rho Q) \\
        \mathrm{s.t.:} &\quad \operatorname{tr}\bigl( (\partial_t\rho)\, T\bigr) = 1;\\
                                     &\quad \begin{bmatrix}Q & -T\\ -T & \mathds{1}\end{bmatrix} \geq 0\ .
      \end{array}
  \end{align}

  The associated dual problem takes the following form, noting that strong duality
  holds thanks to Slater's
  conditions~\cite{R105,R75}.
  \begin{align}
    \Delta t_{\mathrm{unc}}^2
    &= \begin{array}[t]{rl}
         \max\limits_{A,C\geq0,\ B\,\mathrm{arb.},\ \mu\in\mathbb{R}}
         &\quad  \mu - \operatorname{tr}(C) \\
         \mathrm{s.t.:} 
         &\quad A \leq \rho\\
         &\quad B + B^\dagger = \mu\,\partial_t\rho\\
         &\quad \begin{bmatrix}A & B\\ B^\dagger & C \end{bmatrix} \geq 0
       \end{array}
                                                   \nonumber\\
    &= \begin{array}[t]{rl}
         \max\limits_{B\,\mathrm{arb.},\ \mu\in\mathbb{R}}
         &\quad  \mu - \operatorname{tr}(B^\dagger\rho^{-1}B) \\
         \mathrm{s.t.:} &\quad B + B^\dagger = \mu\,\partial_t\rho\\
         &\quad P_\rho B = B
       \end{array}
           \label{z:WymaBHSPBQn6}\\
    &= \begin{array}[t]{rl}
         \max\limits_{L\,\mathrm{arb.},\ \mu\in\mathbb{R}}
         &\quad  \mu - \operatorname{tr}(L^\dagger L) \\
         \mathrm{s.t.:} &\quad \rho^{1/2}L + L^\dagger\rho^{1/2} = \mu\,\partial_t\rho\ ,\\
       \end{array}
    \label{z:aoMRLgpSr1sD}
  \end{align}
  using again Schur complements and where we introduced the variable $L$ via
  $B = \rho^{1/2}L$, and where $P_\rho = \mathds{1}- P_\rho^\perp$ is the projector
  onto the support of $\rho$.

A powerful characterization of the whole family of optimal solutions to a
semidefinite problem with strong duality are the complementary slackness
relations.  An inequality constraint multiplied by the corresponding dual
variable becomes an equality for any choice of primal and dual optimal
solutions~\cite{R75,R105}.  Here, this means
that
\begin{align}
\left[\begin{smallmatrix}Q & -T\vphantom{{}^\dagger} \\ -T & \mathds{1}\end{smallmatrix}\right]
\left[\begin{smallmatrix}A & B \\ B^\dagger & C\end{smallmatrix}\right]=0.
\end{align}
This
gives us the following relations that must be satisfied for any choice of
optimal variables:
\begin{align}
    Q\rho &= TB^\dagger\ ; &    QB &=TC\ ; &
    B^\dagger &= T\rho\ ; &   C &= TB\ .
\end{align}
The third equality ($B^\dagger=T\rho$) along with the dual constraint
in~\eqref{z:WymaBHSPBQn6} implies that
$\rho T + T\rho = \mu\,\partial_t\rho$.
\Cref{z:uWDssQGxkLXS} asserts that the solutions are
necessarily of the form
$T = (\mu/2)\mathcal{R}_\rho^{-1}\bigl(\partial_t\rho\bigr) + P_\rho^\perp {M}
P_\rho^\perp$ for some Hermitian ${M}$.

Now first suppose that $P_\rho^\perp \,(\partial_t\rho)\, P_\rho^\perp = 0$.  The
primal value achieved for a $T$ of this form, and for any $\mu$ and ${M}$, is
\begin{align}
  \text{primal achieved}
  = \operatorname{tr}(\rho T^2) = \frac{\mu^2}{4} F(t)\ ,
\end{align}
with $F(t)$ as in~\eqref{z:.eb-akuquBNd}.  From complementary slackness
we have $B^\dagger=T\rho$ and hence
$\operatorname{tr}(B^\dagger\rho^{-1}B) = \operatorname{tr}(\rho T^2) = \mu^2 F(t)/4$.  The dual problem
therefore reaches the value 
\begin{align}
  \text{dual achieved} = \mu- \mu^2 F(t)/4\ .
\end{align}
Optimality implies that the primal and dual values are equal,
$\mu^2 F(t)/4 = \mu - \mu^2 F(t)/4$ and therefore $\mu = 2/F(t)$ (note $\mu=0$
is ruled out because the primal constraint $\operatorname{tr}\bigl((\partial_t\rho)\,T\bigr)=1$
would be impossible to satisfy).  Therefore the optimal solution to the problem
is
\begin{align}
  \Delta t_{\mathrm{unc}}^2 = \frac1{F(t)}\ .
\end{align}

Now suppose that $P_\rho^\perp (\partial_t\rho) P_\rho^\perp \neq 0$.
Then there cannot be any solution for $L$ in the constraint
in~\eqref{z:aoMRLgpSr1sD} unless $\mu=0$ (the left-hand side vanishes
entirely if we hit it with $P_\rho^\perp (\cdot) P_\rho^\perp$ but not the right-hand
 side if $\mu\neq0$).  Then $T=P_\rho^\perp X P_\rho^\perp$, which implies
$\operatorname{tr}(\rho T^2)=0$, and furthermore ${M}$ must satisfy
$\operatorname{tr}\bigl((\partial_t\rho) P_\rho^\perp {M} P_\rho^\perp\bigr)=1$ from the primal
constraint. The dual candidate $L=0$ yields objective value of zero in the dual
problem, and therefore the optimal value of the optimization problem is zero,
$\Delta t_{\mathrm{unc}}^2 = 0$.
\end{proof}

\section{Proof of the sensitivity uncertainty relation}
\label{z:.B4MZrnrq.9S}

The goal of this section is to prove the statements made in
\cref{z:6pGtSPe4CZWB}.  The setting is the one introduced
in \cref{z:CKirppebju2f}.  %
We provide two independent proofs of the uncertainty relation.  The first proof
is more intuitive and straightforward.  The second proof is slightly more
general and provides greater insight into some technicalities that underpin the
uncertainty relation. The second proof directly relates the semidefinite
characterizations of the quantities $\FIqty{Bob}{t}$ and
$\FIqty{Eve}{\eta}$, making it easier to analyze edge cases, to gain
insight on what choices of semidefinite variables are optimal, and to consider
the more general situation where $\mathcal{N}$ is a trace-nonincreasing map.

\subsection{Proof via the second-order expansion of the fidelity}

The strategy of our first proof of our uncertainty relation is to provide a
direct proof of the statement presented as
\cref{z:dq0nj3WKijUb}; we have already seen in the main
text that the statement in \cref{z:T-KNuYhGRM7I} is
equivalent to \cref{z:dq0nj3WKijUb}.

First observe that without loss of generality, we can assume that the
Hamiltonian is time independent.  This is because the Fisher information depends
only on the state and its local time derivative at $t$, which is given
by \cref{z:bs.I9ZrrQvK2} and depends only on the value of the
Hamiltonian at the fixed value $t$ of interest.

Our proof proceeds in a similar fashion to that of the \emph{channel-extension
  bound} developed in Refs.~\cite{R77,R29,R30}.  %
While our uncertainty relation could also be derived from the results in those
references, we provide a self-contained proof for completeness and consistency
of notation.

A remarkable property of the Fisher information is that it is directly related
to the Bures distance and the fidelity of quantum
states~\cite{R0,R107,R11,R24}
{according to}
\begin{align}
  \FIqty{Bob}{t}
  &= -4 \left.\frac{d^2}{dt'^2}\right|_{t'=t}\,
    F(\rho_{B}(t), \rho_{B}(t'))\ ,
    \label{z:eFGTLHPb0smy}
\end{align}
where
$F(\rho,\rho') = \onenorm{\rho^{1/2}\rho'^{1/2}} =
\operatorname{tr}\bigl[(\rho^{1/2}\rho'\rho^{1/2})^{1/2}\bigr]$ is the root fidelity between
two quantum states~\cite{R62}, where $\onenorm{A}$ denotes 
trace norm, i.e., the
sum of the singular values of $A$.  Note that at $t'=t$, the fidelity reaches
its maximum value $1$.  We assume that $\rho(t)$ is does not change rank at
$t'=t$, avoiding edge cases where the
expression~\eqref{z:eFGTLHPb0smy} is
incomplete~\cite{R31,R32,R33}.

By Uhlmann's theorem, and writing
$\lvert {\rho(t)}\rangle _{BE} = V_{A\to BE}\lvert {\psi(t)}\rangle _A$ in terms of the Stinespring
dilation $V_{A\to BE}$ of $\mathcal{N}$ given
in~\eqref{z:1i4Ojzjjs9zf}, we have that
\begin{align}
  F(\rho_{B}(t), \rho_{B}(t'))
  = \max_{W_E\text{ unitary}} \,
  \bigl \lvert {\langle {\rho(t')}\rvert _{BE} \, W_E\, \lvert {\rho(t)}\rangle _{BE} }\bigr \rvert \ ,
\end{align}
where $W_E$ is a unitary operation on $E$.  We therefore have the following
equivalent expressions:
\begin{subequations}
  \begin{align}
    F(\rho_{B}(t), \rho_{B}(t'))
    &= \max_{W_E}\,
      \operatorname{Re}\, \langle {\psi(t')}\rvert _{A} \, V^\dagger \, W_E\, V_{A\to BE} \, \lvert {\psi(t)}\rangle _{A}
      \\
    &= \max_{W_E}\,
      \operatorname{Re}\, \langle {\psi(t')}\rvert _A \, {\widehat{\mathcal{N}}}^\dagger(W_E) \,
      \lvert {\psi(t)}\rangle _A
      \label{z:Vq56q8ktxGPs}
    \\
    &= \max_{W_E}\,
      \operatorname{Re}\, \langle {\psi(t)}\rvert _A \, {e}^{iH(t'-t)} {\widehat{\mathcal{N}}}^\dagger(W_E) \,
      \lvert {\psi(t)}\rangle _A
      \label{z:BVHbTNZ44pG0}
    \\
    &= \max_{W_E}\,
      \operatorname{Re}\operatorname{tr}\bigl( \widehat{\mathcal{N}}\bigl(\psi_A(t)\, {e}^{iH(t'-t)}\bigr)\, W_E \bigr)
      \\
    &= \onenorm[\big]{\widehat{\mathcal{N}}\bigl(\psi {e}^{iH(t'-t)}\bigr)}\ ,
  \end{align}
\end{subequations}
where the complementary channel $\widehat{\mathcal{N}}$ is given by~\eqref{z:L-vRq0ojGGDS}.
In the above expressions, the maximization can be taken over operators $W_E$
that are unitary, or equivalently, it can be relaxed to all operators $W_E$
satisfying $\opnorm{W_E}\leq 1$.

The optimal unitary $W_E$ is given by the polar decomposition of the operator
$\widehat{\mathcal{N}}\bigl(\psi {e}^{iH(t'-t)}\bigr)$.  For $t' = t + dt$ with a
small $dt$, we have that the optimal $W_E$ is close to the identity, which is
the optimal for $t'=t$.  Let us expand
$W_E = \mathds{1}-i dt S - (1/2) dt^2 S_2 + O(dt^3)$ for general matrices $S$ and
$S_2$ to be determined.  The unitary constraint $W_E^\dagger W_E = \mathds{1}_E$ for
all $dt$ implies that $S=S^\dagger$ and that
$S_2 + S_2^\dagger = 2S^\dagger S = 2S^2$.  Starting
from~\eqref{z:BVHbTNZ44pG0} and expanding up to
order $dt^2$ we find
\begin{align}
  \hspace*{1em}&\hspace*{-1em}
  F(\rho_B(t), \rho_B(t'))
                 \nonumber\\
  & = \max_{S=S^\dagger,\;S_2} \operatorname{Re}\operatorname{tr}\mathopen{}\left\{
    \psi \mathopen{}\left(\mathds{1}+ idt H - \frac{H^2}{2} dt^2\right)\mathclose{}\,
    {\widehat{\mathcal{N}}}^\dagger\mathopen{}\left(\mathds{1}- i dt S - \frac{S_2}{2} dt^2\right)\mathclose{}
    \right\}\mathclose{} + O(dt^3)
    \nonumber\\
  \begin{split}
    &=
    1 + \max_{S=S^\dagger,\;S_2} \Bigl\{
      dt\operatorname{Re}\operatorname{tr}\mathopen{}\left[i\psi H - i\psi \widehat{\mathcal{N}}^\dagger(S)\right]\mathclose{}
      \\
      &\hspace*{7em}
        + dt^2 \operatorname{Re}\operatorname{tr}\Bigl[ - \frac12 \psi H^2 -
        \frac12 \psi \widehat{\mathcal{N}}^\dagger(S_2) +
        \psi H \widehat{\mathcal{N}}^\dagger(S) \Bigr]
        + O(dt^3)
        \Bigr\}
    \end{split}
        \nonumber\\
  &= 1 + \frac{dt^2}{2} \max_{S=S^\dagger,\;S_2} \operatorname{Re}\operatorname{tr}\Bigl\{
    -  \psi H^2 -
     \psi \widehat{\mathcal{N}}^\dagger(S_2) +
    2\psi H \widehat{\mathcal{N}}^\dagger(S) \Bigr\}
    + O(dt^3)\ ,
    \label{z:u.dlhHNOgsWs}
\end{align}
recalling that $\widehat{\mathcal{N}}^\dagger(\mathds{1}) = \mathds{1}$, and where the
first-order term vanishes because a product of two Hermitian operators has a
real trace; with the factor $i$ the term is killed by taking the real part.
Continuing with only the second-order term we find
\begin{align}
  \hspace*{2em}
  &\hspace*{-2em}
    \left.\frac{d^2}{dt'^2}\right|_{t'=t} F(\rho_B(t), \rho_B(t'))
    \nonumber\\
  &= \max_{S=S^\dagger,\;S_2} \mathopen{}\left\{
    - \operatorname{tr}\bigl(\psi H^2\bigr)
    - \frac12 \operatorname{tr}\bigl(\widehat{\mathcal{N}}(\psi)\, \bigl(S_2 + S_2^\dagger\bigr)\bigr)
    + \operatorname{tr}\bigl[\{\psi, H\} \, \widehat{\mathcal{N}}^\dagger(S)\bigr]
    \right\}\mathclose{}
    \nonumber\\
  &= \max_{S=S^\dagger} \mathopen{}\left\{
    -  \operatorname{tr}\bigl(\psi H^2\bigr)
    -  \operatorname{tr}\bigl(\widehat{\mathcal{N}}(\psi)\, S^2\bigr)
    + \operatorname{tr}\bigl[\widehat{\mathcal{N}}\bigl(\{\psi, H\}\bigr) \, S\bigr]
    \right\}\mathclose{}\ ,
\label{z:XH4FrCCdyG-h}
\end{align}
where we have used the identity $2\operatorname{Re}\operatorname{tr}(A{O}) = \operatorname{tr}(A({O}+{O}^\dagger))$ for Hermitian
$A$, the identity $2\operatorname{Re}\operatorname{tr}(ABC) = \operatorname{tr}(\{A,B\}\,C)$ for Hermitian $A,B,C$, as well
as the condition $S_2+S_2^\dagger = 2S^2$ that came from enforcing the unitarity
of $W_E$.

It is instructive to briefly comment on the situation of a time-dependent
Hamiltonian.  The derivation of the above expression,
especially~\eqref{z:BVHbTNZ44pG0} and the
expansion of the time-evolution operator leading up
to~\eqref{z:u.dlhHNOgsWs}, looks like it necessitated the
assumption of time independence of the Hamiltonian and that a time-dependent
Hamiltonian might have led to a different result.  In fact, we obtain the same
result with a time-dependent Hamiltonian, which can be seen as follows.  Write
\begin{align}
  H(t) = H + t H' + O(t^2) 
\end{align}
and expand the time-evolution operator via the
time-ordered exponential as
$U^\dagger(t'-t) = \mathcal{T}{e}^{i\int_t^{t'} dt'' H(t'')} = 1 + i\int_t^{t'}
dt'' H(t'') - \int_t^{t'} dt'' H(t'')\int_{t}^{t''} dt''' H(t''') + O(t'^3) = 1
+ idt H + (dt^2/2)\bigl(iH'-H^2\bigr) + O(dt^3)${,} then we see that the only
difference in the expressions leading up to~\eqref{z:u.dlhHNOgsWs}
is an additional term $\operatorname{Re}\operatorname{tr}\bigl\{ \psi iH' dt^2 \bigr\}$ which is equal to zero.

Now, we proceed to prove the uncertainty relation.  With the definition
$\FIloss{Bob}{t} = \FIqty{Alice}{t} - \FIqty{Bob}{t}$, we have
\begin{align}
  \FIloss{Bob}{t}
  &= 4\bigl( \operatorname{tr}(\psi H^2) - \bigl(\operatorname{tr}(\psi H)\bigr)^2 \bigr) + 4
    \left.\frac{d^2}{dt'^2}\right|_{t'=t} F(\rho_B(t), \rho_B(t'))
    \nonumber\\
  &=\max_{S=S^\dagger} \Bigl\{
  4\operatorname{tr}\bigl[ \widehat{\mathcal{N}}\bigl(\{\psi, H\}\bigr)\, S \bigr]
  - 4\operatorname{tr}\bigl[ \widehat{\mathcal{N}}(\psi)\, S^2 \bigr]
  - 4\bigl(\operatorname{tr}(\psi H)\bigr)^2
    \Bigr\}\ ,
    \label{z:EZd9QXhrnKt-}
\end{align}
recalling that
$\FIqty{Alice}{t} = 4\sigma_H^2 = 4\bigl(\langle {H^2}\rangle  - \langle {H}\rangle ^2\bigr)$ and using
the expression~\eqref{z:XH4FrCCdyG-h}.  Observe that $\FIloss{Bob}{t}$ is
necessarily invariant under a constant shift of the Hamiltonian
$H\mapsto H+c\mathds{1}$, because such a shift does not influence the evolution
$\psi(t)$ and therefore both $\FIqty{Alice}{t}$ and $\FIqty{Bob}{t}$
are invariant under such shifts.  [This invariance can also be checked
explicitly by carrying out the corresponding transformations $H\mapsto H+c\mathds{1}$
and $S\to S+c\mathds{1}$ in~\eqref{z:EZd9QXhrnKt-}.]
Applying the shift $H \mapsto H-\langle {H}\rangle _\psi$ yields
\begin{align}
  \text{\eqref{z:EZd9QXhrnKt-}}
  &=
    \max_{S=S^\dagger} \Bigl\{
    4\operatorname{tr}\bigl[ \widehat{\mathcal{N}}\bigl(\{\psi, \bar{H}\}\bigr)\, S \bigr]
    - 4\operatorname{tr}\bigl[ \widehat{\mathcal{N}}(\psi)\, S^2 \bigr]
    \Bigr\}\ ,
    \label{z:uvDL0FEIgWX0}
\end{align}
using the shorthand $\bar{H} := H-\langle {H}\rangle _\psi$.  At this point we recognize the
expression of the Fisher information from
\cref{z:xifd0Y80UQtS}, with
$\rho = \widehat{\mathcal{N}}(\psi)$ and
$D = \widehat{\mathcal{N}}\bigl(\{\psi, \bar H\}\bigr)$.  Let us briefly check that the
requirement $P_\rho^\perp D P_\rho^\perp=0$ in
\cref{z:xifd0Y80UQtS} and in the definition of the Fisher
information~\eqref{z:.gRi0KAw40V9} is satisfied.  Thanks to
\cref{z:pZMoi26flsRb}, we have
$\left[\begin{smallmatrix} \psi & \psi \bar{H} \\ \bar{H} \psi & \bar{H} \psi
    \bar{H} \end{smallmatrix}\right] \geq 0$, and furthermore, by
\cref{z:TV-YjOuDas4P}, $\left[\begin{smallmatrix}
    \widehat{\mathcal{N}}(\psi) & \widehat{\mathcal{N}}(\psi \bar{H}) \\
    \widehat{\mathcal{N}}(\bar{H} \psi) & \widehat{\mathcal{N}}(\bar{H} \psi
    \bar{H}) \end{smallmatrix}\right] \geq 0$; by \cref{z:pZMoi26flsRb} again,
this implies that
$P_\rho^{\perp} \widehat{\mathcal{N}}\bigl(\psi \bar{H} \bigr) = 0$.  Therefore
$P_\rho^\perp \widehat{\mathcal{N}}\bigl( \{ \psi, \bar{H} \}\bigr) P_\rho^\perp = 0$.
It follows that
\begin{align}
  \FIloss{Bob}{t} &=
  \text{\eqref{z:uvDL0FEIgWX0}}
  = \Ftwo[\Big]{ \widehat{\mathcal{N}}(\psi) }{
    \widehat{\mathcal{N}}\bigl( \{ \psi, \bar{H} \}\bigr) }\ ,
         \label{z:wHfLRDAY1JLh}
\end{align}
as claimed.

\subsection{Direct proof using the semidefinite characterization
  of the Fisher information}
\label{z:VM5-jRd.0oMe}
\label{z:xqliGh9KZXzi}
\label{z:3LQPjij2PMi6}

For this section, we fix $\lvert {\psi}\rangle , \lvert {\xi}\rangle $ be such that $\langle {\psi}\mkern 1.5mu\relax \vert\mkern 1.5mu\relax {\psi}\rangle =1$
and $\langle {\psi}\mkern 1.5mu\relax \vert\mkern 1.5mu\relax {\xi}\rangle =0$, and let $\mathcal{N}$ be a completely positive, trace
nonincreasing map.  Let $V_{A\to BE}$ be a Stinespring dilation of
$\mathcal{N}$, i.e., $\mathcal{N}(\cdot) = \operatorname{tr}_E\bigl(V\,(\cdot)\,V^\dagger\bigr)$,
and let $\widehat{\mathcal{N}}(\cdot) = \operatorname{tr}_B\bigl( V\,(\cdot)\,V^\dagger\bigr)$.
Let
\begin{align}
  D_A^Y &= -i \bigl(\lvert {\xi}\rangle \mkern -1.8mu\relax \langle{\psi }\rvert - \lvert {\psi}\rangle \mkern -1.8mu\relax \langle{\xi}\rvert \bigr)\ ;
  &
    D_A^Z &= \lvert {\xi}\rangle \mkern -1.8mu\relax \langle{\psi }\rvert + \lvert {\psi}\rangle \mkern -1.8mu\relax \langle{\xi}\rvert \ .
\end{align}

Suppose that $\lvert {\Phi_{B:E}}\rangle $ is a maximally entangled ket between two
suitable subspaces of $B$ and $E$ that are sufficiently large to ensure that
there exist $M,\Lambda$ matrices on $B$ satisfying
\begin{align}
  V\lvert {\phi }\rangle &= (\Lambda\otimes\mathds{1})\lvert {\Phi_{B:E}}\rangle \ ;
  &
  V\lvert {\xi }\rangle &= (M\otimes\mathds{1})\lvert {\Phi_{B:E}}\rangle  \ .
\end{align}
(Alternatively, one can embed both $B$ and $E$ into larger systems $B',E'$ with
$B'\simeq E'$, on which one can consider the canonical maximally entangled ket
$\lvert {\Phi'_{B':E'}}\rangle  = \sum \lvert {k}\rangle _{B'}\lvert {k}\rangle _{E'}$ with respect to the canonical bases of $B',E'$.  We then define $\lvert {\Phi_{B:E}}\rangle $ by projecting
down $\lvert {\Phi'_{B':E'}}\rangle $ onto $B\otimes E$.)
Throughout the following, we only ever consider operators that are in the
support of the reduced operators of $\Phi_{B:E}$ on $B$ and $E$.

We define the operation
$t_{B\to E}(\cdot) := \operatorname{tr}_B\bigl\{ \Phi_{B:E} \,[(\cdot)\otimes\mathds{1}_E] \bigr\}$ which
is the partial transpose operation with respect to the bases used to define
$\lvert {\Phi_{B:E}}\rangle $.  Equivalently, a defining property of this operation is that
for any operator $X_B$, we have $(X_B\otimes\mathds{1}_E)\lvert {\Phi_{B:E}}\rangle  = %
(\mathds{1}_B\otimes t_{B\to{}E}(X_B))\lvert {\Phi_{B:E}}\rangle $.  Furthermore, for any ${M}$,
we have $t_{B\to{}E}({M}^\dagger) = \bigl[t_{B\to{}E}({M})\bigr]^\dagger$ and for any
$X,Y$ we have $t_{B\to E}({M}{N}) = t_{B\to E}({N}) \, t_{B\to E}({M})$.  Similarly, we
define the inverse operation
$t_{E\to B}(\cdot) = \operatorname{tr}\bigl\{\Phi_{B:E}\,[\mathds{1}_B\otimes(\cdot)]\bigr\}$ which has the
same properties.

Observe that $\Lambda\Lambda^\dagger = \mathcal{N}(\lvert {\psi}\rangle \mkern -1.8mu\relax \langle{\psi}\rvert ) = \rho_B$ and
$MM^\dagger = \mathcal{N}(\lvert {\xi}\rangle \mkern -1.8mu\relax \langle{\xi}\rvert )$.  Furthermore, we define $W$ via the polar
decomposition of $\Lambda = \rho^{1/2} W$, with
\begin{align}
  \Lambda &= \rho^{1/2} W\ ;
  &
    \Lambda^\dagger &= W^\dagger \rho^{1/2}\ ;
  &
    \Lambda^{-1} &= W^\dagger\rho^{-1/2}\ ;
  &
    \Lambda^{-\dagger} &= \rho^{-1/2} W\ .
\end{align}
The operators $\Lambda^{-1}$ and $\Lambda^{-\dagger}$ are the Moore-Penrose
pseudoinverses of $\Lambda$ and $\Lambda^\dagger$, respectively, as can be seen
by computing $\Lambda\Lambda^{-1} = P_\rho$ and
$\Lambda^{-1}\Lambda = W^\dagger P_\rho W$ as well as
$\Lambda^\dagger\Lambda^{-\dagger} = W^\dagger P_\rho W$ and
$\Lambda^{-\dagger}\Lambda^\dagger = P_\rho$.
Furthermore, we have
\begin{subequations}
  \begin{align}
    \mathcal{N}(D_A^Y)
    &= -i (M\Lambda^\dagger - \Lambda M^\dagger)\ ,
    \\
    \widehat{\mathcal{N}}(D_A^Z)
    &= \operatorname{tr}_B\bigl\{M_B \Phi_{B:E}\Lambda_B^\dagger +  \Lambda_B \Phi_{B:E} M_B^\dagger\bigr\}
      \nonumber\\
    &= t_{B\to E}\bigl[\Lambda^\dagger M + M^\dagger\Lambda\bigr]\ .
  \end{align}
\end{subequations}
We may also relate these objects to the state on Eve's system, via the partial
transpose operation $t_{B\to E}$.  Observe that
$V\lvert {\psi }\rangle = (\Lambda\otimes\mathds{1})\lvert {\Phi_{B:E}}\rangle  = (\mathds{1}\otimes
t_{B\to{}E}(\Lambda))\lvert {\Phi_{B:E}}\rangle $, and therefore
$\rho_E = \operatorname{tr}_B(V\psi V^\dagger) =
\bigl[t_{B\to{}E}(\Lambda)\bigr]\bigl[t_{B\to{}E}(\Lambda)\bigr]^\dagger %
= t_{B\to E}(\Lambda^\dagger\Lambda) %
= t_{B\to E}(W^\dagger \rho_B W)$.  Then
$P_{\rho_E} = t_{B\to{}E}(W^\dagger P_{\rho_B} W)$ and
$P_{\rho_E}^\perp = t_{B\to{}E}(W^\dagger P_{\rho_B}^\perp W)$.
We begin with a characterization of when our uncertainty relation holds with
equality.
\begin{proposition}[Conditions for uncertainty relation equality]
  \label{z:L9cC7f4RKO0S}
  The following statements are equivalent:
  \begin{enumerate}[label=(\roman*)]
  \item\label{z:tPvG.efzGNQh}
    $(P_{\rho_B}^\perp \otimes P_{\rho_E}^\perp) V \lvert {\xi }\rangle = 0$\ .
  \item\label{z:TpPY7RBUzBIR}
    We have $P_{\rho_B}^\perp M W^\dagger P_{\rho_B}^\perp = 0$\ .
  \item\label{z:yf8Wki00SdWM} We have
    $P_{\rho_B}^\perp\mathcal{N}(\lvert {\xi}\rangle \mkern -1.8mu\relax \langle{\xi}\rvert )P_{\rho_B}^\perp = P_{\rho_B}^\perp
    \mathcal{N}(\lvert {\xi}\rangle \mkern -1.8mu\relax \langle{\psi}\rvert )\rho_B^{-1} \mathcal{N}(\lvert {\psi}\rangle \mkern -1.8mu\relax \langle{\xi}\rvert )
    P_{\rho_B}^\perp$.
  \item\label{z:8Z44eS9wHEqZ} We have
    $P_{\rho_E}^\perp\widehat{\mathcal{N}}(\lvert {\xi}\rangle \mkern -1.8mu\relax \langle{\xi}\rvert )P_{\rho_E}^\perp =
    P_{\rho_E}^\perp \widehat{\mathcal{N}}(\lvert {\xi}\rangle \mkern -1.8mu\relax \langle{\psi}\rvert )\rho_E^{-1}
    \widehat{\mathcal{N}}(\lvert {\psi}\rangle \mkern -1.8mu\relax \langle{\xi}\rvert ) P_{\rho_E}^\perp$.
  \item\label{z:qaKYvNwsiBmk} Let $\{ E_k \}$ are Kraus
    operators for $\mathcal{N}$.  For any linear combination
    $E = \sum_k c_k E_k$ with $c_k\in\mathbb{C}$ and such that $E\lvert {\psi}\rangle =0$, we
    have $P_{\rho_B}^\perp E \lvert {\xi }\rangle = 0$.
  \end{enumerate}
  Furthermore, consider the setting of
  \cref{z:T-KNuYhGRM7I} and suppose that $\lvert {\xi}\rangle $ is
  defined as $\lvert {\xi }\rangle = (H-\langle {H}\rangle )\lvert {\psi}\rangle $.  Then
  \labelcref{z:tPvG.efzGNQh,z:TpPY7RBUzBIR,z:yf8Wki00SdWM,z:8Z44eS9wHEqZ,z:qaKYvNwsiBmk} are furthermore equivalent to:
  \begin{enumerate}[label=(\roman*),resume]
  \item \label{z:KTO32PgtQRRR} For any eigenvalue $p_k(t)$ of
    $\mathcal{N}(\psi(t))$ such that $p_k(t_0)=0$, we have
    $\partial_t^2 p_k\,(t_0) = 0$\ .
  \end{enumerate}
\end{proposition}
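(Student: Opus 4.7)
\medskip

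\noindent\textbf{Proof plan for Proposition~\ref{z:L9cC7f4RKO0S}.}

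The plan is to organize the argument around a Schmidt decomposition of $V\ket\psi$ that diagonalizes both $\rho_B$ and $\rho_E$ simultaneously, so that every condition can be translated into an elementary statement about the coefficients of $V\ket\xi$. Write $V\ket\psi = \sum_i \sqrt{p_i}\ket i_B\ket i_E$, giving $\rho_B=\sum_i p_i \proj i_B$ and a matching Schmidt basis on $E$, so that the Kraus operators $E_k = \bra k_E V$ satisfy $E_k\ket\psi = \sqrt{p_k}\ket k_B$. Expand $V\ket\xi = \sum_{ij} c_{ij}\ket i_B\ket j_E$; then condition~\ref{z:tPvG.efzGNQh} reads simply $c_{ij}=0$ whenever $p_i=p_j=0$. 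In this basis $E_k\ket\xi = \sum_i c_{ik}\ket i_B$, so condition~\ref{z:qaKYvNwsiBmk} (applied to the Kraus representation $\{E_k\}$, and noting that the condition is invariant under linear recombinations that preserve the constraint $E\ket\psi=0$) demands $c_{ik}=0$ whenever $p_i=p_k=0$, matching~\ref{z:tPvG.efzGNQh}. Finally, since $\Lambda = \rho_B^{1/2} W$, the operator $M$ satisfies $V\ket\xi = (M\otimes\Ident)\ket{\Phi_{B:E}}$ with $M_{ij}=c_{ij}$ in a suitably chosen basis for $\ket{\Phi_{B:E}}$, and the equivalence $\ref{z:tPvG.efzGNQh}\Leftrightarrow \ref{z:TpPY7RBUzBIR}$ becomes the statement that $P_{\rho_B}^\perp M W^\dagger P_{\rho_B}^\perp=0$ picks out precisely those matrix elements. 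Swapping the roles of $B$ and $E$ in the same calculation yields the equivalence with~\ref{z:8Z44eS9wHEqZ}.

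For~\ref{z:tPvG.efzGNQh}$\Leftrightarrow$\ref{z:yf8Wki00SdWM}, I would first establish that the block matrix
\begin{align*}
  \begin{pmatrix} \rho_B & \mathcal{N}(\ketbra\psi\xi) \\ \mathcal{N}(\ketbra\xi\psi) & \mathcal{N}(\proj\xi) \end{pmatrix} \geq 0
\end{align*}
by applying $\mathcal{N}\otimes\IdentProc[2]{}$ to the rank-one PSD matrix built from the vector $(\ket\psi,\ket\xi)^T$, and then invoke \cref{z:pZMoi26flsRb} to deduce $P_{\rho_B}^\perp \mathcal{N}(\ketbra\psi\xi)=0$. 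A direct computation in the Schmidt basis then gives
\begin{align*}
  \mathcal{N}(\proj\xi) - \mathcal{N}(\ketbra\xi\psi)\,\rho_B^{-1}\,\mathcal{N}(\ketbra\psi\xi) = \sum_{k:\,p_k=0} E_k \proj\xi E_k^\dagger\;,
\end{align*}
so the equality in~\ref{z:yf8Wki00SdWM} after sandwiching by $P_{\rho_B}^\perp$ becomes $P_{\rho_B}^\perp E_k\ket\xi = 0$ for all $k$ with $p_k=0$, which in the Schmidt basis is again $c_{ik}=0$ for $p_i=p_k=0$.

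The remaining and most delicate piece is~\ref{z:tPvG.efzGNQh}$\Leftrightarrow$\ref{z:KTO32PgtQRRR}, which I would approach through second-order degenerate perturbation theory applied to the one-parameter family $\rho_B(t)=\mathcal{N}(\psi(t))$ near $t_0$. The key identities to assemble are $\partial_t\rho_B = -i[\mathcal{N}(\ketbra\xi\psi)-\mathcal{N}(\ketbra\psi\xi)]$, from which $P_{\rho_B}^\perp \partial_t\rho_B P_{\rho_B}^\perp=0$ using $P_{\rho_B}^\perp \mathcal{N}(\ketbra\psi\xi)=0$, so the first-order perturbation vanishes within the zero-eigenspace and the eigenvalues splitting off from $0$ are controlled at second order. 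Using $\partial_t^2\psi = -[\bar H,[\bar H,\psi]] = 2\proj\xi - \bar H\ketbra\xi\psi - \ketbra\psi\xi \bar H$ and the fact that $P_{\rho_B}^\perp$ kills both $\mathcal{N}(\ketbra\psi\cdot)$ and its adjoint, one finds $P_{\rho_B}^\perp \partial_t^2 \rho_B P_{\rho_B}^\perp = 2 P_{\rho_B}^\perp \mathcal{N}(\proj\xi) P_{\rho_B}^\perp$. The standard second-order effective Hamiltonian for degenerate perturbation theory then reduces to exactly
\begin{align*}
  H^{(2)}_{\mathrm{eff}} = P_{\rho_B}^\perp \mathcal{N}(\proj\xi) P_{\rho_B}^\perp - P_{\rho_B}^\perp \mathcal{N}(\ketbra\xi\psi)\rho_B^{-1}\mathcal{N}(\ketbra\psi\xi) P_{\rho_B}^\perp\;,
\end{align*}
whose vanishing is condition~\ref{z:yf8Wki00SdWM}. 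The hard part will be handling this perturbation-theoretic argument rigorously: the degenerate subspace can itself contain multiple distinct eigenvalues at higher order, and one must justify that $\partial_t^2 p_k(t_0)=0$ for every such emergent eigenvalue is equivalent to $H^{(2)}_{\mathrm{eff}}=0$ as an operator, rather than merely having vanishing trace. I would handle this by exploiting the positivity $\rho_B(t)\geq 0$, which forces $H^{(2)}_{\mathrm{eff}}\geq 0$, so vanishing of all its eigenvalues is equivalent to vanishing as an operator.
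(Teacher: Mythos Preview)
Your plan is correct and is a genuinely different, more coordinate-based route than the paper's. Where the paper works abstractly with the operators $\Lambda$, $M$, $W$ and the partial-transpose map $t_{B\to E}$, you fix a Schmidt basis for $V\ket\psi$ up front, which collapses the equivalences \ref{z:tPvG.efzGNQh}--\ref{z:qaKYvNwsiBmk} to the single coordinate statement $c_{ij}=0$ whenever $p_i=p_j=0$; in particular your identity $\mathcal{N}(\proj\xi)-\mathcal{N}(\ketbra\xi\psi)\rho_B^{-1}\mathcal{N}(\ketbra\psi\xi)=\sum_{k:\,p_k=0}E_k\proj\xi E_k^\dagger$ is manifestly positive semidefinite, whereas the paper establishes this positivity separately via a Schur-complement argument. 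For~\ref{z:KTO32PgtQRRR}, your perturbation-theory framing is more conceptual than the paper's, which instead invokes the ready-made identity $\tr(P_\rho^\perp\partial_t^2\rho)=\sum_{k:\,p_k=0}\partial_t^2 p_k + 2\sum_{p_k>0,\,p_\ell=0}\abs{\matrixel{\lambda_k}{\partial_t\rho}{\lambda_\ell}}^2/p_k$ from~\cite{R32}; both routes converge on the same positivity trick (each $\partial_t^2 p_k\geq 0$ at a minimum, the operator in~\ref{z:yf8Wki00SdWM} is PSD, hence the trace identity suffices), and indeed you can sidestep the rigor issue you flag---whether the individual eigenvalues of $H^{(2)}_{\mathrm{eff}}$ coincide with $\tfrac12\partial_t^2 p_k$---by working only at the level of traces, exactly as the paper does. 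Your approach buys transparency and exposes the combinatorial content; the paper's buys basis-independence and a tighter link to the $\Lambda,M,W$ machinery reused in the proof of \cref{z:4VI7KY5eDBCv}.
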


Observe that all the conditions above do not depend on the choice of Stinespring
dilation and/or on the choice of the Kraus operator representation, as all such
choices differ by a partial isometry on the $E$ system.  In other words, if the
conditions above hold for particular choices of $V$, $\widehat{\mathcal{N}}$ and
$\{ E_k \}$, they hold for all other choices as well.

\begin{proof}[*z:L9cC7f4RKO0S]
  We have the following implications.

  \noindent\textit{\hbox{$\text{\ref{z:tPvG.efzGNQh}} \Leftrightarrow
      \text{\ref{z:TpPY7RBUzBIR}}$}}:
  Consider
  \begin{align}
    (P_{\rho_B}^\perp\otimes P_{\rho_E}^\perp) V \lvert {\xi
    }\rangle &= \bigl( (P_{\rho_B}^\perp M)\otimes P_{\rho_E}^\perp\bigr) \lvert {\Phi}\rangle _{BE}
    = \bigl( (P_{\rho_B}^\perp M\, t_{E\to B}(P_{\rho_E}^\perp))\otimes\mathds{1}\bigr)
    \lvert {\Phi}\rangle _{BE}\ .
    \label{z:BI1R6TmoKDBK}
  \end{align}
  Since $V\lvert {\psi }\rangle = (\Lambda\otimes\mathds{1})\lvert {\Phi }\rangle %
  = (\mathds{1}\otimes t_{B\to E}(\Lambda))\lvert {\Phi}\rangle $, we have
  $\rho_E = t_{B\to E}(\Lambda)\, t_{B\to E}(\Lambda)^\dagger %
  = t_{B\to E}(\Lambda^\dagger \Lambda) = t_{B\to E}(W^\dagger\rho_B W)$.  Then
  $P_{\rho_E} = t_{B\to E}(W^\dagger P_{\rho_B} W)$ and
  $P_{\rho_E}^\perp = t_{B\to E}(W^\dagger P_{\rho_B}^\perp W) = %
  t_{B\to E}(\tilde{P}_{\rho_B}^\perp)$, and
  \begin{align}
    \text{\eqref{z:BI1R6TmoKDBK}}
    &= \bigl( (P_{\rho_B}^\perp M\, \tilde{P}_{\rho_B}^\perp)\otimes\mathds{1}\bigr)
      \lvert {\Phi}\rangle _{BE}\ .
  \end{align}
  Therefore we have that $(P_{\rho_B}^\perp\otimes P_{\rho_E}^\perp)V\lvert {\xi}\rangle =0$
  is equivalent to $0 = P_{\rho_B}^\perp M \tilde{P}_{\rho_B}^\perp$.

  \noindent\textit{\hbox{$\text{\ref{z:TpPY7RBUzBIR}} \Rightarrow
      \text{\ref{z:yf8Wki00SdWM}}$}}: Let
  $K = P_\rho^\perp \mathcal{N}(\lvert {\xi}\rangle \mkern -1.8mu\relax \langle{\psi}\rvert )\rho_B^{-1/2} = P_\rho^\perp
  \operatorname{tr}_E\bigl( M \Phi_{BE} \Lambda^\dagger \bigr) \rho_B^{-1/2} = P_\rho^\perp M
  W^\dagger P_\rho$.  Now assume that~\ref{z:TpPY7RBUzBIR} holds;
  then $K = P_\rho^\perp M W^\dagger$ and we have
  $KK^\dagger = P_\rho^\perp M M^\dagger P_\rho^\perp = P_\rho^\perp
  \mathcal{N}(\lvert {\xi}\rangle \mkern -1.8mu\relax \langle{\xi}\rvert ) P_\rho^\perp$,
  showing~\ref{z:yf8Wki00SdWM}.

  \noindent\textit{\hbox{$\text{\ref{z:yf8Wki00SdWM}} \Rightarrow
      \text{\ref{z:TpPY7RBUzBIR}}$}}:
  Conversely, assuming~\ref{z:yf8Wki00SdWM} and if
  $K=P_{\rho_B}^\perp \mathcal{N}(\lvert {\xi}\rangle \mkern -1.8mu\relax \langle{\psi}\rvert )\rho_B^{-1/2} %
  = P_\rho^\perp M W^\dagger P_\rho$, we have by assumption that
  $KK^\dagger = P_\rho^\perp M M^\dagger P_\rho^\perp %
  = (P_\rho^\perp M W^\dagger P_\rho) (P_\rho M W^\dagger P_\rho)^\dagger %
  + P_\rho^\perp M W^\dagger P_\rho^\perp W M^\dagger P_\rho^\perp$.  This means
  that
  $0 = (P_\rho^\perp M W^\dagger P_\rho^\perp) (P_\rho^\perp W M^\dagger
  P_\rho^\perp)$.  The latter equation can only hold if
  $P_\rho^\perp M W^\dagger P_\rho^\perp = 0$,
  showing~\ref{z:TpPY7RBUzBIR}.

  \noindent\textit{\hbox{$\text{\ref{z:tPvG.efzGNQh}} \Leftrightarrow
      \text{\ref{z:8Z44eS9wHEqZ}}$}}:
  Condition~\ref{z:tPvG.efzGNQh} is symmetric if we replace
  $B \leftrightarrow E$ (and correspondingly
  $\mathcal{N} \leftrightarrow \widehat{\mathcal{N}}$), meaning that the
  condition holds if and only if the condition with $B$ and $E$ swapped also
  holds.  Therefore, we can swap $B \leftrightarrow E$ in the other conditions
  and those will also hold if and only if~\ref{z:tPvG.efzGNQh}
  holds.  Condition~\ref{z:8Z44eS9wHEqZ} is obtained by
  performing this transformation on~\ref{z:yf8Wki00SdWM}.

  \noindent\textit{\hbox{$\text{\ref{z:tPvG.efzGNQh}} \Rightarrow
      \text{\ref{z:qaKYvNwsiBmk}}$}}: 
  We choose the representation $V = \sum E_k \otimes \lvert {k}\rangle _E$ and assume
  $\text{\ref{z:tPvG.efzGNQh}}$, i.e., that we have
  \begin{align}
    (P_{\rho_B}^\perp \otimes P_{\rho_E}^\perp) V\lvert {\xi }\rangle = 0 .
    \label{z:TRgyTHWihOgu}
  \end{align}
  Let $\{ c_k \}$ with $c_k\in\mathbb{C}$ such that $\sum_k c_k E_k \lvert {\psi }\rangle = 0$
  and let $E = \sum_k c_k E_k$.  Define $\lvert {\mathrm{e}}\rangle _E = \sum c_k^* \lvert {k}\rangle _E$.
  We have 
  \begin{align}
    \langle {\mathrm{e}}\mkern 1.5mu\relax \vert \mkern 1.5mu\relax {\widehat{\mathcal{N}}(\psi)}\mkern 1.5mu\relax \vert \mkern 1.5mu\relax {\mathrm{e}}\rangle 
    = \sum_{k,k'}
    \langle {\mathrm{e}}\mkern 1.5mu\relax \vert\mkern 1.5mu\relax {k}\rangle  \operatorname{tr}\bigl( E_k \psi E_{k'}^\dagger \bigr) \langle {k'}\mkern 1.5mu\relax \vert\mkern 1.5mu\relax {\mathrm{e}}\rangle 
    = \operatorname{tr}\Bigl[ \Bigl(\sum c_k E_k\Bigr) \, \psi\, \Bigl(\sum c_k E_k\Bigr){}^\dagger \Bigr]
    = \bigl \lVert { E\lvert {\psi }\rangle }\bigr \rVert ^2 = 0\ ,
    \label{z:8EM7fsR2.Q48}
  \end{align}
  which implies that $\lvert {\mathrm{e}}\rangle _E \in \ker \widehat{\mathcal{N}}(\psi)$,
  i.e., $P_{\rho_E}^\perp \lvert {\mathrm{e}}\rangle _E = \lvert {\mathrm{e}}\rangle _E$.  Applying
  $\bigl(\mathds{1}\otimes\langle {\mathrm{e}}\rvert \bigr)$ onto~\eqref{z:TRgyTHWihOgu}
  we find
  \begin{align}
    0 &= 
        \bigl(P_{\rho_B}^\perp \otimes \langle {\mathrm{e}}\rvert  P_{\rho_E}^\perp\bigr) V\lvert {\xi
        }\rangle = \sum_k \bigl(P_{\rho_B}^\perp E_k\lvert {\xi}\rangle \bigr)\, \langle {\mathrm{e}}\mkern 1.5mu\relax \vert\mkern 1.5mu\relax {k}\rangle 
        = P_{\rho_B}^\perp E \lvert {\xi}\rangle \ ,
  \end{align}
  showing that $\text{\ref{z:qaKYvNwsiBmk}}$ holds.

  \noindent\textit{\hbox{$\text{\ref{z:tPvG.efzGNQh}} \Leftarrow
      \text{\ref{z:qaKYvNwsiBmk}}$}}: 
  We now suppose that condition $\text{\ref{z:qaKYvNwsiBmk}}$
  holds.  Let $\lvert {\chi_j}\rangle _E$ be a set of orthonormal states that span the support of
  $P_{\rho_E}^\perp$, i.e., $P_{\rho_E}^\perp = \sum_j \lvert {\chi_j}\rangle \mkern -1.8mu\relax \langle{\chi_j}\rvert _E$.
  Fix any such $\lvert {\chi_j}\rangle $ and define
  $E^{(j)} = \sum \langle {\chi_j}\mkern 1.5mu\relax \vert\mkern 1.5mu\relax {k}\rangle \,E_k$.  We
  repeat~\eqref{z:8EM7fsR2.Q48} by replacing
  $\lvert {\mathrm{e}}\rangle  \to \lvert {\chi_j}\rangle $, $c_k \to \langle {\chi_j}\mkern 1.5mu\relax \vert\mkern 1.5mu\relax {k}\rangle $ to find
  \begin{align}
    0 = \langle {\chi_j}\mkern 1.5mu\relax \vert \mkern 1.5mu\relax { \widehat{\mathcal{N}}(\psi) }\mkern 1.5mu\relax \vert \mkern 1.5mu\relax {\chi_j}\rangle 
    =\ \ldots\ = \bigl \lVert { E^{(j)} \lvert {\psi }\rangle }\bigr \rVert ^2\ ,
  \end{align}
  which implies that $E^{(j)}\lvert {\psi }\rangle = 0$.  We use the assumption that
  $\text{\ref{z:qaKYvNwsiBmk}}$ holds to deduce that
  $P_{\rho_B}^\perp E^{(j)}\lvert {\xi }\rangle = 0$; we note the latter expression holds for
  all $j$ by repeating this argument for each $j$ individually.  Then
  \begin{align}
    \bigl(P_{\rho_B}^\perp \otimes P_{\rho_E}^\perp\bigr) \, V \lvert {\xi
    }\rangle &= \Bigl(P_{\rho_B}^\perp \otimes \sum \lvert {\chi_j}\rangle \mkern -1.8mu\relax \langle{\chi_j}\rvert \Bigr) \, V \lvert {\xi
    }\rangle = \sum_{k,j} (P_{\rho_B}^\perp E_k \lvert {\xi}\rangle )\otimes(\lvert {\chi_j}\rangle  \langle {\chi_j}\mkern 1.5mu\relax \vert\mkern 1.5mu\relax {k}\rangle )
    \nonumber\\
    &= \sum_j (P_{\rho_B}^\perp E^{(j)} \lvert {\xi}\rangle )\otimes\lvert {\chi_j}\rangle  = 0\ ,
  \end{align}
  showing that $\text{\ref{z:tPvG.efzGNQh}}$ holds.

  \noindent\textit{\hbox{$\text{\ref{z:yf8Wki00SdWM}} \Leftrightarrow
      \text{\ref{z:KTO32PgtQRRR}}$}}: Now consider the setting
  of \cref{z:T-KNuYhGRM7I} and suppose that $\lvert {\xi}\rangle $ is
  defined as $\lvert {\xi }\rangle = (H-\langle {H}\rangle )\lvert {\psi}\rangle $.  We invoke
  Ref.~\cite[Eq.~(B15)]{R32}, which in the present
  context reads
  \begin{align}
    \operatorname{tr}\bigl(P_\rho^\perp\,\partial_t^2\rho\bigr)
    = \sum_{k:\,p_k=0} \partial_t^2 p_k +
    2 \sum_{\substack{k,\ell:\\p_k > 0\\p_\ell=0}}
    \frac{\lvert {\langle {\lambda_k}\mkern 1.5mu\relax \vert \mkern 1.5mu\relax {\partial_t\rho}\mkern 1.5mu\relax \vert \mkern 1.5mu\relax {\lambda_{\ell}}\rangle }\rvert ^2}{p_k}\ ,
    \label{z:7I2ojQMuazIH}
  \end{align}
  where $\{\lvert {\lambda_k}\rangle  \}$ is a complete eigenbasis of $\rho$ with
  eigenvalues $p_k$.  Using \cref{z:d-w60nHacIQJ} one can check
  that the second term on the right-hand side satisfies
  \begin{align}
    2 \sum_{\raisebox{-2ex}[0pt][0pt]{\(\substack{k,\ell:\\p_k > 0\\p_\ell=0}\)}}
    \frac{\lvert {\langle {\lambda_k}\mkern 1.5mu\relax \vert \mkern 1.5mu\relax {\partial_t\rho}\mkern 1.5mu\relax \vert \mkern 1.5mu\relax {\lambda_\ell}\rangle }\rvert ^2}{p_k}
    &= 2\operatorname{tr}\bigl\{ \rho^{-1}\,(\partial_t\rho)\,P_\rho^\perp\,(\partial_t\rho) \bigr\}
      \nonumber\\[1ex]
    &= 2\operatorname{tr}\bigl\{ \rho^{-1}\,\mathcal{N}(\lvert {\psi}\rangle \mkern -1.8mu\relax \langle{\xi}\rvert )\,P_\rho^\perp\,
      \mathcal{N}(\lvert {\xi}\rangle \mkern -1.8mu\relax \langle{\psi}\rvert ) \bigr\}\ ,
      \label{z:R6j6cdZDtuKD}
  \end{align}
  using the fact that
  $\partial_t\rho = \mathcal{N}(-i[H,\psi]) =
  \mathcal{N}(-i\lvert {\xi}\rangle \mkern -1.8mu\relax \langle{\psi}\rvert +i\lvert {\psi}\rangle \mkern -1.8mu\relax \langle{\xi}\rvert )$ and that
  $\mathcal{N}(X\psi)P_\rho^\perp=0$ for any $X$.
  On the other hand we can see that
  $\partial_t^2\rho = \partial_t \mathcal{N}(-i[H,\psi]) =
  \mathcal{N}\bigl(-i[\partial_t H,\psi] -[H, [H, \psi]]\bigr)$, and recalling that
  $\mathcal{N}(X\psi)P_\rho^\perp=0$ for any $X$ we obtain
  \begin{align}
    \operatorname{tr}\bigl(P_\rho^\perp\,\partial_t^2\rho\bigr)
    = \operatorname{tr}\bigl(P_\rho^\perp\,\mathcal{N}(2H\psi H)\bigr)
    = 2\operatorname{tr}\bigl(P_\rho^\perp\,\mathcal{N}(\bar{H}\psi\bar{H})\bigr)
    = 2\operatorname{tr}\bigl(P_\rho^\perp\,\mathcal{N}(\lvert {\xi}\rangle \mkern -1.8mu\relax \langle{\xi}\rvert )\bigr)\ ,
    \label{z:X.oYAZe-jUxy}
  \end{align}
  writing $\bar{H} = H-\langle {H}\rangle _\psi$ and where $\lvert {\xi}\rangle =\bar{H}\lvert {\psi}\rangle $.

  Now suppose that \ref{z:yf8Wki00SdWM} holds.  Then
  \begin{align}
    \text{\eqref{z:X.oYAZe-jUxy}}
    &= 2\operatorname{tr}\bigl(P_\rho^\perp \mathcal{N}(\lvert {\xi}\rangle \mkern -1.8mu\relax \langle{\psi}\rvert )\,
      \rho^{-1}\mathcal{N}(\lvert {\psi}\rangle \mkern -1.8mu\relax \langle{\xi}\rvert )\bigr) = \text{\eqref{z:R6j6cdZDtuKD}}\ ,
  \end{align}
  and therefore the first term on the right-hand side
  of~\eqref{z:7I2ojQMuazIH} must vanish, and since
  $\partial_t^2 p_k \geq 0$ for all $k$ for which $p_k=0$ as $p_k$ reaches a
  minimum at that point, we must necessarily have that $\partial_t^2 p_k = 0$
  for all those $k$.

  Conversely, if the first term on the right-hand side
  of~\eqref{z:7I2ojQMuazIH} vanishes, then we have
  \begin{align}
    \operatorname{tr}\bigl(P_\rho^\perp\,
    \mathcal{N}(\lvert {\xi}\rangle \mkern -1.8mu\relax \langle{\psi}\rvert )\,\rho^{-1}\,\mathcal{N}(\lvert {\psi}\rangle \mkern -1.8mu\relax \langle{\xi}\rvert ) \bigr) =
    \operatorname{tr}\bigl(P_\rho^\perp \mathcal{N}(\lvert {\xi}\rangle \mkern -1.8mu\relax \langle{\xi}\rvert )\bigr)\ .
    \label{z:BFXzMcq.gU8f}
  \end{align}
  By applying the completely positive map ${\mathrm{id}}_{2}\otimes\mathcal{N}$
  onto the matrix
  $\begin{psmallmatrix}\lvert {\psi }\rangle \mkern -1.8mu\relax \langle{\psi }\rvert & \lvert {\psi}\rangle \mkern -1.8mu\relax \langle{\xi }\rvert \\ \lvert {\xi}\rangle \mkern -1.8mu\relax \langle{\psi }\rvert &
    \lvert {\xi }\rangle \mkern -1.8mu\relax \langle{\xi }\rvert \end{psmallmatrix}$ and further conjugating by
  $\begin{psmallmatrix} \mathds{1}&\\& P_\rho^\perp \end{psmallmatrix}$ we find
  that
  \begin{align}
    \begin{bmatrix}
      \rho & \mathcal{N}(\lvert {\psi}\rangle \mkern -1.8mu\relax \langle{\xi}\rvert ) P_\rho^\perp \\
      P_\rho^\perp \mathcal{N}(\lvert {\xi}\rangle \mkern -1.8mu\relax \langle{\psi}\rvert )  & P_\rho^\perp \mathcal{N}(\lvert {\xi}\rangle \mkern -1.8mu\relax \langle{\xi}\rvert ) P_\rho^\perp
    \end{bmatrix}
    \geq 0\ .
  \end{align}
  From the Schur complement (\cref{z:pZMoi26flsRb}) we find that
  \begin{align}
    P_\rho^\perp \Bigl[
    \mathcal{N}(\lvert {\xi}\rangle \mkern -1.8mu\relax \langle{\xi}\rvert ) 
    -
    \mathcal{N}(\lvert {\xi}\rangle \mkern -1.8mu\relax \langle{\psi}\rvert ) \; \rho^{-1} \;
    \mathcal{N}(\lvert {\psi}\rangle \mkern -1.8mu\relax \langle{\xi}\rvert ) \Bigr] P_\rho^\perp
    \geq 0\ .
  \end{align}
  But a positive semidefinite operator has trace zero if and only if it is
  identically equal to zero, so with~\eqref{z:BFXzMcq.gU8f} we find
  that
  $P_\rho^\perp\bigl[ \mathcal{N}(\lvert {\xi}\rangle \mkern -1.8mu\relax \langle{\xi}\rvert ) -
  \mathcal{N}(\lvert {\xi}\rangle \mkern -1.8mu\relax \langle{\psi}\rvert )\,\rho^{-1}\,\mathcal{N}(\lvert {\psi}\rangle \mkern -1.8mu\relax \langle{\xi}\rvert ) \bigr]
  P_\rho^\perp = 0$, showing that~\ref{z:yf8Wki00SdWM} holds.
\end{proof}

Our main technical theorem is the following.

\begin{theorem}[Time-energy uncertainty relation in the virtual metrological qubit picture]
  \label{z:4VI7KY5eDBCv}
  Let $A$, $B$ and $E$ be finite-dimensional quantum systems.  Let
  $\mathcal{N}_{A\to B}$ be a completely positive, trace-nonincreasing map.  Let
  $V_{A\to BE}$ be such that
  $\mathcal{N}_{A\to B}(\cdot) = \operatorname{tr}_E\bigl( V(\cdot)V^\dagger\bigr)$ and
  $V^\dagger V \leq \mathds{1}$, i.e., $V$ is a Stinespring dilation of
  $\mathcal{N}$.  Let
  $\widehat{\mathcal{N}}_{A\to E}(\cdot) = \operatorname{tr}_B\bigl( V(\cdot)V^\dagger\bigr)$.  Let
  $\lvert {\psi}\rangle $ be any subnormalized state on $A$, and let $\lvert {\xi}\rangle $ be any
  vector on $A$ such that $\langle {\psi}\mkern 1.5mu\relax \vert\mkern 1.5mu\relax {\xi}\rangle =0$.  Define
  $D_A^Y=-i\bigl(\lvert {\xi}\rangle \mkern -1.8mu\relax \langle{\psi}\rvert  - \lvert {\psi}\rangle \mkern -1.8mu\relax \langle{\xi}\rvert \bigr)$ and
  $D_A^Z = \lvert {\xi}\rangle \mkern -1.8mu\relax \langle{\psi}\rvert +\lvert {\psi}\rangle \mkern -1.8mu\relax \langle{\xi}\rvert $.  Then
  \begin{align}
    \Ftwo{\mathcal{N}(\psi)}{\mathcal{N}(D_A^Y)}
    + \Ftwo{\widehat{\mathcal{N}}(\psi)}{\widehat{\mathcal{N}}(D_A^Z)}
    \leq 4\langle {\xi}\mkern 1.5mu\relax \vert \mkern 1.5mu\relax {\mathcal{N}^\dagger(\mathds{1})}\mkern 1.5mu\relax \vert \mkern 1.5mu\relax {\xi}\rangle \ .
    \label{z:.6Ao-K6k6kYf}
  \end{align}
  Furthermore, if $(P_{\rho_B}^\perp \otimes P_{\rho_E}^\perp) V\lvert {\xi }\rangle = 0$,
  then equality holds.
\end{theorem}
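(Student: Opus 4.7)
The plan is to prove~\eqref{z:.6Ao-K6k6kYf} directly from the minimization form~\eqref{z:V4lWxgFVfrPu} of the Fisher information in \cref{z:xifd0Y80UQtS}, by constructing explicit feasible candidates $L_B, L_E$ from the Stinespring data and then closing the resulting upper bound to an equality via semidefinite programming duality under the stated condition.

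First, I would unpack the Stinespring dilation with respect to a maximally entangled vector $\ket{\Phi_{B:E}}$: write $V\ket\psi = (\Lambda\otimes\Ident)\ket{\Phi_{B:E}}$ and $V\ket\xi = (M\otimes\Ident)\ket{\Phi_{B:E}}$ with $\Lambda, M$ operators on $B$, together with the polar decomposition $\Lambda = \rho_B^{1/2}W$ so that $\rho_B = \Lambda\Lambda^\dagger$. A short computation yields $\mathcal{N}(D_A^Y) = i\Lambda M^\dagger - iM\Lambda^\dagger$, and the choice $L_B = iWM^\dagger$ is then automatically feasible for $\Ftwo{\mathcal{N}(\psi)}{\mathcal{N}(D_A^Y)}$ in~\eqref{z:V4lWxgFVfrPu}, contributing the objective value $\tr(L_B^\dagger L_B) = \tr(MW^\dagger W M^\dagger)$. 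The symmetric construction on Eve's side---transposing through $t_{B\to E}$ to obtain $\tilde\Lambda, \tilde M, \tilde W$ on $E$---yields a feasible $L_E = \tilde W \tilde M^\dagger$ with $\tr(L_E^\dagger L_E) = \tr(\tilde M \tilde W^\dagger \tilde W \tilde M^\dagger)$, satisfying $\rho_E^{1/2}L_E + L_E^\dagger \rho_E^{1/2} = \widehat{\mathcal{N}}(D_A^Z)$.

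Second, I would establish that the sum of these two objective values is bounded above by $\dmatrixel{\xi}{\mathcal{N}^\dagger(\Ident)} = \norm{V\ket\xi}^2$, with the deficit equal precisely to $\norm{(P_{\rho_B}^\perp \otimes P_{\rho_E}^\perp) V\ket\xi}^2$. The mechanism is that $W^\dagger W$, pulled back through $\ket{\Phi_{B:E}}$, captures the $B$-support of $V\ket\xi$, while $\tilde W^\dagger \tilde W$ captures the $E$-support; a block decomposition of $V\ket\xi$ along the four quadrants $P_{\rho_B}^{(\perp)}\otimes P_{\rho_E}^{(\perp)}$ shows that only the $(P_{\rho_B}^\perp \otimes P_{\rho_E}^\perp)$ component fails to be covered by either $L_B$ or $L_E$. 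Multiplying by the factor of four from~\eqref{z:V4lWxgFVfrPu} then yields the inequality~\eqref{z:.6Ao-K6k6kYf}.

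The main obstacle is the equality claim, because feasibility alone of $L_B, L_E$ only provides an upper bound on each Fisher information rather than optimality. To close this gap I would construct matching dual candidates $S_B, S_E$ in the maximization form~\eqref{z:YdgtbMB30Vi6} via the symmetric logarithmic derivatives of $\rho_B$ and $\rho_E$, and verify the SDP complementary slackness relations extracted in the proof of \cref{z:xifd0Y80UQtS}. Under the equality condition $(P_{\rho_B}^\perp \otimes P_{\rho_E}^\perp) V\ket\xi = 0$, the equivalent characterization in \cref{z:L9cC7f4RKO0S}---in particular $P_{\rho_B}^\perp M W^\dagger P_{\rho_B}^\perp = 0$---ensures that the candidates $L_B, L_E$ are in fact optimal for their individual SDPs and jointly saturate the bound. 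A minor technical subtlety to track carefully is the trace-non-increasing case $V^\dagger V \leq \Ident$, which is why the right-hand side features $\dmatrixel{\xi}{\mathcal{N}^\dagger(\Ident)}$ rather than $\braket\xi\xi$; for isometric $V$ the two coincide and one recovers the unit-trace formulation of \cref{z:T-KNuYhGRM7I}.
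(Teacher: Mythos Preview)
There is a genuine gap in the inequality step. Your independent feasible candidates $L_B = iWM^\dagger$ and $L_E = \tilde W\tilde M^\dagger$ do not add up to the claimed bound. If $W$ is taken unitary (as the paper does, via the polar decomposition $\Lambda=\rho_B^{1/2}W$), then $\tr(L_B^\dagger L_B)=\tr(MM^\dagger)=\dmatrixel{\xi}{\mathcal{N}^\dagger(\Ident)}$ already, so Bob's bound alone saturates the right-hand side and adding any nonzero Eve contribution overshoots. If instead you take $W,\tilde W$ as partial isometries with $W^\dagger W$ and $\tilde W^\dagger\tilde W$ the respective support projectors, then a direct block computation along $P_{\rho_B}^{(\perp)}\otimes P_{\rho_E}^{(\perp)}$ gives
\[
\tr(L_B^\dagger L_B)+\tr(L_E^\dagger L_E)
=\norm{V\ket\xi}^2+\norm{(P_{\rho_B}\otimes P_{\rho_E})V\ket\xi}^2
-\norm{(P_{\rho_B}^\perp\otimes P_{\rho_E}^\perp)V\ket\xi}^2,
\]
because the $(P_{\rho_B}\otimes P_{\rho_E})$ block is captured by \emph{both} $L_B$ and $L_E$. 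The extra positive term $\norm{(P_{\rho_B}\otimes P_{\rho_E})V\ket\xi}^2$ is generically nonzero, so the sum of your two upper bounds does not reduce to $\dmatrixel{\xi}{\mathcal{N}^\dagger(\Ident)}-\norm{(P_{\rho_B}^\perp\otimes P_{\rho_E}^\perp)V\ket\xi}^2$ as you assert.

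The paper avoids this overcounting by \emph{coupling} the two problems rather than treating them independently. It rewrites $\dmatrixel{\xi}{\mathcal{N}^\dagger(\Ident)}-\tfrac14\Ftwo{\rho_E}{\widehat D_E}$, via the partial transpose $t_{E\to B}$, as a minimization $\min_{S'=S'^\dagger}\tr\bigl((M-\Lambda S')(M-\Lambda S')^\dagger\bigr)$ over Hermitian $S'$ on $B$ (here $S'$ is the transported Eve dual variable). Then for \emph{each} such $S'$ it constructs the Bob candidate $L=iW(M^\dagger - S'\Lambda^\dagger)$, which is feasible for $\Ftwo{\rho_B}{D_B}$ and has exactly the same squared norm. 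The subtraction $-S'\Lambda^\dagger$ is precisely what removes the double-counted block; your choice corresponds to $S'=0$, which never witnesses the tradeoff. The equality argument also needs this coupling: one starts from an optimal $L$ for Bob and, under $P_{\rho_B}^\perp M W^\dagger P_{\rho_B}^\perp=0$, reverse-engineers a Hermitian $S'$ attaining the same value in the Eve problem, rather than verifying optimality of two fixed candidates separately.
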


First, we remark that both Fisher information expressions
in~\eqref{z:.6Ao-K6k6kYf} are well-defined
in that we always have
$P_{\mathcal{N}(\psi)}^\perp \mathcal{N}(D_A^Y) P_{\mathcal{N}(\psi)}^\perp = 0$ and
$P_{\widehat{\mathcal{N}}(\psi)}^\perp \widehat{\mathcal{N}}(D_A^Z)
P_{\widehat{\mathcal{N}}(\psi)}^\perp = 0$ as required in the
definition~\eqref{z:.gRi0KAw40V9}.  These conditions can be verified by
first noting that the following matrix is positive semidefinite,
\begin{align}
    \begin{bmatrix} \lvert {\psi }\rangle \mkern -1.8mu\relax \langle{\psi }\rvert & \lvert {\psi}\rangle \mkern -1.8mu\relax \langle{\xi }\rvert \\ \lvert {\xi}\rangle \mkern -1.8mu\relax \langle{\psi }\rvert & \lvert {\xi
    }\rangle \mkern -1.8mu\relax \langle{\xi
    }\rvert \end{bmatrix} 
    = \begin{bmatrix} \lvert {\psi }\rangle \\ \lvert {\xi }\rangle \end{bmatrix}
    \begin{bmatrix} \langle {\psi }\rvert & \langle {\xi }\rvert \end{bmatrix}
    \geq 0\ ,
\end{align}
and applying either completely positive map ${\mathrm{id}}_{2}\otimes\mathcal{N}$
or ${\mathrm{id}}_{2}\otimes\widehat{\mathcal{N}}$ to obtain
\begin{align}
    \begin{bmatrix} \mathcal{N}(\lvert {\psi}\rangle \mkern -1.8mu\relax \langle{\psi}\rvert ) & \mathcal{N}(\lvert {\psi}\rangle \mkern -1.8mu\relax \langle{\xi}\rvert ) \\
      \mathcal{N}(\lvert {\xi}\rangle \mkern -1.8mu\relax \langle{\psi}\rvert ) & \mathcal{N}(\lvert {\xi}\rangle \mkern -1.8mu\relax \langle{\xi}\rvert )
    \end{bmatrix}
    &\geq 0\ ;
    &
    \begin{bmatrix} \widehat{\mathcal{N}}(\lvert {\psi}\rangle \mkern -1.8mu\relax \langle{\psi}\rvert ) & \widehat{\mathcal{N}}(\lvert {\psi}\rangle \mkern -1.8mu\relax \langle{\xi}\rvert ) \\
      \widehat{\mathcal{N}}(\lvert {\xi}\rangle \mkern -1.8mu\relax \langle{\psi}\rvert ) & \widehat{\mathcal{N}}(\lvert {\xi}\rangle \mkern -1.8mu\relax \langle{\xi}\rvert )
    \end{bmatrix}
    &\geq 0\ .
\end{align}
Then, \cref{z:pZMoi26flsRb} ensures that
$P_{\mathcal{N}(\psi)}^\perp \mathcal{N}(\lvert {\psi}\rangle \mkern -1.8mu\relax \langle{\xi}\rvert ) = 0$ and therefore
$P_{\mathcal{N}(\psi)}^\perp \mathcal{N}(D_A^Y) P_{\mathcal{N}(\psi)}^\perp = 0$;
likewise
$P_{\widehat{\mathcal{N}}(\psi)}^\perp \widehat{\mathcal{N}}(D_A^Z)
P_{\widehat{\mathcal{N}}(\psi)}^\perp = 0$.

\begin{proof}[*z:4VI7KY5eDBCv]
  Let $\Lambda,M$ be operators acting on $B$ such that
  $V\lvert {\psi }\rangle = (\Lambda\otimes\mathds{1})\lvert {\Phi}\rangle $ and
  $V\lvert {\xi }\rangle = (M\otimes\mathds{1})\lvert {\Phi}\rangle $.  We can write
  \begin{subequations}
    \begin{align}
      D_B
      &=
        \mathcal{N}(D_A^Y)
        = \operatorname{tr}_E\bigl( -i\bigl(V\lvert {\xi }\rangle \mkern -1.8mu\relax \langle{\psi }\rvert V^\dagger - V\lvert {\psi }\rangle \mkern -1.8mu\relax \langle{\xi }\rvert V^\dagger\bigr)\bigr)
        = -i\bigl( M \Lambda^\dagger - \Lambda M^\dagger\bigr)\ ,
        \label{z:POoOh29X6gj.}
      \\
      \widehat{D}_E
      &= \widehat{\mathcal{N}}(D_A^Z) =
        \operatorname{tr}_B(V\lvert {\xi }\rangle \mkern -1.8mu\relax \langle{\psi }\rvert V^\dagger + V\lvert {\psi }\rangle \mkern -1.8mu\relax \langle{\xi }\rvert V^\dagger)
        = \operatorname{tr}_B(M\Phi \Lambda^\dagger + \Lambda\Phi M^\dagger)\ ;
        \label{z:o27hGghQi7Bi}
    \end{align}
  \end{subequations}
  where in~\eqref{z:o27hGghQi7Bi} the operators $M,\Lambda$
  act only on $B$ with a tensor product with the identity on $E$ implied but
  $\Phi = \Phi_{BE} = \lvert {\Phi}\rangle \mkern -1.8mu\relax \langle{\Phi}\rvert _{BE}$.
  Now consider
  \begin{align}
    \frac14\Bigl\{
    4\langle {\xi}\mkern 1.5mu\relax \vert \mkern 1.5mu\relax {\mathcal{N}^\dagger(\mathds{1})}\mkern 1.5mu\relax \vert \mkern 1.5mu\relax {\xi}\rangle  - \Ftwo{\rho_E}{\widehat{D}_E}
    \Bigr\}
    = \operatorname{tr}(MM^\dagger) - %
    \max_{S_E=S_E^\dagger} \Bigl\{ \operatorname{tr}(\widehat{D}_E S_E) - \operatorname{tr}(\rho_E S_E^2) \Bigr\}\ ,
    \label{z:srWjLtb0R-We}
  \end{align}
  using \cref{z:YdgtbMB30Vi6} and noting that
  $\langle {\xi}\mkern 1.5mu\relax \vert \mkern 1.5mu\relax {\mathcal{N}^\dagger(\mathds{1})}\mkern 1.5mu\relax \vert \mkern 1.5mu\relax {\xi}\rangle  = \operatorname{tr}(\mathcal{N}(\lvert {\xi}\rangle \mkern -1.8mu\relax \langle{\xi}\rvert )) =
  \operatorname{tr}(MM^\dagger)$.  Then, using~\eqref{z:o27hGghQi7Bi},
  and writing $t(\cdot) = t_{E\to B}(\cdot)$ as a shorthand,
  \begin{align}
    \text{\eqref{z:srWjLtb0R-We}}
    &= \min_{S_E=S_E^\dagger} \Bigl\{
      \operatorname{tr}(MM^\dagger) -
      \operatorname{tr}\bigl((M\Phi\Lambda^\dagger + \Lambda\Phi M^\dagger) S_E\bigr)
      + \operatorname{tr}\bigl(S_E \Lambda \Phi \Lambda^\dagger S_E\bigr) \Bigr\}
      \nonumber\\
      &= \min_{S_E=S_E^\dagger} \Bigl\{
        \operatorname{tr}(MM^\dagger) -
        \operatorname{tr}\bigl( M\, t(S_E) \, \Lambda^\dagger 
        + \Lambda\, t(S_E) \, M^\dagger \bigr)
        + \operatorname{tr}\bigl(\Lambda (t(S_E))^2 \Lambda^\dagger\bigr)
      \Bigr\}
      \nonumber\\
    &= \min_{S'=S'^\dagger} \Bigl\{
      \operatorname{tr}(MM^\dagger) -
      \operatorname{tr}\bigl(M S' \Lambda^\dagger + \Lambda S' M^\dagger\bigr)
      + \operatorname{tr}\bigl(\Lambda S'^2 \Lambda^\dagger\bigr)
      \Bigr\}\ ,
      \nonumber\\
    &= \min_{S'=S'^\dagger} \operatorname{tr}\Bigl( (M - \Lambda S') (M - \Lambda S')^\dagger \Bigr)\ ,
      \label{z:9Y8aa8ek2C5Y}
  \end{align}
  where the optimization now ranges over all Hermitian operators $S'$ acting on
  $B$.
  On the other hand, using \cref{z:V4lWxgFVfrPu},
  \begin{align}
    \frac14\Ftwo{\rho}{D}
    &= \min \Bigl\{
      \operatorname{tr}(L^\dagger L)\ :\quad
      \rho^{1/2} L + L^\dagger \rho^{1/2} = D
      \Bigr\}\ ,
      \label{z:5igUwN9xh2MZ}
  \end{align}
  where $\rho,D$ refer to operators on $B$.  To prove the
  inequality~\eqref{z:.6Ao-K6k6kYf}, which
  is the first part of our main theorem claim, our strategy is to show that for
  any candidate $S'$ in~\eqref{z:9Y8aa8ek2C5Y}, there is
  a valid candidate $L$ in~\eqref{z:5igUwN9xh2MZ} that achieves the
  same value.  This statement then implies that
  $\text{\eqref{z:5igUwN9xh2MZ}} \leq
  \text{\eqref{z:9Y8aa8ek2C5Y}}$ as desired.

  Recall that $\Lambda\Lambda^\dagger = \rho$ (where $\rho\equiv\rho_B$ for
  short in this proof), and therefore the polar decomposition of $\Lambda$ can
  be written as $\Lambda = \rho^{1/2}W$ for some unitary matrix $W$.  Let $S'$
  be any Hermitian operator that is candidate in the
  optimization~\eqref{z:9Y8aa8ek2C5Y}, and let
  $L=iW(M^\dagger - S'\Lambda^\dagger)$.  Then one can verify that
  \begin{align}
    \rho^{1/2}L + L^\dagger \rho^{1/2}
    = i\Lambda( M^\dagger - S'\Lambda^\dagger) - i (M - \Lambda S') \Lambda^\dagger
    = -i\bigl(M\Lambda^\dagger - \Lambda M^\dagger\bigr) = D\ ,
  \end{align}
  and thus $L$ is a feasible candidate in~\eqref{z:5igUwN9xh2MZ}.
  Furthermore it holds that
  $\operatorname{tr}(L^\dagger L) = \operatorname{tr}\bigl((M - \Lambda S') (M - \Lambda S')^\dagger\bigr)$, thus
  proving the
  inequality~\eqref{z:.6Ao-K6k6kYf}.

  We now show that, assuming
  $(P_{\rho_B}^\perp \otimes P_{\rho_E}^\perp) V \lvert {\xi }\rangle = 0$, the inequality
  becomes an equality.  The proof strategy is to go in reverse direction above,
  starting with an optimal candidate $L$ in~\eqref{z:5igUwN9xh2MZ},
  and constructing a candidate $S'$
  in~\eqref{z:9Y8aa8ek2C5Y} that achieves the same
  value.
  From \cref{z:L9cC7f4RKO0S} we see that
  $(P_{\rho_B}^\perp \otimes P_{\rho_E}^\perp) V \lvert {\xi }\rangle = 0$ is equivalent to
  \begin{align}
    P_\rho^\perp M W^\dagger P_\rho^\perp = 0\ .
    \label{z:oryJp.qQf2mk}
  \end{align}

  Let $L$ be an optimal candidate in~\eqref{z:5igUwN9xh2MZ}, i.e.,
  such that $\rho^{1/2}L + L^\dagger \rho^{1/2} = D$ and
  $\Ftwo{\rho}{D} = 4\operatorname{tr}(L^\dagger L)$.  Without loss of generality, we may
  assume that $P_\rho L = L$, since otherwise $P_\rho L$ would yield a better
  optimization candidate in~\eqref{z:5igUwN9xh2MZ}.
  Denoting by $P_X^{\mathrm{supp}}$ and $P_X^{\mathrm{rng}}$ the projectors onto
  the support and the range of an operator $X$, and defining
  $\tilde{P}_\rho = W^\dagger P_\rho W$, we have
  \begin{align}
    \begin{aligned}
      P_\Lambda^{\mathrm{rng}} &= P_{\Lambda^\dagger}^{\mathrm{supp}}
      = P_\rho\ ,\qquad
      &
      P_{\Lambda}^{\mathrm{supp}} &= P_{\Lambda^\dagger}^{\mathrm{rng}}
      = W^\dagger P_\rho W = \tilde{P}_\rho\ ,
      \\
      P_\rho^\perp &= \mathds{1}- P_\rho\ ,
      &
      \tilde{P}_\rho^\perp &= \mathds{1}- \tilde{P}_\rho\ .
    \end{aligned}
  \end{align}
  Let us compute the object $L P_\rho^\perp$:
  \begin{align}
    L P_\rho^\perp
    &= P_\rho L P_\rho^\perp
    = \rho^{-1/2} \bigl(\rho^{1/2} L + L^\dagger \rho^{1/2}\bigr) P_\rho^\perp
    \nonumber\\
    &= \rho^{-1/2} \bigl(-i(M\Lambda^\dagger - \Lambda M^\dagger)\bigr) P_\rho^\perp
      \nonumber\\
    &= i\rho^{-1/2} \bigl(\Lambda M^\dagger\bigr) P_\rho^\perp
    = i P_\rho W M^\dagger P_\rho^\perp
      = i W \tilde{P}_\rho M^\dagger P_\rho^\perp
      = i W M^\dagger P_\rho^\perp\ ,
      \label{z:dNzzdYPJQIgB}
  \end{align}
  where we have employed~\eqref{z:oryJp.qQf2mk} in the last
  equality.
  
  Now let us get started with constructing $S'$.  Our goal is to find a Hermitian
  matrix $S'$ such that
  \begin{align}
    L \stackrel{!}{=} iW(M^\dagger - S'\Lambda^\dagger)\ .
    \label{z:8OxxomcUuQaZ}
  \end{align}
  Indeed, this would ensure a valid candidate
  in~\eqref{z:9Y8aa8ek2C5Y} reaching the same value as
  $\operatorname{tr}(L^\dagger L)$.  The
  equality~\eqref{z:8OxxomcUuQaZ} is equivalent to
  both simultaneous conditions
    \begin{align}
      L P_\rho &\stackrel{!}{=} iW(M^\dagger - S'\Lambda^\dagger) P_\rho\ ;
      &%
      L P_\rho^\perp &\stackrel{!}{=} iW(M^\dagger - S'\Lambda^\dagger) P_\rho^\perp\ .
      \label{z:f-AyG.quPkz1}
    \end{align}
  The latter follows immediately from~\eqref{z:dNzzdYPJQIgB},
  noting that $\Lambda^\dagger P_\rho^\perp=0$.  It suffices, therefore, to find
  a Hermitian matrix $S'$ such that
  the first equality in~\eqref{z:f-AyG.quPkz1} is satisfied.

  Let $\Lambda^{-1} = W^\dagger \rho^{-1/2}$ noting that
  $\Lambda^{-1}\Lambda = \tilde P_\rho$ and $\Lambda \Lambda^{-1} = P_\rho$.  Define
  \begin{align}
    S' = \Lambda^{-1} \Bigl[
    \Lambda M^\dagger + i\Lambda W^\dagger L
    \Bigr] (\Lambda^{-1})^\dagger + \tilde{P}_\rho^\perp M^\dagger (\Lambda^{-1})^\dagger
    + \Lambda^{-1} M \tilde{P}_\rho^\perp\ .
    \label{z:tdSmtm7KkiX.}
  \end{align}
  First we show that $S'$ is Hermitian by proving that the term in brackets in
  the first term above is, in fact, Hermitian.  Using
  $\Lambda W^\dagger=\rho^{1/2}$ we can compute
  \begin{multline}
    \Bigl[\Lambda M^\dagger + i \Lambda W^\dagger L\Bigr]
    - \Bigl[\Lambda M^\dagger + i \Lambda W^\dagger L\Bigr]^\dagger
    = \bigl(\Lambda M^\dagger - M\Lambda^\dagger\bigr)
      + i\bigl(\rho^{1/2} L + L^\dagger \rho^{1/2}\bigr)
      \\
       = -iD + iD = 0\ ,
  \end{multline}
  using properties of $L$ noted above and
  using~\eqref{z:POoOh29X6gj.}.  Therefore $S'$ is
  Hermitian.  Then
  \begin{align}
    iW(M^\dagger - S'\Lambda^\dagger) P_\rho
    &=
      iWM^\dagger P_\rho
      -i W \tilde{P}_\rho M^\dagger P_\rho
      + P_\rho L P_\rho
      -iW\tilde{P}_\rho^\perp M^\dagger P_\rho
      = L P_\rho\ ,
  \end{align}
  noting that $\tilde{P}_\rho^\perp \Lambda^\dagger = 0$,
  $(\Lambda^{-1})^\dagger\Lambda^\dagger = (\Lambda\Lambda^{-1})^\dagger =
  P_\rho$, and recalling that $P_\rho L = L$.  With this choice of $S'$, the
  first equality in~\eqref{z:f-AyG.quPkz1} is thus also satisfied,
  thereby completing the proof.
\end{proof}

\subsection{Additional equivalent conditions for zero sensitivity loss}
\label{z:r0U85JMqlcN8}

The following theorem provides additional conditions under which zero
sensitivity loss is achieved (see \cref{z:JZtfKZv25too}),
leading to an explicit form of Bob's optimal sensing observable whenever these
conditions are satisfied.
\begin{theorem}
  \label{z:0dFBhZ.ueJZC}
  We use the notation of \cref{z:xqliGh9KZXzi}.
  Suppose that the conditions for our uncertainty relation equality
  (\cref{z:L9cC7f4RKO0S}) hold.
  Then the following statements are equivalent:
  \begin{enumerate}[label=(\roman*)]
  \item\label{z:SOGX2iW5cR10} %
    We have
    $\Ftwo{ \mathcal{N}(\psi) }{ \mathcal{N}(D_A^Y) } =
    4\langle {\xi}\mkern 1.5mu\relax \vert \mkern 1.5mu\relax {\mathcal{N}^\dagger(\mathds{1})}\mkern 1.5mu\relax \vert \mkern 1.5mu\relax {\xi}\rangle $.
  \item\label{z:N9RFx.mOO7dA} We have
    $\operatorname{tr}\bigl(E_{k'}^\dagger E_k D_A^Z\bigr) = 0$ for all $k,k'$, where $\{E_k\}$ is any set of
    Kraus operators for $\mathcal{N}$.
  \item\label{z:OMFhQmpPsAej}
    We have $\widehat{\mathcal{N}}(D_A^Z) = 0$.
  \item\label{z:BmfQ3yH1Z041}
    We have $\Lambda^\dagger M + M^\dagger\Lambda = 0$.
  \item\label{z:mYcXo.8M1Sw-}
    The operator $i\rho^{1/2} M W^\dagger$ is Hermitian.
  \item\label{z:mRLHU-vpcau5}
    The operator
    $i \rho\,\mathcal{N}(\lvert {\xi}\rangle \mkern -1.8mu\relax \langle{\psi}\rvert )$ is Hermitian and
    $\mathcal{N}(\lvert {\xi}\rangle \mkern -1.8mu\relax \langle{\xi}\rvert ) =
    \mathcal{N}(\lvert {\xi}\rangle \mkern -1.8mu\relax \langle{\psi}\rvert )\rho^{-1}\mathcal{N}(\lvert {\psi}\rangle \mkern -1.8mu\relax \langle{\xi}\rvert )$.
  \item\label{z:OMT8MgjCIU1t} The operator
    $i \rho\,\mathcal{N}(\lvert {\xi}\rangle \mkern -1.8mu\relax \langle{\psi}\rvert )$ is Hermitian and
    $\langle {\xi}\mkern 1.5mu\relax \vert \mkern 1.5mu\relax {\mathcal{N}^\dagger(\mathds{1})}\mkern 1.5mu\relax \vert \mkern 1.5mu\relax {\xi}\rangle  =
    \operatorname{tr}\bigl[\mathcal{N}(\lvert {\xi}\rangle \mkern -1.8mu\relax \langle{\psi}\rvert )\rho^{-1}\mathcal{N}(\lvert {\psi}\rangle \mkern -1.8mu\relax \langle{\xi}\rvert )\bigr]$.
  \end{enumerate}

  Furthermore, if these conditions are satisfied then
  \begin{align}
    \mathcal{R}_{\rho_B}^{-1}\bigl( \mathcal{N}(D_A^Y) \bigr)
    = -2i\mathcal{N}(\lvert {\xi}\rangle \mkern -1.8mu\relax \langle{\psi}\rvert )\rho^{-1}
    + 2i \rho^{-1} \mathcal{N}(\lvert {\psi}\rangle \mkern -1.8mu\relax \langle{\xi}\rvert )P_\rho^\perp\ .
    \label{z:Zky0DlQ1CJeC}
  \end{align}
\end{theorem}

\begin{proof}[**z:0dFBhZ.ueJZC]
  The proof of (i)$\Leftrightarrow$(ii)$\Leftrightarrow$(iii) is presented in the
  main text (\cref{z:JZtfKZv25too}).

  \textit{\hbox{\ref{z:OMFhQmpPsAej}~\(\Rightarrow\)~%
      \ref{z:BmfQ3yH1Z041}:}} %
  Write $0 = \widehat{\mathcal{N}}(D_A^Z) %
  = \operatorname{tr}_{B}\bigl\{\Phi_{B:E}\bigl[\Lambda^\dagger M + M^\dagger\Lambda\bigr]\bigr\}$.
  Observe that $\operatorname{tr}_{B}\bigl\{\Phi_{B:E}\,(\cdot)\bigr\}$ is the partial transpose map
  with respect to the bases used to define $\Phi_{B:E}$; therefore
  $\Lambda^\dagger M + M^\dagger\Lambda = 0$.

  \textit{\hbox{\ref{z:BmfQ3yH1Z041}~\(\Leftrightarrow\)~%
      \ref{z:mYcXo.8M1Sw-}:}} %
  We compute
  \begin{align}
    i\rho^{1/2} M W^\dagger - ( i\rho^{1/2} M W^\dagger )^\dagger
    = i W \bigl( \Lambda^\dagger M + M^\dagger \Lambda \bigr) W^\dagger\ ,
    \label{z:tlmteKhL-z.T}
  \end{align}
  which vanishes thanks to the assumption that
  \ref{z:BmfQ3yH1Z041} holds.  Conversely, because
  $W$ is unitary we may only have $\text{\eqref{z:tlmteKhL-z.T}} = 0$ if
  $\Lambda^\dagger M + M^\dagger \Lambda = 0$.

  \textit{\hbox{\ref{z:BmfQ3yH1Z041}~\(\Rightarrow\)~%
      \ref{z:mRLHU-vpcau5}:}} %
  Recall that $\rho=\Lambda\Lambda^\dagger$ and
  $\mathcal{N}(\lvert {\xi}\rangle \mkern -1.8mu\relax \langle{\psi}\rvert ) = \operatorname{tr}_E\bigl(V\lvert {\xi }\rangle \mkern -1.8mu\relax \langle{\psi }\rvert V^\dagger\bigr) = M
  \Lambda^\dagger$.  Then
  $i\rho \mathcal{N}(\lvert {\xi}\rangle \mkern -1.8mu\relax \langle{\psi}\rvert ) = i\Lambda \Lambda^\dagger M
  \Lambda^\dagger$.  To check that $i\rho\mathcal{N}(\lvert {\xi}\rangle \mkern -1.8mu\relax \langle{\psi}\rvert )$ is
  Hermitian we compute
  \begin{align}
    i\rho \mathcal{N}(\lvert {\xi}\rangle \mkern -1.8mu\relax \langle{\psi}\rvert ) - (i\rho \mathcal{N}(\lvert {\xi}\rangle \mkern -1.8mu\relax \langle{\psi}\rvert ))^\dagger
    = i\Lambda\bigl( \Lambda^\dagger M + M^\dagger \Lambda \bigr) \Lambda^\dagger
    = 0\ ,
    \label{z:7nCcgLfY1YhO}
  \end{align}
  using our assumption that \ref{z:BmfQ3yH1Z041}
  holds.  Furthermore, we have
  \begin{align}
    0 = \rho^{-1/2} W \bigl( \Lambda^\dagger M + M^\dagger \Lambda \bigr)
    W^\dagger P_\rho^\perp
    = P_\rho M W^\dagger P_\rho^\perp\ ;
  \end{align}
  recalling point \ref{z:TpPY7RBUzBIR} of
  \cref{z:L9cC7f4RKO0S}, we find that
  \begin{align}
    M W^\dagger P_\rho^\perp = 0\ .
  \end{align}
  Then
  \begin{align}
    \mathcal{N}(\lvert {\xi}\rangle \mkern -1.8mu\relax \langle{\xi}\rvert ) = MM^\dagger
    = M W^\dagger (P_\rho + P_\rho^\perp) W M^\dagger
    = \mathcal{N}(\lvert {\xi}\rangle \mkern -1.8mu\relax \langle{\psi}\rvert ) \rho^{-1} \mathcal{N}(\lvert {\psi}\rangle \mkern -1.8mu\relax \langle{\xi}\rvert )\ .
  \end{align}

  \textit{\hbox{\ref{z:mRLHU-vpcau5}%
      ~\(\Rightarrow\)~%
      \ref{z:OMT8MgjCIU1t}:}} %
  This implication follows immediately from
  $\langle {\xi}\mkern 1.5mu\relax \vert \mkern 1.5mu\relax {\mathcal{N}^\dagger(\mathds{1})}\mkern 1.5mu\relax \vert \mkern 1.5mu\relax {\xi}\rangle  = \operatorname{tr}(\mathcal{N}(\lvert {\xi}\rangle \mkern -1.8mu\relax \langle{\xi}\rvert ))$.

  \textit{\hbox{\ref{z:OMT8MgjCIU1t}%
      ~\(\Rightarrow\)~%
      \ref{z:SOGX2iW5cR10}:}} %
  Our proof strategy for this implication is to show that the expression of the
  symmetric logarithmic derivative in~\eqref{z:Zky0DlQ1CJeC} is
  correct, and that the corresponding Fisher information at Bob's end has no
  sensitivity loss.  Let
  \begin{align}
    R = -2i\mathcal{N}(\lvert {\xi}\rangle \mkern -1.8mu\relax \langle{\psi}\rvert )\rho^{-1} +
    2i\rho^{-1}\mathcal{N}(\lvert {\psi}\rangle \mkern -1.8mu\relax \langle{\xi}\rvert ) P_\rho^\perp\ .
  \end{align}
  We can see that $R$ is Hermitian by writing
  \begin{align}
    R &= -2i \, (P_\rho + P_\rho^\perp)\, \mathcal{N}(\lvert {\xi}\rangle \mkern -1.8mu\relax \langle{\psi}\rvert )\rho^{-1} +
    2i\rho^{-1}\mathcal{N}(\lvert {\psi}\rangle \mkern -1.8mu\relax \langle{\xi}\rvert ) P_\rho^\perp
    \nonumber\\
    &= -2i \rho^{-1}\bigl[ \rho \mathcal{N}(\lvert {\xi}\rangle \mkern -1.8mu\relax \langle{\psi}\rvert ) \bigr] \rho^{-1}
      + \bigl( -2i P_\rho^\perp \mathcal{N}(\lvert {\xi}\rangle \mkern -1.8mu\relax \langle{\psi}\rvert )\rho^{-1} + \textrm{h.c.} \bigr)\ .
  \end{align}
  The first term is Hermitian by assumption and the second term is manifestly
  Hermitian.
  We note for convenience that
  $R P_\rho = -2i \mathcal{N}(\lvert {\xi}\rangle \mkern -1.8mu\relax \langle{\psi}\rvert )\rho^{-1}$ and
  $P_\rho R = 2i \rho^{-1} \mathcal{N}(\lvert {\psi}\rangle \mkern -1.8mu\relax \langle{\xi}\rvert )$.
  We can compute
  \begin{align}
    \frac12\bigl(\rho R + R\rho\bigr)
    = i\mathcal{N}(\lvert {\psi}\rangle \mkern -1.8mu\relax \langle{\xi}\rvert ) - i\mathcal{N}(\lvert {\xi}\rangle \mkern -1.8mu\relax \langle{\psi}\rvert )
    = \mathcal{N}(D_A^Y)\ .
  \end{align}
  Combining with the fact that $P_\rho^\perp R P_\rho^\perp = 0$ we have that
  $\mathcal{R}_\rho^{-1}(\mathcal{N}(D_Y)) = R$ (see also
  \cref{z:uWDssQGxkLXS}), thus
  proving~\eqref{z:Zky0DlQ1CJeC}.
  The Fisher information at the output of the mapping $\mathcal{N}$ is therefore
  \begin{align}
    \Ftwo{\mathcal{N}(\psi)}{\mathcal{N}(D_A^Y)}
    &= \operatorname{tr}(\rho R^2)
      = \operatorname{tr}\bigl[ \rho \, \bigl(2i\rho^{-1}\mathcal{N}(\lvert {\psi}\rangle \mkern -1.8mu\relax \langle{\xi}\rvert )\bigr)
      \bigl(-2i\mathcal{N}(\lvert {\xi}\rangle \mkern -1.8mu\relax \langle{\psi}\rvert )\rho^{-1}\bigr) \bigr]
      \nonumber\\
      &= 4\operatorname{tr}\bigl(\mathcal{N}(\lvert {\xi}\rangle \mkern -1.8mu\relax \langle{\psi}\rvert )\rho^{-1}\mathcal{N}(\lvert {\psi}\rangle \mkern -1.8mu\relax \langle{\xi}\rvert )\bigr)
        = 4\langle {\xi}\mkern 1.5mu\relax \vert \mkern 1.5mu\relax {\mathcal{N}^\dagger(\mathds{1})}\mkern 1.5mu\relax \vert \mkern 1.5mu\relax {\xi}\rangle \ .
  \end{align}
  We conclude that
  \ref{z:SOGX2iW5cR10} holds.
\end{proof}

\subsection{Proof of the generalized bipartite Fisher information uncertainty
  relation for any two parameters}
\label{z:UxjsHdSwwZRl}

In this Appendix, we prove the generalized uncertainty
relation~\eqref{z:4438TdZJTrfX} that
applies to any two parameters generated by unitary evolutions.

\begin{proposition}[Uncertainty relation for any two parameters with associated generators]
  \label{z:55zQoE3Yx2Kr}
  Let $\lvert {\psi}\rangle $ be a state vector on Alice's system, and let $A,B$ be two
  Hermitian operators.  The latter generate two respective parametrized
  evolutions
  \begin{align}
    \partial_a\psi &= -i[A,\psi]\ ;
    &
    \partial_b \psi &= -i[B,\psi]\ .
    \label{z:JnfYHAxNRnH0}
  \end{align}
  Consider the setting depicted in \cref{z:FnHui0ahNLxW}, where
  $\mathcal{N}$ can be any completely positive, trace-nonincreasing map.
  Then
  \begin{align}
    \frac{\FIqty{Bob}{a}}{\FIqty{Alice}{a}} +
    \frac{\FIqty{Eve}{b}}{\FIqty{Alice}{b}}
    \leq
    1 + 2\sqrt{ 1 - \frac{ \bigl \langle { i[A,B] }\bigr \rangle ^2 }{4 \, \sigma_A^2\sigma_B^2} }\ .
    \label{z:L01BMwGDb6HM}
  \end{align}

  Furthermore, assume that $\mathcal{N}[\psi(a)]$ does not change rank locally
  as a function of $a$ and that there exists $\beta\in\mathbb{R}, \beta\neq 0$
  such that
  \begin{align}
    \widehat{\mathcal{N}}\biggl( -i \biggl[ \frac{B}{\sigma_B} \,,\; \psi \biggr] \biggr)
    = \beta\, \widehat{\mathcal{N}}\biggl( \biggl\{ \frac{A-\langle {A}\rangle }{\sigma_A} \,,\; \psi \biggr\} \biggr)\ .
    \label{z:zYQep-v2EUWm}
  \end{align}
  Then
  \begin{align}
    \frac{\FIqty{Bob}{a}}{\FIqty{Alice}{a}} +
    \frac1{\beta^2}\frac{\FIqty{Eve}{b}}{\FIqty{Alice}{b}}
    = 1\ .
    \label{z:j1xGpDvr5yV5}
  \end{align}
\end{proposition}

\begin{corollary}[Uncertainty relation for any two parameters]
  \label{z:WNVeFyPbIaLG}
  Let $\lvert {\psi(a,b)}\rangle $ be any state vector depending on parameters $a,b$.
  Then
  \begin{align}
    \frac{ \FIqty{Bob}{a} }{ \FIqty{Alice}{a} }
    +
    \frac{ \FIqty{Eve}{b} }{ \FIqty{Alice}{b} }
    \leq
    1 + 2\sqrt{ 1 - 
      \frac{ \bigl \langle { i [ \partial_a\psi, \partial_b\psi ] }\bigr \rangle ^2 }{ 4\, \bigl \langle {(\partial_a\psi)^2}\bigr \rangle   \bigl \langle {(\partial_b\psi)^2}\bigr \rangle  }
    }\ .
  \end{align}
\end{corollary}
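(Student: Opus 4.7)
The plan is to reduce \cref{z:WNVeFyPbIaLG} to \cref{z:55zQoE3Yx2Kr} by constructing local Hermitian generators of the two evolutions at the reference point.  Because the Fisher information quantities $\FIqty{Alice}{a}$, $\FIqty{Bob}{a}$, $\FIqty{Alice}{b}$, $\FIqty{Eve}{b}$ depend only on $\psi$ and the first derivatives $\partial_a\psi$, $\partial_b\psi$ at $(a_0,b_0)$, it is enough to exhibit, at this single point, Hermitian operators $A$ and $B$ such that $\partial_a\psi = -i[A,\psi]$ and $\partial_b\psi = -i[B,\psi]$.  We can then invoke \cref{z:55zQoE3Yx2Kr} and translate its right-hand side back into the form that appears in the corollary.

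First, I would observe that, since $\psi=\proj\psi$ is a rank-one projector, differentiating $\psi^2=\psi$ yields $\{\partial_a\psi,\psi\}=\partial_a\psi$, which forces $P_\psi\partial_a\psi P_\psi = 0 = P_\psi^\perp\partial_a\psi P_\psi^\perp$.  Hence $\partial_a\psi$ may be written as $\ketbra{\alpha}{\psi}+\ketbra{\psi}{\alpha}$ with $\ket\alpha = P_\psi^\perp(\partial_a\psi)\ket\psi$.  Defining the Hermitian operator
\begin{align}
A = i`\big(\ketbra\alpha\psi - \ketbra\psi\alpha)\ ,
\end{align}
a direct check gives $-i[A,\psi] = \ketbra\alpha\psi + \ketbra\psi\alpha = \partial_a\psi$.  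An analogous construction with $\ket\beta = P_\psi^\perp(\partial_b\psi)\ket\psi$ produces the corresponding $B$.

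Next, I would verify that the right-hand side of \cref{z:4438TdZJTrfX} evaluated on these particular $A,B$ coincides with the right-hand side claimed in the corollary.  Using $\braket\alpha\psi=0$, one computes $\avg{A}_\psi=0$, $\sigma_A^2 = \avg{A^2}_\psi = \braket\alpha\alpha$, and in parallel $\avg{(\partial_a\psi)^2}_\psi = \braket\alpha\alpha$, so that $\sigma_A^2 = \avg{(\partial_a\psi)^2}_\psi$; the same identity holds for $B$.  For the commutator term, I would use \cref{z:v.5MaZHkqw71}-type manipulations to expand $\avg{[A,\psi][B,\psi]}_\psi = \avg{A}_\psi\avg{B}_\psi - \avg{AB}_\psi$ (and the analogous expression with $A,B$ swapped), yielding
\begin{align}
\avg[\big]{i[\partial_a\psi,\partial_b\psi]}_\psi
= -i\,\avg[\big]{[[A,\psi],[B,\psi]]}_\psi
= \avg[\big]{i[A,B]}_\psi\ .
\end{align}
Squaring and substituting into \cref{z:4438TdZJTrfX} delivers precisely the bound in \cref{z:WNVeFyPbIaLG}.

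The translation step is routine algebra, so the only possible obstacle is the construction of $A$ and $B$ itself: one must make sure that the local Hermitian lift of $\partial_a\psi$ really exists and that \cref{z:55zQoE3Yx2Kr} is insensitive to the non-uniqueness of $A$ (one can shift $A$ by any $cP_\psi + P_\psi^\perp M P_\psi^\perp$ without changing $\partial_a\psi$).  This non-uniqueness is harmless because the Fisher information quantities depend on $A$ only through $-i[A,\psi]$, and because the canonical choice above is the one for which $\sigma_A^2$ equals $\avg{(\partial_a\psi)^2}_\psi$; any other admissible generator has at least as large a variance, so the bound remains valid.  With these ingredients in place, the corollary follows directly from \cref{z:55zQoE3Yx2Kr}.
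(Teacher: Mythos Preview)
Your proof is correct and follows essentially the same approach as the paper. Your generator $A = i\bigl(\ketbra{\alpha}{\psi} - \ketbra{\psi}{\alpha}\bigr)$ with $\ket\alpha = P_\psi^\perp(\partial_a\psi)\ket\psi$ is in fact exactly the paper's choice $A = i[\partial_a\psi,\psi]$, just written in a more explicit form; the remaining verifications (that $\sigma_A^2 = \avg{(\partial_a\psi)^2}_\psi$ and $\avg{i[A,B]}_\psi = \avg{i[\partial_a\psi,\partial_b\psi]}_\psi$) are carried out by both of you with only cosmetic differences. Your final paragraph about non-uniqueness of $A$ is correct but unnecessary: once you have exhibited one admissible pair $(A,B)$ for which the right-hand side of \cref{z:55zQoE3Yx2Kr} equals the desired expression, the corollary is proved.
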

We first prove the following lemma.

\begin{lemma}
  \label{z:sHEqk0uiteCv}
  Let $\lvert {\psi}\rangle $ be any state vector and let $\mathcal{M}$ be any completely
  positive, trace-nonincreasing map.  Consider two Hermitian operators $C,B$
  generating respective evolutions
  \begin{align}
    \partial_c\psi &= -i[C,\psi]\ ,
    &
      \partial_b\psi &= -i[B,\psi]\ .
  \end{align}
  We write $\rho=\mathcal{M}(\psi)$, $\partial_c \rho = \mathcal{M}(-i[C,\psi])$
  and $\partial_b \rho = \mathcal{M}(-i[B,\psi])$.  Then for any $x,y>0$,
  \begin{align}
    \frac{y}{\sigma_B^2} \Ftwo{\rho}{\partial_b\rho}
    \leq
    \frac{x}{\sigma_C^2} \Ftwo{\rho}{\partial_c\rho}
    + 4(x+y)\sqrt{
    1 - \frac{xy}{(x+y)^2}%
    \frac{4\bigl[\operatorname{Re}\langle {\bar C\bar B}\rangle \bigr]^2 }{ \sigma_C^2\sigma_B^2 }
    }\ ,
    \label{z:TqCnGtoCA6EM}
  \end{align}
  where $\bar{C} = C - \langle {C}\rangle \mathds{1}$ and $\bar{B} = B - \langle {B}\rangle \mathds{1}$.  In
  addition, suppose that $C$ can be written as $C = i \alpha[A, \psi]$ for some
  Hermitian operator $A$ and some $\alpha\in\mathbb{R}$.  Then the above
  inequality takes the form
  \begin{align}
    \frac{y}{\sigma_B^2} \Ftwo{\rho}{\partial_b\rho}
    \leq
    \frac{x}{\alpha^2\sigma_A^2} \Ftwo{\rho}{\partial_c\rho}
    + 4(x+y)\sqrt{
    1 - \frac{xy}{(x+y)^2}%
    \frac{ \bigl \langle { i[A,B] }\bigr \rangle ^2 }{ \sigma_A^2\sigma_B^2 }
    }\ .
    \label{z:gBgdUl4f-5Hj}
  \end{align}

  Furthermore, let $x,y>0$.  If there exists $s\in \{+1,-1\}$ such that
  \begin{align}
    \mathcal{M}\biggl(-i\biggl[\frac{\sqrt{y} B}{\sigma_B}
        + s\frac{\sqrt{x} C}{\sigma_C}, \psi\biggr]\biggr) = 0\ ,
    \label{z:xZU8ZXiZTQIk}
  \end{align}
  then
  \begin{align}
    \frac{y}{\sigma_B^2} \Ftwo{\rho}{\partial_b\rho}
    =
    \frac{x}{\sigma_C^2} \Ftwo{\rho}{\partial_c\rho}\ .
    \label{z:NJ6HBldDH3Zy}
  \end{align}
\end{lemma}
\begin{proof}[*z:sHEqk0uiteCv]
  For any $x,y>0$, define the shorthands
  \begin{align}
    \tilde{C} &= \frac{\sqrt x}{\sigma_C}\,\bigl(C-\langle {C}\rangle \bigr)\ ,
    &
      \tilde{B} &= \frac{\sqrt y}{\sigma_B}\,\bigl(B-\langle {B}\rangle \bigr)\  .
  \end{align}
  Observe that $\sigma_{\tilde{C}}^2 = x$ and $\sigma_{\tilde{B}}^2 = y$.
  Furthermore, we define for convenience
  $D_{(\cdot)} = \mathcal{M}(-i[(\cdot),\psi])$, observing that
  $D_C = \partial_c\rho$ and $D_B = \partial_b\rho$.  Then using
  \cref{z:xptYZVgGge6R} we see that
  \begin{align}
  \Ftwo{\rho}{D_{\tilde{C}}}
    &= \frac{x}{\sigma_C^2} \Ftwo{\rho}{D_{C}}\ ;
    &
  \Ftwo{\rho}{D_{\tilde{B}}}
    &= \frac{y}{\sigma_B^2} \Ftwo{\rho}{D_{B}}\ .
      \label{z:Zw.HWelFzZZR}
  \end{align}
  Invoking \cref{z:wXcXJs-gCbik},
  \begin{align}
    \Ftwo{\rho}{D_{\tilde{B}}}
    \leq
    \Ftwo{\rho}{D_{\tilde{C}}}
    + \Bigl[ \Ftwo{\rho}{\Delta_+} \Ftwo{\rho}{\Delta_-} \Bigr]^{1/2}\ ,
    \label{z:TTYLIpK9z0SK}
  \end{align}
  where $\Delta_\pm = D_{\tilde{C}}\pm D_{\tilde{B}} = D_{\tilde{C}\pm\tilde{B}}$.
  We proceed to compute the second term on the right-hand side of this
  inequality.  The data-processing inequality
  (\cref{z:x2KvR6Lo6ilw}), along with
  \cref{z:Sj25VG-tiqC5}, gives us
  \begin{align}
    \Ftwo{\rho}{\Delta_\pm}
    \leq \Ftwo{\psi}{ -i[\tilde{C}\pm \tilde{B}, \psi] }
    = 4\Var_\psi\bigl(\tilde{C} \pm \tilde{B}\bigr)\ ,
    \label{z:V.44V5cbsSxA}
  \end{align}
  where we write $\Var_\rho(X) = \langle {X^2}\rangle _\rho - (\langle {X}\rangle _\rho)^2$.
  We find
  \begin{align}
    4\Var_\psi\bigl(\tilde{C}\pm\tilde{B}\bigr)
    = 4\,\langle { (\tilde{C}\pm \tilde{B})^2}\rangle 
    = 4\,\langle { {\tilde{C}}^2 + {\tilde{B}}^2 \pm \{\tilde{C},\tilde{B}\} }\rangle 
    = 4\bigl(x+y\bigr) \pm 8\operatorname{Re}\,\langle {\tilde{C}\tilde{B}}\rangle \ .
  \end{align}
  Then
  \begin{align}
    4^2\Var_\psi\bigl(\tilde{C} + \tilde{B}\bigr) \Var_\psi\bigl(\tilde{C} - \tilde{B}\bigr)
    = 4^2\bigl(x+y\bigr)^2 - 8^2\frac{xy}{\sigma_C^2\sigma_B^2}\bigl[\operatorname{Re}\,\langle {\bar{C}\bar{B}}\rangle \bigr]^2\ ,
  \end{align}
  where $\bar{C} = C-\langle {C}\rangle $ and $\bar{B} = B-\langle {B}\rangle $.  Combining the above,
  \begin{align}
    \Bigl[\Ftwo{\rho}{\Delta_+} \Ftwo{\rho}{\Delta_-}\Bigr]^{1/2}
    \leq
    4(x+y)\,\sqrt{ 1
    - \frac{xy}{(x+y)^2}\,
      \frac{4\bigl[\operatorname{Re}\,\langle {\bar C\bar B}\rangle \bigr]^2}{\sigma_C^2\sigma_B^2}
    } {.}
  \end{align}
  Plugging this expression back into~\eqref{z:TTYLIpK9z0SK}, along
  with~\eqref{z:Zw.HWelFzZZR},
  proves~\eqref{z:TqCnGtoCA6EM}.
  Now suppose that $C = i\alpha [A,\psi]$ for some Hermitian operator $A$ and
  for a real number $\alpha$.  Then $\langle {C}\rangle =0$ so $\bar{C}=C$ and
  \begin{align}
    \operatorname{Re}\,\langle {\bar{C}\bar{B}}\rangle 
    &= \alpha \operatorname{Re}\,\bigl \langle { i[A,\psi]\,\bar{B} }\bigr \rangle 
      = \alpha \operatorname{Re}\,\bigl( \langle {\psi}\mkern 1.5mu\relax \vert \mkern 1.5mu\relax {iH\lvert {\psi }\rangle \mkern -1.8mu\relax \langle{\psi }\rvert \bar{B}}\mkern 1.5mu\relax \vert \mkern 1.5mu\relax {\psi}\rangle  - \langle {\psi}\mkern 1.5mu\relax \vert \mkern 1.5mu\relax {i A \bar{B}}\mkern 1.5mu\relax \vert \mkern 1.5mu\relax {\psi}\rangle  \bigr)
    \nonumber\\
    &
      = \alpha \operatorname{Re}\bigl( - i \langle {A \bar{B}}\rangle  \bigr)
      = \frac\alpha2\bigl( - i\langle {A \bar{B}}\rangle  + i\langle {\bar{B} A}\rangle  \bigr)
      = \frac\alpha2\bigl \langle { i[A, \bar{B}] }\bigr \rangle 
      = \frac\alpha2\bigl \langle { i[A, B] }\bigr \rangle \ .
  \end{align}
  \Cref{z:gBgdUl4f-5Hj} follows from
  this and using the fact that
  $\sigma_C^2 = \langle {\bar{C}^2}\rangle  = \alpha^2\langle {-(A\psi -\psi A)^2}\rangle  =
  \alpha^2\bigl(\langle {A^2}\rangle -\langle {A}\rangle ^2\bigr) = \alpha^2\sigma_A^2$.

  Now assume that
  \cref{z:xZU8ZXiZTQIk}
  is satisfied.  Recalling that
  $ \Delta_\pm = D_{\tilde C} \pm D_{\tilde B} = \mathcal{M}\bigl(-i[ \tilde{C}
  \pm \tilde{B}, \psi ]\bigr) $, we find that
  condition~\eqref{z:xZU8ZXiZTQIk}
  immediately implies that either $\Delta_+=0$ or $\Delta_-=0$ and therefore
  either $\Ftwo{\rho}{\Delta_+}=0$ or $\Ftwo{\rho}{\Delta_-}=0$.  In this case,
  \cref{z:wXcXJs-gCbik} immediately implies that
  $\Ftwo{\rho}{D_{\tilde C}} = \Ftwo{\rho}{D_{\tilde B}}$.  We conclude
  that~\eqref{z:NJ6HBldDH3Zy}
  holds, recalling~\eqref{z:Zw.HWelFzZZR}.
\end{proof}

\begin{proof}[*z:55zQoE3Yx2Kr]
  Consider the evolution $\psi(a,c)$, where the parameter $a$ is generated by
  the first given Hermitian operator $A$ and where the parameter $c$ is
  generated by the complementary generator $C$ (as per
  \cref{z:KI2eOYZK2.Ln} in the main text) given by
  \begin{align}
    \partial_c\psi
    &= i[C, \psi]\ ,
    &
      C &= \frac1{2\Var_\psi(A)}\,\bigl(-i[A,\psi]\bigr)\ .
  \end{align}
  Recall $\FIqty{Alice}{c} = 4\sigma_C^2 = \sigma_A^{-2}$
  from~\eqref{z:lN.pbrjbNave} with $H\to A$ and $T\to C$.  Our time-energy
  uncertainty relation, in its form of
  \cref{z:4VI7KY5eDBCv}, asserts that
  \begin{align}
    \frac1{4\sigma_A^2}\,\FIqty{Bob}{a}
    + \sigma_A^2\, \FIqty{Eve}{c}
    \leq 1\ .
    \label{z:gcLqGPLmRz0.}
  \end{align}
  Now we invoke \cref{z:sHEqk0uiteCv}, with
  $\mathcal{M} = \widehat{\mathcal{N}}$, $c$, $b$, $C = i\alpha [A,\psi]$,
  $\alpha=-(2\sigma_A^2)^{-1}$, $B$, and $x=y=1/4$.
  From~\eqref{z:gBgdUl4f-5Hj} we
  find
  \begin{align}
    \frac{1}{4\sigma_B^2}\,\FIqty{Eve}{b}
    &\leq \sigma_A^2 \FIqty{Eve}{c}
    + 2\sqrt{1 - \frac{\bigl \langle { i[A,B] }\bigr \rangle ^2}{4\sigma_A^2\sigma_B^2} }\ .
      \label{z:8kXHjm6xYWCe}
  \end{align}
  We find, applying~\eqref{z:gcLqGPLmRz0.}
  and~\eqref{z:8kXHjm6xYWCe} in succession,
  \begin{align}
    \frac1{4\sigma_A^2} \FIqty{Bob}{a} + \frac1{4\sigma_B^2} \FIqty{Eve}{b}
    &\leq
    1 - \sigma_A^2 \FIqty{Eve}{c}
      + \frac1{4\sigma_B^2} \FIqty{Eve}{b} 
      \nonumber\\
     &
       \leq
       1+ 2\sqrt{1 - \frac{ \bigl \langle { i [A,B] }\bigr \rangle ^2 }{4\sigma_A^2\sigma_B^2} }\ .
  \end{align}
  This shows the desired uncertainty relation.

  Now assume that $\mathcal{N}[\psi]$ does not change rank locally as a
  function of $a$ and that
  \cref{z:zYQep-v2EUWm} holds.  Let
  $\mathcal{M} = \widehat{\mathcal{N}}$, %
  $\rho_E = \widehat{\mathcal{N}}[\psi]$,
  $C = i\alpha [A,\psi]$, and
  $\alpha=-(2\sigma_A^2)^{-1}$. %
  Then as computed above $\sigma_C = \lvert {\alpha}\rvert \sigma_A=1/(2\sigma_A)$.  Let us
  compute now
  \begin{align}
    -i\biggl[\sqrt{x} \frac{C}{\sigma_C}, \psi\biggr]
    = -i\biggl[ \sqrt{x}\frac{-i[A,\psi]}{\sigma_A} , \psi \biggr]
    = -\sqrt{x}\biggl\{ \frac{A - \langle {A}\rangle }{\sigma_A}, \psi \biggr\} \ ,
  \end{align}
  recalling that $\bigl[[A, \psi], \psi\bigr] = \{A-\langle {A}\rangle ,\psi\}$.  Let $y=1$,
  $x=\lvert {\beta}\rvert ^2$ and $s=\operatorname{sign}(\beta)$ such that
  $s\sqrt{x}/\sqrt{y} = \beta$.  We then have
  \begin{align}
    \mathcal{M}\biggl(-i \biggl[ \sqrt{y}\frac{B}{\sigma_B} + s \sqrt{x}\frac{C}{\sigma_C}\,,\; \psi \biggr]\biggr)
    &= \widehat{\mathcal{N}}\biggl(-i \biggl[ \sqrt{y}\frac{B}{\sigma_B} , \psi \biggr]\biggr)
      + s \widehat{\mathcal{N}}\biggl(-i \biggl[ \sqrt{x}\frac{C}{\sigma_C} , \psi \biggr]\biggr)
      \nonumber\\
    &= \frac{\sqrt y}2\,\Biggl\{
      \widehat{\mathcal{N}}\biggl(-i \biggl[ \frac{B}{\sigma_B} , \psi \biggr]\biggr)
      - \beta
      \widehat{\mathcal{N}}\biggl( \biggl\{ \frac{A - \langle {A}\rangle }{\sigma_A}, \psi \biggr\} \biggr) \Biggr\}
      \nonumber\\
    &= 0\ .
  \end{align}
  The latter expression then vanishes thanks to our assumption that
  \cref{z:zYQep-v2EUWm} holds.  Thanks
  to \cref{z:sHEqk0uiteCv} we find
  \begin{align}
    \frac{\FIqty{Eve}{b}}{\FIqty{Alice}{b}} =
    \frac{1}{4\sigma_B^2}\,\Ftwo{\rho_E}{\partial_b\rho_E}
    = \frac{x}{4\sigma_C^2}\,\Ftwo{\rho_E}{\partial_c\rho_E}
    = \beta^2\frac{\FIqty{Eve}{c}}{\FIqty{Alice}{c}}\ .
    \label{z:jnjDUJ4.BheE}
  \end{align}
  Thanks to our assumption that $\mathcal{N}[\psi]$ does not change rank locally
  as a function of $a$, we know that our main uncertainty relation
  (\cref{z:4VI7KY5eDBCv}) holds with
  equality:
  \begin{align}
    \frac1{4\sigma_A^2}\,\FIqty{Bob}{a} + \sigma_A^2\, \FIqty{Eve}{c}
    = 1\ .
  \end{align}
  We therefore find, recalling $\sigma_A^2 = 1/(4\sigma_C^2)$,
  \begin{align}
    \frac{\FIqty{Bob}{a}}{\FIqty{Alice}{a}} + \frac1{\beta^2}\frac{\FIqty{Eve}{b}}{\FIqty{Alice}{b}}
    &= \biggl[1 - \sigma_A^2 \FIqty{Eve}{c} \biggr] + \frac1{\beta^2}\frac{\FIqty{Eve}{b}}{4\sigma_B^2}
      = 1 - \frac1{\beta^2}\frac{\FIqty{Eve}{c}}{4\sigma_C^2}
         + \frac1{\beta^2}\frac{\FIqty{Eve}{b}}{4\sigma_B^2}
      = 1\ ,
  \end{align}
  using~\eqref{z:jnjDUJ4.BheE}, thus proving the claim.
\end{proof}

\begin{proof}[*z:WNVeFyPbIaLG]
  The main idea of this corollary is to note that the Fisher information depends
  only on the state and its first derivative with respect to the parameter,
  and that any derivative $\partial_a\psi$ can be written in the form
  $\partial_a\psi = -i[A,\psi]$ for some Hermitian generator $A$.  Therefore we
  seek Hermitian operators $A,B$ such that $\partial_a\psi = -i[A,\psi]$ and
  $\partial_b\psi = -i[B,\psi]$, such that we can apply
  \cref{z:55zQoE3Yx2Kr}.
  We let $A = i\bigl[ \partial_a\psi, \psi \bigr]$ and
  $B = i\bigl[ \partial_b\psi, \psi \bigr]$, and we compute
  \begin{align}
    -i[A, \psi] = -i\bigl[ i [\partial_a\psi, \psi], \psi\bigr]
    = \bigl\{ \partial_a\psi , \psi \bigr\}
    = \partial_a(\psi^2) = \partial_a\psi\ ,
    \label{z:5cmpYLxoPvTl}
  \end{align}
  using~\eqref{z:v.5MaZHkqw71} and the fact that
  $\langle {\partial_a\psi}\rangle  = \operatorname{tr}[\partial_a\psi] = \partial_a \operatorname{tr}(\psi) = 0$.
  Similarly,
  \begin{align}
    -i[B, \psi] = \partial_b\psi\ .
  \end{align}
  We can therefore apply \cref{z:55zQoE3Yx2Kr}.  It
  remains to compute the quantities appearing in the right-hand side
  of~\eqref{z:L01BMwGDb6HM}.  We have
  \begin{align}
    \langle { i[A, B] }\rangle 
    &= i\operatorname{tr}\Bigl\{ \psi\, \bigl[ i[ \partial_a\psi, \psi], i[\partial_b\psi, \psi] \bigr] \Bigr\}
      \nonumber\\
    &= i\operatorname{tr}\Bigl\{\bigl[ \psi \,,\,  i[ \partial_a\psi, \psi] \bigr] \, \bigl(i[\partial_b\psi, \psi]\bigr) \Bigr\}
      \nonumber\\
    &= \operatorname{tr}\Bigl\{-i\bigl[ i[ \partial_a\psi, \psi]  \,,\, \psi  \bigr] \, \bigl(i[\partial_b\psi, \psi]\bigr) \Bigr\}
      \nonumber\\
    &= \operatorname{tr}\Bigl\{ (\partial_a\psi) \, \bigl(i[\partial_b\psi, \psi]\bigr) \Bigr\}
      \nonumber\\
    &= \bigl \langle {i [ \partial_a\psi, \partial_b\psi ] }\bigr \rangle \ ,
  \end{align}
  using the cyclicity of the trace and invoking~\eqref{z:5cmpYLxoPvTl}
  for the fourth equality.
  Furthermore
  \begin{align}
    \sigma_A^2
    &= \langle {A^2}\rangle  - \langle {A}\rangle ^2
      = \bigl \langle { (i[\partial_a\psi, \psi])^2 }\bigr \rangle 
      - \bigl \langle {i[\partial_a\psi, \psi]}\bigr \rangle ^2
      \nonumber\\&
    = -\bigl \langle { \bigl((\partial_a\psi)\,\psi - \psi\,(\partial_a\psi)\bigr)
      \bigl((\partial_a\psi)\, \psi - \psi\,(\partial_a\psi)\bigr) }\bigr \rangle 
    \nonumber\\
    &= \bigl \langle { \psi\,(\partial_a\psi)\,(\partial_a\psi)\,\psi }\bigr \rangle 
      = \bigl \langle { (\partial_a\psi)^2 }\bigr \rangle \ ,
  \end{align}
  where we have made use of $\psi \,(\partial_a\psi)\, \psi = 0$.  Similarly
  $\sigma_B^2 = \bigl \langle { ( \partial_b\psi )^2 }\bigr \rangle $, which ends the proof.
\end{proof}

\section{Generalizations to infinite-dimensional Hilbert spaces}
\label{z:3vQhWOZ7civK}

While the main text has put an emphasis on discussing notions of quantum
metrology making use of finite-dimensional quantum systems, in this section,
we %
generalize the above findings to the setting of infinite-dimensional Hilbert
spaces.  A specific attention is given to %
unbounded operators, as many physical systems of practical use fall under this
category.

\subsection{Uncertainty relation for any two parameters}

\newcommand*{\opA}{\mathtt{A}}
\newcommand*{\opB}{\mathtt{B}}

We start with a generalisation of \cref{z:incm1nGt-mIE} to infinite dimensions (c.f. \cref{z:55zQoE3Yx2Kr}).

\begin{theorem}[Uncertainty relation for infinite-dimensional systems]
  \label{z:Tja8FC4elFJW}  
  Let $\opA,\opB$ be %
  two self-adjoint operators (possibly unbounded) on a separable Hilbert space
  $\mathscr{H}_A$ with domains $\mathcal{D}(\opA)$ and $\mathcal{D}(\opB)$, respectively. Let
  $\lvert {\psi}\rangle \in\mathcal{D}(\opA)\cap\mathcal{D}(\opB)$ and
  $\lvert {\psi(a)}\rangle  \in\mathcal{D}(\opA)$, $\lvert {\psi(b)}\rangle  \in\mathcal{D}(\opB)$ for some
  $b,a\in\mathbb{R}$ where $\lvert {\psi(a)}\rangle :=e^{-ia \opA}\lvert {\psi}\rangle $,
  $\lvert {\psi(b)}\rangle :=e^{-ib \opB }\lvert {\psi}\rangle $.  Let $V_{A\to BE}$ be any isometry
  $\mathscr{H}_A\to\mathscr{H}_B\otimes\mathscr{H}_E$, where the Hilbert spaces $\mathscr{H}_B, \mathscr{H}_E$
  associated with Bob and Eve are also separable and possibly of infinite
  dimensions.  Consider the two pure state evolutions given
  by~\eqref{z:JnfYHAxNRnH0}.
  Then~\eqref{z:L01BMwGDb6HM} holds,
  with the following quantities defined by
  \begin{align}
	\bigl \langle {i [\opA,\opB ] }\bigr \rangle &:= i\bigl \langle { \opA\psi, \opB \psi }\bigr \rangle - i\bigl \langle { \opB \psi, \opA\psi }\bigr \rangle ,\\
	\bigl \langle { \opA^2 }\bigr \rangle  & := \bigl \langle { \opA\psi, \opA\psi }\bigr \rangle 
	\end{align}
and
 \begin{align}
F_{{M}}(y)&:= \liminf_{l\to\infty} \operatorname{tr}\left[\rho_{{M}}^{(l)}(y)R^2\right]\in\mathbb{R},
  \end{align}
  where ${M}\in\{B,E\},\, y\in\{a,b\}$ and $\rho_X^{(l)}$ is an $l$-dimensional subnormalised density operator and $R=R(l)$ is defined in \cref{z:d-w60nHacIQJ} on an 
  $l$-dimensional Hilbert space for $\rho_{{M}}^{(l)}$. Specifically, 
  \begin{align}
  \rho_{{M}}^{(l)}(y)&:= \operatorname{tr}_{\backslash {M}} \left[P_{BE}^{(l)} V_{A\to BE} \rho_A(y) V^\dag_{A\to BE} P_{BE}^{(l)} \right],
  \end{align}
  where $\backslash E:=B$, $\backslash B:=E$, and $P_{BE}^{(l)}$ is the orthogonal projection onto the first $l$ basis elements of a basis for $\mathscr{H}_B\otimes\mathscr{H}_E$. 
  Furthermore, the derivative of $\rho_{{M}}^{(l)}(y)$ is defined via
  \begin{align}
  \frac{d}{dy}	\rho_{{M}}^{(l)}(y)&:= \operatorname{tr}_{\backslash {M}} \left[P_{BE}^{(l)} V_{A\to BE} \frac{d}{dy} \rho_A(y) V^\dag_{A\to BE} P_{BE}^{(l)} \right],
  \end{align}
where 
  \begin{align}
  \frac{d}{da} \rho_A(a)&:=  i \lvert {\psi(a)}\rangle  \langle {\psi(a)}\rvert  \opA- i \opA \lvert {\psi(a)}\rangle  \langle {\psi(a)}\rvert ,\\ \vspace{-5cm}\nonumber\\
  \frac{d}{db} \rho_A(b)&:=  i \lvert {\psi(b)}\rangle  \langle {\psi(b)}\rvert  \opB - i \opB  \lvert {\psi(b)}\rangle  \langle {\psi(b)}\rvert .
  \end{align} 
\end{theorem}

\begin{proof}[*z:Tja8FC4elFJW]
The proof will proceed in two steps. First we will approximate $\opB $ and $\opA$ by bounded operators (if they are already bounded, then this first step is not necessary, although the approximation will nevertheless be well defined). Second, we will approximate these bounded operators by finite dimensional operators. Then we will apply \cref{z:L01BMwGDb6HM} before taking a sequence of limits in which the approximations vanish.
We start with a few elementary definitions and results which will be necessary for our proof.

Let $A$, $(A_n)_n$, be bounded operators on a Hilbert space $\mathscr{H}$. We define all
bounded operators we consider to have domain equal to the entire Hilbert space.
We say that $A_n$ converges (as $n\to\infty$) to $A$ in the strong limit if
$A_n\Psi \to A\Psi$ as $n\to \infty$ for any $\Psi\in \mathscr{H}$. We denote this as
$A_n \stackrel{s}{\to} A$. Some properties are the following.
\begin{itemize}
	\item [i)] Let $A$, $(A_n)_n$, $B$, $(B_n)_n$, $C$, $(C_n)_n$, be bounded operators on a Hilbert space $\mathscr{H}$. $A_n \stackrel{s}{\to} A$, $B_n \stackrel{s}{\to} B$ and  $C_n \stackrel{s}{\to} C$  imply $A_nB_n \stackrel{s}{\to} AB $ and $A_nB_nC_n \stackrel{s}{\to} ABC$.

	\emph{Proof.} $(A_nB_n - AB)\Psi= A_n(B_n-B)\Psi + (A_n-A)B\Psi$. By the uniform {boundedness} principle, $A_n \stackrel{s}{\to} A$ implies $\| A_n\| \leq c$ for some $c\in \mathbb{R}$ for all $n$. Therefore, 
	\begin{align}
	\| (A_nB_n - AB)\Psi \| \leq c \|(B_n-B)\Psi\| + \|(A_n-A)B\Psi\|,
	\end{align}
	where the r.h.s.\@ tends to zero as $n\to \infty$. This {proves} the first claim. For the second, simply define $\bar A_n:= A_n B_n$. Hence $\bar A_n \stackrel{s}{\to} AB$ and thus $\bar A_nC_n \stackrel{s}{\to} (AB) C$, hence proving the second claim.
	\item [ii)] $A_n \stackrel{s}{\to} A$ implies $e^{-i A_n t}\stackrel{s}{\to}e^{-i A t}$ for $t\in\mathbb{R}$.
	
	\emph{Proof.} $e^{-i A_n t}- e^{-i A t}= e^{-i A_n s} e^{-i A (t-s)}\Big{|}_{s=0}^{s=t}= -i \int_0^t ds e^{-i A_n s} (A_n-A) e^{-i A(t-s)} $. But we have $ (A_n-A) e^{-i A(t-s)}\stackrel{s}{\to} \bar 0$ pointwise in $s$, where $\bar 0$ is the bounded operator mapping all vectors in $\mathscr{H}$ to the zero vector in $\mathscr{H}$. Thus via i), $ e^{-i A s} (A_n-A) e^{-i A(t-s)}\stackrel{s}{\to} \bar 0$  pointwise in $s$ and  the result follows by dominated convergence. 
	\item [iii)] Let $A$ be self-adjoint and possibly unbounded. Let $f$, $(f_n)_n$ $:\mathbb{R}\to\mathbb{C}$ be uniformly bounded functions with $f_n \to f$ as $n\to\infty$ pointwise. Then $f_n(A) \stackrel{s}{\to} f(A)$.\\
	\emph{Proof.} See
	Ref.~\cite{R108}. %
\end{itemize}
 We can now prove the theorem. Let $(P_{{N}}^{(n)})_n$ be the orthogonal projections onto the span of the first $n$ basis elements of a separable Hilbert space $\mathscr{H}_{{N}}$. Consider two bounded operators $\tilde \opA$ and 
 $\tilde{\opB }$ 
 on $\mathscr{H}_A$ and define $\tilde \opA_n$, 
 $\tilde{\opB }_n$ by
 \begin{align}
 \tilde \opA_n := P_{A}^{(n)} \tilde \opA P_{A}^{(n)}, \quad  \tilde{\opB }_n := P_{A}^{(n)} 
 \tilde{\opB } P_{A}^{(n)}.
 \end{align} 
Furthermore, consider the sequence of states $(\rho^{(n,l)}_B)_{n,l}$ on $\mathscr{H}_B$, and  $(\rho^{(n,l)}_E)_{n,l}$ on $\mathscr{H}_E$, 
where 
\begin{align}
\begin{split}
	\rho^{(n,l)}_B(a)&:=\operatorname{tr}_E\left[P_{BE}^{(l)} V_{A\to BE} P_{A}^{(n)} \left(    \lvert {\psi_n(a)}\rangle  \langle {\psi_n(a)}\rvert  \right) P_{A}^{(n)} V_{A\to BE}^\dag P_{BE}^{(l)} \right], \\
    \rho^{(n,l)}_E(b)&:=\operatorname{tr}_B\left[P_{BE}^{(l)} V_{A\to BE} P_{A}^{(n)} \left(   \lvert {\psi_n(b)}\rangle  \langle {\psi_n(b)}\rvert   \right) P_{A}^{(n)} V_{A\to BE}^\dag P_{BE}^{(l)}  \right].
\end{split}\label{z:xqLpOGHC26c.}
\end{align}
where $\lvert {\psi_n(a)}\rangle :=e^{-i a\tilde \opA_n} \lvert {\psi}\rangle $, $\lvert {\psi_n(b)}\rangle :=e^{-i b\tilde{\opB }_n} \lvert {\psi}\rangle $ 
and the sequences of derivatives,  $(\frac{d}{da} \rho^{(n,l)}_B(t))_{n,l}$ on 
$\mathscr{H}_B$, and  $(\frac{d}{da} \rho^{(n,l)}_E(a))_{n,l}$ on $\mathscr{H}_E$ are
\begin{align}
\begin{split}
\frac{d}{da}\rho^{(n,l)}_B(a)&=\operatorname{tr}_E\left[P_{BE}^{(l)} V_{A\to BE} P_{A}^{(n)} \left(   i \lvert {\psi_n(a)}\rangle  \langle {\psi_n(a)}\rvert  \tilde \opA_n- i \tilde \opA_n \lvert {\psi_n(a)}\rangle  \langle {\psi_n(a)}\rvert  \right) P_{A}^{(n)} V_{A\to BE}^\dag P_{BE}^{(l)}  \right], \\
\frac{d}{db}\rho^{(n,l)}_E(b)&=\operatorname{tr}_B\left[P_{BE}^{(l)} V_{A\to BE} P_{A}^{(n)} \left(   i \lvert {\psi_n(b)}\rangle  \langle {\psi_n(b)}\rvert  
\tilde{\opB }_n- i \tilde{\opB }_n \lvert {\psi_n(b)}\rangle  \langle {\psi_n(b)}\rvert  \right) P_{A}^{(n)} V_{A\to BE}^\dag P_{BE}^{(l)}\right].
\end{split}\label{z:Vz6mzHb5RLbX}
\end{align}
We can use \cref{z:xqLpOGHC26c.,z:Vz6mzHb5RLbX} to construct the Fisher information for these states.
Since $V_{A\to BE}V_{A\to BE}^\dag= \mathds{1}_{BE}$, where $\mathds{1}_{BE}$ is the identity operator on $\mathscr{H}_{BE}$, it follows that \begin{align}
P_{A}^{(n)}- \left(P_{BE}^{(l)} V_{A\to BE} P_{A}^{(n)} \right)^\dag \left(P_{BE}^{(l)} V_{A\to BE} P_{A}^{(n)}\right) \geq 0
\end{align}
for all $l,n$. Hence, by Kraus' theorem, \cref{z:xqLpOGHC26c.} are completely
positive and trace-nonincreasing maps evaluated on inputs
$\lvert {\psi_n(a)}\rangle  \langle {\psi_n(a)}\rvert $.
Since \cref{z:55zQoE3Yx2Kr} holds for any completely positive, trace-nonincreasing map, we can apply it to our setup. This yields
\begin{align}
\frac{F\left(\rho^{(n,l)}_B(a)\right)}{F\left(\rho_A^{(n)}(a)\right)} +
\frac{F\left(\rho^{(n,l)}_E(b)\right)}{F\left(\rho_A^{(n)}(b)\right)}
\leq
1 + 2\sqrt{ 1 - \frac{ \bigl \langle { i[\tilde \opA_n,\tilde{\opB }_n] }\bigr \rangle ^2 }{4 \, {\tilde\sigma_{\opA,n}}^2{\tilde\sigma_{\opB ,n}}^2} }\ ,
\label{z:Qu6W-Cu1JXfQ}
\end{align}
recalling that the uncertainty relation also
applies to subnormalized positive operators, and
where
\begin{align}
F\left(\rho_A^{(n)}(a)\right)&= 4\, \tilde\sigma_{\opA,n}^2,\\%
F\left(\rho_A^{(n)}(b)\right)&= 4\, \tilde\sigma_{\opB ,n}^2,\\%
\tilde\sigma_{\opA,n}&:= \left( \bigl \langle { \tilde \opA_n \psi,\tilde \opA_n \psi }\bigr \rangle -  \bigl \langle { \psi,\tilde \opA_n \psi }\bigr \rangle ^2 \right)^{1/2},\\
\tilde\sigma_{\opB ,n}&:= \left( \bigl \langle { \tilde{\opB }_n \psi,\tilde{\opB }_n \psi }\bigr \rangle -  \bigl \langle { \psi,\tilde{\opB }_n \psi }\bigr \rangle ^2 \right)^{1/2}.
\end{align}
We can now take the limit $n\to \infty$ on both sides of \cref{z:Qu6W-Cu1JXfQ}. Due to property i), it follows  
\begin{align}
\frac{\lim_{n\to\infty} F\left(\rho^{(n,l)}_B(a)\right)}{F\left(\rho_A^{(\infty)}(a)\right)} +
\frac{\lim_{n\to\infty} F\left(\rho^{(n,l)}_E(b)\right)}{F\left(\rho_A^{(\infty)}(b)\right)}
\leq
1 + 2\sqrt{ 1 - \frac{ \bigl \langle { i[\tilde \opA,\tilde{\opB }] }\bigr \rangle ^2 }{4 \, {\tilde\sigma_{\opA}}^2{\tilde\sigma_{\opB }}^2} }\ ,
\label{z:nmMW9OwgBbPX}
\end{align}
where 
\begin{align}
F\left(\rho_A^{(\infty)}(a)\right)&:= 4\, \tilde\sigma_{\opA}^2,\\%
F\left(\rho_A^{(\infty)}(b)\right)&:= 4\, \tilde\sigma_{\opB }^2,\\%
\tilde\sigma_{\opA}&:= \left( \bigl \langle { \tilde \opA \psi,\tilde \opA \psi }\bigr \rangle -  \bigl \langle { \psi,\tilde \opA \psi }\bigr \rangle ^2 \right)^{1/2},\\
\tilde\sigma_{\opB }&:= \left( \bigl \langle { \tilde{\opB } \psi,\tilde{\opB } \psi }\bigr \rangle -  \bigl \langle { \psi,\tilde{\opB } \psi }\bigr \rangle ^2 \right)^{1/2}.
\end{align}
Observe that the quantities $\lim_{n\to\infty} F(\rho^{(n,l)}_B(a))$,
$\lim_{n\to\infty} F(\rho^{(n,l)}_E(b))$ cannot diverge, since it would
contradict the inequality (since the Fisher information is
nonnegative). %
This observation follows alternatively from applying the data-processing
  inequality \eqref{z:.Hy7vObZJO-E} to bound Bob's Fisher information in
  terms of Alice's, followed by talking the $n\to \infty$ limit. Similarly for
  Eve's Fisher information.
By direct calculation, we observe that the Fisher information $F$ of a state $\rho$ on a  $d$-dimensional Hilbert space, according to \cref{z:.eb-akuquBNd,z:d-w60nHacIQJ}, is given by
\begin{align}
F = \sum_{\substack{k,k'=1 \\ \text{s.t. } p_k+p_{k'}>0}}^d
\frac{p_k}{(p_k + p_{k'})^2}\,\left| \bigl \langle {k}\mkern 1.5mu\relax \big \vert \mkern 1.5mu\relax {\frac{d\rho}{da}}\mkern 1.5mu\relax \big \vert \mkern 1.5mu\relax {k'}\bigr \rangle  \right|^2 \ , \label{z:oWJ0dqqkNfon}
\end{align}
where $\rho = \sum_{k=1}^d p_k \lvert {k}\rangle \langle {k}\rvert $.
Hence 
\begin{align}
\lim_{n\to\infty} F\left(\rho^{(n,l)}_B(a)\right) = \lim_{n\to\infty} \sum_{\substack{k,k'=1 \\ \text{s.t. } p_k^{(n,l)}+p_{k'}^{(n,l)}>0}}^{d_B(l)}
\frac{p_k^{(n,l)}}{(p_k^{(n,l)} + p_{k'}^{(n,l)})^2}\,\left| \bigl \langle {k,n,l}\mkern 1.5mu\relax \big \vert \mkern 1.5mu\relax {\frac{d \rho^{(n,l)}_B(a) }{da}}\mkern 1.5mu\relax \big \vert \mkern 1.5mu\relax {k',n,l}\bigr \rangle  \right|^2 \ ,\label{z:ZKqtUEagDlCs}
\end{align}
where $d_B(l)$ is the dimension of Bob's reduced 
system, which is $l$ independent, and $\rho^{(n,l)}_B(a)= \sum_{k=1}^{d_B(l)} p_k^{(n,l)} \lvert {k,n,l}\rangle \langle {k,n,l}\rvert $. Observe that all terms in the summation must be finite in the limit, since they are all nonnegative and we are guaranteed that the r.h.s.\@ of \cref{z:ZKqtUEagDlCs} does not diverge.
Observe that for terms in the summation for which  $\lim_{n\to\infty}  p_k^{(n,l)}+p_{k'}^{(n,l)} > 0 $, the  summation can be interchanged with the limit. However, while for terms such that $p_k^{(n,l)}+p_{k'}^{(n,l)} > 0 $ for all $n$, but $\lim_{n\to\infty}  p_k^{(n,l)}+p_{k'}^{(n,l)} = 0 $, the summation and integration cannot be interchanged, the interchange of the limit and summation will result in the
lower bound
\begin{align}
\lim_{n\to\infty} F\left(\rho^{(n,l)}_B(a)\right) \geq   \sum_{\substack{k,k'=1 \\ \text{s.t. } p_k^{(\infty,l)}+p_{k'}^{(\infty,l)}>0}}^{d_B(l)}
\frac{p_k^{(\infty,l)}}{(p_k^{(\infty,l)} + p_{k'}^{(\infty,l)})^2}\,\left| \bigl \langle {k,\infty,l}\mkern 1.5mu\relax \big \vert \mkern 1.5mu\relax {\frac{d \rho^{(\infty,l)}_B(a) }{da}}\mkern 1.5mu\relax \big \vert \mkern 1.5mu\relax {k',\infty,l}\bigr \rangle  \right|^2 \ ,\label{z:Qwo1LVG1og6j}
\end{align}
where $\rho^{(\infty,l)}_B(a)= \sum_{k=1}^{d_B(l)} p_k^{(\infty,l)} \lvert {k,\infty,l}\rangle \langle {k,\infty,l}\rvert $, with
\begin{align}
\rho^{(\infty,l)}_B(a)&:= \lim_{n\to\infty}\operatorname{tr}_E\left[P_{BE}^{(l)} V_{A\to BE} P_{A}^{(n)} \left(    \lvert {\psi_n(a)}\rangle  \langle {\psi_n(a)}\rvert  \right) P_{A}^{(n)} V_{A\to BE}^\dag P_{BE}^{(l)} \right]\\
&= \operatorname{tr}_E\left[P_{BE}^{(l)} V_{A\to BE}  \left(    \lvert {\tilde \psi(a)}\rangle  \langle {\tilde\psi(a)}\rvert  \right)  V_{A\to BE}^\dag P_{BE}^{(l)} \right],
\end{align}
where $\lvert {\tilde \psi(a)}\rangle :=e^{-i \tilde \opA a }\lvert {\psi}\rangle $ and using properties i) and ii). Similarly, use properties i) and ii) again to obtain
\begin{align}
	\frac{d}{da}\rho^{(\infty,l)}_B(a)&:=\lim_{n\to \infty} \operatorname{tr}_E\left[P_{BE}^{(l)} V_{A\to BE} P_{A}^{(n)} \left(   i \lvert {\psi_n(a)}\rangle  \langle {\psi_n(a)}\rvert  \tilde \opA_n- i \tilde \opA_n \lvert {\psi_n(a)}\rangle  \langle {\psi_n(a)}\rvert  \right) P_{A}^{(n)} V_{A\to BE}^\dag P_{BE}^{(l)}  \right]\\
	&= \operatorname{tr}_E\left[P_{BE}^{(l)} V_{A\to BE} \left(   i \lvert {\tilde \psi(a)}\rangle  \langle {\tilde\psi(a)}\rvert  \tilde \opA- i \tilde \opA \lvert {\tilde\psi(a)}\rangle  \langle {\tilde\psi(a)}\rvert  \right) V_{A\to BE}^\dag P_{BE}^{(l)}  \right].
\end{align}
Likewise, we obtain the same expression for $\lim_{n\to\infty} F(\rho^{(n,l)}_E(b))$ that we have achieved for 
\begin{align}
\lim_{n\to\infty} F\left(\rho^{(n,l)}_B(a)\right), 
\end{align}
but interchanging $a\mapsto b$, $\tilde \opA\mapsto \tilde \opB $ and $\operatorname{tr}_E \mapsto \operatorname{tr}_B$.

Now that we have an expression for the bound which holds for bounded operators
$\tilde \opA$ and $\tilde \opB $, our next step is to move to unbounded operators. For
this task, we define sequences of bounded operators $( \opA_m)_m$ and $( \opB _m)_m$ as
\begin{align}
	\opA_m:= \frac{\opA}{1+\opA^2/m}, \quad 	\opB _m:= \frac{\opB }{1+\opB ^2/m}\ .
\end{align}

We now evaluate \cref{z:nmMW9OwgBbPX} choosing $\tilde \opA$ equal to $\tilde \opA_m$ and $\tilde \opA$ equal to $\tilde \opA_m$, followed by taking the limit $m\to\infty$ on both sides of the equation. Since by iii), it follows that $1/(1+\opA^2/m) \stackrel{s}{\to} \mathds{1}_{A}$ and $1/(1+\opB ^2/m) \stackrel{s}{\to} \mathds{1}_{A}$, where $\mathds{1}_{A}$ is the identity operator on $\mathscr{H}_A$, we have that $\opA_m \psi \to \opA\psi$ and $\opB _m\psi \to \opB \psi$  for all $\psi\in\mathcal{D}(\opA) \cap \mathcal{D}(\opB )$, and we find
\begin{align}
\frac{\lim_{m\to\infty}\lim_{n\to\infty} F\left(\rho^{(n,l)}_B(a)\right)}{F\left(\rho_A^{(\infty,\infty)}(a)\right)} +
\frac{\lim_{m\to\infty}\lim_{n\to\infty} F\left(\rho^{(n,l)}_E(b)\right)}{F\left(\rho_A^{(\infty,\infty)}(b)\right)}
\leq
1 + 2\sqrt{ 1 - \frac{ \bigl \langle { i[ \opA, \opB ] }\bigr \rangle ^2 }{4 \, {\sigma_{\opA}}^2{\sigma_{\opB }}^2} }\ ,
\label{z:J5vP1snzrlMV}
\end{align}
where 
\begin{align}
F\left(\rho_A^{(\infty,\infty)}(a)\right)&:= 4\,\sigma_{\opA}^2,\\%
F\left(\rho_A^{(\infty,\infty)}(b)\right)&:= 4\,\sigma_{{O}}^2,\\%
\sigma_{\opA}&:= \left( \bigl \langle { \opA \psi, \opA \psi }\bigr \rangle -  \bigl \langle { \psi, \opA \psi }\bigr \rangle ^2 \right)^{1/2},\\
\sigma_{\opB }&:= \left( \bigl \langle {  \opB  \psi, \opB  \psi }\bigr \rangle -  \bigl \langle { \psi, \opB  \psi }\bigr \rangle ^2 \right)^{1/2}.
\end{align}
The r.h.s.\@ of this inequality is now of the form in the corollary statement. We continue with the l.h.s. First observe that
\begin{align}
&\lim_{m\to\infty}\lim_{n\to\infty} F\left(\rho^{(n,l)}_B(a)\right) \geq  \\
& \sum_{\substack{k,k'=1 \\ \text{s.t. } p_k^{(\infty,\infty,l)}+p_{k'}^{(\infty,\infty,l)}>0}}^{d_B(l)}
\frac{p_k^{(\infty,\infty,l)}}{(p_k^{(\infty,\infty,l)} + p_{k'}^{(\infty,\infty,l)})^2}\,\left| \bigl \langle {k,\infty,\infty,l}\mkern 1.5mu\relax \big \vert \mkern 1.5mu\relax {\frac{d \rho^{(\infty,\infty,l)}_B(a) }{da}}\mkern 1.5mu\relax \big \vert \mkern 1.5mu\relax {k',\infty,\infty,l}\bigr \rangle  \right|^2 \ ,\label{z:V4tfZhJI2Thz}
\end{align}
where $\rho^{(\infty,\infty,l)}_B(a)= \sum_{k=1}^{d_B(l)} p_k^{(\infty,\infty,l)} \lvert {k,\infty,\infty,l}\rangle \langle {k,\infty,\infty,l}\rvert $, with
\begin{align}
\rho^{(\infty,\infty,l)}_B(a)&:= \lim_{m\to\infty}\operatorname{tr}_E\left[P_{BE}^{(l)} V_{A\to BE} \left(    \lvert {\tilde\psi_m(a)}\rangle  \langle {\tilde\psi_m(a)}\rvert  \right)  V_{A\to BE}^\dag P_{BE}^{(l)} \right],\\
\frac{d}{da}\rho^{(\infty,\infty,l)}_B(a)&:=\lim_{m\to\infty} \operatorname{tr}_E\left[P_{BE}^{(l)} V_{A\to BE} \left(   i \lvert {\tilde\psi_m(a)}\rangle  \langle {\tilde\psi_m(a)}\rvert  \opA_m- i \opA_m \lvert {\tilde\psi_m(a)}\rangle  \langle {\tilde\psi_m(a)}\rvert  \right) V_{A\to BE}^\dag P_{BE}^{(l)}  \right],
\end{align}
and $\lvert {\tilde\psi_m(a)}\rangle = e^{-i a \opA_m} \lvert {\psi}\rangle $. To see that \cref{z:V4tfZhJI2Thz} holds, observe that the same reasoning to why the limit and summation could be interchanged going from \cref{z:oWJ0dqqkNfon} to \cref{z:ZKqtUEagDlCs}, holds for the limit $m\to \infty$ also.  Now define $f_m(x)= e^{-i a x/(1+x^2/m)}$ and $f(x)=e^{-i a x}$. Assumptions in iii) hold, thus $e^{-i a \opA_m} \stackrel{s}{\to}  e^{-i a \opA}$, hence using ii) $1/(1+\opA^2/m) e^{-i a \opA_m} \stackrel{s}{\to} e^{-i a \opA}$. Furthermore, since, by definition $e^{-ia \opA}\lvert {\psi}\rangle \in\mathcal{D}(\opA)$, we have $H\lvert {\psi(a)}\rangle \in\mathscr{H}_A$. Taking all these things into account, we conclude that  
\begin{align}
\rho^{(\infty,\infty,l)}_B(a)&= \operatorname{tr}_E\left[P_{BE}^{(l)} V_{A\to BE} \left(    \lvert {\psi(a)}\rangle  \langle {\psi(a)}\rvert  \right)  V_{A\to BE}^\dag P_{BE}^{(l)} \right],\label{z:iMYddbx33T1N}\\
\frac{d}{da}\rho^{(\infty,\infty,l)}_B(a)&= \operatorname{tr}_E\left[P_{BE}^{(l)} V_{A\to BE} \left(   i \lvert {\psi(a)}\rangle  \langle {\psi(a)}\rvert  \opA- i \opA \lvert {\psi(a)}\rangle  \langle {\psi(a)}\rvert  \right) V_{A\to BE}^\dag P_{BE}^{(l)}  \right].\label{z:ClLvM3NN4a.J}
\end{align}
Likewise, we obtain the same expression for $\lim_{m\to\infty}\lim_{n\to\infty} F(\rho^{(n,l)}_E(b))$ that we {have} achieved for $\lim_{m\to\infty}\lim_{n\to\infty} F(\rho^{(n,l)}_B(a))$, but interchanging $a\mapsto b$, $ \opA\mapsto \opB $ and $\operatorname{tr}_E \mapsto \operatorname{tr}_B$.
Lastly, by comparing the r.h.s.\ of \cref{z:V4tfZhJI2Thz} with the 
r.h.s.\ of \cref{z:oWJ0dqqkNfon}, one {sees} that  $\lim_{m\to\infty}\lim_{n\to\infty} F(\rho^{(n,l)}_B(a))$ is given by evaluating the Fisher information for $\rho^{(\infty,\infty,l)}_B(a)$ (defined by \cref{z:iMYddbx33T1N}) with derivative $\frac{d}{da}\rho^{(\infty,\infty,l)}_B(a)$ (defined by \cref{z:iMYddbx33T1N}) according to \cref{z:.eb-akuquBNd,z:d-w60nHacIQJ}. The same observation holds for Eve's Fisher information. Hence to conclude the proof, we take $\liminf_{l\to\infty}$ on both sides of the equation.
\end{proof}

\subsection{Time-energy uncertainty equality in infinite dimensions}
\label{z:Klv0VTxb9iim}

In fact, building on the previous result, we get the following statement in the case where the commutator in the previous theorem vanishes. This can be viewed as a generalization of 
\cref{z:T-KNuYhGRM7I}
to the unbounded operator case.

\begin{theorem}[Time-energy uncertainty relation for infinite-dimensional systems]
  \label{z:DoGpU2OXKOtL}  
  Let $\lvert {\psi}\rangle $ be a state vector in a separable Hilbert space $\mathscr{H}_A$ of
  possibly infinite dimensions, 
  let $H,X$ be self-adjoint 
  operators (possibly unbounded) with domains $\mathcal{D}(H)$
  and $\mathcal{D}(X)$, respectively,
  so that
$  \lvert {\psi}\rangle \in\mathcal{D}(H)\cap \mathcal{D}(X)$.
  Define $\sigma_H := [ \langle H \psi, H \psi\rangle - \langle\psi, H \psi\rangle^2]^{1/2}$, 
  which is 
  finite due to
  $\lvert {\psi}\rangle \in\mathcal{D}(H)$,
  and $P_\rho^\perp := \mathds{1}- P_\rho$, 
  where $P_\rho$ denotes the
projector onto the support 
of $\rho$. Define analogously as before
  \begin{align}\label{z:4OyWcr5srJFP}
  T := t_0 - \frac{i [ H, \psi ]}{2\sigma_H^2}
  + P_\psi^\perp X P_\psi^\perp\ ,
\end{align}
where $X$ 
captures the 
freedom left when defining the optimal local time-sensing observable, and
consider for real $t,\eta$
and $t_0,\eta_0$
the two-parameter family 
$\lvert {\psi(t,\eta)}\rangle $ with
$\lvert {\psi(t_0,\eta_0)}\rangle =\lvert {\psi}\rangle $,
again $\psi=\lvert {\psi}\rangle 
\langle {\psi}\rvert $
and
\begin{align}
  \lvert {\psi(t,\eta)}\rangle  = \exp\bigl\{-i[ (t-t_0) H - (\eta-\eta_0) T]\bigr\}\,\lvert {\psi}\rangle \ .
\end{align}
  Let, as in 
  \cref{z:Tja8FC4elFJW},
  $V_{A\to BE}$ be any
  isometry $\mathscr{H}_A\to\mathscr{H}_B\otimes\mathscr{H}_E$, where the Hilbert spaces $\mathscr{H}_B, \mathscr{H}_E$
  associated with Bob and Eve are also separable and possibly of infinite
  dimensions, and define
  $F_A,F_B, F_E$ analogously as in
  \cref{z:Tja8FC4elFJW}.
Then the uncertainty principle
  \begin{align}
  \frac{F_B(t)}{F_A(t)} 
  +
  \frac{F_E(\eta)}{F_A(\eta)}
  \leq 1\ 
\end{align}
  holds.
\end{theorem}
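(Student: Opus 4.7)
The plan is to apply Theorem \ref{z:Tja8FC4elFJW} (the generalised bipartite Fisher information uncertainty relation for infinite-dimensional systems) to the specific choice of generators $\opA = H$ and $\opB = -T$, with local parameters $a = t - t_0$ and $b = \eta - \eta_0$. Under this identification, the RHS of \eqref{z:L01BMwGDb6HM} becomes $1 + 2\sqrt{1 - \avg{i[H,T]}_\psi^2/(4\sigma_H^2\sigma_T^2)}$, and the desired conclusion reduces to showing that the quantity under the square root vanishes, exactly as in the finite-dimensional reduction of Theorem \ref{z:incm1nGt-mIE} to Theorem \ref{z:T-KNuYhGRM7I}.

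\smallskip

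The computational heart of the argument is therefore a pair of expectation-value calculations in $\ket\psi$. I decompose $T$ as $T = t_0\Ident + T_1 + T_2$ with
\begin{align*}
  T_1 &= -\frac{i[H,\psi]}{2\sigma_H^2}, & T_2 &= P_\psi^\perp X P_\psi^\perp.
\end{align*}
Note that $T_1$ is a bounded rank-$2$ operator, because the vector $H\ket\psi$ belongs to $\Hs_A$ by the hypothesis $\ket\psi\in\mathcal{D}(H)$. Since $P_\psi^\perp\ket\psi = 0$, we have $T_2\ket\psi = 0$ and $\bra\psi T_2 = 0$, so $\ket\psi\in\mathcal{D}(T)$ follows immediately from $\ket\psi\in\mathcal{D}(X)$. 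Using these vanishing identities and the standard manipulation $-i[\,i[H,\psi],\psi] = \{H - \avg{H}, \psi\}$ from \eqref{z:v.5MaZHkqw71}, one finds that $\avg{T}_\psi = t_0$, that the cross terms in $\avg{T^2}_\psi$ between $T_1$ and $T_2$ vanish, that $\avg{T_2^2}_\psi = 0$, and that $\avg{T_1^2}_\psi = 1/(4\sigma_H^2)$. Hence $\sigma_T^2 = 1/(4\sigma_H^2)$. Similarly, $\avg{i[H,T]}_\psi$ receives no contribution from the $t_0\Ident$ term and none from $[H,T_2]$ (whose expectation in $\psi$ vanishes because $\bra\psi P_\psi^\perp = 0 = P_\psi^\perp\ket\psi$), while the contribution from $[H, T_1]$ evaluates, by \eqref{z:XrleY6eW0oFd}, to $1$. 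Consequently $\avg{i[H,T]}_\psi^2 = 1 = 4\sigma_H^2\sigma_T^2$.

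\smallskip

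To invoke Theorem \ref{z:Tja8FC4elFJW} I also need to verify that the Fisher-information quantities appearing in the two theorems are consistent. The one-parameter families $\ket{\psi(a)} = e^{-iaH}\ket\psi$ and $\ket{\psi(b)} = e^{+ibT}\ket\psi$ required by Theorem \ref{z:Tja8FC4elFJW} both leave the relevant domains invariant by Stone's theorem, and their derivatives at $a=b=0$ coincide with the partial derivatives of the combined evolution $\ket{\psi(t,\eta)}$ at $(t_0,\eta_0)$ given by \eqref{z:ff7jHgn75DW4} with $t-t_0=a$, $\eta-\eta_0=b$. Because the definitions of $F_B,F_E,F_A$ via finite-rank projections and liminf depend on the state and the state-derivative only through $\ket{\psi(t_0,\eta_0)}$ and $\partial_t\psi, \partial_\eta\psi$ at $(t_0,\eta_0)$, the Fisher information quantities defined in the present theorem coincide with those of Theorem \ref{z:Tja8FC4elFJW} for these choices of $\opA,\opB$. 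With the bracket calculation above, Theorem \ref{z:Tja8FC4elFJW} then yields
\begin{align*}
  \frac{F_B(t)}{F_A(t)} + \frac{F_E(\eta)}{F_A(\eta)} \leq 1 + 2\sqrt{1 - 1} = 1,
\end{align*}
as claimed.

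\smallskip

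The main obstacle I anticipate is the rigorous bookkeeping of domains in the unbounded case: $H$ and $T$ do not in general commute, and $T$ is defined as a sum in which $T_2 = P_\psi^\perp X P_\psi^\perp$ is potentially unbounded. One must check that the manipulations above remain legitimate by pairing unbounded operators only against vectors inside their domains, in particular writing $\bra\psi H T_2\ket\psi = \braket{H\psi}{T_2\psi} = 0$ using $T_2\ket\psi = 0$ (rather than first multiplying out as operators), and handling the expectation $\avg{i[H,T_1]}_\psi$ via the explicit formula \eqref{z:XrleY6eW0oFd}. Once these domain issues are settled, the reduction to Theorem \ref{z:Tja8FC4elFJW} is straightforward.
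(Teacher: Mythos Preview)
Your approach is correct for the inequality as stated, and it is genuinely different from the paper's. The paper does not invoke Theorem~\ref{z:Tja8FC4elFJW} at all; instead it reruns the truncation machinery from scratch, defining projected operators $\tilde H_n = P_A^{(n)}\tilde H P_A^{(n)}$ and $\tilde T_n = P_A^{(n)}\tilde T P_A^{(n)}$, applying the finite-dimensional \emph{equality} form of the uncertainty relation (the SDP proof of Theorem~\ref{z:T-KNuYhGRM7I}) at each level $(n,l)$, and then passing to the limit. What that buys is the stronger equality statement mentioned immediately after the theorem; your route through Theorem~\ref{z:Tja8FC4elFJW} can only ever produce $\leq 1$, since that theorem is already an inequality. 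Your route, on the other hand, is considerably more economical and makes transparent that the time--energy case is literally the saturation point of the Robertson-type bound.

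One point you should tighten: Theorem~\ref{z:Tja8FC4elFJW} requires $\opB$ to be self-adjoint so that the unitaries $e^{-ib\opB}$ and the bounded approximants $\opB_m = \opB/(1+\opB^2/m)$ are defined via functional calculus. For unbounded $X$, the compression $T_2 = P_\psi^\perp X P_\psi^\perp$ is symmetric but not obviously self-adjoint, so $T$ itself may fail the hypothesis. The clean fix is already implicit in your calculation: since $T_2\ket\psi = 0$ and $\bra\psi T_2 = 0$, one has $[T_2,\psi]=0$, and hence $\partial_\eta\psi$, $\sigma_T^2$, and $\langle i[H,T]\rangle_\psi$ are all unchanged if you replace $T$ by the bounded self-adjoint operator $t_0\Ident + T_1$. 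Invoking Theorem~\ref{z:Tja8FC4elFJW} with $\opB = -(t_0\Ident + T_1)$ then satisfies all hypotheses directly and yields the same Fisher-information quantities. State this substitution explicitly rather than leaving it buried in the vanishing-of-$T_2$ observations.
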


Indeed, even in the infinite-dimensional setting for unbounded operators,
the uncertainty principle can be attained with equality, 
so that 
\begin{align}
  \frac{F_B(t)}{F_A(t)} 
  +
  \frac{F_E(\eta)}{F_A(\eta)}
  = 1 
\end{align}
still holds true.  
\begin{proof}[*z:DoGpU2OXKOtL]
The proof follows the same line of thought as 
that of \cref{z:Tja8FC4elFJW}, with some differences. To start with, consider the bounded operators $\tilde H$ 
and $\tilde X$ on $\mathscr{H}_A$ 
and define 
for a positive integer $n$ 
the truncated operators 
$\tilde H_n$ as
 \begin{align}
 \tilde H_n := P_{A}^{(n)} \tilde H P_{A}^{(n)}.
 \end{align}
 and
 \begin{align}
 \tilde T_n := P_{A}^{(n)} \tilde T P_{A}^{(n)},
 \end{align} 
 with $\tilde T$
 being defined as in
 Eq.\ (\ref{z:4OyWcr5srJFP}) with
 $T$ being replaced by
 $\tilde T$ and $X$
by $\tilde X$.
As above, one can define the
time-evolved states as
\begin{align}
  \lvert {\psi_n(t,\eta)}\rangle  = \exp\bigl\{-i[ (t-t_0) \tilde H_n - (\eta-\eta_0) \tilde T_n]\bigr\}\,\lvert {\psi}\rangle \ ,
\end{align}
with 
$\lvert {\psi_n}\rangle :=
\lvert {\psi_n(t_0,\eta_0)}\rangle $.
In the same way as before, for positive integers $l$ (and $n$), 
we can consider the sequence of 
positive operators $(\rho^{(n,l)}_B)_{n,l}$ on $\mathscr{H}_B$ defined as
\begin{align}
	\rho^{(n,l)}_B(t,\eta)&:=\operatorname{tr}_E\left[P_{BE}^{(l)} V_{A\to BE} P_{A}^{(n)} \left(    \lvert {\psi_n(t,\eta)}\rangle  
	\langle {\psi_n(t,\eta)}\rvert  \right) P_{A}^{(n)} V_{A\to BE}^\dag P_{BE}^{(l)} \right], 
\end{align}	
and
\begin{align}
   \rho^{(n,l)}_E(t,\eta)&:=\operatorname{tr}_B\left[P_{BE}^{(l)} V_{A\to BE} P_{A}^{(n)} \left(   \lvert {\psi_n(t,\eta)}\rangle  \langle {\psi_n(t,\eta)}\rvert   \right) P_{A}^{(n)} V_{A\to BE}^\dag P_{BE}^{(l)}  \right].
 \end{align}	  
Using these quantities, and proceeding as in the proof of \cref{z:Tja8FC4elFJW}, 
since this is a valid finite-dimensional setting in which the above proof in terms of a semidefinite program holds true,
one has 
\begin{align}
  \frac{\FIqty{Bob}{t}}{4
  \langle \tilde H \psi_n, \tilde H_n \psi_n\rangle - \langle\psi_n, \tilde H_n \psi_n\rangle^2} +
  \langle \tilde H_n \psi_n, \tilde H_n \psi_n\rangle - \langle\psi_n, \tilde H_n \psi_n\rangle^2
  \,\FIqty{Eve}{\eta}
  = 1\ ,
\end{align}
with equality, since 
$  \lvert {\psi}\rangle \in\mathcal{D}(H)\cap \mathcal{D}(X)$ and hence the state vector is in the domains of $H$ and $X$. Here,
\begin{align}
  \FIqty{Bob}{t} := \Ftwo{\rho_B^{(n,l)}(t_0)}{\partial_t \rho_B^{(n,l)}(t_0)}\ .
\end{align}
with 
\begin{align}
\rho^{(n,l)}_B(.):= 
\rho^{(n,l)}_B(.,\eta_0),
\end{align}
and
$\FIqty{Eve}{\eta}$
defined analogously based on
$\rho^{(n,l)}_E(\eta)$
with
$\rho^{(n,l)}_E(.):
=\rho^{(n,l)}_E(t_0,.)$.
The limit to the infinite-dimensional
setting involving the suitable
limit of $n\rightarrow\infty$
and $l\rightarrow\infty$
can be performed as in 
\cref{z:Tja8FC4elFJW}, while maintaining equality for each $n$
and $l$. %
\end{proof}

\section{Calculations for the case of continuous Lindbladian noise}
\label{z:nnVBeRY6bPwI}

\subsection{Sensing an unknown parameter in the Hamiltonian}

Consider a probe initialized in the state {vector} $\lvert {\psi_{\mathrm{init}}}\rangle $ and
subject to the Lindblad dynamics
\begin{align}
  \dot\rho = \mathcal{L}_{\mathrm{tot}}^{(\omega)}(\rho)\ ,
\end{align}
with
\begin{gather}
  \begin{aligned}
    \mathcal{L}_{\mathrm{tot}}^{(\omega)}
    &= \mathcal{L}_{\mathrm{sig}}^{(\omega)} + \mathcal{L}_{\mathrm{rest}}\ ;
    &\qquad
    \mathcal{L}_{\mathrm{sig}}^{(\omega)}(\rho)
    &= -i[\omega G, \rho]\ ;
  \end{aligned}
      \nonumber\\
  \mathcal{L}_{\mathrm{rest}}(\rho)\
    = -i[H_{\mathrm{rest}},\rho]
      + \sum_j\mathopen{}\left[ L_j\rho L_j^\dagger - \frac12\bigl\{L_j^\dagger L_j, \rho\bigr\}\right]\mathclose{}\ .
    \label{z:Vps6oPLlgW0g}
\end{gather}
Here, $\omega$ is the unknown parameter to be estimated.  The overall evolution
up to some total time $T$ is given by
\begin{align}
  \mathcal{E}_T^{(\omega)}
  = {e}^{T[\mathcal{L}_{\mathrm{sig}}^{(\omega)} + \mathcal{L}_{\mathrm{rest}}]}\ .
\end{align}
As we did earlier, we can decompose the overall evolution into the unitary
evolution driven by the signal (which depends on the unknown parameter
$\omega$), followed by an effective instantaneous {noisy} channel
$\mathcal{N}_{T,\omega}$:
\begin{align}
    \mathcal{E}_T^{(\omega)}
  &= \mathcal{N}_{T,\omega}\, {e}^{T\mathcal{L}_{\mathrm{sig}}^{(\omega)}}\ ,
\end{align}
where $\mathcal{N}_{T,\omega}$ is given by
\begin{align}
    \mathcal{N}_{T,\omega}
  &= \mathcal{E}_T^{(\omega)}\, {e}^{-T\mathcal{L}_{\mathrm{sig}}^{(\omega)}}\ .
\end{align}
We are interested in the sensitivity of the probe to the parameter $\omega$,
locally around $\omega_0$, after letting the probe evolve for some fixed time
$T$.  The sensitivity is given in terms of the Fisher information
\begin{align}
  \FIqtyp{T}{\omega}(\omega_0)
  &= \Ftwo{\rho_{T,\omega_0}}{(\partial_\omega \rho_{T,\omega})\,(\omega_0)}\ .
\end{align}
Defining the (fictitious) family of states
\begin{align}
  \begin{split}
  \psi_{T,\omega}
  &= {e}^{-iT\omega G}\,\psi_{\mathrm{init}}\,{e}^{iT\omega G}\ ;
  \\
  \partial_\omega \psi_{T,\omega}
  &= -iT[G, \psi_{T,\omega} ]\ ,
\end{split}
\end{align}
we may write
\begin{align}
  \FIqtyp{T}{\omega}(\omega_0)
  &= \Ftwo{\rho_{T}}{%
    \mathcal{N}_{T}\bigl( \partial_\omega \psi_{T,\omega} \bigr)
    + (\partial_\omega \mathcal{N}_{T,\omega})\bigl(\psi_{T}\bigr) }\ ,
    \label{z:iVeeWRE7OYTk}
\end{align}
where we omit the subscript $(\cdot)_{\omega_0}$ on all objects which are
ultimately evaluated at $\omega=\omega_0$.

Again as earlier we assume that we can neglect the second term in the derivative
in~\eqref{z:iVeeWRE7OYTk}, and carry on with
the approximation
\begin{align}
  \FIqtyp{T}{\omega}
  \approx
  \Ftwo{\mathcal{N}_{T,\omega_0}\bigl(\psi_{T,\omega_0}\bigr)}{
  \mathcal{N}_{T,\omega_0}\bigl(\partial_\omega \psi_{T,\omega}\bigr)
  }
  =:
  \FIqtyp[unit.]{T}{\omega}\ .
  \label{z:bu65yR9LrywX}
\end{align}

As above we are now in the setting of our main uncertainty relation; we can
identify the above quantity with $\FIqty{Bob}{t}$ in
\cref{z:T-KNuYhGRM7I}, where now the relevant evolution
generator is $TG$.  \Cref{z:T-KNuYhGRM7I} then implies
that
\begin{align}
  \FIqtyp[unit.]{T}{\omega}
  &= 4T^2\sigma_{G}^2 - \FIlossp[unit.]{T}{\omega}\ ;
  \nonumber\\
  \FIlossp[unit.]{T}{\omega}
  &= T^2 \Ftwo{\widehat{\mathcal{N}}_{T,\omega_0}\bigl(\psi_{T,\omega_0}\bigr)}{
  \widehat{\mathcal{N}}_{T,\omega_0}\bigl(\bigl\{ \bar{G}, \psi_{T,\omega_0} \bigr\}\bigr)
  }\ ,
\end{align}
where $\widehat{\mathcal{N}}_{T,\omega_0}$ is a channel that is complementary to
$\mathcal{N}_{T,\omega_0}$, and where $\bar{G} = G - \langle {G}\rangle $ with
$\langle {G}\rangle  = \operatorname{tr}[ G\,\psi_{T,\omega_0} ]$.
As earlier, the complementary channel can be written
$\widehat{\mathcal{N}}_{T,\omega_0} =
\widehat{\mathcal{E}}_{T,\omega_0}\,{e}^{-T\mathcal{L}_{\mathrm{sig}}^{(\omega)}}$.

The absolute error $\delta$ in the
approximation~\eqref{z:bu65yR9LrywX} can be bounded as earlier using
\cref{z:Y81L4FyklNd5} in
\cref{z:KH4B5FtzQ1kE} as
\begin{align}
  \lvert {\delta}\rvert 
  \leq \Ftwo{\rho}{(\partial_\omega\mathcal{N}_{T,\omega})(\psi_{T,\omega_0})}
  + \bigl[
  \Ftwo{\rho}{(\partial_\omega\mathcal{N}_{T,\omega})(\psi_{T,\omega_0})}
  \, \FIqtyp[unit.]{T}{\omega}
  \bigr]^{1/2}\ .
  \label{z:kTvdT8KoO1n6}
\end{align}
Similar arguments to those presented earlier apply when computing
$\partial_\omega \mathcal{N}_{T,\omega}$ in order to bound $\delta$; we have
\begin{align}
  \bigl(\partial_\omega \mathcal{N}_{T,\omega}\bigr) (\psi_{T,\omega})
  = \partial_\omega \rho - \mathcal{E}_{T}^{(\omega)}(-iT[G, \psi_{0}])\ .
\end{align}
Any numerical or analytical upper bound on
$\Ftwo{\rho}{(\partial_\omega\mathcal{N}_{T,\omega})(\psi_{T,\omega_0})}$ then
directly gives an upper bound to $\lvert {\delta}\rvert $
in~\eqref{z:kTvdT8KoO1n6}.

\subsection{Example: continuous dephasing noise along the $Z$ axis}
A qubit is initialized in the state vector
\begin{align}
  \lvert {\psi_{\mathrm{init}}}\rangle  = \lvert {+}\rangle 
  = \frac1{\sqrt2}\bigl[ \lvert {\uparrow }\rangle + \lvert {\downarrow }\rangle \bigr]\ ,
\end{align}
and evolves according to the Hamiltonian $H = \omega
{Z}/2$.
Suppose that the qubit is subject to continuous dephasing along the $Z$ axis.
This noise is represented by the Lindbladian jump operators
\begin{align}
  L_0 &= \sqrt\gamma\lvert {\uparrow}\rangle \mkern -1.8mu\relax \langle{\uparrow}\rvert \ ,
  &
    L_1 &= \sqrt\gamma\lvert {\downarrow}\rangle \mkern -1.8mu\relax \langle{\downarrow}\rvert \ .
\end{align}
In vectorized operator notation (same conventions as in the appendices of our
work, i.e., row-major convention), we have
\begin{gather*}
  \mathcal{L}_1
  = \sum \mathopen{}\left[ L_j\otimes L_j^T -
  \frac12\bigl[L_j^\dagger L_j\otimes\mathds{1}+ \mathds{1}\otimes(L_j^\dagger L_j)^T\bigr] \right]\mathclose{}
  = \begin{bmatrix} 0 &&& \\ & -\gamma && \\ && -\gamma & \\ &&& 0\end{bmatrix}\ ;
 \\
 \mathcal{L}_0
 = (\ldots) = \begin{bmatrix} 0 &&& \\ & -i\omega && \\ && i\omega & \\ &&& 0\end{bmatrix}\ ,
\qquad
  \mathcal{E}_t
  = {e}^{t(\mathcal{L}_0+\mathcal{L}_1)}
  = \begin{bmatrix}
    1 &&& \\ & {e}^{-\gamma t-it\omega} && \\ && {e}^{-\gamma t+it\omega} & \\ &&& 1
  \end{bmatrix}
  \ .
\end{gather*}
The full evolution map, represented as an operator in terms of matrix elements
$\rho_{ij} = \langle {i}\mkern 1.5mu\relax \vert \mkern 1.5mu\relax {\rho}\mkern 1.5mu\relax \vert \mkern 1.5mu\relax {j}\rangle $, is
\begin{align}
  \mathcal{E}_t(\rho)
  &= \begin{bmatrix}
    \rho_{00}  &  \rho_{01}{e}^{-it\omega -\gamma t}\\
    \rho_{10}{e}^{it\omega-\gamma t} & \rho_{11}
  \end{bmatrix}\ .
\end{align}
The next steps for this example are:
(a)~a direct computation of Bob's sensitivity;
(b)~a calculation of Eve's sensitivity to energy via our effective picture; and
(c)~an assessment of the error made in the
  approximation~\eqref{z:svEq8bs.6Q93}.

\paragraph{Direct computation of the sensitivity of the noisy probe.}
At a time $t$, the state is
\begin{align}
  \rho(t)
  &= \frac12 \begin{bmatrix}
    1 & {e}^{-it\omega-\gamma t} \\ {e}^{it\omega-\gamma t} & 1
  \end{bmatrix}
  \nonumber\\
  &= U_t \begin{bmatrix}
    1 & {e}^{-\gamma t} \\
    {e}^{-\gamma t} & 1
  \end{bmatrix} U_t^\dagger
  = U_t \frac{\mathds{1}+ {e}^{-\gamma t}
  {X}}{2} U_t^\dagger
    \nonumber\\
  &= \frac{1+{e}^{-\gamma t}}{2} U_t\lvert {+}\rangle \mkern -1.8mu\relax \langle{+}\rvert U_t^\dagger +
    \frac{1-{e}^{-\gamma t}}{2} U_t\lvert {-}\rangle \mkern -1.8mu\relax \langle{-}\rvert U_t^\dagger\ ,
\label{z:IfGEjMeWAQGx}
\end{align}
where we use the shorthand $U_t = {e}^{-iHt}$.  The last expression
in~\eqref{z:IfGEjMeWAQGx} provides a diagonal form for $\rho$
which will serve in the calculation of the Fisher information.  The derivative
of the state is
\begin{align}
  \mathcal{L}_{\mathrm{tot}}[\rho(t)]
  &= \dot\rho(t) = \frac12 \begin{bmatrix}
    0 & (-i\omega-\gamma){e}^{-it\omega-\gamma t} \\
    (i\omega-\gamma){e}^{it\omega-\gamma t} & 0
  \end{bmatrix}
    \nonumber\\
  &= \frac{{e}^{-\gamma t}}{2}
    \mathopen{}\left[ -\gamma\, U_t 
    {X}
    U_t^\dagger + \omega\, U_t 
    {Y}
    U_t^\dagger\right]\mathclose{}\ ,
\end{align}
noting that
$U_t %
{X}
U_t^\dagger = %
\begin{bsmallmatrix}0 & {e}^{-it\omega}\\{e}^{it\omega} & 0\end{bsmallmatrix}$ and
$U_t %
{Y}
U_t^\dagger = \begin{bsmallmatrix}0 & -i{e}^{-it\omega}\\
  i{e}^{it\omega} & 0\end{bsmallmatrix}$.  We can interpret this derivative in terms
of two different dynamics: One $\propto \omega\, U_t %
{Y} 
U_t^\dagger$, which drives
the rotation around the Bloch sphere, and one
$\propto -\gamma U_t
{X}
U_t^\dagger$, which drives decoherence
(\cref{z:kTNaoRKyf8rO}).
\begin{figure}
  \centering
  \includegraphics{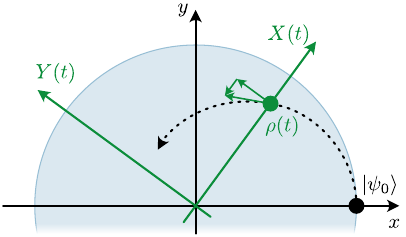}
  \caption{\label{z:kTNaoRKyf8rO} Top view of the Bloch sphere
    for a single qubit prepared in the $+X$ eigenstate, evolving under the
    Hamiltonian 
    {$H=\omega Z/2$ }
    and subject to continuous dephasing along the
    $Z$ axis.  The derivative of the state can be decomposed into a
    ``longitudinal part'' along $\sigma_Y(t)$ associated with the Hamiltonian
    dynamics, and a ``radial part'' along
    {$-X(t)$}
    associated with the
    noise terms.  The assumption that enables the mapping from the Lindblad
    setting to our bipartite uncertainty relation setting is that the noise
    component (``radial'' component) contributes negligibly to the overall time
    sensitivity of the clock.}
\end{figure}
The matrix elements of the derivative in
the eigenbasis $\{ U_t\lvert {\pm }\rangle \}$ of $\rho$ are
\begin{align}
  \langle {+}\rvert U_t^\dagger\, \dot\rho\, U_t\lvert {+}\rangle 
  &= -\gamma\frac{{e}^{-\gamma t}}{2}\ ,
  &
  \langle {+}\rvert U_t^\dagger\, \dot\rho\, U_t\lvert {-}\rangle 
  &= i\omega \frac{{e}^{-\gamma t}}{2}\ ,
    \nonumber\\
  \langle {-}\rvert U_t^\dagger\, \dot\rho\, U_t\lvert {+}\rangle 
  &= -i\omega \frac{{e}^{-\gamma t}}{2}\ ,
  &
  \langle {-}\rvert U_t^\dagger\, \dot\rho\, U_t\lvert {-}\rangle 
  &= \gamma\frac{{e}^{-\gamma t}}{2}\ .
\end{align}
where
$\langle {+}\mkern 1.5mu\relax \vert \mkern 1.5mu\relax {
{Y}}\mkern 1.5mu\relax \vert \mkern 1.5mu\relax {-}\rangle  = \langle {+}\mkern 1.5mu\relax \vert \mkern 1.5mu\relax {
{YZ}}\mkern 1.5mu\relax \vert \mkern 1.5mu\relax {+}\rangle  =
i\langle {+}\mkern 1.5mu\relax \vert \mkern 1.5mu\relax {
{X}
}\mkern 1.5mu\relax \vert \mkern 1.5mu\relax {+}\rangle =i$.  Now we compute the Fisher information using
\cref{z:r7GWSaRrLfNh} as
\begin{align}
  \FIqty{clock}{t}
  &=
  \Ftwo{\rho(t_0)}{\dot\rho(t_0)}
    \nonumber\\
  &=
  \frac2{1+{e}^{-\gamma t_0}}\, \mathopen{}\left \lvert { \gamma \frac{{e}^{-\gamma t_0}}{2} }\right \rvert \mathclose{}^2
  +
  2%
  \mathopen{}\left \lvert { i\omega \frac{{e}^{-\gamma t_0}}{2} }\right \rvert \mathclose{}^2
  +
  2%
  \mathopen{}\left \lvert { i\omega \frac{{e}^{-\gamma t_0}}{2} }\right \rvert \mathclose{}^2
  +
  \frac2{1-{e}^{-\gamma t_0}}\, \mathopen{}\left \lvert { \gamma \frac{{e}^{-\gamma t_0}}{2} }\right \rvert \mathclose{}^2
  \nonumber\\
  &= \omega^2{e}^{-2\gamma t_0}
    + \gamma^2\,\frac{2{e}^{-2\gamma t_0}}{1-{e}^{-2\gamma t_0}}\ .
    \label{z:fY39BjBZh2sa}
\end{align}

\paragraph{Eve's Fisher information with respect to energy.}
Now we turn to using the methods of our paper to characterize the
sensitivity of the noisy probe.  As described in
\cref{z:.YRCo8Z9DEX4}, we turn to computing
\begin{align}
  \FIqty{clock,U}{t}
  = \Ftwo{\rho(t_0)}{\mathcal{N}_{t_0}(\partial_t \psi (t_0))}\ ,
\end{align}
for the instantaneous effective {noisy} channel $\mathcal{N}_t$ and fictitious
unitary evolution $\psi(t)$ defined in
\cref{z:.YRCo8Z9DEX4}.  We will then later discuss
how good of an approximation $\FIqty{clock,U}{t}$ is to the original
desired quantity $\FIqty{clock}{t}$.

We decompose the full evolution $\mathcal{E}_t$ as
in~\eqref{z:XMdxyV4u6n1N}. Since
$[H,L_j] = 0$, we have
\begin{align}
  \mathcal{N}_t
  &= {e}^{t\mathcal{L}_1}
  = \begin{bmatrix}
    1 &&& \\ & {e}^{-\gamma t} && \\ && {e}^{-\gamma t} & \\ &&& 1
  \end{bmatrix}
  \quad\to\quad&
  \mathcal{N}_t(\rho)
  &=
  \begin{bmatrix}
    \rho_{00} & \rho_{01}\,{e}^{-\gamma t}\\
    \rho_{10}\,{e}^{-\gamma t} & \rho_{11}
  \end{bmatrix}\ .
  \label{z:pCFecsLeCcJH}
\end{align}
This channel can be described by the two Kraus operators
\begin{align}
  E_0^{(t)}
  &= \sqrt{\frac{1+{e}^{-t\gamma}}{2}}\,\mathds{1}\ ;
  &
    E_1^{(t)}
  &= \sqrt{\frac{1-{e}^{-t\gamma}}{2}}\,
  {Z}.
\end{align}
The (fictitious) pure unitary evolution of the initial state vector 
$\lvert {\psi_{\mathrm{init}}}\rangle =\lvert {+}\rangle $ is
\begin{align}
  \psi(t)
  = U_t\,\psi_{\mathrm{init}}\,U_t^\dagger
  = \frac12 \begin{bmatrix}
    1 & {e}^{-it\omega} \\
    {e}^{it\omega} & 1
  \end{bmatrix}\ .
\end{align}

We compute Eve's Fisher information with respect to energy, which characterizes
the sensitivity loss of the noisy probe.
For any $t$, a complementary channel
to~\eqref{z:pCFecsLeCcJH} is given by
\begin{align}
  \widehat{\mathcal{N}}_t(\rho)
  &= \begin{bmatrix}
    \frac{1+{e}^{-\gamma t}}{2}\,\operatorname{tr}(\rho)
    & \frac{\sqrt{1-{e}^{-2\gamma t}}}{2}\,\operatorname{tr}(
    {Z}
    \rho)\\
    \frac{\sqrt{1-{e}^{-2\gamma t}}}{2}\,\operatorname{tr}(%
     {Z}\rho)
    & \frac{1-{e}^{-\gamma t}}{2}\,\operatorname{tr}(\rho)
  \end{bmatrix}\ .
\end{align}
We would like to compute
\begin{align}
  \Ftwo{\widehat{\mathcal{N}}_t(\psi)}{\widehat{\mathcal{N}}_t(\{H-\langle {H}\rangle , \psi\})}\ .
\end{align}
Noting that $\langle {H}\rangle _{\psi(t)} = 0$ for all $t$ and that
$\psi_{\mathrm{init}} = (\mathds{1}+
{X}
)/2$, we can compute
\begin{align}
  \{ H - \langle {H}\rangle , \psi \}
  &= \Bigl\{ \frac{\omega}{2}
  {Z}, U_t\,\psi_{\mathrm{init}}\,U_t^\dagger \Bigr\}
  = \frac{\omega}{2}\,U_t\,\Bigl\{ %
  {Z}
  , \frac{1+
  {X} }{2} \Bigr\} \,U_t^\dagger
    = \frac{\omega}{2}\,
    {Z}\ .
\end{align}
We then see that
\begin{align}
  \widehat{\mathcal{N}}_t(\psi)
  &= \begin{bmatrix}
    \frac{1+{e}^{-\gamma t}}{2} & 0 \\ 0 & \frac{1-{e}^{-\gamma t}}{2}
  \end{bmatrix}\ ;
  &
    \widehat{\mathcal{N}}_t\Bigl(\frac \omega 2 \, 
    {Z} \Bigr)
  &= \frac \omega 2 \begin{bmatrix}
    0 & \sqrt{1-{e}^{-2\gamma t}} \\ \sqrt{1-{e}^{-2\gamma t}} & 0
  \end{bmatrix}\ .
\end{align}
Then using \cref{z:r7GWSaRrLfNh} we find
\begin{align}
\hspace*{2em}&\hspace*{-2em}
  \Ftwo{\widehat{\mathcal{N}}(\psi)}{\widehat{\mathcal{N}}(\{H-\langle {H}\rangle , \psi\})}
\nonumber\\
  &= 0 +
    2\,%
    \mathopen{}\left[\frac{\omega^2}{4}\bigl(1-{e}^{-2\gamma t_0}\bigr)\right]\mathclose{} + (\textup{same term}) + 0
    \nonumber\\
  &= \omega^2\,\bigl( 1 - {e}^{-2\gamma t_0} \bigr)\ .
\end{align}
In the present picture of the effective {noisy} channel being applied instantly
after unitary evolution of duration $t_0$, we see that Eve obtains no
information about the energy direction for $t\approx 0$.  However, for large $t$
Eve obtains near-perfect information which hinders Bob's sensitivity.  Since the
noiseless Fisher information is $\omega^2$, we have via our uncertainty relation that
\begin{align}
  \FIqty{Bob}{t} = \omega^2 {e}^{-2\gamma t_0}\ .
\end{align}

Our method therefore correctly gives us the first term
in~\eqref{z:fY39BjBZh2sa}.
We can also check by direct calculation that the first term
in~\eqref{z:fY39BjBZh2sa} is indeed
the Fisher information of the noisy clock state if we neglect the term in the
derivative that is associated with the time derivative of the effective noise
channel itself.  First observe that
\begin{align}
  \partial_t\psi &= \frac12 \begin{bmatrix}
   0 & -i\omega{e}^{-it\omega} \\ i\omega{e}^{it\omega} & 0
 \end{bmatrix}\ ,
  \\
  \mathcal{N}(\partial_t\psi)
  &= \frac12 \begin{bmatrix}
   0 & -i\omega{e}^{-it\omega-\gamma t} \\ i\omega{e}^{it\omega-\gamma t} & 0
 \end{bmatrix}
 = \frac{\omega{e}^{-\gamma t}}{2}\,U_t\,
 {Y}
 \,U_t^\dagger
 \ .
\end{align}
We see that the object $\mathcal{N}(\partial_t \psi)$ is exactly the part of the
derivative $\dot\rho$ with respect to the full dynamics that is associated with
the Hamiltonian evolution of $\rho$, i.e., it is the ``longitudinal'' component
of the derivative depicted in \cref{z:kTNaoRKyf8rO}.

We use again \cref{z:r7GWSaRrLfNh} of our manuscript,
recalling the diagonal form for $\rho$ given
in~\eqref{z:IfGEjMeWAQGx}:
\begin{align}
  \FIqty{clock,U}{t}
  =  \Ftwo{\rho}{\mathcal{N}(\partial_t \psi)}
  = 0 + 2 \mathopen{}\left \lvert { \frac{\omega{e}^{-\gamma t_0}}{2} }\right \rvert \mathclose{} + 2 \mathopen{}\left \lvert { \frac{\omega{e}^{-\gamma t_0}}{2} }\right \rvert \mathclose{} + 0
  = \omega^2{e}^{-2\gamma t_0}\ .
\end{align}
The difference between $\FIqty{clock,U}{t}$ and $\FIqty{clock}{t}$ is
\begin{align}
  \delta = \FIqty{clock}{t} - \FIqty{clock,U}{t}
  = \gamma^2 \frac{2{e}^{-2\gamma t_0}}{1-{e}^{-2\gamma t_0}}\ .
\end{align}
The relative error of the approximation is
\begin{align}
  \frac{\delta}{\FIqty{clock,U}{t}}
  = \frac{\gamma^2}{\omega^2}\,\frac1{1-{e}^{-2\gamma t_0}}\ .
\end{align}
(We computed the relative error with respect to $\FIqty{clock,U}{t}$
because it is simpler.)  We can see that $\delta$ is small relative to
$\FIqty{clock,U}{t}$ if the ratio $\gamma/\omega$ of the loss rate
to the qubit's energy gap is small.

Numerical plots for $\omega=1, \gamma=0.1$ are presented in
\cref{z:axU8zJ6hAklL}.
\begin{figure}
  \centering
  \includegraphics[width=13cm]{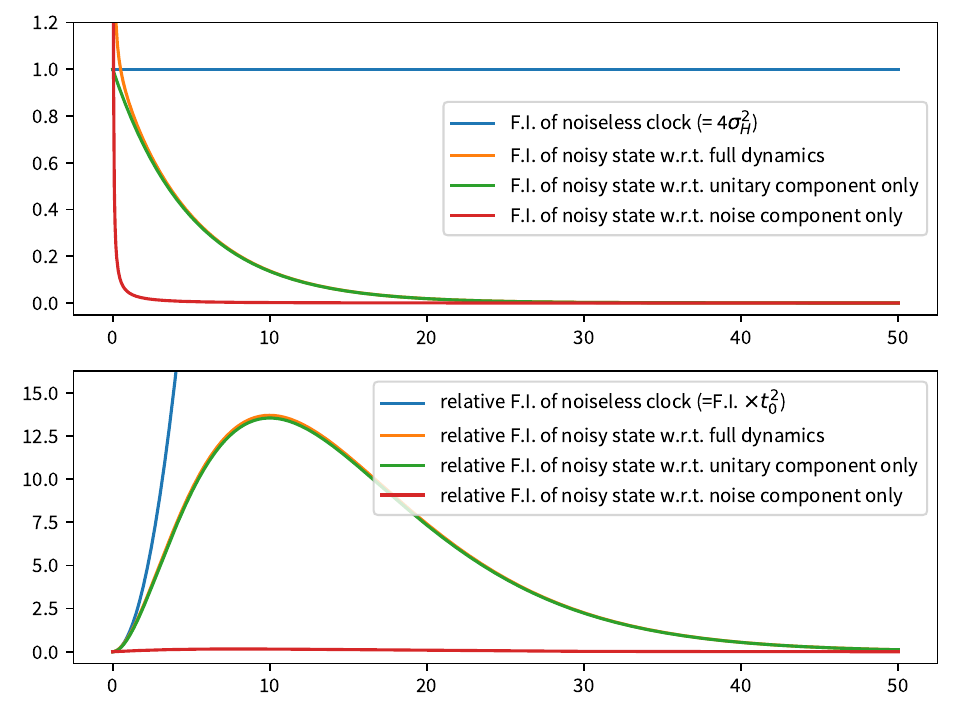}
  \caption{Fisher information (F.I.) of a single qubit prepared in a $+X$
    eigenstate evolving according to the Hamiltonian %
    {$H=\omega Z/2$}
    and
    subject to continuous dephasing along the $Z$ axis.  The horizontal axis
    represents the time $t_0$ at which we consider the clock sensitivity, and
    the vertical axis is the value of the different versions of the Fisher
    information (top plot) and relative Fisher information (bottom plot). The
    relative Fisher information is the Fisher information times $t_0^2$, which
    is relevant if we are interested in the relative sensitivity to time.  In
    these plots we have set $\omega=1$ and $\gamma=0.1$ (see main text).  We
    verify from these plots that the time dependency $\partial_t\mathcal{N}_t$
    of the effective noisy channel contributes negligibly to the overall Fisher
    information; this example in the setting of continuous noise can therefore
    be reduced to a setting as in \cref{z:FnHui0ahNLxW}.}
  \label{z:axU8zJ6hAklL}
\end{figure}

\paragraph{Error bound for the mapping from the Lindblad master equation to our
  setting.}
As a sanity check we compute the error
bound~\eqref{z:raWZngbty0t4}.  We have
\begin{align}
  \partial_t \mathcal{N}
  = \begin{bmatrix}
    0 &&&
    \\ & -\gamma{e}^{-\gamma t} &&\\
    &&-\gamma{e}^{-\gamma t} &\\
    &&& 0
  \end{bmatrix}
  \ ,
\end{align}
and thus
\begin{align}
  (\partial_t \mathcal{N})( \psi(t) )
  &= \frac12\begin{bmatrix}
    0 & -\gamma{e}^{-\gamma t}{e}^{-i\omega t} \\
    -\gamma{e}^{-\gamma t}{e}^{i\omega t} & 0
  \end{bmatrix}
  =
  -\gamma{e}^{-\gamma t} U_t
  {X}
  U_t^\dagger\ .
\end{align}
The matrix elements in the state's eigenbasis are
\begin{align}
  \langle {+}\rvert  U_t^\dagger (\partial_t \mathcal{N}) U_t \lvert {+}\rangle 
  &= -\gamma{e}^{-\gamma t}\ ,
  &
  \langle {+}\rvert  U_t^\dagger (\partial_t \mathcal{N}) U_t \lvert {-}\rangle 
  &= 0\ ,
  \nonumber\\
  \langle {-}\rvert  U_t^\dagger (\partial_t \mathcal{N}) U_t \lvert {+}\rangle 
  &= 0\ , 
  &
  \langle {-}\rvert  U_t^\dagger (\partial_t \mathcal{N}) U_t \lvert {-}\rangle 
  &= \gamma{e}^{-\gamma t}\ .
\end{align}
Then we can compute
\begin{align}
  \Ftwo{\rho}{\partial_t \mathcal{N}(\psi)}
  &= \frac{2}{1+{e}^{-\gamma t_0}} \mathopen{}\left \lvert {\gamma{e}^{-\gamma t_0}}\right \rvert \mathclose{}^2
  + \frac{2}{1-{e}^{-\gamma t_0}} \mathopen{}\left \lvert {\gamma{e}^{-\gamma t_0}}\right \rvert \mathclose{}^2
    = 4\gamma^2 \frac{{e}^{-2\gamma t_0}}{1 - {e}^{-2\gamma t_0}}\ .
\end{align}
Our bound~\eqref{z:raWZngbty0t4} on the error $\delta$ becomes
\begin{align}
  \delta
  &\leq
  4\gamma^2 \frac{{e}^{-2\gamma t_0}}{1 - {e}^{-2\gamma t_0}}
  + 2\gamma \omega\frac{{e}^{-2\gamma t_0}}{\sqrt{1 - {e}^{-2\gamma t_0}}} \ .
\end{align}
The bound is consistent with our computed value of $\delta$.  However, in this
case our bound is loose: The second term in our bound would suggest that the
relative error with respect to $F_{\mathrm{clock,U},\,t_0}$ behaves only as
$\gamma/\omega$ (if $\gamma\ll \omega$), whereas we know from our explicit
calculation of $\delta$ that the behavior of this relative error is
$\gamma^2/\omega^2$.

\subsection{Example: continuous dephasing noise along the
  transversal $X$ axis}

Consider the qubit state vector
\begin{align}
  \lvert {\psi }\rangle = \lvert {+}\rangle  = \frac1{\sqrt2}\bigl[ \lvert {\uparrow }\rangle + \lvert {\downarrow }\rangle \bigr]\ .
\end{align}
Suppose that the evolution of the qubit is given by the
Lindbladian~\eqref{z:BsZSNRM3uYvl} with
\begin{align}
    H &= \frac\omega2 
    {Z},
  &
    L_0 &= \sqrt\gamma\lvert {+}\rangle \mkern -1.8mu\relax \langle{+}\rvert \ ,
  &
    L_1 &= \sqrt\gamma\lvert {-}\rangle \mkern -1.8mu\relax \langle{-}\rvert \ .
\end{align}
One checks that the action of $\mathcal{L}_{\mathrm{tot}}$ on the Pauli
operators and the identity are
\begin{align}
  \mathcal{L}_{\mathrm{tot}}(\mathds{1})
  &= 0\ ,
  &
  \mathcal{L}_{\mathrm{tot}}(
  {X})
  &= \omega
  {Y}\ ,
  &
  \mathcal{L}_{\mathrm{tot}}(
  {Y})
  &= -\omega
  {X}- \gamma
  {Y} \ ,
  &
  \mathcal{L}_{\mathrm{tot}}(
  {Z})
  &= -\gamma 
  {Z}\ .
\end{align}
Therefore, $\mathcal{L}_{\mathrm{tot}}$ can be represented in the orthonormal
basis
$\{\lvert {\mathds{1}}\rrangle /\sqrt{2},\allowbreak \lvert {
{X}}\rrangle /\sqrt{2},
\allowbreak
\lvert {
{Y} }\rrangle /\sqrt{2},\allowbreak \lvert {
{Z} }\rrangle /\sqrt{2} \}$ of Pauli
operators (denoted with subscript $\textit{P}$) as
\begin{align}
  \bigl[\mathcal{L}_{\mathrm{tot}}\bigr]_{\textit{P}}
  = \begin{pmatrix}
    0 &0 & 0 & 0\\
    0 &0 & -\omega & 0\\
    0 &\omega & -\gamma & 0\\
    0 &0 & 0 & -\gamma\\
  \end{pmatrix}_{\textit{P}}
\end{align}
One can verify that this matrix is diagonalized as
\begin{gather}
  \bigl[\mathcal{L}_{\mathrm{tot}}\bigr]_{\textit{P}} = S\,
  \begin{pmatrix} 0 &&& \\ &\lambda_+&&\\&&\lambda_-&\\&&&-\gamma\end{pmatrix}
  \, S^{-1}\ ,
  \\
  \begin{aligned}
  \lambda_\pm
  &= -\frac\gamma{2}\pm i\alpha\ ,
  &
  \alpha
  &= \frac12\sqrt{4\omega^2 - \gamma^2}\ ,
  &
  \lambda_+\lambda_- &=\omega^2\ ,
  &
  \lambda_++\lambda_-&=-\gamma\ ,
  \end{aligned}
  \nonumber\\
  \begin{aligned}
    S &=
    \begin{pmatrix}
      1 & 0 & 0 & 0 \\
      0 & -\frac{\lambda_-}{\omega} & -\frac{\lambda_+}{\omega} & 0 \\
      0 & 1 & 1 & 0 \\
      0 & 0 & 0 & 1
    \end{pmatrix}\ ,\qquad
    &
    S^{-1} &= \begin{pmatrix}
      1 & 0 & 0 & 0 \\
      0 & \frac{-i\omega}{2\alpha} &
      \frac{-i\lambda_+}{2\alpha} & 0 \\
      0 & \frac{i\omega}{2\alpha} &
      \frac{i\lambda_-}{2\alpha} & 0 \\
      0 & 0 & 0 & 1
    \end{pmatrix}\ .
  \end{aligned}
\end{gather}
We can solve the dynamics analytically using this diagonal representation to
compute the matrix exponential as
\begin{gather}
  \mathcal{E}_t
  = \bigl[ {e}^{t\mathcal{L}_{\mathrm{tot}}} \bigr]_{\textit{P}}
  = S \begin{pmatrix} 1&&&\\
    &{e}^{t\lambda_+}&&\\
    &&{e}^{t\lambda_-}&\\
    &&&{e}^{-\gamma t}
  \end{pmatrix} S^{-1}
  = \begin{pmatrix}
    1 & 0 & 0 & 0\\
    0 & e_{xx} & e_{xy} & 0 \\
    0 & e_{yx} & e_{yy} & 0 \\
    0 & 0 & 0 & {e}^{-\gamma t}
  \end{pmatrix}_{\textit{P}}\ ,
  \nonumber\\
  \begin{aligned}
    e_{xx} &=
    {e}^{-\frac{\gamma t}{2}}\mathopen{}\left[ \cos(\alpha t) + \frac{\gamma}{2\alpha} \sin(\alpha t)\right]\mathclose{}\ ,
    &
    e_{xy} &=
    -e_{yx}\ ,
    \\
    e_{yx} &=
    \frac{\omega}{\alpha} {e}^{-\frac{\gamma t}{2}}\sin(\alpha t)\ ,
    &
    e_{yy} &= e_{xx}\ .
  \end{aligned}
\end{gather}
This gives us a useful expression of the linear operator $\mathcal{E}_t$ acting
on the operator basis of Pauli operators.  If we let the initial state
$\psi_{\mathrm{init}} = \lvert {+}\rangle \mkern -1.8mu\relax \langle{+}\rvert $ evolve for a time $t$, we obtain
\begin{align}
  \rho(t)
  &= \mathcal{E}_t(\psi_{\mathrm{init}}) = 
  \mathcal{E}_t\mathopen{}\left(\frac{1+
  {X} }{2}\right)\mathclose{}
    \nonumber\\
  &= \frac{\mathds{1}}{2} +
    \frac{{e}^{-\frac{\gamma t}{2}}}{2}\mathopen{}\left[
    \mathopen{}\left(\cos(\alpha t) + \frac{\gamma}{2\alpha} \sin(\alpha t)\right)\mathclose{}\,
    {X}
    +\frac{\omega}{\alpha} \sin(\alpha t)\,
    {Y}
    \right]\mathclose{}\ .
    \label{z:HwyCzho4gsTn}
\end{align}
See \cref{z:FhWQk7X.reUf} for a plot of the trajectory of
the state $\rho(t)$ in the $X$-$Y$ plane of the Bloch sphere.
\begin{figure}
  \centering
  \includegraphics{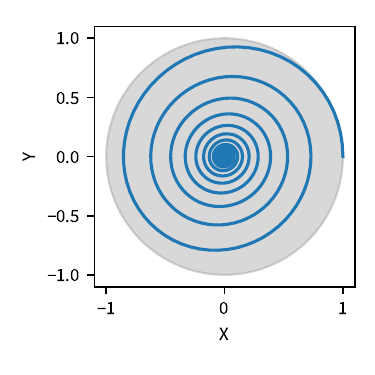}
  \caption{Trajectory on the equatorial slice of the Bloch sphere of the state
    of a qubit initialized in the state {vector} $\lvert {+}\rangle $, evolving under the Hamiltonian
    {$H=(\omega/2)Z$} 
    and subject to continuous dephasing along the $X$
    axis. Here $\omega=1$ and $\gamma=0.1$.}
  \label{z:FhWQk7X.reUf}
\end{figure}

We can compute the derivative $\partial_t\rho$ by directly differentiating the
expression~\eqref{z:HwyCzho4gsTn} or by simply applying the Lindbladian
since we have determined its action in the Pauli basis:
\begin{align}
  \partial_t\rho
  &= \frac{{e}^{-\frac{\gamma t}{2}}}{2}\Bigl[
    \Bigl(\cos(\alpha t) + \frac\gamma{2\alpha}\sin(\alpha t)\Bigr) \omega 
    {Y}
    + \frac\omega\alpha\sin(\alpha t)\bigl(-\omega
    {X}
    - \gamma
    {Y}\bigr)
    \Bigr]
    \nonumber\\
  &= \frac{\omega}{2}{e}^{-\frac{\gamma t}{2}}\Bigl[
    -\frac{\omega}{\alpha}\sin(\alpha t)\,
    {X}+
    \Bigl(\cos(\alpha t) - \frac{\gamma}{2\alpha}\sin(\alpha t)\Bigr) %
    {Y}
    \Bigr].
    \label{z:NcoMf18XMCmh}
\end{align}
The approximation we make to apply our uncertainty relation is to replace this
expression for $\partial_t\rho$ by
\begin{align}
  \mathcal{E}_{t}\mathopen{}\left( -i[H,\psi_{\mathrm{init}}] \right)\mathclose{}
  &= \mathcal{E}_{t}\mathopen{}\left( -i\mathopen{}\left[\frac{\omega}{2}
  {Z}
  ,\frac{1+
  {X}
  }{2}\right]\mathclose{} \right)\mathclose{}
  = \frac{\omega}{2}\, \mathcal{E}_t\mathopen{}\left(
  {Y}\right)\mathclose{}
  \nonumber\\
  &= \frac{\omega}{2} {e}^{-\frac{\gamma t}{2}} \mathopen{}\left[
    -\frac{\omega}\alpha\sin(\alpha t)\,
    {X}
    + \mathopen{}\left(\cos(\alpha t) + \frac{\gamma}{2\alpha}\sin(\alpha t)\right)\mathclose{}\,
    {Y}
    \right]\mathclose{}\ .
    \label{z:ziHoT1fE9Dl7}
\end{align}
We see that the two expressions
\labelcref{z:NcoMf18XMCmh,z:ziHoT1fE9Dl7} differ by a term $(2\alpha)^{-1}\gamma\omega{e}^{-\gamma t/2}\sin(\alpha t) 
{Y}
$, which is small as long as $\gamma\ll\omega$.

The Fisher information $\FIqty{clock}{t}$ given
by~\eqref{z:RquV6F2IJPiu} and $\FIqty{clock,U}{t}$ given
by~\eqref{z:svEq8bs.6Q93} are plotted in
\cref{z:bK7Cmm5UAXz3} as a function of $t_0$.
\begin{figure}
  \centering
  \includegraphics{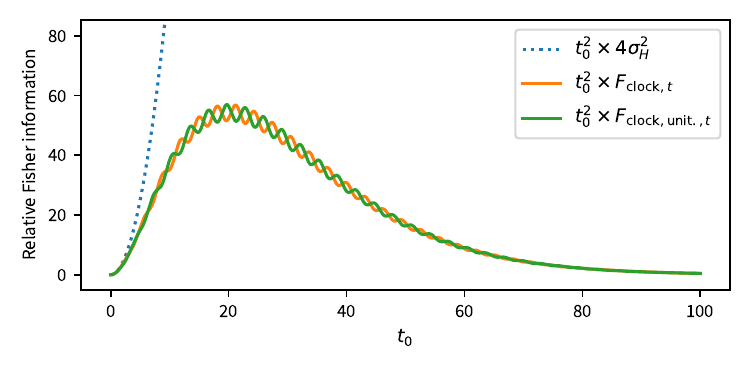}
  \caption{Relative Fisher information with respect to time of a single qubit
    prepared in $\lvert {+}\rangle $ and evolving according to the Hamiltonian
     {$(\omega/2) Z$} 
    and exposed to continuous dephasing along the X axis at
    a rate $\gamma$.  The blue curve shows the sensitivity as a function of time
    $t_0$ of the probe to the signal if we turn off the noise.  In orange, the
    exact Fisher information $\Ftwo{\rho}{\partial_\omega\rho}$ is computed
    directly.  In green, an approximation to the desired Fisher information
    ignores the contribution of the time dependency $\partial_t\mathcal{N}_t$ of
    the effective {noisy} channel.  This approximation is the quantity that
    appears in our trade-off relation in the alternative setting where Alice
    sends a noiseless quantum clock over a noisy channel to Bob.  Because the
    unitary and noise parts of the Lindbladian do not commute as superoperators,
    invoking our the trade-off relation requires the channel $\mathcal{N}_t$ to
    be determined via~\eqref{z:p1QyEW2YDGue}.  Here
    $\omega = 1$, $\gamma=0.1$.}
  \label{z:bK7Cmm5UAXz3}
\end{figure}
Our approximation matches the exact Fisher information well, except for an
out-of-phase oscillation of relatively small amplitude.  This error to the
contribution of the phase damping is expected to be attributable to the
difference in sign of the smaller terms in
\cref{z:NcoMf18XMCmh,z:ziHoT1fE9Dl7}.

\section{Perturbing the noisy channel to restore equality
  in the uncertainty relation for metrological codes}
\label{z:DQLLjQ-XUmxW}

In this Appendix, we study how to perturb a noisy channel $\mathcal{N}$ in order
to restore uncertainty relation equality for a metrological code.
We prove \cref{z:F6rfqyfKNr41} of the
main text, which shows that equality in the uncertainty relation can be restored
by an infinitesimal perturbation of the Stinespring isometry, all while
preserving the zero sensitivity-loss conditions~\eqref{z:zFMncm5bkXDt} (it might be necessary to
enlarge Bob's system with an auxiliary qubit).
The proposition is slightly reformulated to emphasize the fact that we can apply
the same construction also without regards to the zero sensitivity-loss
condition.
\begin{proposition}
  \label{z:QmMajA-tmAuf}
  Let $V_{A\to BE}$ be an isometry, let $\lvert {\psi}\rangle _A,\lvert {\xi}\rangle _A$ with
  $\langle {\psi}\mkern 1.5mu\relax \vert\mkern 1.5mu\relax {\xi}\rangle _A=0$ and let
  $\mathcal{N}(\cdot) = \operatorname{tr}_E\bigl(V\,(\cdot)\,V^\dagger\bigr)$,
  $\widehat{\mathcal{N}}(\cdot) = \operatorname{tr}_B\bigl(V\,(\cdot)\,V^\dagger\bigr)$.  For any
  $\epsilon>0$, there exists an isometry $V'_{A\to BE}$ with
  $\lVert {V' - V}\rVert  \leq \epsilon$ and such that
  $\bigl(P_{\rho_B'}^\perp\otimes P_{\rho_E'}^\perp\bigr) V'\lvert {\xi }\rangle = 0$, where
  $\rho_B' = \operatorname{tr}_E\bigl\{ V'\psi V'^\dagger \bigr\}$ and
  $\rho_E' = \operatorname{tr}_B\bigl\{ V'\psi V'^\dagger \bigr\}$.

  Furthermore, assume that
  $\widehat{\mathcal{N}}(\lvert {\xi}\rangle \mkern -1.8mu\relax \langle{\psi}\rvert +\lvert {\psi}\rangle \mkern -1.8mu\relax \langle{\xi}\rvert ) = 0$ and assume that
  there exists a unitary operator $G_B$ acting on the system $B$ with the
  properties that $P_{\rho_B} G_B P_{\rho_B} = 0$,
  $P_{\zeta_B} G_B P_{\zeta_B} = 0$, $P_{\rho_B} G_B P_{\zeta_B} = 0$, and
  $P_{\zeta_B} G_B P_{\rho_B} = 0$, where $\zeta_B = \mathcal{N}(\lvert {\xi}\rangle \mkern -1.8mu\relax \langle{\xi}\rvert )$ and
  $\zeta_E = \widehat{\mathcal{N}}(\lvert {\xi}\rangle \mkern -1.8mu\relax \langle{\xi}\rvert )$.  Then the perturbed isometry $V'$
  can be chosen to also satisfy
  $\widehat{\mathcal{N}}'\bigl(\lvert {\xi}\rangle \mkern -1.8mu\relax \langle{\psi}\rvert +\lvert {\psi}\rangle \mkern -1.8mu\relax \langle{\xi}\rvert \bigr) = 0$, where
  $\widehat{\mathcal{N}}'(\cdot) = \operatorname{tr}_B\bigl\{ V' \, (\cdot)\, V'^\dagger \bigr\}$.
\end{proposition}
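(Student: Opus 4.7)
My plan is to construct the perturbed isometry $V'$ as a small mixture of $V$ with a $G_B$-twisted copy of itself, using an auxiliary flag qubit in the environment to keep the Stinespring dilation clean. After enlarging $E$ by a flag qubit $f$, I would define
\begin{align}
  V'\ket{a} := \sqrt{1-\epsilon^2}\,V\ket{a}\otimes\ket{0}_f + \epsilon\,(G_B\otimes\Ident_E)V\ket{a}\otimes\ket{1}_f\, ,
\end{align}
which is isometric by unitarity of $G_B$ and satisfies $\norm{V' - V\otimes\ket 0_f} = O(\epsilon)$.

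The key verification is that the zero-sensitivity-loss condition survives the perturbation. I would expand $\widehat{\mathcal{N}}'(\ketbra{\xi}{\psi}+\ketbra{\psi}{\xi})$ in the flag basis as a sum of four blocks. The two diagonal blocks each reduce to a scalar multiple of $\widehat{\mathcal{N}}(\ketbra\xi\psi+\ketbra\psi\xi)=0$, using the identity $\widehat{G_B\mathcal{N}(\cdot)G_B^\dagger}=\widehat{\mathcal{N}}$ (the unitary on $B$ is absorbed by the partial trace over $B$). The off-diagonal cross-terms are proportional to $\tr_B\bigl[V(\ketbra\xi\psi+\ketbra\psi\xi)V^\dagger(G_B^\dagger\otimes\Ident_E)\bigr]$ and its Hermitian conjugate; writing the Schmidt decompositions of $V\ket\psi$ and $V\ket\xi$ with $B$-sides supported respectively in the ranges of $P_{\rho_B}$ and $P_{\zeta_B}$, this trace factors through matrix elements $\bra{\phi}G_B^\dagger\ket{\phi'}$ with $P_{\rho_B}\ket\phi=\ket\phi$ and $P_{\zeta_B}\ket{\phi'}=\ket{\phi'}$, all of which vanish by the hypotheses $P_{\rho_B}G_BP_{\zeta_B}=P_{\zeta_B}G_BP_{\rho_B}=0$.

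Verifying the equality condition $(P_{\rho_{B'}'}^\perp\otimes P_{\rho_{E'}'}^\perp)V'\ket\xi=0$ will be the main technical obstacle. A single application of the construction enlarges the support of $\rho_{B'}'$ only to the direct sum of the range of $P_{\rho_B}$ with $G_B$ applied to that range; since $P_{\zeta_B}G_BP_{\rho_B}=0$ places the new part inside the kernel of $P_{\zeta_B}$, it does not automatically absorb the ``bad $B$-direction'' of $V\ket\xi$, which lives in the range of $P_{\zeta_B}$ intersected with the kernel of $P_{\rho_B}$. To remedy this, I would iterate the construction: at each round, augment $B$ by a fresh auxiliary qubit as indicated in the note following the proposition --- which produces a new $G_B^{(k)}$ satisfying the hypothesis for the current $\rho_B^{(k)}, \zeta_B^{(k)}$ --- and mix $V^{(k)}$ with $(G_B^{(k)}\otimes\Ident_E)V^{(k)}$ as above, so that Step~2 continues to preserve the metrological-code condition. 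The iteration terminates once $\rho_{B'}'$ attains full rank within the augmented $B$-space, at which point $P_{\rho_{B'}'}^\perp=0$ and the equality condition is trivially satisfied. Choosing the perturbation amplitudes geometrically, e.g.\ $\epsilon_k=\epsilon/2^{k+1}$, keeps the cumulative norm $\norm{V'-V}\le\epsilon$; the most delicate point in the argument will be to carefully track which new directions get absorbed at each iteration so as to guarantee termination in finitely many steps, since the fresh qubit at each step must be chosen so that the corresponding $G_B^{(k)}$ actually pushes support into a region that shrinks the remaining bad subspace rather than merely doubling it.
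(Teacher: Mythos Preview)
Your check that the zero-sensitivity-loss condition survives the perturbation is correct and uses the same $G_B$-orthogonality mechanism as the paper. The gap is in the equality condition, and the iteration you sketch cannot close it. Work out what your construction does to the supports: the cross terms in $\rho_{E'}'$ vanish by $P_{\rho_B}G_BP_{\rho_B}=0$, giving $\rho_{E'}' = \rho_E \otimes\bigl((1-\epsilon^2)\proj{0}+\epsilon^2\proj{1}\bigr)_f$, so $P_{\rho_{E'}'}^\perp = P_{\rho_E}^\perp\otimes\Ident_f$ and the $E$-support has not moved at all. On $B$ you get $\rho_B' = (1-\epsilon^2)\rho_B + \epsilon^2 G_B\rho_B G_B^\dagger$, but the new piece $G_B\,\mathrm{supp}(\rho_B)$ is orthogonal to $\mathrm{supp}(\zeta_B)$ by the very hypothesis $P_{\zeta_B}G_BP_{\rho_B}=0$, so it cannot absorb any $B$-component of $V\ket\xi$. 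Hence $(P_{\rho_B'}^\perp\otimes P_{\rho_{E'}'}^\perp)V'\ket\xi$ still contains $\sqrt{1-\epsilon^2}\bigl[(P_{\rho_B}^\perp\otimes P_{\rho_E}^\perp)V\ket\xi\bigr]\otimes\ket{0}_f$. Iterating with fresh $B$-qubits and bit-flip $G_B^{(k)}$'s merely tensors both reduced states with full-rank qubit factors; the relative ranks never increase, $\rho_B^{(k)}$ never becomes full rank, and the bad vector is carried along with factor $\prod_k\sqrt{1-\epsilon_k^2}>0$. The process cannot terminate.

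The idea you are missing is to perturb $V\ket\psi$ toward $G_BV\ket\xi$, not $G_BV\ket\psi$. The paper does this in a single step, with no flag and no iteration, by setting $V' = \cos(\alpha)\,V + \sin(\alpha)\,G_B V\,\widetilde Z_L$, where $\widetilde Z_L$ is the unitary swapping $\ket\psi\leftrightarrow\ket\xi$ and fixing their orthogonal complement. Then $V'\ket\psi = \cos(\alpha)V\ket\psi + \sin(\alpha)G_BV\ket\xi$, the $G_B$ hypotheses kill the cross terms, and one obtains $\rho_E' = \cos^2(\alpha)\rho_E + \sin^2(\alpha)\zeta_E$ together with $\zeta_E' = \cos^2(\alpha)\zeta_E + \sin^2(\alpha)\rho_E$. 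Any $\ket c\in\ker\rho_E'$ then lies in $\ker\rho_E\cap\ker\zeta_E$, so $\langle c|\zeta_E'|c\rangle=0$ and $(\Ident_B\otimes P_{\rho_E'}^\perp)V'\ket\xi=0$ immediately. The swap is also what makes the ZSL check close: the diagonal contributions use $\widetilde Z_L Z_L \widetilde Z_L = Z_L$, and the cross contributions reduce to $\tr_B\bigl[G_BV(\proj\psi+\proj\xi)V^\dagger\bigr]$, which vanishes by $P_{\rho_B}G_BP_{\rho_B}=P_{\zeta_B}G_BP_{\zeta_B}=0$.
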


\begin{proof}[**z:QmMajA-tmAuf]
  Write $\mathcal{N}(\cdot) = \operatorname{tr}_E\bigl(V\,(\cdot)\,V^\dagger\bigr)$ and let
  \begin{align}
    \rho_B &= \mathcal{N}(\psi)\ ;
    &
      \rho_E &= \widehat{\mathcal{N}}(\psi)\ .
  \end{align}

  The strategy to perturb $V$ is to include an infinitesimal rotation that
  rotates the state $V\lvert {\psi}\rangle $ into the direction of another suitably chosen
  state $\lvert {\chi}\rangle _{BE}$.
  We first compute some properties of a general such rotation, and then we will
  prove the stated claims.

  Let $\epsilon>0$.  
  Let $\alpha>0$ such that $4\sin^2(\alpha/2)\leq\epsilon$. 
  Let $\lvert {\chi}\rangle _{BE}$ be a state with the property that the reduced state on $B$
  lies in a subspace that is orthogonal to the reduced state $\rho_B$ of
  $V\lvert {\psi}\rangle $, i.e., $P_{\rho_B} \lvert {\chi }\rangle = 0$, or equivalently, $\lvert {\chi}\rangle _{BE}$
  lies in the support of $P_{\rho_B}^\perp \otimes \mathds{1}$.  The state
  $\lvert {\chi}\rangle _{BE}$ will be fixed later.
  Let $\bigl\{ \lvert {\mu^{(j)}}\rangle  \bigr\}_j$ be a basis of $BE$ with
  $\lvert {\mu^{(1)}}\rangle _{BE} = V\lvert {\psi}\rangle $ and $\lvert {\mu^{(2)}}\rangle _{BE} = \lvert {\chi}\rangle _{BE}$.  Let
  \begin{align}
    W_{BE\to BE}
    &= 
      \Bigl( \cos(\alpha)\lvert {\mu^{(1)}}\rangle  + \sin(\alpha)\lvert {\mu^{(2)}}\rangle  \Bigr) \langle {\mu^{(1)}}\rvert 
      +  \Bigl( \cos(\alpha)\lvert {\mu^{(2)}}\rangle  -\sin(\alpha)\lvert {\mu^{(1)}}\rangle  \Bigr) \langle {\mu^{(2)}}\rvert 
      \nonumber\\
    &\qquad + \sum_{j=3,\ldots} \lvert {\mu^{(j)}}\rangle \mkern -1.8mu\relax \langle{\mu^{(j)}}\rvert \ ,
  \end{align}
  and note that $W_{BE\to BE}$ is a unitary close to the identity, effecting the
  rotation $W_0 = \left[\begin{smallmatrix}
          \cos(\alpha) &  -\sin(\alpha) \\
          \sin(\alpha) & \cos(\alpha)
        \end{smallmatrix}\right]$
  between $\lvert {\mu^{(1)}}\rangle _{BE}$ and $\lvert {\mu^{(2)}}\rangle _{BE}$.
  The eigenvalues $\theta_1,\theta_2$ of $W_0$ are determined from
  $\theta_1+\theta_2 = \operatorname{tr}(W_0) = 2\cos(\alpha)$ and
  $\theta_1\theta_2 = \det(W_0) = 1$ as $\theta_1 = \theta_2^* = {e}^{i\alpha}$.
  As the operator norm is the maximal singular value, we find
  $\lVert {W_0 - \mathds{1}}\rVert _\infty = \max\bigl\{ \lvert {{e}^{i\alpha} - 1}\rvert ,
    \lvert {{e}^{-i\alpha} - 1}\rvert  \bigr\} = (1 - \cos\alpha)^2 + (\sin\alpha)^2 = 2 -
  2\cos(\alpha) = 4\sin^2(\alpha/2) \leq \epsilon$, and
  $\bigl \lVert { W - \mathds{1}}\bigr \rVert _\infty \leq \epsilon$.
  Now let $V' = W_{BE} V$, with
  \begin{align}
    \lVert {V' - V}\rVert _\infty \leq \lVert {W - \mathds{1}}\rVert _\infty \lVert {V}\rVert _\infty \leq \epsilon\ .
  \end{align}
  We find
  \begin{align}
    \rho_E'
    &= \operatorname{tr}_B\bigl(V'\psi V'^\dagger\bigr)
    \nonumber\\
    &= 
      \operatorname{tr}_B\Bigl[\cos^2(\alpha)\, \lvert {\mu_1}\rangle \mkern -1.8mu\relax \langle{\mu_1}\rvert 
      + \cos(\alpha)\sin(\alpha)\, \bigl(\lvert {\mu_1}\rangle \mkern -1.8mu\relax \langle{\mu_2}\rvert  + \lvert {\mu_2}\rangle \mkern -1.8mu\relax \langle{\mu_1}\rvert \bigr)
      + \sin^2(\alpha)\, \lvert {\mu_2}\rangle \mkern -1.8mu\relax \langle{\mu_2}\rvert  \Bigr]
      \nonumber\\
    &=
      \cos^2(\alpha)\,\rho_E
      + \cos(\alpha)\sin(\alpha)
          \operatorname{tr}_B\Bigl[V\lvert {\psi}\rangle \mkern -1.8mu\relax \langle{\chi}\rvert  + \lvert {\chi}\rangle \mkern -1.8mu\relax \langle{\psi}\rvert V^\dagger \Bigr]
      + \sin^2(\alpha) \, \chi_E\ .
      \nonumber\\
    &=
      \cos^2(\alpha)\,\rho_E
      + \sin^2(\alpha) \, \chi_E\ .
      \label{z:ABSNrv9JDtxw}
  \end{align}
  The last equality holds thanks to our assumption that
  $P_{\rho_B} \lvert {\chi }\rangle = 0$.

  We now prove the first part of the proposition.  We can assume without loss of
  generality that $\operatorname{rank}( P_{\rho_E}^\perp ) \leq \operatorname{rank}( P_{\rho_B}^\perp ) $,
  by exchanging the roles of the $B$ and $E$ systems if necessary.
  Let $\{ \lvert {\chi_k}\rangle _B \}_{k=1}^{K}$, $\{ \lvert {\chi_k'}\rangle _E \}_{k=1}^{K}$ be two
  orthonormal families of states lying in the support of $P_{\rho_B}^\perp$ and
  $P_{\rho_E}^\perp$ respectively, with
  $K = \min\bigl\{ \operatorname{rank}(P_{\rho_B}^\perp) , \operatorname{rank}(P_{\rho_E}^\perp) \bigr\} =
  \operatorname{rank}(P_{\rho_E}^\perp)$.
  Define
  $\lvert {\chi}\rangle _{BE} = (1/\sqrt{K})\,\sum_{k} \lvert {\chi_k}\rangle _B\otimes\lvert {\chi_k}\rangle _E$.
  By construction, we have that
  $\chi_E = (1/K) \sum_{k=1}^{K} \lvert {k}\rangle \mkern -1.8mu\relax \langle{k}\rvert _E = P_{\rho_E}^\perp /
  \operatorname{tr}(P_{\rho_E}^\perp)$.
  It follows that the state~\eqref{z:ABSNrv9JDtxw} has full rank, and
  therefore our conditions for our uncertainty relation equality are fulfilled.

  Now we prove the second part of the proposition, and we assume that
  $\widehat{\mathcal{N}}(D_A^Z) = 0$, with
  $D_A^Z = \lvert {\xi}\rangle \mkern -1.8mu\relax \langle{\psi}\rvert +\lvert {\psi}\rangle \mkern -1.8mu\relax \langle{\xi}\rvert $.  The proof strategy is similar to
  above, to introduce a small ``rotation'' to fix the support of the state
  $\rho_E$ all while preserving the zero sensitivity-loss conditions~\eqref{z:zFMncm5bkXDt}.

  Without loss of generality, we may assume that $\bigl \lVert {\lvert {\xi}\rangle }\bigr \rVert  = 1$.  We
  define for later convenience
  \begin{align}
    Z_L &= \lvert {\xi}\rangle \mkern -1.8mu\relax \langle{\psi }\rvert + \lvert {\psi}\rangle \mkern -1.8mu\relax \langle{\xi }\rvert \ ;
    &
    \Pi_L &= \lvert {\psi}\rangle \mkern -1.8mu\relax \langle{\psi }\rvert + \lvert {\xi}\rangle \mkern -1.8mu\relax \langle{\xi }\rvert \ ;
    &
    \widetilde{Z}_L &= Z_L + (\mathds{1}- \Pi_L)\ ,
  \end{align}
  noting that $\widetilde{Z}_L$ is the unitary operator that flips the
  normalized states $\lvert {\psi}\rangle $ and $\lvert {\xi}\rangle $ and acts as the identity on the
  subspace that is orthogonal to $\lvert {\psi}\rangle ,\lvert {\xi}\rangle $.

  As stated in the claim, we assume that there exists a unitary operator $G_B$
  with the properties that 
  $P_{\rho_B} G_B P_{\rho_B} = 0$, $P_{\zeta_B} G_B P_{\zeta_B} = 0$,
  $P_{\rho_B} G_B P_{\zeta_B} = 0$, and $P_{\zeta_B} G_B P_{\rho_B} = 0$.

  Let $0<\epsilon\leq 1$.  Let $\alpha = \epsilon/2$ with $0<\alpha\leq 1/2$ and
  let
  \begin{align}
    V' &= \bigl( \cos(\alpha) V + \sin(\alpha)\,G_B\,V\,\widetilde{Z}_L \bigr)\ .
  \end{align}
 Then
  \begin{align}
    \lVert {V' - V}\rVert 
    &= \bigl \lVert { (\cos(\alpha) - 1)\, V + \sin(\alpha)\, G_B V \widetilde{Z} }\bigr \rVert 
      \nonumber\\
    &\leq (1 - \cos(\alpha))\bigl \lVert {V}\bigr \rVert  + \sin(\alpha)\,\bigl \lVert {G_B V \widetilde{Z}}\bigr \rVert 
      \nonumber\\
    &
      \leq  2\sin^2(\alpha/2) + \sin(\alpha)
      \leq 2\lvert {\alpha}\rvert \ ,
  \end{align}
  using $\sin(\alpha)\leq\lvert {\alpha}\rvert $ and with $\lvert {\alpha}\rvert \leq 1/2$.

  We first show that the perturbed isometry $V'$ also satisfies the zero
  sensitivity-loss conditions.  Let
  $\widehat{\mathcal{N}}'(\cdot) = \operatorname{tr}_B\Bigl\{ V'\,(\cdot)\,V'^\dagger \Bigr\}$ and we
  compute
  \begin{align}
    \widehat{\mathcal{N}}'(Z_L)
    &= \operatorname{tr}_B\Bigl\{ V'\, Z_L\, V'^\dagger \Bigr\}
      \nonumber\\
    \begin{split}
      &= \operatorname{tr}_B\Bigl\{
        \cos^2(\alpha)\, V\, Z_L\, V^\dagger
        \\
        &\qquad\qquad
        + \cos(\alpha)\sin(\alpha)\, \Bigl[
          V\,Z_L\, \widetilde{Z}_L \, V^\dagger G_B^\dagger
          + G_B V \widetilde{Z}_L \, Z_L \, V^\dagger
        \Bigr]
        \\
        &\qquad\qquad
        + \sin^2(\alpha)\, G_B V\, \widetilde{Z}_L Z_L \widetilde{Z}_L \, 
        V^\dagger G_B^\dagger
    \Bigr\}
    \end{split}
      \nonumber\\
    &= 0\ ,
  \end{align}
  using $\operatorname{tr}_B\bigl\{ V Z_L V^\dagger \bigr\} = \widehat{\mathcal{N}}(Z_L) = 0$ and
  $\widetilde{Z}_L Z_L \widetilde{Z}_L = Z_L$, as well as the fact that
  \begin{align}
    \operatorname{tr}_B\bigl[ G_B V \widetilde{Z}_L Z_L  V^\dagger\bigr]
    &= \operatorname{tr}_B\bigl[ G_B V [\lvert {\psi }\rangle \mkern -1.8mu\relax \langle{\psi }\rvert + \lvert {\xi}\rangle \mkern -1.8mu\relax \langle{\xi}\rvert ] V^\dagger \bigr]
      \nonumber\\
    &= \operatorname{tr}_B \bigl[ G_B P_{\rho_B} V\psi V^\dagger P_{\rho_B}
    + G_B P_{\zeta_B} V \xi V^\dagger P_{\zeta_B} \bigr]
      = 0\ ,
  \end{align}
  using the fact that
  $P_{\rho_B} G_B P_{\rho_B} = 0 = P_{\zeta_B} G_B P_{\zeta_B}$.

  We then have
  \begin{align}
    \rho_E' &= \operatorname{tr}_B\Bigl\{ V'\,\psi\,V'^\dagger \Bigr\}
              \nonumber\\
    &
      =\operatorname{tr}_B\Bigl\{ \cos^2(\alpha)\,V\psi V^\dagger
      + \cos(\alpha)\sin(\alpha)\Bigl[ V\lvert {\psi }\rangle \mkern -1.8mu\relax \langle{\xi }\rvert V^\dagger G_B^\dagger
      + G_B V \lvert {\xi }\rangle \mkern -1.8mu\relax \langle{\psi }\rvert V^\dagger \Bigr]
      + \sin^2(\alpha)\,G_B V\xi V G_B^\dagger
      \Bigr\}
      \nonumber\\
    &
      = \cos^2(\alpha)\,\rho_E + \sin^2(\alpha)\,\zeta_E\ ,
  \end{align}
  where the two middle terms in the long expression vanish because
  $\operatorname{tr}_B\bigl\{ G_B V \lvert {\xi }\rangle \mkern -1.8mu\relax \langle{\psi }\rvert V^\dagger \bigr\} = \operatorname{tr}_B\bigl\{ P_{\rho_B} G_B
    P_{\zeta_B} V \lvert {\xi }\rangle \mkern -1.8mu\relax \langle{\psi }\rvert V^\dagger \bigr\} = 0$.

  Similarly,
  \begin{align}
    \zeta_E'
    &= \operatorname{tr}_B\Bigl\{ V' \, \xi\, V'^\dagger \Bigr\}
      \nonumber\\
    &= \operatorname{tr}_B\Bigl\{
      \cos^2(\alpha)\,V\xi V^\dagger
      + \cos(\alpha)\sin(\alpha) \Bigl[
        V\lvert {\xi }\rangle \mkern -1.8mu\relax \langle{\psi }\rvert V^\dagger G_B^\dagger
        + G_B V\lvert {\psi }\rangle \mkern -1.8mu\relax \langle{\xi }\rvert V^\dagger
      \Bigr]
      + \sin^2(\alpha)\, G_B V \psi V^\dagger G_B^\dagger
      \Bigr\}
      \nonumber\\
    &= \cos^2(\alpha)\, \zeta_E + \sin^2(\alpha)\,\rho_E\ .
  \end{align}
  Any state $\lvert {c}\rangle _E$ that lies in the kernel of $\rho_E'$ must satisfy
  \begin{align}
    0 = \langle {c}\mkern 1.5mu\relax \vert \mkern 1.5mu\relax { \rho_E' }\mkern 1.5mu\relax \vert \mkern 1.5mu\relax {c}\rangle 
    = \cos^2(\alpha)\,\langle {c}\mkern 1.5mu\relax \vert \mkern 1.5mu\relax {\rho_E}\mkern 1.5mu\relax \vert \mkern 1.5mu\relax {c}\rangle 
    + \sin^2(\alpha)\,\langle {c}\mkern 1.5mu\relax \vert \mkern 1.5mu\relax {\zeta_E}\mkern 1.5mu\relax \vert \mkern 1.5mu\relax {c}\rangle \ ,
  \end{align}
  which in turn implies $0 = \langle {c}\mkern 1.5mu\relax \vert \mkern 1.5mu\relax {\rho_E}\mkern 1.5mu\relax \vert \mkern 1.5mu\relax {c}\rangle  = \langle {c}\mkern 1.5mu\relax \vert \mkern 1.5mu\relax {\zeta_E}\mkern 1.5mu\relax \vert \mkern 1.5mu\relax {c}\rangle $.
  We then find
  \begin{align}
    \bigl \lVert { (\mathds{1}_B\otimes\langle {c}\rvert _E)\, V'\lvert {\xi }\rangle }\bigr \rVert ^2
    &= \langle {c}\rvert _E \, \operatorname{tr}_B\bigl(V'\xi V'^\dagger\bigr) \, \lvert {c}\rangle _E
      \nonumber\\
    &= 
      \langle {c}\mkern 1.5mu\relax \vert \mkern 1.5mu\relax {\zeta_E'}\mkern 1.5mu\relax \vert \mkern 1.5mu\relax {c}\rangle 
      \nonumber\\
    &= \langle {c}\rvert _E \, \Bigl[
      \cos^2(\alpha)\,\zeta_E + \sin^2(\alpha)\,\rho_E
      \Bigr]\, \lvert {c}\rangle _E
      \nonumber\\
    &= 0\ .
  \end{align}
  Therefore $\bigl(\mathds{1}_B\otimes P_{\rho_E'}^\perp\bigr)\,V'\lvert {\xi }\rangle = 0$, implying
  that $\bigl(P_{\rho_B'}^\perp \otimes P_{\rho_E'}^\perp\bigr)\,V'\lvert {\xi }\rangle = 0$ and
  our uncertainty relation equality conditions are satisfied.
\end{proof}

\section{Behavior of metrological codes for weak i.i.d.\@ noise; metrological
  codes, uncertainty relation equality, and discontinuities of the quantum
  Fisher information}
\label{z:z.6FF4fEKkkd}

In this Appendix, we consider a metrological code $(\lvert {\psi}\rangle ,\lvert {\xi}\rangle )$ on $n$
qubits, with a metrological distance $d_m > 1$.  For any noise channel that acts
on fewer than $d_m$ qubits, we have seen in
\cref{z:JZtfKZv25too} that $\FIloss{Bob}{t} =0$.  Instead of
noise acting on few qubits, we now consider examples of i.i.d.\@ noise channels
$[\mathcal{N}_1^{(p)}]^{\otimes n}$, where each channel $\mathcal{N}_1^{(p)}$
acts on a single qubit and depends on a noise parameter $p$ such that
$\mathcal{N}_1^{(p=0)} = {\mathrm{id}}$.
We ask: For constant $n$, to what order in $p$ is the loss in quantum Fisher
information $\FIloss{Bob}{t}$ suppressed?

Let us first consider a similar question in the conventional setting of quantum
error correction, where a logical state is encoded into a physical state, is
exposed to a noise channel, and is subsequently decoded to attempt to recover
the initial state.  If a state $\lvert {\psi}\rangle $, encoded with a distance-$d$ quantum
error-correcting code, is exposed to a weak i.i.d.\@ noise channel in which a
single-site error happens with probability $p$, then after a subsequent decoding
operation, the fidelity of the state with respect to the original state differs
with the ideal value one by at most $O( p^{d/2} )$.  I.e., the fidelity loss is
suppressed by the quantum error correction procedure to an order in the noise
parameter that is proportional to the distance of the code.  This suppressed
fidelity loss is explained by a fundamental principle in quantum information:
Two states (respectively two channels) that are $\epsilon$-close in trace
distance (respectively diamond distance) may not be distinguished by any
physical operation, except with probability of the order at most $O(\epsilon)$.
In the case of weak i.i.d.\@ noise, any error operator whose weight is larger
than $(d-1)/2$ occurs only with probability at most $O(p^{d/2})$.  Consequently,
no experiment should be able to distinguish the weak i.i.d.\@ noise from a noise
operator with only weight-$[(d-1)/2]$ operators with probability better than
$O(p^{d/2})$, for which the quantum error-correction scheme enables perfect
recovery.

By analogy, it is natural to expect that the quantum Fisher information loss
$\FIloss{Bob}{t}$ should scale as $\sim p^{c d_m}$, where $d_m$ is the
metrological distance of the metrological code, and where $c$ is some constant.
However, this is not the case, as we will see in the remainder of this appendix.
While $\FIloss{Bob}{t}$ exhibits the expected behavior for certain examples of
metrological codes, we can find counterexamples in which the quantum
Fisher information loss scales as $\FIloss{Bob}{t} \sim p$ despite the state
forming a metrological code of an arbitrarily large, but fixed, metrological
distance $d_m$.
This counterexample shows that when measuring the accuracy of Bob's estimate to
the time parameter in terms of the quantum Fisher information, the code distance
is not necessarily related to the loss in sensitivity of the state.  This might
be worrying, since the metrological distance of the metrological code would not
be related to the degree of protection offered by such codes in suppressing the
sensitivity loss.  We argue, however, that the quantum Fisher information might
not be the relevant sensitivity measure to study in such regimes.  More
specifically, we know that there are regimes in which we should question the
operational relevance of the quantum Fisher information, because infinitesimal
perturbations in the state or the noise channel result in observable
consequences in the purported sensitivity as reported by the quantum Fisher
information.  We attribute this behavior to the fact that it ignores the error
associated with the estimation of the expectation value of the optimal sensing
observable from a finite number of measurement repetitions.
Based on our examples, we hypothesize that the settings where
$\FIloss{Bob}{t} \not \leq O(p^{d_m/2})$ fall into this regime.
While confirming this
hypothesis would invalidate known counterexamples in which a high metrological
distance can still lead to a high accuracy loss, a full proof of the protection
offered by metrological codes in the general setting remains elusive.  Such a
result would further require (a)~establishing a measure of sensitivity that is
robust to perturbations of the physical setting by accounting for limits on the
number of available measurement repetitions and (b)~showing that its loss is
suppressed as a function of the metrological distance of the metrological code.

In the following, we first compute the quantum Fisher information loss of some
states that form metrological codes after exposure to weak i.i.d.\@ noise.  In
order to explore the cause of the behavior of some examples that appear
problematic, we study more closely some properties of the quantum Fisher
information: We argue that there are regimes in which the quantum Fisher
information, being discontinuous, cannot be a representative measure of
sensitivity, and we attribute this problematic behavior to the failure to
account for the number of finite available measurement repetitions.  Finally, we
consider a restricted setting with additional assumptions on the state and the
noise channel, in which we prove the expected bound on the quantum Fisher
information loss $\FIloss{Bob}{t} \leq O(p^{d_m/2})$.

\subsection{Examples of metrological codes exposed to weak i.i.d.\@ noise}

We now consider three single-site noise channels: the amplitude-damping channel, the
dephasing channel in the $Z$ basis, and the bit-flip channel.  In the basis
$\{ \lvert {\uparrow}\rangle , \lvert {\downarrow}\rangle \}$, the single-qubit amplitude-damping channel
has Kraus operators
\begin{align}
  E_{\mathrm{a.d.}, \, 0}^{(p)}
  &=
    \begin{pmatrix}
    \sqrt{1-p} & 0\\
    0 & 1
  \end{pmatrix}\ ;
  &
  E_{\mathrm{a.d.}, \, 1}^{(p)}
  &=
    \begin{pmatrix}
    0 & 0\\
    \sqrt{p} & 0
  \end{pmatrix}\ .
\end{align}
The second noise channel we consider is the dephasing channel in the $Z$ basis,
described by the Kraus operators
\begin{align}
  E_{\textrm{dephas.}, \, 0}^{(p)}
  &=
    \sqrt{1-\frac{p}{2}}
    \begin{pmatrix}
    1 & 0\\
    0 & 1
  \end{pmatrix}\ ;
  &
  E_{\textrm{dephas.}, \, 1}^{(p)}
  &=
    \sqrt{\frac{p}{2}}
    \begin{pmatrix}
    1 & 0\\
    0 & -1
  \end{pmatrix}\ .
\end{align}
Finally, the bit-flip channel is described by the Kraus operators
\begin{align}
  E_{\textrm{bit-flip}, \, 0}^{(p)}
  &=
    \sqrt{1-\frac{p}{2}}
    \begin{pmatrix}
    1 & 0\\
    0 & 1
  \end{pmatrix}\ ;
  &
  E_{\textrm{bit-flip}, \, 1}^{(p)}
  &=
    \sqrt{\frac{p}{2}}
    \begin{pmatrix}
    0 & 1\\
    1 & 0
  \end{pmatrix}\ .
\end{align}

\subsubsection{Four-qubit code state based on the $[[4,2,2]]$ code}

Consider the state vector introduced
in~\cref{z:AVGHDyn1nGzm,z:oz0py5ItCeCD},
\begin{align}
  \lvert {\psi_{\textrm{code}}}\rangle 
  = \frac12\Bigl[
  \lvert {\uparrow\uparrow\uparrow\uparrow}\rangle  +
  \lvert {\downarrow\downarrow\downarrow\downarrow}\rangle  +
  \lvert {\uparrow\downarrow\uparrow\downarrow}\rangle  +
  \lvert {\downarrow\uparrow\downarrow\uparrow}\rangle 
  \Bigr]\ .
  \label{z:boEeCtdT1jUf}
\end{align}
Consider the Hamiltonian consisting of $ZZ$ terms on the edges connecting the
four qubits when they are arranged in a square, as in
\cref{z:q6y-xgqvsyUu}\textbf{a}; with a suitable normalization we
obtain
\begin{align}
  \lvert {\xi_{\textrm{code}}}\rangle 
  = \frac12\Bigl[
  \lvert {\uparrow\uparrow\uparrow\uparrow}\rangle  +
  \lvert {\downarrow\downarrow\downarrow\downarrow}\rangle 
  -\lvert {\uparrow\downarrow\uparrow\downarrow}\rangle 
  -\lvert {\downarrow\uparrow\downarrow\uparrow}\rangle 
  \Bigr]\ .
\end{align}
We have seen that $(\lvert {\psi_{\textrm{code}}}\rangle , \lvert {\xi_{\textrm{code}}}\rangle )$ forms
a metrological code of metrological distance $2$.

Let us consider how the quantum Fisher information of this state drops when
exposed to i.i.d.\@ amplitude-damping noise and to i.i.d.\@ dephasing noise.
The quantum Fisher information loss $\FIloss{Bob}{t}$ is plotted in a log-log
plot as a function of $p$ in
\cref{z:4-cGO3oXubrx}.
\begin{figure}
  \centering
  \includegraphics{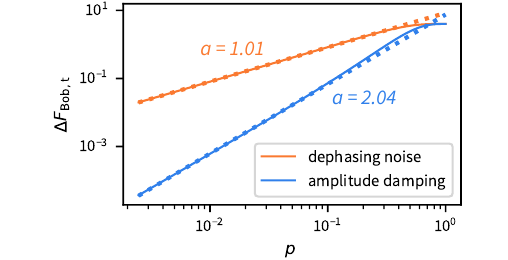}
  \caption{Quantum Fisher information loss $\FIloss{Bob}{t}$ after exposure
    of $\lvert {\psi_{\textrm{code}}}\rangle $ [cf.\@
    \cref{z:boEeCtdT1jUf}] to i.i.d.\@
    amplitude-damping or dephasing noise in the $Z$ basis, as a function of the
    noise parameter $p$.  Based on our intuition of standard error-correcting
    codes, we might have expected that $\FIloss{Bob}{t}$ depends only on an
    order in $p$ (for $p\to 0$) that is directly related to $d_m$ (or $d_m/2$).
    In the case of either noise model, we fit the data points where $p<0.1$ to
    $\ln(y) = a\ln(p) + b$ (which corresponds to a power law $y\propto p^a$) to
    obtain the order in $p$ to which $\FIloss{Bob}{t}$ is affected.  We see that
    for amplitude-damping noise, the loss in quantum Fisher information is
    suppressed to depend only on $p$ to second order; for dephasing noise, the
    loss is affected to first order in $p$.  The quantum Fisher information loss
    due to an i.i.d.\@ bit-flip noise channel (not shown) behaves very similarly to the
    dephasing noise.}
  \label{z:4-cGO3oXubrx}
\end{figure}
We fit the computed values for points with $p<0.1$ to the model
$\ln(y) = a \ln(p) + b$ in order to determine the quantum Fisher information
loss order (as $y \propto p^a$).
We observe that while the quantum Fisher information loss is indeed affected
only to second order in $p$ for amplitude-damping noise, it is directly
affected to first order for dephasing noise.
The behavior of this small-scale example is not necessarily surprising, although
it rules out an optimistic conjecture that states of the
form~\eqref{z:boEeCtdT1jUf} could have
their loss in quantum Fisher information be protected to second order in $p$
against any i.i.d.\@ noise channel, as could have been suggested from
\cref{z:eNrtlC61cj3z}.

\subsubsection{Repetition code in the $+/-$ basis}

Now we investigate a larger example that shows that the metrological distance is
not always indicative of the order of quantum Fisher information loss in the
noise parameter.
On $n$ qubits, let
\begin{align}
  \lvert {\psi }\rangle &= \lvert {+}\rangle ^{\otimes n}\ ;
  &
    \lvert {\xi }\rangle &= \lvert {-}\rangle ^{\otimes n}\ .
\end{align}
Here, the Hamiltonian corresponding to these states is the nonlocal operator
$H = Z^{\otimes n}$.  (Note that this example differs starkly from a standard
ensemble of $n$ spins where the Hamiltonian is as a sum of $Z$ terms on each
site.  In that case, $\lvert {\xi}\rangle $ would be a superposition of strings that consist
of all $\lvert {+}\rangle $ state vectors and a single $\lvert {-}\rangle $ state vector.)
The $(\lvert {\psi}\rangle , \lvert {\xi}\rangle )$ given above form a metrological code of distance
$d_m = n$.  Indeed, any operator $O$ with $\wgt(O)<n$ cannot make $\lvert {\psi}\rangle $
nonorthogonal to $\lvert {\xi}\rangle $, and the
conditions~\eqref{z:y2lXSpYS4Z-t} are
satisfied.

We show that if we expose this state to i.i.d.\@ dephasing noise along the $Z$
axis, the quantum Fisher information loss is indeed suppressed to the order
$O(p^{n/2})$, as we would expect.  On the other hand, if we expose the state to
i.i.d.\@ bit-flip noise, which can be seen as dephasing noise along the $X$
axis, then the quantum Fisher information loss is not suppressed as expected and
we find $\FIloss{Bob}{t} \sim p$.

Let us first consider i.i.d.\@ dephasing noise along the $Z$ axis.  We show that
the quantum Fisher information loss is indeed suppressed to order $O(p^{n/2})$
for this noise channel.  We now prove this statement.  We may choose for the
noise channel
$\mathcal{N}_{\textrm{dephas}}^{(p)}(\cdot) = (1-p/2)(\cdot) + (p/2) Z(\cdot)Z$
the Stinespring isometry
\begin{align}
  V_{A\to BE}
  &= \sqrt{1-\frac{p}{2}} \, \mathds{1}\otimes\lvert {0}\rangle _E + \sqrt{\frac{p}{2}} \, Z\otimes\lvert {1}\rangle _E
    \nonumber\\
  &= \lvert {p_+}\rangle _E\langle {\uparrow}\rvert _A\otimes\lvert {\uparrow}\rangle _B +
    \lvert {p_-}\rangle _E\langle {\downarrow}\rvert _A\otimes\lvert {\downarrow}\rangle _B\ ,
\end{align}
with respect to some basis $\lvert {0}\rangle ,\lvert {1}\rangle $ on $E$, and with
\begin{align}
  \lvert {p_\pm}\rangle  = \sqrt{1 - \frac{p}{2}} \, \lvert {0}\rangle  \pm \sqrt{\frac{p}{2}}\, \lvert {1}\rangle \ .
\end{align}
This choice leads to the complementary channel
\begin{align}
  \widehat{\mathcal{N}}_{\textrm{dephas}}^{(p)}(\cdot)
  = \langle {\uparrow}\mkern 1.5mu\relax \vert \mkern 1.5mu\relax {\cdot}\mkern 1.5mu\relax \vert \mkern 1.5mu\relax {\uparrow}\rangle _A\,\lvert {p_+}\rangle \mkern -1.8mu\relax \langle{p_+}\rvert _E
  + \langle {\downarrow}\mkern 1.5mu\relax \vert \mkern 1.5mu\relax {\cdot}\mkern 1.5mu\relax \vert \mkern 1.5mu\relax {\downarrow}\rangle _A\,\lvert {p_-}\rangle \mkern -1.8mu\relax \langle{p_-}\rvert _E \ .
\end{align}
We find
\begin{align}
  \rho_{E_1}
  = \widehat{\mathcal{N}}_{\textrm{dephas}}^{(p)}(\lvert {+}\rangle \mkern -1.8mu\relax \langle{+}\rvert )
  &= \frac12\Bigl[ \lvert {p_+}\rangle \mkern -1.8mu\relax \langle{p_+}\rvert _E + \lvert {p_-}\rangle \mkern -1.8mu\relax \langle{p_-}\rvert _E \Bigr]
  = \begin{pmatrix}
    1-p & 0 \\ 0 & p
  \end{pmatrix}\ ;
  \nonumber\\
  \widehat{\mathcal{N}}_{\textrm{dephas}}^{(p)}(\lvert {+}\rangle \mkern -1.8mu\relax \langle{-}\rvert )
  &= \frac12\Bigl[ \lvert {p_+}\rangle \mkern -1.8mu\relax \langle{p_+}\rvert _E - \lvert {p_-}\rangle \mkern -1.8mu\relax \langle{p_-}\rvert _E \Bigr]
  = \sqrt{\frac p2 \Bigl(1-\frac p2\Bigr)} \begin{pmatrix}
    0 & 1 \\ 1 & 0
  \end{pmatrix}\ .
\end{align}
We can then compute
\begin{align}
  \FIloss{Bob}{t}
  & = \Ftwo[\Big]{
      \bigl[ \widehat{\mathcal{N}}_{\textrm{dephas}}^{(p)}(\lvert {+}\rangle \mkern -1.8mu\relax \langle{+}\rvert ) \bigr]^{\otimes n}
    }{
      \bigl[\widehat{\mathcal{N}}_{\textrm{dephas}}^{(p)}\bigr]^{\otimes n}\bigl(
          [\lvert {+}\rangle \mkern -1.8mu\relax \langle{-}\rvert ]^{\otimes n} + [\lvert {-}\rangle \mkern -1.8mu\relax \langle{+}\rvert ]^{\otimes n}
      \bigr)
    }
    \nonumber\\
  &=
    \Ftwo[\bigg]{
    \begin{pmatrix}
      1-p & 0 \\ 0 & p
    \end{pmatrix}
                     ^{\otimes n}
  }{
    \Bigl[ \sqrt{\frac p2 \Bigl(1-\frac p2\Bigr)} \begin{pmatrix}
    0 & 1 \\ 1 & 0
  \end{pmatrix} \Bigr]^{\otimes n} + \textrm{h.c.}
  }
    \nonumber\\
  &=
    4 \Bigl[\frac{p}{2}\Bigl(1 - \frac{p}{2}\Bigr)\Bigr]^n
    \Ftwo[\bigg]{
      \begin{pmatrix}
        1-p & 0 \\ 0 & p
      \end{pmatrix}
                     ^{\otimes n}
      }{
        X^{\otimes n}
      }
      \nonumber\\
  &=
    4\,p^n \, 2^{-n} \Bigl(1 - \frac{p}{2}\Bigr)^n \,
    \sum_{\boldsymbol{x}, \boldsymbol{x}'}
    \frac2{\lambda_{\boldsymbol x} + \lambda_{\boldsymbol x'}}
    \Bigl \lvert {
      \langle {\boldsymbol x}\mkern 1.5mu\relax \vert \mkern 1.5mu\relax { X^{\otimes n} }\mkern 1.5mu\relax \vert \mkern 1.5mu\relax {\boldsymbol x'}\rangle 
    }\Bigr \rvert ^2\ ,
    \label{z:AGHy7t7l-7rX}
\end{align}
where $\boldsymbol x, \boldsymbol x'$ are bit strings and where
$\lambda_{\boldsymbol x} = (1-p)^{\lvert {\boldsymbol x}\rvert } p^{n - \lvert {\boldsymbol
    x}\rvert }$ is the eigenvalue of $\rho_{E_1}^{\otimes n}$ associated with the
eigenvector $\lvert {\boldsymbol x}\rangle $.  Observe that
$\langle {\boldsymbol x}\mkern 1.5mu\relax \vert \mkern 1.5mu\relax { X^{\otimes n} }\mkern 1.5mu\relax \vert \mkern 1.5mu\relax {\boldsymbol x'}\rangle  =
\delta_{\widetilde{\boldsymbol x}, \boldsymbol x'}$, where
$\widetilde{\boldsymbol x}$ is the bit string obtained by flipping all the bits
of $\boldsymbol x$.  Then
\begin{align}
  \lambda_{\boldsymbol x} + \lambda_{\widetilde{\boldsymbol x}}
  &=
  (1-p)^{\lvert {\boldsymbol x}\rvert } p^{n - \lvert {\boldsymbol x}\rvert }
  + (1-p)^{n-\lvert {\boldsymbol x}\rvert } p^{\lvert {\boldsymbol x}\rvert }
  =
    \Omega( p^{\min(\lvert {\boldsymbol x}\rvert , n-\lvert {\boldsymbol x}\rvert )} )
  =
  \Omega( p^{n/2} )\ ,
\end{align}
noting that $\min(\lvert {\boldsymbol x}\rvert , n-\lvert {\boldsymbol x}\rvert ) \leq n/2$.
Therefore,
\begin{align}
  \text{\eqref{z:AGHy7t7l-7rX}}
  &=
    p^n\,
    \sum_{\boldsymbol x} O\bigl(p^{-n/2}\bigr)
    = O(p^{n/2})\ .
\end{align}
I.e., the quantum Fisher information on Bob's end after exposure of the state to
i.i.d.\@ dephasing noise along the $Z$ axis is well protected, in that the loss
is suppressed to the order $O(p^{n/2})$.
Observe that $\rho_E$ is full rank, and therefore our uncertainty relation holds
with equality in this setting.

Consider now the i.i.d.\@ bit-flip noise channel
$[\mathcal{N}_{\textrm{bit-flip}}^{(p)}]^{\otimes n}$ determined by the
single-site Kraus operators $E_{\textrm{bit-flip},\,0}^{(p)}$ and
$E_{\textrm{bit-flip},\,1}^{(p)}$.  We find
\begin{align}
  \mathcal{N}_{\textrm{bit-flip}}^{(p)}(\lvert {+}\rangle \mkern -1.8mu\relax \langle{+}\rvert )
  &= \lvert {+}\rangle \mkern -1.8mu\relax \langle{+}\rvert \ ;
  &
  \mathcal{N}_{\textrm{bit-flip}}^{(p)}(\lvert {+}\rangle \mkern -1.8mu\relax \langle{-}\rvert )
  &= (1-p)\lvert {+}\rangle \mkern -1.8mu\relax \langle{-}\rvert \ .
\end{align}
We would like to compute
\begin{align}
  \FIqty{Bob}{t}
  &=
    \Ftwo[\Big]{
      [\mathcal{N}_{\textrm{bit-flip}}^{(p)}]^{\otimes n}\bigl(\lvert {+}\rangle \mkern -1.8mu\relax \langle{+}\rvert ^{\otimes n}\bigr)
    \ 
    }{\ 
      [\mathcal{N}_{\textrm{bit-flip}}^{(p)}]^{\otimes n}\bigl(
        -i[\lvert {-}\rangle \mkern -1.8mu\relax \langle{+}\rvert ]^{\otimes n} + i[\lvert {+}\rangle \mkern -1.8mu\relax \langle{-}\rvert ]^{\otimes n}
      \bigr)
    }
    \nonumber\\
  &=
    \Ftwo[\Big]{
     \lvert {+}\rangle \mkern -1.8mu\relax \langle{+}\rvert ^{\otimes n}
    \ 
    }{\ 
    -i [(1-p)\lvert {-}\rangle \mkern -1.8mu\relax \langle{+}\rvert ]^{\otimes n} + \textrm{h.c.}
    }
    \nonumber\\
  &=
    4\; \langle {+^n}\mkern 1.5mu\relax \vert \mkern 1.5mu\relax {
    \Bigl[ -i [(1-p)\lvert {-}\rangle \mkern -1.8mu\relax \langle{+}\rvert ]^{\otimes n} + \textrm{h.c.} \Bigr]^2
    }\mkern 1.5mu\relax \vert \mkern 1.5mu\relax {+^n}\rangle \ ,
    \label{z:JbUkup2h0D9R}
\end{align}
where the last equality follows from \cref{z:Sj25VG-tiqC5}.
With
\begin{align}
  \langle {+^n}\rvert \Bigl[ -i[(1-p)\lvert {-}\rangle \mkern -1.8mu\relax \langle{+}\rvert ]^{\otimes n} + \textrm{h.c.} \Bigr]
  = i (1-p)^n \langle {-^n}\rvert \ ,
\end{align}
we find
\begin{align}
  \text{\eqref{z:JbUkup2h0D9R}}
  &= 4 (1-p)^{2n} = 4 - 8np + O(p^2)\ .
\end{align}
Therefore, for bit-flip i.i.d.\@ noise, we have
\begin{align}
  \FIloss{Bob}{t} [\textrm{bit-flip}] = 8np + O(p^2)\ ,
  \label{z:Cloty.JHWOCt}
\end{align}
meaning that the quantum Fisher information 
loss is linear in $p$ despite the
high metrological distance $d_m$.

Note that, in the case of i.i.d.\@ bit-flip noise, our uncertainty relation
equality conditions are not satisfied, since the rank of $\rho_B$ changes
locally as a function of time.  I.e., we should not expect our uncertainty
relation to hold with equality.  This fact does not impact our calculation of
the quantum Fisher information loss~\eqref{z:Cloty.JHWOCt}, since we
determined this value by direct computation on Bob's side.  However, based on
this example, we are tempted to hypothesize that settings in which a high
metrological distance does not inhibit a high accuracy loss under weak i.i.d.\@
noise coincide with the settings in which our uncertainty relation does not hold
with equality.  In the remainder of this Appendix, we provide additional
indications in favor of this hypothesis.

\subsection{Discontinuities of the quantum Fisher and uncertainty
  relation equality conditions}

We briefly return to study the behavior of the quantum Fisher information in a
simple example in which our uncertainty relation equality conditions are not
satisfied.  In such cases, the state on Bob's side changes rank, and it is known
that the quantum Fisher information can be
discontinuous~\cite{R32,R33,R109}.

The definition of the quantum Fisher information that we use
[\cref{z:.eb-akuquBNd}], which can differ from the expression stemming
from the second-order expansion of the Bures
metric~\cite{R32,R33,R109}, directly expresses the
accuracy to which one can sense an unknown parameter via an observable that
reveals the true value of the parameter locally in expectation value (see
\cref{z:ZoysZzb.FHNr} in
\cref{z:9mLm-0LXJIol}).

It is a fundamental principle in quantum information that a quantity that is
measurable in a physical setting should be robust to infinitesimal perturbations
of the quantum state.  Yet, how is possible that the quantum Fisher information
is discontinuous, if it directly corresponds to the physically operational
sensitivity to which one can estimate an unknown parameter locally?
We attribute this discontinuity to the assumption, in
\cref{z:ZoysZzb.FHNr} in
\cref{z:9mLm-0LXJIol}, that the sensing observable
reveals the true parameter value \emph{in expectation value}.  An expectation
value needs to be estimated using multiple rounds of measurements, and depending
on the outcome distribution of the observable, an arbitrary large number of
measurements might be required to accurately estimate its expectation value.
In the following example, we study how the optimal sensing observable diverges
close to discontinuity points of the quantum Fisher information; namely, the
discontinuity can be associated with diverging eigenvalues of the observable
associated with eigenstates that are outside the support of the state at the
discontinuity point.

Overall, this example indicates that the operational relevance of the quantum
Fisher information might break down in certain regimes where it is not possible
to accurately estimate the expectation value of the optimal sensing observable.

The following example is based on Refs.~\cite{R32,R33,R109}.
Consider the example of
\cref{z:TaWot2w4x4dl}: A qubit state
evolving along the equator of the Bloch sphere is collapsed by the noise channel
along the $X$ axis of the Bloch sphere.  Bob's quantum Fisher information is
constant and equal to $\omega^2$ almost all the time, except when the state is
exactly a $\pm X$ eigenstate, in which case Bob's quantum Fisher information is
equal to zero.  The state on Bob's end is given by
\cref{z:qdCKlozHjX8s} as
\begin{align}
  \rho_B &= p_+\lvert {+}\rangle \mkern -1.8mu\relax \langle{+}\rvert  + p_-\lvert {-}\rangle \mkern -1.8mu\relax \langle{-}\rvert \ ;
  &
    p_+ &= \cos^2\Bigl(\frac{\omega t_0}{2}\Bigr)\ ;
  &
    p_- &= \sin^2\Bigl(\frac{\omega t_0}{2}\Bigr)\ .
\end{align}
When Bob's quantum Fisher information $\FIqty{Bob}{t}$ is nonzero, there is
always an observable $O$ whose expectation value reveals the true parameter
value locally, i.e.\@ $\langle {O}\rangle _{\rho(t_0+dt)} = t_0 + dt + O(dt^2)$, and whose
variance is $\langle {O^2}\rangle  - \langle {O}\rangle ^2 = 1/\omega^2$ (cf.\@
\cref{z:9mLm-0LXJIol}).  The optimal sensing observable
is given by the suitably normalized symmetric logarithmic derivative
(\cref{z:ZoysZzb.FHNr}) and can be computed, when $\omega t_0$ is
not a multiple of $\pi$, as follows:
\begin{align}
  O - t_0\mathds{1}&= \frac1{\omega^2} \mathcal{R}_{\rho_{t_0}}^{-1}\bigl( \partial_t \rho \bigr)
  = \frac1{\omega^2}
    \sum_{k,k'=\pm} \frac2{p_k + p_{k'}} \langle {k}\mkern 1.5mu\relax \vert \mkern 1.5mu\relax { (\partial_t \rho) }\mkern 1.5mu\relax \vert \mkern 1.5mu\relax {k'}\rangle  \lvert {k}\rangle \mkern -1.8mu\relax \langle{k'}\rvert 
    \nonumber\\
  &= \frac2{2\omega^2 \cos^2(\frac{ \omega t_0}{2} ) }
    \langle {+}\mkern 1.5mu\relax \vert \mkern 1.5mu\relax { (\partial_t \rho) }\mkern 1.5mu\relax \vert \mkern 1.5mu\relax {+}\rangle  \lvert {+}\rangle \mkern -1.8mu\relax \langle{+}\rvert 
    + \frac2{2\omega^2 \sin^2(\frac{\omega t_0}{2})}
    \langle {-}\mkern 1.5mu\relax \vert \mkern 1.5mu\relax { (\partial_t\rho) }\mkern 1.5mu\relax \vert \mkern 1.5mu\relax {-}\rangle  \lvert {-}\rangle \mkern -1.8mu\relax \langle{-}\rvert 
    \nonumber\\
  &=
    -\frac1\omega \tan\Bigl(\frac{\omega t_0}{2}\Bigr) \lvert {+}\rangle \mkern -1.8mu\relax \langle{+}\rvert 
    + \frac1\omega \Bigl[ \tan\Bigl(\frac{\omega t_0}{2}\Bigr) \Bigr]^{-1} \lvert {-}\rangle \mkern -1.8mu\relax \langle{-}\rvert \ ,
\end{align}
using the relation $(\partial_t\rho) = -(\omega/2)\sin(\omega t_0)\, X$ [cf.\@
\cref{z:5xfiUugKQWOt}], which implies
$\langle {\pm}\mkern 1.5mu\relax \vert \mkern 1.5mu\relax {(\partial_t\rho)}\mkern 1.5mu\relax \vert \mkern 1.5mu\relax {\pm}\rangle  = \mp(\omega/2)\sin(\omega t_0) = \mp\omega
\sin(\omega t_0/2)\cos(\omega t_0/2)$.

As a sanity check, we can verify that $O$ satisfies
\begin{align}
  \langle {O}\rangle _{\rho(t_0+dt)} = t_0 + dt + O(dt^2)\ ,
\end{align}
as well as
\begin{align}
  \sigma_O^2 = \langle {O^2}\rangle _{\rho_{t_0}} - \langle {O}\rangle _{\rho_{t_0}}^2
  = \frac1{\omega^2}\ .
\end{align}

As the state gets closer to a discontinuity (for instance at $t_0 = 0$), this
optimal sensing observable has one eigenvalue that diverges (for $t_0=0$, this
eigenvalue is associated with the eigenvector $\lvert {-}\rangle $). At the discontinuous
point, the derivative is zero locally, so no observable will ever be able to
correctly reveal the true value of the parameter to first order locally. The
state does not change to first order in $t$ at all!
We can attribute the discontinuity to the fact that an optimal sensing
observable for one state might turn out to no longer be an acceptable sensing
observable for a neighboring point.  In other words, while the variance of an
observable is continuous both as a function of the state and of the observable,
the \emph{optimal} variance in the local-sensing scenario is discontinuous
because the conditions of the
optimization~\eqref{z:URKHO5wxhm5z} are
discontinuous.

At the discontinuity $t_0 = 0$, the derivative $\partial_t\rho$ vanishes
locally, and it is impossible to find an observable $O$ such that
$\langle {O}\rangle _{\rho(t_0 + dt)} = t_0 + dt + O(dt^2)$.  By convention we set the
corresponding quantum Fisher information to be zero; first, it is convenient
because we do not have to modify the definition of the quantum Fisher
information, and second, it expresses the fact that we cannot have any
sensitivity locally to first order in the parameter by measuring the expectation
value of an observable.  If the quantum Fisher information is defined starting
from the Bures distance, a mismatch will be observed; this mismatch could be
interpreted as a failure of the Cramér-Rao bound.

Operationally, even for $t_0$ not at one of the discontinuities, the use of the
expectation value as the way of reading out the parameter in the estimation
process might be problematic.  Estimating the expectation value of $O$ to good
accuracy, for $t_0 \approx 0$, requires that we observe sufficiently many times
the $\lvert {-}\rangle $ outcome, even though the latter only appears with the vanishing
probability $\sin^2(\omega(t_0+dt)/2)$.  If we do not repeat the measurement on
enough copies, we would only empirically observe $\lvert {+}\rangle $ events and we would
erroneously estimate the expectation value of $O$ to be equal to
$-\bigl[ \tan(\omega t_0/2) \bigr]/\omega$, and that its variance is zero. Not only
this result would be wrong as it does not depend on the actual value $dt$ that we
wanted to measure, but the variance is certainly incorrect since the optimal
variance when an infinite number of measurements is available is $1/\omega^2$.
There might be opportunities for defining and investigating refined measures of
sensitivity that can account for the finite amount of measurement outcomes that
can be collected in the estimation process.

The above example illustrates that the quantum Fisher information can be
problematic to interpret in certain regimes close to points where the rank of
the state can change.  This type of regime can occur for metrological codes, if
the noise happens to fix the state vector $\lvert {\psi}\rangle $ while not fixing other states that
are infinitesimally close to $\lvert {\psi}\rangle $, resulting in a rank change for Bob and
Eve's states.  We observe that in the context of metrological codes exposed to
weak i.i.d.\@ noise, the quantum Fisher information is not actually
discontinuous as a function of the noise parameter; rather, it is the order in
$p$ of the Fisher information loss that can behave unexpectedly.  That the
quantum Fisher information loss must be suppressed at least to the order $O(p)$
follows from our continuity bound
\cref{z:-97yhn1ACkRa},
noting that the weak i.i.d.\@ noise channel is $O(p)$-close to the identity
channel.

\subsection{Suppression of quantum Fisher information loss in a restricted
  setting}

Here we show that, when considering a metrological code exposed to weak i.i.d.\@
noise in a restricted setting with additional assumptions, the quantum Fisher
information loss $\FIloss{Bob}{t}$ is suppressed to the expected order
$O(p^{d_m/2})$, where $d_m$ is the metrological distance of the metrological
code.

\begin{proposition}
  \label{z:gjPZkX-ZdRa-}
  Let $\lvert {\psi}\rangle ,\lvert {\xi}\rangle $ define a metrological code of metrological distance
  $d_m$.  Let $\mathcal{N}_1$ be a single-site noise operator with a Kraus
  representation $\{ E_1^{(k)} \}_{k=1}^K$ that is such
  that %
  $\lVert { E_1^{(k')} }\rVert _\infty = O\bigl(\sqrt{p}\bigr)$ for $k'\neq 1$.  Furthermore,
  if $\boldsymbol x$ denotes a string of Kraus operator labels with
  $x_i \in \{ 1, \ldots, K\}$, and if
  $E_{\boldsymbol x} = \bigl(\bigotimes_{i=1}^n E_1^{(x_i)}\bigr)$, we assume that
  the states $\bigl\{ E_{\boldsymbol x} \lvert {\psi }\rangle \bigr\}_{\boldsymbol x}$ are all
  nonzero and orthogonal, and that
  $\lVert { E_{\boldsymbol x}\lvert {\psi }\rangle }\rVert  \geq \Omega(p^{\lvert {x}\rvert /2})$.
  Then $\Delta F_{\mathrm{Bob},t} = O\bigl(p^{d_m/2}\bigr)$.
\end{proposition}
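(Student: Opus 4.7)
My plan is to exploit the diagonal structure that the orthogonality hypothesis forces on Eve's state, and then to extract the $p^{d_m/2}$ suppression from the metrological code condition via a weight expansion of $E_{\boldsymbol x'}^\dagger E_{\boldsymbol x}$. By \cref{z:dq0nj3WKijUb}, $\FIloss{Bob}{t} = \Ftwo{\rho_E}{\widehat{\mathcal{N}}(\{\bar H,\psi\})}$ with $\rho_E=\widehat{\mathcal{N}}(\psi)$. I would choose the complementary channel in the form \eqref{z:SiT0KpCikk0d} so that the Kraus labels $\{\ket{\boldsymbol x}\}_E$ form an orthonormal basis; the hypothesis $\braket{E_{\boldsymbol x}\psi}{E_{\boldsymbol x'}\psi}=\delta_{\boldsymbol x\boldsymbol x'}\,p_{\boldsymbol x}$ with $p_{\boldsymbol x}=\|E_{\boldsymbol x}\ket\psi\|^2$ then makes $\rho_E=\sum_{\boldsymbol x} p_{\boldsymbol x}\proj{\boldsymbol x}$ diagonal in that basis, and the explicit formula \eqref{z:r7GWSaRrLfNh} yields
\begin{equation*}
  \FIloss{Bob}{t}=\sum_{\boldsymbol x,\boldsymbol x'}\frac{2\,\lvert M_{\boldsymbol x\boldsymbol x'}\rvert^2}{p_{\boldsymbol x}+p_{\boldsymbol x'}}\,,\qquad M_{\boldsymbol x\boldsymbol x'}:=\tr\bigl[E_{\boldsymbol x'}^\dagger E_{\boldsymbol x}\,(\ketbra\xi\psi+\ketbra\psi\xi)\bigr]\,.
\end{equation*}
For fixed $n$ the sum has only $K^{2n}$ terms, so it suffices to prove that every ratio is $O(p^{d_m/2})$.

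The algebraic heart of the argument is a Taylor-like expansion that separates the ``main'' and ``jump'' pieces of $E_{\boldsymbol x'}^\dagger E_{\boldsymbol x}$. Trace preservation of $\mathcal{N}_1$ together with $\|E_1^{(k')}\|_\infty=O(\sqrt p)$ for $k'\neq 1$ implies $E_1^{(1)\dagger}E_1^{(1)}=\Ident+\Delta$ with $\|\Delta\|_\infty=O(p)$. Writing $T=T_{\boldsymbol x\boldsymbol x'}:=\{i:x_i\neq 1\text{ or }x'_i\neq 1\}$ and $S=\{1,\ldots,n\}\setminus T$, I decompose
\begin{equation*}
  E_{\boldsymbol x'}^\dagger E_{\boldsymbol x}=A_T\otimes(\Ident+\Delta)^{\otimes S}=\sum_{U\subseteq S}A_T\otimes\Delta^{\otimes U}\otimes\Ident^{\otimes S\setminus U}\,,
\end{equation*}
where $A_T:=\bigotimes_{i\in T}E_1^{(x'_i)\dagger}E_1^{(x_i)}$ satisfies $\|A_T\|_\infty\leq O(p^{(|\boldsymbol x|+|\boldsymbol x'|)/2})$ by collecting one $O(\sqrt p)$ factor per non-principal Kraus label. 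Each term in the expansion acts as the identity on $S\setminus U$, so its weight is at most $|T|+|U|$; the metrological code condition \eqref{z:y2lXSpYS4Z-t} therefore eliminates every contribution with $|T|+|U|<d_m$, and bounding the survivors by operator norm yields
\begin{equation*}
  \lvert M_{\boldsymbol x\boldsymbol x'}\rvert\leq O\bigl(p^{(|\boldsymbol x|+|\boldsymbol x'|)/2\,+\,\max(0,\,d_m-|T|)}\bigr)\,.
\end{equation*}

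The final step combines this with $p_{\boldsymbol x}+p_{\boldsymbol x'}\geq\Omega(p^{\min(|\boldsymbol x|,|\boldsymbol x'|)})$ from the hypothesis via a short case analysis on $|T|$. When $|T|\geq d_m$, the containment $|T|\leq|\boldsymbol x|+|\boldsymbol x'|\leq 2\max(|\boldsymbol x|,|\boldsymbol x'|)$ forces $\max(|\boldsymbol x|,|\boldsymbol x'|)\geq d_m/2$, and the ratio reduces to $O(p^{\max(|\boldsymbol x|,|\boldsymbol x'|)})\leq O(p^{d_m/2})$. When $|T|<d_m$ the ratio is at most $O(p^{(|\boldsymbol x|+|\boldsymbol x'|)+2(d_m-|T|)-\min(|\boldsymbol x|,|\boldsymbol x'|)})$; I would split further on whether $|\boldsymbol x|+|\boldsymbol x'|\leq d_m-1$ (using $|T|\leq|\boldsymbol x|+|\boldsymbol x'|$, which via the constraint $\min\leq\max\leq d_m{-}1{-}\min$ reduces the exponent to $\geq(d_m+3)/2$) or $|\boldsymbol x|+|\boldsymbol x'|\geq d_m$ (using $|T|\leq d_m-1$ together with $\max(|\boldsymbol x|,|\boldsymbol x'|)\geq d_m/2$, giving exponent $\geq d_m/2+2$). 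Summing the finitely many $O(p^{d_m/2})$ contributions closes the proof.

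The subtlety I expect to be the main obstacle is that neither $\lvert M_{\boldsymbol x\boldsymbol x'}\rvert^2$ nor $p_{\boldsymbol x}+p_{\boldsymbol x'}$ individually scales like $p^{d_m/2}$: both shrink when $|\boldsymbol x|,|\boldsymbol x'|$ grow large, and the target exponent only emerges after a shared $p^{|\boldsymbol x|}$-like factor is cancelled in their ratio. The bookkeeping in the case analysis is therefore delicate; an alternative that might trim it is to supplement the operator-norm bound with the Cauchy--Schwarz estimate $\lvert\braket{E_{\boldsymbol x'}\xi}{E_{\boldsymbol x}\psi}\rvert\leq\|E_{\boldsymbol x'}\ket\xi\|\,\|E_{\boldsymbol x}\ket\psi\|$, but the $|T|$-versus-$d_m$ split of the metrological condition appears unavoidable.
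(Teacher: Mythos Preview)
Your argument is correct and follows the same overall strategy as the paper: use the orthogonality hypothesis to diagonalize $\rho_E$ in the Kraus-label basis, apply the explicit formula \eqref{z:r7GWSaRrLfNh}, and bound each term $2|M_{\boldsymbol x\boldsymbol x'}|^2/(p_{\boldsymbol x}+p_{\boldsymbol x'})$ by a weight-based case analysis. The paper's version is terser: it splits on whether $|\boldsymbol x|+|\boldsymbol x'|<d_m$ or $\geq d_m$, asserts $M_{\boldsymbol x\boldsymbol x'}=0$ in the former case directly from the metrological-code condition, and in the latter case computes $2q_{\boldsymbol x,\boldsymbol x'}-\min(r_{\boldsymbol x},r_{\boldsymbol x'})=\max(|\boldsymbol x|,|\boldsymbol x'|)\geq d_m/2$.

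Your $\Delta$-expansion of $E_1^{(1)\dagger}E_1^{(1)}=\Ident+\Delta$ is actually the more careful treatment. The paper's vanishing claim tacitly requires $\wgt(E_{\boldsymbol x'}^\dagger E_{\boldsymbol x})\leq|\boldsymbol x|+|\boldsymbol x'|$, which holds only if $E_1^{(1)\dagger}E_1^{(1)}\propto\Ident$ on every site where both labels are principal; this is true for dephasing but fails for amplitude damping, and your expansion is precisely what is needed to close that gap in general. Your case split 2a/2b, however, can be collapsed: once you rewrite the ratio exponent as $\max(|\boldsymbol x|,|\boldsymbol x'|)+2\max(0,d_m-|T|)$ and use $\max(|\boldsymbol x|,|\boldsymbol x'|)\geq|T|/2$ (from $|T|\leq|\boldsymbol x|+|\boldsymbol x'|$), the case $|T|\geq d_m$ gives $\geq d_m/2$ immediately and the case $|T|<d_m$ gives $\geq|T|/2+2(d_m-|T|)=2d_m-3|T|/2>d_m/2$, with no further sub-splitting required.
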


This result follows fairly straightforwardly from
\cref{z:1qgah0zhTYQY} in \cref{z:sZUQYA.IesEQ}.

\begin{proof}[**z:gjPZkX-ZdRa-]
  Using the notation in \cref{z:1qgah0zhTYQY}, with
  $\epsilon=p$, we have that $\Delta F_{\mathrm{Bob},t} = O(p^{m})$ with
  \begin{align}
    m = \min_{\boldsymbol{x},\boldsymbol{x}'}
    \Bigl\{ 2 q_{\boldsymbol{x}, \boldsymbol{x}'} - 
    \min\bigl(r_{\boldsymbol{x}}, r_{\boldsymbol{x}'}\bigr) \Bigr\}\ ,
  \end{align}
  where $r_{\boldsymbol x}$ and $q_{\boldsymbol x, \boldsymbol x'}$ are defined
  via
  \begin{align}
    \langle {\psi}\mkern 1.5mu\relax \vert \mkern 1.5mu\relax {E_{\boldsymbol x}^\dagger E_{\boldsymbol x}}\mkern 1.5mu\relax \vert \mkern 1.5mu\relax {\psi}\rangle 
    &= \Omega\bigl( p^{r_{\boldsymbol x}}\bigr)\ ;
    &
      \operatorname{tr}\bigl\{ E_{\boldsymbol x'}^\dagger E_{\boldsymbol x}
      \bigl( \lvert {\xi}\rangle \mkern -1.8mu\relax \langle{\psi }\rvert + \lvert {\psi}\rangle \mkern -1.8mu\relax \langle{\xi }\rvert \bigr) \bigr\}
      = O\bigl( p^{q_{\boldsymbol x, \boldsymbol x'}} \bigr)\ ,
  \end{align}
  setting by convention $q_{\boldsymbol x, \boldsymbol x'}=\infty$ whenever we
  have
  $\operatorname{tr}\bigl\{ E_{\boldsymbol x'}^\dagger E_{\boldsymbol x} \bigl( \lvert {\xi}\rangle \mkern -1.8mu\relax \langle{\psi
    }\rvert + \lvert {\psi}\rangle \mkern -1.8mu\relax \langle{\xi }\rvert \bigr) \bigr\} = 0$.  From our assumption that
  $E_{\boldsymbol x}\lvert {\psi}\rangle \neq 0$, we see that $r_{\boldsymbol x}$ is always
  finite.

  We now consider different cases for $\boldsymbol x, \boldsymbol x'$.  Suppose
  first that $\lvert {\boldsymbol x}\rvert  + \lvert {\boldsymbol x'}\rvert  < d_m$.  Then, since
  $\lvert {\psi}\rangle ,\lvert {\xi}\rangle $ form a metrological code of metrological distance $d_m$, we
  have $q_{\boldsymbol x, \boldsymbol x'} = \infty$.  Now suppose instead that
  $\lvert {\boldsymbol x}\rvert  + \lvert {\boldsymbol x'}\rvert  \geq d_m$, implying that either
  $\lvert {\boldsymbol{x}}\rvert  \geq d_m/2$ or $\lvert {\boldsymbol{x}'}\rvert  \geq d_m/2$.
  Then, since $\lVert {E_{\boldsymbol x} \lvert {\psi}\rangle }\rVert  %
  = \Omega\bigl( p^{\lvert {\boldsymbol x}\rvert /2} \bigr)$, we find
  \begin{align}
    \langle {\psi}\mkern 1.5mu\relax \vert \mkern 1.5mu\relax { E_{\boldsymbol x}^\dagger E_{\boldsymbol x} }\mkern 1.5mu\relax \vert \mkern 1.5mu\relax {\psi}\rangle 
    &= \lVert { E_{\boldsymbol x} \lvert {\psi}\rangle }\rVert ^2 = 
      \Omega\bigl(p^{\lvert {\boldsymbol x}\rvert }\bigr)\ ,
  \end{align}
  so we can pick $r_{\boldsymbol x} = \lvert {\boldsymbol x}\rvert $.  Since
  $E_{1}^{(x_i)} = O(\sqrt{p})$ for each $x_i\neq 0$, we have
  \begin{align}
    \operatorname{tr}\bigl\{ E_{\boldsymbol x'}^\dagger E_{\boldsymbol x}
      \bigl( \lvert {\xi}\rangle \mkern -1.8mu\relax \langle{\psi }\rvert + \lvert {\psi}\rangle \mkern -1.8mu\relax \langle{\xi }\rvert \bigr) \bigr\}
    &= \langle {\xi}\mkern 1.5mu\relax \vert \mkern 1.5mu\relax { E_{\boldsymbol x'}^\dagger E_{\boldsymbol x} }\mkern 1.5mu\relax \vert \mkern 1.5mu\relax {\psi}\rangle 
    +\langle {\psi}\mkern 1.5mu\relax \vert \mkern 1.5mu\relax { E_{\boldsymbol x'}^\dagger E_{\boldsymbol x} }\mkern 1.5mu\relax \vert \mkern 1.5mu\relax {\xi}\rangle 
      \nonumber\\
    &= (\sqrt{p})^{\lvert {\boldsymbol x'}\rvert  + \lvert {\boldsymbol x}\rvert }\,O(1)
      = O\bigl( p^{ (\lvert {\boldsymbol x'}\rvert  + \lvert {\boldsymbol x}\rvert )/2 } \bigr)\ ,
  \end{align}
  so we can pick $q_{\boldsymbol x, \boldsymbol x'} = %
  (\lvert {\boldsymbol x'}\rvert  + \lvert {\boldsymbol x}\rvert )/2 $.  Then
  \begin{align}
    2 q_{\boldsymbol x, \boldsymbol x'} - \min\{ r_{\boldsymbol x}, r_{\boldsymbol x'} \}
    = \lvert {\boldsymbol x}\rvert  + \lvert {\boldsymbol x'}\rvert 
    - \min\{ \lvert {\boldsymbol x}\rvert , \lvert {\boldsymbol x'}\rvert  \}
    = \max\{ \lvert {\boldsymbol x}\rvert , \lvert {\boldsymbol x'}\rvert  \} \geq d_m/2\ .
  \end{align}
  
  In all cases, we have
  $2 q_{\boldsymbol x, \boldsymbol x'} - \min\{ r_{\boldsymbol x},
    r_{\boldsymbol x'} \} \geq d_m/2$ and thus
  \begin{align}
    \FIloss{Bob}{t} \leq O\bigl(p^{d_m/2}\bigr)\ ,
  \end{align}
  as claimed.
\end{proof}

There are two strong assumptions made in the above proposition.  First, we
assume that the Kraus operator representation satisfies
$\operatorname{tr}\{ E_{\boldsymbol x'}^\dagger E_{\boldsymbol x} \,\psi \} \propto
\delta_{\boldsymbol x, \boldsymbol x'}$, or equivalently, that $\rho_E$ is
diagonal; such a representation always exists but might be difficult to find.
Second, the state on Eve must not be rank-deficient, or equivalently, there is
no Kraus operator $E_{\boldsymbol x}$ that has zero probability of occurring
when the channel is applied onto the state $\psi$.  It is not immediately clear
to us how to generalize the above proposition to weaken either of these
assumptions.

\renewcommand\selectlanguage[1]{}
\bibsep=2pt plus 1pt\relax
\catcode`\&=12\relax%
\def\ {\unskip\space\ignorespaces}%

\end{document}